\numberwithin{equation}{section}
\pgfplotsset{compat=1.15}
\def \N {\mathbb N}
\def \R {\mathbb R}
\def \C {\mathbb C}
\def \M {\mathcal{M}}
\newcommand \dsphere[1]{d \sigma_{\mathbb{S}^{#1}}}
\newcommand \grando[1]{\mathcal{O}\left({#1}\right)}
\DeclareMathOperator{\Id}{Id}
\DeclareMathOperator{\Mat}{Mat}
\DeclareMathOperator{\im}{im}
\DeclareMathOperator{\sign}{sign}
\DeclareMathOperator{\supp}{supp}
\DeclareMathOperator{\cotan}{cot}
\DeclareMathOperator{\tf}{tf}
\DeclareMathOperator{\Ric}{Ric}
\DeclareMathOperator{\trace}{tr}
\DeclareMathOperator{\diver}{div}
\DeclareMathOperator{\grad}{grad}
\DeclareMathOperator{\vectorspan}{span}
\DeclareMathOperator{\Tan}{Tan}
\DeclareMathOperator{\Conf}{C}
\DeclareMathOperator{\Weylchg}{T}
\DeclareMathOperator{\Signchg}{S}
\theoremstyle{plain}      
\newtheorem{thm}{Theorem}[section]
\newtheorem{prop}[thm]{Proposition}
\newtheorem{lem}[thm]{Lemma}
\newtheorem{defi}[thm]{Definition}
\newtheorem{cor}[thm]{Corollary}
\theoremstyle{remark}
\newtheorem*{ex}{Example}
\newtheorem*{rem}{Remark}
\newtheorem*{rems}{Remarks}
\title{{\bf \large GEOMETRIC REFLECTIVE BOUNDARY CONDITIONS FOR ASYMPTOTICALLY ANTI-DE SITTER SPACES}}
\author{\\ \normalsize LUDOVIC SOU\^ETRE\thanks{ludovic.souetre@sorbonne-universite.fr} \\ \\ {\it \small Sorbonne Université, Université Paris Cité, CNRS, INRIA,} \\
{\it \small Laboratoire Jacques-Louis Lions, LJLL, F-75005 Paris, France}}
\date{}
\begin{document}
\maketitle

\begin{abstract}
    This article solves the initial boundary value problem for the vacuum Einstein equations with a negative cosmological constant in dimension $4$, giving rise to asymptotically Anti-de Sitter spaces. We introduce a new family of geometric reflective boundary conditions, which can be regarded as the homogeneous Robin boundary conditions, involving both the conformal class and the stress-energy tensor of the timelike conformal boundary. This family includes as a special case the homogeneous Neumann boundary condition, consisting of setting the boundary stress-energy tensor to zero. It also agrees, in a limit case, with the homogeneous Dirichlet boundary condition, where one fixes a locally conformally flat conformal class on the boundary, already covered in Friedrich’s pioneering work \cite{F95}.
    
    The proof of local existence and uniqueness for this family of boundary conditions relies notably on Friedrich's framework and his extended conformal Einstein equations. These are rewritten in a tensorial formalism, rather than a spinorial one, as it facilitates the comparison with the Fefferman-Graham expansion of asymptotically Anti-de Sitter metrics. Similarly to Friedrich's proof, our geometric boundary conditions are eventually inferred from gauge-dependent boundary conditions by means of an auxiliary system on the conformal boundary.
    
    In addition, our analysis also comprises new necessary and sufficient conditions for the unphysical fields associated to the initial data to be smooth up to the conformal boundary. Finally, the paper contains examples of asymptotically Anti-de Sitter spaces, with a focus on their conformal boundary data. This provides valuable insight into the possible shapes of boundary stress-energy tensors.
\end{abstract}

\tableofcontents

\section{Introduction}

\subsection{The vacuum Einstein equations with a negative cosmological constant}

The vacuum Einstein equations with a cosmological constant $\Lambda\in\R$ for a $(n+1)$-dimensional Lorentzian manifold $(\M,g)$ write as
\begin{equation}
    \tag{VE}
    \label{eq:VE}
    \Ric(g) = \frac{2\Lambda}{n-1} g \,,
\end{equation}
where $\Ric(g)$ is the Ricci tensor of the Lorentzian metric $g$. The sign of the cosmological constant $\Lambda$ determines not only the sign of the scalar curvature $R$ and thus local geometric properties, but also the nature of the potential conformal boundary at infinity. The simplest solutions of \eqref{eq:VE} are the maximally symmetric spaces which are the Minkowski space for $\Lambda = 0$, the de Sitter (abbreviated to dS) space for $\Lambda>0$ and the Anti-de Sitter (AdS) space for $\Lambda<0$. The latter has the remarkable property of not being globally hyperbolic due to the presence of a timelike conformal boundary. More generally, we are interested in this article in spaces with a similar timelike conformal boundary, called asymptotically Anti-de Sitter (aAdS).

The negative cosmological constant case was first investigated in the physics literature in the late 1970s and 1980s in order to determine the effects of a non-zero cosmological constant on the geometry, in particular in connection with supergravity theories. Linear scalar fields on AdS were studied by Avis, Isham and Storey \cite{AIS78}, Breitenlohner and Freedman \cite{BF82bis,BF82} while perturbations of the AdS space were examined by Ashtekar and Magnon \cite{AM84}, Henneaux and Teitelboim \cite{HT85}. In the late 1990s, a significant surge of interest in solutions of \eqref{eq:VE} with a negative cosmological constant was generated by the AdS/CFT correspondence proposed by Maldacena \cite{M98}. In broad terms, this correspondence establishes a link between a supergravity theory on the interior of aAdS spaces and a superconformal field theory on their conformal boundary.

\bigskip

From a mathematical perspective, the local theory of \eqref{eq:VE} with $\Lambda<0$ presents additional difficulties compared with the well-known case of $\Lambda = 0$ studied by Choquet-Bruhat \cite{CB52}. A first difficulty comes from the fact that the conformal boundary of aAdS spaces is situated at infinity. A conformal embedding allows to bring it at finite distance, however this leads a priori to a singular system of equations on a bounded domain, whose boundary location has yet to be determined. A second difficulty arises from being faced with an initial boundary value problem. Indeed, it is necessary to specify both Cauchy data on an initial hypersurface and boundary conditions on the conformal boundary in order to solve the equations in a unique way. These boundary conditions must satisfy the following key points:
\begin{itemize}
    \item[a)] After hyperbolic reduction through some gauge choices, the boundary conditions for the resulting analytic\footnote{In this paper, the opposite terms `analytic' and `geometric' are to be understood as `gauge-dependent' and `gauge-independent' respectively. It should be noted that `analytic' does not refer to real analytic or holomorphic functions.} initial boundary value problem must lead to a well-posed problem. This implies that the boundary conditions have to match the number of dynamical degrees of freedom of \eqref{eq:VE} which is $(n+1)(n-2)/2$, see for instance Christodoulou \cite[Section 2.1]{C08}.
    
    \item[b)] The Einstein equations~\eqref{eq:VE} entail the existence of constraints which have to be satisfied initially and propagated by the evolution. To go back from the analytic to the geometric setting and thus obtain geometric existence, the boundary conditions must enable this propagation.
    
    \item[c)] Furthermore, the boundary conditions must make sense in any gauge and therefore have a real geometric meaning in order to ensure geometric uniqueness, see Friedrich \cite{F09}.
\end{itemize}

\subsection{Previous works on local existence of aAdS spaces}

\bigskip

\noindent \textbf{Conformal methods}

\bigskip

The use of conformal methods for the Einstein equations dates back to the work of Penrose in the 1960s. In particular, he introduced the concept of asymptotically simple spacetimes in \cite{P60} allowing the attachment of infinities through a conformal embedding of the spacetime. The metric prior to (respectively following) the conformal mapping is said to be \emph{physical} (respectively \emph{unphysical}). A fundamental aspect of the construction is that the physical and unphysical spacetimes share the same global causality structure. The common smoothness requirement imposed on the unphysical metric is linked to the peeling property of gravitational radiation, which gives the asymptotic behaviour of the Weyl tensor near a smooth conformal boundary. This is generally stated in the case of a vanishing cosmological constant $\Lambda=0$. Nowadays, it is however understood that smoothness of null infinity is too strong a requirement in the asymptotic flat case, see Kehrberger \cite{K22}.

In the 1970s and 1980s, conformal mappings began to be exploited in the context of partial differential equations related to general relativity. In particular, they are used in order to solve the constraints by Lichnerowicz, Choquet-Bruhat and York, see for instance \cite{CB74}, and to prove global existence of some semilinear and then quasilinear wave equations, see for example Christodoulou \cite{C86}. Moreover, Friedrich derived in 1981 \cite{F81} a set of smooth tensorial equations on the unphysical spacetime encoding the Einstein equations for the physical spacetime, which are known as the \emph{conformal Einstein equations}. Indeed, these equations allow for conformal mappings with a conformal factor vanishing on subsets of the unphysical spacetime. The conformal Einstein equations led to many applications in the case of $\Lambda > 0$, see for instance \cite{F86}.

The first work establishing a local existence and uniqueness theory for \eqref{eq:VE} with $\Lambda<0$ is the celebrated article of Friedrich of 1995 \cite{F95}. It is based on an extension of the conformal Einstein equations, enlarging the connections used from Levi-Civita connections to \emph{Weyl connections}. The resulting tensorial equations are called the \emph{extended conformal Einstein equations}. The new gauge freedom, consisting of choosing the Weyl connection, is essential to Friedrich's local existence result, obtained for $n=3$ and a broad set of analytic boundary conditions. These boundary conditions are a priori gauge-dependent, meaning that they only make sense in a specific gauge. A second result of \cite{F95} establishes that among them, there exists one which can be interpreted geometrically as imposing the \emph{boundary conformal class} through an auxiliary evolution system on the boundary. This is commonly referred to as the \emph{Dirichlet boundary condition} for \eqref{eq:VE} with $\Lambda < 0$. Geometric local existence and uniqueness of solutions are thus ensured for this specific boundary condition.

For more details on conformal methods in general relativity and Friedrich's works, one can refer to the comprehensive book written by Valiente Kroon \cite{K16}.

\bigbreak

\noindent \textbf{Asymptotic expansions}

\bigskip

The seminal work of Fefferman and Graham on the ambient space construction \cite{FG85}, see also the extensive book \cite{FG12}, contains in particular an asymptotic expansion nearby the conformal boundary of aAdS metrics satisfying the vacuum Einstein equations \eqref{eq:VE} with $\Lambda<0$. This expansion is called the \emph{Fefferman-Graham expansion} (FG expansion). The Einstein equations leave two free data in the expansion: the leading term and part of a sub-leading term whose position in the expansion depends on $n$, the dimension of the boundary. The former corresponds to the \emph{conformal class} and imposing it constitutes the \emph{Dirichlet boundary condition} as seen previously. The latter is the trace-free and divergence-free part of the \emph{boundary stress-energy tensor} and imposing it is thereby typically interpreted as the \emph{Neumann boundary condition}.

In 2019, Enciso and Kamran \cite{EK19} proved local existence and uniqueness of solutions for the vacuum Einstein equations \eqref{eq:VE} with $\Lambda<0$ under the Dirichlet boundary condition for all boundary dimensions $n\geq3$, thus extending Friedrich's result to higher dimensions. Their work is inspired by the Riemannian case studied by Graham and Lee \cite{GL91}. The proof uses a peeling strategy to derive iteratively the first terms of an asymptotic expansion of the metric, akin to the FG expansion, as well as the analysis of linear fields on aAdS spaces developed in various articles listed in \Cref{sec:further_works}.

\subsection{Main result}
\label{sec:main_result}

The main goal of this paper is to derive new geometric boundary conditions for which local existence and uniqueness of solutions to the geometric initial boundary value problem, arising from the vacuum Einstein equations \eqref{eq:VE} with $\Lambda<0$ and $n=3$, hold. To state these new boundary conditions, recall that the FG expansion of a 4-dimensional aAdS metric $\widetilde{g}$ takes the following form
\[ \widetilde{g} = \frac{1}{x^2} \left( dx^2 + \mathfrak{h}_{ij}dy^idy^j - x^2 \mathfrak{l}_{ij}dy^idy^j + x^3 \mathfrak{t}_{ij} dy^i dy^j + \grando{x^4} \right) \,. \]
In this expansion, $x$ is a boundary defining function of the conformal boundary $\mathscr{I}$, $\mathfrak{h}$ is a Lorentzian metric on $\mathscr{I}$, $\mathfrak{l}$ is the Schouten tensor of $\mathfrak{h}$ and $\mathfrak{t}$ is the boundary stress-energy tensor. The tensor $\mathfrak{t}$ is trace-free and divergence-free for the metric $\mathfrak{h}$. The geometric free data on the conformal boundary is given by the class $[(\mathfrak{h},\mathfrak{t})]$ for the following equivalence relation
\[ (\mathfrak{h}',\mathfrak{t}') \sim (\mathfrak{h},\mathfrak{t}) \text{ if } \exists \, \Omega \in \mathcal{C}^\infty(\mathscr{I},\R_+) \text{ such that } \mathfrak{h}' = \Omega^2 \mathfrak{h} \text{ and } \mathfrak{t}' = \Omega^{-1} \mathfrak{t}' \,. \]
Recall furthermore that the Cotton-York tensor $\mathfrak{y}$ of $\mathfrak{h}$ is defined as the dual of the Cotton tensor $\mathfrak{c}$, that is
\[ \mathfrak{y}_{ij} := \frac{1}{2} \epsilon_i{}^{kl} \mathfrak{c}_{klj} := \epsilon_i{}^{kl} \mathfrak{D}_{[k} \mathfrak{l}_{l]j} \,, \]
where $\epsilon$, $\mathfrak{D}$ and $\mathfrak{l}$ denote respectively the volume form, the Levi-Civita connection and the Schouten tensor of $\mathfrak{h}$.

The new family of geometric reflective boundary conditions consist in imposing that, for any representative $(\mathfrak{h},\mathfrak{t})$ of the geometric free data on the conformal boundary, the boundary stress-energy tensor $\mathfrak{t}$ is proportional to the Cotton-York tensor of $\mathfrak{h}$ by a fixed real constant $\mu \in \R$, that is
\begin{equation}
    \label{eq:bc_robin}
    \mathfrak{t}_{ij} = \mu \, \mathfrak{y}_{ij} \,.
\end{equation}
This family can be interpreted as the homogeneous Robin boundary conditions. Indeed, when $\mu = 0$, \eqref{eq:bc_robin} reduces to the homogeneous Neumann boundary condition since it sets the boundary stress-energy tensor (class) to zero. Moreover, in the limit $\mu \to \pm \infty$, \eqref{eq:bc_robin} becomes $\mathfrak{y}_{ij} = 0$. This is equivalent to imposing that the conformal class $[\mathfrak{h}]$ on the conformal boundary is locally conformally flat, in agreement with the definition of the homogeneous Dirichlet boundary condition.

In this paper, we prove the local existence and uniqueness of solutions of \eqref{eq:VE} with $\Lambda<0$ and $n=3$ for the above family of geometric boundary conditions. A rough version of our main result is the following, see \Cref{thm:main_precise} for the precise formulation.

\begin{thm}[Main theorem, rough version]
    \label{thm:main_rough}
    For any 3-dimensional asymptotically hyperbolic smooth solution to the Vacuum Constraint equations \eqref{eq:VC} with cosmological constant $\Lambda=-3$, satisfying a set of geometric regularity conditions on its conformal boundary and a family of compatibility conditions with the boundary conditions \eqref{eq:bc_robin}, there exists a 4-dimensional asymptotically Anti-de Sitter space which is a smooth local solution to the Vacuum Einstein equations \eqref{eq:VE} with cosmological constant $\Lambda=-3$, agreeing with the initial data on some initial hypersurface, and satisfying the boundary conditions \eqref{eq:bc_robin}. Moreover, this space is locally unique.
\end{thm}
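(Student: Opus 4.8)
The plan is to follow the conformal method pioneered by Friedrich, transcribed into the tensorial formalism set up earlier in the paper. The physical metric $\widetilde{g}$ solving \eqref{eq:VE} is replaced by an unphysical metric $g = \Omega^2 \widetilde{g}$ on a manifold with boundary, where $\Omega$ is a boundary defining function for $\mathscr{I}$, bringing the conformal boundary to finite distance. On this unphysical spacetime one writes the extended conformal Einstein equations (ECEE) for the unknowns consisting of a Weyl connection, its associated Schouten-type tensor, the conformal factor data, and the rescaled Weyl tensor $d = \Omega^{-1}\Weyl(g)$, which carries the $(n+1)(n-2)/2 = 2$ dynamical degrees of freedom when $n=3$. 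The crux is that the ECEE remain regular up to and across $\mathscr{I}$ even though $\Omega$ vanishes there, so the singular behaviour of $\widetilde{g}$ is absorbed into the conformal rescaling.

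First I would perform the hyperbolic reduction: fixing the Weyl gauge freedom (the extra freedom afforded by Weyl rather than Levi-Civita connections) together with coordinate and conformal gauge source functions turns the ECEE into a first-order symmetric hyperbolic system. The reduced Cauchy data are produced from the given asymptotically hyperbolic solution of the constraints \eqref{eq:VC}; here the geometric regularity conditions on $\mathscr{I}$ in the hypothesis are precisely what guarantee that the unphysical fields extend smoothly up to the boundary, so that these data are smooth. On the boundary I would impose a gauge-dependent (analytic) condition of maximally dissipative type, expressed as a linear relation between the incoming and outgoing characteristic components of $d$; the coefficient of this relation is where the parameter $\mu$ enters. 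Standard well-posedness theory for symmetric hyperbolic systems with maximally dissipative boundary conditions then yields a unique local solution of the reduced analytic IBVP, provided the compatibility conditions at the corner $\mathscr{I}\cap\{t=0\}$ appearing in the hypothesis hold.

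Next I would establish constraint propagation, the step promoting a solution of the reduced system to a genuine solution of the ECEE, hence of \eqref{eq:VE}. Introducing the zero-quantities measuring the failure of the full ECEE (torsion, curvature-mismatch, Bianchi-type and conformal-factor quantities), one shows they obey a subsidiary homogeneous symmetric hyperbolic system. They vanish on the initial slice because the data solve the constraints, and the task is to show they also vanish on $\mathscr{I}$; this forces a compatibility between the chosen analytic boundary condition and the subsidiary system, and is the place where the precise algebraic form of the condition on $d$ is pinned down. A uniqueness argument for the subsidiary system, with boundary and initial data switched off, then gives vanishing of all zero-quantities, so the reduced solution is a true aAdS solution of the Einstein equations.

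The final and most delicate step is the geometric interpretation, converting the gauge-dependent boundary condition into the gauge-invariant statement \eqref{eq:bc_robin}. I would set up an auxiliary evolution system intrinsic to $\mathscr{I}$, following Friedrich, and match the unphysical fields on $\mathscr{I}$ against the Fefferman-Graham expansion: the electric part of $d$ at $\mathscr{I}$ is identified with a multiple of the boundary stress-energy tensor $\mathfrak{t}$, while its magnetic part is identified with the Cotton-York tensor $\mathfrak{y}$ of $\mathfrak{h}$. The maximally dissipative relation between the incoming and outgoing components of $d$ then translates exactly into $\mathfrak{t}_{ij} = \mu\,\mathfrak{y}_{ij}$, recovering the Neumann condition for $\mu=0$ and the Dirichlet condition as $\mu\to\pm\infty$. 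Geometric uniqueness follows because two solutions sharing the data $[(\mathfrak{h},\mathfrak{t})]$ and satisfying \eqref{eq:bc_robin} can, by uniqueness of the auxiliary system and of the reduced IBVP in a common gauge, be shown to be related by a conformal diffeomorphism. I expect the main obstacle to be this last step: arranging a single boundary condition that is simultaneously maximally dissipative (for well-posedness), compatible with the subsidiary system (for constraint propagation at $\mathscr{I}$), and expressible purely through $[(\mathfrak{h},\mathfrak{t})]$ (for geometric meaning), since these three requirements pull against one another and only the specific reflective family \eqref{eq:bc_robin} reconciles them.
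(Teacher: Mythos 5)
Your proposal follows essentially the same route as the paper: extended conformal Einstein equations in tensorial form, hyperbolic reduction in a gauge adapted to the conformal boundary, maximally dissipative analytic boundary conditions on the characteristic components of the rescaled Weyl tensor, constraint propagation via zero quantities, and geometric interpretation of the boundary condition through an auxiliary system on $\mathscr{I}$ together with the identification of the electric and magnetic parts with $\mathfrak{t}$ and $\mathfrak{y}$ via the Fefferman--Graham expansion. The only point worth sharpening is that in the paper the constraints propagate because the subsidiary system's boundary matrix vanishes identically (the constraints propagate tangentially to $\mathscr{I}$), which is arranged by the choice of evolution equations exploiting algebraic dependencies rather than by the boundary condition itself.
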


\bigbreak

\noindent \textbf{Comments on the main theorem}

\begin{itemize}
    \item This addresses the problem for all negative cosmological constants $\Lambda<0$ since one can always reduce to the case $\Lambda=-3$ by multiplying the fields by a constant. 
    
    \item The `geometric regularity conditions on the conformal boundary' imposed on the initial data refer to a set of novel conditions, described in $i)$ of \Cref{thm:smooth_unphysical_fields}, for the unphysical fields to be smooth up to the conformal boundary. The `compatibility conditions with the boundary conditions \eqref{eq:bc_robin}' are the corner-type conditions defined in \Cref{lem:geometric_compatibility_conditions}.
    
    \item The Birmingham-AdS spaces with $m=0$ (which includes the AdS space) defined in \Cref{sec:BAdS} and the aAdS spaces defined in \Cref{sec:special_boundaries} are examples of aAdS spaces verifying the homogeneous Neumann boundary condition, that is \eqref{eq:bc_robin} with $\mu=0$. To the best of our knowledge, there are very few known examples of aAdS that are solutions to \eqref{eq:VE} and satisfy \eqref{eq:bc_robin} with $\mu \neq 0$ outside of the real analytic class. In \Cref{sec:ppwave}, we give examples of such aAdS spaces constructed out of conformal boundary whose conformal class contains particular pp-wave metrics. Up to solving the geometric initial data problem, our result provides non-trivial solutions which are not real analytic. In particular, for $\mu=0$, these translate into non-real analytic solutions to the Poincaré problem, defined by Fefferman and Graham in \cite{FG85}, for 3-dimensional boundaries of Lorentzian signature.
    
    \item A physical interpretation of the boundary condition \eqref{eq:bc_robin} in terms of Conformal Field Theory on the conformal boundary can be found in de Haro \cite{dH09}.

    \item After completion of this work, we became aware of the recent\footnote{Note that our geometric boundary conditions \eqref{eq:bc_robin} were presented on June 10\textsuperscript{th}, 2025 during the conference \textit{Boundaries, stability, and singularities in general relativity - A meeting celebrating Helmut Friedrich’s 80th birthday}.} article \cite{FAS25} which proposes a similar type of boundary conditions.
\end{itemize}

\bigbreak

\noindent \textbf{Open problems}

\bigskip

Our result naturally leads to several open problems.
\begin{itemize}
    \item Understanding how to recover solutions such as Schwarzschild-AdS and Kerr-AdS by solving the geometric initial-boundary value problem with inhomogeneous Neumann or Robin-type boundary conditions.
    \item Constructing similar geometric initial data satisfying the compatibility conditions.
    \item Addressing problems in higher dimensions, that is for $n\geq 4$. This article is also written in the hope that it will be useful to that end. Thus, many definitions and computations are presented in any dimension, even though our main result holds in dimension $n=3$ (note that our boundary conditions \eqref{eq:bc_robin} do not make sense in higher dimensions).
\end{itemize}

\bigbreak

\noindent \textbf{Comments on the proof}

\bigskip

Our proof makes extensive use of Friedrich's framework and his extended conformal vacuum Einstein equations which we will review in detail in the article. We made some modifications and improvements to it which we highlight here:
\begin{itemize}
    \item The analysis is written in a tensorial formalism rather than a spinorial one. This facilitates the dialogue with the FG expansion presented in \Cref{sec:FG_expansion} and the consequent asymptotic expansions of the curvature tensors which we derive in \Cref{sec:expansions_curvature_tensors}. These are crucial to link the electromagnetic decomposition of the rescaled Weyl tensor and the boundary free data.
    
    \item Weyl connections are systematically used, notably to define the extended version of the Friedrich scalar in \Cref{sec:friedrich_scalar} and for the extended conformal vacuum constraint equations in \Cref{sec:CVC}. Similarly to the treatment of the (non-extended) conformal Einstein equations by Friedrich, see for example \cite{F82}, the extended Friedrich scalar enables to only work with equations which are regular up to the conformal boundary, not only for the evolution but also for the propagation of the constraints. This is actually a key feature of the (extended or not) conformal Einstein equations. Regarding the constraints, we find it better to write them down for Weyl connections since the gauge construction and thus the analytic initial data requires them.
    \item The gauge construction, see \Cref{prop:construction_gauge}, only refer to the extended conformal vacuum Einstein equations and not to the vacuum Einstein equations. This removes an unnecessary extra step which discriminates the conformal boundary.
    
    \item In \Cref{sec:derivation_evolution_system}, we follow a different strategy to find an evolution system enabling to show the propagation of the constraints. Indeed, we choose to exploit the algebraic dependencies rather than adding a multiple of certain constraints to the evolution equations. The two strategies are linked by the fact that the terms that need to be counteracted by a constraint either do not appear or are removed using the algebraic dependencies.
\end{itemize}
Similarly to the proof of Friedrich, our geometric boundary conditions \eqref{eq:bc_robin} are eventually inferred from a minimal set of analytic boundary conditions by means of an auxiliary evolution system on the boundary, see \Cref{sec:homogeneous_robin}.

\subsection{Further works on AdS and aAdS spaces}
\label{sec:further_works}

\bigskip

\noindent \textbf{Linear fields}

\bigskip

In 1978, Avis, Isham and Storey \cite{AIS78} showed the global well-posedness of the conformal wave equation on the AdS space under the homogeneous Dirichlet and Neumann boundary conditions. Their work was generalised to the Klein-Gordon equation -- also known as the massive wave equation and which is not conformally invariant -- by Breitenlohner and Freedman \cite{BF82bis,BF82} in 1982. The spectral properties of the Klein-Gordon operator on AdS and the boundary conditions leading to well-posedness were systematically studied by Ishibaldi and Wald \cite{IW04} in 2004. Further analysis was established by Bachelot for the Dirac equation in 2008 \cite{B08} and the Klein-Gordon equation on the Poincaré patch of the AdS space in 2011 \cite{B11}.

Moving away from the exact AdS space, linear fields on aAds spaces have been investigated since the 2010s. The well-posedness of the Klein-Gordon equation on an aAdS space under the homogeneous Dirichlet boundary condition was proved by Vasy \cite{V12} and Holzegel \cite{H12}. In 2013, Warnick \cite{W13} introduced a new analytic framework by considering twisted derivatives. This allowed him to prove the well-posedness under Robin boundary conditions. See also Enciso and Kamran \cite{EK15}. Decay for the Klein-Gordon equation on Kerr-Anti-de Sitter was studied by Holzegel and Smulevici \cite{HS13kads,HS14}. In \cite{HW14}, Holzegel and Warnick investigated the boundedness properties of the Klein-Gordon equation on aAdS black holes depending on the boundary conditions imposed. While these previous works focus on scalar equations, the Teukolsky equations were recently investigated by Holzegel and Graf focusing on mode stability on Kerr-Anti-de Sitter \cite{GH23} and decay properties on Schwarzschild-Anti-de Sitter \cite{GH24_part_II}.

Apart from reflective boundary conditions which are used in all above-mentioned works, Holzegel, Luk, Smulevici and Warnick \cite{HLSW20} studied the global dynamics and derived decay estimates for the conformal wave equation, the linearised Bianchi equation and the Maxwell equations on a fixed AdS background in the case of dissipative boundary conditions. 

In complement to initial boundary value problems, the existence of a boundary also give rise to unique continuation problems. A unique continuation statement was shown for various wave equations on aAdS spaces by Holzegel and Shao \cite{HS16,HS17}. A gauge-invariant unique continuation criterion was later derived by Chatzikaleas and Shao \cite{CS22}, which is also applicable in the non-linear setting. Moreover, counterexamples to unique continuation for waves were studied by Guisset and Shao \cite{GS24}.

\bigbreak

\noindent \textbf{Non-linear fields}

\bigskip

With regard to the conformal approach, Friedrich's framework was extended in the 2010s to several cases: coupling with the Maxwell equations by Lübbe and Valiente Kroon \cite{LK12}, system of conformal wave equations by Carranza and Valiente Kroon \cite{CK18}, coupling with trace-free matter by Carranza, Hursit and Valiente Kroon \cite{CHK19}.

The spherically symmetric Einstein-Klein-Gordon system with $\Lambda<0$ and under Dirichlet boundary conditions was studied by Holzegel and Smulevici who showed local existence and uniqueness of solutions in \cite{HS12} and the stability of Schwarzschild-Anti-de Sitter in \cite{HS13}. Holzegel and Warnick constructed solutions of the spherically symmetric Einstein-Klein-Gordon system under Robin boundary conditions in \cite{HW15}.

The non-linear (in)stability of the Anti-de Sitter space, originally conjectured by Dafermos and Holzegel \cite{DH06} and by Anderson \cite{A06}, was first addressed by Bizo\'n and Rostworowski \cite{BR11} who investigated by means of numerical methods the spherically symmetric Einstein-massless scalar field system with a negative cosmological constant and reflective boundary conditions. This work was completed by Maliborski and Rostworowski who exhibited in \cite{MR13} time-periodic solutions using formal series. By restricting to toy models on the Einstein cylinder, Chatzikaleas and Smulevici \cite{CS24} constructed non-formal time-periodic solutions to various field equations. In addition of the study of time-periodic solutions, a substantial contribution on the instability of AdS for reflective boundary conditions was made by Moschidis who proved the instability of the spherically symmetric Einstein-null dust and Einstein-massless Vlasov systems with $\Lambda<0$ in \cite{M20,M23}.

In 2021, Shao \cite{S21} showed that partial FG expansions hold when sufficient regularity is assumed on the rescaled metric. Finally, Holzegel and Shao \cite{HS23} addressed the unique continuation problem for the Einstein equations with $\Lambda<0$. They also proved the inheritance of symmetries from the conformal boundary to the bulk.

\subsection{Structure of the article}
\label{sec:structure_article}

In \Cref{sec:preliminaries}, we recall notions and tools which will be used later in the article. In particular, this includes manifolds with corners in \Cref{sec:manifolds_with_corners} and conformal geometry in \Cref{sec:conformal_geometry}.

\Cref{sec:aAdS_spaces} is dedicated to the definitions of aAdS spaces and a presentation of their fundamental properties. This section includes standard examples of aAdS spaces, giving some insight of what boundary stress-energy tensors can look like.

\Cref{sec:local_existence} is the main section of the article and contains the proof of \Cref{thm:main_precise}. We introduced the extended Conformal Vacuum Einstein equations in \Cref{sec:CVE} before using them to prove local existence of solutions for some analytic boundary conditions in \Cref{sec:geometric_existence}. A subset of these analytic boundary conditions are then interpreted geometrically in \Cref{sec:geometric_bc}. In \Cref{sec:geometric_initial_data}, we discuss the geometric initial data problem. Finally, the different parts are combined together in \Cref{sec:main_thm} leading to \Cref{thm:main_precise}.

\subsection{Acknowledgements}

The author would like to thank J. Smulevici for his advice and careful reading of this paper. 

\section{Preliminaries}
\label{sec:preliminaries}

In this section, we introduce the notations, notions and tools which will be used in the whole article. Most of the proofs are omitted since they are standard and can be found in the references. Only a few are detailed either because we could not find a suitable reference or due to their importance in the proof of local existence of aAdS spaces.

\subsection{Notations and conventions}

The symbol $:=$ in an equation denotes a definition. The sign convention for the signature of Lorentzian metrics is $(-+ \dots +)$. The action of a 1-form $\kappa$ on a vector field $X$ is denoted by $\langle \kappa , X \rangle$ and the action of a vector field $X$ on a smooth function $f$ by $X\cdot f$ or $X(f)$.

\bigbreak

The symbols $\nabla$, $\widetilde{\nabla}$, $\overline{\nabla}$, $\breve{\nabla}$ (respectively $D$, $\widetilde{D}$, $\mathfrak{D}$ and $\slashed{D}$, $\slashed{\mathfrak{D}}$) denote the Levi-Civita of the $(n+1)$-dimensional metrics $g$, $\widetilde{g}$, $\overline{g}$, $\breve{g}$ (respectively $n$-dimensional metrics $h$, $\widetilde{h}$, $\mathfrak{h}$ and $(n-1)$-dimensional metrics $\slashed{h}$, $\slashed{\mathfrak{h}}$) whereas $\widehat{\nabla}$, $\widecheck{\nabla}$ denote Weyl connections on $(n+1)$-dimensional and $n$-dimensional manifolds. Tensor fields associated to a connection are indicated with the symbol of the connection. For example, $\widehat{W}$ is the Weyl tensor of $\widehat{\nabla}$.

\bigbreak
 
The components of tensor fields with respect to a coordinate system are written with Greek alphabet indices for $(n+1)$-dimensional manifolds and with Latin alphabet indices for $n$-dimensional ones. Note that the Latin indices range either from $1$ to $n$ for Riemaniann manifolds or from $0$ to $n-1$ for Lorentzian manifolds. Capital Latin alphabet indices are reserved for $(n-1)$-dimensional Riemannian manifolds and range from $1$ to $n-1$. 

A frame field $(e_{\bf a})$ defines a dual coframe field $(\omega^{\bf a})$ by $\langle \omega^{\bf a},e_{\bf b}\rangle = \delta^{\bf a}{}_{\bf b}$. The connection coefficients of a connection $\nabla$ with respect to a frame field are defined by $\Gamma_{\bf a}{}^{\bf c}{}_{\bf b} := \langle \omega^{\bf c},\nabla_{e_{\bf a}} e_{\bf b}\rangle$. The frame components of tensor fields are written with the following bold indices ${\bf a,b,c,d,\dots}$ (respectively ${\bf i,j,k,l,\dots}$ and ${\bf A,B,C,D,\dots}$) for $(n+1)$-dimensional (respectively $n$-dimensional and $(n-1)$-dimensional) manifolds.

The Einstein summation convention is used, that is a repeated index in superscript and subscript is implicitly summed over its range. Symmetric and antisymmetric parts of a tensor field are denoted by $(\;)$ and $[\;]$, respectively. Some indices can be excluded using $|\;|$. For instance,
\[ T_{(i_1 i_2|j|i_3)} := \frac{1}{6} \sum_{\sigma \in S_3} \, T_{i_{\sigma(1)} i_{\sigma(2)} j i_{\sigma(3)}} \,. \]

\subsection{Manifolds with corners}
\label{sec:manifolds_with_corners}

We recall the definitions of smooth manifolds with corners as well as their basic properties. The presentation follows the lines of Melrose \cite{M96}. The main concepts are boundary hypersurfaces (see \Cref{def:boundary_hypersurface}), smooth manifolds with corners (\Cref{def:manifold_with_corners}) and boundary defining functions (\Cref{def:bdf}). 

Corners arise naturally in the context of initial boundary value problems as the intersection of the initial hypersurface and the boundary. Moreover, boundary defining functions are essential to define the conformal boundary and aAdS spaces.

\subsubsection{Definitions}

Let $N \in \N$ and $k,l \in \llbracket0,N\rrbracket$. Introduce
\begin{align*}
    \R^N_k &:= [0,+\infty)^k \times \R^{N-k} = \{ x = (x^1,\dots,x^N) \in \R^N \mid \forall \, j \in \llbracket1,k\rrbracket, \, x_j \geq 0 \} \,, \\
    \partial_l \R^N_k &:= \{ x \in \R^N_k \mid \#\{j \in \llbracket1,k\rrbracket \mid x_j = 0\} = l\} \,. 
\end{align*}
For any open set $U \subset \R^N_k$, define
\[ \partial_l U := U \cap \partial_l \R^N_k \,. \]

\begin{defi}
    \label{def:smooth_functions}
    The algebra of smooth functions on an open set $U \subset \R^N_k$ is
    \[ \mathcal{C}^\infty(U) := \{ f|_U \mid f \in \mathcal{C}^\infty(V) \} \, \]
    where $V \subset \R^N$ is an open set such that $U = V \cap \R^N_k$.
\end{defi}

\begin{defi}
    A map $\phi$ from an open set $U \subset \R^N_k$ to an open set $V \subset \R^N_j$ is said to be a diffeomorphism if it is a homeomorphism and if the components of $\phi$ and $\phi^{-1}$ are smooth functions in $\mathcal{C}^\infty(U)$ and $\mathcal{C}^\infty(V)$ respectively.
\end{defi}

\begin{prop}
    Let $\phi : U \subset \R^N_k \to V \subset \R^N_j$ be a diffeomorphism. Then
    \[ \forall \, l \in \llbracket 0,N \rrbracket, \quad \phi(\partial_l U) = \partial_l V \,.\]
\end{prop}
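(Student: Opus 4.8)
The plan is to characterise the stratum $\partial_l U$ intrinsically, purely in terms of the algebra $\mathcal{C}^\infty(U)$ together with the order structure coming from non-negativity, so that it is manifestly preserved under diffeomorphisms. Concretely, to each $p \in U$ I would attach its \emph{boundary depth}, defined as the largest integer $m$ for which there exist functions $f_1, \dots, f_m \in \mathcal{C}^\infty(U)$ that are non-negative on $U$, vanish at $p$, and have linearly independent differentials $df_1(p), \dots, df_m(p)$. The two things to check are then: (i) the boundary depth of $p$ equals the unique $l$ with $p \in \partial_l U$; and (ii) boundary depth is a diffeomorphism invariant. Granting both, one gets $\phi(\partial_l U) \subseteq \partial_l V$, and the same argument applied to $\phi^{-1}$ yields the reverse inclusion, hence equality for every $l$.

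For the preliminary point that the differential $df(p)$ even makes sense at a boundary point, I would note that although the extension $F$ with $f = F|_U$ from \Cref{def:smooth_functions} is not unique, the one-sided directional derivatives $D_v f(p) = \lim_{t \to 0^+} t^{-1}(f(p+tv) - f(p))$ along inward-pointing directions $v$ depend only on $f|_U$, and these inward directions span $\R^N$; since $v \mapsto dF(p)(v)$ is the common linear extension of this data, $df(p) := dF(p)$ is independent of the chosen extension.

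The heart of the matter is step (i), and I expect it to be the main obstacle, since it is where the smooth (as opposed to merely topological) structure is genuinely used. After reordering coordinates so that $x^1(p) = \dots = x^l(p) = 0$ while the remaining constrained coordinates are strictly positive at $p$, continuity provides a neighbourhood on which $U = \{x^1 \geq 0, \dots, x^l \geq 0\}$, so the tangent cone to $U$ at $p$ is $C_p = \{v \in \R^N : v^1 \geq 0, \dots, v^l \geq 0\}$, which spans $\R^N$. The coordinate functions $x^1, \dots, x^l$ show the depth is at least $l$. For the upper bound I would use a first-order minimisation (Karush–Kuhn–Tucker) argument: if $f \geq 0$ on $U$ and $f(p) = 0$, then $p$ minimises $f$ over $U$, whence $df(p)(v) = D_v f(p) \geq 0$ for every $v \in C_p$; thus $df(p)$ lies in the dual cone of $C_p$, which is contained in $\vectorspan(dx^1, \dots, dx^l)$. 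Hence every admissible differential lies in this $l$-dimensional space, so at most $l$ of them can be independent and the depth is exactly $l$. The clean identification of the dual cone is the delicate part, and it is essentially the step that fails for mere homeomorphisms, which cannot detect corners.

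Step (ii) is then formal: pullback by $\phi$ gives an algebra isomorphism $\mathcal{C}^\infty(V) \to \mathcal{C}^\infty(U)$, $g \mapsto g \circ \phi$, which preserves non-negativity (as $\phi$ is a bijection) and sends functions vanishing at $\phi(p)$ to functions vanishing at $p$. Since $\phi$ extends to a smooth map whose Jacobian $d\phi(p)$ is a linear isomorphism — its inverse being $d(\phi^{-1})(\phi(p))$ by the chain rule applied to $\phi^{-1} \circ \phi = \Id$ — the relation $d(g \circ \phi)(p) = dg(\phi(p)) \circ d\phi(p)$ transports linearly independent families to linearly independent families in both directions. Therefore $p$ and $\phi(p)$ share the same boundary depth, which by (i) is precisely the assertion $\phi(\partial_l U) = \partial_l V$.
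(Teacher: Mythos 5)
Your proof is correct. Note that the paper itself omits the proof of this proposition (it is one of the statements deferred to Melrose's notes as standard), so there is no in-paper argument to compare against; your write-up would in fact fill that gap. All the delicate points are handled: the well-definedness of $df(p)$ at a boundary point via one-sided derivatives along the spanning tangent cone, the lower bound on the boundary depth from the vanishing coordinate functions $x^1,\dots,x^l$, the upper bound from the first-order optimality condition placing $df(p)$ in the dual cone $C_p^\ast=\{\sum_{i\le l}\lambda_i\,dx^i:\lambda_i\ge 0\}\subset\vectorspan(dx^1,\dots,dx^l)$, and the invariance under pullback by the chain rule with $d\phi(p)$ invertible. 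The standard argument (essentially Melrose's) is the primal version of yours: one shows directly that $d\phi(p)$ maps the tangent cone $C_p$ of $U$ at $p$ bijectively onto the tangent cone of $V$ at $\phi(p)$ (each cone being closed, the limits of difference quotients stay inside it, and the same applies to $\phi^{-1}$), and then observes that the maximal linear subspace contained in the model cone $[0,\infty)^l\times\R^{N-l}$ has dimension $N-l$, which pins down $l$. Your dual formulation via non-negative functions buys a characterisation phrased purely in terms of the function algebra $\mathcal{C}^\infty(U)$ and its positivity structure, at the cost of the slightly longer KKT-type computation identifying the dual cone; the primal version is a touch shorter but needs the small lemma that differentials preserve closed tangent cones. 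Either way the linear algebra at the end is the same, and both make transparent why the statement fails for mere homeomorphisms.
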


\begin{defi}
    Let $\M$ be a paracompact Hausdorff topological space.
    \begin{itemize}
        \item A (N-dimensional) \emph{chart} $(\phi,\mathcal{U},k)$ on $\M$ is a homeomorphism $\phi$ from an open subset $\mathcal{U} \subset \M$ to an open subset $V \subset \R^N_k$ for some $k \in \llbracket0,N\rrbracket$.
        \item Two charts $(\phi_1,\mathcal{U}_1,k_1)$, $(\phi_2,\mathcal{U}_2,k_2)$ on $\M$ are said to be \emph{compatible} if either $\mathcal{U}_1 \cap \mathcal{U}_2 = \varnothing$ or $\phi_2 \circ (\phi_1)^{-1} : \phi_1(\mathcal{U}_1 \cap \mathcal{U}_2) \to \phi_2(\mathcal{U}_1 \cap \mathcal{U}_2)$ is a diffeomorphism.
        \item A (N-dimensional) \emph{atlas} on $\M$ is a family of (N-dimensional) charts $(\phi_a,\mathcal{U}_a,k_a)_{a \in A}$ such that
        \begin{itemize}
            \item[i)] any two charts in the family are compatible,
            \item[ii)] the family covers $\M$ that is $\M = \bigcup_{a \in A} \mathcal{U}_a$.
        \end{itemize}
        \item An atlas on $\M$ is said to be \emph{maximal} if it contains all charts compatible with each chart of the atlas.
    \end{itemize}
\end{defi}

\begin{defi}
    A N-dimensional \emph{t-manifold} is a connected, paracompact and Hausdorff topological space endowed with a maximal N-dimensional atlas.
\end{defi}

\begin{defi}
    \label{def:boundary_hypersurface}
    Let $\M$ be a t-manifold.
    \begin{itemize}
        \item A \emph{coordinate system at} $p\in\M$ is a chart $(\phi,\mathcal{U},k)$ such that $\phi(p) = 0$.
        \item The \emph{set of $l$-corner points} $\partial_l \M$ is defined by
        \[ \partial_l \M := \{ p \in \M \mid \exists \, (\phi,\mathcal{U},k) \text{ coordinate system at $p$ such that } k=l  \} \,. \]
        For $l=0$, the notation $\mathring{\M}$ is used instead of $\partial_0 \M$.
        \item The \emph{boundary} of $\M$ is defined by
        \[ \partial \M := \overline{\partial_1\M} = \bigcup_{l \geq 1} \partial_l \M = \M \setminus \mathring{\M} \,, \]
        \item A \emph{boundary hypersurface} $\mathcal{S} \subset \partial \M$ is the closure of a connected component of $\partial_1 \M$,
        \item The algebra of smooth function on $\M$ is defined by
        \[ \mathcal{C}^\infty(\M) := \{ f : \M \to \R \mid \forall \text{ chart } (\phi,\mathcal{U},k), f \circ \phi^{-1} \in \mathcal{C}^\infty(\phi(\mathcal{U})) \} \,. \]
    \end{itemize}
\end{defi}

\begin{ex}
    The cube $[0,1]^3$ is a t-manifold. The sets of $0$-corner points, $1$-corner points, $2$-corner points and $3$-corner points are respectively the interior $(0,1)^3$, the faces, the edges and the corners.
\end{ex}

\begin{defi}
    \label{def:submanifold}
    Let $\M$ be a N-dimensional t-manifold. A connected subset $\mathcal{S} \subset \M$ is said to be a \emph{submanifold} if for all $p \in \mathcal{S}$, there exists a coordinate system $(\phi,\mathcal{U},k)$ at $p$, an invertible matrix $G \in GL_N(\R)$ and an open neighbourhood $V \subset \R^N$ of $0_{\R^N}$ such that
    \[ \phi(\mathcal{S} \cap \mathcal{U}) = G(\R^{N'}_{k'} \times \{0_{\R^{N-N'}}\}) \cap V \]
    for some $k' \leq N' \leq N$.
\end{defi}

\begin{defi}
    \label{def:manifold_with_corners}
    A \emph{smooth manifold with corners} is a t-manifold $\M$ such that all boundary hypersurfaces of $\M$ are submanifolds in the sense of \Cref{def:submanifold}.

    In particular, a \emph{smooth manifold with boundary} (respectively \emph{smooth manifold without boundary}) is a smooth manifold with corners $\M$ such that $\partial_l \M = \varnothing$ for $l \geq 2$ and $\partial_1\M \neq \varnothing$ (respectively $\partial_l \M = \varnothing$ for $l \geq 1$). 
\end{defi}

\subsubsection{Boundary defining functions}

Let $\M$ be a smooth manifold with corners and $\mathcal{I}$ be a non-empty union of disjoint boundary hypersurfaces.

\begin{defi}
    \label{def:bdf}
      A \emph{local boundary defining function} of $\mathcal{I}$ is a smooth function $x$ on an open set $\mathcal{U}$ with $\mathcal{I}\cap\mathcal{U}\neq\varnothing$ such that
    \begin{itemize}
        \item[i)] $x \geq 0$ on $\mathcal{U}$,
        \item[ii)] $\mathcal{I}\cap\mathcal{U} = \{p \in \mathcal{U} \mid x(p) = 0\}$,
        \item[iii)] for all $p \in \mathcal{I}\cap\mathcal{U}$, there exists a coordinate system at $p$ with $x$ as first coordinate.
    \end{itemize}
    If $\, \mathcal{U} = \M$, $x$ is said to be a \emph{global boundary defining function} of $\mathcal{I}$.
\end{defi}

\begin{rem}
    If $x$ is a local boundary defining function of $\mathcal{I}$ on an open set $\mathcal{U}$ then one deduces from $iii)$ that
    \begin{equation}
        \label{eq:bdf_iii)_modified}
        \forall \, p \in \mathcal{I} \cap \mathcal{U}, \quad dx|_p \neq 0 \,.
    \end{equation}
    This is the common version of $iii)$.
\end{rem}

\begin{prop} 
    There exists a global boundary defining function of $\mathcal{I}$.
\end{prop}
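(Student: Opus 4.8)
The plan is to construct $x$ by gluing local boundary defining functions together with a partition of unity, the whole difficulty being concentrated in verifying condition $iii)$ of \Cref{def:bdf} at corner points.

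First I would record two local facts. Every $p \in \mathcal{I}$ admits a local boundary defining function: choosing a coordinate system at $p$, the boundary hypersurface of $\mathcal{I}$ through $p$ — of which there is exactly one, since the hypersurfaces composing $\mathcal{I}$ are pairwise disjoint — corresponds locally to a single coordinate hyperplane $\{x^j=0\}$ (here the embeddedness built into \Cref{def:manifold_with_corners} is used, so that the hypersurface appears as one face rather than several), and $x^j$ is then a local boundary defining function. The key rigidity I would then establish is that any two local boundary defining functions $x,x'$ of $\mathcal{I}$ near a common point of $\mathcal{I}$ satisfy $x' = \vartheta\, x$ with $\vartheta$ smooth and strictly positive. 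This follows from Hadamard's lemma (since $x'$ vanishes on $\{x=0\}$ one gets $x'=\vartheta\, x$ with $\vartheta$ smooth; the symmetric relation gives $\vartheta$ nonvanishing; and $x,x'\ge 0$ forces $\vartheta>0$).

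Next I would patch. Take a locally finite cover $(\mathcal{U}_a)$ of $\M$ carrying, on each $\mathcal{U}_a$ meeting $\mathcal{I}$, a local boundary defining function $x_a$, and set $x_a:=1$ on each $\mathcal{U}_a$ disjoint from $\mathcal{I}$. Paracompactness and the Hausdorff property of a t-manifold furnish a subordinate smooth partition of unity $(\chi_a)$, and I define $x:=\sum_a \chi_a x_a$. Properties $i)$ and $ii)$ are then immediate: $x\ge 0$ as a nonnegative combination of nonnegative functions, and for $p\notin\mathcal{I}$ every index $a$ with $\chi_a(p)>0$ has $x_a(p)>0$, whereas for $p\in\mathcal{I}$ every such index has $x_a(p)=0$, so $\{x=0\}=\mathcal{I}$.

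The remaining step, condition $iii)$, is where I expect the main obstacle. Fix $p\in\mathcal{I}$. Since every contributing $x_a$ vanishes at $p$, the Leibniz rule collapses to $dx|_p=\sum_a \chi_a(p)\,dx_a|_p$; by the rigidity of the first step all the covectors $dx_a|_p$ are strictly positive multiples of one fixed nonzero covector, so $dx|_p\neq 0$. Away from corners this would already suffice, but at an $l$-corner point one must guarantee compatibility with the product corner structure. To do so I pass to a coordinate system at $p$ in which $\mathcal{I}$ is the face $\{y^1=0\}$; then $x\ge 0$ vanishes exactly there, so Hadamard gives $x=y^1 u$ with $u$ smooth and, because $dx|_p\neq 0$ together with $x\ge 0$, with $u(p)>0$. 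Hence $u>0$ near $p$, and the map $(y^1,y^2,\dots)\mapsto(x,y^2,\dots)$ is a diffeomorphism of manifolds with corners near $p$ that fixes all other faces, exhibiting $x$ as a first coordinate and establishing $iii)$. The essential ingredients making this last step work are precisely the disjointness of the hypersurfaces in $\mathcal{I}$ (so a single face passes through $p$) and the embeddedness from \Cref{def:manifold_with_corners}.
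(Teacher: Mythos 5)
Your proof is correct, and it is precisely the standard partition-of-unity construction from Melrose's notes to which the paper defers (the paper omits this proof as standard). You correctly isolate the only delicate points — the positive-rescaling rigidity of local defining functions (which you reprove via Hadamard's lemma rather than quoting \Cref{lemma:bdf}, appropriately, since that corollary appears only after this proposition) and the verification of condition $iii)$ of \Cref{def:bdf} at corner points via the factorisation $x = y^1 u$ with $u>0$ — so nothing further is needed.
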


\begin{lem}
    \label{lem:dividing_bdf}
    Let $x \in \mathcal{C}^\infty(\mathcal{U})$ be a local boundary defining function of $\mathcal{I}$ and $f \in \mathcal{C}^\infty(\mathcal{U},\R)$ be a smooth function on $\mathcal{U}$. Then the function $f/x$ extends smoothly on $\mathcal{U} \cap \mathcal{I}$ if and only if $f$ vanishes on $\mathcal{U} \cap \mathcal{I}$.
\end{lem}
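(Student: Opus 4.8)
The plan is to prove the two implications separately, the forward one being essentially immediate and the reverse one being a manifestation of Hadamard's lemma adapted to manifolds with corners.

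For the easy direction, I would assume that $f/x$ extends to a continuous (indeed smooth) function $g$ on a neighbourhood of $\mathcal{U}\cap\mathcal{I}$. On $\mathcal{U}\setminus\mathcal{I} = \{x>0\}$ one has $f = x\,g$, and since both sides are continuous on $\mathcal{U}$ while $\{x>0\}$ is dense near $\mathcal{U}\cap\mathcal{I}$, the identity $f = x\,g$ persists up to $\mathcal{U}\cap\mathcal{I}$. As $x$ vanishes there by $ii)$ of \Cref{def:bdf}, it follows that $f$ vanishes on $\mathcal{U}\cap\mathcal{I}$.

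For the reverse direction, I would use that smoothness is a local property, so it suffices to produce, for each $p\in\mathcal{U}\cap\mathcal{I}$, a smooth function on a neighbourhood of $p$ coinciding with $f/x$ where $x>0$; these local extensions automatically agree on overlaps by density of $\{x>0\}$ and patch to a global extension. Fix such a $p$ and invoke $iii)$ of \Cref{def:bdf} to obtain a coordinate system $(\phi,\mathcal{U}',k)$ at $p$ having $x$ as first coordinate, writing $\phi = (x, y^2,\dots,y^N)$, so that $\mathcal{U}'\cap\mathcal{I} = \{x=0\}$ in these coordinates. By \Cref{def:smooth_functions}, $f$ is the restriction to $\R^N_k$ of a function $\widetilde{f}$ smooth on an open set $V\subset\R^N$, which I shrink to a box $V = (-\epsilon,\epsilon)^N$ around $\phi(p)=0$. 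The key computation is the fundamental theorem of calculus along the segment from $(0,y)$ to $(x,y)$, whose first coordinate stays in $[0,x]\subset[0,\epsilon)$ so that the segment lies in $V$:
\[ \widetilde{f}(x,y) - \widetilde{f}(0,y) = x\int_0^1 (\partial_x\widetilde{f})(tx,y)\,dt \,. \]
For any $(x,y)\in V\cap\R^N_k$ the point $(0,y)$ again lies in $V\cap\R^N_k$, where $\widetilde{f}$ agrees with $f$ and hence vanishes by hypothesis; thus $\widetilde{f}(0,y)=0$ and
\[ f(x,y) = x\, g(x,y), \qquad g(x,y) := \int_0^1 (\partial_x\widetilde{f})(tx,y)\,dt \,. \]
Differentiation under the integral sign shows $g$ is smooth on $V$, so its restriction defines a smooth function on the chart equal to $f/x$ wherever $x>0$; this is the sought local extension.

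The only point requiring genuine care --- and the place I expect to be the main obstacle --- is the interplay between the corner structure and the ambient extension $\widetilde{f}$. One must not claim that $\widetilde f$ vanishes on all of $\{x=0\}\cap V$ (it need not, away from $\R^N_k$), but only on the part lying in $\R^N_k$; it is precisely this restricted vanishing, together with the facts that zeroing the first coordinate preserves membership in $\R^N_k$ and, because $V$ is a box, keeps one inside $V$, that makes the Hadamard argument close on $V\cap\R^N_k$ without any control on $\widetilde f$ outside the quadrant.
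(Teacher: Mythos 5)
Your proof is correct and follows essentially the same route as the paper: both directions are handled the same way, working in the chart from $iii)$ of \Cref{def:bdf} with $x$ as first coordinate and extending $f$ to an ambient open set. Your explicit Hadamard integral $g(x,y)=\int_0^1(\partial_x\widetilde f)(tx,y)\,dt$ merely makes precise the step the paper dispatches with the phrase ``using Taylor expansions,'' and your care about where $\widetilde f$ is actually known to vanish (only on $\{x=0\}\cap\R^N_k$, not on all of $\{x=0\}\cap V$) is a correct and welcome refinement.
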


\begin{rem}
    Since $f,x$ are smooth functions on $\mathcal{U}\setminus\mathcal{I}$ and $x$ nowhere vanishes on $\mathcal{U}\setminus\mathcal{I}$ by $ii)$ of \Cref{def:bdf}, the function $f/x$ is always well-defined and smooth on $\mathcal{U}\setminus\mathcal{I}$.
\end{rem}

\begin{proof}
    $\Longrightarrow$ This is evident by evaluating $f = x (f/x)$ on $\mathcal{I}$.

    $\Longleftarrow$ Let $p \in \mathcal{U} \cap \mathcal{I}$. By $iii)$ of \Cref{def:bdf}, there exists a coordinate system $(\phi,\mathcal{V},k)$ at $p$ such that $\phi = (x,y^1,\dots,y^{N-1})$. One can assume that $\mathcal{V} \subset \mathcal{U}$. Since $x$ and $f$ are smooth, there exist smooth extensions $\overline{x}$ and $\overline{f}$ of respectively $x \circ \phi^{-1}$ and $f \circ \phi^{-1}$ on an open set $W$ of $\R^N$ with $\phi(\mathcal{V}) = W \cap \R^N_1$ (see \Cref{def:smooth_functions}). Thanks to \eqref{eq:bdf_iii)_modified}, one can assume that $\overline{x}$ nowhere vanishes on $W\setminus\phi(\mathcal{V}\cap\mathcal{I})$ up to reducing $W$. Hence $\overline{f}/\overline{x}$ is well-defined and smooth on $W\setminus\phi(\mathcal{V}\cap\mathcal{I})$. Finally, $\overline{f}/\overline{x}$ extend smoothly on $\phi(\mathcal{V}\cap\mathcal{I})$ using Taylor expansions of $f$ and $f$ vanishing on $\mathcal{U}\cap\mathcal{I}$. Hence $\overline{f}/\overline{x}$ is a smooth extension on $W$ of $f/x \circ \phi^{-1}$, which implies that $f/x$ is well-defined and smooth on $\mathcal{V}$.
\end{proof}

\noindent A consequence of the previous lemma and a key property of boundary defining functions is that they are all `linearly equivalent' as stated below.

\begin{cor}
    \label{lemma:bdf}
    Let $x \in \mathcal{C}^\infty(\mathcal{U})$ and $x' \in \mathcal{C}^\infty(\mathcal{U}')$ be two local boundary defining functions of $\mathcal{I}$ such that $\mathcal{U} \cap \mathcal{U}' \neq \varnothing$. Then the function $\Theta := x'/x$ is a smooth positive function on $\mathcal{U} \cap \mathcal{U}'$. 
\end{cor}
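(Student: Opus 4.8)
The plan is to deduce everything from \Cref{lem:dividing_bdf}, applied symmetrically to the two boundary defining functions. First I would observe that the three defining conditions $i)$--$iii)$ of \Cref{def:bdf} are all local and stable under restriction to an open subset, so the restrictions of $x$ and $x'$ to the open overlap $\mathcal{U} \cap \mathcal{U}'$ are again local boundary defining functions of $\mathcal{I}$ on $\mathcal{U} \cap \mathcal{U}'$. Thus I may work entirely on $\mathcal{U} \cap \mathcal{U}'$, where both $x$ and $x'$ are smooth, nonnegative, and vanish exactly on $\mathcal{I} \cap \mathcal{U} \cap \mathcal{U}'$.

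Second, the smoothness of $\Theta$. Since $x'$ is smooth on $\mathcal{U} \cap \mathcal{U}'$ and vanishes on $\mathcal{I} \cap \mathcal{U} \cap \mathcal{U}'$ by $ii)$, \Cref{lem:dividing_bdf} (with $x$ the boundary defining function and $f = x'$) gives that $\Theta = x'/x$ extends smoothly across $\mathcal{I}$, and is therefore smooth on all of $\mathcal{U} \cap \mathcal{U}'$, recalling that the quotient is automatically smooth away from $\mathcal{I}$. Exchanging the roles of $x$ and $x'$, the same lemma shows that the reciprocal $\Theta^{-1} = x/x'$ is likewise smooth on $\mathcal{U} \cap \mathcal{U}'$.

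Third, the positivity. On the open dense set $(\mathcal{U} \cap \mathcal{U}') \setminus \mathcal{I}$ both $x$ and $x'$ are strictly positive by $i)$ and $ii)$, so $\Theta > 0$ there; by continuity $\Theta \geq 0$ on all of $\mathcal{U} \cap \mathcal{U}'$. To rule out zeros on $\mathcal{I}$, I would invoke the smoothness of $\Theta^{-1}$: the identity $\Theta \, \Theta^{-1} = 1$ holds on the dense set $(\mathcal{U} \cap \mathcal{U}') \setminus \mathcal{I}$, hence, both factors being continuous, on all of $\mathcal{U} \cap \mathcal{U}'$; this forces $\Theta$ to be nowhere vanishing. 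Combined with $\Theta \geq 0$, we conclude that $\Theta > 0$ everywhere.

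The only genuinely delicate point is the positivity on $\mathcal{I}$: smoothness and nonnegativity of $\Theta$ alone do not exclude a zero on the boundary, as the example $x' = x^2$ shows, which is smooth, nonnegative and vanishes on $\mathcal{I}$ but is not a boundary defining function. The symmetric use of the lemma, extracting smoothness of the reciprocal from the fact that $x'$ is itself a bona fide boundary defining function, is precisely what prevents such a degeneracy. I expect this to be the main obstacle, everything else being a direct application of the dividing lemma.
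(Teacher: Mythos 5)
Your proof is correct. The reduction to the overlap and the use of \Cref{lem:dividing_bdf} to get smoothness of $\Theta = x'/x$ coincide exactly with the paper's argument; the only place where you diverge is the strict positivity of $\Theta$ on $\mathcal{I}$. The paper handles this by L'H\^opital's rule: since $x'$ vanishes on $\mathcal{I}$ and $dx'\neq 0$ there, the limit of $x'/x$ along the normal direction is $\partial_x x'>0$. You instead apply \Cref{lem:dividing_bdf} a second time with the roles of $x$ and $x'$ exchanged to obtain smoothness of $x/x'$, and then propagate the identity $(x'/x)\cdot(x/x') = 1$ from the dense open set $(\mathcal{U}\cap\mathcal{U}')\setminus\mathcal{I}$ to the whole overlap by continuity, which forces $\Theta$ to be nowhere vanishing and hence, by nonnegativity, everywhere positive. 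Both routes work; yours has the advantage of staying entirely algebraic and never differentiating in a chosen direction, which is slightly cleaner in the manifold-with-corners setting, while the paper's L'H\^opital argument is shorter and makes the value of $\Theta$ on $\mathcal{I}$ explicit (namely $\partial_x x'$ in adapted coordinates), which is occasionally useful later. Your closing remark correctly identifies the one genuinely delicate point — that smoothness plus nonnegativity alone do not exclude a boundary zero, as $x^2$ shows — and your symmetric argument does close that gap.
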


\begin{ex}
    If $x$ is a local boundary defining function then so is $x(2+x)$ but not $x^2$.
\end{ex}

\begin{proof}
    A fortiori, $x$ is a local boundary defining function on $\mathcal{U} \cap \mathcal{U}'$ and $x'$ vanishes on $\mathcal{U} \cap \mathcal{U}' \cap \mathcal{I}$. By application of \Cref{lem:dividing_bdf}, $x'/x$ extends smoothly on $\mathcal{U} \cap \mathcal{U}' \cap \mathcal{I}$. By $i)$ and $ii)$ of \Cref{def:bdf}, it is clear that $x'/x$ is positive on $(\mathcal{U} \cap \mathcal{U}')\setminus\mathcal{I}$. Using L'Hôpital's rule, one can deduce that $x'/x$ is also positive on $\mathcal{U} \cap \mathcal{U}' \cap \mathcal{I}$.
\end{proof}

\subsubsection{Polyhomogeneous functions}
\label{sec:polyhomogeneous_functions}

We define polyhomogeneous functions in a simpler version than Melrose in~\cite{M96} since it will be sufficient for our case. Polyhomogeneous functions is the right class of functions to consider for the metric components of aAdS spaces in arbitrary dimension as we will see in \Cref{thm:FG_exp}.

\begin{defi}
    An \emph{index set} $E$ is a subset of $\R \times \N_0$ such that
    \begin{itemize}
        \item[i)] $\forall \, s \in \R, \; E_{\leq s} := \{ (r,k) \in E \mid r \leq s\}$ is finite,
        \item[ii)] $(r,k) \in E \implies \forall \; 0 \leq l \leq k, \; (r,l) \in E$,
        \item[iii)] $(r,k) \in E \implies (r+1,k) \in E$\,.
    \end{itemize}
\end{defi}

\begin{ex}
    The minimal index set containing an element $(r,k) \in \R \times \N_0$ is $E = \{ (s,l) \in \R \times \N_0 \mid s-r \in \N_0, \, l \leq k\}$.
\end{ex}

\begin{defi}
    Let $\mathcal{M}$ be a smooth manifold with corners of dimension $N$, $\mathcal{I}$ be a union of disjoint boundary hypersurfaces and $E$ be an index set. A map $f : \mathcal{M} \to \R$ is said to be \emph{polyhomogeneous of class} $m \in \N_0 \cup \{+\infty\}$ with respect to $\mathcal{I}$ and with index set $E$ if
    \begin{itemize}
        \item[i)] $f \in \mathcal{C}^m(\mathcal{M}\setminus\mathcal{I},\R)$,
        \item[ii)] for all local boundary defining function $x$ of $\mathcal{I}$ and for all coordinate system $(\phi = (x,y),\mathcal{V},k)$ at a point $p\in \mathcal{I} \cap \mathcal{U}$ with $x$ as first coordinate, there exists a family of functions $(f_{(r,k)})_{(r,k)\in E} \in \mathcal{C}^m(\R^{N-1},\R)$ such that for all $0 \leq j \leq m$, for all $\alpha \in \N_0^{N-1}$ with $|\alpha| \leq m-j$ and for all $s \in \R$,
        \[ ( x\partial_x)^j \partial_y^\alpha \left( f \circ \phi^{-1} - \sum_{(r,k) \in E_{\leq s}} f_{(r,k)} x^r (\log x)^k\right) = o\left(x^s\right) \,. \]
    \end{itemize}
    The algebra of such functions is denoted by $\mathcal{C}^m_{\text{phg}}(\mathcal{M}\mid\mathcal{I},E)$.
\end{defi}

\begin{rem}
    Index sets are designed so that if $ii)$ holds for one boundary defining function then it remains true for all.
\end{rem}

\subsection{Curvature tensors}

Let $(\mathcal{M},g)$ be a smooth pseudo-Riemannian manifold with corners of dimension $N \geq 2$. Let us recall the definitions and the properties of the standard curvature tensors associated to the Levi-Civita connection $\nabla$ of $g$, namely the Riemann, Ricci, Weyl, Schouten, Cotton and Bach tensors. Smoothness is assumed for simplicity but note that the definitions hold for lower regularity: $\mathcal{C}^2$ for Riemann, Ricci, Schouten and Weyl; $\mathcal{C}^3$ for Cotton; $\mathcal{C}^4$ for Bach.

\subsubsection{The Riemann and Ricci tensors}

\begin{defi}
    The \emph{Riemann tensor} is defined by
    \begin{equation}
        R^\alpha{}_{\beta\mu\nu} X^\beta Y^\mu Z^\nu := \left( \nabla_Y \nabla_Z X - \nabla_Z \nabla_Y X - \nabla_{[Y,Z]} X\right)^\alpha
    \end{equation}
    for all smooth vector fields $X,Y,Z$ on $\mathcal{M}$. The \emph{Ricci tensor} and the \emph{scalar curvature} are the following contractions of the Riemann tensor
    \begin{equation}
        R_{\mu\nu} := R^\alpha{}_{\mu\alpha\nu}, \qquad
        R := g^{\mu\nu} R_{\mu\nu} \,.
    \end{equation}
\end{defi}

\noindent The Riemann tensor of a Levi-Civita connection:
\begin{itemize}
    \item enjoys the following symmetries
        \begin{align*}
            R^\alpha{}_{\beta(\mu\nu)} &= 0 \,, \\
            R_{(\alpha\beta)\mu\nu} &= 0 \,,
        \end{align*}
    \item verifies the first and second Bianchi identities
        \begin{subequations}
            \label{eq:Bianchi_Riemann}
            \begin{align}
                R^\alpha{}_{[\beta\mu\nu]} &= 0 \,, \\
                \nabla_{[\lambda} R^\alpha{}_{|\beta|\mu\nu]} &= 0 \,.
            \end{align}
        \end{subequations}
\end{itemize}
It follows that
\[ R_{\alpha\beta\mu\nu} = R_{\mu\nu\alpha\beta} \,, \qquad R_{[\mu\nu]} = 0 \,, \qquad \nabla^\mu R_{\mu\nu} = \frac{1}{2} \nabla_\nu R \,.\]

\begin{defi}
    For $N\geq3$, $(\mathcal{M},g)$ is said to be \emph{Einstein} if it is solution to the \emph{Vacuum Einstein equations (VE)}
    \begin{equation}
        \tag{\ref{eq:VE}}
        R_{\mu\nu} = \frac{2 \Lambda}{N-2} g_{\mu\nu} \,,
    \end{equation}
    where $\Lambda\in\R$ is the cosmological constant. If $N=2$, $(\M,g)$ is said to be Einstein if its scalar curvature $R$ is constant.
\end{defi}

\subsubsection{The Weyl, Schouten and Cotton tensors}
\label{sec:weyl_schouten_cotton}

Instead of working with the Riemann and Ricci tensors, one can use the Weyl and Schouten tensors. This enables to define a practical orthogonal decomposition of the Riemann tensor, see~\eqref{eq:decomposition_Riemann}. Moreover, these tensors are particularly useful in conformal geometry as we will see in \Cref{sec:conformal_geometry} where the definitions will be extended to Weyl connections. 

In what follows, it will be assumed that $N \geq 3$.

\begin{defi}
    The \emph{Weyl tensor}, whose components are denoted by $W^\alpha{}_{\beta\mu\nu}$, is defined as the totally trace-free part of the Riemann tensor $R^\alpha{}_{\beta\mu\nu}$. The \emph{Schouten tensor} is defined by
    \begin{equation}
        \label{def:schouten}
    	L_{\mu\nu} := \frac{1}{N-2} \left( R_{\mu\nu} - \frac{R}{2(N-1)} g_{\mu\nu} \right) \,.
    \end{equation}
\end{defi}

\noindent Note that the Weyl (respectively Schouten) tensor of a Levi-Civita connection has the same symmetries than its Riemann (respectively Ricci) tensor. Furthermore, the following decomposition holds
	\begin{equation}
        \label{eq:decomposition_Riemann}
		R^\alpha{}_{\beta\mu\nu} = W^\alpha{}_{\beta\mu\nu} + 2 S_{\beta[\mu}{}^{\alpha\sigma} L_{\nu]\sigma} \,,
	\end{equation}
	where
	\begin{equation}
		\label{eq:def_S}
		S_{\alpha\beta}{}^{\mu\nu} := \delta_\alpha{}^\mu \delta_\beta{}^\nu + \delta_\alpha{}^\nu \delta_\beta{}^\mu - g_{\alpha\beta} g^{\mu\nu} \,.
	\end{equation}

\begin{rem}
    Thanks to metric compatibility, one has $\nabla_\gamma S_{\alpha\beta}{}^{\mu\nu} = 0$.
\end{rem}

\begin{defi}
    The \emph{Cotton tensor} is defined by
    \begin{equation}
        \label{eq:def_cotton}
        C_{\lambda\mu\nu} := \nabla_\lambda L_{\mu\nu} - \nabla_\mu L_{\lambda\nu} \,.
    \end{equation}
\end{defi}

\noindent By definition, the Cotton tensor verifies
\begin{equation}
    \label{cotton_antisym_acyclique}
     C_{(\alpha\beta)\gamma} = 0 \,, \qquad C_{[\alpha\beta\gamma]} = 0 \,.
\end{equation}
The second property is called the third Bianchi identity.

\begin{rem}
    If $(\mathcal{M},g)$ is Einstein then its Cotton tensor vanishes identically.
\end{rem}

\noindent The first and second Bianchi identities~\eqref{eq:Bianchi_Riemann} rewrite as
\begin{subequations}
    \begin{align}
        \label{Bianchi1_Weyl}
        W^\alpha{}_{[\beta\mu\nu]} &= 0 \,, \\
        \nabla_{[\lambda} W^\alpha{}_{|\beta|\mu\nu]} &= S_{\beta[\lambda}{}^{\alpha\rho} C_{\mu\nu]\rho} \,.
    \end{align}
\end{subequations}
The contractions of the second Bianchi identity give
\begin{subequations}
    \begin{align}
        \nabla_\alpha W^\alpha{}_{\beta\mu\nu} &= (N-3) C_{\mu\nu\beta} \,, \\
        C_{\lambda\mu}{}^\mu &= 0 \,.
    \end{align}
\end{subequations}
Furthermore, one can prove the fourth Bianchi identity which writes
\begin{equation}
    \nabla_{[\lambda} C_{\mu\nu]\alpha} = - W^\beta{}_{\alpha[\lambda\mu} L_{\nu]\beta} \,.
\end{equation}
Its trace give
\begin{equation}
    \label{eq_fourth_Bianchi_id}
    \nabla^\alpha C_{\mu\nu\alpha} = 0 \,.
\end{equation}

\subsubsection{The Bach tensor}

\begin{defi}
    The \emph{Bach tensor} is defined by
    \begin{equation}
        B_{\mu\nu} := \nabla^\lambda C_{\lambda\mu\nu} + L^{\lambda\sigma} W_{\mu\lambda\nu\sigma} \,.
    \end{equation}
\end{defi}

\noindent By definition, one can check that
\[ B_{[\mu\nu]} = 0 \, \qquad B^\mu{}_\mu = 0 \,. \]

\begin{rem}
    If $(\mathcal{M},g)$ is Einstein then its Bach tensor vanishes identically.
\end{rem}

\noindent Furthermore, one has
\[ \nabla^\mu B_{\mu\nu} = (N-4) L^{\lambda\sigma} C_{\nu\lambda\sigma} \,. \]

\subsection{Conformal geometry}
\label{sec:conformal_geometry}

Conformal geometry is based on the freedom to positively rescale metrics. It plays an important role to study the conformal boundary of aAdS spaces and is at the heart of the (extended) conformal Einstein equations.

We start by reviewing the basic definitions of confomorphisms and conformal structures in \Cref{sec:confomorphims_and_conformal_structures},  before defining Weyl connections and their associated curvature tensors in \Cref{section_Weyl_conn}. Weyl connections are extensively used in the extended conformal Einstein equations presented in \Cref{sec:CVE}. Conformally invariant tensors are introduced in \Cref{sec:conformal_invariance} and we recall remarkable conformal structures in \Cref{sec:special_conf_structures}, stressing the importance of the Weyl and Cotton tensors.

We then introduce special fields of conformal structures such as conformal Killing fields in \Cref{sec:conformal_Killing_fields} and conformal geodesics in \Cref{sec:conf_geo}. The congruences of timelike conformal geodesics studied in \Cref{sec:congruence_conf_geo} will be used later to construct a gauge for the extended conformal vacuum Einstein equations, see \Cref{sec:gauge_construction}. Conformal Killing fields will only be used in connection with boundary stress-energy tensors in \Cref{sec:BSET}.

The main reference of this section is \cite[Section 5]{K16}, see also \cite{F03} for a detailed discussion on conformal geodesics.

\subsubsection{Confomorphisms and conformal structures}
\label{sec:confomorphims_and_conformal_structures}

\begin{defi}
    \label{def:confomorphism}
    A \emph{confomorphism} between two smooth pseudo-Riemannian manifolds $(\mathcal{M}_1,g_1)$ and $(\mathcal{M}_2,g_2)$ is a diffeomorphism $\phi : \mathcal{M}_1 \to \mathcal{M}_2$ such that there exists a smooth positive function $\Omega \in \mathcal{C}^\infty(\mathcal{M}_1,\R^\star_+)$ verifying $\phi^\star g_2 = \Omega^2 g_1$. The function $\Omega$ is called the \emph{conformal factor} associated to $\phi$.

    A confomorphism is said to be an \emph{isometry} if $\Omega = 1$ and a \emph{conformal rescaling} if $\mathcal{M}_1 = \mathcal{M}_2$ and $g_2 = \Omega^2 g_1$.
\end{defi}

\begin{defi}
	Two metrics $g$ and $\widetilde{g}$ on a smooth manifold $\mathcal{M}$ are said to be \emph{conformally related} if they are related by a conformal rescaling. This defines an equivalence relation on metrics on $\mathcal{M}$.
\end{defi}

\begin{defi}
	A \emph{conformal structure} is a pair $(\mathcal{M},[\widetilde{g}])$ where $\mathcal{M}$ is a smooth manifold and $[\widetilde{g}]$ is a conformal class of metrics on $\mathcal{M}$.
\end{defi}

\subsubsection{Weyl connections}
\label{section_Weyl_conn}

Weyl connections are the natural extension of Levi-Civita connections in the framework of conformal geometry. In what follows, consider a conformal structure $(\mathcal{M},[\widetilde{g}])$ of dimension $N\geq3$.

\begin{defi}
	 A \emph{Weyl connection} for $(\mathcal{M},[\widetilde{g}])$ is a connection $\widehat{\nabla}$ such that
    \begin{itemize}
        \item[i)] it is torsion-free: for all smooth vector fields $X$, $Y$ on $\M$
        \begin{equation}
            \label{eq_torsion_free}
            \widehat{\Sigma}(X,Y) := \widehat{\nabla}_X Y - \widehat{\nabla}_Y X - [X,Y] = 0 \,,
        \end{equation}
        \item[ii)] for all $g \in [\widetilde{g}]$, there exists a smooth covector field $\widehat{\kappa}_g \in T^\star \mathcal{M}$ verifying
    	\begin{equation}
    		\label{weyl_conn}
    		\widehat{\nabla}_\alpha g_{\beta\gamma} = -2 \, (\widehat{\kappa}_g)_\alpha \, g_{\beta\gamma} \,.
    	\end{equation}
    \end{itemize}
\end{defi}

\begin{rems} \,
	\begin{itemize}
		\item If~\eqref{weyl_conn} holds for a representative $g \in [\widetilde{g}]$ then it holds for all representatives. Indeed, if $\overline{g} = \Omega^2 g$ is another representative, it suffices to take $\widehat{\kappa}_{\overline{g}} = \widehat{\kappa}_g + d(\ln\Omega)$.
		\item The Levi-Civita connection $\nabla$ of a representative $g \in [\widetilde{g}]$ is a Weyl connection.
		\item From~\eqref{weyl_conn}, one deduces that $\widehat{\nabla}_\alpha g^{\beta\gamma} = 2(\widehat{\kappa}_g)_\alpha g^{\beta\gamma}$ and $\widehat{\nabla}_\sigma S_{\alpha\beta}{}^{\mu\nu} = 0$ where $S_{\alpha\beta}{}^{\mu\nu}$ is defined by \eqref{eq:def_S}. \qedhere
	\end{itemize}
\end{rems}

\begin{prop}
    \label{prop:exist&uniq_weylconn}
    Let $g \in [\widetilde{g}]$ be a representative and $\kappa$ be a smooth covector field on $\mathcal{M}$. There exists a unique Weyl connection $\widehat{\nabla}$ for $(\mathcal{M},[\widetilde{g}])$ such that the covector field associated to $\widehat{\nabla}$ with respect to $g$ by~\eqref{weyl_conn} is equal to $\kappa$.
\end{prop}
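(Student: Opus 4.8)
The plan is to reduce the statement to a pointwise linear-algebra problem by comparing the sought connection $\widehat{\nabla}$ with the Levi-Civita connection $\nabla$ of the prescribed representative $g$. Any affine connection can be written uniquely as $\widehat{\nabla} = \nabla + C$ for a $(1,2)$-tensor field $C$, and conversely every such $C$ yields a connection; in frame components this reads $\widehat{\Gamma}_{\alpha}{}^{\gamma}{}_{\beta} = \Gamma_{\alpha}{}^{\gamma}{}_{\beta} + C_{\alpha}{}^{\gamma}{}_{\beta}$. Hence constructing the Weyl connection is equivalent to determining the tensor $C$, and I will show that its two defining properties translate into a linear system for $C$ admitting exactly one solution.

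First I would translate the torsion-free requirement \eqref{eq_torsion_free}. Since $\nabla$ is itself torsion-free, one has $\widehat{\Sigma}(X,Y) = C(X,Y) - C(Y,X)$, so \eqref{eq_torsion_free} holds if and only if $C$ is symmetric in its two covariant slots, i.e. $C_{\alpha}{}^{\gamma}{}_{\beta} = C_{\beta}{}^{\gamma}{}_{\alpha}$. Next I would encode \eqref{weyl_conn} with the prescribed covector $\kappa$. Writing $C_{\alpha\beta\gamma} := g_{\beta\delta}\,C_{\alpha}{}^{\delta}{}_{\gamma}$ and using $\nabla_\alpha g_{\beta\gamma} = 0$, a direct computation gives $\widehat{\nabla}_\alpha g_{\beta\gamma} = -\left( C_{\alpha\beta\gamma} + C_{\alpha\gamma\beta} \right)$, so imposing \eqref{weyl_conn} with $\widehat{\kappa}_g = \kappa$ is exactly the algebraic equation
\[ C_{\alpha\beta\gamma} + C_{\alpha\gamma\beta} = 2\,\kappa_\alpha\, g_{\beta\gamma} \,, \]
to be solved subject to the symmetry $C_{\alpha\beta\gamma} = C_{\gamma\beta\alpha}$ coming from the previous step.

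I would then solve this system by the cyclic (Koszul-type) permutation argument: writing the above equation for the three cyclic permutations of $(\alpha,\beta,\gamma)$ and combining them by means of the symmetry constraint yields the explicit candidate
\[ C_{\alpha}{}^{\gamma}{}_{\beta} = \delta_{\alpha}{}^{\gamma}\,\kappa_\beta + \delta_{\beta}{}^{\gamma}\,\kappa_\alpha - g_{\alpha\beta}\,\kappa^{\gamma} \,, \qquad \kappa^{\gamma} := g^{\gamma\delta}\kappa_\delta \,. \]
A direct verification confirms that this $C$ is symmetric in $(\alpha,\beta)$ and satisfies the displayed equation, which establishes existence. For uniqueness, the difference of two solutions satisfies the homogeneous system obtained by setting $\kappa = 0$, and the same cyclic manipulation forces it to vanish identically. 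Finally I would note that, although \eqref{weyl_conn} was imposed only for the representative $g$, the first item of the Remarks following \eqref{weyl_conn} ensures it then holds for every representative of $[\widetilde{g}]$, so $\widehat{\nabla}$ is genuinely a Weyl connection for $(\mathcal{M},[\widetilde{g}])$.

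The computation is essentially routine, so there is no deep obstacle; the one point requiring care is that the algebraic system together with the symmetry constraint is a priori neither obviously solvable nor obviously uniquely solvable, and it is precisely the cyclic permutation trick that delivers both existence and uniqueness at once, in direct analogy with the derivation of the Koszul formula for the Levi-Civita connection.
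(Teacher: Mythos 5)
Your proof is correct and takes essentially the same route as the paper: the paper characterises $\widehat{\nabla}$ by the modified Koszul formula \eqref{extended_Koszul}, which is exactly what your cyclic-permutation argument produces once you subtract the ordinary Koszul formula for $\nabla$, and your explicit difference tensor $C_{\alpha}{}^{\gamma}{}_{\beta}=\delta_{\alpha}{}^{\gamma}\kappa_{\beta}+\delta_{\beta}{}^{\gamma}\kappa_{\alpha}-g_{\alpha\beta}\kappa^{\gamma}=S_{\alpha\beta}{}^{\gamma\mu}\kappa_{\mu}$ is the special case $\widehat{\kappa}_g=0$ of \Cref{prop_transition_weyl_conn}. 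No gaps; your final remark about \eqref{weyl_conn} holding for all representatives correctly completes the verification that the resulting connection is a Weyl connection for the whole conformal class.
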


\begin{proof}
    The Weyl connection is entirely characterised by the following modified Koszul formula
    \begin{align}
        \label{extended_Koszul}
        2 g(\widehat{\nabla}_X Y,Z) &= X \cdot g(Y,Z) + g(Y,[Z,X]) + 2 \langle \kappa,X\rangle g(Y,Z) \nonumber \\
        &\quad + Y \cdot g(Z,X) + g(Z,[X,Y]) + 2 \langle \kappa,Y\rangle g(Z,X) \nonumber \\
        &\quad - Z \cdot g(X,Y) - g(X,[Y,Z]) - 2 \langle \kappa,Z\rangle g(X,Y) \,,
    \end{align}
     for all smooth vector fields $X,Y,Z$ on $\M$.
\end{proof}

\noindent The Riemann and Ricci tensors of a Weyl connection are defined similarly as these of a Levi-Civita connection.

\begin{defi}
    \label{def_riemann_hat}
    Let $\widehat{\nabla}$ be a Weyl connection for the conformal structure $(\mathcal{M},[\widetilde{g}])$. Its \emph{Riemann tensor} $\widehat{R}^\alpha{}_{\beta\mu\nu}$ is defined by
    \begin{equation}
        \label{eq:def_riemann_hat}
    	\widehat{R}^\alpha{}_{\beta\mu\nu} X^\beta Y^\mu Z^\nu := \left( \widehat{\nabla}_Y \widehat{\nabla}_Z X - \widehat{\nabla}_Z \widehat{\nabla}_Y X - \widehat{\nabla}_{[Y,Z]} X\right)^\alpha
    \end{equation}
    for all smooth vector fields $X,Y,Z$ on $\mathcal{M}$. Its \emph{Ricci tensor} is defined by
    \[ \widehat{R}_{\mu\nu} := \widehat{R}^\alpha{}_{\mu\alpha\nu} \,. \]
\end{defi}

\noindent The Riemann and Ricci tensors of a Weyl connection:
\begin{itemize}
    \item enjoy the following algebraic properties
        \begin{subequations}
            \begin{align}
                \widehat{R}^\alpha{}_{\beta(\mu\nu)} &= 0 \,, \\
                \label{eq:symetric_part_Riemann}
                g_{\alpha(\gamma} \widehat{R}^\alpha{}_{\beta)\mu\nu} &= \frac{2}{N} \widehat{R}_{[\mu\nu]} g_{\gamma\beta} \,, \\
                \label{eq:antisymetric_part_ricci}
                \widehat{R}_{[\alpha\beta]} &= N \widehat{\nabla}_{[\alpha} (\widehat{\kappa}_g)_{\beta]} \,,
            \end{align}
        \end{subequations}
        for any representative $g \in [\widetilde{g}]$,
    \item verify the first and second Bianchi identities
        \begin{subequations}
            \begin{align*}
                \widehat{R}^\alpha{}_{[\beta\mu\nu]} &= 0 \,, \\
                \widehat{\nabla}_{[\lambda} \widehat{R}^\alpha{}_{|\beta|\mu\nu]} &= 0 \,.
            \end{align*}
        \end{subequations}
\end{itemize}

\noindent Such as for Levi-Civita connections in \Cref{sec:weyl_schouten_cotton}, let us introduce the Weyl, Schouten and Cotton tensors of Weyl connections.

\begin{defi}
	The \emph{Weyl tensor} $\widehat{W}^\alpha{}_{\beta\mu\nu}$ of a Weyl connection $\widehat{\nabla}$ is defined as the totally trace-free part of its Riemann tensor $\widehat{R}^\alpha{}_{\beta\mu\nu}$ with respect to any representative metric $g \in [\widetilde{g}]$. 
\end{defi}

\begin{defi}
    \label{def:schouten_hat}
    The \emph{Schouten tensor} $\widehat{L}_{\alpha\beta}$ of a Weyl connection $\widehat{\nabla}$ is defined by
    \begin{equation}
		\widehat{L}_{\alpha\beta} := \frac{1}{N-2} \left( \widehat{R}_{(\alpha\beta)} - \frac{g^{\mu\nu}\widehat{R}_{\mu\nu}}{2(N-1)} g_{\alpha\beta} \right) - \frac{1}{N} \widehat{R}_{[\alpha\beta]} \,,
	\end{equation}
    for any representative $g \in [\widetilde{g}]$, the expression being independent of the choice of representative.
\end{defi}

\begin{prop}
    The Riemann tensor of a Weyl connection $\widehat{\nabla}$ verifies the following decomposition
    \begin{equation}
        \label{eq:decomposition_Riemann_Weyl}
		\widehat{R}^\alpha{}_{\beta\mu\nu} = \widehat{W}^\alpha{}_{\beta\mu\nu} + 2 S_{\beta[\mu}{}^{\alpha\sigma} \widehat{L}_{\nu]\sigma} \,,
	\end{equation}
    where $S_{\alpha\beta}{}^{\mu\nu}$ is defined by \eqref{eq:def_S}.
\end{prop}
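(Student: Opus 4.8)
The goal is to establish the decomposition
\[ \widehat{R}^\alpha{}_{\beta\mu\nu} = \widehat{W}^\alpha{}_{\beta\mu\nu} + 2 S_{\beta[\mu}{}^{\alpha\sigma} \widehat{L}_{\nu]\sigma} \]
for the Riemann tensor of a Weyl connection. The plan is to argue that the right-hand side is the unique way of expressing $\widehat{R}$ as the sum of its totally trace-free part $\widehat{W}$ and a remainder built linearly from the Schouten tensor $\widehat{L}$ via the universal tensor $S$. Since $\widehat{W}$ is \emph{defined} as the totally trace-free part of $\widehat{R}$, it suffices to verify that the correction term $2 S_{\beta[\mu}{}^{\alpha\sigma} \widehat{L}_{\nu]\sigma}$ carries exactly the traces of $\widehat{R}$, so that subtracting it produces something totally trace-free; the identity then follows by the very definition of $\widehat{W}$.

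First I would expand $S_{\beta[\mu}{}^{\alpha\sigma} \widehat{L}_{\nu]\sigma}$ using the explicit formula~\eqref{eq:def_S} for $S$. This gives terms of the shape $\delta_\beta{}^\alpha \widehat{L}_{[\mu\nu]}$-type, $g_{\beta[\mu}\widehat{L}_{\nu]}{}^\alpha$-type, and $\delta_{[\mu}{}^\alpha \widehat{L}_{\nu]\beta}$-type contributions, mirroring exactly the classical computation in the Levi-Civita case from~\eqref{eq:decomposition_Riemann}. Next I would compute the relevant traces of $\widehat{R}$. The subtle point compared with the Levi-Civita setting is that $\widehat{R}$ of a Weyl connection has a nonvanishing antisymmetric Ricci part, governed by~\eqref{eq:antisymetric_part_ricci}, namely $\widehat{R}_{[\alpha\beta]} = N \widehat{\nabla}_{[\alpha}(\widehat{\kappa}_g)_{\beta]}$, and a nonzero symmetrized-in-the-$(\alpha\gamma)$-slots piece recorded in~\eqref{eq:symetric_part_Riemann}. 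The definition of $\widehat{L}_{\alpha\beta}$ in~\Cref{def:schouten_hat} is precisely engineered with its symmetric part $\tfrac{1}{N-2}(\widehat{R}_{(\alpha\beta)} - \tfrac{g^{\mu\nu}\widehat{R}_{\mu\nu}}{2(N-1)}g_{\alpha\beta})$ and its antisymmetric part $-\tfrac{1}{N}\widehat{R}_{[\alpha\beta]}$ so as to absorb both of these.

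The central verification is therefore to contract the proposed correction term on the pair of indices $\alpha$ and $\mu$ (and on $\alpha,\nu$, and on the $\beta$-$\gamma$ symmetrization) and check that the result reproduces the corresponding contractions of $\widehat{R}$ dictated by~\eqref{eq:symetric_part_Riemann} and~\eqref{eq:antisymetric_part_ricci}. Concretely, I would show that $g_{\alpha\sigma}\,2 S_{\beta[\mu}{}^{\alpha\sigma}\widehat{L}_{\nu]\sigma}$, after index manipulation, yields $\widehat{R}_{\mu\nu}$ including its antisymmetric piece, and that the induced symmetric-pair trace matches~\eqref{eq:symetric_part_Riemann}; the factor $-\tfrac{1}{N}\widehat{R}_{[\alpha\beta]}$ in the Schouten tensor is exactly what makes the antisymmetric bookkeeping close. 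Once all independent traces of $2 S_{\beta[\mu}{}^{\alpha\sigma}\widehat{L}_{\nu]\sigma}$ agree with those of $\widehat{R}$, the difference $\widehat{R}^\alpha{}_{\beta\mu\nu} - 2 S_{\beta[\mu}{}^{\alpha\sigma}\widehat{L}_{\nu]\sigma}$ is totally trace-free and shares the algebraic symmetries of $\widehat{R}$, hence equals $\widehat{W}^\alpha{}_{\beta\mu\nu}$ by definition.

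The main obstacle I anticipate is the careful bookkeeping of the antisymmetric Ricci contribution: unlike the Levi-Civita case, one cannot discard $\widehat{R}_{[\mu\nu]}$, and the interplay between the $-\tfrac{1}{N}\widehat{R}_{[\alpha\beta]}$ term in $\widehat{L}$ and the antisymmetrization over $[\mu\nu]$ in the correction term must be tracked with the correct combinatorial factors (the $\tfrac{2}{N}$ in~\eqref{eq:symetric_part_Riemann} and the $N$ in~\eqref{eq:antisymetric_part_ricci} being the tell-tale constants). The computation is otherwise purely algebraic and local, requiring no analysis; it is essentially the Levi-Civita derivation of~\eqref{eq:decomposition_Riemann} augmented by the two extra trace identities specific to Weyl connections, so I would organize the proof around verifying those augmented traces and then invoking the definition of $\widehat{W}$ to conclude.
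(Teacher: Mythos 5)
Your proposal is correct. The paper gives no proof of this proposition (it is one of the standard preliminary facts explicitly left to the references), so there is nothing to compare against; your strategy — check that $2 S_{\beta[\mu}{}^{\alpha\sigma}\widehat{L}_{\nu]\sigma}$ reproduces every trace of $\widehat{R}^\alpha{}_{\beta\mu\nu}$, including the antisymmetric Ricci part via \eqref{eq:antisymetric_part_ricci} and the pair-symmetrized trace via \eqref{eq:symetric_part_Riemann}, and then invoke the definition of $\widehat{W}$ — is the standard argument, and the computations you outline do close (e.g.\ the $(\alpha,\mu)$ contraction yields $(N-2)\widehat{L}_{(\beta\nu)} + N\widehat{L}_{[\beta\nu]} + g_{\beta\nu}\widehat{L}_\sigma{}^\sigma = \widehat{R}_{\beta\nu}$ precisely because of the $-\tfrac{1}{N}\widehat{R}_{[\alpha\beta]}$ term in \Cref{def:schouten_hat}). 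The only step worth making explicit in a written-out version is the uniqueness of the splitting into a totally trace-free part plus a pure-trace part (i.e.\ that a pure-trace tensor with all traces vanishing is zero for $N\geq 3$), which is what licenses the final identification $\widehat{R}^\alpha{}_{\beta\mu\nu} - 2S_{\beta[\mu}{}^{\alpha\sigma}\widehat{L}_{\nu]\sigma} = \widehat{W}^\alpha{}_{\beta\mu\nu}$.
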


\begin{defi}
    The \emph{Cotton tensor} of a Weyl connection $\widehat{\nabla}$ is defined by
    \begin{equation}
        \label{def:cotton_weylconn}
        \widehat{C}_{\lambda\mu\nu} := \widehat{\nabla}_\lambda \widehat{L}_{\mu\nu} - \widehat{\nabla}_\mu \widehat{L}_{\lambda\nu} \,.
    \end{equation}
\end{defi}

\noindent By definition, one has
\[ \widehat{C}_{(\alpha\beta)\gamma} = 0 \,. \]

\begin{prop}
    For any Weyl connection $\widehat{\nabla}$, one has
    \begin{subequations}
        \begin{align}
            \label{eq:antisym_part_schouten}
            \widehat{L}_{[\alpha\beta]} &= - \widehat{\nabla}_{[\alpha} (\widehat{\kappa}_g)_{\beta]} \,, \\
            \widehat{C}_{[\lambda\mu\nu]} &= 0 \,,
        \end{align}
    \end{subequations}
    where $\widehat{\kappa}_g$ is the covector field associated to $\widehat{\nabla}$ with respect to a representative $g \in [\widetilde{g}]$. The second equation is called the third Bianchi identity.
\end{prop}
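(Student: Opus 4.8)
The plan is to establish the two identities separately; both follow by combining the definitions with the curvature properties of $\widehat{\nabla}$ recorded above, with no analysis involved.

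For \eqref{eq:antisym_part_schouten}, I would simply extract the antisymmetric part of the defining formula for $\widehat{L}_{\alpha\beta}$ in \Cref{def:schouten_hat}. The bracketed term $\widehat{R}_{(\alpha\beta)} - \tfrac{g^{\mu\nu}\widehat{R}_{\mu\nu}}{2(N-1)}g_{\alpha\beta}$ is manifestly symmetric, so it drops, and the only surviving piece is $-\tfrac{1}{N}\widehat{R}_{[\alpha\beta]}$, which is already antisymmetric. This yields $\widehat{L}_{[\alpha\beta]} = -\tfrac{1}{N}\widehat{R}_{[\alpha\beta]}$, and substituting the expression \eqref{eq:antisymetric_part_ricci} for $\widehat{R}_{[\alpha\beta]}$ gives \eqref{eq:antisym_part_schouten} at once.

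For the third Bianchi identity I would first reduce $\widehat{C}_{[\lambda\mu\nu]}$ to a single derivative. Since $\widehat{C}_{\lambda\mu\nu} = \widehat{\nabla}_\lambda\widehat{L}_{\mu\nu} - \widehat{\nabla}_\mu\widehat{L}_{\lambda\nu}$ from \eqref{def:cotton_weylconn} is antisymmetric in its first two slots, total antisymmetrization over $\lambda,\mu,\nu$ makes the two terms coincide up to sign, so $\widehat{C}_{[\lambda\mu\nu]} = 2\,\widehat{\nabla}_{[\lambda}\widehat{L}_{\mu\nu]}$. Under this antisymmetrization the symmetric part $\widehat{L}_{(\mu\nu)}$ contributes nothing (a symmetric pair inside a total antisymmetrization), so only $\widehat{L}_{[\mu\nu]}$ remains; feeding in the identity \eqref{eq:antisym_part_schouten} just proved, this becomes $\widehat{\nabla}_{[\lambda}\widehat{L}_{\mu\nu]} = -\widehat{\nabla}_{[\lambda}\widehat{\nabla}_\mu (\widehat{\kappa}_g)_{\nu]}$.

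It then remains to see that the right-hand side vanishes, which is the heart of the matter and where I expect the only genuine care to be needed: the bookkeeping of the nested antisymmetrizations together with the correct sign in the commutator of covariant derivatives. I would invoke the Ricci identity for the torsion-free connection $\widehat{\nabla}$ applied to the covector $\widehat{\kappa}_g$, namely $2\,\widehat{\nabla}_{[\lambda}\widehat{\nabla}_{\mu]}(\widehat{\kappa}_g)_\nu = -\widehat{R}^\sigma{}_{\nu\lambda\mu}(\widehat{\kappa}_g)_\sigma$, whose derivation uses torsion-freeness \eqref{eq_torsion_free}. Antisymmetrizing over all three indices, the symmetric-in-$\lambda\mu$ part of the second derivative drops, and the right-hand side becomes proportional to $\widehat{R}^\sigma{}_{[\nu\lambda\mu]}(\widehat{\kappa}_g)_\sigma$, which is zero by the first Bianchi identity $\widehat{R}^\alpha{}_{[\beta\mu\nu]} = 0$ for Weyl connections. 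Hence $\widehat{\nabla}_{[\lambda}\widehat{L}_{\mu\nu]} = 0$ and therefore $\widehat{C}_{[\lambda\mu\nu]} = 0$, exactly as in the classical derivation of the third Bianchi identity from the first for Levi-Civita connections.
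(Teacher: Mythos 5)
Your proof is correct. The paper states this proposition without proof, and your argument is exactly the standard one it implicitly relies on: the antisymmetric part of \Cref{def:schouten_hat} reduces to $-\tfrac{1}{N}\widehat{R}_{[\alpha\beta]}$ and \eqref{eq:antisymetric_part_ricci} gives the first identity, while $\widehat{C}_{[\lambda\mu\nu]}=2\widehat{\nabla}_{[\lambda}\widehat{L}_{\mu\nu]}=-2\widehat{\nabla}_{[\lambda}\widehat{\nabla}_{\mu}(\widehat{\kappa}_g)_{\nu]}$ vanishes by the Ricci identity (with the correct sign for the paper's curvature convention) combined with the first Bianchi identity $\widehat{R}^\alpha{}_{[\beta\mu\nu]}=0$.
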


Using the decomposition \eqref{eq:decomposition_Riemann_Weyl}, the first and second Bianchi identities for a Weyl connection $\widehat{\nabla}$ can be written under the form
\begin{subequations}
    \begin{align}
        \widehat{W}^\alpha{}_{[\beta\mu\nu]} &= 0 \,, \\
        \label{eq:second_Bianchi_Weyl}
        \widehat{\nabla}_{[\lambda} \widehat{W}^\alpha{}_{|\beta|\mu\nu]} &= S_{\beta[\lambda}{}^{\alpha\sigma} \widehat{C}_{\mu\nu]\sigma} \,.
    \end{align}
\end{subequations}

\noindent The contractions of the second Bianchi \eqref{eq:second_Bianchi_Weyl} identity give
\begin{subequations}
    \begin{align}
        \label{Bianchi2_Weyl}
        \widehat{\nabla}_\alpha \widehat{W}^\alpha{}_{\beta\mu\nu} &= (N-3) \widehat{C}_{\mu\nu\beta} \,, \\
        g^{\mu\nu} \widehat{C}_{\lambda\mu\nu} &= 0
    \end{align}
\end{subequations}
where $g \in [\widetilde{g}]$ is any representative. The fourth Bianchi identity also holds
\begin{equation}
    \label{eq:fourth_Bianchi_id_Weyl_con}
    \widehat{\nabla}_{[\lambda} \widehat{C}_{\mu\nu]\alpha} = - \widehat{W}^\beta{}_{\alpha[\lambda\mu} \widehat{L}_{\nu]\beta} \,.
\end{equation}

\subsubsection{Weyl connection changes}

Remember that the difference between any two connections $\acute{\nabla}$ and $\grave{\nabla}$ on $\M$ is measured by their transition tensor defined as follows
\[ Q_\mu{}^\sigma{}_\nu := \acute{\Gamma}_\mu{}^\sigma{}_\nu - \grave{\Gamma}_\mu{}^\sigma{}_\nu \,. \]
For instance, for all smooth (1,1)-tensor fields $X^\mu{}_\nu$, one has
\[ \acute{\nabla}_\alpha X^\beta{}_\gamma = \grave{\nabla}_\alpha X^\beta{}_\gamma + Q_\alpha{}^\beta{}_\mu X^\mu{}_\gamma  - Q_\alpha{}^\mu{}_\gamma X^\beta{}_\mu \,. \]
It follows that the torsion and Riemann tensors of $\acute{\nabla}$ and $\grave{\nabla}$, defined as in \eqref{eq_torsion_free} and \eqref{eq:def_riemann_hat}, are linked by
\begin{align*}
    \acute{\Sigma}_\alpha{}^\gamma{}_\beta &= \grave{\Sigma}_\alpha{}^\gamma{}_\beta + 2 Q_{[\alpha}{}^\gamma{}_{\beta]} \,, \\
    \acute{R}^\alpha{}_{\beta\mu\nu} &= \grave{R}^\alpha{}_{\beta\mu\nu} + 2 \acute{\nabla}_{[\mu} Q_{\nu]}{}^\alpha{}_\beta - 2 Q_{[\mu}{}^\alpha{}_{|\gamma|}Q_{\nu]}{}^\gamma{}_\beta + \acute{\Sigma}_\mu{}^\gamma{}_\nu Q_\gamma{}^\alpha{}_\beta \,.
\end{align*}
Let us now restrict to the case of two Weyl connections.

\begin{lem}
    \label{prop_transition_weyl_conn}
	The transition tensor between two Weyl connections $\widecheck{\nabla}$ and $\widehat{\nabla}$ is given by
	\[ Q_\alpha{}^\gamma{}_\beta = S_{\alpha\beta}{}^{\gamma\mu} \left( \widecheck{\kappa}_g-\widehat{\kappa}_g\right)_\mu \]
    where $\widecheck{\kappa}_g$ and $\widehat{\kappa}_g$ are the covector fields associated to respectively $\widecheck{\nabla}$ and $\widehat{\nabla}$ with respect to any representative $g \in [\widetilde{g}]$ by~\eqref{weyl_conn} and $S_{\alpha\beta}{}^{\mu\nu}$ is the tensor defined previously by~\eqref{eq:def_S}.
\end{lem}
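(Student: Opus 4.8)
The plan is to pin down the transition tensor $Q_\alpha{}^\gamma{}_\beta := \widecheck{\Gamma}_\alpha{}^\gamma{}_\beta - \widehat{\Gamma}_\alpha{}^\gamma{}_\beta$ from the two defining properties of a Weyl connection — torsion-freeness and the metric relation \eqref{weyl_conn} — via a Koszul-type permutation argument. First I would use torsion-freeness: by the torsion identity recalled just above the lemma, the torsions differ by $2Q_{[\alpha}{}^\gamma{}_{\beta]}$, and since both $\widecheck{\nabla}$ and $\widehat{\nabla}$ are torsion-free this forces $Q_{[\alpha}{}^\gamma{}_{\beta]} = 0$, i.e. $Q$ is symmetric in its two covariant indices, $Q_\alpha{}^\gamma{}_\beta = Q_\beta{}^\gamma{}_\alpha$.

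Second, I would feed the metric relation into the change-of-connection formula. Applying the $(0,2)$-analogue of the transformation rule recalled above to the metric gives $\widecheck{\nabla}_\alpha g_{\beta\gamma} = \widehat{\nabla}_\alpha g_{\beta\gamma} - Q_\alpha{}^\mu{}_\beta g_{\mu\gamma} - Q_\alpha{}^\mu{}_\gamma g_{\beta\mu}$. Substituting \eqref{weyl_conn} for both connections, the $\widehat{\nabla}g$ and $\widecheck{\nabla}g$ terms collapse into multiples of $g_{\beta\gamma}$, and writing $\kappa := \widecheck{\kappa}_g - \widehat{\kappa}_g$ and lowering the middle index ($Q_{\alpha\gamma\beta} := g_{\gamma\mu}Q_\alpha{}^\mu{}_\beta$) one obtains the symmetrised relation
\[ Q_{\alpha\gamma\beta} + Q_{\alpha\beta\gamma} = 2\,\kappa_\alpha\, g_{\beta\gamma} \,. \]

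Third, I would solve the linear system formed by the outer-index symmetry $Q_{\alpha\gamma\beta} = Q_{\beta\gamma\alpha}$ and this symmetrised relation. Writing the relation for the three cyclic permutations of $(\alpha,\beta,\gamma)$ and taking the alternating sum, the outer-index symmetry cancels four of the six terms and leaves twice a single component of $Q$; solving yields, in fully lowered form, $Q_{\alpha\beta\gamma} = \kappa_\alpha g_{\beta\gamma} - \kappa_\beta g_{\alpha\gamma} + \kappa_\gamma g_{\alpha\beta}$. Raising the middle index then reproduces $Q_\alpha{}^\gamma{}_\beta = \delta_\alpha{}^\gamma \kappa_\beta + \delta_\beta{}^\gamma \kappa_\alpha - g_{\alpha\beta}\kappa^\gamma = S_{\alpha\beta}{}^{\gamma\mu}\kappa_\mu$, using the definition \eqref{eq:def_S} of $S$. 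Finally I would observe that $\kappa = \widecheck{\kappa}_g - \widehat{\kappa}_g$ is independent of the representative $g$, since by the first remark following the Weyl-connection definition a change $g \mapsto \Omega^2 g$ shifts both covectors by the same $d(\ln\Omega)$; this is consistent with $Q$ being intrinsic.

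The main obstacle is purely the bookkeeping in the permutation step: one must keep straight which index pair is symmetrised by the metric relation and which pair is exchanged by torsion-freeness, and track the lowering and raising of the middle index so that the alternating sum collapses correctly. An alternative that bypasses the cyclic algebra is to verify directly that $\widehat{\nabla} + S(\widecheck{\kappa}_g - \widehat{\kappa}_g)$ is torsion-free and satisfies \eqref{weyl_conn} with associated covector $\widecheck{\kappa}_g$, and then invoke the uniqueness part of \Cref{prop:exist&uniq_weylconn}; this trades the derivation for a routine verification.
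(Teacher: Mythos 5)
Your proof is correct. The paper omits a proof of this lemma (it is among the ``standard'' results whose proofs are deferred to the references), so there is nothing to compare against line by line; but your argument is the expected one and all three steps check out. Torsion-freeness of both connections does give $Q_{[\alpha}{}^{\gamma}{}_{\beta]}=0$ via the torsion identity stated just before the lemma; the $(0,2)$-analogue of the change-of-connection formula combined with \eqref{weyl_conn} does yield $Q_{\alpha\gamma\beta}+Q_{\alpha\beta\gamma}=2\kappa_\alpha g_{\beta\gamma}$ with $\kappa:=\widecheck{\kappa}_g-\widehat{\kappa}_g$; and the cyclic alternating sum, using the outer-index symmetry, collapses to $Q_{\alpha\beta\gamma}=\kappa_\alpha g_{\beta\gamma}-\kappa_\beta g_{\alpha\gamma}+\kappa_\gamma g_{\alpha\beta}$, which upon raising the middle index is exactly $S_{\alpha\beta}{}^{\gamma\mu}\kappa_\mu$ by \eqref{eq:def_S}. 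Your closing remark on independence of the representative $g$ is also right and worth keeping. The alternative you sketch — checking that $\widehat{\nabla}+S(\widecheck{\kappa}_g-\widehat{\kappa}_g)$ is a Weyl connection with covector $\widecheck{\kappa}_g$ and invoking the uniqueness in \Cref{prop:exist&uniq_weylconn} — is arguably the route most aligned with the paper's own machinery, since that uniqueness is established there via the modified Koszul formula \eqref{extended_Koszul}, which encodes precisely the permutation argument you carried out by hand. Either version is acceptable; the direct derivation is more self-contained, the uniqueness route is shorter.
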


\begin{prop}
    \label{prop:transformation_rules}
    The Schouten, Cotton and Weyl tensors of two Weyl connections $\widecheck{\nabla}$ and $\widehat{\nabla}$ are linked by
    \begin{subequations}
        \begin{align}
            \label{transfo_schouten}
            \widecheck{L}_{\mu\nu} &= \widehat{L}_{\mu\nu} - \widecheck{\nabla}_\mu (\widecheck{\kappa}_g-\widehat{\kappa}_g)_\nu - \frac{1}{2} S_{\mu\nu}{}^{\sigma\rho} (\widecheck{\kappa}_g-\widehat{\kappa}_g)_\sigma (\widecheck{\kappa}_g-\widehat{\kappa}_g)_\rho \,, \\
            \label{transfo_cotton}
            \widecheck{C}_{\lambda\mu\nu} &= \widehat{C}_{\lambda\mu\nu} + (\widecheck{\kappa}_g-\widehat{\kappa}_g)_\rho \widecheck{W}^\rho{}_{\nu\lambda\mu} \,, \\
            \label{transfo_weyl}
            \widecheck{W}^\alpha{}_{\beta\mu\nu} &= \widehat{W}^\alpha{}_{\beta\mu\nu} \,,
        \end{align}
    \end{subequations}
    where $\widecheck{\kappa}_g$ and $\widehat{\kappa}_g$ are the covector fields associated to $\widecheck{\nabla}$ and $\widehat{\nabla}$ with respect to a representative $g \in [\widetilde{g}]$ by~\eqref{weyl_conn}.
\end{prop}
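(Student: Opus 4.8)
The plan is to establish the three identities in sequence: I would derive the Weyl invariance \eqref{transfo_weyl} and the Schouten rule \eqref{transfo_schouten} simultaneously from the behaviour of the Riemann tensor under a connection change, and then obtain the Cotton rule \eqref{transfo_cotton} by differentiating \eqref{transfo_schouten}. Throughout, write $\phi_\mu := (\widecheck{\kappa}_g - \widehat{\kappa}_g)_\mu$, so that by \Cref{prop_transition_weyl_conn} together with the expansion \eqref{eq:def_S} the transition tensor reads $Q_\alpha{}^\gamma{}_\beta = S_{\alpha\beta}{}^{\gamma\mu}\phi_\mu = \delta_\alpha{}^\gamma\phi_\beta + \delta_\beta{}^\gamma\phi_\alpha - g_{\alpha\beta}\phi^\gamma$.

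First I would start from the formula relating $\widecheck{R}$ and $\widehat{R}$ through $Q$ recalled just above the statement. Since both Weyl connections are torsion-free, the torsion contribution drops and
\[ \widecheck{R}^\alpha{}_{\beta\mu\nu} - \widehat{R}^\alpha{}_{\beta\mu\nu} = 2\widecheck{\nabla}_{[\mu}Q_{\nu]}{}^\alpha{}_\beta - 2 Q_{[\mu}{}^\alpha{}_{|\gamma|}Q_{\nu]}{}^\gamma{}_\beta \,. \]
When computing $\widecheck{\nabla}_\mu Q_\nu{}^\alpha{}_\beta$, the terms produced by the non-metricity $\widecheck{\nabla}_\mu g_{\nu\beta} = -2(\widecheck{\kappa}_g)_\mu g_{\nu\beta}$ and by its dual cancel against one another, leaving only first derivatives of $\phi$; the quadratic term expands into products of $\phi$ with several pieces cancelling after antisymmetrisation. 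The aim of this bookkeeping is to recast the whole difference in the form $2 S_{\beta[\mu}{}^{\alpha\sigma} P_{\nu]\sigma}$ with $P_{\mu\nu} = -\widecheck{\nabla}_\mu\phi_\nu - \tfrac12 S_{\mu\nu}{}^{\sigma\rho}\phi_\sigma\phi_\rho$, that is, precisely the candidate right-hand side of \eqref{transfo_schouten}.

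Once this is achieved, \eqref{transfo_weyl} and \eqref{transfo_schouten} follow together from the decomposition \eqref{eq:decomposition_Riemann_Weyl}. Subtracting the two decompositions gives $\widecheck{R} - \widehat{R} = (\widecheck{W} - \widehat{W}) + 2 S_{\beta[\mu}{}^{\alpha\sigma}(\widecheck{L} - \widehat{L})_{\nu]\sigma}$; since the difference computed above already lies in the image of $L \mapsto 2 S_{\beta[\mu}{}^{\alpha\sigma}L_{\nu]\sigma}$ while $\widecheck{W} - \widehat{W}$ is totally trace-free, uniqueness of the trace-free/$S$-part splitting forces $\widecheck{W} = \widehat{W}$ and $\widecheck{L} - \widehat{L} = P$. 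Rigorously, injectivity of the $S$-map can be checked by contracting the Riemann difference down to the Ricci level and matching symmetric and antisymmetric parts separately, as recorded in \Cref{def:schouten_hat}.

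Finally, for \eqref{transfo_cotton} I would insert \eqref{transfo_schouten} into the definition $\widecheck{C}_{\lambda\mu\nu} = 2\widecheck{\nabla}_{[\lambda}\widecheck{L}_{\mu]\nu}$. The term $2\widecheck{\nabla}_{[\lambda}\widehat{L}_{\mu]\nu}$ produces $\widehat{C}_{\lambda\mu\nu}$ together with $Q$-corrections, in which the part antisymmetric in $[\lambda\mu]$ on the first two slots of $Q$ vanishes, since $Q_{[\lambda}{}^\sigma{}_{\mu]} = 0$; the term $-2\widecheck{\nabla}_{[\lambda}\widecheck{\nabla}_{\mu]}\phi_\nu$ is handled by the Ricci identity, yielding $\widecheck{R}^\sigma{}_{\nu\lambda\mu}\phi_\sigma$; and the remaining quadratic term is differentiated using $\widecheck{\nabla}S = 0$. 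Substituting the decomposition \eqref{eq:decomposition_Riemann_Weyl} into $\widecheck{R}^\sigma{}_{\nu\lambda\mu}\phi_\sigma$ isolates the desired Weyl contribution $\phi_\rho\widecheck{W}^\rho{}_{\nu\lambda\mu}$ plus a Schouten ($S$-)piece, which must then cancel exactly against the $Q\widehat{L}$-corrections and the differentiated quadratic term. Verifying this cancellation is the main obstacle: it is a delicate index computation in which the non-metricity of the connections and the precise coefficients in $S$ both matter, and where the sign conventions in the Ricci identity are easy to misplace. By contrast, the Weyl-invariance and Schouten steps are essentially mechanical once the transition tensor has been inserted.
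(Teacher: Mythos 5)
Your proposal is correct, and it follows the standard route; the paper itself omits the proof of this proposition (it is among the "standard" preliminaries deferred to the references, e.g.\ \cite[Section 5]{K16}), so there is no in-paper argument to compare against. Your plan --- insert the transition tensor $Q_\alpha{}^\gamma{}_\beta = S_{\alpha\beta}{}^{\gamma\mu}\phi_\mu$ into the Riemann difference formula, recognise the result as $2S_{\beta[\mu}{}^{\alpha\sigma}P_{\nu]\sigma}$ with $P_{\mu\nu}=-\widecheck{\nabla}_\mu\phi_\nu-\tfrac12 S_{\mu\nu}{}^{\sigma\rho}\phi_\sigma\phi_\rho$, invoke uniqueness of the trace-free/$S$-part splitting (valid for $N\geq3$), and then differentiate --- is exactly how this is done.

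On the step you flag as the main obstacle: the cancellation does go through, and it is worth recording why. After using $\widecheck{\nabla}S=0$ and the Ricci identity (with the paper's sign convention, $-2\widecheck{\nabla}_{[\lambda}\widecheck{\nabla}_{\mu]}\phi_\nu = \widecheck{R}^\sigma{}_{\nu\lambda\mu}\phi_\sigma$, as you state), three leftover families of terms appear. The $Q\widehat{L}$-corrections cancel against the $\widehat{L}$-part of $2\phi_\sigma S_{\nu[\lambda}{}^{\sigma\tau}\widecheck{L}_{\mu]\tau}$ purely because $S_{\nu\lambda}{}^{\sigma\tau}$ is symmetric in its upper pair, so $S_{\nu\lambda}{}^{\sigma\tau}\phi_\tau\widehat{L}_{\mu\sigma}=S_{\nu\lambda}{}^{\sigma\tau}\phi_\sigma\widehat{L}_{\mu\tau}$. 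The first-derivative terms coming from the $-\widecheck{\nabla}\phi$ part of $P$ cancel against the differentiated quadratic term for the same reason, after noting that swapping the order of the antisymmetrised indices flips the sign. The remaining cubic term is proportional to $\phi_\sigma\phi_a\phi_b\,S_{\nu[\lambda}{}^{\sigma\tau}S_{\mu]\tau}{}^{ab}$; expanding $S_{\mu\tau}{}^{ab}\phi_a\phi_b = 2\phi_\mu\phi_\tau-g_{\mu\tau}|\phi|_g^2$ and then contracting with $S_{\nu\lambda}{}^{\sigma\tau}\phi_\sigma$ produces only terms that are symmetric in $(\lambda,\mu)$, so the antisymmetrisation kills it identically. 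No coefficient tuning is needed beyond the exact form of $P$ already fixed by the Riemann computation, so your argument closes.
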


\begin{rem}
    Thus all the Weyl tensors of Weyl connections for $(\M,[\widetilde{g}])$, and in particular of the Levi-Civita connection of representatives $g \in [\widetilde{g}]$, coincide. In dimension $N = 3$, the same property holds for the Cotton tensor since the Weyl tensor vanishes identically. Hence the Cotton tensor plays the role of the Weyl tensor in dimension $N=3$.
\end{rem}

\subsubsection{Conformal invariance}
\label{sec:conformal_invariance}

\begin{defi}
    Let $\mathcal{M}$ be a smooth manifold and consider a map $X : g \mapsto X_g$ which associates a tensor field $X_g$ to any metric $g$ on $\mathcal{M}$. The map $X$ is said to be \emph{conformally invariant of weight $k$} if
    \[ \forall \, \Omega \in \mathcal{C}^\infty(\M,\R_+^\star), \quad X_{\Omega^2 g} = \Omega^{-k} X_g \]
    The conformally invariant maps of weight $0$ will simply be called conformally invariant. In that case, the map $[\widetilde{g}] \mapsto X_g$ where $g \in [\widetilde{g}]$ is a representative is well defined.
\end{defi}

\begin{prop}
    In any dimension, the tensor $S_{\alpha\beta}{}^{\mu\nu}$ and the Weyl tensor $W^\alpha{}_{\beta\mu\nu}$ are conformally invariant. In dimension $N=3$, the Cotton tensor $C_{\alpha\beta\gamma}$ is conformally invariant. In dimension $N=4$, the Bach tensor $B_{\mu\nu}$ is conformally invariant of weight $2$. 
\end{prop}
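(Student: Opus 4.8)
The plan is to treat the four assertions in increasing order of difficulty. For $S_{\alpha\beta}{}^{\mu\nu}$ I would read conformal invariance directly off the definition \eqref{eq:def_S}: the Kronecker deltas are metric-independent, while under $g\mapsto\Omega^2 g$ one has $g_{\alpha\beta}\mapsto\Omega^2 g_{\alpha\beta}$ and $g^{\mu\nu}\mapsto\Omega^{-2}g^{\mu\nu}$, so the product $g_{\alpha\beta}g^{\mu\nu}$ — and hence $S_{\alpha\beta}{}^{\mu\nu}$ — is unchanged, giving weight $0$. For the Weyl tensor the key point is that the Levi-Civita connections $\nabla$ and $\widetilde\nabla$ of two conformally related representatives $g$ and $\widetilde g=\Omega^2 g$ are both Weyl connections for the single conformal structure $(\M,[\widetilde g])$, with associated covectors $\widehat\kappa_g=0$ and $\widecheck\kappa_g=-d\ln\Omega$ relative to $g$. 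Since the Weyl, Schouten and Cotton tensors of a Levi-Civita connection, regarded as a Weyl connection, coincide with the ordinary ones — for the latter two because the Ricci tensor is symmetric, so $\widehat R_{[\alpha\beta]}=0$ — the relation \eqref{transfo_weyl} of \Cref{prop:transformation_rules} yields $\widetilde W^\alpha{}_{\beta\mu\nu}=W^\alpha{}_{\beta\mu\nu}$, i.e.\ weight $0$.

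For the Cotton tensor in dimension $N=3$ I would feed the same pair of Weyl connections into \eqref{transfo_cotton}. Its correction term $(\widecheck\kappa_g-\widehat\kappa_g)_\rho\,\widecheck W^\rho{}_{\nu\lambda\mu}$ is proportional to the Weyl tensor, which vanishes identically when $N=3$; the term therefore drops out and one is left with $\widetilde C_{\lambda\mu\nu}=C_{\lambda\mu\nu}$, again weight $0$. The reduction of the Weyl-connection Cotton tensor to the ordinary one noted above is what makes this a genuine statement about the standard Cotton tensors of $g$ and $\widetilde g$.

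The Bach tensor in dimension $N=4$ is where I expect the real work, since the framework provides no Weyl-connection transformation rule for it; I would instead compute directly how each piece of its definition transforms under $g\mapsto\Omega^2 g$. Writing $\Upsilon:=d\ln\Omega$, all the inputs are already available: $\widetilde W^\alpha{}_{\beta\mu\nu}=W^\alpha{}_{\beta\mu\nu}$, the Cotton law \eqref{transfo_cotton} (now with a genuine $\Upsilon$-dependent correction, since $W\neq0$ for $N=4$), the Schouten law \eqref{transfo_schouten}, and the scalings of the inverse metric needed to form $\widetilde\nabla^\lambda\widetilde C_{\lambda\mu\nu}$ and $\widetilde L^{\lambda\sigma}\widetilde W_{\mu\lambda\nu\sigma}$. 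Besides the expected overall factor $\Omega^{-2}$, this produces a collection of anomalous terms built from $\Upsilon$ and $\nabla\Upsilon$ contracted against $W$ and its divergence; the crux is to check that these cancel, which I expect to occur precisely in dimension $4$. The dimension-dependence enters through coefficients such as the $(N-3)$ in the contracted second Bianchi identity $\nabla_\alpha W^\alpha{}_{\beta\mu\nu}=(N-3)C_{\mu\nu\beta}$ and the $(N-4)$ in the divergence identity $\nabla^\mu B_{\mu\nu}=(N-4)L^{\lambda\sigma}C_{\nu\lambda\sigma}$, so that the anomaly collapses only at $N=4$ and leaves $\widetilde B_{\mu\nu}=\Omega^{-2}B_{\mu\nu}$, i.e.\ weight $2$. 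The bookkeeping is the only genuine obstacle; as a conceptual check I would note that in dimension $4$ the Bach tensor is the gradient of the conformally invariant Weyl functional $g\mapsto\int|W_g|^2\,dV_g$, which already forces its invariance.
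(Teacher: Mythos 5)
The paper does not prove this proposition at all: it is recalled in the preliminaries as a standard fact, with the proof delegated to the references. Your argument for the first three claims is complete, correct, and in fact follows exactly the route the paper implicitly intends, since the remark after \Cref{prop:transformation_rules} already observes that all Weyl tensors of Weyl connections for a fixed conformal structure coincide, and that in dimension $3$ the same holds for the Cotton tensor. Your identification of the covector $-d\ln\Omega$ associated to $\widetilde\nabla$ relative to $g$ is right, and the reduction of the Weyl-connection Schouten and Cotton tensors to the ordinary ones via $\widehat R_{[\alpha\beta]}=N\widehat\nabla_{[\alpha}(\widehat\kappa_g)_{\beta]}=0$ is the correct bridge between \eqref{transfo_weyl}--\eqref{transfo_cotton} and the statement as phrased. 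The invariance of $S_{\alpha\beta}{}^{\mu\nu}$ is immediate from \eqref{eq:def_S}, as you say.

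The one place where your proposal falls short of a proof is the Bach tensor. You correctly assemble the inputs (\eqref{transfo_schouten}, \eqref{transfo_cotton} with its genuine $\Upsilon$-correction in dimension $4$, the invariance of $W$, and the scaling of the inverse metric), but the decisive step — that the anomalous terms built from $\Upsilon$, $\nabla\Upsilon$, $W$ and $\nabla_\alpha W^\alpha{}_{\beta\mu\nu}=(N-3)C_{\mu\nu\beta}$ cancel precisely when $N=4$ — is asserted rather than carried out, and that cancellation is the entire content of the claim. As written this is a plan, not a proof. Your closing observation that in dimension $4$ the Bach tensor is the gradient of the conformally invariant functional $g\mapsto\int|W_g|^2\,dV_g$ can be promoted from a ``conceptual check'' to an actual proof: differentiating the identity $F[\Omega^2g]=F[g]$ in an arbitrary metric direction and accounting for the scaling of the $L^2$ pairing and the volume form yields $\widetilde B_{\mu\nu}=\Omega^{-2}B_{\mu\nu}$ directly, with no index bookkeeping. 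Either completing the cancellation or making that variational argument precise would close the gap.
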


\subsubsection{Conformally Einstein and conformally flat spaces}
\label{sec:special_conf_structures}

One can distinguish special conformal structures such as conformally Einstein spaces and conformally flat spaces. 

\begin{defi}
    A pseudo-Riemannian manifold $(\M,\widetilde{g})$ is said to be \emph{locally conformally Einstein} if for all $p \in \M$, there exists an open neighbourhood $\mathcal{U}$ of $p$ in $\M$ and a local conformal factor $\Omega \in \mathcal{C}^\infty(\mathcal{U},\R_+^\star)$ such that $(\mathcal{U},\Omega^2 \widetilde{g})$ is Einstein. If the conformal factor $\Omega$ can be defined globally on $\M$ then it is said to be \emph{globally conformally Einstein}.
\end{defi}

\begin{defi}
    A pseudo-Riemannian manifold $(\M,\widetilde{g})$ is said to be \emph{locally conformally flat} if for all $p \in \M$, there exists an open neighbourhood $\mathcal{U}$ of $p$ in $\M$ and a local conformal factor $\Omega \in \mathcal{C}^\infty(\mathcal{U},\R_+^\star)$ such that $(\mathcal{U},\Omega^2 \widetilde{g})$ is Riemann-flat. If the conformal factor $\Omega$ can be defined globally on $\M$ then it is said to be \emph{globally conformally flat}.
\end{defi}

\begin{rem}
    Since the definitions depend only on the conformal class $[\widetilde{g}]$, it extends to conformal structures.
\end{rem}

\noindent Locally conformally flat spaces are characterised by the following theorem.

\begin{thm}[Weyl-Schouten theorem]
    \label{Weyl_Schouten_theorem}
    Let $(\M,\widetilde{g})$ be a pseudo-Riemannian manifold of dimension $N \geq 2$. Then
    \begin{itemize}
        \item[i)] if $N=2$, it is locally conformally flat,
        \item[ii)] if $N=3$, it is locally conformally flat if and only if the Cotton tensor vanishes identically,
        \item[iii)] if $N \geq 4$, it is locally conformally flat if and only if the Weyl tensor vanishes identically.
    \end{itemize}
\end{thm}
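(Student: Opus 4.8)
The plan is to treat the three dimension ranges separately, reducing the two substantial equivalences ($N=3$ and $N\geq 4$) to a single prolongation argument. When $N=2$ there is no hypothesis to check: I would invoke the classical existence of isothermal coordinates (or, in Lorentzian signature, of double-null coordinates), which exhibits any metric locally in the form $\Omega^2$ times a flat one. For $N\geq 3$ the necessity directions are immediate from the conformal invariance statements of \Cref{sec:conformal_invariance}. A Riemann-flat metric has vanishing Riemann tensor, hence vanishing Weyl and Schouten tensors and, by \eqref{eq:def_cotton}, vanishing Cotton tensor. Since the Weyl tensor is conformally invariant in every dimension and the Cotton tensor is conformally invariant when $N=3$, any metric conformal to a flat one inherits $W\equiv 0$ (for $N\geq 4$) and $C\equiv 0$ (for $N=3$).

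The content lies in the sufficiency direction. Write $\widetilde{g} = e^{2\phi}g$; by the orthogonal decomposition \eqref{eq:decomposition_Riemann} the Riemann tensor of $\widetilde{g}$ vanishes if and only if both $\widetilde{W}$ and $\widetilde{L}$ vanish. Conformal invariance of the Weyl tensor gives $\widetilde{W} = W$, which is zero by hypothesis for $N\geq 4$ and zero automatically for $N=3$; the problem thus reduces to producing, near each point, a function $\phi$ with $\widetilde{L} = 0$. Specialising the Schouten transformation rule \eqref{transfo_schouten} to the Levi-Civita connections of $g$ and $\widetilde{g}$ — for which the difference of the associated one-forms with respect to $g$ equals $\Upsilon := d\phi$, and converting the $\widetilde{\nabla}$-derivative to a $\nabla$-derivative via \Cref{prop_transition_weyl_conn} — the equation $\widetilde{L} = 0$ becomes
\[ \nabla_\mu \Upsilon_\nu = \Upsilon_\mu \Upsilon_\nu - \tfrac{1}{2}\, g_{\mu\nu}\, \Upsilon^\sigma \Upsilon_\sigma + L_{\mu\nu} \,. \]
Here I treat $\Upsilon$ as an unknown one-form; the right-hand side is symmetric in $(\mu,\nu)$, so any solution satisfies $\nabla_{[\mu}\Upsilon_{\nu]}=0$, is therefore closed, and is locally of the form $d\phi$ as required.

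It remains to solve this quasi-linear total-differential system, which is a nonlinear Frobenius problem: a local solution through a prescribed initial value exists precisely when the integrability obstruction vanishes. I would compute the obstruction by antisymmetrising $\nabla_\lambda\nabla_\mu\Upsilon_\nu$ over $(\lambda,\mu)$, using the Ricci identity $2\nabla_{[\lambda}\nabla_{\mu]}\Upsilon_\nu = -R^\alpha{}_{\nu\lambda\mu}\Upsilon_\alpha$ on one side and differentiating the system on the other. The symmetry of the right-hand side kills the $\nabla_{[\lambda}\Upsilon_{\mu]}$ terms, the cubic terms $\Upsilon_\lambda\Upsilon_\mu\Upsilon_\nu$ cancel, and inserting the Riemann decomposition \eqref{eq:decomposition_Riemann} the $\Upsilon^\sigma\Upsilon_\sigma$ terms and all remaining $\Upsilon\cdot L$ terms cancel in pairs; what survives is exactly
\[ C_{\lambda\mu\nu} + \Upsilon_\alpha\, W^\alpha{}_{\nu\lambda\mu} = 0 \,. \]
For $N\geq 4$ the hypothesis $W\equiv 0$ annihilates the second term and, by the contracted second Bianchi identity $\nabla_\alpha W^\alpha{}_{\beta\mu\nu} = (N-3)\,C_{\mu\nu\beta}$, forces $C\equiv 0$ as well; for $N=3$ the Weyl tensor vanishes identically and the hypothesis is precisely $C\equiv 0$. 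In either case the obstruction vanishes identically on the neighbourhood, independently of $\Upsilon$, so the distribution is involutive, Frobenius yields a solution $\Upsilon$, and integrating the closed one-form produces the conformal factor $\phi$ for which $\widetilde{g} = e^{2\phi}g$ is Riemann-flat.

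The main obstacle is the integrability computation of the previous paragraph: one must verify that, after substituting the system into the commutator, the nonlinear terms in $\Upsilon$ cancel cleanly and the residual linear-in-curvature part is exactly $C + \Upsilon\cdot W$. This is where the precise coefficients in \eqref{transfo_schouten}, \Cref{prop_transition_weyl_conn} and the decomposition \eqref{eq:decomposition_Riemann} must be tracked carefully; the remainder is a routine appeal to the Frobenius theorem and the Bianchi identity already recorded. A secondary point deserving care is that the data are only smooth, not analytic: I would therefore build the integral leaf of the involutive distribution by the standard path-ordered integration argument rather than by Cauchy--Kovalevskaya, which guarantees a smooth conformal factor.
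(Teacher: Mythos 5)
Your proof is correct. The paper does not actually prove this theorem — it is recalled as a standard result with the proof omitted (see the opening of \Cref{sec:preliminaries}) — so there is no in-paper argument to compare against; your route is the classical one: isothermal/double-null coordinates for $N=2$, and for $N\geq 3$ the reduction via \eqref{eq:decomposition_Riemann} and \eqref{transfo_schouten} to the total-differential system $\nabla_\mu\Upsilon_\nu = \Upsilon_\mu\Upsilon_\nu - \tfrac12 g_{\mu\nu}\Upsilon^\sigma\Upsilon_\sigma + L_{\mu\nu}$, whose Frobenius obstruction is exactly $C_{\lambda\mu\nu} + \Upsilon_\alpha W^\alpha{}_{\nu\lambda\mu}$ and is killed by the hypotheses together with the contracted Bianchi identity $\nabla_\alpha W^\alpha{}_{\beta\mu\nu} = (N-3)C_{\mu\nu\beta}$. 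The integrability computation checks out with the paper's conventions, and you correctly observe both that the symmetry of the right-hand side makes $\Upsilon$ closed and that the obstruction must vanish for all values of $\Upsilon$ (forcing $W=0$ and $C=0$ separately), which is precisely what the hypotheses supply in each dimension range.
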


\begin{rems} \, 
    \begin{itemize}
        \item If $\M$ is contractible, one can pass from locally to globally conformally flat.
        \item In dimension $N=3$, an Einstein manifold is always locally conformally flat. \qedhere
    \end{itemize}
\end{rems}

\subsubsection{Conformal Killing fields}
\label{sec:conformal_Killing_fields}

Conformal Killing fields are the generalisation of Killing fields in conformal geometry. Let $(\mathcal{M},[\widetilde{g}])$ be a conformal structure.

\begin{defi}
    Let $K$ be a smooth vector field on $\mathcal{M}$. The following statements are equivalent
    \begin{itemize}
        \item[i)] $\forall \, g \in [\widetilde{g}] \,, \exists \, \phi_{g,K} \in \mathcal{C}^\infty(\mathcal{M}), \quad \mathcal{L}_K g = 2 \phi_{g,K} \, g$,
        \item[ii)] $\exists \, g \in [\widetilde{g}] \,, \exists \, \phi_{g,K} \in \mathcal{C}^\infty(\mathcal{M}), \quad \mathcal{L}_K g = 2 \phi_{g,K} \, g$,
        \item[iii)] $K$ generates a flow of confomorphisms locally on $\mathcal{M}$,
        \item[iv)] $\forall$ Weyl connection $\widehat{\nabla}, \quad Z_\alpha{}^{\beta\mu}{}_\nu \widehat{\nabla}_\mu K^\nu = 0$,
        \item[v)] $\exists$ Weyl connection $\widehat{\nabla}, \quad Z_\alpha{}^{\beta\mu}{}_\nu \widehat{\nabla}_\mu K^\nu = 0$,
    \end{itemize}
    where $\mathcal{L}$ is the Lie derivative and $Z_\alpha{}^{\beta\mu}{}_\nu$ is the projection operator on symmetric trace-free 2-tensors defined by
    \begin{equation}
        \label{def_Z}
        Z_\alpha{}^{\beta\mu}{}_\nu := \delta_\alpha{}^\mu \delta_\nu{}^\beta + g_{\alpha\nu} g^{\beta\mu} - \frac{2}{N} \delta_\nu{}^\mu \delta_\alpha{}^\beta \,,
    \end{equation}
    for any $g \in [\widetilde{g}]$ (it is conformally invariant). If one statement (and thus all) is verified, $K$ is said to be a conformal Killing vector field of $(\mathcal{M},[\widetilde{g}])$.
\end{defi}

\begin{rem}
    The Killing fields of a representative $g \in [\widetilde{g}]$ are the conformal Killing fields which are divergence-free for the Levi-Civita connection $\nabla$ of $g$.
\end{rem}

\begin{prop}
    The set of all conformal Killing fields of $(\mathcal{M},[\widetilde{g}])$ forms a Lie subalgebra of smooth vector fields on $\mathcal{M}$. The latter is infinite dimensional if $N=2$ and has dimension at most $(N+1)(N+2)/2$ if $N\geq3$.
\end{prop}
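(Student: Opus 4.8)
The plan is to treat the three assertions separately, relying throughout on the characterisation $\mathcal{L}_K g = 2\phi_{g,K}\, g$ of $i)$--$ii)$ and on its equivalent first-order form $\nabla_{(\alpha}K_{\beta)} = \phi\, g_{\alpha\beta}$ with $\phi = \frac1N\nabla_\mu K^\mu$, which one reads off from $iv)$--$v)$ after fixing a representative $g\in[\widetilde g]$ and working with its Levi-Civita connection $\nabla$.

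For the subalgebra structure, linearity is immediate since $K\mapsto \mathcal{L}_K g$ is $\R$-linear, so a combination $aK+bL$ of conformal Killing fields again satisfies $i)$ with $\phi_{g,aK+bL}=a\phi_{g,K}+b\phi_{g,L}$. For the bracket I would invoke $\mathcal{L}_{[K,L]}=\mathcal{L}_K\mathcal{L}_L-\mathcal{L}_L\mathcal{L}_K$ and compute $\mathcal{L}_{[K,L]}g = \mathcal{L}_K(2\phi_{g,L}g)-\mathcal{L}_L(2\phi_{g,K}g) = 2\big(K\cdot\phi_{g,L}-L\cdot\phi_{g,K}\big)g$, the quadratic cross-terms cancelling. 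Hence $[K,L]$ is conformal Killing with factor $\phi_{g,[K,L]}=K\cdot\phi_{g,L}-L\cdot\phi_{g,K}$, which proves closure.

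For the dimension bound when $N\ge 3$ the plan is a standard prolongation. Writing $\nabla_\alpha K_\beta = \phi\, g_{\alpha\beta}+L_{\alpha\beta}$ with $L_{\alpha\beta}:=\nabla_{[\alpha}K_{\beta]}$ antisymmetric, the first derivative of $K$ is already algebraic in the new unknowns $(\phi,L)$. Differentiating once more and using the Ricci identity $\nabla_\mu\nabla_\nu K^\alpha-\nabla_\nu\nabla_\mu K^\alpha = R^\alpha{}_{\beta\mu\nu}K^\beta$ together with the Bianchi symmetries, one solves for $\nabla_\mu L_{\alpha\beta}$ and for $\nabla_\mu\phi$; introducing $\psi_\alpha:=\nabla_\alpha\phi$ as a further unknown and differentiating it once more, the crucial point is that $\nabla_\mu\psi_\alpha$ can then be expressed purely in terms of $(K,L,\phi,\psi)$ and the curvature. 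Thus the tuple $(K^\alpha,L_{\alpha\beta},\phi,\psi_\alpha)$ satisfies a closed linear first-order system — equivalently it is a parallel section of a connection on a vector bundle of rank $N+\tfrac{N(N-1)}2+1+N=\tfrac{(N+1)(N+2)}2$. Such a section is determined by its value at a single point $p$, so the evaluation map sending a conformal Killing field to $(K,L,\phi,\psi)|_p$ is injective and the bound follows.

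Finally, for $N=2$ the prolongation does not close and the bound fails: the conformal Killing equation reduces, in isothermal coordinates, to the Cauchy--Riemann system (holomorphic vector fields in the Riemannian case, and the analogous split into null coordinates in the Lorentzian case), whose solution space is infinite-dimensional, and I would exhibit this explicit infinite family. The main obstacle is precisely the second prolongation step in $N\ge3$: verifying that $\nabla_\mu\psi_\alpha$ closes requires a careful contraction of the curvature identities, and the reason the argument succeeds for $N\ge3$ but breaks at $N=2$ is that the conformal Killing operator is of finite type exactly when $N\ge3$.
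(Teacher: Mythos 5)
The paper does not prove this proposition: it is one of the standard facts that \Cref{sec:conformal_geometry} explicitly defers to the references (notably \cite[Section 5]{K16}), so there is no in-paper argument to compare against. Your proposal is the standard and correct one: the Lie-bracket closure via $\mathcal{L}_{[K,L]}g$ with the cancelling quadratic terms, and the bound via the prolongation of the tuple $(K^\alpha, \nabla_{[\alpha}K_{\beta]}, \phi, \nabla_\alpha\phi)$ to a parallel section of a rank-$\tfrac{(N+1)(N+2)}{2}$ bundle over a connected manifold, with the closure of the $\nabla_\mu\psi_\alpha$ equation being exactly the step that holds for $N\geq 3$ and fails for $N=2$; your rank count is right. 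The only caveat worth recording is that your isothermal/null-coordinate argument for $N=2$ establishes \emph{local} infinite-dimensionality, which is all the statement can mean (globally it fails, e.g.\ on a compact hyperbolic surface the space of conformal Killing fields is trivial) --- but that imprecision lies in the proposition as stated, not in your proof.
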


\subsubsection{Conformal geodesics}
\label{sec:conf_geo}

Let $(\mathcal{M},[\widetilde{g}])$ be a conformal structure. Only the null geodesics of representatives $g \in [\widetilde{g}]$ are invariant up to a reparametrisation under conformal rescalings. However, one can still define a type of curves which enjoys nice conformal properties and includes all geodesic curves of representatives. These are called conformal geodesics.

\begin{defi}
	Let $\widehat{\nabla}$ be a Weyl connection for $(\M,[\widetilde{g}])$. A $\widehat{\nabla}$-\emph{conformal geodesic (curve)} is a pair $(x,\widehat{\beta})$ consisting of a curve $x : \tau \in I \subset \R \mapsto x(\tau) \in \mathcal{M}$, whose tangent vector is denoted by $\dot{x}$, and a covector field $\widehat{\beta}$ along $x$ satisfying the following equations
	\begin{subequations}
		\begin{align}
			\label{eq_conf_geo_x}
			\left(\widehat{\nabla}_{\dot{x}} \dot{x}\right)^\mu & = - S_{\lambda\sigma}{}^{\mu\nu} \dot{x}^\lambda \dot{x}^\sigma \widehat{\beta}_\nu \,, \\
			\label{eq_conf_geo_b}
			\left(\widehat{\nabla}_{\dot{x}} \widehat{\beta}\right)_\lambda & = \frac{1}{2} S_{\lambda\sigma}{}^{\mu\nu} \dot{x}^\sigma \widehat{\beta}_\mu \widehat{\beta}_\nu + \widehat{L}_{\sigma\lambda} \dot{x}^\sigma \,.
		\end{align}
	\end{subequations}
\end{defi}

\begin{rems} \,
	\begin{itemize}
		\item The initial data $(x_\star, \dot{x}_\star, \widehat{\beta}_\star)$ of a $\widehat{\nabla}$-conformal geodesic consists of an initial position, an initial tangent vector and an initial covector
		\[ x_\star \in \mathcal{M} \,, \qquad \dot{x}_\star \in T_{x_\star} \mathcal{M} \,, \qquad \widehat{\beta}_\star \in T^\star_{x_\star} \mathcal{M} \,.\]
		\item Along any $\widehat{\nabla}$-conformal geodesic $(x,\widehat{\beta})$ and for any representative $g \in [\widetilde{g}]$,
        \begin{equation}
    		\label{derivee_pseudonorme}
           \dot{x} \cdot g(\dot{x},\dot{x}) = - 2\langle \widehat{\kappa}_g + \widehat{\beta},\dot{x}\rangle g(\dot{x},\dot{x})
		\end{equation}
		where $\widehat{\kappa}_g$ is the covector field associated to $\widehat{\nabla}$ with respect to $g$ by~\eqref{weyl_conn}. In particular, $g(\dot{x},\dot{x})$ has constant sign along a conformal geodesic. This allows to classify conformal geodesics into timelike, null or spacelike categories. \qedhere
	\end{itemize}
\end{rems}

\noindent The following proposition links conformal geodesics for different Weyl connections.

\begin{prop}
    \label{prop:confgeo_changweylconn}
	Let $\widehat{\nabla}$ and $\widecheck{\nabla}$ be two Weyl connections for $(\M,[\widetilde{g}])$. Then $(x,\widehat{\beta})$ is a $\widehat{\nabla}$-conformal geodesic if and only if $(x,\widecheck{\beta} := \widehat{\beta} + \widehat{\kappa}_g-\widecheck{\kappa}_g)$ is a $\widecheck{\nabla}$-conformal geodesic,  where $\widehat{\kappa}_g$ and $\widecheck{\kappa}_g$ are the covector fields associated to $\widehat{\nabla}$ and $\widecheck{\nabla}$ with respect to a representative $g \in [\widetilde{g}]$ by~\eqref{weyl_conn}.
\end{prop}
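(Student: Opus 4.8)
The plan is to verify the two conformal geodesic equations~\eqref{eq_conf_geo_x}--\eqref{eq_conf_geo_b} for $\widecheck{\nabla}$ directly, by rewriting each of them in terms of $\widehat{\nabla}$ using the transition tensor of Lemma~\ref{prop_transition_weyl_conn} together with the Schouten transformation rule~\eqref{transfo_schouten}. I set $\xi := \widecheck{\kappa}_g - \widehat{\kappa}_g$, so that $Q_\alpha{}^\gamma{}_\beta = S_{\alpha\beta}{}^{\gamma\mu} \xi_\mu$ and $\widecheck{\beta} = \widehat{\beta} - \xi$; since both covectors shift by $d(\ln\Omega)$ under a change of representative, $\xi$ is independent of $g$ and the statement is well-posed. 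It suffices to establish one implication, because exchanging the roles of the two connections reverses the sign of $\xi$ and thereby yields the converse.

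First I would treat~\eqref{eq_conf_geo_x}. Applying $\widecheck{\nabla}_\alpha V^\mu = \widehat{\nabla}_\alpha V^\mu + Q_\alpha{}^\mu{}_\beta V^\beta$ to $V = \dot{x}$ gives $(\widecheck{\nabla}_{\dot{x}}\dot{x})^\mu = (\widehat{\nabla}_{\dot{x}}\dot{x})^\mu + S_{\alpha\beta}{}^{\mu\nu} \dot{x}^\alpha \dot{x}^\beta \xi_\nu$. Substituting the $\widehat{\nabla}$-equation and grouping the two $S$-terms produces $-S_{\lambda\sigma}{}^{\mu\nu}\dot{x}^\lambda\dot{x}^\sigma(\widehat{\beta}_\nu - \xi_\nu) = -S_{\lambda\sigma}{}^{\mu\nu}\dot{x}^\lambda\dot{x}^\sigma\widecheck{\beta}_\nu$, which is exactly~\eqref{eq_conf_geo_x} for $\widecheck{\nabla}$. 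This step is purely algebraic and presents no difficulty.

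The substantive step is~\eqref{eq_conf_geo_b}. Using $\widecheck{\nabla}_\alpha \omega_\lambda = \widehat{\nabla}_\alpha \omega_\lambda - Q_\alpha{}^\mu{}_\lambda \omega_\mu$ for the covector $\widecheck{\beta} = \widehat{\beta} - \xi$ yields $(\widecheck{\nabla}_{\dot{x}}\widecheck{\beta})_\lambda = (\widehat{\nabla}_{\dot{x}}\widehat{\beta})_\lambda - \dot{x}^\alpha \widehat{\nabla}_\alpha \xi_\lambda - Q_\alpha{}^\mu{}_\lambda \widecheck{\beta}_\mu \dot{x}^\alpha$. The idea is to feed the $\widehat{\nabla}$-version of~\eqref{eq_conf_geo_b} into the first term, convert $\widehat{\nabla}_\alpha \xi_\lambda$ into $\widecheck{\nabla}_\alpha \xi_\lambda$ via the transition tensor, and then eliminate $\widecheck{\nabla}_\alpha \xi_\lambda$ through~\eqref{transfo_schouten} in the form $\dot{x}^\alpha\widecheck{\nabla}_\alpha\xi_\lambda = \widehat{L}_{\alpha\lambda}\dot{x}^\alpha - \widecheck{L}_{\alpha\lambda}\dot{x}^\alpha - \tfrac12 S_{\alpha\lambda}{}^{\sigma\rho}\dot{x}^\alpha \xi_\sigma\xi_\rho$. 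The two contributions $\widehat{L}_{\sigma\lambda}\dot{x}^\sigma$ then cancel, leaving the desired $\widecheck{L}_{\sigma\lambda}\dot{x}^\sigma$, while all remaining terms are quadratic in $\xi$, $\widehat{\beta}$ and $\widecheck{\beta}$ contracted against $S_{\lambda\sigma}{}^{\mu\nu}\dot{x}^\sigma$.

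The main obstacle, and the only place where care is genuinely required, is the bookkeeping of these quadratic terms. After substituting $Q_\alpha{}^\mu{}_\lambda = S_{\alpha\lambda}{}^{\mu\nu}\xi_\nu$ and exploiting the symmetries of $S_{\alpha\beta}{}^{\mu\nu}$ in its lower and upper index pairs to collect everything under a common factor $S_{\lambda\sigma}{}^{\mu\nu}\dot{x}^\sigma$, the resulting scalar bracket reduces to $\tfrac12 \widehat{\beta}_\mu\widehat{\beta}_\nu + \tfrac12 \xi_\mu\xi_\nu - \widehat{\beta}_\mu\xi_\nu$, which (modulo the symmetrisation in $\mu\nu$ enforced by $S$) is precisely the expansion of $\tfrac12(\widehat{\beta}_\mu - \xi_\mu)(\widehat{\beta}_\nu - \xi_\nu) = \tfrac12 \widecheck{\beta}_\mu\widecheck{\beta}_\nu$. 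Hence the right-hand side reproduces~\eqref{eq_conf_geo_b} for $\widecheck{\nabla}$ term by term. One subtlety worth emphasising is that the Schouten tensor of a Weyl connection is not symmetric, so the contraction on the first slot in $\widehat{L}_{\sigma\lambda}\dot{x}^\sigma$ must be kept fixed throughout; since~\eqref{transfo_schouten} respects exactly this index ordering, no antisymmetric part ever intervenes in the computation.
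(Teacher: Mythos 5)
Your computation is correct: the algebra for both conformal geodesic equations checks out, the sign conventions for the transition tensor acting on vectors versus covectors match Lemma~\ref{prop_transition_weyl_conn} and the surrounding formulas, the quadratic terms recombine into $\tfrac12 S_{\lambda\sigma}{}^{\mu\nu}\dot{x}^\sigma\widecheck{\beta}_\mu\widecheck{\beta}_\nu$ thanks to the symmetry of $S$ in its upper indices, and your remarks on the $g$-independence of $\xi$ and on preserving the index ordering of the (non-symmetric) Schouten tensor are exactly the right points of care. The paper omits the proof as standard, and your direct verification is the expected argument.
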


\begin{rem}
    Note that the curve $x$ (and its parametrisation) is left unchanged.
\end{rem}

\subsubsection{Congruences of timelike conformal geodesics}
\label{sec:congruence_conf_geo}

Congruences of timelike conformal geodesics are crucial in the gauge construction presented in \Cref{sec:gauge_construction}. Let $(\mathcal{M},[\widetilde{g}])$ be a Lorentzian conformal structure of dimension $N= n+1$ and $\widehat{\nabla}$ be a Weyl connection. 

\begin{defi}
    A \emph{congruence of timelike $\widehat{\nabla}$-conformal geodesics} on an open set $\mathcal{U} \subset \mathcal{M}$ is a smooth family of timelike $\widehat{\nabla}$-conformal geodesics $(x,\widehat{\beta})$ such that at any point $p \in \mathcal{U}$ there is one and only one conformal geodesic passing through $p$.
\end{defi}

\begin{rem}
    A congruence of geodesics is usually defined as integral curves of a non-vanishing vector field satisfying the geodesic equation. The vector field is not unique and the parametrisation of the curves is not imposed. Here, we choose to fix the vector field, the Weyl connection and the parametrisation in the definition.
\end{rem}

\begin{ex}
    Consider the Einstein cylinder $(\M_{EC},g_{EC})$ where
    \begin{equation}
        \label{eq:EC}
        \M_{EC} := \R \times \mathbb{S}^3 \quad \text{and} \quad g_{EC} := -dt^2 + \dsphere{3}^2 \,.
    \end{equation}
    The Schouten tensor of the associated Levi-Civita connection $\nabla$ is given by
    \[ L = \frac{1}{2} \left( dt^2 + \dsphere{n}^2 \right) = \frac{1}{2} g_{EC} + dt^2 \,. \]
    Take the spacelike hypersurface
    \[ \mathcal{S}_\star := \left\{ (t,\psi,\theta,\varphi) \in \R \times (0,\pi)^2 \times (0,2\pi) \; \middle| \; t = \frac{2}{3} \sin\psi \right\} \,. \]
    The $\nabla$-conformal geodesics starting on $\mathcal{S}_\star$, with initial tangent vector $\dot{x}_\star$ the unit normal vector to $\mathcal{S}_\star$ and initial covector $\beta_\star = 0$, form locally a congruence of timelike conformal geodesics. Note that each curve remain on a great circle.
    
    In \Cref{fig:congruence_conformal_geodesics}, these conformal geodesics are represented in dotted or dashed lines while $\mathcal{S}_\star$ is depicted with a solid line. Here, $\psi$ goes from $-\pi$ to $\pi$ to fully cover a great circle: $(\psi,\theta,\varphi)$ with a negative $\psi$ shall be understood as $(-\psi,\pi-\theta,\varphi+\pi \mod 2\pi)$. Eventually, caustics form at $\psi = -\pi/2$ and at least some conformal geodesics present a finite time blow-up. These two behaviours result in the end of the congruence.
\end{ex}

\begin{figure}
    \centering
    \includegraphics[width=0.75\linewidth]{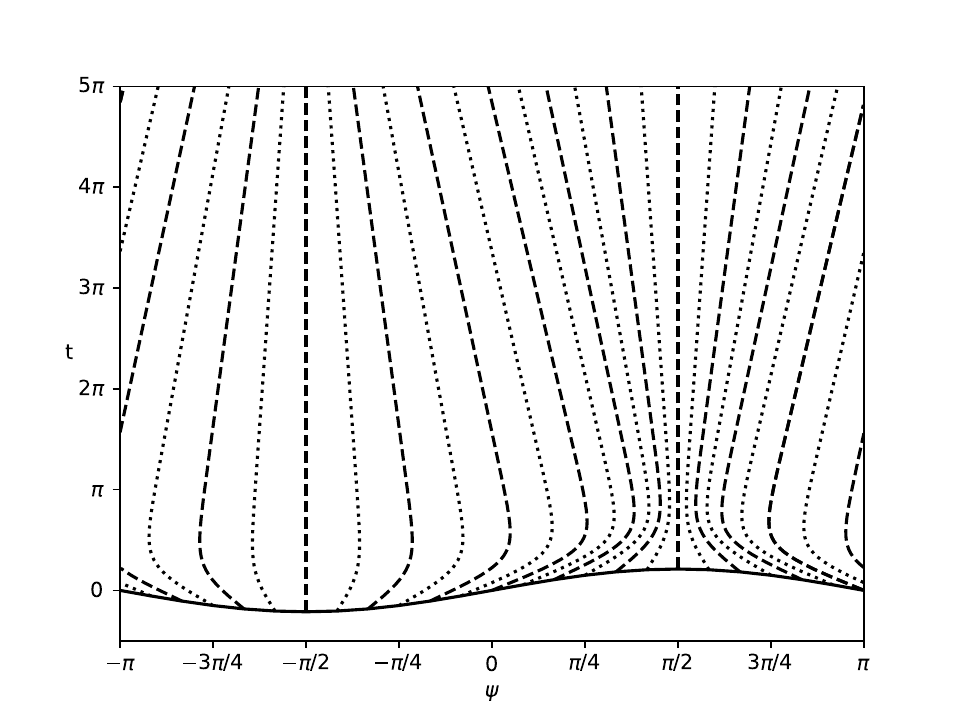}
    \caption{A congruence of timelike conformal geodesics in the Einstein cylinder}
    \label{fig:congruence_conformal_geodesics}
\end{figure}

\begin{defi}
    A \emph{conformal Gaussian coordinate system} on an open set $\mathcal{U} \subset \mathcal{M}$ is a coordinate system $(\tau,y^1,\dots,y^n)$ on $\mathcal{U}$ such that there exists a congruence of timelike conformal geodesics on $\mathcal{U}$ satisfying
    \begin{itemize}
        \item $\tau$ is the parameter function of the conformal geodesics in the congruence,
        \item the functions $(y^i)_{1 \leq i \leq n}$ are constant along the conformal geodesics.
    \end{itemize}
\end{defi}

\begin{rem}
    The Weyl connection is not specified here because the two properties above are independent of it thanks to \Cref{prop:confgeo_changweylconn}.
\end{rem}

\noindent In the next proposition, we recall that a congruence of timelike $\widehat{\nabla}$-conformal geodesics on an open set $\mathcal{U}$ singles out a canonical Weyl connection $\widecheck{\nabla}$ on $\mathcal{U}$ and a canonical representative $\breve{g}$ of the conformal class $[\widetilde{g}]$ on $\mathcal{U}$.

\begin{prop}
    \label{prop_canonical_congruence}
    Consider a congruence of timelike $\widehat{\nabla}$-conformal geodesics $(x,\widehat{\beta})$ on an open set $\mathcal{U}$. Then there exists a unique Weyl connection $\widecheck{\nabla}$ on $\mathcal{U}$ and a unique representative $\breve{g}$ of the conformal class $[\widetilde{g}]$ on $\mathcal{U}$ such that
    \begin{subequations}
        \begin{align}
            \widecheck{\nabla}_{\dot{x}} \dot{x} &= 0 \,, \\
            \widecheck{L}(\dot{x},.) &= 0 \,, \\
            \label{normalisation}
            \breve{g}(\dot{x},\dot{x}) &= -1 \,,
        \end{align}
    \end{subequations}
\end{prop}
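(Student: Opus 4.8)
The plan is to fix the representative first and the Weyl connection second, so that the whole statement collapses to a single choice of covector field. First I would dispose of the normalisation \eqref{normalisation}. For any auxiliary representative $g \in [\widetilde{g}]$, the function $g(\dot{x},\dot{x})$ is smooth and, since the congruence is timelike, strictly negative and nowhere vanishing on $\mathcal{U}$; moreover $\dot{x}$ and the congruence covector $\widehat{\beta}$ are genuine smooth fields on $\mathcal{U}$ because exactly one conformal geodesic passes through each point. I would then set $\breve{g} := - g(\dot{x},\dot{x})^{-1}\, g$, a smooth positive rescaling of $g$ and hence a representative of $[\widetilde{g}]$ satisfying $\breve{g}(\dot{x},\dot{x}) = -1$. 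A one-line check shows this is independent of the auxiliary $g$, and conversely any representative $\Omega^2\breve{g}$ meeting $\breve{g}(\dot{x},\dot{x})=-1$ forces $\Omega = 1$; thus the normalisation determines $\breve{g}$ uniquely.

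Next I would fix $\widecheck{\nabla}$ by killing the conformal-geodesic covector. By \Cref{prop:exist&uniq_weylconn} a Weyl connection is determined by $\breve{g}$ together with its associated covector field, so I define $\widecheck{\nabla}$ to be the unique Weyl connection whose covector relative to $\breve{g}$ is $\widecheck{\kappa}_{\breve{g}} := \widehat{\beta} + \widehat{\kappa}_{\breve{g}}$, where $\widehat{\kappa}_{\breve{g}}$ is the covector of $\widehat{\nabla}$ relative to $\breve{g}$. By \Cref{prop:confgeo_changweylconn}, the fact that $(x,\widehat{\beta})$ is a $\widehat{\nabla}$-conformal geodesic yields that $(x,\widecheck{\beta})$ is a $\widecheck{\nabla}$-conformal geodesic with $\widecheck{\beta} = \widehat{\beta} + \widehat{\kappa}_{\breve{g}} - \widecheck{\kappa}_{\breve{g}} \equiv 0$. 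Substituting $\widecheck{\beta}\equiv 0$ into \eqref{eq_conf_geo_x} gives $\widecheck{\nabla}_{\dot{x}}\dot{x} = 0$, while substituting into \eqref{eq_conf_geo_b} gives $\widecheck{L}_{\sigma\lambda}\dot{x}^\sigma = 0$, i.e. $\widecheck{L}(\dot{x},\,\cdot\,) = 0$. Hence the single choice $\widecheck{\beta}\equiv 0$ realises both the geodesic condition and the Schouten condition at once, establishing existence.

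For uniqueness, suppose $\widecheck{\nabla}'$ also satisfies the geodesic condition together with the already-fixed $\breve{g}$. Again $(x,\widecheck{\beta}')$ is a $\widecheck{\nabla}'$-conformal geodesic, and \eqref{eq_conf_geo_x} combined with $\widecheck{\nabla}'_{\dot{x}}\dot{x}=0$ forces $S_{\lambda\sigma}{}^{\mu\nu}\dot{x}^\lambda\dot{x}^\sigma\widecheck{\beta}'_\nu = 0$. Expanding $S$ via \eqref{eq:def_S} and using $\breve{g}(\dot{x},\dot{x}) = -1$ turns this into $2\dot{x}^\mu\langle\widecheck{\beta}',\dot{x}\rangle + \widecheck{\beta}'^\mu = 0$; contracting with $\dot{x}$ gives $\langle\widecheck{\beta}',\dot{x}\rangle = 0$ and then $\widecheck{\beta}' = 0$. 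Therefore $\widecheck{\kappa}'_{\breve{g}} = \widehat{\beta} + \widehat{\kappa}_{\breve{g}} = \widecheck{\kappa}_{\breve{g}}$, so $\widecheck{\nabla}' = \widecheck{\nabla}$, and combined with the uniqueness of $\breve{g}$ the pair is unique.

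The individual computations are routine; the one conceptual point, and the step I expect to need the most care, is recognising that conditions on both $\widecheck{\nabla}_{\dot{x}}\dot{x}$ and $\widecheck{L}(\dot{x},\,\cdot\,)$ are simultaneously encoded in the vanishing of the congruence's conformal-geodesic covector, so that the construction reduces to a single covector-field choice. Beyond that I would verify the global smoothness of $\widehat{\beta}$ and $\dot{x}$ on $\mathcal{U}$ (needed for $\widecheck{\kappa}_{\breve{g}}$ to be a bona fide smooth covector field) and, as a consistency check, confirm via \eqref{derivee_pseudonorme} that the constant pseudonorm $\breve{g}(\dot{x},\dot{x})=-1$ is compatible with the construction, since it yields $\langle\widehat{\kappa}_{\breve{g}}+\widehat{\beta},\dot{x}\rangle = \langle\widecheck{\kappa}_{\breve{g}},\dot{x}\rangle = 0$.
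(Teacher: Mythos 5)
Your proposal is correct and follows essentially the same route as the paper: fix $\breve{g}$ by the rescaling $\Omega = (-g(\dot{x},\dot{x}))^{-1/2}$, then take the Weyl connection associated to $\widehat{\kappa}_{\breve{g}}+\widehat{\beta}$ so that $\widecheck{\beta}\equiv 0$, with both conditions following from the conformal geodesic equations. Your explicit uniqueness computation (contracting $S_{\lambda\sigma}{}^{\mu\nu}\dot{x}^\lambda\dot{x}^\sigma\widecheck{\beta}'_\nu=0$ with $\dot{x}$ to force $\widecheck{\beta}'=0$) spells out a step the paper only asserts, but the argument is the same.
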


\begin{rem}
    This means that one can find a Weyl connection $\widecheck{\nabla}$ for which $\widecheck{\beta} = 0$ and a conformally related metric $\breve{g}$ for which the parameter $\tau$ of the curves $x$ corresponds to the $\breve{g}$-proper time.
\end{rem}

\begin{proof}
    Fix a representative $g \in [\widetilde{g}]$. Since the conformal geodesics in the congruence are timelike, $g(\dot{x},\dot{x})$ is a smooth negative function on $\mathcal{U}$. Define
    \begin{equation}
        \label{Omega_gbreve}
        \Omega := (-g(\dot{x},\dot{x}))^{-1/2} \in \mathcal{C}^\infty(\mathcal{U},\R_+^\star) \,.
    \end{equation}
    Then the metric $\breve{g} := \Omega^2 \, g$ on $\mathcal{U}$ does not depend on the choice of $g$ and $\breve{g}(\dot{x},\dot{x}) = -1$. This defines uniquely the metric $\breve{g}$.
    
    Let $\widehat{\kappa}_{\breve{g}}$ be the smooth covector field on $\mathcal{U}$ associated to $\widehat{\nabla}$ with respect to $\breve{g}$ by~\eqref{weyl_conn}. Take the Weyl connection $\widecheck{\nabla}$ associated to the smooth covector field $\widecheck{\kappa}_{\breve{g}} := \widehat{\kappa}_{\breve{g}} + \widehat{\beta}$ on $\mathcal{U}$ with respect to $\breve{g}$ given by \Cref{prop:exist&uniq_weylconn}. It follows from \Cref{prop:confgeo_changweylconn} that the curves $(x,0)$ are $\widecheck{\nabla}$-conformal geodesics and this is the only choice of Weyl connection which gives $\widecheck{\beta} = 0$.
\end{proof}

\noindent When the conformal structure on $\mathcal{U}$ is globally conformally Einstein, one can determine a simple expression for the conformal factor between the metric $\breve{g}$ given by \Cref{prop_canonical_congruence} and the Einstein metric. This property, used in Friedrich's original proof, will be obtained through an alternative derivation directly from the extended conformal vacuum Einstein equations in the gauge construction, see \Cref{prop:construction_gauge} and in particular \eqref{gauge_Theta}.

\begin{prop}[Friedrich \protect{\cite[Lemma 3.1]{F95}}]
    \label{prop_conf_factor}
    Consider a congruence of timelike $\widehat{\nabla}$-conformal geodesics $(x,\widehat{\beta})$ on an open set $\mathcal{U}$. Introduce the hypersurface $\mathcal{S}_\star := \{x \in \mathcal{U} \mid \tau(x) = \tau_\star \}$ for some fixed parameter value $\tau_\star \in \R$. Assume that there exists a representative $\widetilde{g}$ in the conformal class such that $(\mathcal{U},\widetilde{g})$ is solution to the \eqref{eq:VE}. Then the metric $\breve{g}$ on $\mathcal{U}$ defined through~\eqref{normalisation} is conformally related to $\widetilde{g}$ on $\mathcal{U}$ by a conformal factor $\Omega$ (that is $\breve{g} = \Omega^2 \widetilde{g}$ on $\mathcal{U}$) expressed in a Gaussian coordinate system $(\tau,y)$ associated to the congruence as follows
    \begin{equation}
        \Omega(\tau,y) = \Omega_\star(y) \left( 1 + v_\star(y) (\tau-\tau_\star) - f_\star(y) \frac{(\tau-\tau_\star)^2}{2} \right) \,,
    \end{equation}
    where $\Omega_\star$, $q_\star$ and $f_\star$ are the following smooth functions on $\mathcal{S}_\star$
    \begin{subequations}
        \label{expr_coeff}
        \begin{align}
            \Omega_\star &:= (-\widetilde{g}(\dot{x}_\star,\dot{x}_\star))^{-1/2} \,,\\
            v_\star &:= \langle \widetilde{\beta}_\star,\dot{x}_\star\rangle \,,\\
            f_\star &:= \frac{1}{2\Omega_\star^2} \left(\widetilde{g}^{-1}(\widetilde{\beta}_\star,\widetilde{\beta}_\star) +\frac{2\Lambda}{n(n-1)}\right)  \,,
        \end{align}
    \end{subequations}
     with $\widetilde{\beta} := \widehat{\beta} + \widehat{\kappa}_{\widetilde{g}}$.
\end{prop}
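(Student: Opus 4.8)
The plan is to pull everything back to the Levi-Civita connection of the Einstein representative and reduce the problem to a closed system of scalar ODEs along each conformal geodesic of the congruence. First I would replace the Weyl connection $\widehat{\nabla}$ by $\widetilde{\nabla}$, the Levi-Civita connection of $\widetilde{g}$, which is itself a Weyl connection whose covector relative to $\widetilde{g}$ vanishes. By \Cref{prop:confgeo_changweylconn}, the pairs $(x,\widehat{\beta})$ are exactly the $\widetilde{\nabla}$-conformal geodesics $(x,\widetilde{\beta})$ with $\widetilde{\beta} = \widehat{\beta} + \widehat{\kappa}_{\widetilde{g}}$, which is the quantity appearing in the statement. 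Since $(\mathcal{U},\widetilde{g})$ solves \eqref{eq:VE}, its Schouten tensor \eqref{def:schouten} is pure trace, namely $\widetilde{L}_{\mu\nu} = \tfrac{\Lambda}{n(n-1)}\widetilde{g}_{\mu\nu}$. Finally, by \eqref{Omega_gbreve} and \eqref{normalisation} applied to $g = \widetilde{g}$, the conformal factor of the statement is $\Omega = (-\widetilde{g}(\dot{x},\dot{x}))^{-1/2}$.

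Next I would introduce the three scalars $u := \widetilde{g}(\dot{x},\dot{x})$, $b := \langle \widetilde{\beta},\dot{x}\rangle$ and $w := \widetilde{g}^{-1}(\widetilde{\beta},\widetilde{\beta})$ along a geodesic. Contracting the conformal geodesic equations \eqref{eq_conf_geo_x}--\eqref{eq_conf_geo_b} with $\dot{x}$ and $\widetilde{\beta}$, expanding $S$ via \eqref{eq:def_S}, and using that $\widetilde{L}_{\sigma\lambda}\dot{x}^\sigma = \tfrac{\Lambda}{n(n-1)}\dot{x}_\lambda$ contributes only through $u$, I expect the closed system
\[ \partial_\tau u = -2ub, \qquad \partial_\tau b = -b^2 + \tfrac{1}{2}uw + \tfrac{\Lambda}{n(n-1)}u, \qquad \partial_\tau w = b\Big(w + \tfrac{2\Lambda}{n(n-1)}\Big). \]
The first of these reproduces \eqref{derivee_pseudonorme}; the decisive role of the Einstein condition is to render $\widetilde{L}$ pure trace so that no tensorial term escapes this scalar system.

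Since $\Omega = (-u)^{-1/2}$, differentiating along the curve and substituting the system gives successively
\[ \partial_\tau\Omega = \Omega b, \qquad \partial_\tau^2\Omega = \tfrac{1}{2}\,\Omega u\Big(w + \tfrac{2\Lambda}{n(n-1)}\Big), \qquad \partial_\tau^3\Omega = 0, \]
the last identity arising from a $1-2+1$ cancellation once the three ODEs are inserted. Hence $\Omega$ is a quadratic polynomial in $\tau$ on each geodesic. Evaluating at $\tau = \tau_\star$, reading off $\Omega(\tau_\star) = \Omega_\star$ and $\partial_\tau\Omega(\tau_\star) = \Omega_\star\langle\widetilde{\beta}_\star,\dot{x}_\star\rangle$, and rewriting the second derivative with $\Omega_\star^2 = -1/u_\star$, identifies the coefficients with $\Omega_\star$, $v_\star$ and $f_\star$ exactly as in \eqref{expr_coeff}.

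The main obstacle is precisely the triple cancellation yielding $\partial_\tau^3\Omega = 0$: it hinges entirely on the Einstein hypothesis forcing $\widetilde{L}$ to be pure trace with the exact coefficient $\tfrac{\Lambda}{n(n-1)}$, and would fail for a general conformal class, where the third derivative need not vanish. The remaining work is the careful but routine bookkeeping of the $S$-contractions needed to obtain the scalar system with the correct numerical factors.
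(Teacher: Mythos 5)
Your proposal is correct and follows essentially the same route as the paper's proof: pass to the Levi-Civita connection of $\widetilde{g}$ via \Cref{prop:confgeo_changweylconn}, obtain $\dot{\Omega}=\langle\widetilde{\beta},\dot{x}\rangle\Omega$ from \eqref{derivee_pseudonorme}, differentiate twice more along the curve, and use the Einstein condition to force $\dddot{\Omega}=0$ so that $\Omega$ is quadratic in $\tau$ with coefficients read off at $\tau_\star$. The only cosmetic difference is that you substitute the pure-trace Schouten tensor at the outset and organise the computation through the closed scalar system in $(u,b,w)$ — which is precisely the auxiliary system recorded in the remark following the proposition — whereas the paper keeps $\widetilde{L}$ general until the final cancellation.
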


\begin{rem}
    A more detailed analysis can be made with the hypotheses of \Cref{prop_conf_factor} such as in the case of a zero cosmological constant presented in \cite{F03}. Indeed, the following auxiliary variables
    \[ u := \widetilde{g}(\dot{x},\dot{x}) \,, \qquad v := \langle \widetilde{\beta}, \dot{x} \rangle \,, \qquad w := \frac{1}{2} \left(\widetilde{g}^{-1}(\widetilde{\beta},\widetilde{\beta}) +\frac{2\Lambda}{n(n-1)}\right) \,, \]
    satisfy a system of ordinary differential equations given by
    \begin{equation}
        \label{auxiliary_system_conf_geod}
        \dot{u} = -2vu \,, \qquad \dot{v} = -v^2 + uw \,, \qquad \dot{w} = vw \,.
    \end{equation}
    After solving for these auxiliary variables, the $\widetilde{\nabla}$-conformal geodesic equations reduce to a linear system of ordinary differential equations. Thus finite time blow-ups are entirely controlled by the auxiliary system \eqref{auxiliary_system_conf_geod} in this case. Yet integrating \eqref{auxiliary_system_conf_geod} gives
    \[ u(\tau) = \frac{u_\star}{P(\tau)^2} \,, \qquad v(\tau) = \frac{P'(\tau)}{P(\tau)} \,, \qquad w(\tau) = w_\star P(\tau) \,, \]
    where
    \[ P(\tau) := 1+v_\star(\tau-\tau_\star)+\frac{u_\star w_\star}{2} (\tau-\tau_\star)^2 \,. \]
    One concludes that a finite time blow-up occurs if and only if
    \begin{itemize}
        \item $u_\star w_\star = 0$ and $v_\star \neq 0$,
        \item $u_\star w_\star \neq 0$ and $v_\star^2 \geq 2 u_\star w_\star$. \qedhere
    \end{itemize}
\end{rem}

\begin{proof}
    For completeness, we include Friedrich's proof presented in \cite[Section 3.3]{F95}. Let $\widehat{\kappa}_{\widetilde{g}}$ be the covector associated to $\widehat{\nabla}$ with respect to $\widetilde{g}$. By \Cref{prop:confgeo_changweylconn}, $(x,\widetilde{\beta} := \widehat{\beta}+\widehat{\kappa}_{\widetilde{g}})$ is solution to the $\widetilde{\nabla}$-conformal geodesic equations where $\widetilde{\nabla}$ is the Levi-Civita connection of $\widetilde{g}$. With~\eqref{derivee_pseudonorme} and~\eqref{Omega_gbreve}, one finds
    \[ \dot{\Omega} := \dot{x} \cdot \Omega = \widetilde{\nabla}_{\dot{x}} \Omega = \langle \widetilde{\beta},\dot{x}\rangle \Omega\,. \]
    By further differentiating with respect to $\tau$ and using the $\widetilde{\nabla}$-conformal geodesic equations, one deduces that
    \begin{align*}
        \ddot{\Omega} &= \left( \frac{1}{2} \widetilde{g}(\dot{x},\dot{x})\widetilde{g}^{-1}(\widetilde{\beta},\widetilde{\beta}) + \widetilde{L}(\dot{x},\dot{x}) \right) \Omega \,,\\
        \dddot{\Omega} &= \left( (\widetilde{\nabla}_{\dot{x}} \widetilde{L})(\dot{x},\dot{x}) + 3 \left( \widetilde{L}(\dot{x},\widetilde{\beta}^\sharp) \widetilde{g}(\dot{x},\dot{x}) - \langle \widetilde{\beta},\dot{x} \rangle \widetilde{L}(\dot{x},\dot{x})\right) \right) \Omega \,,
    \end{align*}
    where $\widetilde{\beta}^\sharp$ is the vector field obtained by raising the index with the metric $\widetilde{g}$. Since $(\mathcal{U},\widetilde{g})$ is solution to the (VE), one has $\dddot{\Omega} = 0$. It follows that $\Omega$ is a second order polynomial in $\tau$. Let $(\dot{x}_\star,\widetilde{\beta}_\star)$ be the values of $(\dot{x},\widetilde{\beta})$ on $\mathcal{S}_\star$. By evaluating the previous equations and \eqref{Omega_gbreve} on $\mathcal{S}_\star$, one gets the expressions~\eqref{expr_coeff}.
\end{proof}

\subsection{Candidates and dual tensors}

In this section, we introduce tensors sharing the same algebraic properties than the Weyl or the Cotton tensor, which we call Weyl or Cotton candidates. The former are sometimes referred to as Weyl fields \cite{CK94} or spin-2 fields \cite{CK90}. These candidates can verify covariant differential equations similar to the Bianchi identities. In small dimensions, such equations simplify greatly by mean of duality. 

These notions will be exploited multiple times in \Cref{sec:local_existence}. In particular, let us highlight \Cref{def:cotton_york} of the Cotton-York tensor which plays a key role with regard to the boundary conditions in \Cref{sec:geometric_bc}.

\subsubsection{Weyl and Cotton candidates}

Let $(\M,g)$ be a smooth pseudo-Riemannian manifold.

\begin{defi}
    \label{def_Weyl_candidate}
     A \emph{Weyl candidate} on $(\M,g)$ is a tensor field $Q^\alpha{}_{\beta\mu\nu}$ enjoying the same algebraic properties than a Weyl tensor, that is
    \[ Q_{(\alpha\beta)\mu\nu} = 0 \,, \qquad Q^\alpha{}_{\beta(\mu\nu)} = 0 \,, \qquad Q^\alpha{}_{[\beta\mu\nu]} = 0 \,, \qquad Q^\alpha{}_{\beta\alpha\nu} = 0 \,. \]
\end{defi}

\begin{defi}
    \label{def:cotton_candidate}
    A \emph{Cotton candidate} on $(\M,g)$ is a tensor field $J_{\lambda\mu\nu}$ satisfying the same algebraic properties than a Cotton tensor, that is
    \[ J_{(\lambda\mu)\nu} = 0 \,, \qquad J_{[\lambda\mu\nu]} = 0 \,, \qquad J_{\lambda\mu}{}^{\mu} = 0 \,. \]
\end{defi}

\subsubsection{Hodge dual}

Let $(\M,g)$ be a smooth oriented pseudo-Riemannian manifold of signature\footnote{With the convention that $(N,0)$ is the Riemannian signature and $(N-1,1)$ is the Lorentzian.} $(p,q)$ and dimension $N=p+q$. Its volume form is the $N$-form defined locally by 
\[ \epsilon := \sqrt{|\det g|} \; dy^1 \wedge \dots \wedge dy^N \,, \]
for any local coordinate system $(y^1,\dots,y^N)$ and where $\det g$ is the determinant of the metric in this coordinate system. It is clear that $\epsilon_{\alpha_1,\dots,\alpha_N}$ is conformally invariant of weight $-N$. Moreover, if $\widehat{\nabla}$ is a Weyl connection for the conformal structure $(\M,[g])$ then
\[ \widehat{\nabla}_\alpha \epsilon_{\beta_1 \dots \beta_N} = - N (\widehat{\kappa}_g)_\alpha \epsilon_{\beta_1\dots\beta_N} \,, \]
where $\widehat{\kappa}_g$ is the covector field defined by \eqref{weyl_conn}.

Recall that the Hodge dual of a $k$-form $\omega$ is the $(N-k)$-form $\star\omega$ defined by
\[ (\star\omega)_{\alpha_1\dots\alpha_{N-k}} := \frac{1}{k!} \epsilon_{\alpha_1\dots\alpha_{N-k}}{}^{\beta_1\dots\beta_k} \omega_{\beta_1\dots\beta_k} \,. \]
From the identity, see \cite[Section 2.5.3]{K16} and references therein,
\begin{equation}
    \label{eq:identity_volume_form}
    \epsilon^{\alpha_1\dots \alpha_k \, \beta_1\dots \beta_{N-k}} \epsilon_{\alpha_1\dots \alpha_k \, \gamma_1\dots \gamma_{N-k}} = (-1)^q \, k! \, (N-k)! \; \delta_{[\gamma_1} {}^{\beta_1} \dots \, \delta_{\gamma_{N-k}]} {}^{\beta_{N-k}} \,,
\end{equation}
one deduces that for all $k$-form $\omega$,
\[ \star (\star\omega) = (-1)^{q+k(N-k)} \omega \,.\]

\subsubsection{Duality in dimension 4}
\label{sec:duality_dim_4}

Let $(\M,g)$ be an oriented smooth Lorentzian manifold of dimension $4$. The right and left dual tensors of a Weyl candidate $Q^\alpha{}_{\beta\mu\nu}$ are defined by
\begin{align*}
    (Q\star)_{\alpha\beta\lambda\xi} := \frac{1}{2} \epsilon_{\lambda\xi}{}^{\mu\nu} Q_{\alpha\beta\mu\nu} \,, \\
    (\star Q)_{\alpha\beta\lambda\xi} := \frac{1}{2} \epsilon_{\alpha\beta}{}^{\mu\nu} Q_{\mu\nu\lambda\xi} \,.
\end{align*}
It is classical that the left and right duals coincide and that they are Weyl candidates as well, see for instance \cite[Section 4]{CK90}. The left dual tensor of a Cotton candidate $J_{\alpha\beta\gamma}$ is defined by
\[ (\star J)_{\alpha\beta\gamma} := \frac{1}{2} \epsilon_{\alpha\beta}{}^{\mu\nu} J_{\mu\nu\gamma} \,. \]
Then $(\star J)_{\alpha\beta\gamma}$ is also a Cotton candidate. Indeed,
\begin{align*}
    (\star J)_{\alpha\beta}{}^\beta &= \frac{1}{2} \epsilon_{\alpha\beta}{}^{\mu\nu} J_{\mu\nu}{}^\beta = \frac{1}{2} \epsilon_\alpha{}^{\beta\mu\nu} J_{[\beta\mu\nu]} = 0 \,, \\
    (\star J)_{[\rho\sigma\tau]} &= \frac{1}{3} \epsilon^\alpha{}_{\rho\sigma\tau} J_{\alpha\beta}{}^\beta = 0 \,.
\end{align*}
Moreover, for a second Bianchi identity type equation between a Weyl candidate $Q_{\alpha\beta\mu\nu}$ and a Cotton candidate $J_{\alpha\beta\gamma}$ with respect to a Weyl connection $\widehat{\nabla}$ for $(\M,[g])$, one has
\begin{align*}
    \widehat{\nabla}_\alpha Q^\alpha{}_{\beta\mu\nu} = J_{\mu\nu\beta} &\iff \widehat{\nabla}_\alpha (Q\star)^\alpha{}_{\beta\mu\nu} = (\star J)_{\mu\nu\beta} \\
    &\iff \widehat{\nabla}_{[\mu} Q^\alpha{}_{|\beta|\nu\xi]} = S_{\beta[\mu}{}^{\alpha\gamma} J_{\nu\xi]\gamma} \\
    &\iff \widehat{\nabla}_{[\mu} (Q\star)^\alpha{}_{|\beta|\nu\xi]} = S_{\beta[\mu}{}^{\alpha\gamma} (\star J)_{\nu\xi]\gamma} \,.
\end{align*}
More precisely,
\begin{equation}
    \label{div_Q_div_Q_star}
    \Big( \widehat{\nabla}_\alpha (Q\star)^\alpha{}_{\beta\lambda\xi} - (\star J)_{\lambda\xi\beta} \Big) = \frac{1}{2} \epsilon_{\lambda\xi}{}^{\mu\nu} \Big( \widehat{\nabla}_\alpha Q^\alpha{}_{\beta\mu\nu} - J_{\mu\nu\beta} \Big) \,,
\end{equation}
and
\begin{equation}
    \label{eq:aux_duality_4}
    - \frac{1}{2} \epsilon_\beta{}^{\alpha\lambda\xi} g_{\mu\rho} \Big( \widehat{\nabla}_{[\alpha} Q^\rho{}_{|\nu|\lambda\xi]} - S_{\nu[\alpha}{}^{\rho\gamma} J_{\lambda\xi]\gamma} \Big) = \Big( \widehat{\nabla}_\alpha (Q\star)^\alpha{}_{\beta\mu\nu} - (\star J)_{\mu\nu\beta} \Big)  \,. 
\end{equation}
The last equation rewrites under the form
\begin{equation}
    \label{curl_Q_div_Q_star}
    3 \Big( \widehat{\nabla}_{[\alpha} Q^\rho{}_{|\nu|\lambda\xi]} - S_{\nu[\alpha}{}^{\rho\gamma} J_{\lambda\xi]\gamma} \Big) = \epsilon^\beta{}_{\alpha\lambda\xi} g^{\rho\mu} \Big( \widehat{\nabla}_\sigma (Q\star)^\sigma{}_{\beta\mu\nu} - (\star J)_{\mu\nu\beta} \Big) \,.
\end{equation}

\begin{rem}
    Let $\omega_\alpha$ be a smooth covector field on $\M$. Then $J_{\mu\nu\beta} := \omega_\alpha Q^\alpha{}_{\beta\mu\nu}$ is a Cotton candidate and
    \begin{align*}
        (\star J)_{\mu\nu\beta} &= \omega_\alpha (Q\star)^\alpha{}_{\beta\mu\nu} = \omega_\alpha (\star Q)^\alpha{}_{\beta\mu\nu} = \frac{1}{2} \epsilon^\alpha{}_{\beta\sigma\tau} \omega_\alpha Q^{\sigma\tau}{}_{\mu\nu} \\
        &= -\frac{1}{2} \epsilon_\beta{}^{\alpha\sigma\tau} \omega_{[\alpha} Q_{\sigma\tau]\mu\nu} = - \frac{1}{2} \epsilon_\beta{}^{\alpha\lambda\xi} g_{\mu\rho} \left(Q^\rho{}_{\nu[\alpha\lambda} \omega_{\xi]}\right) \,.
    \end{align*} 
    Thus, using \eqref{eq:aux_duality_4} with $Q^\alpha{}_{\beta\mu\nu} = 0$, one has
    \begin{equation}
        \label{eq:SxJ}
        S_{\nu[\alpha}{}^{\rho\gamma} J_{\lambda\xi]\gamma} = Q^\rho{}_{\nu[\alpha\lambda} \omega_{\xi]} \,. \qedhere
    \end{equation}
\end{rem}

\subsubsection{Duality in dimension 3 and the Cotton-York tensor}
\label{sec:duality_dim_3}

Let $(\mathcal{S},h)$ be an oriented smooth Lorentzian manifold of dimension $3$, $D$ be its Levi-Civita connection and $J_{ijk}$ be a Cotton candidate. Define the left dual tensor of $J_{ijk}$ by
\[ (\star J)_{ij} := \frac{1}{2} \epsilon_i{}^{kl} J_{klj} \,. \]
Then $(\star J)_{ij}$ is symmetric and trace-free. Indeed,
\begin{align*}
    2 (\star J)_{[ij]} &= 2 \delta_{[i}{}^k \delta_{j]}{}^l (\star J)_{kl} = - \epsilon^m{}_{ij} \epsilon_m{}^{kl} (\star J)_{kl} = -\epsilon^m{}_{ij} J_{ml}{}^l = 0 \,, \\
    2 (\star J)^i{}_i &= \epsilon^{ijk} J_{ijk} = \epsilon^{ijk} J_{[ijk]} = 0 \,.
\end{align*}
Moreover, for any Weyl connection $\widehat{D}$ for $(\mathcal{S},[h])$,
\begin{align*}
    \widehat{D}_i (\star J)^i{}_j - 3 (\widehat{\kappa}_h)_i (\star J)^i{}_j &= \frac{1}{2} \epsilon^{ikl} \widehat{D}_{[i} J_{kl]j} \,, \\
    \widehat{D}_i (\star J)^i{}_j - 3 (\widehat{\kappa}_h)_i (\star J)^i{}_j &= \widehat{D}_i (\star J)_j{}^i - 3 (\widehat{\kappa}_h)_i (\star J)_j{}^i = \frac{1}{2} \epsilon_j{}^{kl} \left( \widehat{D}_i J_{kl}{}^i - 2(\widehat{\kappa}_h)_i J_{kl}{}^i \right) \,.
\end{align*}
In particular, for a fourth Bianchi identity type equation on the Cotton candidate $J_{ijk}$ with respect to a Weyl connection $\widehat{D}$,
\[ \widehat{D}_{[i} J_{jk]l} = 0 \iff \widehat{D}_i (\star J)^i{}_j = 3(\widehat{\kappa}_h)_i (\star J)^i{}_j \iff  h^{ij} \widehat{D}_i J_{klj} = 0 \,. \]

\begin{defi}
    \label{def:cotton_york}
    The \emph{Cotton-York tensor} $Y_{ij}$ of a $3$-dimensional oriented smooth Lorentzian manifold $(\mathcal{S},h)$ is the left dual tensor of the Cotton tensor $C_{ijk}$ of the Levi-Civita connection $D$.
\end{defi}

\begin{prop}
    \label{prop:cottonyork}
    The Cotton-York tensor of a $3$-dimensional oriented smooth Lorentzian manifold $(\mathcal{S},h)$ is symmetric, trace-free with respect to the metric $h$ and divergence-free for the Levi-Civita connection $D$ of $h$. It is furthermore conformally invariant of weight $1$.
\end{prop}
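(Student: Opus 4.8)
The plan is to establish each of the four claimed properties of the Cotton--York tensor $Y_{ij} = (\star C)_{ij}$ in turn, leveraging the general duality results for Cotton candidates in dimension $3$ already developed in \Cref{sec:duality_dim_3}. The key observation is that the Cotton tensor $C_{ijk}$ of the Levi-Civita connection $D$ is itself a Cotton candidate in the sense of \Cref{def:cotton_candidate}, since \eqref{cotton_antisym_acyclique} gives $C_{(ij)k}=0$ and $C_{[ijk]}=0$, and the trace identity $C_{ij}{}^j = 0$ holds (this is the $N=3$ case of the contracted second Bianchi identity, or follows directly from the definition).

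First I would obtain symmetry and trace-freeness essentially for free: these are exactly the two computations displayed just before \Cref{def:cotton_york}, valid for the left dual of any Cotton candidate. That is, $2(\star J)_{[ij]} = -\epsilon^m{}_{ij} J_{ml}{}^l = 0$ and $2(\star J)^i{}_i = \epsilon^{ijk} J_{[ijk]} = 0$, applied to $J = C$. Hence $Y_{ij}$ is symmetric and trace-free.

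Next, for the divergence-free property, I would invoke the duality equivalence stated at the end of \Cref{sec:duality_dim_3}: for the Levi-Civita connection $D$ (a Weyl connection with $\widehat{\kappa}_h = 0$), one has $D_{[i} J_{jk]l} = 0 \iff D_i (\star J)^i{}_j = 0$. The left-hand side is precisely the fourth Bianchi identity \eqref{eq_fourth_Bianchi_id} in the guise $\nabla^\alpha C_{\mu\nu\alpha}=0$ together with its antisymmetrised form; in dimension $3$ the Weyl tensor vanishes, so the fourth Bianchi identity \eqref{eq:fourth_Bianchi_id_Weyl_con} reduces to $D_{[i} C_{jk]l} = 0$. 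Applying the equivalence then yields $D_i Y^i{}_j = 0$, the divergence-free statement.

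Finally, for conformal invariance of weight $1$, I would combine the conformal invariance of the Cotton tensor in dimension $3$ (stated in \Cref{sec:conformal_invariance}: $C_{ijk}$ is conformally invariant of weight $0$) with the conformal weight $-N = -3$ of the volume form $\epsilon$, noting that the dual raises one index using $h^{-1}$, which carries weight $+2$, and contracts against $\epsilon$, which carries weight $-3$. Tracking the weights through $Y_{ij} = \tfrac12 \epsilon_i{}^{kl} C_{klj}$ gives a net weight of $1$. The main obstacle, if any, is bookkeeping the conformal weights correctly through the index raising and the volume form; this is the one step where a sign or weight slip is easy, so I would verify it by the explicit rescaling $h \mapsto \Omega^2 h$ and checking $Y_{ij} \mapsto \Omega^{-1} Y_{ij}$ directly rather than relying solely on abstract weight-counting.
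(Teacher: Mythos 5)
Your proof is correct and takes essentially the same route as the paper's, which likewise reads off symmetry, trace-freeness and the divergence property from the general duality discussion for Cotton candidates in dimension $3$ and then counts conformal weights ($C_{ijk}$ of weight $0$ and $\epsilon_i{}^{kl}$ of weight $1$). One minor bookkeeping point: the dual raises \emph{two} indices on the volume form, so the count is $-3+2+2+0=1$ rather than $-3+2$; your stated final weight is nonetheless correct, and your instinct to verify it by the explicit rescaling $h\mapsto\Omega^2 h$ is sound.
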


\begin{proof}
    From the above general discussion, it is clear that the Cotton-York is symmetric, trace-free and divergence-free. Since $C_{ijk}$ is conformally invariant of weight $0$ and $\epsilon_i{}^{jk}$ is conformally invariant of weight $1$, the Cotton-York tensor is also conformally invariant of weight $1$.
\end{proof}

\subsubsection{Duality on a distribution of dimension 2}
\label{sec:duality_dim_2}

To interpret geometrically certain analytic boundary conditions, it is convenient in \Cref{sec:tensorial_bc_distribution} to exploit duality on a 2-dimensional distribution arising from the gauge constructed in \Cref{prop:construction_gauge}.

To that end, consider an oriented $4$-dimensional smooth Lorentzian manifold $(\M,g)$ and let $(e_{\bf a})$ be a smooth $g$-orthonormal frame field on $\M$ with $e_{\bf 0}$ timelike and $\epsilon_{\bf 0123} = +1$ where $\epsilon$ is the volume form of $g$. Define
\begin{itemize}
    \item $\mathcal{D}^2 := \vectorspan(e_{\bf 1},e_{\bf 2})$ the distribution of dimension 2 spanned by $e_{\bf 1}$ and $e_{\bf 2}$, equipped with the induced scalar product $\delta_{\bf AB}$,
    \item $\slashed{\epsilon}_{\bf AB} := \epsilon_{\bf 0AB3}$,
    \item $\mathscr{S}(\mathcal{D}^2)$ be the space of trace-free symmetric smooth 2-tensor $q_{\bf AB}$ on $\mathcal{D}^2$,
    \item $\mathscr{B}(\mathcal{D}^2)$ be the space of smooth 4-tensors $L_{\bf AB}{}^{\bf CD}$ on $\mathcal{D}^2$ verifying
    \[ L_{\bf AB}{}^{\bf CD} = L_{\bf (AB)}{}^{\bf (CD)} \,, \qquad L_{\bf A}{}^{\bf ACD} = 0 \,, \qquad L_{\bf ABC}{}^{\bf C} = 0 \,. \]
\end{itemize}
Note that $\mathscr{B}(\mathcal{D}^2)$ can be seen as the space of endomorphisms of $\mathscr{S}(\mathcal{D}^2)$.

\begin{defi}
    The dual of any $q_{\bf AB} \in \mathscr{S}(\mathcal{D}^2)$ is the element of $\mathscr{S}(\mathcal{D}^2)$ defined by
    \begin{equation}
        (\star q)_{\bf AB} := \slashed{\epsilon}\vphantom{q}_{\bf (A}{}^{\bf C} q_{\bf B)C} \,.
    \end{equation}
    This defines an operator $\star \in \mathscr{B}(\mathcal{D}^2)$.
\end{defi}

\begin{lem}
    \label{lem:identities_dual_2}
    For all $q_{\bf AB}$, $(q')_{\bf AB} \in \mathscr{S}(\mathcal{D}^2)$,
    \begin{subequations}
        \begin{align*}
        (\star q)_{\bf AB} (q')^{\bf AB} &= -q_{\bf AB} (\star q')^{\bf AB} \,, \\
        (\star(\star q))_{\bf AB} &= -q_{\bf AB} \,.
        \end{align*}
    \end{subequations}
    In other words, $\star$ is anti self-adjoint and $\star\star = - \Id$. In particular,
    \begin{align*}
        (\star q)_{\bf AB} (\star q)^{\bf AB} &= q_{\bf AB} q^{\bf AB}  \,, \\
        (\star q)_{\bf AB} q^{\bf AB} &= 0 \,.
    \end{align*}
\end{lem}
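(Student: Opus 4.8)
The plan is to work entirely in components with respect to the orthonormal frame $(e_{\bf 1}, e_{\bf 2})$ spanning $\mathcal{D}^2$, where the induced metric is $\delta_{\bf AB}$ and the area form is $\slashed{\epsilon}_{\bf AB}$ (antisymmetric, with $\slashed{\epsilon}_{\bf 12} = \epsilon_{\bf 0123} = +1$). Since $\mathcal{D}^2$ is $2$-dimensional, $\mathscr{S}(\mathcal{D}^2)$ is a $2$-dimensional vector space, so every identity reduces to a finite check. I will first establish anti self-adjointness, then deduce $\star\star = -\Id$, and finally read off the two particular consequences.

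For the anti self-adjointness, I would start from the definition $(\star q)_{\bf AB} = \slashed{\epsilon}\vphantom{q}_{\bf (A}{}^{\bf C} q_{\bf B)C}$ and contract against $(q')^{\bf AB}$. Because $q'$ is symmetric, the symmetrization in the outer indices of $\star q$ is free, so I can write
\[
(\star q)_{\bf AB} (q')^{\bf AB} = \slashed{\epsilon}_{\bf A}{}^{\bf C} q_{\bf BC} (q')^{\bf AB}.
\]
The key algebraic input, special to dimension $2$, is the completeness relation for the area form, namely $\slashed{\epsilon}_{\bf A}{}^{\bf C} \slashed{\epsilon}_{\bf B}{}^{\bf D} = \delta_{\bf AB}\delta^{\bf CD} - \delta_{\bf A}{}^{\bf D}\delta_{\bf B}{}^{\bf C}$ (this is the $k=1$, $N=2$ case of the volume-form identity \eqref{eq:identity_volume_form} specialised to the positive-definite induced metric $\delta_{\bf AB}$). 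To exhibit the sign flip I would relabel dummy indices in $q_{\bf AB}(\star q')^{\bf AB}$ and compare, using the antisymmetry $\slashed{\epsilon}_{\bf AB} = -\slashed{\epsilon}_{\bf BA}$; the single minus sign comes precisely from moving $\slashed{\epsilon}$ from one factor to the other. This establishes the first displayed identity.

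For $\star\star = -\Id$, I would compute $(\star(\star q))_{\bf AB} = \slashed{\epsilon}_{\bf (A}{}^{\bf C}(\star q)_{\bf B)C} = \slashed{\epsilon}_{\bf (A}{}^{\bf C}\slashed{\epsilon}_{\bf B)}{}^{\bf D} q_{\bf CD}$ and apply the same completeness relation $\slashed{\epsilon}_{\bf A}{}^{\bf C}\slashed{\epsilon}_{\bf B}{}^{\bf D} = \delta_{\bf AB}\delta^{\bf CD} - \delta_{\bf A}{}^{\bf D}\delta_{\bf B}{}^{\bf C}$; the first term contracts $q$ to its trace, which vanishes since $q \in \mathscr{S}(\mathcal{D}^2)$ is trace-free, and the second term returns $-q_{\bf AB}$ after accounting for symmetry. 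The two ``In particular'' statements are then immediate: setting $q' = \star q$ in the anti self-adjointness identity and using $\star\star q = -q$ yields $(\star q)_{\bf AB}(\star q)^{\bf AB} = -(\star q)_{\bf AB}(\star\star q)^{\bf AB} = (\star q)_{\bf AB} q^{\bf AB}$, whereas setting $q' = q$ gives $(\star q)_{\bf AB} q^{\bf AB} = -q_{\bf AB}(\star q)^{\bf AB} = -(\star q)_{\bf AB} q^{\bf AB}$, forcing it to be zero, and substituting this vanishing back recovers $(\star q)_{\bf AB}(\star q)^{\bf AB} = q_{\bf AB} q^{\bf AB}$.

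The only genuine obstacle is bookkeeping rather than conceptual: one must be careful that the symmetrizations in the definition of $\star$ interact correctly with the contractions, and that the completeness relation is applied with the correct index placement and the correct sign for the positive-definite signature of $\mathcal{D}^2$. Since $\dim \mathcal{D}^2 = 2$, an alternative that bypasses all index gymnastics would be to pick the explicit basis and note that in $2$ dimensions $\star$ acts on the $2$-dimensional space $\mathscr{S}(\mathcal{D}^2)$ as a rotation by $\pi/2$ (in the orthonormal basis given by $q^{(1)}_{\bf AB} = \mathrm{diag}(1,-1)$ and $q^{(2)}_{\bf AB}$ the off-diagonal symmetric tensor), from which $\star\star = -\Id$ and anti self-adjointness follow by inspection; I would use this as a sanity check on the abstract computation.
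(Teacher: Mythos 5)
Your overall route is the same as the paper's: the paper's proof is literally the one line ``the identities come from \eqref{eq:identity_volume_form} and simple calculus'', and your expansion of that calculus is correct for the two main identities. The anti self-adjointness does indeed follow from the antisymmetry of $\slashed{\epsilon}_{\bf AB}$ together with the symmetry of $q$ and $q'$ (the completeness relation is not even needed there), and your computation of $\star\star=-\Id$ via $\slashed{\epsilon}_{\bf A}{}^{\bf C}\slashed{\epsilon}_{\bf B}{}^{\bf D}=\delta_{\bf AB}\delta^{\bf CD}-\delta_{\bf A}{}^{\bf D}\delta_{\bf B}{}^{\bf C}$ and trace-freeness of $q$ is the intended argument.

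However, your derivation of the first ``in particular'' identity is broken as written. Setting $q'=\star q$ in $(\star q)_{\bf AB}(q')^{\bf AB}=-q_{\bf AB}(\star q')^{\bf AB}$ gives
\begin{equation*}
(\star q)_{\bf AB}(\star q)^{\bf AB}=-q_{\bf AB}(\star\star q)^{\bf AB}=q_{\bf AB}q^{\bf AB}\,,
\end{equation*}
whereas you wrote the middle term as $-(\star q)_{\bf AB}(\star\star q)^{\bf AB}$, which is not an instance of the anti self-adjointness identity and evaluates to $(\star q)_{\bf AB}q^{\bf AB}$ rather than $q_{\bf AB}q^{\bf AB}$. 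Your subsequent step then compounds the error: combining $(\star q)_{\bf AB}(\star q)^{\bf AB}=(\star q)_{\bf AB}q^{\bf AB}$ with the (correctly proved) vanishing $(\star q)_{\bf AB}q^{\bf AB}=0$ would force $(\star q)_{\bf AB}(\star q)^{\bf AB}=0$, hence $q_{\bf AB}q^{\bf AB}=0$ for all $q$, which is false; it certainly does not ``recover'' $(\star q)_{\bf AB}(\star q)^{\bf AB}=q_{\bf AB}q^{\bf AB}$. The fix is simply to use the correct substitution displayed above, which needs no input from the orthogonality relation. The rest of the proposal, including the remark that $\star$ acts as a rotation by $\pi/2$ on the two-dimensional space $\mathscr{S}(\mathcal{D}^2)$, is a sound sanity check consistent with the basis $(\omega,\star\omega)$ introduced just after the lemma in the paper.
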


\begin{proof}
    The identities come from \eqref{eq:identity_volume_form} and simple calculus.
\end{proof}

\begin{lem}
    \label{lem:decomp_duality_dim_2}
    For any $q_{\bf AB} \in \mathscr{S}(\mathcal{D}^2)$ and any $(u,v) \in \mathcal{C}^\infty(\M, \R^2\setminus\{(0,0)\})$, there exists a unique $(q')_{\bf AB} \in \mathscr{S}(\mathcal{D}^2)$ such that
    \[ q_{\bf AB} = u \, (q')_{\bf AB} + v \, (\star q')_{\bf AB} \,. \]
    In other words, for all $(u,v) \in \mathcal{C}^\infty(\M, \R^2\setminus\{(0,0)\})$, $u\Id + v\star \in \mathscr{B}(\mathcal{D}^2)$ is an automorphism of $\mathscr{S}(\mathcal{D}^2)$.
\end{lem}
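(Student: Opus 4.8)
The plan is to exploit the relation $\star\star = -\Id$ established in \Cref{lem:identities_dual_2} in order to write down the inverse of $u\Id + v\star$ explicitly, which simultaneously delivers existence and uniqueness of $q'$. The guiding observation is that $\mathscr{S}(\mathcal{D}^2)$ is a rank-$2$ object: a symmetric trace-free $2$-tensor on a $2$-dimensional space has exactly two independent components. Since $\star$ squares to $-\Id$, it endows each fibre with a complex structure, under which $\mathscr{S}(\mathcal{D}^2)$ becomes a $1$-dimensional complex vector space and $u\Id + v\star$ acts as multiplication by the complex-valued function $u + iv$; this is invertible precisely where $u+iv\neq 0$, i.e. where $(u,v)\neq(0,0)$. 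I would present the purely algebraic version of this, as it is self-contained and immediately gives the inverse.

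Concretely, I would first compute, using that $\star$ is a fibrewise operator and hence commutes with multiplication by the functions $u,v$, together with $\star\star = -\Id$,
\[ (u\Id + v\star)(u\Id - v\star) = u^2\Id - v^2\,\star\star = (u^2 + v^2)\,\Id \,, \]
and symmetrically $(u\Id - v\star)(u\Id + v\star) = (u^2+v^2)\,\Id$. Because $(u,v)$ never vanishes simultaneously on $\M$, the smooth function $u^2 + v^2$ is strictly positive everywhere, hence nowhere zero and with smooth reciprocal. As $\star \in \mathscr{B}(\mathcal{D}^2)$, the combination
\[ (u\Id + v\star)^{-1} = \frac{1}{u^2+v^2}\left(u\Id - v\star\right) \]
is again a well-defined element of $\mathscr{B}(\mathcal{D}^2)$, which exhibits $u\Id + v\star$ as an automorphism of $\mathscr{S}(\mathcal{D}^2)$.

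For the stated reformulation, existence follows by setting $(q')_{\bf AB} := \frac{1}{u^2+v^2}\left(u\,q_{\bf AB} - v\,(\star q)_{\bf AB}\right)$, which lies in $\mathscr{S}(\mathcal{D}^2)$ because $u^2+v^2$ is nowhere zero and $\star q \in \mathscr{S}(\mathcal{D}^2)$; a direct check using the displayed product shows $u\,(q')_{\bf AB} + v\,(\star q')_{\bf AB} = q_{\bf AB}$. Uniqueness is immediate from invertibility: if two elements solve the equation then their difference lies in the kernel of $u\Id + v\star$, and applying the inverse forces it to vanish. I do not anticipate a genuine obstacle here, since the argument is elementary linear algebra; the only point requiring a little care is smoothness of the solution $q'$, which is exactly secured by the hypothesis $(u,v)\in\mathcal{C}^\infty(\M,\R^2\setminus\{(0,0)\})$ guaranteeing that $u^2+v^2$ stays bounded away from $0$ and admits a smooth reciprocal.
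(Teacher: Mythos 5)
Your proof is correct and follows essentially the same route as the paper: both derive the explicit inverse formula $(q')_{\bf AB} = \tfrac{1}{u^2+v^2}\bigl(u\,q_{\bf AB} - v\,(\star q)_{\bf AB}\bigr)$ from the relation $\star\star = -\Id$ of \Cref{lem:identities_dual_2} (the paper phrases this as "taking a linear combination of the decomposition and its dual", which is the same computation), and then verify it solves the equation. Your packaging of the argument as an explicit two-sided inverse of $u\Id + v\star$ is a minor presentational difference, not a different method.
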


\begin{proof}
     Let $q_{\bf AB} \in \mathscr{S}(\mathcal{D}^2)$ and $(u,v) \in \R^2\setminus\{(0,0)\}$. If $(q')_{\bf AB}$ exists then one deduces, by taking a linear combination of the decomposition and its dual, that it is given by
    \[ (q')_{\bf AB} = \frac{u}{u^2+v^2} \, q_{\bf AB} - \frac{v}{u^2+v^2} \, (\star q)_{\bf AB} \,. \]
    Furthermore, one can check that this choice of $(q')_{\bf AB}$ yields the decomposition.
\end{proof}

\begin{cor}
    \label{cor:duality_dim_2}
    Let $q_{\bf AB} \in \mathscr{S}(\mathcal{D}^2)$. The following statements are equivalent
    \begin{itemize}
        \item[i)] $q_{\bf AB} = 0$,
        \item[ii)] $\forall \, (u,v) \in \mathcal{C}^\infty(\M, \R^2)$, $u \, q_{\bf AB} + v \, (\star q)_{\bf AB} = 0$,
        \item[iii)] $\exists \, (u,v) \in \mathcal{C}^\infty(\M, \R^2\setminus\{(0,0)\})$, $u \, q_{\bf AB} + v \, (\star q)_{\bf AB} = 0$.
    \end{itemize}
\end{cor}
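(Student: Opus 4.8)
The plan is to prove the cyclic chain $i) \Rightarrow ii) \Rightarrow iii) \Rightarrow i)$, of which only the last implication carries any real content.

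First, $i) \Rightarrow ii)$ is immediate: if $q_{\bf AB} = 0$, then by linearity of the dual operator one also has $(\star q)_{\bf AB} = 0$, so the combination $u\,q_{\bf AB} + v\,(\star q)_{\bf AB}$ vanishes for every pair $(u,v)$. Next, $ii) \Rightarrow iii)$ follows simply by specialising the universally quantified statement in $ii)$ to any single nowhere-vanishing pair, for instance the constant pair $(u,v) = (1,0)$, which lies in $\mathcal{C}^\infty(\M, \R^2\setminus\{(0,0)\})$.

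The substantive step is $iii) \Rightarrow i)$. Here I would observe that the hypothesis reads exactly as $(u\,\Id + v\,\star)(q) = 0$. Since $(u,v)$ takes values in $\R^2\setminus\{(0,0)\}$, \Cref{lem:decomp_duality_dim_2} guarantees that $u\,\Id + v\,\star$ is an automorphism of $\mathscr{S}(\mathcal{D}^2)$, in particular injective, so its kernel is trivial and $q_{\bf AB} = 0$ follows at once. If one prefers a self-contained computation avoiding the automorphism language, I would apply $\star$ to the relation $u\,q_{\bf AB} + v\,(\star q)_{\bf AB} = 0$ and use $\star\star = -\Id$ from \Cref{lem:identities_dual_2} to obtain $u\,(\star q)_{\bf AB} - v\,q_{\bf AB} = 0$; eliminating $(\star q)_{\bf AB}$ between the two relations (take $u$ times the first minus $v$ times the second) yields $(u^2+v^2)\,q_{\bf AB} = 0$, and since $u^2+v^2 > 0$ pointwise under the nowhere-vanishing assumption, one concludes $q_{\bf AB} = 0$.

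There is no genuine obstacle here: the entire difficulty has already been absorbed into the non-degeneracy statement of \Cref{lem:decomp_duality_dim_2}, equivalently into the relation $\star\star = -\Id$. The only point requiring a moment of care is the pointwise nature of the argument — the conclusion must be drawn fibrewise at each $p \in \M$, using that $(u(p),v(p)) \neq (0,0)$ there — but this is automatic from the definition of the function space $\mathcal{C}^\infty(\M, \R^2\setminus\{(0,0)\})$.
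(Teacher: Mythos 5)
Your proposal is correct and follows the paper's own argument exactly: the first two implications are trivial, and $iii)\Rightarrow i)$ is deduced from the invertibility of $u\,\Id + v\,\star$ established in \Cref{lem:decomp_duality_dim_2}. The additional self-contained computation via $\star\star=-\Id$ is a fine (and equivalent) alternative, but nothing beyond what the paper already does.
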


\begin{proof}
    The implications $i)\implies ii)$ and $ii)\implies iii)$ are trivial. The implication $iii) \implies i)$ follows from \Cref{lem:decomp_duality_dim_2}.
\end{proof}

\noindent Let us describe more precisely the spaces $\mathscr{S}(\mathcal{D}^2)$ and $\mathscr{B}(\mathcal{D}^2)$. Define $\omega \in \mathscr{S}(\mathcal{D}^2)$ by
\[ \omega_{\bf 11} = - \omega_{\bf 22} = 0 \,, \qquad \omega_{\bf 12} = \omega_{\bf 21} = \frac{1}{\sqrt{2}} \,. \]
Since $(e_{\bf 1},e_{\bf 2})$ is an orthonormal basis of $\mathcal{D}^2$, one deduces that $(\omega,\star\omega)$ is an orthonormal basis of $\mathscr{S}(\mathcal{D}^2)$. One has
\[ \Mat_{(e_{\bf 1},e_{\bf 2})} (\omega) = \frac{1}{\sqrt{2}} \begin{pmatrix}
    0 & 1 \\
    1 & 0
\end{pmatrix} \,, \qquad \Mat_{(e_{\bf 1},e_{\bf 2})} (\star\omega) = \frac{1}{\sqrt{2}} \begin{pmatrix}
    1 & 0 \\
    0 & -1
\end{pmatrix} \,. \]
Now let $\sigma \in \mathscr{B}(\mathcal{D}^2)$ be the orthogonal reflection symmetry with respect to the hyperplane $\vectorspan(\omega)$. Then
\begin{alignat*}{3}
    \Mat_{(\omega,\star\omega)}(\Id) &= \begin{pmatrix}
    1 & 0 \\
    0 & 1
    \end{pmatrix} \,, &\qquad
    \Mat_{(\omega,\star\omega)}(\star) &= \begin{pmatrix}
    0 & -1 \\
    1 & 0
    \end{pmatrix} \,, \\
    \Mat_{(\omega,\star\omega)}(\sigma) &= \begin{pmatrix}
    1 & 0 \\
    0 & -1
    \end{pmatrix} \,, &\qquad
    \Mat_{(\omega,\star\omega)}(\star\sigma) &= \begin{pmatrix}
    0 & 1 \\
    1 & 0
    \end{pmatrix} \,.
\end{alignat*}
and $(\Id/\sqrt{2},\star/\sqrt{2},\sigma/\sqrt{2},(\star\sigma)/\sqrt{2})$ is an orthonormal basis of $\mathscr{B}(\mathcal{D}^2)$. Moreover, one has $\{\star,\sigma\} := \star\sigma+\sigma\star = 0$. The following lemma is a generalisation of \Cref{lem:decomp_duality_dim_2}.

\begin{lem}
    \label{lem:invertibility_endo}
    An endomorphism $L = \alpha \Id + \beta \star + \gamma \sigma + \delta (\star\sigma) \in \mathscr{B}(\mathcal{D}^2)$ is everywhere invertible if and only if $\alpha^2+\beta^2-\delta^2-\gamma^2$ nowhere vanishes. If $\alpha^2+\beta^2=\gamma^2+\delta^2$ and $(\alpha,\beta,\gamma,\delta)\neq(0,0,0,0)$ at a point $p \in \mathcal{M}$ then $\ker L = \vectorspan\left((\alpha-\gamma)\omega-(\beta+\delta)(\star\omega)\right)$ and $\im L = \vectorspan\left((\alpha+\gamma)\omega+(\beta+\delta)(\star\omega)\right)$ at $p$.
\end{lem}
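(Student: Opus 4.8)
The plan is to turn the whole statement into an elementary $2\times2$ matrix computation, exploiting the fact, established just above, that $(\omega,\star\omega)$ is an orthonormal basis of the two-dimensional space $\mathscr{S}(\mathcal{D}^2)$ and that the matrices of $\Id$, $\star$, $\sigma$ and $\star\sigma$ in this basis are already known explicitly. First I would assemble, by linearity, the matrix of the endomorphism $L = \alpha\Id + \beta\star + \gamma\sigma + \delta(\star\sigma)$ from those four displayed matrices, obtaining
\[ \Mat_{(\omega,\star\omega)}(L) = \begin{pmatrix} \alpha+\gamma & \delta-\beta \\ \beta+\delta & \alpha-\gamma \end{pmatrix} \,. \]
Since $L$ is a linear endomorphism of a finite-dimensional vector space, it is invertible at a point if and only if its determinant is nonzero there, and the stated notion of being \emph{everywhere invertible} simply means invertible at every point of $\M$.

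The second step is to compute this determinant:
\[ \det \Mat_{(\omega,\star\omega)}(L) = (\alpha+\gamma)(\alpha-\gamma) - (\delta-\beta)(\delta+\beta) = \alpha^2 + \beta^2 - \gamma^2 - \delta^2 \,. \]
This immediately yields the first assertion: $L$ is everywhere invertible if and only if the function $\alpha^2+\beta^2-\gamma^2-\delta^2$ nowhere vanishes on $\M$.

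For the second assertion, I would fix a point $p$ at which $\alpha^2+\beta^2 = \gamma^2+\delta^2$ (so the determinant vanishes) and $(\alpha,\beta,\gamma,\delta) \neq (0,0,0,0)$. Because the four basis endomorphisms are orthonormal, the latter condition forces $L \neq 0$ at $p$, so $L$ has rank exactly one and both $\ker L$ and $\im L$ are lines. The image is spanned by any nonzero column of the matrix, and its first column is precisely the coordinate vector of $(\alpha+\gamma)\omega + (\beta+\delta)(\star\omega)$, which therefore spans $\im L$. For the kernel, I would substitute $v = (\alpha-\gamma)\omega - (\beta+\delta)(\star\omega)$ and check $Lv = 0$ by direct multiplication: the second component vanishes identically as $(\beta+\delta)(\alpha-\gamma) - (\alpha-\gamma)(\beta+\delta)$, while the first component equals $\alpha^2+\beta^2-\gamma^2-\delta^2$, which is zero by the degeneracy hypothesis; hence $v$ spans $\ker L$.

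The only delicate point—and the one I would flag as the main obstacle—is that the explicit spanning vectors can themselves degenerate: the proposed kernel generator vanishes exactly when $\alpha=\gamma$ and $\beta=-\delta$, and the image generator when $\alpha=-\gamma$ and $\beta=-\delta$. In such a subcase the stated span collapses to $\{0\}$ although the corresponding subspace is still a line, so the two formulas are to be read as giving valid generators precisely where they are nonzero, which is all that is needed for the application in \Cref{sec:tensorial_bc_distribution}. In the exceptional configurations one instead reads the image off the second, nonzero column of the matrix and the kernel off the scalar relation $Lv=0$ written in the $(\omega,\star\omega)$-coordinates; this bookkeeping is routine once the matrix form is in hand.
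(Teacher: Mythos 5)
Your proof is correct and takes essentially the same route as the paper's: the paper establishes invertibility by exhibiting the explicit inverse $L^{-1} = \bigl(\alpha \Id - \beta \star - \gamma \sigma - \delta (\star\sigma)\bigr)/(\alpha^2+\beta^2-\gamma^2-\delta^2)$ rather than by computing a determinant, but for the kernel and image it likewise defers to the matrix representation in the basis $(\omega,\star\omega)$, which is exactly your calculation. Your observation that the stated generators can themselves vanish (e.g.\ the kernel generator when $\alpha=\gamma$ and $\beta=-\delta$) is a legitimate caveat about the lemma's statement that the paper passes over silently, and your resolution of it is correct.
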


\begin{proof}
    If $\alpha^2+\beta^2-\delta^2-\gamma^2\neq 0$ at a point $p \in \M$, then the inverse of $L$ at $p$ is given by
    \[ L^{-1} = \frac{\alpha \Id - \beta \star - \gamma \sigma - \delta (\star\sigma)}{\alpha^2+\beta^2-\delta^2-\gamma^2} \,. \]
    If not and if $L\neq0$ at $p$, the expressions of the kernel and the image can be easily deduced using the matrix representation in the basis $(\omega,\star\omega)$.
\end{proof}

\subsection{Non-null hypersurfaces of Lorentzian conformal structures}

We detail here structure inheritance from a Lorentzian conformal structure on a non-null hypersurface. In particular, the induced conformal structure and induced Weyl connections in \Cref{sec:induced_conformal_structure} and extrinsic curvatures in \Cref{sec:extrinsic_curvature}. The links between curvature tensors are given by the Codazzi equations derived for Weyl connections in \Cref{sec:codazzi_weyl}. All decompositions, including the decomposition of Weyl candidates for all dimensions established in \Cref{sec:decomposition_weyl_candidate}, use frame fields adapted to the hypersurface as defined in \Cref{sec:adapted_frame_fields}.

Let $(\mathcal{M},[\widetilde{g}])$ be a smooth Lorentzian conformal structure of dimension $N=n+1$. Consider a non-null hypersurface $\mathcal{S} \subset \mathcal{M}$. The tangential part of a vector field or covector field with respect to $\mathcal{S}$ is denoted by $\Tan$.

\subsubsection{Adapted frame fields}
\label{sec:adapted_frame_fields}

\begin{defi}
    An adapted frame field to $\mathcal{S}$ with respect to a representative $\widetilde{g} \in [\widetilde{g}]$ is a frame field $(e_{\bf a}) = (e_\perp,(e_{\bf i}))$ on $\mathcal{M}$ such that the restriction of $e_\perp$ on $\mathcal{S}$ is a normal unit vector field to $\mathcal{S}$ with respect to $\widetilde{g}$ and the restriction of the vector fields $(e_{\bf i})$ on $\mathcal{S}$ are tangent to $\mathcal{S}$. The dual coframe field is denoted by $(\omega^{\bf a}) = (\omega^\perp,(\omega^{\bf i}))$. 
\end{defi}

\begin{rems} \, 
    \begin{itemize}
        \item The family $(e_{\bf i})$ is not necessarily orthogonal.
        \item Let $\varepsilon := \widetilde{g}(e_\perp,e_\perp) \in \{\pm1\}$ be the pseudo-norm of $e_\perp$. The sign depends on whether $\mathcal{S}$ is timelike or spacelike. One has $\omega^\perp = \varepsilon \widetilde{g}(e_\perp,.)$.
        \item If $g = \Omega^2 \widetilde{g} \in [\widetilde{g}]$ is another representative then $(\Omega^{-1} e_\perp,(e_{\bf i}))$ is an adapted frame with respect to $g$. \qedhere
    \end{itemize}
\end{rems}

\noindent In what follows, the tensorial indices $\perp$ and ${\bf i}, {\bf j}, {\bf k} \dots$ will refer to the decomposition in an adapted frame field to $\mathcal{S}$ with respect to a representative $\widetilde{g} \in [\widetilde{g}]$ and its dual coframe field. For example,
\begin{align*}
    \widetilde{g}_{\perp\perp} &:= \widetilde{g}(e_\perp,e_\perp) = \varepsilon \,, \\
    \widetilde{g}_{\perp \bf i} &:= \widetilde{g}(e_\perp, e_{\bf i}) = 0 \,. 
\end{align*}

\subsubsection{The induced conformal structure and Weyl connection}
\label{sec:induced_conformal_structure}

Let $\widetilde{g}$, $g$ be two representatives of $[\widetilde{g}]$ and $\widetilde{h}$, $h$ be the corresponding induced metrics on $\mathcal{S}$. Since $g = \Omega^2 \widetilde{g}$ for some $\Omega \in \mathcal{C}^\infty(\M,\R_+^\star)$, one has $h = \Psi^2 \widetilde{h}$ with $\Psi := \Omega|_\mathcal{S} \in \mathcal{C}^\infty(\mathcal{S},\R_+^\star)$. Consequently, the conformal structure $(\mathcal{M},[\widetilde{g}])$ induces the conformal structure $(\mathcal{S},[\widetilde{h}])$ on $\mathcal{S}$.

Recall that a Weyl connection $\widehat{\nabla}$ on $(\M,[\widetilde{g}])$ induces a connection $\widehat{D}$ on $\mathcal{S}$ through
\[ \widehat{D}_X Y := \Tan \widehat{\nabla}_V W \,, \]
for all smooth vector fields $X,Y$ on $\mathcal{S}$ and where $V,W$ are smooth vector fields on $\mathcal{M}$ such that $V|_\mathcal{S} = X$ and $W|_\mathcal{S} = Y$.

\begin{lem}
    \label{lem:induced_weyl_connection}
    Let $\widehat{\nabla}$ be a Weyl connection for $(\M,[\widetilde{g}])$ and $\widehat{D}$ be its induced connection on $\mathcal{S}$. Then $\widehat{D}$ is a Weyl connection for the induced conformal structure $(\mathcal{S},[\widetilde{h}])$. More precisely, if $\widehat{\kappa}_{\widetilde{g}}$ is the covector field associated to $\widehat{\nabla}$ with respect to a representative $\widetilde{g} \in [\widetilde{g}]$ then $\widehat{D}$ is the Weyl connection associated to $\Tan \widehat{\kappa}_{\widetilde{g}}$ with respect to $\widetilde{h}$, the induced metric of $\widetilde{g}$ on $\mathcal{S}$.
\end{lem}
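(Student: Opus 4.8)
The plan is to verify directly the two defining properties of a Weyl connection (torsion-freeness and the compatibility relation~\eqref{weyl_conn}) for the induced connection $\widehat{D}$, exploiting that $\mathcal{S}$ is non-null so that $T\mathcal{M}|_{\mathcal{S}}$ splits $\widetilde{g}$-orthogonally into its tangent and normal parts. Throughout I would work with tangent vector fields $X,Y,Z$ on $\mathcal{S}$, extended arbitrarily to vector fields $V,W$ on $\M$ with $V|_{\mathcal{S}}=X$, $W|_{\mathcal{S}}=Y$; well-definedness of $\widehat{D}$ (independence of the chosen extensions) is already granted by the construction preceding the statement.

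For the torsion, I would compute $\widehat{D}_X Y - \widehat{D}_Y X - [X,Y] = \Tan\!\big(\widehat{\nabla}_V W - \widehat{\nabla}_W V\big) - [X,Y]$. Since $\widehat{\nabla}$ is torsion-free by~\eqref{eq_torsion_free}, one has $\widehat{\nabla}_V W - \widehat{\nabla}_W V = [V,W]$, and because the bracket of two fields tangent to $\mathcal{S}$ is again tangent, $\Tan[V,W]|_{\mathcal{S}} = [X,Y]$. Hence the torsion of $\widehat{D}$ vanishes.

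The core of the argument is the compatibility relation. By the remark following the definition of Weyl connection, it suffices to verify~\eqref{weyl_conn} for the single representative $\widetilde{h}$. I would expand $(\widehat{D}_X \widetilde{h})(Y,Z) = X\cdot\widetilde{h}(Y,Z) - \widetilde{h}(\widehat{D}_X Y, Z) - \widetilde{h}(Y, \widehat{D}_X Z)$ and reduce each term to the ambient metric: since $Y,Z$ are tangent, $\widetilde{h}(Y,Z) = \widetilde{g}(V,W)$ on $\mathcal{S}$, and as $X$ is tangent the derivative of the restriction agrees with $X\cdot\widetilde{g}(V,W)$; for the connection terms, $\widetilde{h}(\widehat{D}_X Y, Z) = \widetilde{g}(\Tan\widehat{\nabla}_X Y, Z) = \widetilde{g}(\widehat{\nabla}_X Y, Z)$, the last equality being precisely where non-nullness enters, since the normal component of $\widehat{\nabla}_X Y$ is $\widetilde{g}$-orthogonal to the tangent vector $Z$. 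Collecting the terms yields $(\widehat{D}_X \widetilde{h})(Y,Z) = (\widehat{\nabla}_X \widetilde{g})(Y,Z)$, and applying~\eqref{weyl_conn} for $\widehat{\nabla}$ together with $\langle\widehat{\kappa}_{\widetilde{g}},X\rangle = \langle\Tan\widehat{\kappa}_{\widetilde{g}},X\rangle$ (again $X$ tangent) gives $(\widehat{D}_X\widetilde{h})(Y,Z) = -2\,\langle\Tan\widehat{\kappa}_{\widetilde{g}},X\rangle\,\widetilde{h}(Y,Z)$. This is exactly~\eqref{weyl_conn} for $\widehat{D}$ on $(\mathcal{S},[\widetilde{h}])$ with associated covector $\Tan\widehat{\kappa}_{\widetilde{g}}$, establishing both the Weyl-connection property and the precise identification claimed.

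The computation is routine, so the only real point of vigilance is the non-null hypothesis: the step $\widetilde{g}(\Tan\widehat{\nabla}_X Y, Z) = \widetilde{g}(\widehat{\nabla}_X Y, Z)$ relies on the orthogonal splitting $T\mathcal{M}|_{\mathcal{S}} = T\mathcal{S}\oplus (T\mathcal{S})^{\perp}$, valid precisely because $\mathcal{S}$ is non-null; for a null hypersurface the normal direction would itself be tangent and this identity would fail. I expect no further obstacle.
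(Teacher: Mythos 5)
Your proof is correct and follows essentially the same route as the paper's: both reduce $\widehat{D}\widetilde{h}$ to the tangential restriction of $\widehat{\nabla}\widetilde{g}$, with the normal contribution killed by orthogonality (your step $\widetilde{g}(\Tan\widehat{\nabla}_X Y,Z)=\widetilde{g}(\widehat{\nabla}_X Y,Z)$ is exactly the paper's vanishing of $\widehat{\Gamma}_{\bf i}{}^{\perp}{}_{\bf j}\,\widetilde{g}_{\perp\bf k}$ in an adapted frame). The only differences are cosmetic: you work invariantly rather than in an adapted frame field, and you spell out the torsion-freeness inheritance that the paper merely asserts.
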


\begin{proof}
    Since $\widehat{\nabla}$ is torsion-free, so is $\widehat{D}$. Let $\widetilde{g} \in [\widetilde{g}]$ be a representative, $\widetilde{h}$ be the induced metric on $\mathcal{S}$ by $\widetilde{g}$ and $\widehat{\kappa}_{\widetilde{g}}$ be the covector field associated to $\widehat{\nabla}$ with respect to $\widetilde{g}$. In an adapted frame field $(e_\perp,(e_{\bf i}))$ to $\mathcal{S}$ with respect to $\widetilde{g}$, one has
    \[ \widehat{D}_{\bf i} \widetilde{h}_{\bf jk} = \widehat{D}_{\bf i} \widetilde{g}_{\bf jk} = \widehat{\nabla}_{\bf i} \widetilde{g}_{\bf jk} + \widehat{\Gamma}_{\bf i}{}^{\bf \perp} {}_{\bf j} \widetilde{g}_{\bf \perp k} + \widehat{\Gamma}_{\bf i}{}^{\bf \perp} {}_{\bf k} \widetilde{g}_{\bf j\perp} = \widehat{\nabla}_{\bf i} \widetilde{g}_{\bf jk} = - 2(\widehat{\kappa}_{\widetilde{g}})_{\bf i} \widetilde{g}_{\bf jk} = - 2(\widehat{\kappa}_{\widetilde{g}})_{\bf i} \widetilde{h}_{\bf jk} \,. \qedhere \]
\end{proof}

\subsubsection{Extrinsic curvature of a Weyl connection}
\label{sec:extrinsic_curvature}

\begin{defi}
    \label{def:extrinsic_curvature}
    The \emph{extrinsic curvature} of a Weyl connection $\widehat{\nabla}$ on $\mathcal{S}$ with respect to a representative $\widetilde{g} \in [\widetilde{g}]$ is defined by
    \[ \widehat{K}_{\widetilde{g}}(X,Y) := \langle \omega^\perp, \widehat{\nabla}_V W \rangle \]
    for all smooth vector fields $X,Y$ on $\mathcal{S}$ and where $V,W$ are smooth vector fields on $\mathcal{M}$ such that $V|_\mathcal{S} = X$, $W|_\mathcal{S} = Y$, $\omega^\perp$ is a unit normal covector field to $\mathcal{S}$ with respect to $\widetilde{g}$. The extrinsic curvature is a symmetric 2-tensor field on $\mathcal{S}$.
\end{defi}

\begin{rems} \,
    \begin{itemize}
        \item If $g=\Omega^2\widetilde{g} \in [\widetilde{g}]$ is another representative then $(\widehat{K}_g)_{ij} = \Omega (\widehat{K}_{\widetilde{g}})_{ij}$.
        \item If $\widecheck{\nabla}$ is another Weyl connection then
        \begin{equation}
            \label{eq:extrinsic_curvature_change_Weyl_conn}
            (\widecheck{K}_{\widetilde{g}})_{ij} = (\widehat{K}_{\widetilde{g}})_{ij} - \varepsilon(\widecheck{\kappa}_{\widetilde{g}}-\widehat{\kappa}_{\widetilde{g}})_\perp \, \widetilde{h}_{ij} \,,
        \end{equation}
        where $\widetilde{h}$ is the induced metric by $\widetilde{g}$ on $\mathcal{S}$ and $\widecheck{\kappa}_{\widetilde{g}}$, $\widehat{\kappa}_{\widetilde{g}}$ are the covector fields associated to $\widecheck{\nabla}$ and $\widehat{\nabla}$ with respect to ${\widetilde{g}}$.
        \item By combining the two precedent remarks, if $\widetilde{g}$ and $g = \Omega^2 \widetilde{g}$ are two representatives then the extrinsic curvatures\footnote{In this particular instance, the metric is omitted in the notation of the extrinsic curvature for simplicity.} $K_{ij}$, $\widetilde{K}_{ij}$ of their Levi-Civita connections $\nabla$, $\widetilde{\nabla}$ with respect to respectively $g$, $\widetilde{g}$ are linked by 
        \begin{equation}
            \label{eq:aux_ec}
            K_{ij} = \Omega \widetilde{K}_{ij} - \varepsilon 
            \langle d\ln\Omega, e_\perp\rangle \, h_{ij} \,,
        \end{equation}
        where $e_\perp$ is unitary for the metric $g$. The respective traces $K$ and $\widetilde{K}$ are thus linked by
        \begin{equation}
            K = \Omega^{-1} \widetilde{K} - \varepsilon n \langle d\ln\Omega, e_\perp\rangle \,.
        \end{equation}
        \item The indices of $\widehat{K}_{\widetilde{g}}$ are raised or lowered with respect to $\widetilde{g}$. For instance,
        \[ (\widehat{K}_{\widetilde{g}})_i{}^j := \widetilde{g}^{jk} (\widehat{K}_{\widetilde{g}})_{ik} \,. \qedhere \]
    \end{itemize}
\end{rems}

\begin{lem}
    \label{lem:connection_coeff}
    In a frame field $(e_\perp,(e_{\bf i}))$ adapted to $\mathcal{S}$ with respect to a representative $\widetilde{g} \in [\widetilde{g}]$, one has on $\mathcal{S}$
    \begin{align*}
        \widehat{\Gamma}_{\bf i}{}^\perp{}_{\bf j} &= (\widehat{K}_{\widetilde{g}})_{\bf ij} \,, \\
        \widehat{\Gamma}_{\bf i}{}^{\bf j}{}_\perp &= - \varepsilon (\widehat{K}_{\widetilde{g}})_{\bf i}{}^{\bf j} \,, \\
        \widehat{\Gamma}_{\bf i}{}^\perp{}_\perp &= (\widehat{\kappa}_{\widetilde{g}})_{\bf i} \,,
    \end{align*}
    where $\widehat{\kappa}_{\widetilde{g}}$ is the covector field associated to the Weyl connection $\widehat{\nabla}$ with respect to $\widetilde{g}$ and $\varepsilon := \widetilde{g}(e_\perp,e_\perp) \in \{\pm1\}$.
\end{lem}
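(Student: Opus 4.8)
The plan is to read off each coefficient directly from its frame definition $\widehat{\Gamma}_{\bf a}{}^{\bf c}{}_{\bf b} = \langle \omega^{\bf c}, \widehat{\nabla}_{e_{\bf a}} e_{\bf b}\rangle$, to work entirely on $\mathcal{S}$, and to exploit three ingredients: the defining non-metricity relation \eqref{weyl_conn} of the Weyl connection, the identity $\omega^\perp = \varepsilon \, \widetilde{g}(e_\perp, \cdot)$ from the remarks on adapted frame fields, and the tangency of the $e_{\bf i}$ to $\mathcal{S}$ (so that any function constant on $\mathcal{S}$ has vanishing $e_{\bf i}$-derivative there). The first identity is then immediate: by definition $\widehat{\Gamma}_{\bf i}{}^\perp{}_{\bf j} = \langle \omega^\perp, \widehat{\nabla}_{e_{\bf i}} e_{\bf j}\rangle$, which is exactly $\widehat{K}_{\widetilde{g}}(e_{\bf i}, e_{\bf j}) = (\widehat{K}_{\widetilde{g}})_{\bf ij}$ by \Cref{def:extrinsic_curvature}.

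For the third identity, I would write $\widehat{\Gamma}_{\bf i}{}^\perp{}_\perp = \langle \omega^\perp, \widehat{\nabla}_{e_{\bf i}} e_\perp\rangle = \varepsilon\, \widetilde{g}(e_\perp, \widehat{\nabla}_{e_{\bf i}} e_\perp)$, and differentiate $\widetilde{g}(e_\perp, e_\perp)$ along $e_{\bf i}$. Expanding with \eqref{weyl_conn} gives $e_{\bf i}\cdot \widetilde{g}(e_\perp, e_\perp) = -2(\widehat{\kappa}_{\widetilde{g}})_{\bf i}\, \widetilde{g}(e_\perp, e_\perp) + 2\widetilde{g}(\widehat{\nabla}_{e_{\bf i}} e_\perp, e_\perp)$. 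On $\mathcal{S}$ the function $\widetilde{g}(e_\perp, e_\perp) = \varepsilon$ is constant and $e_{\bf i}$ is tangent, so the left-hand side vanishes; hence $\widetilde{g}(e_\perp, \widehat{\nabla}_{e_{\bf i}} e_\perp) = \varepsilon(\widehat{\kappa}_{\widetilde{g}})_{\bf i}$, and therefore $\widehat{\Gamma}_{\bf i}{}^\perp{}_\perp = \varepsilon^2 (\widehat{\kappa}_{\widetilde{g}})_{\bf i} = (\widehat{\kappa}_{\widetilde{g}})_{\bf i}$.

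For the second identity, I would play the same game on the function $\widetilde{g}(e_\perp, e_{\bf k})$, which vanishes on $\mathcal{S}$. Differentiating along the tangent direction $e_{\bf i}$ and using \eqref{weyl_conn} (the $(\widehat{\kappa}_{\widetilde{g}})$-term drops because $\widetilde{g}(e_\perp, e_{\bf k}) = 0$) yields $0 = \widetilde{g}(\widehat{\nabla}_{e_{\bf i}} e_\perp, e_{\bf k}) + \widetilde{g}(e_\perp, \widehat{\nabla}_{e_{\bf i}} e_{\bf k})$ on $\mathcal{S}$. The last term is $\varepsilon\langle \omega^\perp, \widehat{\nabla}_{e_{\bf i}} e_{\bf k}\rangle = \varepsilon (\widehat{K}_{\widetilde{g}})_{\bf ik}$, so $\widetilde{g}(\widehat{\nabla}_{e_{\bf i}} e_\perp, e_{\bf k}) = -\varepsilon (\widehat{K}_{\widetilde{g}})_{\bf ik}$. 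Expanding $\widehat{\nabla}_{e_{\bf i}} e_\perp = \widehat{\Gamma}_{\bf i}{}^\perp{}_\perp e_\perp + \widehat{\Gamma}_{\bf i}{}^{\bf j}{}_\perp e_{\bf j}$ and pairing with $e_{\bf k}$ leaves only $\widehat{\Gamma}_{\bf i}{}^{\bf j}{}_\perp \widetilde{h}_{\bf jk}$, since $\widetilde{g}(e_\perp, e_{\bf k}) = 0$ and $\widetilde{g}(e_{\bf j}, e_{\bf k}) = \widetilde{h}_{\bf jk}$ on $\mathcal{S}$. Finally, the block-diagonal form of the metric in an adapted frame ($\widetilde{g}_{\perp\perp} = \varepsilon$, $\widetilde{g}_{\perp\bf i} = 0$) makes raising a tangential index with $\widetilde{g}$ coincide with raising it with $\widetilde{h}$, so inverting $\widetilde{h}_{\bf jk}$ produces $\widehat{\Gamma}_{\bf i}{}^{\bf j}{}_\perp = -\varepsilon (\widehat{K}_{\widetilde{g}})_{\bf i}{}^{\bf j}$.

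The computations are routine, so the step I would flag as the only genuine pitfall is bookkeeping rather than analysis: all differentiations are taken along directions $e_{\bf i}$ tangent to $\mathcal{S}$ and evaluated \emph{on} $\mathcal{S}$, so one may use the constraints $\widetilde{g}(e_\perp, e_\perp) = \varepsilon$ and $\widetilde{g}(e_\perp, e_{\bf i}) = 0$ only where they hold (on $\mathcal{S}$) and not off it; correspondingly, the $(\widehat{\kappa}_{\widetilde{g}})$-contributions arise precisely from the non-metricity \eqref{weyl_conn} of the Weyl connection, which is what distinguishes these formulae from their Levi-Civita counterparts.
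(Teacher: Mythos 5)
Your proof is correct and follows essentially the same route as the paper: read the first coefficient off the definition of the extrinsic curvature, then apply the non-metricity relation \eqref{weyl_conn} to the tangential derivatives of $\widetilde{g}(e_\perp,e_\perp)$ and $\widetilde{g}(e_\perp,e_{\bf k})$, which vanish on $\mathcal{S}$ by tangency of the $e_{\bf i}$ and the adapted-frame constraints. The bookkeeping point you flag (using $\widetilde{g}_{\perp\perp}=\varepsilon$ and $\widetilde{g}_{\perp\bf k}=0$ only on $\mathcal{S}$, where the derivatives are taken) is exactly the care the paper's computation implicitly takes.
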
 

\begin{proof}
    The first equation follows straightforwardly from the definition of the extrinsic curvature. Moreover,
    \[ \widetilde{g}_{\bf jk} \widehat{\Gamma}_{\bf i}{}^{\bf k}{}_\perp = \widetilde{g}(e_{\bf k},\widehat{\nabla}_{e_{\bf i}} e_\perp) =  e_{\bf i}(\widetilde{g}_{\bf k\perp}) + 2 (\widehat{\kappa}_{\widetilde{g}})_{\bf i} \widetilde{g}_{\bf k\perp} -\widetilde{g}(\widehat{\nabla}_{e_{\bf i}} e_{\bf k},e_\perp) = -\varepsilon \langle \omega^\perp,\widehat{\nabla}_{e_{\bf i}} e_{\bf k}) = -\varepsilon (\widehat{K}_{\widetilde{g}})_{\bf ik} \,, \]
    and
    \[ \widehat{\Gamma}_{\bf i}{}^\perp{}_\perp = \langle \omega^\perp, \widehat{\nabla}_{e_{\bf i}} e_\perp \rangle = \varepsilon \widetilde{g}(e_\perp,\widehat{\nabla}_{e_{\bf i}} e_\perp) = \varepsilon \left(\frac{e_{\bf i}(\widetilde{g}_{\perp\perp})}{2}  + (\widehat{\kappa}_{\widetilde{g}})_{\bf i} \widetilde{g}_{\perp\perp} \right) = (\widehat{\kappa}_{\widetilde{g}})_{\bf i} \,. \qedhere \]
\end{proof}

\begin{defi}
    \label{def:tllygeod_umbilical}
    The hypersurface $\mathcal{S}$ is said to
    \begin{itemize}
        \item be \emph{totally geodesic for a Weyl connection} $\widehat{\nabla}$ if the extrinsic curvature $\widehat{K}_g$ with respect to any (and thus all) representative $g \in [\widetilde{g}]$ vanishes identically,
        \item be \emph{umbilical} if the extrinsic curvature $\widehat{K}_g$ of any (and thus all) Weyl connection $\widehat{\nabla}$ with respect to any (and thus all) representative $g \in [\widetilde{g}]$ is pure trace,
        \item have \emph{zero mean curvature for a Weyl connection} $\widehat{\nabla}$ if the extrinsic curvature $\widehat{K}_g$ with respect to any (and thus all) representative $g \in [\widetilde{g}]$ is trace-free.
    \end{itemize}
\end{defi}

\begin{lem} \,
    \label{lem:link_umbilical_totally_geodesic}
    \begin{itemize}
        \item[i)] There exists a Weyl connection $\widehat{\nabla}$ such that $\mathcal{S}$ has zero mean curvature for $\widehat{\nabla}$.
        \item[ii)] $\mathcal{S}$ is umbilical if and only if there exists a Weyl connection $\widehat{\nabla}$ such that $\mathcal{S}$ is totally geodesic for $\widehat{\nabla}$.
    \end{itemize}
\end{lem}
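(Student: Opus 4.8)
The plan is to reduce everything to the single transformation rule \eqref{eq:extrinsic_curvature_change_Weyl_conn} together with the gauge freedom provided by \Cref{prop:exist&uniq_weylconn}. The key observation is that changing the Weyl connection from $\widehat{\nabla}$ to $\widecheck{\nabla}$ modifies the extrinsic curvature only by the \emph{pure-trace} term $-\varepsilon(\widecheck{\kappa}_{\widetilde{g}}-\widehat{\kappa}_{\widetilde{g}})_\perp\,\widetilde{h}_{ij}$, and that the normal component $(\widecheck{\kappa}_{\widetilde{g}}-\widehat{\kappa}_{\widetilde{g}})_\perp$ can be prescribed freely on $\mathcal{S}$. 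Throughout I would fix a representative $\widetilde{g}\in[\widetilde{g}]$ (legitimate, since the three notions of \Cref{def:tllygeod_umbilical} are representative-independent) and write the trace as $\widetilde{h}^{ij}(\widehat{K}_{\widetilde{g}})_{ij}$.

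For part $i)$ I would start from an arbitrary Weyl connection $\widehat{\nabla}$ and take the $\widetilde{h}$-trace of \eqref{eq:extrinsic_curvature_change_Weyl_conn}, using $\widetilde{h}^{ij}\widetilde{h}_{ij}=n$, to get
\[ \widetilde{h}^{ij}(\widecheck{K}_{\widetilde{g}})_{ij} = \widetilde{h}^{ij}(\widehat{K}_{\widetilde{g}})_{ij} - \varepsilon\,n\,(\widecheck{\kappa}_{\widetilde{g}}-\widehat{\kappa}_{\widetilde{g}})_\perp \,. \]
Choosing the normal component $(\widecheck{\kappa}_{\widetilde{g}}-\widehat{\kappa}_{\widetilde{g}})_\perp = \tfrac{\varepsilon}{n}\,\widetilde{h}^{ij}(\widehat{K}_{\widetilde{g}})_{ij}$ on $\mathcal{S}$ (recall $\varepsilon^2=1$) makes the left-hand side vanish. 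To realise this as an actual Weyl connection on $\M$, I would extend the scalar $\tfrac{\varepsilon}{n}\widetilde{h}^{ij}(\widehat{K}_{\widetilde{g}})_{ij}$ smoothly off $\mathcal{S}$, set $\lambda := \big(\text{that extension}\big)\,\omega^\perp$ so that $\langle\lambda,e_\perp\rangle$ has the desired value on $\mathcal{S}$, and invoke \Cref{prop:exist&uniq_weylconn} to obtain the unique Weyl connection $\widecheck{\nabla}$ with associated covector $\widehat{\kappa}_{\widetilde{g}}+\lambda$. Then $\mathcal{S}$ has zero mean curvature for $\widecheck{\nabla}$.

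For part $ii)$, the $\Leftarrow$ direction is immediate: if $\mathcal{S}$ is totally geodesic for some $\widehat{\nabla}$, i.e.\ $\widehat{K}_{\widetilde{g}}\equiv 0$, then for any other Weyl connection \eqref{eq:extrinsic_curvature_change_Weyl_conn} gives $(\widecheck{K}_{\widetilde{g}})_{ij} = -\varepsilon(\widecheck{\kappa}_{\widetilde{g}}-\widehat{\kappa}_{\widetilde{g}})_\perp\,\widetilde{h}_{ij}$, which is pure trace, so $\mathcal{S}$ is umbilical. For $\Rightarrow$, umbilicity means $(\widehat{K}_{\widetilde{g}})_{ij} = \tfrac{1}{n}\big(\widetilde{h}^{kl}(\widehat{K}_{\widetilde{g}})_{kl}\big)\widetilde{h}_{ij}$ for any chosen $\widehat{\nabla}$; applying exactly the covector shift constructed in part $i)$ cancels the trace term entirely, yielding $(\widecheck{K}_{\widetilde{g}})_{ij}=0$, i.e.\ $\mathcal{S}$ is totally geodesic for $\widecheck{\nabla}$. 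I do not expect a genuine obstacle here: the argument is purely algebraic on $\mathcal{S}$. The only points requiring care are the bookkeeping with the sign $\varepsilon$ and the fact that prescribing merely the normal component of the covector on $\mathcal{S}$ is compatible with defining a global Weyl connection, which is precisely what \Cref{prop:exist&uniq_weylconn} guarantees. Conceptually, the lemma records that the Weyl gauge freedom acts exactly on the trace part of the extrinsic curvature, so "umbilical" is precisely the class of hypersurfaces whose obstruction to being totally geodesic is pure trace and hence removable.
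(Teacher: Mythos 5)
Your proposal is correct and follows essentially the same route as the paper: both parts reduce to the transformation rule \eqref{eq:extrinsic_curvature_change_Weyl_conn}, choosing the normal component of the new covector to kill the trace of the extrinsic curvature via \Cref{prop:exist&uniq_weylconn}, and then deducing $ii)$ by combining $i)$ with umbilicity. Your write-up is in fact slightly more careful than the paper's (you keep track of the factor $\varepsilon/n$ and spell out the easy converse in $ii)$, which the paper leaves implicit).
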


\begin{proof}
    $i)$ Let $\widetilde{g} \in [\widetilde{g}]$ be any representative and $\widehat{\nabla}$ be any Weyl connection. Take a smooth covector field $\kappa$ on $\M$ such that $\kappa_\perp = (\widehat{\kappa}_{\widetilde{g}})_\perp + \widetilde{h}^{ij} (\widehat{K}_{\widetilde{g}})_{ij}/3$ on $\mathcal{S}$. Then the Weyl connection associated to $\kappa$ with respect to $\widetilde{g}$ by \Cref{prop:exist&uniq_weylconn} works.
    
    \bigbreak
    
    $ii)$ By $i)$, there exists a Weyl connection $\widehat{\nabla}$ such that $\mathcal{S}$ has zero mean curvature for $\widehat{\nabla}$. Since $\mathcal{S}$ is also umbilical, one deduces that it is totally geodesic for $\widehat{\nabla}$.
\end{proof}

\begin{lem}
    \label{lem_tot_geod_curve}
    Let $x : I \to \mathcal{S}$ be a smooth curve on $\mathcal{S}$ and $\widehat{\nabla}$ be a Weyl connection on $(\mathcal{M},[\widetilde{g}])$. Denote by $X$ the smooth vector field obtained by parallel transport along the curve $x$ of a vector $X_\star \in T_{x(\tau_\star)} \mathcal{M}$ for some $\tau_\star \in I$, that is
    \[ \widehat{\nabla}_{\dot{x}} X = 0 \,, \qquad X|_{\tau_\star} = X_\star \,. \]
    If $\mathcal{S}$ is totally geodesic for $\widehat{\nabla}$ and $X_\star$ is tangent to $\mathcal{S}$ then $X$ remains tangent to $\mathcal{S}$.
\end{lem}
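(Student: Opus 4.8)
The plan is to track the normal component of $X$ along the curve and show that it obeys a homogeneous linear ordinary differential equation with vanishing initial datum. Fix an adapted frame field $(e_\perp,(e_{\bf i}))$ to $\mathcal{S}$ with respect to a representative $\widetilde{g} \in [\widetilde{g}]$, with dual coframe $(\omega^\perp,(\omega^{\bf i}))$, defined on a neighbourhood of the curve. Along $x$, decompose $X = X^\perp e_\perp + X^{\bf i} e_{\bf i}$, where $X^\perp := \langle \omega^\perp, X\rangle$. Since $X_\star$ is tangent to $\mathcal{S}$, one has $X^\perp(\tau_\star) = 0$, and the claim amounts to showing that $X^\perp \equiv 0$ on $I$, that is, that $X$ has no normal component anywhere along $x$.

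To compute the evolution of $X^\perp$, I would differentiate $X^\perp = \langle \omega^\perp, X\rangle$ along $\dot x$ using the Leibniz rule for $\widehat{\nabla}$:
\[ \dot{x}(X^\perp) = \langle \widehat{\nabla}_{\dot{x}} \omega^\perp, X\rangle + \langle \omega^\perp, \widehat{\nabla}_{\dot{x}} X\rangle \,. \]
The second term vanishes by the parallel transport equation $\widehat{\nabla}_{\dot{x}} X = 0$. For the first term, differentiating the defining relation $\langle \omega^{\bf c}, e_{\bf b}\rangle = \delta^{\bf c}{}_{\bf b}$ yields $\widehat{\nabla}_{e_{\bf a}} \omega^{\bf c} = -\widehat{\Gamma}_{\bf a}{}^{\bf c}{}_{\bf b}\, \omega^{\bf b}$. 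Because $x$ lies on $\mathcal{S}$, its tangent $\dot{x}$ is tangent to $\mathcal{S}$, so $\dot{x} = \dot{x}^{\bf i} e_{\bf i}$ has no $e_\perp$ component. Hence, along the curve,
\[ \dot{x}(X^\perp) = -\dot{x}^{\bf i} \widehat{\Gamma}_{\bf i}{}^\perp{}_{\bf b}\, X^{\bf b} = -\dot{x}^{\bf i} \left( \widehat{\Gamma}_{\bf i}{}^\perp{}_\perp\, X^\perp + \widehat{\Gamma}_{\bf i}{}^\perp{}_{\bf j}\, X^{\bf j} \right) \,. \]

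Now I would insert the connection coefficients computed on $\mathcal{S}$ in \Cref{lem:connection_coeff}, namely $\widehat{\Gamma}_{\bf i}{}^\perp{}_{\bf j} = (\widehat{K}_{\widetilde{g}})_{\bf ij}$ and $\widehat{\Gamma}_{\bf i}{}^\perp{}_\perp = (\widehat{\kappa}_{\widetilde{g}})_{\bf i}$. The coupling of $X^\perp$ to the tangential components $X^{\bf j}$ is carried precisely by the extrinsic curvature term $\widehat{\Gamma}_{\bf i}{}^\perp{}_{\bf j} X^{\bf j}$; this is the only obstruction to closing the equation, and it is exactly where total geodesy enters. Since $\mathcal{S}$ is totally geodesic for $\widehat{\nabla}$, the extrinsic curvature $\widehat{K}_{\widetilde{g}}$ vanishes identically on $\mathcal{S}$, so that term drops out along $x$ and one is left with the homogeneous linear scalar equation
\[ \dot{x}(X^\perp) = -\left( \dot{x}^{\bf i} (\widehat{\kappa}_{\widetilde{g}})_{\bf i} \right) X^\perp \,. \]
With initial condition $X^\perp(\tau_\star) = 0$ and the coefficient $\tau \mapsto \dot{x}^{\bf i}(\widehat{\kappa}_{\widetilde{g}})_{\bf i}$ smooth along $x$, uniqueness of solutions to linear ODEs forces $X^\perp \equiv 0$ on $I$, so $X$ remains tangent to $\mathcal{S}$. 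The argument carries no genuine difficulty beyond identifying the right scalar quantity: the essential point is that total geodesy is precisely the condition decoupling the normal component from the tangential ones, which reduces the problem to a homogeneous linear ODE.
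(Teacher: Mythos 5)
Your proof is correct and follows essentially the same route as the paper: both reduce the claim to the vanishing of the normal component $X^\perp$ along $x$, identify the relevant connection coefficients via \Cref{lem:connection_coeff}, use total geodesy to remove the extrinsic-curvature coupling to the tangential components, and conclude by uniqueness for a homogeneous linear ODE. The only cosmetic difference is that you differentiate $\langle\omega^\perp,X\rangle$ via the coframe, whereas the paper expands $(\widehat{\nabla}_{\dot{x}}X)^\perp$ directly; the resulting equation is identical.
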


\begin{proof}
    Take a representative $g \in [\widetilde{g}]$ and a frame field $(e_\perp,(e_{\bf i}))$ adapted to $\mathcal{S}$ with respect to $g$. Then
    \[ 0 = (\widehat{\nabla}_{\dot{x}} X)^\perp = \dot{x}(X^\perp) + \langle\widehat{\kappa}_g,\dot{x}\rangle X^\perp + (\widehat{K}_g)_{\bf ij} \dot{x}^{\bf i} X^{\bf j} \,. \]
    Thus, assuming that $\mathcal{S}$ is totally geodesic for $\widehat{\nabla}$, $X^\perp$ vanishes identically along $x$ if and only if $X_\star^\perp = 0$.
\end{proof}

\subsubsection{Decomposition of a Weyl candidate}
\label{sec:decomposition_weyl_candidate}

The following lemma gives a decomposition of Weyl candidates (see \Cref{def_Weyl_candidate}) at a non-null hypersurface into algebraically independent tensor fields. In the case of the Weyl tensor, this decomposition is closely related to the Codazzi equations as we will see in \Cref{cor:codazzi}. 

\begin{lem}
    \label{lem:decomposition_Weyl_cand}
    Let $g \in [\widetilde{g}]$ be a representative and $Q^\alpha{}_{\beta\mu\rho}$ be a Weyl candidate on $(\mathcal{M},g)$. Define the following tensor fields on $\mathcal{M}$, using an adapted frame field $(e_\perp,(e_{\bf i}))$  to $\mathcal{S}$ with respect to $g$ whose dual frame field is denoted by $(\omega^\perp,(\omega^{\bf i}))$,
    \begin{itemize}
        \item $E := Q^\perp{}_{\bf i\perp j} \, \omega^{\bf i} \otimes \omega^{\bf j}$ which is a symmetric trace-free 2-tensor field,
        \item $M := Q^\perp{}_{\bf kij} \, \omega^{\bf i} \otimes \omega^{\bf j} \otimes \omega^{\bf k}$ which is a Cotton candidate (see \Cref{def:cotton_candidate}),
        \item $P := \left(Q^{\bf i}{}_{\bf jkl} + \frac{2}{n-2} S_{\bf j[k}{}^{\bf im} E_{\bf l]m}\right) e_{\bf i} \otimes \omega^{\bf j} \otimes \omega^{\bf k} \otimes \omega^{\bf l}$ which is a Weyl candidate. 
    \end{itemize}    
    Then the following decomposition holds
    \begin{equation}
        \label{dec_Weyl_candidate}
        Q_{\alpha\beta\mu\rho} = P_{\alpha\beta\mu\rho} +2 \left( (\omega^\perp)_{[\alpha} M_{|\mu\rho|\beta]} + (\omega^\perp)_{[\mu} M_{|\alpha\beta|\rho]} \right) - \frac{2}{n-2} \left( X_{\alpha[\mu} E_{\rho]\beta} - X_{\beta[\mu} E_{\rho]\alpha} \right) \,, 
    \end{equation}
    where $X_{\alpha\beta} := g_{\alpha\beta} - (n-1) \varepsilon (\omega^\perp)_\alpha (\omega^\perp)_\beta$.
\end{lem}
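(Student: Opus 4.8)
The plan is to prove everything by decomposing in the adapted frame $(e_\perp,(e_{\bf i}))$, using that a Weyl candidate is antisymmetric in each of the pairs $(\alpha\beta)$ and $(\mu\rho)$, so each pair can carry \emph{at most one} index equal to $\perp$. This sorts the nonvanishing frame components of $Q$ into exactly three types: a $\perp$ in both pairs, $Q_{\perp{\bf i}\perp{\bf j}}$; a $\perp$ in one pair only, $Q_{\perp{\bf i}{\bf j}{\bf k}}$ (and, by the pair-exchange symmetry $Q_{\alpha\beta\mu\rho}=Q_{\mu\rho\alpha\beta}$, its counterpart with the $\perp$ in the second pair); and the purely tangential components $Q_{\bf ijkl}$. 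The fields $E$, $M$ and $P$ are designed precisely to encode these three types, and \eqref{dec_Weyl_candidate} is the covariant reassembly of this sorting.

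First I would check that $E$, $M$, $P$ have the asserted algebraic types. Symmetry of $E$ is the pair-exchange symmetry restricted to $\perp{\bf i}\perp{\bf j}$, and the Cotton-candidate (resp. Weyl-candidate) symmetries of $M$ (resp. $P$) follow by transcribing the antisymmetries and the first Bianchi identity $Q^\alpha{}_{[\beta\mu\nu]}=0$ to the frame. The decisive input is the single trace-free condition $Q^\alpha{}_{\beta\alpha\nu}=0$: splitting the contracted index through $g^{\alpha\mu}=\varepsilon\,e_\perp{}^\alpha e_\perp{}^\mu+h^{\bf ij}e_{\bf i}{}^\alpha e_{\bf j}{}^\mu$ and using $Q_{\perp{\bf k}\perp{\bf l}}=\varepsilon E_{\bf kl}$ yields the tangential trace identity $h^{\bf ij}Q_{{\bf i}{\bf k}{\bf j}{\bf l}}=-E_{\bf kl}$. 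This shows at once that $E$ is trace-free and that the purely tangential block $Q_{\bf ijkl}$ is \emph{not} trace-free; the correction $\tfrac{2}{n-2}S_{{\bf j}[{\bf k}}{}^{{\bf i}{\bf m}}E_{{\bf l}]{\bf m}}$ in the definition of $P$ is exactly what is needed to cancel this trace, which one verifies by contracting it with $\delta$ and using $S_{{\bf j}{\bf a}}{}^{{\bf a}{\bf m}}=n\,\delta_{\bf j}{}^{\bf m}$ together with $E$ being trace-free, so that $P^{\bf a}{}_{\bf jal}=0$ and $P$ is a genuine Weyl candidate.

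For the identity \eqref{dec_Weyl_candidate} itself I would insert the resolution $\delta^\alpha{}_\beta=e_\perp{}^\alpha(\omega^\perp)_\beta+\Pi^\alpha{}_\beta$, with $\Pi^\alpha{}_\beta:=\delta^\alpha{}_\beta-e_\perp{}^\alpha(\omega^\perp)_\beta$, into each of the four slots of $Q$, expand, and discard every term carrying two $\perp$ indices in the same pair. Collecting the survivors reproduces the right-hand side: the both-pairs block assembles, via $X_{\perp\perp}=-(n-2)\varepsilon$ and $X_{\bf ij}=h_{\bf ij}$, into the term $-\tfrac{2}{n-2}\big(X_{\alpha[\mu}E_{\rho]\beta}-X_{\beta[\mu}E_{\rho]\alpha}\big)$; the single-$\perp$ blocks give the two $M$-terms weighted by $(\omega^\perp)$; and the tangential block gives $P$. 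In practice it is cleanest to verify the equality componentwise on the three representative configurations above, each reducing to short manipulations with $(\omega^\perp)_\perp=1$, the frame restriction of $X$, and careful bookkeeping of the factors of $\varepsilon$ produced when the raised $\perp$ index in the definitions of $E$ and $M$ is lowered.

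The main obstacle is the purely tangential configuration. There the right-hand side equals $P_{\bf ijkl}$ minus the tangential part of the $X$–$E$ term, and matching it to $Q_{\bf ijkl}$ amounts to the algebraic identity
\[
h_{{\bf i}{\bf n}}\,S_{{\bf j}[{\bf k}}{}^{{\bf n}{\bf m}}\,E_{{\bf l}]{\bf m}}
= h_{{\bf i}[{\bf k}}E_{{\bf l}]{\bf j}}-h_{{\bf j}[{\bf k}}E_{{\bf l}]{\bf i}}\,.
\]
This follows by expanding $S_{{\bf j}{\bf k}}{}^{{\bf n}{\bf m}}=\delta_{\bf j}{}^{\bf n}\delta_{\bf k}{}^{\bf m}+\delta_{\bf j}{}^{\bf m}\delta_{\bf k}{}^{\bf n}-h_{\bf jk}h^{\bf nm}$ as in \eqref{eq:def_S}, contracting with $E_{\bf lm}$ and antisymmetrising over $({\bf k}{\bf l})$, using only that $E$ is symmetric; it is the Gauss-equation-type computation at the heart of the lemma. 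Once this identity is established the tangential block matches, the other two configurations are immediate, and \eqref{dec_Weyl_candidate} follows.
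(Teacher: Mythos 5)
The paper omits the proof of this lemma entirely (it is one of the ``standard'' proofs the preliminaries section declines to give), so there is no argument of the paper's to compare against; your frame-by-frame strategy is the natural one, and most of its structural content checks out. In particular, the tangential trace identity $h^{\bf ij}Q_{\bf ikjl}=-E_{\bf kl}$, the algebraic identity $h_{\bf in}S_{\bf j[k}{}^{\bf nm}E_{\bf l]m}=h_{\bf i[k}E_{\bf l]j}-h_{\bf j[k}E_{\bf l]i}$ (which, as you say, uses only the symmetry of $E$), the trace cancellation $P^{\bf a}{}_{\bf jal}=-E_{\bf jl}+E_{\bf jl}=0$, and the purely tangential block of \eqref{dec_Weyl_candidate} are all correct. (One small imprecision: trace-freeness of $E$ does not follow ``at once'' from $h^{\bf ij}Q_{\bf ikjl}=-E_{\bf kl}$; you also need the trace condition in the slot $\beta=\nu=\perp$, or the vanishing of the double trace.)

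The genuine problem sits in the configuration you declare ``immediate''. Evaluate both sides of \eqref{dec_Weyl_candidate} at $(\alpha,\beta,\mu,\rho)=(\perp,{\bf j},{\bf k},{\bf l})$. On the right, the $P$- and $X$-$E$-terms vanish and the $M$-terms contribute $(\omega^\perp)_\perp M_{\bf klj}=M_{\bf klj}=Q^\perp{}_{\bf jkl}$, since $(\omega^\perp)_\perp=1$ and $M$ is tangential. On the left, lowering the first index gives $Q_{\perp{\bf j}{\bf k}{\bf l}}=g_{\perp\perp}\,Q^\perp{}_{\bf jkl}=\varepsilon\,Q^\perp{}_{\bf jkl}$. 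The two sides therefore differ by the factor $\varepsilon$, and the identity as printed fails when $\mathcal{S}$ is spacelike ($\varepsilon=-1$), which is exactly the case of the initial hypersurface used later in the paper. By contrast, the both-$\perp$ block does close: $Q_{\perp{\bf i}\perp{\bf j}}=\varepsilon E_{\bf ij}$ while the $X$-term, via $X_{\perp\perp}=-(n-2)\varepsilon$, also produces $\varepsilon E_{\bf ij}$, so the $\varepsilon$'s match there. A complete proof must therefore either insert a factor $\varepsilon$ in front of the $M$-terms of \eqref{dec_Weyl_candidate}, or equivalently take $M_{\bf ijk}:=Q_{\perp{\bf k}{\bf i}{\bf j}}$ rather than $Q^\perp{}_{\bf kij}$ in the statement. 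Your writeup, which asserts the formula verbatim and singles out ``careful bookkeeping of the factors of $\varepsilon$'' as the delicate point, passes over the one place where that bookkeeping actually bites.
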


\begin{rems} \,
    \begin{itemize}
        \item The restriction of $E$, $M$ and $P$ on $\mathcal{S}$ are tangent to $\mathcal{S}$ and do not depend on the choice of adapted frame field $(e_\perp,(e_{\bf i}))$. In addition, the restriction of $E$ and $P$ are conformally invariant, whereas the one of $M$ is conformally invariant of weight $1$.
        \item Consider the special case $n=3$. One has $P=0$ since Weyl candidates vanish identically in dimension 3. To recover the usual Bel decomposition, assume that $\mathcal{M}$ is oriented and denote the volume forms on $\mathcal{M}$ and $\mathcal{S}$ by $\epsilon$. Recall that the dual tensor field $(Q\star)^\alpha{}_{\beta\mu\nu}$ is also a Weyl candidate, see \Cref{sec:duality_dim_4}. Then one defines the symmetric and trace-free 2-tensor
        \[ H := (Q\star)^\perp{}_{\bf i\perp j} \, \omega^{\bf i} \otimes \omega^{\bf j} \,. \]
        The tensor fields $H$ and $M$ are duals of each other on $\mathcal{S}$
        \begin{equation}
            H_{\bf ij} = \frac{1}{2} \epsilon_{\bf i}{}^{\bf kl} M_{\bf klj} \, \quad \text{or equivalently} \quad M_{\bf ijk} = -\varepsilon \epsilon_{\bf ij}{}^{\bf l} H_{\bf lk} \,.
        \end{equation}
        By analogy with the decomposition of the electromagnetic tensor field, $E$ and $H$ are called the electric part and the magnetic part of the Weyl candidate $Q$ on $\mathcal{S}$. Furthermore, since
        \[ ((Q\star)\star)^\alpha{}_{\beta\lambda\mu} = - Q^\alpha{}_{\beta\lambda\mu} \,, \]
        the electric and magnetic parts of $Q$ and $Q\star$ are linked by
        \begin{equation}
            \label{sym_dual_EH}
            E(Q\star) = H(Q) \,, \qquad  H(Q\star) = - E(Q)\,. \qedhere
        \end{equation}
    \end{itemize}
\end{rems}

\subsubsection{The Codazzi equations for Weyl connections}
\label{sec:codazzi_weyl}

We detail here the Codazzi equations for a Weyl connection $\widehat{\nabla}$. In comparison with the Levi-Civita case, care must be taken since the connections do not preserve the metric and the Riemann tensors have fewer symmetries.

Let $\widehat{D}$ be the induced Weyl connection on $\mathcal{S}$ by a Weyl connection $\widehat{\nabla}$. The Riemann and Ricci tensors of $\widehat{\nabla}$ and $\widehat{D}$ are linked by the Codazzi equations stated below.

\begin{lem}
    The Gauss-Codazzi and Codazzi-Mainardi equations write as follows, in an frame field $(e_{\bf a}) = (e_\perp,(e_{\bf i}))$ adapted to $\mathcal{S}$ with respect to a representative $g \in [\widetilde{g}]$, 
    \begin{subequations}
        \label{codazzi}
        \begin{align}
            \label{gauss_codazzi}
            \widehat{R}^{\bf i}{}_{\bf jkl} & = \widehat{r}^{\bf i}{}_{\bf jkl} - 2\varepsilon (\widehat{K}_g)^{\bf i}{}_{\bf [k} (\widehat{K}_g)_{\bf l]j} \,, \\
            \label{codazzi_mainardi_1}
            \widehat{R}^\perp{}_{\bf jkl} & = 2 \widehat{D}_{\bf [k} (\widehat{K}_g)_{\bf l]j} + 2 (\widehat{\kappa}_g)_{\bf [k} (\widehat{K}_g)_{\bf l]j} \,, \\
            \label{codazzi_mainardi_2}
            -\varepsilon \widehat{R}^{\bf i}{}_{\bf \perp kl} &= 2 \widehat{D}_{\bf [k} (\widehat{K}_g)_{\bf l]}{}^{\bf i} - 2 (\widehat{\kappa}_g)_{\bf [k} (\widehat{K}_g)_{\bf l]}{}^{\bf i} \,,
        \end{align}
    \end{subequations}
    where $\widehat{R}^{\bf a}{}_{\bf bcd}$, $\widehat{r}^{\bf i}{}_{\bf jkl}$ are the frame components of the Riemann tensor of $\widehat{\nabla}$ and $\widehat{D}$ respectively, $\widehat{K}_g$ is the extrinsic curvature of $\widehat{\nabla}$ with respect to $g$, see \Cref{sec:extrinsic_curvature}.
    Their contractions give
    \begin{subequations}
        \label{contracted_codazzi}
        \begin{align}
            \widehat{R}_{\bf jl} - \widehat{r}_{\bf jl} &= \widehat{R}^\perp{}_{\bf j\perp l} - \varepsilon \left( (\widehat{K}_g)^{\bf i}{}_{\bf i} (\widehat{K}_g)_{\bf jl} - (\widehat{K}_g)^{\bf i}{}_{\bf l} (\widehat{K}_g)_{\bf ij}\right) \,, \\
            - \widehat{R}^\perp{}_{\bf l} = -\varepsilon \widehat{R}_{\bf \perp l} & = \widehat{D}_{\bf i} (\widehat{K}_g)_{\bf l}{}^{\bf i} - \widehat{D}_{\bf l} (\widehat{K}_g)_{\bf i}{}^{\bf i} - (\widehat{\kappa}_g)_{\bf i}(\widehat{K}_g)_{\bf l}{}^{\bf i}  + (\widehat{\kappa}_g)_{\bf l} (\widehat{K}_g)_{\bf i}{}^{\bf i} \,.
        \end{align}
    \end{subequations}
\end{lem}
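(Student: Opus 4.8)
The plan is to compute every curvature component in the adapted frame field $(e_\perp,(e_{\bf i}))$ with respect to $g$ and to feed in the connection coefficients furnished by \Cref{lem:connection_coeff}. The starting point is the frame expression of the Riemann tensor of a torsion-free connection, which follows directly from \Cref{def_riemann_hat}:
\[ \widehat{R}^{\bf a}{}_{\bf bcd} = e_{\bf c}(\widehat{\Gamma}_{\bf d}{}^{\bf a}{}_{\bf b}) - e_{\bf d}(\widehat{\Gamma}_{\bf c}{}^{\bf a}{}_{\bf b}) + \widehat{\Gamma}_{\bf c}{}^{\bf a}{}_{\bf e}\widehat{\Gamma}_{\bf d}{}^{\bf e}{}_{\bf b} - \widehat{\Gamma}_{\bf d}{}^{\bf a}{}_{\bf e}\widehat{\Gamma}_{\bf c}{}^{\bf e}{}_{\bf b} - C_{\bf cd}{}^{\bf e}\widehat{\Gamma}_{\bf e}{}^{\bf a}{}_{\bf b} \,, \]
where $C_{\bf cd}{}^{\bf e}$ are the structure functions of the frame. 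I would keep the differentiating indices $\bf c,d$ tangential (equal to $\bf k,l$), so that the derivatives $e_{\bf k},e_{\bf l}$ act along $\mathcal{S}$ where the identifications of \Cref{lem:connection_coeff} are valid, and then split every contracted index $\bf e$ into its tangential and normal contributions. Two remarks make this clean: by definition of $\widehat{D}$ its connection coefficients are the fully tangential components $\widehat{\Gamma}_{\bf i}{}^{\bf j}{}_{\bf k}$ restricted to $\mathcal{S}$, and since $e_{\bf k},e_{\bf l}$ are tangent to $\mathcal{S}$ along $\mathcal{S}$, their bracket is too, so $C_{\bf kl}{}^\perp=0$ on $\mathcal{S}$.

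For \eqref{gauss_codazzi} I take $\bf a=i$, $\bf b=j$ tangential. The terms in which the summed index $\bf e$ is tangential, together with the derivative and structure-function terms, reassemble exactly into the intrinsic Riemann tensor $\widehat{r}^{\bf i}{}_{\bf jkl}$ of $\widehat{D}$; the terms with $\bf e=\perp$ produce $\widehat{\Gamma}_{\bf l}{}^\perp{}_{\bf j}\widehat{\Gamma}_{\bf k}{}^{\bf i}{}_\perp-\widehat{\Gamma}_{\bf k}{}^\perp{}_{\bf j}\widehat{\Gamma}_{\bf l}{}^{\bf i}{}_\perp$, which after substituting $\widehat{\Gamma}_{\bf i}{}^\perp{}_{\bf j}=(\widehat{K}_g)_{\bf ij}$ and $\widehat{\Gamma}_{\bf i}{}^{\bf j}{}_\perp=-\varepsilon(\widehat{K}_g)_{\bf i}{}^{\bf j}$ give exactly $-2\varepsilon(\widehat{K}_g)^{\bf i}{}_{\bf [k}(\widehat{K}_g)_{\bf l]j}$. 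For \eqref{codazzi_mainardi_1} I take $\bf a=\perp$. Here the derivative terms read $e_{\bf k}((\widehat{K}_g)_{\bf lj})-e_{\bf l}((\widehat{K}_g)_{\bf kj})$, and I would convert them into $2\widehat{D}_{\bf [k}(\widehat{K}_g)_{\bf l]j}$ by matching the Christoffel corrections coming from the tangential $\bf e$ terms and the structure-function term, using the torsion-free relation $\widehat{\Gamma}_{\bf k}{}^{\bf m}{}_{\bf l}-\widehat{\Gamma}_{\bf l}{}^{\bf m}{}_{\bf k}=C_{\bf kl}{}^{\bf m}$; the remaining $\bf e=\perp$ contribution, through $\widehat{\Gamma}_{\bf k}{}^\perp{}_\perp=(\widehat{\kappa}_g)_{\bf k}$, is precisely $2(\widehat{\kappa}_g)_{\bf [k}(\widehat{K}_g)_{\bf l]j}$. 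Equation \eqref{codazzi_mainardi_2} is obtained in the same way with $\bf a=i$, $\bf b=\perp$, the overall $-\varepsilon$ and the sign flip of the $\widehat{\kappa}_g$ term entering through $\widehat{\Gamma}_{\bf i}{}^{\bf j}{}_\perp=-\varepsilon(\widehat{K}_g)_{\bf i}{}^{\bf j}$.

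The contracted equations \eqref{contracted_codazzi} then follow by tracing: setting $\bf i=k$ and summing in \eqref{gauss_codazzi} and splitting $\widehat{R}_{\bf jl}=\widehat{R}^{\bf i}{}_{\bf jil}+\widehat{R}^\perp{}_{\bf j\perp l}$ yields the first, while the same contraction of \eqref{codazzi_mainardi_2}, using $\widehat{R}^\perp{}_{\bf l}=\varepsilon\widehat{R}_{\perp\bf l}$ in the adapted frame, yields the second. The main obstacle I expect is purely bookkeeping rather than conceptual: one must correctly assemble the bare frame derivatives $e_{\bf k}((\widehat{K}_g)_{\bf lj})$ into the genuine induced-connection derivative $\widehat{D}_{\bf k}(\widehat{K}_g)_{\bf lj}$, and carefully track the non-metricity contributions in $\widehat{\kappa}_g$ — which have no analogue in the Levi-Civita case — along with the $\varepsilon$ factors and the placement of the antisymmetrizations $[\bf k,l]$.
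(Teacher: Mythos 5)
Your proposal is correct and follows essentially the same route as the paper: the paper likewise expands the curvature in the adapted frame (writing $\widehat{\nabla}_{e_{\bf k}}\widehat{\nabla}_{e_{\bf l}}e_{\bf j}$ as $\widehat{D}_{e_{\bf k}}\widehat{D}_{e_{\bf l}}e_{\bf j}$ plus normal corrections rather than invoking the closed component formula with structure functions, which is the same computation in different notation), substitutes the connection coefficients from \Cref{lem:connection_coeff}, and uses that the bracket of tangent fields is tangent to kill the normal part of the commutator term. The bookkeeping you flag — reassembling bare frame derivatives of $\widehat{K}_g$ into $\widehat{D}$-derivatives via torsion-freeness, and isolating the $\widehat{\Gamma}_{\bf i}{}^\perp{}_\perp=(\widehat{\kappa}_g)_{\bf i}$ contributions — is exactly what the paper's computation does.
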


\begin{proof}
By definition of the Riemann tensor,
    \begin{align*}
        \widehat{R}^{\bf i}{}_{\bf jkl} &:= \langle \omega^{\bf i}, \widehat{\nabla}_{e_{\bf k}} \widehat{\nabla}_{e_{\bf l}} e_{\bf j} - \widehat{\nabla}_{e_{\bf l}} \widehat{\nabla}_{e_{\bf k}} e_{\bf j} - \widehat{\nabla}_{[e_{\bf k},e_{\bf l}]} e_{\bf j} \rangle \,, \\
        \widehat{R}^\perp{}_{\bf jkl} &:= \langle \omega^\perp, \widehat{\nabla}_{e_{\bf k}} \widehat{\nabla}_{e_{\bf l}} e_{\bf j} - \widehat{\nabla}_{e_{\bf l}} \widehat{\nabla}_{e_{\bf k}} e_{\bf j} - \widehat{\nabla}_{[e_{\bf k},e_{\bf l}]} e_{\bf j} \rangle \,, \\
        \widehat{R}^{\bf i}{}_{\bf \perp kl} &:= \langle \omega^{\bf i}, \widehat{\nabla}_{e_{\bf k}} \widehat{\nabla}_{e_{\bf l}} e_\perp - \widehat{\nabla}_{e_{\bf l}} \widehat{\nabla}_{e_{\bf k}} e_\perp - \widehat{\nabla}_{[e_{\bf k},e_{\bf l}]} e_\perp \rangle \,. 
    \end{align*}
    Yet
    \begin{align*}
        \widehat{\nabla}_{e_{\bf k}} \widehat{\nabla}_{e_{\bf l}} e_{\bf j} &= \widehat{\nabla}_{e_{\bf k}} \left( \widehat{D}_{e_{\bf l}} e_{\bf j} + \widehat{\Gamma}_{\bf l}{}^\perp{}_{\bf j} e_\perp \right) \\
        &= \widehat{D}_{e_{\bf k}} \widehat{D}_{e_{\bf l}} e_{\bf j} + \widehat{\Gamma}_{\bf l}{}^{\bf m}{}_{\bf j} \widehat{\Gamma}_{\bf k}{}^\perp{}_{\bf m}  e_\perp + e_{\bf k}(\widehat{\Gamma}_{\bf l}{}^\perp{}_{\bf j}) e_\perp + \widehat{\Gamma}_{\bf l}{}^\perp{}_{\bf j} \widehat{\Gamma}_{\bf k}{}^{\bf a}{}_\perp e_{\bf a} \\
        &= \widehat{D}_{e_{\bf k}} \widehat{D}_{e_{\bf l}} e_{\bf j} + \widehat{\Gamma}_{\bf l}{}^{\bf m}{}_{\bf j} (\widehat{K}_g)_{\bf km} e_\perp + e_{\bf k}((\widehat{K}_g)_{\bf lj}) e_\perp  -\varepsilon (\widehat{K}_g)_{\bf lj} (\widehat{K}_g)_{\bf k}{}^{\bf m}e_{\bf m} + (\widehat{\kappa}_g)_{\bf k} (\widehat{K}_g)_{\bf lj} e_\perp \\
        &= \widehat{D}_{e_{\bf k}} \widehat{D}_{e_{\bf l}} e_{\bf j} -\varepsilon (\widehat{K}_g)_{\bf lj} (\widehat{K}_g)_{\bf k}{}^{\bf m}e_{\bf m} + \left( e_{\bf k}((\widehat{K}_g)_{\bf lj}) + \widehat{\Gamma}_{\bf l}{}^{\bf m}{}_{\bf j} (\widehat{K}_g)_{\bf km} + (\widehat{\kappa}_g)_{\bf k} (\widehat{K}_g)_{\bf lj} \right) e_\perp \,,
    \end{align*}
    and
    \begin{align*}
        \widehat{\nabla}_{[e_{\bf k},e_{\bf l}]} e_{\bf j} &= \widehat{D}_{[e_{\bf k},e_{\bf l}]} e_{\bf j} + \langle \omega^{\bf m}, [e_{\bf k},e_{\bf l}]\rangle \widehat{\Gamma}_{\bf m}{}^\perp{}_{\bf j} e_\perp \\
        &= \widehat{D}_{[e_{\bf k},e_{\bf l}]} e_{\bf j} + \left(\widehat{\Gamma}_{\bf k}{}^{\bf m}{}_{\bf l} - \widehat{\Gamma}_{\bf l}{}^{\bf m}{}_{\bf k}\right) \widehat{\Gamma}_{\bf m}{}^\perp{}_{\bf j} e_\perp \\
        &= \widehat{D}_{[e_{\bf k},e_{\bf l}]} e_{\bf j} + \left(\widehat{\Gamma}_{\bf k}{}^{\bf m}{}_{\bf l} - \widehat{\Gamma}_{\bf l}{}^{\bf m}{}_{\bf k}\right) (\widehat{K}_g)_{\bf mj} e_\perp \,.
    \end{align*}
    Equations~\eqref{gauss_codazzi} and \eqref{codazzi_mainardi_1} follow. Moreover,
    \begin{align*}
        \langle \omega^{\bf i}, \widehat{\nabla}_{e_{\bf k}} \widehat{\nabla}_{e_{\bf l}} e_\perp \rangle &= \langle \omega^{\bf i}, \widehat{\nabla}_{e_{\bf k}} \left( \widehat{\Gamma}_{\bf l}{}^{\bf j}{}_\perp e_{\bf j} + \widehat{\Gamma}_{\bf l}{}^\perp{}_\perp e_\perp \right) \rangle \\
        &= \langle \omega^{\bf i}, e_{\bf k}(\widehat{\Gamma}_{\bf l}{}^{\bf j}{}_\perp) e_{\bf j} + \widehat{\Gamma}_{\bf l}{}^{\bf j}{}_\perp \widehat{\Gamma}_{\bf k}{}^{\bf a}{}_{\bf j} e_{\bf a} + e_{\bf k}(\widehat{\Gamma}_{\bf l}{}^\perp{}_\perp) e_\perp + \widehat{\Gamma}_{\bf l}{}^\perp{}_\perp \widehat{\Gamma}_{\bf k}{}^{\bf a}{}_\perp e_{\bf a} \rangle \\
        &= e_{\bf k}(\widehat{\Gamma}_{\bf l}{}^{\bf i}{}_\perp) + \widehat{\Gamma}_{\bf l}{}^{\bf j}{}_\perp \widehat{\Gamma}_{\bf k}{}^{\bf i}{}_{\bf j} + \widehat{\Gamma}_{\bf l}{}^\perp{}_\perp \widehat{\Gamma}_{\bf k}{}^{\bf i}{}_\perp \\
        &= -\varepsilon \left( e_{\bf k}( (\widehat{K}_g)_{\bf l}{}^{\bf i}) + \widehat{\Gamma}_{\bf k}{}^{\bf i}{}_{\bf j} (\widehat{K}_g)_{\bf l}{}^{\bf j} + (\widehat{\kappa}_g)_{\bf l} (\widehat{K}_g)_{\bf k}{}^{\bf i} \right) \,,
    \end{align*}
    and
    \[ \langle \omega^{\bf i}, \widehat{\nabla}_{[e_{\bf k},e_{\bf l}]} e_\perp \rangle = \langle \omega^{\bf j}, [e_{\bf k},e_{\bf l}]\rangle \langle \omega^{\bf i}, \widehat{\nabla}_{e_{\bf j}} e_\perp \rangle = -\varepsilon \left( \widehat{\Gamma}_{\bf k}{}^{\bf j}{}_{\bf l} - \widehat{\Gamma}_{\bf l}{}^{\bf j}{}_{\bf k} \right) (\widehat{K}_g)_{\bf j}{}^{\bf i} \,. \]
    Hence equation~\eqref{codazzi_mainardi_2}.
\end{proof}

\noindent One can rewrite the Codazzi equations in terms of the Weyl and Schouten tensors instead of the Riemann and Ricci tensors.

\begin{cor}
    \label{cor:codazzi}
    Let $g \in [\widetilde{g}]$ be a representative, $(e_\perp,(e_{\bf i}))$ be a frame field adapted to $\mathcal{S}$ with respect to $g$ and $(E,M,P)$ be the decomposition defined in~\Cref{lem:decomposition_Weyl_cand} of the Weyl tensor $\widehat{W}^\alpha{}_{\beta\mu\nu}$ of $\widehat{\nabla}$. Then
    \begin{subequations}
        \label{codazzi_weyl_schouten}
        \begin{align}
            \label{eq:codazzi_E}
            \widehat{L}_{\bf lj} - \widehat{l}_{\bf lj} &= \frac{E_{\bf lj}}{n-2} - \varepsilon F_{\bf lj}(\widehat{K}_g) \,, \\
            \label{eq:codazzi_P}
            \widehat{w}^{\bf i}{}_{\bf jkl} &= P^{\bf i}{}_{\bf jkl} + \varepsilon \left( 2(\widehat{K}_g)^{\bf i}{}_{\bf [k} (\widehat{K}_g)_{\bf l]j} - 2 S_{\bf j[k}{}^{\bf im} F_{\bf l]m}(\widehat{K}_g) \right) \,, \\
            \label{eq:codazzi_schouten}
            \widehat{L}_{\bf k\perp} &= \frac{\varepsilon}{n-1} h^{\bf jl} G_{\bf klj}(\widehat{K}_g,\widehat{\kappa}_g) \,, \\
            \label{eq:codazzi_M}
            M_{\bf klj} &= G_{\bf klj}(\widehat{K}_g,\widehat{\kappa}_g) + \frac{2}{n-1} h_{\bf j[k} h^{\bf ip} G_{\bf l]ip}(\widehat{K}_g,\widehat{\kappa}_g) \,,
        \end{align}
    \end{subequations}
    where $\widehat{L}_{\bf ab}$, $\widehat{l}_{\bf ij}$ are the frame components of the Schouten tensor of $\widehat{\nabla}$ and $\widehat{D}$ respectively, $\widehat{K}_g$ is the extrinsic curvature of $\widehat{\nabla}$ with respect to $g$ and
    \begin{subequations}
    \label{eq:def_F_G}
    \begin{align}
        F_{\bf lj}(\widehat{K}_g) &:= \frac{(\widehat{K}_g)^{\bf i}{}_{\bf i} (\widehat{K}_g)_{\bf lj} - (\widehat{K}_g)^{\bf i}{}_{\bf l} (\widehat{K}_g)_{\bf ij}}{n-2} - \frac{\left((\widehat{K}_g)^{\bf i}{}_{\bf i}\right)^2 - (\widehat{K}_g)_{\bf ik} (\widehat{K}_g)^{\bf ik}}{2(n-1)(n-2)} h_{\bf lj} \,, \\
        G_{\bf klj}(\widehat{K}_g,\widehat{\kappa}_g) &:= 2 \widehat{D}_{\bf [k} (\widehat{K}_g)_{\bf l]j} + 2 (\widehat{\kappa}_g)_{\bf [k} (\widehat{K}_g)_{\bf l]j} \,.
    \end{align}        
    \end{subequations}
\end{cor}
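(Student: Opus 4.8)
The plan is to feed the Codazzi equations \eqref{codazzi} and their contractions \eqref{contracted_codazzi}, established in the preceding lemma, into the Weyl--Schouten decomposition \eqref{eq:decomposition_Riemann_Weyl} applied both to $\widehat{\nabla}$ on $\mathcal{M}$ (dimension $N=n+1$) and to the induced Weyl connection $\widehat{D}$ on $\mathcal{S}$ (dimension $N=n$), and then to read off the pieces $E$, $M$, $P$ of \Cref{lem:decomposition_Weyl_cand}. It is cleanest to prove the four identities in the order \eqref{eq:codazzi_schouten}, \eqref{eq:codazzi_M}, \eqref{eq:codazzi_E}, \eqref{eq:codazzi_P}, since each later one reuses the earlier.

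For \eqref{eq:codazzi_schouten} and \eqref{eq:codazzi_M} I would start from \eqref{codazzi_mainardi_1}, which reads exactly $\widehat{R}^\perp{}_{\bf jkl} = G_{\bf klj}(\widehat{K}_g,\widehat{\kappa}_g)$. Decomposing the left-hand side by \eqref{eq:decomposition_Riemann_Weyl} and using that, when the upper index is $\perp$ and $\bf j,k$ are tangential, only the term $-g_{\bf jk}g^{\perp\perp}=-\varepsilon h_{\bf jk}$ of $S_{\bf jk}{}^{\perp\sigma}$ survives, one obtains $G_{\bf klj} = \widehat{W}^\perp{}_{\bf jkl} - 2\varepsilon\, h_{\bf j[k}\widehat{L}_{\bf l]\perp}$, where $\widehat{W}^\perp{}_{\bf jkl}$ is precisely the component defining $M$. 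Contracting with $h^{\bf jl}$ annihilates the totally trace-free Weyl term and leaves $h^{\bf jl}G_{\bf klj} = \varepsilon(n-1)\widehat{L}_{\bf k\perp}$, which is \eqref{eq:codazzi_schouten}; reinserting this expression for $\widehat{L}_{\bf l\perp}$ gives \eqref{eq:codazzi_M} after identifying $M_{\bf klj}=\widehat{W}^\perp{}_{\bf jkl}$.

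The crux is \eqref{eq:codazzi_E}. Here I would start from the first contracted equation of \eqref{contracted_codazzi}, trade the Ricci tensors $\widehat{R}_{\bf jl}$, $\widehat{r}_{\bf jl}$ for the Schouten tensors $\widehat{L}$, $\widehat{l}$ by contracting \eqref{eq:decomposition_Riemann_Weyl} (taking care of the two distinct normalisations, with factor $N-2=n-1$ on $\mathcal{M}$ and $n-2$ on $\mathcal{S}$), and use \eqref{eq:decomposition_Riemann_Weyl} once more to write $\widehat{R}^\perp{}_{\bf j\perp l} = E_{\bf jl} + \widehat{L}_{\bf lj} + \varepsilon\, h_{\bf jl}\widehat{L}_{\perp\perp}$. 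The antisymmetric parts of $\widehat{L}_{\bf jl}$ and $\widehat{l}_{\bf jl}$ agree, because the difference of the two connections acting on $\widehat{\kappa}_g$ produces only the symmetric term $(\widehat{K}_g)_{\bf jl}(\widehat{\kappa}_g)_\perp$, which drops under antisymmetrisation; hence only symmetric parts survive in $\widehat{L}_{\bf jl}-\widehat{l}_{\bf jl}$. After cancellation the equation carries a single undetermined scalar, the combination $s := \varepsilon\widehat{L}_{\perp\perp} - (g^{\alpha\beta}\widehat{L}_{\alpha\beta} - h^{\bf ij}\widehat{l}_{\bf ij})$; taking the tangential trace $h^{\bf jl}$, where $E$ and all Weyl pieces drop out, solves for $s = \varepsilon\big(((\widehat{K}_g)^{\bf i}{}_{\bf i})^2 - (\widehat{K}_g)_{\bf ik}(\widehat{K}_g)^{\bf ik}\big)/2(n-1)$. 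This is exactly the $h_{\bf lj}$-part of $F$ in \eqref{eq:def_F_G}, while the quadratic extrinsic terms assemble into the trace-free part of $F$, yielding \eqref{eq:codazzi_E}.

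Finally, \eqref{eq:codazzi_P} follows by decomposing both Riemann tensors in \eqref{gauss_codazzi} through \eqref{eq:decomposition_Riemann_Weyl}, substituting \eqref{eq:codazzi_E} for $\widehat{L}_{\bf lm}-\widehat{l}_{\bf lm}$, and recognising the combination $\widehat{W}^{\bf i}{}_{\bf jkl} + \tfrac{2}{n-2}S_{\bf j[k}{}^{\bf im}E_{\bf l]m}$ as $P^{\bf i}{}_{\bf jkl}$ from \Cref{lem:decomposition_Weyl_cand}; the leftover extrinsic-curvature terms are precisely $\varepsilon\big(2(\widehat{K}_g)^{\bf i}{}_{[\bf k}(\widehat{K}_g)_{\bf l]j} - 2S_{\bf j[k}{}^{\bf im}F_{\bf l]m}\big)$. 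I expect the only genuine difficulty to be the trace bookkeeping in \eqref{eq:codazzi_E}: isolating the scalar $s$ and verifying it reproduces the $h_{\bf lj}$ term of $F$; the other three identities are essentially direct substitutions once the Weyl--Schouten decomposition is applied on both sides.
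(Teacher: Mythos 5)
Your proposal is correct and follows essentially the same route as the paper: decompose both Riemann tensors via \eqref{eq:decomposition_Riemann_Weyl}, feed in the Gauss–Codazzi and Codazzi–Mainardi equations, take tangential traces to isolate the Schouten components and the scalar ($h_{\bf lj}$) part of $F$, and reinsert to identify $E$, $M$, $P$. The only cosmetic differences are the order of the four identities and your derivation of $\widehat{L}_{\bf [jl]}=\widehat{l}_{\bf [jl]}$ directly from \eqref{eq:antisym_part_schouten} and \Cref{lem:induced_weyl_connection} rather than from the antisymmetric part of the contracted equation — an alternative the paper itself points out.
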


\begin{rem}
    If $\mathcal{S}$ is umbilical, write $(\widehat{K}_g)_{\bf ij} = f h_{\bf ij}$. Then
    \begin{alignat*}{3}
        F_{\bf lj}(\widehat{K}_g) &= \frac{f^2}{2}  h_{\bf lj} \,, &\qquad\qquad G_{\bf klj}(\widehat{K}_g,\widehat{\kappa}_g) &= - 2 h_{\bf j[k} \left( (df)_{\bf l]} - (\widehat{\kappa}_g)_{\bf l]} f \right) \,, \\
        P^{\bf i}{}_{\bf jkl} &= \widehat{w}^{\bf i}{}_{\bf jkl} \,, &\qquad E_{\bf lj} &= (n-2) \left(\widehat{L}_{\bf lj}-\widehat{l}_{\bf lj} + \frac{\varepsilon f^2}{2} h_{\bf lj} \right) \,, \\
        M_{\bf klj} &= 0 &\qquad \widehat{L}_{\bf k\perp} &=  \varepsilon \left( (df)_{\bf k} - (\widehat{\kappa}_g)_{\bf k} f \right) \,. \tag*{\qedhere}
    \end{alignat*}
\end{rem}

\begin{proof}
    From the decomposition \eqref{eq:decomposition_Riemann_Weyl} of the Riemann tensor of a Weyl connection, one has
    \begin{subequations}
        \begin{align}
            \label{eq:decomp_aux_a}
            \widehat{R}^{\bf i}{}_{\bf jkl} & = \widehat{W}^{\bf i}{}_{\bf jkl} + 2 S_{\bf j[k}{}^{\bf ia} \widehat{L}_{\bf l]a} = \widehat{W}^{\bf i}{}_{\bf jkl} + 2 S_{\bf j[k}{}^{\bf im} \widehat{L}_{\bf l]m} \,, \\
            \label{eq:decomp_aux_b}
            \widehat{r}^{\bf i}{}_{\bf jkl} &= \widehat{w}^{\bf i}{}_{\bf jkl} + 2S_{\bf j[k}{}^{\bf ia} \widehat{l}_{\bf l]a} = \widehat{w}^{\bf i}{}_{\bf jkl} + 2S_{\bf j[k}{}^{\bf im} \widehat{l}_{\bf l]m} \,, \\
            \label{eq:decomp_aux_c}
            \widehat{R}^\perp{}_{\bf jkl} & = \widehat{W}^\perp{}_{\bf jkl} + 2 S_{\bf j[k}{}^{\bf \perp a} \widehat{L}_{\bf l]a} = \widehat{W}^\perp{}_{\bf jkl} - 2 h_{\bf j[k} \widehat{L}_{\bf l]}{}^\perp \,.
        \end{align}
    \end{subequations}
    By combining \eqref{gauss_codazzi} with \eqref{eq:decomp_aux_a}-\eqref{eq:decomp_aux_b}, one deduces
    \[ \widehat{W}^{\bf i}{}_{\bf jkl} - \widehat{w}^{\bf i}{}_{\bf jkl} + 2 S_{\bf j[k}{}^{\bf im} \left( \widehat{L}_{\bf l]m} - \widehat{l}_{\bf l]m} \right) = \widehat{R}^{\bf i}{}_{\bf jkl} - \widehat{r}^{\bf i}{}_{\bf jkl} = - 2\varepsilon (\widehat{K}_g)^{\bf i}{}_{\bf [k} (\widehat{K}_g)_{\bf l]j} \,. \]
    By contracting the indices $({\bf i},{\bf k})$ and denoting $\widehat{A}_{\bf jl} := \widehat{L}_{\bf jl} - \widehat{l}_{\bf jl}$, one has
    \[ - \widehat{W}^\perp{}_{\bf j\perp l} + n \widehat{A}_{\bf [lj]} + (n-2)\widehat{A}_{\bf (lj)} + h_{\bf lj} h^{\bf ik} \widehat{A}_{\bf ik} = -\varepsilon \left( (\widehat{K}_g)^{\bf i}{}_{\bf i} (\widehat{K}_g)_{\bf lj} - (\widehat{K}_g)^{\bf i}{}_{\bf l} (\widehat{K}_g)_{\bf ij} \right) \,. \]
    Taking the anti-symmetric part gives $\widehat{A}_{\bf [jl]} = 0$ (this can also be deduced from \eqref{eq:antisym_part_schouten} and \Cref{lem:induced_weyl_connection}). By taking the trace with respect to $h$,
    \[ 2(n-1) h^{\bf jl} \widehat{A}_{\bf jl} = -\varepsilon \left( \left((\widehat{K}_g)^{\bf i}{}_{\bf i}\right)^2 - (\widehat{K}_g)_{\bf jl} (\widehat{K}_g)^{\bf jl} \right) \,. \]
    Injecting the traces back leads to \eqref{eq:codazzi_E} and then \eqref{eq:codazzi_P}.
    
    Furthermore, by combining \eqref{codazzi_mainardi_1} with \eqref{eq:decomp_aux_c}, one has
    \[ \widehat{W}^\perp{}_{\bf jkl} - 2 h_{\bf j[k} \widehat{L}_{\bf l]}{}^\perp = \widehat{R}^\perp{}_{\bf jkl} = 2 \widehat{D}_{\bf [k} (\widehat{K}_g)_{\bf l]j} + 2 (\widehat{\kappa}_g)_{\bf [k} (\widehat{K}_g)_{\bf l]j} \,. \]
    Taking the trace with respect to $h$ on the pair $({\bf j},{\bf l})$ gives
    \[ \varepsilon (n-1) \widehat{L}_{\bf k\perp} = h^{\bf jl} \left( 2\widehat{D}_{\bf [k} (\widehat{K}_g)_{\bf l]j} + 2 (\widehat{\kappa}_g)_{\bf [k} (\widehat{K}_g)_{\bf l]j} \right) \,. \]
    Hence \eqref{eq:codazzi_schouten} and then \eqref{eq:codazzi_M} follows by injecting the trace back.
\end{proof}

\section{Asymptotically Anti-de Sitter spaces}
\label{sec:aAdS_spaces}

This section is dedicated to defining asymptotically Anti-de Sitter (aAdS) spaces and presenting their standard properties. We start by introducing the definition used in this article and comparing it to the literature in \Cref{sec:def_aAdS}.

We then recall in \Cref{sec:expansion_conf_bound} the notorious expansion of aAdS metrics satisfying the vacuum Einstein equations with a negative cosmological constant derived by Fefferman and Graham in \cite{FG85}. This result was obtained in order to solve the so-called Poincaré problem, equivalent to the ambient space problem, which can be interpreted as a certain boundary value problem for aAdS spaces. Their method gives a formal series, described in~\Cref{thm:FG_exp}, with two free data: the boundary metric and the trace-free and divergence-free part of the boundary stress-energy tensor. These are discussed in detail and a special effort is made to justify their geometric interpretation as an equivalence class for boundaries of odd dimensions. \Cref{sec:expansion_conf_bound} concludes with the asymptotic expansions of some curvature tensors obtained from the expansion of the metric. In particular, one important result of this section, equation \eqref{eq:lim_weyl_frakt}, is crucial to interpret geometrically the analytic boundary conditions in \Cref{sec:geometric_bc}.

For illustrative purposes, several examples of aAdS spaces are provided in~\Cref{sec:ex_aAdS}. Finally, we introduced potential boundary stress-energy tensors inferred from these examples and discuss the interactions between boundary stress-energy tensors and conformal Killing fields in \Cref{sec:BSET}.

\subsection{Definitions}
\label{sec:def_aAdS}

Let us state the definition of asymptotically Anti-de Sitter spaces using manifolds with corners introduced in \Cref{sec:manifolds_with_corners}.

\begin{defi}
    \label{def_aAdS}
    An \emph{asymptotically Anti-de Sitter (aAdS) space} is a triple $(\M,\mathfrak{I},\widetilde{g})$ where
    \begin{itemize}
        \item $\M$ is a (n+1)-dimensional smooth manifold with corners,
        \item $\mathfrak{I}$ is a non-empty set of disjoint boundary hypersurfaces of $\M$,\item $\widetilde{g}$ is a smooth Lorentzian metric  on $\M\setminus\mathfrak{I}$,
    \end{itemize}
    such that there exists a boundary defining function $x$ of $\mathfrak{I}$ verifying
    \begin{itemize}
        \item[i)] the associated \emph{rescaled metric} $x^2 \widetilde{g}$ extends as a $\mathcal{C}^2$-metric up to $\mathfrak{I}$,
        \item[ii)] one has $(x^2\widetilde{g})^{-1}(dx,dx) = 1$ on $\mathfrak{I}$.
    \end{itemize}
    Each boundary hypersurface $\mathfrak{S} \in \mathfrak{I}$ is equipped with the conformal class $[\mathfrak{h}]$, where $\mathfrak{h}$ is the metric on $\mathfrak{S}$ induced by $x^2\widetilde{g}$, to form a Lorentzian conformal structure. The set $\mathfrak{I}$ is called the \emph{conformal boundary} of $(\M,\mathfrak{I},\widetilde{g})$.
\end{defi}

\begin{rems} \,
    \begin{itemize}   
        \item In the definition and in the whole article, the union $\bigcup_{\mathfrak{S} \in \mathfrak{I}} \mathfrak{S} \subset \M$ is identified to $\mathfrak{I}$ and identically denoted for simplicity.
        
        \item \Cref{def_aAdS} is independent of the choice of a boundary defining function of $\mathfrak{I}$. Indeed, assume $i)-ii)$ hold for a boundary defining function $x$ of $\mathfrak{I}$ and let $x'$ be another boundary defining function of $\mathfrak{I}$. By \Cref{lemma:bdf}, one can write $x'=x\Theta$ with $\Theta \in \mathcal{C}^\infty(\M,\R_+^\star)$. Thus $(x')^2 \widetilde{g} = \Theta^2 x^2 \widetilde{g}$ also extends as a $\mathcal{C}^2$-metric on $\M$ and $((x')^2\widetilde{g})^{-1}(dx',dx') = (x^2 \widetilde{g})^{-1}(dx,dx) = 1$ on $\mathfrak{I}$. Moreover, the induced metric on $\mathfrak{S}$ verifies
        \begin{equation}
            \label{eq:frakh}
            \mathfrak{h}' = \Theta|_{\mathfrak{S}}^2 \, \mathfrak{h} \,.
        \end{equation}
        This explains why $\mathfrak{S}$ is equipped with the conformal class $[\mathfrak{h}]$ rather than the metric $\mathfrak{h}$ which is dependent on the boundary defining function.

        \item All rescaled metrics of an aAdS space have the same regularity thanks to \Cref{lemma:bdf}. The common regularity classes are $\mathcal{C}^m(\M)$ for $m\geq 2$ and $\mathcal{C}^2(\M) \cap \mathcal{C}^m_{phg}(\mathcal{M}\,|\, \mathfrak{I}, E)$ for $m\geq0$ and $E$ an index set, see \Cref{sec:polyhomogeneous_functions} for the definitions.

        \item If $(\M,\mathfrak{I},\widetilde{g})$ is an aAdS space then $(\M\setminus\mathfrak{I},\widetilde{g})$ is approximately Einstein near $\mathfrak{I}$ for the normalised negative cosmological constant $\Lambda = -n(n-1)/2$. Indeed, on any compact neighbourhood of a point $p \in \mathfrak{S}$,
        \begin{equation}
            \label{approximate_solution}
            \Ric(\widetilde{g}) + n \widetilde{g} = \grando{\frac{1}{x}} \,.
        \end{equation}
        This implies that $\widetilde{R} = -n(n+1) + \grando{x}$ where $\widetilde{R}$ is the scalar curvature of $\widetilde{g}$.
        
        \item \Cref{def_aAdS} allows for various topologies and geometries at the conformal boundary, not only that of AdS. In this sense, it is more general than the definition of asymptotically flat spacetimes. See~\cite{HT85} for a definition imposing the same topology and conformal class on the boundary as AdS.
        
        \item This definition is based on~\cite[Definition 3.1]{EK19}. Moreover, if the rescaled metrics are smooth on $\M$ then it is similar to \cite[Definition 1]{W13} with the weak decay rates.

        \item Former definitions of aAdS spaces such as Friedrich's \cite{F95} are based on the vacuum Einstein equations and conformal embeddings, in the lines of Penrose's asymptotic simplicity \cite{P60}. However, multiple non-equivalent conformal embeddings of a space exist since the differential structure at the conformal boundary thereby created is a priori not imposed. Furthermore, \Cref{def_aAdS} allows for coupling of the Einstein equations with matter fields as long as they are compatible with the existence of a smooth conformal boundary.
        
        \item Even though one cannot define the physical extrinsic curvature of a connected component $\mathfrak{S}$ of the conformal boundary $\mathfrak{I}$ since the metric $\widetilde{g}$ is singular on $\mathfrak{S}$, the transformation rule \eqref{eq:aux_ec} implies that the physical mean curvature of $\mathfrak{S}$ would be equal to $-n$ (for the outward-pointing normal choice). \qedhere
    \end{itemize}
\end{rems}

Anticipating the initial data problem, one can define asymptotically hyperbolic spaces in a similar manner. Most of the above remarks also apply in this context.

\begin{defi}
    \label{def:aH}
    An \emph{asymptotically Hyperbolic (aH) space} is a triple $(\mathcal{S},\slashed{\mathfrak{I}},\widetilde{h})$ where
    \begin{itemize}
        \item $\mathcal{S}$ is a n-dimensional smooth manifold with corners,
        \item $\slashed{\mathfrak{I}}$ is a non-empty set of disjoint boundary hypersurfaces of $\mathcal{S}$,\item $\widetilde{h}$ is a smooth Riemannian metric on $\mathcal{S}\setminus\slashed{\mathfrak{I}}$,
    \end{itemize}
    such that there exists a boundary defining function $x$ of $\slashed{\mathfrak{I}}$ verifying
    \begin{itemize}
        \item[i)] the associated \emph{rescaled metric} $x^2 \widetilde{h}$ extends as a $\mathcal{C}^2$-metric up to $\slashed{\mathfrak{I}}$,
        \item[ii)] one has $(x^2 \widetilde{h})^{-1}(dx,dx) = 1$ on $\slashed{\mathfrak{I}}$.
    \end{itemize}
    Each boundary hypersurface $\mathfrak{S} \in \slashed{\mathfrak{I}}$ is equipped with the conformal class $[\slashed{\mathfrak{h}}]$, where $\slashed{\mathfrak{h}}$ is the metric on $\mathfrak{S}$ induced by $x^2\widetilde{h}$, to form a Riemannian conformal structure. The set $\slashed{\mathfrak{I}}$ is called the \emph{conformal boundary} of $(\mathcal{S},\slashed{\mathfrak{I}},\widetilde{h})$.
\end{defi}

\subsection{Asymptotic expansions at the conformal boundary}
\label{sec:expansion_conf_bound}

The definition of aAdS spaces is designed so that the metric $\widetilde{g}$ is approximately Einstein near the conformal boundary, see \eqref{approximate_solution}. Fefferman and Graham \cite{FG85}, see also their comprehensive book \cite{FG12}, derived iteratively an asymptotic expansion of the rescaled metric $g := x^2\widetilde{g}$ ensuring that the Einstein equations with (the normalised) negative cosmological constant $\Lambda = -n(n-1)/2$ are verified by $\widetilde{g}$ nearby the conformal boundary at any given order. This expansion will be referred to as the Fefferman-Graham expansion (FG expansion).

We present first the gauge used to obtain the FG expansion in \Cref{sec:FG_gauge} before stating and commenting the FG expansion in \Cref{sec:FG_expansion}. The Einstein equations fully impose all the terms of the expansion apart from two which are therefore called the free data on the conformal boundary. We review in length how to geometrically interpret these free data from their gauge-dependent definition as the conformal class and the boundary stress-energy tensor class. Finally, asymptotic expansions for some curvature tensor fields implied by the FG expansion are derived  in \Cref{sec:expansions_curvature_tensors}. This enables us to identify the boundary stress-energy tensor in the asymptotic expansions of (part of) the Weyl tensor, see equation~\eqref{eq:lim_weyl_frakt}, which will help us to pass from analytic to geometric boundary conditions in \Cref{sec:geometric_bc}. 

In this whole section, we consider an aAdS space $(\M,\mathfrak{I},\widetilde{g})$ of dimension $n+1\geq3$.

\subsubsection{Fefferman-Graham gauges}
\label{sec:FG_gauge}

\begin{defi}
    A \emph{Fefferman-Graham gauge} (FG gauge) is a chart $(\phi,\mathcal{U},1)$ with $\mathcal{U} \cap \mathfrak{I} \neq \varnothing$ and whose coordinates are denoted by $\phi := (x,y^0,\dots,y^{n-1})$ such that
    \begin{itemize}
    	\item[i)] $x$ is a local boundary defining function of $\mathfrak{I}$ on $\mathcal{U}$,
    	\item[ii)] the rescaled metric has the following form on $\mathcal{U}$
    	\begin{equation}
    		\label{FG_gauge}
    		g := x^2 \widetilde{g} = dx^2 + \underline{\mathfrak{h}}(x)
    	\end{equation}
    	with $\underline{\mathfrak{h}} : x \mapsto \underline{\mathfrak{h}}(x)$ a $\mathcal{C}^2$-map with values into smooth Lorentzian metrics on the level sets of $x$.
    \end{itemize}
\end{defi}

\begin{rems} \,
	\begin{itemize}
		\item The FG gauge is a geodesic gauge or a $1+n$ decomposition with respect to the spacelike coordinate $x$ with zero shift and lapse $1$ for the rescaled metric $g = x^2 \widetilde{g}$.
        
        \item The frame field $(\partial_x,(\partial_{y^i}))$ is an adapted frame field to the level sets of $x$ on $\mathcal{U}$ with respect to the rescaled metric $g$, see \Cref{sec:adapted_frame_fields}.
        
		\item The construction of FG gauges for aAdS spaces can be derived from~\cite[Section 5]{GL91}. Let us recall the proof which decomposes into three steps.
        \begin{itemize}
            \item[a)] Take any local boundary defining function $x$ of $\mathfrak{I}$ and any smooth function $\omega$ on $\mathfrak{I}$.
            
            \item[b)] Let $\theta$ be a smooth solution on a neighbourhood of $\mathfrak{I}$ to the following non-characteristic first order hyperbolic partial differential equation, with $\omega$ as initial data on $\mathfrak{I}$,
            \[  2 (\grad_g x) \cdot \theta + x g^{-1}(d\theta,d\theta) = \frac{1-g^{-1}(dx,dx)}{x} \,, \]
            where $\grad_g x$ is the gradient of $x$ with respect to the rescaled metric $g := x^2 \widetilde{g}$. Note that the function on the right hand side is well-defined up to the conformal boundary by definition of aAdS spaces. If the rescaled metrics are smooth then $\theta$ always exist.
            
            \item[c)] Define $x':=\exp(\theta)x$. Take a local chart $(y^0,\dots,y^{n-1})$ in the neighbourhood of $p$ in $\mathfrak{I}$ and extend it away from the conformal boundary by taking the functions constant on the integral curves of $\grad_g x'$.
        \end{itemize}
        Then $(x',y^0,\dots,y^{n-1})$ is a FG gauge. Note that not all local boundary defining functions of $\mathfrak{I}$ can become the first coordinate of a FG gauge, only those verifying
        \begin{equation}
            \label{eq:FG_bdf}
            (x^2\widetilde{g})^{-1}(dx,dx) = 1
        \end{equation}
        in a neighbourhood of a point $p\in\mathfrak{I}$ in $\M$.
        
		\item Let $(x,y^0,\dots,y^{n-1})$ be a FG gauge on an open set $\mathcal{U}$. Then for all $\mathfrak{S} \in \mathfrak{I}$ such that $\mathfrak{S} \cap \mathcal{U} \neq \varnothing$,
        \[ \underline{\mathfrak{h}} = \mathfrak{h} \quad \text{on } \mathfrak{S} \cap \mathcal{U} \,, \]
        where $\mathfrak{h}$ is the Lorentzian metric on $\mathfrak{S}$ associated to $x$ by~\Cref{def_aAdS}. \qedhere
	\end{itemize}
\end{rems}

\begin{defi}
    A \emph{transformation preserving the FG gauge} is a diffeomorphism $\chi$ on an open set $\mathcal{U}$ such that there exists two FG gauges $\phi_1$ and $\phi_2$ on $\mathcal{U}$ verifying $\chi = \phi_2 \circ (\phi_1)^{-1}$.
\end{defi}

\begin{rem}
    Let $\phi_1 = (x,y^0,\dots,y^{n-1})$ and $\phi_2 = (x',z^0,\dots,z^{n-1})$ be two FG gauges on an open set $\mathcal{U}$. Then the map $\psi : (y^0,\dots,y^{n-1}) \in \Pi \circ \phi_1(\mathfrak{I} \cap \mathcal{U}) \mapsto (z^0,\dots,z^{n-1}) \in \Pi\circ\phi_2(\mathfrak{I} \cap \mathcal{U})$ is a diffeomorphism, where $\Pi : (x,y) \in \R^{n+1} \mapsto y \in \R^n$.
\end{rem}

\noindent The relation between the families of Lorentzian metrics $\underline{\mathfrak{h}}$ of two FG gauges is given by the following lemma.

\begin{lem}
    \label{lem_transfo_FG_gauge}
    Let $\phi_1 = (x,y^0,\dots,y^{n-1})$ and $\phi_2 = (x',z^0,\dots,z^{n-1})$ be two FG gauges defined on a common open set $\mathcal{U}$. Define $\Theta := x'/x\in \mathcal{C}^\infty(\mathcal{U},\R_+^\star)$ thanks to \Cref{lemma:bdf}. Then
    \begin{subequations}
        \label{cond}
	    \begin{align}
            \label{cond1}
	        \left( \underline{\mathfrak{h}}'_{kl} \circ \phi_2 \circ (\phi_1)^{-1} \right) \, \frac{\partial z^k}{\partial x} \frac{\partial z^l}{\partial x} &= \left( 1-\left(1+\frac{x}{\Theta} \frac{\partial \Theta}{\partial x}\right)^2 \right) \Theta^2  \,, \\
            \label{cond2} 
	        \left( \underline{\mathfrak{h}}'_{kl}\circ \phi_2 \circ (\phi_1)^{-1} \right) \, \frac{\partial z^k}{\partial x} \frac{\partial z^l}{\partial y^i} &= - \frac{x}{\Theta} \frac{\partial \Theta}{\partial y^i} \left(1+\frac{x}{\Theta} \frac{\partial \Theta}{\partial x}\right) \Theta^2 \,, \\
            \label{cond3}
	        \left( \underline{\mathfrak{h}}'_{kl}\circ \phi_2 \circ (\phi_1)^{-1} \right) \, \frac{\partial z^k}{\partial y^i} \frac{\partial z^l}{\partial y^j} &= \left( \underline{\mathfrak{h}}_{ij} - \frac{x^2}{\Theta^2} \frac{\partial \Theta}{\partial y^i}\frac{\partial \Theta}{\partial y^j} \right) \Theta^2 \,.
	    \end{align}
	\end{subequations}
\end{lem}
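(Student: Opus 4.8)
The plan is to observe that the two FG gauges describe \emph{the same} physical metric $\widetilde{g}$, so that the only difference between the two associated rescaled metrics is the conformal factor $\Theta^2$; the three identities will then follow from the tensorial transformation law of a metric under the change of coordinates $\phi_2 \circ (\phi_1)^{-1}$, together with a short chain-rule computation.

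First, I would record the conformal relation between the two rescaled metrics. By \Cref{lemma:bdf} one has $x' = \Theta x$ with $\Theta \in \mathcal{C}^\infty(\mathcal{U},\R_+^\star)$, whence the rescaled metric attached to $x'$ satisfies $g' := (x')^2 \widetilde{g} = \Theta^2 x^2 \widetilde{g} = \Theta^2 g$. In the coordinates $(x,y^0,\dots,y^{n-1})$ the FG gauge form \eqref{FG_gauge} gives $g$ the components $g_{xx}=1$, $g_{xi}=0$, $g_{ij}=\underline{\mathfrak{h}}_{ij}$; likewise $g'$ has components $g'_{x'x'}=1$, $g'_{x'k}=0$, $g'_{kl}=\underline{\mathfrak{h}}'_{kl}$ in the coordinates $(x',z^0,\dots,z^{n-1})$. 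Consequently, in the $(x,y^i)$ coordinates the left-hand side of the identity $g'=\Theta^2 g$ is simply $\Theta^2 g_{\mu\nu}$, with values $\Theta^2$, $0$ and $\Theta^2\underline{\mathfrak{h}}_{ij}$ on the pairs $(x,x)$, $(x,y^i)$ and $(y^i,y^j)$ respectively.

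Second, I would regard $x'$ and the $z^k$ as functions of $(x,y^i)$ through the transition map $\phi_2 \circ (\phi_1)^{-1}$, and apply the change-of-coordinates formula
\[ \Theta^2 g_{\mu\nu} = \frac{\partial x'}{\partial u^\mu}\frac{\partial x'}{\partial u^\nu} + \underline{\mathfrak{h}}'_{kl}\,\frac{\partial z^k}{\partial u^\mu}\frac{\partial z^l}{\partial u^\nu}, \qquad (u^\mu) := (x,y^i), \]
where the vanishing of $g'_{x'k}$ is precisely what removes all the mixed $x'$–$z$ terms. The coefficients $\underline{\mathfrak{h}}'_{kl}$ occurring here are to be read as functions of $(x,y^i)$, that is, precomposed with $\phi_2 \circ (\phi_1)^{-1}$, which matches the reading in the statement. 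Specialising $(\mu,\nu)$ to $(x,x)$, $(x,y^i)$ and $(y^i,y^j)$ and isolating the $\underline{\mathfrak{h}}'$–term produces the three sought equations.

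Finally, the only genuine computation is to insert $x'=\Theta x$, which yields $\partial x'/\partial x = \Theta\bigl(1+\tfrac{x}{\Theta}\partial_x\Theta\bigr)$ and $\partial x'/\partial y^i = x\,\partial_{y^i}\Theta$; substituting these into the three specialised equations and factoring out $\Theta^2$ gives exactly \eqref{cond1}, \eqref{cond2} and \eqref{cond3}. I expect no serious obstacle: the entire content is the conformal relation $g'=\Theta^2 g$ combined with careful bookkeeping of the chain rule. The one point requiring attention is that $\underline{\mathfrak{h}}'$ is naturally a function of the \emph{primed} coordinates, so it must be pulled back along $\phi_2 \circ (\phi_1)^{-1}$ before the components can be compared in the common chart $(x,y^i)$ — this is exactly why the composition with $\phi_2\circ(\phi_1)^{-1}$ is built into each identity.
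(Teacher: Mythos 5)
Your proposal is correct and follows essentially the same route as the paper: the paper likewise computes $\partial x'/\partial x = \Theta\bigl(1+\tfrac{x}{\Theta}\partial_x\Theta\bigr)$ and $\partial x'/\partial y^i = x\,\partial_{y^i}\Theta$ and then compares the FG gauge form \eqref{FG_gauge} in the two charts, which is exactly your identity $g' = \Theta^2 g$ written out componentwise in the $(x,y^i)$ coordinates. Your write-up merely makes the chain-rule bookkeeping more explicit than the paper's one-line proof.
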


\begin{proof}
    One has
    \[ \frac{\partial x'}{\partial x} = \Theta \left( 1+\frac{x}{\Theta} \frac{\partial \Theta}{\partial x}\right) \,, \qquad \frac{\partial x'}{\partial y^i} = x \frac{\partial \Theta}{\partial y^i} \,. \]
    Equations~\eqref{cond} are found by comparing~\eqref{FG_gauge} in the two FG gauges.
\end{proof}

\begin{cor}
    \label{cor:FG_gauge}
    There exists an open neighbourhood $\mathcal{V}$ of $\mathfrak{I} \cap \mathcal{U}$ in $\mathcal{U}$ such that the functions
    \[ 1+\frac{x}{\Theta} \frac{\partial\Theta}{\partial x} \qquad \text{and} \qquad \det\left(\frac{\partial z}{\partial y}\right) \]
    nowhere vanish on $\mathcal{V}$. Furthermore, the following differential equations hold on $\mathcal{V}$
    \begin{subequations}
        \label{equadiff}
        \begin{align}
            \label{equadiff_z}
            \frac{\partial z^k}{\partial x} &= - \frac{\frac{x}{\Theta}}{\sqrt{1 - \frac{x^2}{\Theta^2} \underline{\mathfrak{h}}^{pq} \frac{\partial \Theta}{\partial y^p} \frac{\partial \Theta}{\partial y^q}}} \underline{\mathfrak{h}}^{ij} \frac{\partial \Theta}{\partial y^j} \frac{\partial z^k}{\partial y^i} \,, \\
            \label{equadiff_theta}
            \frac{\partial \Theta}{\partial x} &= \frac{\Theta}{x} \left( \sqrt{1 - \frac{x^2}{\Theta^2} \underline{\mathfrak{h}}^{ij} \frac{\partial \Theta}{\partial y^i} \frac{\partial \Theta}{\partial y^j}} - 1 \right) \,.
        \end{align}
    \end{subequations}
\end{cor}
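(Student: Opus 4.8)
The plan is to treat the three identities \eqref{cond1}, \eqref{cond2}, \eqref{cond3} of \Cref{lem_transfo_FG_gauge} as a purely algebraic system at each point and solve it for the two unknowns $\partial z^k/\partial x$ and $\partial\Theta/\partial x$, regarding everything else ($\underline{\mathfrak{h}}$, $\underline{\mathfrak{h}}'$, $\Theta$, $\partial\Theta/\partial y^i$, $\partial z^k/\partial y^i$) as given data. Throughout I abbreviate $A := 1+\frac{x}{\Theta}\frac{\partial\Theta}{\partial x}$, $p_i := \frac{\partial\Theta}{\partial y^i}$, $\beta := \underline{\mathfrak{h}}^{ij}p_ip_j$, $s := \frac{x^2}{\Theta^2}\beta$ and $J^k_i := \frac{\partial z^k}{\partial y^i}$, all understood composed with $\phi_2\circ(\phi_1)^{-1}$ where needed.

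First I would establish the two non-vanishing statements, which produce the neighbourhood $\mathcal{V}$ and legitimise the subsequent divisions. Since $\Theta\in\mathcal{C}^\infty(\mathcal{U},\R_+^\star)$, the factor $\frac{x}{\Theta}\frac{\partial\Theta}{\partial x}$ vanishes on $\mathfrak{I}\cap\mathcal{U}$ (where $x=0$), so $A=1$ there and $A>0$ on a neighbourhood by continuity. For the Jacobian, I would evaluate \eqref{cond3} at $x=0$, where it reduces to $\mathfrak{h}'_{kl}J^k_iJ^l_j=\Theta^2\mathfrak{h}_{ij}$, the coordinate form of the conformal rule \eqref{eq:frakh}; taking determinants and using that $\mathfrak{h},\mathfrak{h}'$ are nondegenerate and $\Theta>0$ forces $\det(\partial z/\partial y)\neq0$ on $\mathfrak{I}\cap\mathcal{U}$, hence on a neighbourhood. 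Intersecting and shrinking yields $\mathcal{V}$ on which both $A>0$ and $J$ is invertible.

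On $\mathcal{V}$, since $J$ is invertible I would expand $\frac{\partial z^k}{\partial x}=c^iJ^k_i$ for unique coefficients $c^i$. Substituting this into \eqref{cond2} and using \eqref{cond3} to rewrite $\underline{\mathfrak{h}}'_{kl}J^k_iJ^l_j=(\underline{\mathfrak{h}}_{ij}-\frac{x^2}{\Theta^2}p_ip_j)\Theta^2$ produces the linear system $(\underline{\mathfrak{h}}_{ij}-\frac{x^2}{\Theta^2}p_ip_j)c^i=-\frac{x}{\Theta}Ap_j$. Contracting with $\underline{\mathfrak{h}}^{jk}p_k$ gives the scalar relation $c^ip_i=-\frac{x}{\Theta}A\frac{\beta}{1-s}$, and feeding this back into the system solves it explicitly as $c^i=-\frac{x}{\Theta}\frac{A}{1-s}\underline{\mathfrak{h}}^{ij}p_j$. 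Inserting $\frac{\partial z^k}{\partial x}=c^iJ^k_i$ into \eqref{cond1} and again using \eqref{cond2} to replace $(\underline{\mathfrak{h}}_{ij}-\frac{x^2}{\Theta^2}p_ip_j)c^i$ collapses \eqref{cond1} to the scalar identity $1-A^2=\frac{sA^2}{1-s}$, which simplifies to $A^2=1-s$.

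Finally I would close the argument by sign selection: since $A>0$ and $A=1$ on $\mathfrak{I}$, one has $A=\sqrt{1-s}=\sqrt{1-\frac{x^2}{\Theta^2}\underline{\mathfrak{h}}^{ij}p_ip_j}$, and unwinding the definition of $A$ gives \eqref{equadiff_theta}; in particular $1-s=A^2>0$ on $\mathcal{V}$, so the square root is real and the earlier divisions by $1-s$ are valid. Plugging $1-s=A^2$ into the formula for $c^i$ yields $c^i=-\frac{x/\Theta}{\sqrt{1-s}}\,\underline{\mathfrak{h}}^{ij}p_j$, whence $\frac{\partial z^k}{\partial x}=c^iJ^k_i$ is precisely \eqref{equadiff_z}. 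The only genuinely delicate point is the sign choice for the square root, pinned down by the boundary value $A|_{\mathfrak{I}}=1$; the remainder is linear algebra, the main convenience being the rank-one structure of $\underline{\mathfrak{h}}_{ij}-\frac{x^2}{\Theta^2}p_ip_j$, which lets one solve the system by a single contraction with $\underline{\mathfrak{h}}^{ij}p_j$ instead of a full matrix inversion.
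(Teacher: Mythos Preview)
Your proof is correct and follows essentially the same route as the paper: both express $\partial z^k/\partial x$ as a linear combination of the $\partial z^k/\partial y^i$ (you via invertibility of $J$, the paper via a dimension-counting linear dependence $\lambda u+\mu^i v_i=0$ and a subsequent proof that $\lambda\neq0$) and then substitute into \eqref{cond1}--\eqref{cond3} to extract $A^2=1-s$ and the formula for the coefficients. One cosmetic point: your justification that the divisions by $1-s$ are valid is phrased a posteriori and reads slightly circularly; cleaner is to observe up front that $s=0$ on $\mathfrak{I}\cap\mathcal{U}$, shrink $\mathcal{V}$ so that $1-s>0$ by continuity, and then run the algebra.
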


\begin{rem}
    If $\Theta$ is independent of $x$ then \eqref{equadiff_theta} implies that the restriction of $\Theta$ on the level set $\{x=c >0\}$ is an eikonal function for $\underline{\mathfrak{h}}(c)$. By continuity, this also holds for $c=0$, that is on $\mathfrak{I}\cap\mathcal{U}$.
\end{rem}

\begin{proof}
    The two functions nowhere vanish on $\mathfrak{I} \cap \mathcal{U}$ thus the existence of $\mathcal{V}$ is given by continuity. By dimension counting, there exists a map $(\lambda,\mu^0,\dots,\mu^{n-1}) : \mathcal{V}\to\R^{n+1}\setminus\{0\}$ such that
	\begin{equation}
		\label{lincomb}
		\lambda u + \mu^i v_i = 0
	\end{equation}    
    where
	\[ z := (z^0,\dots,z^{n-1}) \,, \;\; u := \frac{\partial z}{\partial x} \,, \;\; v_i := \frac{\partial z}{\partial y^i} \in \mathcal{C}^\infty(\mathcal{V},\R^n) \,. \]
	Taking $\lambda$\eqref{cond1}$+\mu^i$\eqref{cond2}, one obtains
	\begin{equation}
		\label{cond4}
		 \frac{x}{\Theta} \left(1+\frac{x}{\Theta} \frac{\partial \Theta}{\partial x}\right) \, \mu^i \frac{\partial \Theta}{\partial y^i} = \lambda \left( 1 - \left(1+\frac{x}{\Theta} \frac{\partial \Theta}{\partial x}\right)^2 \right) \,.
	\end{equation}
	In the same way, $\lambda$\eqref{cond2}$+\mu^j$\eqref{cond3} gives
	\begin{equation} 
		\label{cond5}
		\lambda \frac{x}{\Theta} \left(1+\frac{x}{\Theta} \frac{\partial \Theta}{\partial x}\right) \frac{\partial \Theta}{\partial y^i} = \mu^j \underline{\mathfrak{h}}_{ij} - \frac{x^2}{\Theta^2} \, \mu^j \frac{\partial \Theta}{\partial y^i} \frac{\partial \Theta}{\partial y^j} \,.
	\end{equation}
	Injecting~\eqref{cond4} in~\eqref{cond5} gives
	\[ \left( 1+\frac{x}{\Theta} \frac{\partial \Theta}{\partial x} \right) \mu^i = \lambda  \, \frac{x}{\Theta} \underline{\mathfrak{h}}^{ij} \frac{\partial \Theta}{\partial y^j} \,.\]
	It follows that $\lambda \neq 0$. By injecting the above formula for $\mu^i$ and simplifying by $\lambda$, \eqref{lincomb} and~\eqref{cond4} rewrite as
	\begin{align*}
	    \frac{\partial z^k}{\partial x} = - \frac{\frac{x}{\Theta}}{1+\frac{x}{\Theta} \frac{\partial \Theta}{\partial x}} \underline{\mathfrak{h}}^{ij} \frac{\partial \Theta}{\partial y^j} \frac{\partial z^k}{\partial y^i} \,, \\
        \left( 1+\frac{x}{\Theta} \frac{\partial \Theta}{\partial x} \right)^2 = 1 - \frac{x^2}{\Theta^2} \underline{\mathfrak{h}}^{ij} \frac{\partial \Theta}{\partial y^i} \frac{\partial \Theta}{\partial y^j} \,.
	\end{align*}
    Hence the result.
\end{proof}

\noindent Recall that conformal rescalings of the conformal boundary are linked to changes of the boundary defining function, see equation  \eqref{eq:frakh}. A remarkable fact is that they are also linked to transformations preserving the FG gauge as stated in the next lemma, even though the boundary defining functions of FG gauges verify \eqref{eq:FG_bdf}.

\begin{lem}
    \label{lem:hfrak_change}
	Let $\phi_1 = (x,y^0,\dots,y^{n-1})$ and $\phi_2 = (x',z^0,\dots,z^{n-1})$ be two FG gauges on an open set $\mathcal{U}$. Define $\Theta := x'/x \in \mathcal{C}^\infty(\mathcal{U})$ and $\psi : (y^0,\dots,y^{n-1}) \mapsto (z^0,\dots,z^{n-1})$ the diffeomorphism on $\mathfrak{I}\cap\mathcal{U}$. Then, for all $\mathfrak{S} \in \mathfrak{I}$ such that $\mathfrak{S} \cap \mathcal{U} \neq \varnothing$, the diffeomorphism $\psi$ induces a confomorphism on $\mathfrak{S} \cap \mathcal{U}$ with
    \begin{equation}
        \label{conf_trans_conf_bound}
        \psi^\star \mathfrak{h}' = \Theta^2 \mathfrak{h} \quad \text{on } \mathfrak{S}\cap\mathcal{U}  \,,
    \end{equation}
    where $\mathfrak{h}$ and $\mathfrak{h}'$ are the metrics on $\mathfrak{S}$ associated to $x$ and $x'$ respectively. Reciprocally, all local confomorphisms of $\mathfrak{S} \in \mathfrak{I}$ can be induced this way.
\end{lem}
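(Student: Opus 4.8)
The plan is to extract both directions directly from the transformation law \Cref{lem_transfo_FG_gauge}, which already records how the two families $\underline{\mathfrak{h}}$ and $\underline{\mathfrak{h}}'$ are related, together with the Graham--Lee construction of FG gauges recalled after the definition of an FG gauge. For the direct statement, I would simply restrict the third relation \eqref{cond3} to the conformal boundary $\{x=0\}$. On $\mathfrak{I}\cap\mathcal{U}$ the correction term $\frac{x^2}{\Theta^2}\,\partial_{y^i}\Theta\,\partial_{y^j}\Theta$ vanishes, since $\Theta$ and its tangential derivatives stay bounded while $x\to 0$, so \eqref{cond3} reduces to
\[
\left(\underline{\mathfrak{h}}'_{kl}\circ\psi\right)\frac{\partial z^k}{\partial y^i}\frac{\partial z^l}{\partial y^j}\bigg|_{x=0}=\Theta^2\big|_{x=0}\;\underline{\mathfrak{h}}_{ij}\big|_{x=0}.
\]
It then remains to identify the three pieces geometrically: by the last remark on FG gauges one has $\underline{\mathfrak{h}}|_{x=0}=\mathfrak{h}$ and $\underline{\mathfrak{h}}'|_{x'=0}=\mathfrak{h}'$; moreover, since $z^k(0,y)=\psi^k(y)$, the boundary values of the tangential derivatives $\partial z^k/\partial y^i$ coincide with the Jacobian $\partial\psi^k/\partial y^i$. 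The displayed identity is therefore exactly the coordinate expression of the tensorial equation $\psi^\star\mathfrak{h}'=\Theta|_{\mathfrak{S}}^2\,\mathfrak{h}$, which is \eqref{conf_trans_conf_bound}; as $\Theta|_{\mathfrak{S}}>0$, this exhibits $\psi$ as a confomorphism in the sense of \Cref{def:confomorphism}, with conformal factor $\Theta|_{\mathfrak{S}}$.

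For the converse, I would fix one FG gauge $\phi_1=(x,y)$ near a boundary point and build a second one $\phi_2=(x',z)$ realising a prescribed local confomorphism $\psi$, following the three-step procedure recalled earlier. Choosing the boundary datum $\omega$ fixes the boundary conformal factor $\Theta|_{\mathfrak{I}}=e^{\omega}$; solving the non-characteristic first order equation $2(\grad_g x)\cdot\theta+x\,g^{-1}(d\theta,d\theta)=(1-g^{-1}(dx,dx))/x$ with $\theta|_{\mathfrak{I}}=\omega$ yields $x':=e^{\theta}x$, which satisfies \eqref{eq:FG_bdf} and is hence admissible as the first coordinate of an FG gauge. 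I would then prescribe the boundary coordinates of $\phi_2$ to be $z|_{\mathfrak{I}}=\psi\circ(y|_{\mathfrak{I}})$ and extend them into the bulk along the integral curves of $\grad_g x'$; the recalled construction guarantees that $\phi_2=(x',z)$ is indeed an FG gauge. Since the bulk extension leaves boundary values untouched, the boundary transition map of the pair $(\phi_1,\phi_2)$ is $\psi$ by construction, and the relation \eqref{conf_trans_conf_bound} then follows automatically from the direct statement already proved. The freedom in $\omega$ is precisely what allows the conformal factor carried by $\psi$ to be matched, via \eqref{eq:frakh}.

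The forward direction is essentially immediate once \eqref{cond3} is restricted to $x=0$; the only care needed is the bookkeeping that $\partial z/\partial y$ restricts to $d\psi$ on the boundary and that the $O(x^2)$ term genuinely drops out, both routine. I expect the main obstacle to lie in the converse, namely in verifying that the Graham--Lee construction can be carried out with the prescribed boundary data and that the chosen $\omega$ indeed reproduces the conformal factor of the given $\psi$. This rests entirely on the solvability of the eikonal-type PDE for $\theta$ and on the fact, provided by \eqref{eq:frakh}, that the resulting gauge has the advertised boundary metric; since both ingredients are already supplied by the material recalled before the lemma, the construction should go through after possibly shrinking $\mathcal{U}$.
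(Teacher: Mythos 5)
Your proof is correct and follows essentially the same route as the paper: the forward direction by evaluating \eqref{cond3} on $\mathfrak{S}\cap\mathcal{U}$ (where the $x^2$-term drops and the tangential derivatives of $z$ restrict to the Jacobian of $\psi$), and the converse by running the Graham--Lee construction of an FG gauge with boundary datum $\omega=\ln\Omega$ and prescribed boundary chart. No gaps.
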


\begin{rems} \,
    \begin{itemize}
        \item The reciprocal can for instance be found expressed in infinitesimal form in~\cite{ISTY00}.
        \item The particular case $\Theta = 1$ is equivalent to a diffeomorphism on $\mathfrak{I}$ since then $x'=x$ and the functions $z^k$ are independent of $x$ by \eqref{equadiff_z}. \qedhere
    \end{itemize}
\end{rems}

\begin{proof}
    By evaluating~\eqref{cond3} on $\mathfrak{S} \cap \mathcal{U}$, one obtains \eqref{conf_trans_conf_bound} straightforwardly.
    
    Let us turn to the reciprocal. Let $p \in \mathfrak{S} \in \mathfrak{I}$ be a point on the conformal boundary, $(x,y^0,\dots,y^{n-1})$ be a FG gauge on an open neighbourhood $\mathcal{U}$ of $p$ in $\M$ and $\Omega \in \mathcal{C}^\infty(\mathfrak{S}\cap\mathcal{U},\R_+^\star)$ be a local conformal factor. One can construct another FG gauge $(x\Theta,z^0,\dots,z^{n-1})$ on a neighbourhood of $p$ in $\mathcal{U}$ as in the proof of the existence of FG gauges, with initial data on the boundary $\omega := \ln \Omega$ and a chart $(z^0,\dots,z^{n-1})$ of $\mathfrak{I}$ around $p$.
\end{proof}

\subsubsection{Fefferman-Graham expansions}
\label{sec:FG_expansion}

Let us now detail the Fefferman and Graham expansions derived in the celebrated work \cite{FG85}. A study of these expansions at a finite order can be found in \cite{S21}.

\begin{thm}[Fefferman and Graham \protect{\cite[Theorem 2.3]{FG85}} and \protect{\cite[Theorem 4.5]{FG12}}]
    \label{thm:FG_exp}
    Assume that the aAdS space $(\M,\mathfrak{I},\widetilde{g})$ is such that
    \begin{itemize}
        \item[i)] the rescaled metrics $g$ are of class $\mathcal{C}^2(\M) \cap \mathcal{C}^m_{\text{phg}}(\mathcal{M}\,|\,\mathfrak{I},E)$ with $m \geq n$ and some index set $E$ containing $(0,0)$ and $(n,1)$,
        \item[ii)] $(\M\setminus\mathfrak{I},\widetilde{g})$ is a solution to the \eqref{eq:VE} with $\Lambda=-n(n-1)/2$.
    \end{itemize}
    Let $(x,y^0,\dots,y^{n-1})$ be a FG gauge on an open set $\mathcal{U}$, $p$ be a point in $\mathfrak{I} \cap \mathcal{U}$ and $\mathcal{K}$ be a compact neighbourhood of $p$ in $\mathcal{U}$. Then, according to the parity of $n$, the following asymptotic expansion in $x$ -- called the \emph{Fefferman-Graham expansion} (FG expansion) -- holds on $\mathcal{K}$:
    \begin{itemize}
        \item if $n$ is odd,
        \begin{equation}
            \label{FG_exp_n_odd}
            \underline{\mathfrak{h}}(x) = \sum_{k=0}^{\lfloor m/2 \rfloor} \mathfrak{h}^{(2k)} \, x^{2k} + \sum_{k=(n-1)/2}^{\lfloor (m-1)/2 \rfloor}  \mathfrak{h}^{(2k+1)} \, x^{2k+1} + o(x^m)
        \end{equation}
        where $\mathfrak{h}^{(k)}$ are smooth symmetric 2-tensor fields on $\mathfrak{I}$. Let us emphasise that the second sum starts at $k=(n-1)/2$. Furthermore, introducing
        \begin{itemize}
            \item[\ding{228}] $\mathfrak{h} := \mathfrak{h}^{(0)}$ the induced metric of the rescaled metric $g := x^2 \widetilde{g}$ on $\mathfrak{I}$,
            \item[\ding{228}] $\mathfrak{t} := \mathfrak{h}^{(n)}$,
        \end{itemize}
        one has that $\mathfrak{t}$ is trace-free and divergence-free with respect to $\mathfrak{h}$ and that there exists universal functions $\mathcal{F}_{n,k}$ such that
        \begin{align*}
            \forall \; 1 \leq k \leq (n-1)/2, \quad \mathfrak{h}^{(2k)} &= \mathcal{F}_{n,2k}(\mathfrak{h},\dots,\partial^{2k} \mathfrak{h}) \,, \\
            \forall \; n+1 \leq k \leq m, \quad \mathfrak{h}^{(k)} &= \mathcal{F}_{n,k}(\mathfrak{h},\dots,\partial^{k} \mathfrak{h},\mathfrak{t},\dots,\partial^{k-n} \mathfrak{t}) \,.
        \end{align*}
        \item if $n$ is even,
            \begin{equation}
                \label{FG_exp_n_even}
                \underline{\mathfrak{h}}(x) = \sum_{k=0}^{\lfloor m/2 \rfloor} \mathfrak{h}^{(2k)} \, x^{2k} + \sum_{k=n/2}^{\lfloor m/2 \rfloor}  \mathfrak{h}^{(\star 2k)} \, x^{2k}\log x + o(x^m)
            \end{equation}
            where $\mathfrak{h}^{(k)}$ and $\mathfrak{h}^{(\star 2k)}$ are smooth symmetric 2-tensors on $\mathfrak{I}$. Let us emphasise that the second sum starts at $k=n/2$. Furthermore, introducing
            \begin{itemize}
                \item[\ding{228}] $\mathfrak{h} := \mathfrak{h}^{(0)}$ the induced metric of the rescaled metric $g := x^2 \widetilde{g}$ on $\mathfrak{I}$,
                \item[\ding{228}] $\mathfrak{t}$ the divergence-free and trace-free part of $\mathfrak{h}^{(n)}$ with respect to $\mathfrak{h}$,
        \end{itemize}
        there exists universal functions $\mathcal{F}_{n,2k}$, $\mathcal{F}_{n,\star 2k}$, $\mathcal{F}_{n,n}^{\text{tr}}$ and $\mathcal{F}_{n,n}^{\text{div}}$ such that
        \begin{align*}
            \forall \; 1 \leq k \leq (n-2)/2, \quad \mathfrak{h}^{(2k)} &= \mathcal{F}_{n,2k}(\mathfrak{h},\dots,\partial^{2k} \mathfrak{h}) \,, \\
            \mathfrak{h}^{(\star n)} &= \mathcal{F}_{n,\star n}(\mathfrak{h},\dots,\partial^{n} \mathfrak{h}) \,, \\
            \trace_{\mathfrak{h}} \mathfrak{h}^{(n)} &= \mathcal{F}_{n,n}^{\text{tr}}(\mathfrak{h},\dots,\partial^{n} \mathfrak{h}) \,, \\
            \diver_\mathfrak{h} \mathfrak{h}^{(n)} &= \mathcal{F}_{n,n}^{\text{div}}(\mathfrak{h},\dots,\partial^{n+1}\mathfrak{h}) \,, \\
            \forall \; (n+2)/2 \leq k \leq \lfloor m/2 \rfloor, \quad \mathfrak{h}^{(\star2k)} &= \mathcal{F}_{n,\star 2k}(\mathfrak{h},\dots,\partial^{2k} \mathfrak{h},\mathfrak{t},\dots,\partial^{2k-n} \mathfrak{t}) \,, \\
            \forall \; (n+2)/2 \leq k \leq \lfloor m/2 \rfloor, \quad \mathfrak{h}^{(2k)} &= \mathcal{F}_{n,2k}(\mathfrak{h},\dots,\partial^{2k} \mathfrak{h},\mathfrak{t},\dots,\partial^{2k-n} \mathfrak{t}) \,.
        \end{align*}
    \end{itemize} 
\end{thm}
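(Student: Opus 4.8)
The plan is to insert the Fefferman--Graham gauge form \eqref{FG_gauge} into the vacuum Einstein equation $\Ric(\widetilde{g}) = -n\,\widetilde{g}$ --- which is precisely \eqref{eq:VE} for $\Lambda = -n(n-1)/2$, since then $2\Lambda/(n-1) = -n$ --- and to solve it order by order in $x$. First I would derive the equations in the gauge. Since $g := x^2\widetilde{g} = dx^2 + \underline{\mathfrak{h}}(x)$, the level sets $\{x = \mathrm{const}\}$ carry the metric $\underline{\mathfrak{h}}(x)$ with second fundamental form $\tfrac{1}{2}\partial_x\underline{\mathfrak{h}}$, so using the Gauss--Codazzi relations \eqref{codazzi} applied to the Levi-Civita connection (i.e.\ with $\widehat{\kappa}_g = 0$) one expresses the components $\Ric(g)_{xx}$, $\Ric(g)_{xi}$, $\Ric(g)_{ij}$ through $\underline{\mathfrak{h}}$, $\partial_x\underline{\mathfrak{h}}$, $\partial_x^2\underline{\mathfrak{h}}$ and the intrinsic $\Ric(\underline{\mathfrak{h}})$. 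Applying the conformal transformation formula relating $\Ric(\widetilde{g})$ to $\Ric(g)$ under $\widetilde{g} = x^{-2}g$, the Einstein equation then splits into a tangential second-order evolution equation for $\underline{\mathfrak{h}}(x)$ together with two constraints coming from the $(x,i)$ and $(x,x)$ components.

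Next comes the indicial analysis of the tangential equation. Matching the coefficient of $x^{k-1}$, the top unknown $\mathfrak{h}^{(k)}$ enters through an indicial operator whose trace-free part is multiplication by $k(k-n)$, while the remaining source is a universal differential-polynomial expression in $\mathfrak{h}^{(0)},\dots,\mathfrak{h}^{(k-1)}$ and their tangential derivatives. Hence for $0 < k < n$ this operator is invertible and, by induction, $\mathfrak{h}^{(k)}$ is determined as $\mathcal{F}_{n,k}(\mathfrak{h},\dots,\partial^{k}\mathfrak{h})$; starting from $\mathfrak{h}^{(0)} = \mathfrak{h}$, the induced metric (by the last remark of \Cref{sec:FG_gauge}), the parity-preserving structure of the source forces all odd coefficients below order $n$ to vanish, which produces the purely even low-order part of \eqref{FG_exp_n_odd} and \eqref{FG_exp_n_even}.

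The heart of the argument is the critical order $k = n$, where $k(k-n) = 0$ and the recursion no longer pins down the trace-free part of $\mathfrak{h}^{(n)}$. When $n$ is odd the source at this order is consistent, $\mathfrak{h}^{(n)} = \mathfrak{t}$ enters as free data, and expanding the two constraints to this order yields $\trace_{\mathfrak{h}}\mathfrak{t} = 0$ and $\diver_{\mathfrak{h}}\mathfrak{t} = 0$; all higher coefficients are then generated from $\mathfrak{h}$ and $\mathfrak{t}$, giving the odd tail of \eqref{FG_exp_n_odd}. When $n$ is even the source at order $x^{n-1}$ generically fails to lie in the image of the now-singular indicial operator, so consistency forces a logarithmic term $x^{n}\log x$ whose coefficient $\mathfrak{h}^{(\star n)}$ is the obstruction tensor, a universal function of $\mathfrak{h}$; the constraints fix $\trace_{\mathfrak{h}}\mathfrak{h}^{(n)}$ and $\diver_{\mathfrak{h}}\mathfrak{h}^{(n)}$, leaving only the trace-free divergence-free part $\mathfrak{t}$ free, as in \eqref{FG_exp_n_even}.

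Finally, since the rescaled metric is only polyhomogeneous of class $\mathcal{C}^m_{\text{phg}}(\M\,|\,\mathfrak{I},E)$, the existence of the coefficients $\mathfrak{h}^{(k)}$ and $\mathfrak{h}^{(\star 2k)}$ up to the relevant order is guaranteed by the polyhomogeneity hypothesis; the Einstein equations merely pin down their values and mutual dependence, and only finitely many orders up to $m$ must be matched, so the recursion is finite and the expansion holds with remainder $o(x^{m})$. I expect the main obstacle to be exactly this critical-order analysis: proving that the indicial operator degenerates precisely at $k = n$, extracting the free datum together with its trace and divergence constraints from the two constraint equations, and --- for even $n$ --- deriving the forced logarithmic term and identifying its coefficient as the obstruction tensor.
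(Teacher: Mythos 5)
The paper does not prove this theorem: it is imported verbatim from Fefferman--Graham (\cite{FG85}, \cite{FG12}), so there is no in-paper argument to compare against. Your sketch is a faithful outline of the standard proof in those references --- the gauge-form reduction of $\Ric(\widetilde{g})=-n\widetilde{g}$, the indicial factor $k(k-n)$ on the trace-free part, the degeneracy at the critical order $k=n$ with the parity/consistency dichotomy between odd and even $n$, the trace and divergence constraints on $\mathfrak{t}$ from the $(x,i)$ and $(x,x)$ components, and the role of the polyhomogeneity hypothesis in reducing the problem to matching finitely many coefficients --- and is consistent with the explicit component formulas the paper later derives in the proof of \Cref{prop:expansion_curvature_tensors}.
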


\begin{rems} \,
    \begin{itemize}
        \item The pair $(\mathfrak{h},\mathfrak{t})$ is called the \emph{free data} on the boundary as its value is not imposed by the vacuum Einstein equations.
        
        \item The tensor field $\mathfrak{h}^{(n)}$ is called in the physical literature the \emph{boundary stress-energy tensor} or holographic stress-energy tensor. For more details, see for example~\cite{dHSS01,S01}. We will also refer to $\mathfrak{t}$ as the boundary stress-energy tensor for $n$ odd because it coincide with $\mathfrak{h}^{(n)}$ in this case.  
        
        \item The first coefficients $\mathfrak{h}^{(2k)}$ are given by
        \begin{subequations}
            \label{eq:coeff_FG_expansion}
            \begin{alignat}{2}
                \mathfrak{h}^{(2)}_{ij} &= - \mathfrak{l}_{ij} &\qquad \text{if } n \geq 3 \,, \\
                \mathfrak{h}^{(4)}_{ij} &= - \frac{1}{4(n-4)} \mathfrak{b}_{ij} + \frac{1}{4}\mathfrak{l}_{ik} \mathfrak{l}^k{}_j &\qquad \text{if } n \geq 5 \,,
            \end{alignat}
        \end{subequations}
        where $\mathfrak{l}_{ij}$ and $\mathfrak{b}_{ij}$ are the Schouten and Bach tensors of $\mathfrak{h}$.
        
        \item When $n$ is even, logarithmic terms may arise in the expansion~\eqref{FG_exp_n_even}. The first coefficient of the logarithmic terms denoted by $\mathfrak{h}^{(\star n)}$ is proportional to the $n$th obstruction tensor, see~\cite{FG85,GH04} for more details on the obstruction tensors. If it vanishes then all the other coefficients $\mathfrak{h}^{(\star 2k)}$ also vanish, leaving no logarithmic terms in the expansion. In particular, this is the case if $\mathfrak{h}$ is locally conformally flat or conformally Einstein (note that it is always true for $n=2$ by \Cref{Weyl_Schouten_theorem}).

        \item The absence of a first order term in \eqref{FG_exp_n_odd} and \eqref{FG_exp_n_even} implies that the boundary hypersurfaces $\mathfrak{S} \in \mathfrak{I}$ are totally geodesic for the Levi-Civita connection $\nabla$ of the rescaled metric $g := x^2 \widetilde{g}$, when $x$ is a boundary defining function verifying $(x^2\widetilde{g})^{-1}(dx,dx) = 1$ in a neighbourhood of the conformal boundary. In view of \Cref{lem:link_umbilical_totally_geodesic}, this implies that all $\mathfrak{S}\in\mathfrak{I}$ are umbilical in $(\M,[g])$. \qedhere
    \end{itemize}
\end{rems}

\noindent The fact that the coefficients $\mathfrak{h}^{(k)}$ and $\mathfrak{h}^{(\star k)}$ are tensor fields on the conformal boundary is not immediate. As noted before, diffeomorphisms of the conformal boundary are associated to transformations preserving the FG gauge with $x'=x$ or equivalently $\Theta=1$. Yet, in this case the other coordinates $z^k$ of the FG gauge $(x',z^0,\dots,z^{n-1})$ are independent of $x$ by \eqref{equadiff_z}. Thus the comparison between the FG expansions associated to two such FG gauges is direct and gives the tensorial covariance of the coefficients. Another important consequence is that these tensors $\mathfrak{h}^{(k)}$ and $\mathfrak{h}^{(\star k)}$ do not depend on the full FG gauge $(x,y^0,\dots,y^{n-1})$ but only on the boundary defining function $x$.

The behaviours of $\mathfrak{h}^{(k)}$ and $\mathfrak{h}^{(\star k)}$ under transformations preserving the FG gauge but inducing non-trivial conformal rescalings on the conformal boundary still are to be derived. Comparing the FG expansions is not as easy as above since now the function $\Theta$ depends a priori on $x$. It is nonetheless crucial to understand these behaviours, and especially these of the free data $(\mathfrak{h},\mathfrak{t})$, to obtain a gauge-independent meaning.

This was addressed in the physics literature under infinitesimal form \cite{ISTY00,S01}. For our purposes, it is sufficient to only look at the free data behaviour. That of $\mathfrak{h}$ is already given by~\Cref{lem:hfrak_change}. That of $\mathfrak{t}$ for odd $n$ is the object of the following lemma.

\begin{lem}
    \label{lem:tfrak_change}
    Assume $n$ is odd and the hypotheses of \Cref{thm:FG_exp} hold. Let $\phi_1 = (x,y^0,\dots,y^{n-1})$ and $\phi_2 = (x',z^0,\dots, z^{n-1})$ be two FG gauges on an open set $\mathcal{U}$. Pose $\Theta := x'/x \in \mathcal{C}^\infty(\mathcal{U},\R_+^\star)$ given by \Cref{lemma:bdf} and $\psi := (y^0,\dots,y^{n-1}) \mapsto (z^0,\dots,z^{n-1})$. For all $\mathfrak{S} \in \mathfrak{I}$ such that $\mathfrak{S} \cap \mathcal{U} \neq \varnothing$,
    \begin{equation}
        \psi^\star \mathfrak{t}' = \Theta^{2-n} \, \mathfrak{t} \quad \text{on } \mathfrak{S}\cap\mathcal{U} \,,
    \end{equation}
    where $\mathfrak{t}$ and $\mathfrak{t}'$ are the boundary stress-energy tensors associated to the two FG gauges by \Cref{thm:FG_exp}.
\end{lem}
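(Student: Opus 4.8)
The plan is to compare the two Fefferman--Graham expansions through the single tangential relation \eqref{cond3} of \Cref{lem_transfo_FG_gauge}, expand both sides as power series in $x$, and match the coefficient of $x^n$. Writing $\Theta_0 := \Theta|_{\mathfrak{S}\cap\mathcal{U}}$ and $z_0 := \psi$ for the boundary values of $\Theta$ and $z$, I first observe that \Cref{lem:hfrak_change} already accounts for the order-$x^0$ term, $\psi^\star\mathfrak{h}' = \Theta_0^2\,\mathfrak{h}$, so the only genuinely new information sits in the order-$x^n$ term, which encodes the transformation of $\mathfrak{t} = \mathfrak{h}^{(n)}$.

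The decisive structural input is a \emph{parity} statement. Because $n$ is odd, the expansion \eqref{FG_exp_n_odd} shows that $\underline{\mathfrak{h}}(x)$ (and likewise $\underline{\mathfrak{h}}'(x')$) contains only even powers of $x$ up to and including order $x^{n-1}$, the first odd coefficient being exactly $\mathfrak{t}$ (resp.\ $\mathfrak{t}'$) at order $x^n$. Feeding this into the differential equations \eqref{equadiff}, whose right-hand sides involve only $\underline{\mathfrak{h}}^{ij}$ and carry an explicit factor of $x$, a straightforward induction on the order shows that the coordinate-change functions $\Theta(x,y)$ and $z(x,y)$ are themselves even in $x$ up to order $x^{n-1}$, their first odd (and $\mathfrak{t}$-dependent) corrections appearing only at order $x^{n+2}$. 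This is the step I expect to be the main technical obstacle: it requires tracking the singular $1/x$ structure of \eqref{equadiff} through the inductive step and verifying that no odd power of $x$ is generated below order $x^n$.

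Granting the parity, the extraction is clean. On the left-hand side of \eqref{cond3}, the product $\underline{\mathfrak{h}}'_{kl}(x\Theta,z)\,\partial_{y^i}z^k\,\partial_{y^j}z^l$ is a product of quantities even in $x$ up to order $x^{n-1}$, so its only contribution at the odd order $x^n$ comes from the $\mathfrak{t}'$-term $\mathfrak{t}'_{kl}(z)(x\Theta)^n\partial_{y^i}z^k\partial_{y^j}z^l$; evaluating the prefactors on the boundary gives the coefficient $\mathfrak{t}'_{kl}(\psi)\,\Theta_0^n\,\partial_{y^i}\psi^k\,\partial_{y^j}\psi^l$. On the right-hand side, $(\underline{\mathfrak{h}}_{ij} - \tfrac{x^2}{\Theta^2}\partial_{y^i}\Theta\,\partial_{y^j}\Theta)\,\Theta^2$ is even in $x$ up to order $x^{n-1}$ apart from the $\mathfrak{t}$-term of $\underline{\mathfrak{h}}_{ij}$, so its $x^n$-coefficient is simply $\Theta_0^2\,\mathfrak{t}_{ij}$. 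Equating the two coefficients yields $\mathfrak{t}'_{kl}(\psi)\,\partial_{y^i}\psi^k\,\partial_{y^j}\psi^l = \Theta_0^{2-n}\,\mathfrak{t}_{ij}$, that is $\psi^\star\mathfrak{t}' = \Theta^{2-n}\mathfrak{t}$ on $\mathfrak{S}\cap\mathcal{U}$ since $\Theta_0 = \Theta|_{\mathfrak{S}\cap\mathcal{U}}$.

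As consistency checks I would note that this recovers the expected conformal weight $2-n$ of the boundary stress--energy tensor (for $n=3$, the weight $-1$ appearing in the equivalence relation of \Cref{sec:main_result}), and that no use is made of the trace-free or divergence-free property of $\mathfrak{t}$, reflecting that for odd $n$ there is no conformal anomaly and $\mathfrak{t}=\mathfrak{h}^{(n)}$ transforms homogeneously. An alternative to the parity argument, which I would keep in reserve, exploits that $\Theta$ and $z$ depend on $\mathfrak{t}$ only at orders strictly greater than $n$: subtracting the order-$x^n$ identities for two Einstein metrics sharing the same $\mathfrak{h}$ but different $\mathfrak{t}$ cancels all $\mathfrak{h}$-generated terms and isolates the same linear relation between $\mathfrak{t}'$ and $\mathfrak{t}$.
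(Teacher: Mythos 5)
Your proposal is correct and follows essentially the same route as the paper: establish from the differential equations \eqref{equadiff} that $\Theta$ (and $z$) are even in $x$ modulo terms of order at least $x^{n+2}$, then identify the coefficient of $x^n$ in \eqref{cond3} using the FG expansions \eqref{FG_exp_n_odd} of both gauges. The parity step you flag as the main obstacle is exactly the paper's argument, carried out by writing $\Theta(x,\cdot)=Q(x^2)+\grando{x^s}$ with $s$ odd maximal and deducing $s\geq 2+\min(s,n)$, hence $s\geq n+2$, from \eqref{equadiff_theta}.
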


\begin{proof}
   Take the open neighbourhood $\mathcal{V}$ of $\mathfrak{I}\cap\mathcal{U}$ in $\mathcal{U}$ given by \Cref{cor:FG_gauge}. From the FG expansion~\eqref{FG_exp_n_odd}, the functions $\underline{\mathfrak{h}}_{ij}$ verify locally
   \[ \underline{\mathfrak{h}}_{ij}(x) = P_{ij}(x^2) + \grando{x^n} \,, \]
   for some suitable polynomials $P_{ij}$ of degree at most $(n-1)/2$. Write locally
   \[ \Theta(x,.) = Q(x^2) + \grando{x^s} \,, \]
   where $s \in \N_0$ is odd and maximal, $Q$ is a polynomial of degree at most $(s-1)/2$ with values into $\mathcal{C}^\infty(\R^n,\R)$. From the differential equation~\eqref{equadiff_theta} on $\Theta$, one deduces that $s\geq2+\min(s,n)$ and thus $s\geq n+2$. Using \eqref{cond3} and the FG expansions \eqref{FG_exp_n_odd} for the two FG gauges, the result follows by identifying the term of order $n$ in $x$.
\end{proof}

\noindent For $n$ odd, the previous lemma enables to extend the definition of the boundary stress-energy tensor  for all boundary defining functions, not only those verifying \eqref{eq:FG_bdf}.

\begin{defi}
    \label{def:extension_frakt}
    Assume $n$ is odd and the hypotheses of \Cref{thm:FG_exp} hold. Let $x'$ be a boundary defining function of $\mathfrak{I}$. Define its boundary stress-energy tensor by
    \[ \mathfrak{t}' := \exp(\theta)^{2-n} \mathfrak{t} \,, \]
    where $\theta$ is a smooth function on a neighbourhood of $\mathfrak{I}$ such that $x := \exp(-\theta)x'$ is a boundary defining function of $\mathfrak{I}$ verifying \eqref{eq:FG_bdf} on this neighbourhood and $\mathfrak{t}$ is the boundary stress-energy tensor associated to $x$.
\end{defi}

\begin{rem}
    The existence of such a function $\theta$ is ensured in a neighbourhood of the conformal boundary, see the proof of the existence of FG gauges. Although $\theta$ is not unique, the definitions for two different functions $\theta$ coincide by \Cref{lem:tfrak_change}. 
\end{rem}

\noindent For $n$ odd, \Cref{lem:hfrak_change} and \Cref{lem:tfrak_change} lead to the following definitions.

\begin{defi}
	\label{def:equivalence_relation_pair}
    Let $n\geq3$ be an odd integer. Two triples $(\mathfrak{S},\mathfrak{h},\mathfrak{t})$ and $(\mathfrak{S}',\mathfrak{h}',\mathfrak{t}')$ where
    \begin{itemize}
        \item $(\mathfrak{S},\mathfrak{h})$ and $(\mathfrak{S}',\mathfrak{h}')$ are pseudo-Riemannian manifolds of dimension $n$,
        \item $\mathfrak{t}$ and $\mathfrak{t}'$ are trace-free and divergence-free symmetric 2-tensors on $(\mathfrak{S},\mathfrak{h})$ and $(\mathfrak{S}',\mathfrak{h}')$ respectively,
    \end{itemize}
    are said to be equivalent if there exists a confomorphism $\psi : (\mathfrak{S},\mathfrak{h}) \to (\mathfrak{S}',\mathfrak{h}')$ such that
    \[ \qquad \psi^\star\mathfrak{t}' = \Omega^{2-n} \mathfrak{t} \,, \]
    where $\Omega$ is the conformal factor associated to $\psi$. The classes for this equivalence relation are denoted by $[(\mathfrak{S},\mathfrak{h},\mathfrak{t})]$, or for simplicity $[(\mathfrak{h},\mathfrak{t})]$.
\end{defi}

\begin{defi}
    \label{def:free_data_class}
    Assume $n$ is odd. The free data class of a connected component $\mathfrak{S}$ of $\mathfrak{I}$ is the class $[(\mathfrak{h},\mathfrak{t})]$ where $(\mathfrak{h},\mathfrak{t})$ is the free data on $\mathfrak{S}$ associated to any boundary defining function $x$ of $\mathfrak{I}$. 
\end{defi}

\subsubsection{Asymptotics of curvature tensors}
\label{sec:expansions_curvature_tensors}

Fix a FG gauge $(x,y^0,\dots,y^{n-1})$ on an open subset $\mathcal{U}$. Recall that the rescaled metric is in this case given by
\[ g := x^2 \widetilde{g} = dx^2 + \underline{\mathfrak{h}}(x) \,. \]
The goal of this section is to derive the asymptotics of the curvature tensors of $g$ near the conformal boundary from the FG expansion presented in the previous section. Remind that since $g$ extends to a $\mathcal{C}^2$ metric on $\M$, its Weyl and Schouten tensors are continuous up to the conformal boundary $\mathfrak{I}$.

\begin{prop}
    \label{prop:expansion_curvature_tensors}
    Assume $n$ is odd and the hypotheses of \Cref{thm:FG_exp} hold. Then the Weyl and Schouten tensor of the rescaled metric $g$ verify the following asymptotic developments in $x$ on any compact neighbourhood of a point $p \in \mathfrak{I} \cap \mathcal{U}$
    \begin{subequations}
        \begin{align}
            \label{eq:asymp_Schouten_frakt}
            L_{ij} &= \mathfrak{l}_{ij} - \frac{n}{2} x^{n-2} \mathfrak{t}_{ij}  + \grando{x^2} \,, \\
            L_{ix} &= \grando{x^3} \,, \\
            L_{xx} &= \frac{\mathfrak{l}_i{}^i}{n} + \grando{x^2} \,, \\
            \label{eq:asymp_Weyl_frakt}
            W^x{}_{ixj} &=  - \frac{n(n-2)}{2} x^{n-2} \mathfrak{t}_{ij}  +\grando{x^2} \,, \\
            \label{eq:asymp_Weyl_frakc}
            W^x{}_{ijk} &= x \mathfrak{c}_{jki} - nx^{n-1} \mathfrak{D}_{[j} \mathfrak{t}_{k]i} + \grando{x^3} \,, \\
            \label{eq:asymp_Weyl_frakw}
            W^i{}_{jkl} &= \mathfrak{w}^i{}_{jkl} + nx^{n-2} \left(\delta^i{}_{[k} \mathfrak{t}_{l]j} - \mathfrak{h}_{j[k}\mathfrak{t}_{l]i}\right) + \grando{x^2} \,,
        \end{align}
    \end{subequations}
	where $(\mathfrak{h},\mathfrak{t})$ is the free data associated to the boundary defining function $x$ and $\mathfrak{l}_{ij}$, $\mathfrak{c}_{ijk}$ and $\mathfrak{w}^i{}_{jkl}$ are respectively the Schouten, Cotton and Weyl tensors of $\mathfrak{h}$.
\end{prop}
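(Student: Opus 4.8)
The plan is to work entirely in the fixed FG gauge, where $g = dx^2 + \underline{\mathfrak{h}}(x)$ and the coordinate field $\partial_x$ is a unit normal ($\varepsilon = +1$) to the level sets $\{x = \mathrm{const}\}$. First I would record the extrinsic curvature of these level sets for the Levi-Civita connection of $g$, namely $K_{ij} = \langle \omega^\perp, \nabla_{\partial_i}\partial_j\rangle = \Gamma_i{}^x{}_j = -\tfrac12 \partial_x \underline{\mathfrak{h}}_{ij}$, using \Cref{lem:connection_coeff} with $\widehat{\kappa}=0$. Inserting the FG expansion \eqref{FG_exp_n_odd} with $\mathfrak{h}^{(2)} = -\mathfrak{l}$ from \eqref{eq:coeff_FG_expansion} and $\mathfrak{h}^{(n)} = \mathfrak{t}$ yields $K_{ij} = \mathfrak{l}_{ij}\,x - \tfrac{n}{2}\mathfrak{t}_{ij}\,x^{n-1} + \grando{x^3}$, and, since $\mathfrak{t}$ is trace-free, $K = K^i{}_i = \mathfrak{l}_i{}^i\,x + \grando{x^3}$. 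Since $\underline{\mathfrak{h}}(x)$ differs from $\mathfrak{h}$ only at order $x^2$, the induced Levi-Civita connection, Schouten tensor $l_{ij}$, Cotton tensor and Weyl tensor $w^i{}_{jkl}$ equal the corresponding tensors of $\mathfrak{h}$ up to $\grando{x^2}$; crucially, as $\underline{\mathfrak{h}}(x)$ is even in $x$ up to its order-$x^n$ term, the odd-order corrections to $l_{ij}$ only start at $x^n$, so $l_{ij}$ carries no $x^{n-2}$ term.

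Next I would determine the Schouten tensor of $g$ itself from the Einstein condition. Since $\widetilde{g}$ solves \eqref{eq:VE} with $\Lambda = -n(n-1)/2$, its Schouten tensor is $\widetilde{L} = -\tfrac12 \widetilde{g}$. Viewing the Levi-Civita connections of $g$ and $\widetilde{g}$ as two Weyl connections for $[\widetilde{g}]$ whose covector difference is $\Upsilon := d\ln x$, the transformation rule \eqref{transfo_schouten} collapses, after using $S_{\mu\nu}{}^{\sigma\rho}\Upsilon_\sigma\Upsilon_\rho = 2\Upsilon_\mu\Upsilon_\nu - g_{\mu\nu}\,g^{xx}/x^2$, to $L_{\mu\nu} = -\nabla_\mu\Upsilon_\nu - \Upsilon_\mu\Upsilon_\nu$. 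Evaluating the Christoffel symbols in the FG gauge gives $L_{ij} = K_{ij}/x = \mathfrak{l}_{ij} - \tfrac{n}{2}\mathfrak{t}_{ij}\,x^{n-2} + \grando{x^2}$, which is \eqref{eq:asymp_Schouten_frakt}, while the normal components $L_{ix}$ and $L_{xx}$ follow from the same formula (equivalently from the contracted Codazzi equation \eqref{eq:codazzi_schouten} and the trace of the Gauss equation). The statement $L_{ix} = \grando{x^3}$ is exactly where one must use the divergence-free property $\mathfrak{D}^i\mathfrak{t}_{ij} = 0$ together with the contracted Bianchi identity $\mathfrak{D}^i\mathfrak{l}_{ij} = \mathfrak{D}_j\mathfrak{l}_i{}^i$, which cancel the would-be $x^{n-1}$ and $x$ contributions in $\mathfrak{D}_j K - \mathfrak{D}^i K_{ij}$.

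For the Weyl tensor I would invoke the hypersurface decomposition of \Cref{lem:decomposition_Weyl_cand} in its Codazzi form \Cref{cor:codazzi}, specialized to $\widehat{\kappa}=0$, $\varepsilon=+1$. The electric part $E_{ij} = W^x{}_{ixj}$ is given by \eqref{eq:codazzi_E} as $E_{ij} = (n-2)\bigl(L_{ij} - l_{ij} + F_{ij}(K)\bigr)$; since $F(K) = \grando{x^2}$ and, by the parity remark above, $L_{ij} - l_{ij} = -\tfrac{n}{2}\mathfrak{t}_{ij}\,x^{n-2} + \grando{x^2}$, this gives $E_{ij} = -\tfrac{n(n-2)}{2}\mathfrak{t}_{ij}\,x^{n-2} + \grando{x^2}$, i.e.\ \eqref{eq:asymp_Weyl_frakt}. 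The magnetic part $M = W^x{}_{kij}$, controlled by \eqref{eq:codazzi_M}, equals $G(K,0)$ plus a trace term, with $G_{klj}(K,0) = 2D_{[k}K_{l]j} = x\,\mathfrak{c}_{klj} - n\,x^{n-1}\mathfrak{D}_{[k}\mathfrak{t}_{l]j} + \grando{x^3}$; the trace term is $\grando{x^3}$ precisely because $\mathfrak{t}$ is trace- and divergence-free, giving \eqref{eq:asymp_Weyl_frakc}. The purely tangential part is recovered from \eqref{eq:codazzi_P}: the quadratic extrinsic-curvature terms are $\grando{x^2}$, so $W^i{}_{jkl} = \mathfrak{w}^i{}_{jkl} - \tfrac{2}{n-2}S_{j[k}{}^{im}E_{l]m} + \grando{x^2}$, and substituting $E$ together with the identity $S_{j[k}{}^{im}\mathfrak{t}_{l]m} = \delta^i{}_{[k}\mathfrak{t}_{l]j} - \mathfrak{h}_{j[k}\mathfrak{t}_{l]i}$ (from \eqref{eq:def_S}) produces \eqref{eq:asymp_Weyl_frakw}.

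The routine part is computing the Christoffel symbols of $g = dx^2 + \underline{\mathfrak{h}}(x)$ and substituting the FG coefficients. The delicate part, and the main obstacle, is the order counting: one must verify (i) that the induced Schouten tensor $l_{ij}$ contributes no $x^{n-2}$ term, a parity argument on the expansion of $\underline{\mathfrak{h}}(x)$, so that the coefficient of $x^{n-2}$ in $L_{ij}-l_{ij}$ is unambiguously $-\tfrac{n}{2}\mathfrak{t}_{ij}$; and (ii) that every trace contraction entering the normal Schouten component and the magnetic Weyl part, which a priori would appear at order $x^{n-1}$, cancels by virtue of $\mathfrak{t}$ being trace-free and divergence-free for $\mathfrak{h}$ and of the contracted Bianchi identity for $\mathfrak{l}$. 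These cancellations are the only places where the defining structure of the free data, rather than the mere smoothness of $g$, is genuinely used.
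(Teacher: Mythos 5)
Your proof is correct and reaches the stated asymptotics, but by a genuinely different route from the paper. The paper's proof is a direct computation: it writes out the Christoffel symbols, the Riemann, Ricci and Schouten tensors of $g = dx^2 + \underline{\mathfrak{h}}(x)$ in full (equations \eqref{formules_Riemann}--\eqref{asymp_Weyl}) and only then substitutes the FG expansion, doing the trace gymnastics by hand. You instead factor the computation through two structural inputs: the exact Einstein condition $\widetilde{L}=-\tfrac12\widetilde{g}$ combined with the Weyl-connection transformation rule \eqref{transfo_schouten}, which collapses the Schouten tensor of $g$ to essentially $K_{ij}/x$; and the Codazzi decomposition of \Cref{cor:codazzi}, which converts the Weyl components into $L-l$, $2D_{[k}K_{l]j}$ and quadratic extrinsic-curvature terms. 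This makes visible exactly where each cancellation comes from (parity of the expansion for the absence of an $x^{n-2}$ term in $l_{ij}$; trace- and divergence-freeness of $\mathfrak{t}$ together with the contracted Bianchi identity for $\mathfrak{l}$ for the normal and magnetic components), at the cost of having to track which metric and connection each tensor in \Cref{cor:codazzi} refers to. Both routes are valid; the identification $S_{j[k}{}^{im}\mathfrak{t}_{l]m}=\delta^i{}_{[k}\mathfrak{t}_{l]j}-\mathfrak{h}_{j[k}\mathfrak{t}_{l]}{}^i$ and the order counting you flag as the delicate steps are indeed the only nontrivial points, and you handle them correctly.

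One point deserves attention: your conformal-transformation formula, applied to the normal-normal component, gives $L_{xx}=-\nabla_x\nabla_x\ln x-1/x^2=0$ identically (since $\Gamma_x{}^x{}_x=0$ in the FG gauge), i.e.\ $L_{xx}=\grando{x^2}$, whereas the Proposition asserts $L_{xx}=\mathfrak{l}_i{}^i/n+\grando{x^2}$. Your answer is the correct one. A direct check on the exact AdS metric in its FG gauge (for which $\mathfrak{l}_i{}^i=2(n-2)\neq0$) or on the conformally Einstein boundaries of \Cref{sec:special_boundaries} confirms $L_{xx}(0)=0$; the discrepancy traces back to a coefficient error in the displayed formula \eqref{formules_Schouten} for $L_{xx}$ (the factor $2n-1$ multiplying $\underline{\mathfrak{h}}^{ij}\partial_x^2\underline{\mathfrak{h}}_{ij}$ should be $n-1$), after which the would-be leading term $\mathfrak{l}_i{}^i$ cancels against $-\mathfrak{r}/(2n(n-1))$. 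This component is not used elsewhere in the paper, so nothing downstream is affected, but do not be alarmed that your method refuses to reproduce that line of the statement.
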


\begin{rem}
    The coefficients of order $n-2$ in \eqref{eq:asymp_Schouten_frakt}, \eqref{eq:asymp_Weyl_frakt}, \eqref{eq:asymp_Weyl_frakw} and order $n-1$ in \eqref{eq:asymp_Weyl_frakc} are exact, even for odd dimensions $n\geq5$ for which these terms can be included in the $\mathcal{O}$.
\end{rem}

\begin{proof}
    Denote the Levi-Civita connection, Christoffel symbols, Riemman tensor, Ricci tensor and scalar curvature of $\underline{\mathfrak{h}}(x)$ respectively by $\underline{\mathfrak{D}}$, $\underline{\mathfrak{G}}_i{}^k{}_j$, $\underline{\mathfrak{r}}^i{}_{jkl}$, $\underline{\mathfrak{r}}_{ij}$, $\underline{\mathfrak{r}}$. The Christoffel symbols of the rescaled metric $g$ are given by
    \begin{align*}
        \Gamma_x{}^x{}_x &= 0 \,, & \Gamma_x{}^x{}_i &= 0 \,, & \Gamma_x{}^i{}_x &= 0 \,, \\
        \Gamma_i{}^x{}_j &= -\frac{1}{2} \partial_x \underline{\mathfrak{h}}_{ij} \,, & \Gamma_i{}^j{}_x &= \frac{1}{2} \underline{\mathfrak{h}}^{jk} \partial_x \underline{\mathfrak{h}}_{ik} \,, & \Gamma_i{}^k{}_j &= \underline{\mathfrak{G}}_i{}^k{}_j \,.
    \end{align*}
    Therefore the Riemann tensor of $g$ is
    \begin{subequations}
    	\label{formules_Riemann}
    \begin{align}
        R^x{}_{ixj} &= - \frac{1}{2} \partial_x^2 \underline{\mathfrak{h}}_{ij} + \frac{1}{4} \underline{\mathfrak{h}}^{kl} \partial_x \underline{\mathfrak{h}}_{ik}  \partial_x \underline{\mathfrak{h}}_{jl} \,, \\
        R^x{}_{ijk} &= -\frac{1}{2} \left( \underline{\mathfrak{D}}_j (\partial_x \underline{\mathfrak{h}}_{ki}) - \underline{\mathfrak{D}}_k (\partial_x \underline{\mathfrak{h}}_{ji}) \right) \,, \\
        R^i{}_{jkl} &= \underline{\mathfrak{r}}^i{}_{jkl} - \frac{1}{4} \underline{\mathfrak{h}}^{ip} \left( \partial_x \underline{\mathfrak{h}}_{pk} \partial_x \underline{\mathfrak{h}}_{jl} - \partial_x \underline{\mathfrak{h}}_{jk} \partial_x \underline{\mathfrak{h}}_{pl} \right) \,,
    \end{align}
	\end{subequations}
    where one defines
    \[ \underline{\mathfrak{D}}_j(\partial_x \underline{\mathfrak{h}}_{ki}) := \partial_j \partial_x \underline{\mathfrak{h}}_{ki} - \underline{\mathfrak{G}}_j{}^l{}_k \partial_x \underline{\mathfrak{h}}_{li} - \underline{\mathfrak{G}}_j{}^l{}_i \partial_x \underline{\mathfrak{h}}_{kl} \,. \]
    It follows that
    \begin{subequations}
    	\label{formules_Ricci}
    \begin{align}
        R_{xx} &= - \frac{1}{2} \underline{\mathfrak{h}}^{ij} \partial_x^2 \underline{\mathfrak{h}}_{ij} + \frac{1}{4} \underline{\mathfrak{h}}^{ij} \underline{\mathfrak{h}}^{kl} \partial_x \underline{\mathfrak{h}}_{ik} \partial_x \underline{\mathfrak{h}}_{jl} \,, \\
        R_{xi} &= \frac{1}{2} \left(\underline{\mathfrak{D}}^j \partial_x \underline{\mathfrak{h}}_{ji} - \underline{\mathfrak{D}}_i \left(\underline{\mathfrak{h}}^{jk}\partial_x \underline{\mathfrak{h}}_{jk}\right) \right) \,, \\
        R_{ij} &= \underline{\mathfrak{r}}_{ij} - \frac{1}{4} \partial_x \underline{\mathfrak{h}}_{ij} \left( \underline{\mathfrak{h}}^{kl} \partial_x \underline{\mathfrak{h}}_{kl} \right) + \frac{1}{2} \underline{\mathfrak{h}}^{kl} \partial_x \underline{\mathfrak{h}}_{ik} \partial_x \underline{\mathfrak{h}}_{jl} - \frac{1}{2} \partial_x^2 \underline{\mathfrak{h}}_{ij} \,, \\
        R &= \underline{\mathfrak{r}} - \frac{1}{4} \left(\underline{\mathfrak{h}}^{kl} \partial_x \underline{\mathfrak{h}}_{kl}\right)^2 + \frac{3}{4} \underline{\mathfrak{h}}^{ij} \underline{\mathfrak{h}}^{kl} \partial_x \underline{\mathfrak{h}}_{ik} \partial_x \underline{\mathfrak{h}}_{jl} - \underline{\mathfrak{h}}^{ij} \partial_x^2 \underline{\mathfrak{h}}_{ij} \,.
    \end{align}
	\end{subequations}
    From \eqref{formules_Ricci} and the definition of the Schouten tensor \eqref{def:schouten}, one deduces
    \begin{subequations}
        \label{formules_Schouten}
        \begin{align}
            L_{ij} &= \frac{1}{n-1} \Bigg( \underline{\mathfrak{r}}_{ij} - \frac{1}{2} \partial_x^2 \underline{\mathfrak{h}}_{ij} - \frac{1}{4} \partial_x \underline{\mathfrak{h}}_{ij} \left( \underline{\mathfrak{h}}^{kl} \partial_x \underline{\mathfrak{h}}_{kl} \right) + \frac{1}{2} \underline{\mathfrak{h}}^{kl} \partial_x \underline{\mathfrak{h}}_{ik} \partial_x \underline{\mathfrak{h}}_{jl} \nonumber \\
            &\quad - \frac{\underline{\mathfrak{h}}_{ij}}{2n}  \left( \underline{\mathfrak{r}} - \underline{\mathfrak{h}}^{kl} \partial_x^2 \underline{\mathfrak{h}}_{kl} - \frac{1}{4} \left( \underline{\mathfrak{h}}^{kl} \partial_x \underline{\mathfrak{h}}_{kl} \right)^2 + \frac{3}{4} \underline{\mathfrak{h}}^{pq} \underline{\mathfrak{h}}_{kl} \partial_x \underline{\mathfrak{h}}_{pk} \partial_x \underline{\mathfrak{h}}_{ql} \right) \Bigg) \,, \\
            L_{ix} &= \frac{1}{2(n-1)} \left(\underline{\mathfrak{D}}^j \partial_x \underline{\mathfrak{h}}_{ji} - \underline{\mathfrak{D}}_i \left(\underline{\mathfrak{h}}^{jk}\partial_x \underline{\mathfrak{h}}_{jk}\right) \right) \,, \\
            L_{xx} &= \frac{1}{2n(n-1)} \Big( - \underline{\mathfrak{r}} + \frac{1}{4} \left(\underline{\mathfrak{h}}^{kl} \partial_x \underline{\mathfrak{h}}_{kl}\right)^2 + \frac{2n-3}{4} \underline{\mathfrak{h}}^{ij} \underline{\mathfrak{h}}^{kl} \partial_x \underline{\mathfrak{h}}_{ik} \partial_x \underline{\mathfrak{h}}_{jl} \nonumber \\
            &\quad - (2n-1) \underline{\mathfrak{h}}^{ij} \partial_x^2 \underline{\mathfrak{h}}_{ij} \Big) \,.
        \end{align}
    \end{subequations}
    By combining the decomposition of the Riemann tensor \eqref{eq:decomposition_Riemann} with equations \eqref{formules_Riemann} and \eqref{formules_Schouten}, one has
    \begin{subequations}
        \label{asymp_Weyl}
        \begin{align}
            W^x{}_{ixj} &= \frac{1}{n-1} \tf_{\underline{\mathfrak{h}}(x)} \Big( - \underline{\mathfrak{r}}_{ij} - \frac{n-2}{2} \partial_x^2 \underline{\mathfrak{h}}_{ij} + \frac{(n-3)}{4} \underline{\mathfrak{h}}^{kl} \partial_x \underline{\mathfrak{h}}_{ik} \partial_x \underline{\mathfrak{h}}_{jl} + \frac{1}{4} \partial_x \underline{\mathfrak{h}}_{ij} \left(\underline{\mathfrak{h}}^{kl}\partial_x\underline{\mathfrak{h}}_{kl}\right)  \Big) \,, \\
            W^x{}_{ijk} &= - \underline{\mathfrak{D}}_{[j} \partial_x \underline{\mathfrak{h}}_{k]i} + \frac{1}{n-1} \underline{\mathfrak{h}}_{i[j} \left( \underline{\mathfrak{D}}^l \partial_x \underline{\mathfrak{h}}_{k]l} - \underline{\mathfrak{D}}_{k]} \left( \underline{\mathfrak{h}}^{lp} \partial_x \underline{\mathfrak{h}}_{lp} \right) \right) \,, \\
            W^i{}_{jkl} &= \underline{\mathfrak{r}}^i{}_{jkl} - \frac{2}{n-1} \Big( \delta^i{}_{[k} \left( \underline{\mathfrak{r}}_{l]j} - \frac{1}{2} \partial_x^2 \underline{\mathfrak{h}}_{l]j} \right) - \underline{\mathfrak{h}}^{ip} \underline{\mathfrak{h}}_{j[k} \left( \underline{\mathfrak{r}}_{l]p} - \frac{1}{2} \partial_x^2 \underline{\mathfrak{h}}_{l]p} \right) \nonumber \\
            &- \frac{1}{n} \delta^i{}_{[k} \underline{\mathfrak{h}}_{l]j} \left( \underline{\mathfrak{r}} - \underline{\mathfrak{h}}^{pq} \partial_x^2 \underline{\mathfrak{h}}_{pq} \right) \Big) + \mathcal{F}(\partial_x \underline{\mathfrak{h}}) \,,
        \end{align}
    \end{subequations}
    where $\tf_{\underline{\mathfrak{h}}(x)}$ is the trace-free part with respect to $\underline{\mathfrak{h}}(x)$ and $\mathcal{F}(\partial_x\underline{\mathfrak{h}})$ represents terms quadratic in $\partial_x \underline{\mathfrak{h}}$. By assumption, the FG expansion~\eqref{FG_exp_n_odd} holds. Using \eqref{eq:coeff_FG_expansion}, one has
    \begin{align*}
        \underline{\mathfrak{h}}_{ij} &= \mathfrak{h}^{(0)}_{ij} + \mathfrak{h}^{(2)}_{ij} x^2 + \mathfrak{h}^{(n)}_{ij} x^n + \grando{x^4} \\
        &= \mathfrak{h}_{ij} - \mathfrak{l}_{ij} x^2 + \mathfrak{t}_{ij} x^n + \grando{x^4} \,,
    \end{align*}
    and
    \begin{align*}
        \underline{\mathfrak{h}}^{ij} &= \mathfrak{h}^{(0)ij} - \mathfrak{h}^{(2)ij} x^2 - \mathfrak{h}^{(n)ij} x^n + \grando{x^4} \\
        &= \mathfrak{h}^{ij} + \mathfrak{l}^{ij} x^2 - \mathfrak{t}^{ij} x^n + \grando{x^4} \,,
    \end{align*}
    where the indices are raised with the metric $\mathfrak{h}$. Plugging these expansions in \eqref{formules_Schouten} gives
    \begin{align*}
        L_{ij} &= \frac{1}{n-1} \left( \mathfrak{r}_{ij} + \mathfrak{l}_{ij} - \frac{n(n-1)}{2} \mathfrak{t}_{ij} x^{n-2} - \frac{\mathfrak{r}+2\mathfrak{l}_k{}^k}{2n} \mathfrak{h}_{ij} \right) + \grando{x^2} \\
        &= \mathfrak{l}_{ij} - \frac{n}{2} \mathfrak{t}_{ij} x^{n-2} + \grando{x^2} \,, \\
        L_{ix} &= - \frac{x}{n-1} \mathfrak{c}_{ji}{}^j + \frac{nx^{n-1}}{2(n-1)} \left( \mathfrak{D}^j \mathfrak{t}_{ij} - \mathfrak{D}_i(\mathfrak{h}^{kl}\mathfrak{t}_{kl}) \right) + \grando{x^3} = \grando{x^3} \,, \\
            L_{xx} &= \frac{1}{2n(n-1)} \left(-\mathfrak{r}+4(n-1)\mathfrak{l}_i{}^i\right) + \grando{x^2} = \frac{\mathfrak{l}_i{}^i}{n}  + \grando{x^2}. 
    \end{align*}
    In the same way with \eqref{asymp_Weyl}, one finds
    \begin{subequations}
        \begin{align*}
            W^x{}_{ixj} &= \frac{1}{n-1} \tf_\mathfrak{h} \left( - \mathfrak{r}_{ij} + (n-2) \mathfrak{l}_{ij} - \frac{(n-2)n(n-1)}{2} \mathfrak{t}_{ij} x^{n-2} \right) + \grando{x^2} \\
            &= - \frac{n(n-2)}{2} x^{n-2} \mathfrak{t}_{ij} + \grando{x^2} \,, \\
            W^x{}_{ijk} &= x \mathfrak{c}_{jki} - nx^{n-1} \mathfrak{D}_{[j} \mathfrak{t}_{k]i} + \grando{x^3} \,, \\
            W^i{}_{jkl} &= \mathfrak{r}^i{}_{jkl} - 2 S_{j[k}{}^{ip} \mathfrak{l}_{l]p} + nx^{n-2} S_{j[k}{}^{ip} \mathfrak{t}_{l]p} + \grando{x^2} \\
            &= \mathfrak{w}^i{}_{jkl} + nx^{n-2} \left(\delta^i{}_{[k} \mathfrak{t}_{l]j} - \mathfrak{h}_{j[k}\mathfrak{t}_{l]i}\right) + \grando{x^2} \,.
        \end{align*}
    \end{subequations}
    Hence the results.
\end{proof}

\noindent One can generalise the results of \Cref{prop:expansion_curvature_tensors} by considering any boundary defining function (and thus leaving the FG gauge) up to the cost of restricting to the leading term of the expansions. This is particularly interesting for the Weyl components which is the object of the next corollary.

\begin{cor}
    Assume $n$ is odd and the hypotheses of \Cref{thm:FG_exp} hold. Let $x$ be a local boundary defining function of the conformal boundary $\mathfrak{I}$ on an open set $\mathcal{U}$ and $(\mathfrak{h},\mathfrak{t})$ be its associated free data. Then
    \begin{subequations}
        \label{eq:lim_weyl_frakwc}
        \begin{align}
            \left. \left( \lim_{x\to0^+} W \right) \right|_{T^\star \mathfrak{I} \times (T\mathfrak{I})^3} &= \mathfrak{w} \,, \\
            \left. \left( \lim_{x\to0^+} x^{-1} W(dx,.,.,.) \right) \right|_{(T\mathfrak{I})^3} &= \mathfrak{c} \,,
        \end{align}
    \end{subequations}
    where $W$ is the Weyl tensor of the rescaled metric $g := x^2\widetilde{g}$, $\mathfrak{w}$ and $\mathfrak{c}$ are the Weyl and Cotton tensors of $\mathfrak{h}$. Furthermore, if $n=3$ then one also has
    \begin{equation}
        \label{eq:lim_weyl_frakt}
        - \frac{2}{3} \left. \left(\lim_{x\to0^+} x^{-1} W(dx,.,\grad_g x,.) \right) \right|_{(T\mathfrak{I})^2} = \mathfrak{t} \,,
    \end{equation}
    where $\grad_g$ is the gradient with respect to the rescaled metric $g$.
\end{cor}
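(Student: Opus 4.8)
The plan is to reduce everything to a Fefferman--Graham gauge, in which the three contractions become exactly the frame components computed in \Cref{prop:expansion_curvature_tensors}, and then transfer the result to an arbitrary boundary defining function by conformal covariance. Two preliminary facts are needed. First, for \emph{any} boundary defining function $x$ of $\mathfrak{I}$ one has $(x^2\widetilde{g})^{-1}(dx,dx)=1$ on $\mathfrak{I}$ (remark following \Cref{def_aAdS}), so that with $g:=x^2\widetilde{g}$ the covector $dx$ and the vector $\grad_g x$ restrict to a unit conormal and a unit normal along $\mathfrak{I}$; this is what makes the leading contractions in \eqref{eq:lim_weyl_frakwc}--\eqref{eq:lim_weyl_frakt} meaningful. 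Second, by the construction of FG gauges there is a boundary defining function $x'$ satisfying \eqref{eq:FG_bdf} near $\mathfrak{I}$ and serving as first coordinate of a FG gauge; by \Cref{lemma:bdf} we write $x'=\Theta x$ with $\Theta\in\mathcal{C}^\infty(\mathcal{U},\R_+^\star)$, whence $g':=(x')^2\widetilde{g}=\Theta^2 g$. Since the $(1,3)$ Weyl tensor is conformally invariant, $W:=W(g)=W(g')$ is a single tensor field, continuous up to $\mathfrak{I}$.

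In the FG gauge attached to $x'$ one has $g'=dx'^2+\underline{\mathfrak{h}}'(x')$, hence $dx'=\omega^\perp$ and $\grad_{g'}x'=e_\perp=\partial_{x'}$ in the adapted frame. Therefore $W(dx',e_i,e_j,e_k)=W^\perp{}_{ijk}$, $W(dx',e_i,\grad_{g'}x',e_j)=W^\perp{}_{i\perp j}$, and the totally tangential components are $W^i{}_{jkl}$. Their limits are read off from \eqref{eq:asymp_Weyl_frakw}, \eqref{eq:asymp_Weyl_frakc} and \eqref{eq:asymp_Weyl_frakt}, giving $\mathfrak{w}'$, $\mathfrak{c}'$ and (for $n=3$) $-\tfrac{3}{2}\mathfrak{t}'$ respectively, which settles the corollary when $x$ is itself a FG boundary defining function. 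The first identity of \eqref{eq:lim_weyl_frakwc} then holds in general at once: the tangential components $W^i{}_{jkl}$ do not involve the normal direction, so they are insensitive to the rescaling $g\mapsto g'$, and their limit is $\mathfrak{w}'=\mathfrak{w}$, since the Weyl tensor of $\mathfrak{h}=\Theta^{-2}\mathfrak{h}'$ (see \Cref{lem:hfrak_change}) coincides with that of $\mathfrak{h}'$ as a $(1,3)$ tensor.

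For the remaining two identities I would substitute $x=\Theta^{-1}x'$, $dx=\Theta^{-1}dx'-x'\Theta^{-2}d\Theta$ and $\grad_g x=\Theta\grad_{g'}x'-x'\grad_{g'}\Theta$ (the last from $g^{-1}=\Theta^2(g')^{-1}$), expand by multilinearity and let $x'\to0^+$. For the second identity of \eqref{eq:lim_weyl_frakwc}, with tangent $X,Y,Z$, this gives
\[
\lim_{x\to0^+}x^{-1}W(dx,X,Y,Z)=\lim_{x'\to0^+}\Big[(x')^{-1}W(dx',X,Y,Z)-\Theta^{-1}W(d\Theta,X,Y,Z)\Big].
\]
The first term tends to $\mathfrak{c}'$; in the second only the tangential part of $d\Theta$ survives, since its normal part pairs with $W^\perp{}_{jkl}=\grando{x'}$, yielding the correction $-\mathfrak{w}'(\mathfrak{D}\ln\Theta,\cdot,\cdot,\cdot)$. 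This is exactly the conformal transformation law \eqref{transfo_cotton} of the Cotton tensor, so the limit equals $\mathfrak{c}$; for $n=3$ it reduces to $\mathfrak{c}'=\mathfrak{c}$ as $\mathfrak{w}\equiv0$. The same expansion for \eqref{eq:lim_weyl_frakt} ($n=3$) produces the main term $\Theta\,(x')^{-1}W(dx',X,\grad_{g'}x',Y)\to-\tfrac{3}{2}\,\Theta\,\mathfrak{t}'(X,Y)$ together with cross terms of the form $W(dx',\cdot,\grad_{g'}\Theta,\cdot)$ and $W(d\Theta,\cdot,\grad_{g'}x',\cdot)$; both involve a component of $W^\perp{}_{jkl}$ type, which is $\grando{x'}$ by \eqref{eq:asymp_Weyl_frakc}, so both drop out. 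Multiplying by $-\tfrac{2}{3}$ gives $\Theta\,\mathfrak{t}'$, equal to $\mathfrak{t}$ by \Cref{lem:tfrak_change} (with $n=3$, $\mathfrak{t}=\Theta^{n-2}\mathfrak{t}'=\Theta\,\mathfrak{t}'$).

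The main obstacle is the bookkeeping of these subleading terms once the factor $x^{-1}$ is present: one must identify which frame components of $W$ vanish to first order at $\mathfrak{I}$. All components of $W^\perp{}_{jkl}$ type are $\grando{x}$ — a consequence of the absence of a first-order term in the FG expansion, i.e. of the boundary being totally geodesic — and this is precisely what forces the cross terms to vanish and makes the conformal weights match. A secondary subtlety is that for $n\geq5$ the Cotton tensor is not conformally invariant, so the correctness of the second identity of \eqref{eq:lim_weyl_frakwc} genuinely relies on matching the surviving Weyl correction with \eqref{transfo_cotton}.
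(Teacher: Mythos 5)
Your proposal is correct and follows essentially the same route as the paper's proof: pass to a FG gauge via $x'=\Theta x$, use conformal invariance of the $(1,3)$ Weyl tensor, expand $dx$ and $\grad_g x$ by multilinearity, read off the limits from the asymptotic expansions of \Cref{prop:expansion_curvature_tensors}, match the surviving correction with \eqref{transfo_cotton}, and convert $\mathfrak{t}'$ to $\mathfrak{t}$ via \Cref{lem:tfrak_change}. The only (harmless) imprecision is attributing the vanishing of the normal--normal cross term in \eqref{eq:lim_weyl_frakt} to \eqref{eq:asymp_Weyl_frakc}, whereas that piece is of $W^\perp{}_{i\perp j}$ type and is $\grando{x}$ by \eqref{eq:asymp_Weyl_frakt} for $n=3$.
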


\begin{rems} \,
    \begin{itemize}
        \item In fact, the proofs of the limits \eqref{eq:lim_weyl_frakwc} hold for all dimensions $n \geq 2$.
        \item For $n\geq5$, $\mathfrak{t}$ is screened by lower order terms in \eqref{eq:asymp_Weyl_frakt} and thus cannot be retrieved as in \eqref{eq:lim_weyl_frakt}.
        \item Equation \eqref{eq:lim_weyl_frakt} is key to interpret geometrically certain boundary conditions for the local existence of aAdS spaces. It will be used in \Cref{sec:replacing_tensors}. On the other hand, equations \eqref{eq:lim_weyl_frakwc} are somehow already encoded in the extended conformal Einstein equations, see \Cref{sec:CVE_conf_boundary}. \qedhere
    \end{itemize}
\end{rems}

\begin{proof}
    Let $(x',y^0,\dots,y^{n-1})$ be a FG gauge on $\mathcal{U}$ and pose $\Theta := x/x'$. One has $\Theta\in \mathcal{C}^\infty(\mathcal{U},\R_+^\star)$ by \Cref{lemma:bdf}. Since the Weyl tensor is conformally invariant, the Weyl tensors $W$ and $W'$ of the rescaled metrics $g:=x^2\widetilde{g}$ and $g':=(x')^2\widetilde{g}$ coincide, and so do the Weyl tensors $\mathfrak{w}$ and $\mathfrak{w}'$ of $\mathfrak{h}$ and $\mathfrak{h}'$. Using \eqref{eq:asymp_Weyl_frakw}, one deduces
    \[ \left. \left( \lim_{x\to0^+} W \right) \right|_{T^\star \mathfrak{I} \times (T\mathfrak{I})^3} = \left. \left( \lim_{x'\to0^+} W' \right) \right|_{T^\star \mathfrak{I} \times (T\mathfrak{I})^3} = \mathfrak{w}' = \mathfrak{w} \,. \]
    Moreover, since $dx = x'd\Theta + \Theta dx'$,
    \begin{align*}
        x^{-1} W(dx,.,.,.) &= (x')^{-1}W'(dx',.,.,.) + \Theta^{-1}W'(d\Theta,.,.,.) \\
        &= \left(1+\frac{x'}{\Theta} \frac{\partial \Theta}{\partial x'}\right) (x')^{-1}W'(dx',.,.,.) + \frac{1}{\Theta}\frac{\partial \Theta}{\partial y^i} W'(dy^i,.,.,.) \,.
    \end{align*}
    With \eqref{eq:asymp_Weyl_frakc} and \eqref{eq:asymp_Weyl_frakw}, one deduces
    \[ \left. \left( \lim_{x\to0^+} x^{-1}W(dx,.,.,.) \right) \right|_{(T\mathfrak{I})^3} = \mathfrak{c}' + \mathfrak{w}'(d\ln\Omega,.,.,.) \,, \]
    where $\Omega := \Theta |_{\mathfrak{I}\cap\mathcal{U}}$. The transformation law of the Cotton tensor \eqref{transfo_cotton} gives in particular
    \[ \mathfrak{c} = \mathfrak{c}' + \mathfrak{w}'(d\ln\Omega,.,.,.) \,, \]
    since the Levi-Civita connections of $\mathfrak{h}'$ and $\mathfrak{h}$ are the Weyl connections associated to the covector fields $0$ and $d\ln\Omega$ with respect to $\mathfrak{h}'$. Hence the second limit.
    
    The limit \eqref{eq:lim_weyl_frakt} when $n=3$ is derived analogously. First, remark that
    \[ \grad_g x = \Theta^{-2} \grad_{g'} x = \Theta^{-1} \grad_{g'} x' + x' \Theta^{-2} \grad_{g'} \Theta \,. \]
    Thus
    \begin{align*}
        x^{-1} W(dx,.,\grad_g x,.) &= \left(1+\frac{x'}{\Theta} \frac{\partial\Theta}{\partial x'}\right) \Theta^{-1} (x')^{-1} W'(dx',.,\grad_{g'}x',.) \\
        &\quad + \left(1+\frac{x'}{\Theta} \frac{\partial\Theta}{\partial x'}\right) \Theta^{-2} W'(dx',.,\grad_{g'}\Theta,.) \\
        &\quad + \Theta^{-2} \frac{\partial\Theta}{\partial y^i}W'(dy^i,.,\grad_{g'} x',.) \\
        &\quad + \Theta^{-3} \frac{\partial\Theta}{\partial y^i} x' W'(dy^i,.,\grad_{g'}\Theta,.) \,.
    \end{align*}
    Thanks to the symmetries of the Weyl tensor and the asymptotic developments \eqref{eq:asymp_Weyl_frakt}-\eqref{eq:asymp_Weyl_frakw}, one deduces
    \[ \left. \left(\lim_{x\to0^+} x^{-1} W(dx,.,\grad_g x,.) \right) \right|_{(T\mathfrak{I})^2} = - \frac{3}{2} \Omega^{-1} \mathfrak{t}' + 0 + 0 + 0 \,. \]
    Yet $\mathfrak{t} := \Omega^{-1} \mathfrak{t}'$ by \Cref{def:extension_frakt} with $n=3$. Hence the result.
\end{proof}

\subsection{Examples of aAdS spaces}
\label{sec:ex_aAdS}

In this section, we illustrate the previous definitions by looking at known aAds spaces and describe their free data class on their conformal boundary. More particularly, apart from the pure AdS space, a few aAdS black holes and aAdS spaces arising from simple free data classes are presented.

This illustrative endeavour is especially helpful to gain certain insight on potential boundary stress-energy tensor and to provide non-trivial spaces satisfying our boundary conditions \eqref{eq:bc_robin}.

\subsubsection{Anti-de Sitter}
\label{sec:AdS}

Let us start, naturally, with the AdS space.

\begin{defi}
    The \emph{Anti-de Sitter (AdS) space} of dimension $n+1$ and radius $l>0$ is the smooth manifold without boundary $\M_{AdS} := \R^{n+1}$ equipped with the Lorentzian metric given in the spherical coordinate system $(t,r,\phi) \in \R \times \R_+^\star \times \mathbb{S}^{n-1}$ by
    \[ \widetilde{g}_{AdS}(l) := - f(r) dt^2 + f(r)^{-1} dr^2 + r^2 \dsphere{n-1}^2 \,, \]
    where
    \begin{equation*}
         f(r) := 1+\frac{r^2}{l^2} \,.
    \end{equation*}
\end{defi}

\begin{rems} \,
    \begin{itemize}
        \item $(\M_{AdS},\widetilde{g}_{AdS}(l))$ is solution to the (VE) with $\Lambda = -n(n-1)/(2l^2)$.
        \item One has $\widetilde{g}_{AdS}(l) = l^2 \widetilde{g}_{AdS}(1)$ by the change of variables $(t,r) \in \R \times \R_+^\star \mapsto (t/l,r/l) \in \R \times \R_+^\star$. In what follows, it will be assumed that $l=1$. \qedhere
    \end{itemize}
\end{rems}

\noindent With the change of variables $r \in \R_+^\star \mapsto \arctan r \in (0,\pi/2)$, of inverse $\psi \in (0,\pi/2) \mapsto \tan \psi \in \R_+^\star$, one obtains
\begin{equation}
    \label{eq_gAdS_gEC}
    \widetilde{g}_{AdS} = \frac{1}{\cos^2 \psi} \left( -dt^2 +\dsphere{n}^2 \right)
\end{equation}
where $\psi$ is the polar angle in the hyperspherical coordinates on the sphere $\mathbb{S}^n$, the north pole being identified to the point $r=0$. Thus, the AdS space is, up to a confomorphism, the open hemisphere $\{ \psi < \pi/2 \}$ of the Einstein cylinder $(\M_{EC},g_{EC})$ defined by \eqref{eq:EC}.

This confomorphism highlights the existence of a timelike conformal boundary, namely the equator $\{\psi = \pi/2\} \simeq \R \times \mathbb{S}^{n-1}$ of the Einstein cylinder, since the conformal factor $\Omega = \cos\psi$ of the confomorphism extends naturally by zero on it. In order to attach the conformal boundary, one extends the above coordinate system on the equator. Introducing the function $1-\cos\psi$ as a radius, this defines a smooth manifold with boundary $\mathcal{N}_{AdS} \simeq \R \times \overline{B}\vphantom{B}^n$, where $\overline{B}\vphantom{B}^n$ is the closed unit ball of $\R^n$. It follows that $(\mathcal{N}_{AdS},\partial\mathcal{N}_{AdS},\widetilde{g}_{AdS})$ is an aAdS space in the sense of \Cref{def_aAdS}. The conformal class $[\mathfrak{h}]$ on the $n$-dimensional conformal boundary $\partial\mathcal{N}_{AdS} \simeq \R \times \mathbb{S}^{n-1}$ is $[g_{EC}]$.

\bigbreak

One can easily construct a FG gauge for the AdS space. Define
\[ x(r) := \exp\left(-\int_0^r f(s)^{-1/2} ds\right) = \exp(-\sinh^{-1}(r)) = \frac{1}{r+\sqrt{f(r)}} \,. \]
The map $r \in \R_+^\star \mapsto x(r) \in (0,1)$ is a change of variables of inverse $x \in (0,1) \mapsto (1-x^2)/(2x) \in \R_+^\star$. Furthermore, $x$ is a boundary defining function of the conformal boundary since
\[ x = \frac{\cos\psi}{1+\sin\psi} \,. \]
It follows that
\[ x^2 \widetilde{g}_{AdS} = dx^2 - \frac{(1+x^2)^2}{4} dt^2 + \frac{(1-x^2)^2}{4} \dsphere{n-1}^2 \,. \]
Thus the chart $(x,t,\phi)$ is a FG gauge of associated free data $\mathfrak{h} = g_{EC}/4$ and $\mathfrak{t} = 0$. Consequently, the free data class of AdS is given by
\begin{equation}
    [(\mathfrak{h},\mathfrak{t})] = [(g_{EC},0)] \,.
\end{equation}

\subsubsection{Schwarzschild-Anti-de Sitter}
\label{sec:SAdS}

Let us turn to aAdS black holes, the simplest of which is the Schwarzschild-Anti-de Sitter space. As we are interested in the free data on the conformal boundary, only the exterior region is presented.

\begin{defi}
    The exterior of the \emph{Schwarzschild-Anti-de Sitter space} of dimension $n+1\geq 3$, radius $l>0$ and mass $m>0$ is the smooth manifold without boundary $\M_{SAdS} := \R \times (r_\star,+\infty) \times \mathbb{S}^{n-1}$ equipped with the Lorentzian metric given in the spherical coordinate system $(t,r,\phi) \in \R \times (r_\star,+\infty) \times \mathbb{S}^{n-1}$ by
    \[ \widetilde{g}_{SAdS}(l,m) := - f_m(r) dt^2 + f_m(r)^{-1} dr^2 + r^2 \dsphere{n-1}^2 \,, \]
    where
    \begin{equation*}
        f_m(r) := 1+\frac{r^2}{l^2} - \frac{2m}{r^{n-2}}
    \end{equation*}
    and $r_\star = r_\star(n,l,m) > 0$ is the only real positive root of the polynomial $X^n + l^2 X^{n-2} - 2m l^2$ (assuming that $m>1/2$ if $n=2$).
\end{defi}

\begin{rems} \,
    \begin{itemize}
        \item $(\M_{SAdS},\widetilde{g}_{SAdS}(l,m))$ is solution to the (VE) with $\Lambda = -n(n-1)/(2l^2)$.
        \item One has $\widetilde{g}_{SAdS}(l,m) = l^2 \widetilde{g}_{SAdS}(1,m/l^{n-2})$ by the change of variables $(t,r) \in \R \times \R_+^\star \mapsto (t/l,r/l) \in \R \times \R_+^\star$. In what follows, it will be assumed that $l=1$. \qedhere
    \end{itemize}
\end{rems}

Contrary to AdS, no natural confomorphism arises. Let us imitate the construction of a FG gauge for AdS conducted above by introducing the new variable
\begin{equation*}
    x_m(r) := \exp \left(-\int_{r_\star}^r f_m(s)^{-1/2} ds \right) \,.
\end{equation*}
This is a well defined decreasing diffeomorphism from $(r_\star,+\infty)$ to $(0,1)$.

\begin{ex}
    One have explicit formulae when
    \begin{itemize}
        \item $n=2$
        \[ x_m(r) = \exp\left(-\left[\ln\left(\sqrt{f_m(r)} + r\right)\right]_{r_\star}^r\right) = \frac{r_\star}{\sqrt{f_m(r)}+r} \,, \]
        of inverse function
        \[ r(x_m) = r_\star \frac{1+x_m^2}{2x_m} \,, \]
        \item $n=4$
        \[ x_m(r) = \exp\left(-\left[\frac{1}{2} \ln\left(2r\sqrt{f_m(r)} + 2r^2+1\right)\right]_{r_\star}^r\right) = \sqrt{\frac{2r_\star^2+1}{2r\sqrt{f_m(r)}+2r^2+1}} \,, \]
        of inverse function
        \[ r(x_m) = \frac{2mx_m^4+\sqrt{(2mx_m^4)^2+(2r_\star^2+1)x_m^2(2r_\star^2+1-x_m^2)^2}}{2(2r_\star^2+1)x_m^2} \,. \qedhere \]
    \end{itemize}
\end{ex}

\noindent Since
\[ \frac{x_m(r)}{x(r)} = \exp\left( - \int_0^{r_\star} f(s)^{-1/2} ds \right) \exp\left( - \int_{r_\star}^r f_m(s)^{-1/2} - f(s)^{-1/2} ds \right) \,, \]
and
\[ 0 \leq f_m(s)^{-1/2} - f(s)^{-1/2} \; \underset{+\infty}{\sim} \; m s^{-(n+1)} \,, \]
one deduces that
\[ \lim_{r\to+\infty} \frac{x_m(r)}{x(r)} \in \R_+^\star \,. \]
Thus $x_m$ and $x$ have the same behaviour at infinity, which is not surprising since Schwarzschild-Anti-de Sitter can be seen as a fast decaying perturbation of AdS.

This leads us to extend $\M_{SAdS} \simeq \R \times (0,1) \times \mathbb{S}^{n-1}$ into the smooth manifold with boundary $\mathcal{N}_{SAdS} := \R \times [0,1) \times \mathbb{S}^{n-1}$ whose differential structure at the boundary $\partial\mathcal{N}_{SAdS} \simeq \R \times \mathbb{S}^{n-1}$ is imposed by requiring that $x_m$ is a smooth function. Then $(\mathcal{N}_{SAdS},\partial\mathcal{N}_{SAdS},\widetilde{g}_{SAdS})$ is an aAdS space in the sense of \Cref{def_aAdS}. The conformal class on the boundary is the same as the one of AdS, that is $[\mathfrak{h}] = [g_{EC}]$. Furthermore, the global chart $(x_m,t,\phi)$ is a FG gauge.

\bigbreak

Let us now determine the boundary stress-energy tensor $\mathfrak{t}$. Identifying it from the definition of the metric $\widetilde{g}_{SAdS}$ in the FG gauge $(x_m,t,\phi)$ would require an expansion of $r$ and $f_m(r)$ in term of $x_m$. We will instead use the more convenient asymptotic expansion \eqref{eq:asymp_Weyl_frakt} which holds for $n$ odd. The relevant components of the Weyl tensor $\widetilde{W}$ of $\widetilde{g}_{SAdS}$ in the spherical coordinates $(t,r,\phi)$ are given by
\begin{align*}
    \widetilde{W}^r{}_{trt} &= \frac{-(n-1)(n-2)m}{r^n} f_m(r) \,, \\
    \widetilde{W}^r{}_{trA} &= 0 \,, \\
    \widetilde{W}^r{}_{ArB} &= \frac{-(n-2)m}{r^n} r^2 (\dsphere{n-1}^2)_{AB} \,,
\end{align*}
where $A,B$ correspond to the sphere indices. Note that the smooth function $1/r = 2x/(1-x^2)$ is a boundary defining function of the conformal boundary. However it does not verify \eqref{eq:FG_bdf} and thus \eqref{eq:asymp_Weyl_frakt} does not directly apply. The key point to overcome this difficulty is that the leading term in the above expressions for the Weyl tensor are of order $n-2$ in $1/r$. Hence the term of order $n-2$ must be the leading term in the expansion \eqref{eq:asymp_Weyl_frakt} for all odd $n$. It follows that \eqref{eq:lim_weyl_frakt} holds for all odd $n$ up to replacing the factor $2/3$ by $2/(n(n-2))$. Therefore, the boundary stress-energy tensor associated to $1/r$ for $n$ odd is
\begin{equation}
    \label{eq:frakt_SAdS}
    \mathfrak{t}= \frac{2m}{n} \Big( (n-1)dt^2 +\dsphere{n-1}^2 \Big) \,.
\end{equation}
Consequently, for $n$ odd, the free data class of the Schwarzschild-Anti-de Sitter space is given by
\begin{equation}
    [(\mathfrak{h},\mathfrak{t})] = \left[\left(g_{EC},\frac{2m}{n} \Big( (n-1)dt^2 +\dsphere{n-1}^2 \Big) \right)\right] \,.
\end{equation}
Let us mention that the first $n$ terms of the FG expansion are explicitly given by \cite[Equation (3.93)]{S21} for all dimensions. In particular, it shows that \eqref{eq:frakt_SAdS} remains valid for all dimensions $n\geq3$. 

\subsubsection{Birmingham-Anti-de Sitter}
\label{sec:BAdS}

The Birmingham spaces\footnote{I would like to thanks P. Chru\'sciel for pointing out the existence of these spaces to my advisor.} form a generalisation of the Schwarzschild spaces. They have been introduced by Birmingham in \cite{B99}, see also \cite[Section 5.5]{C20}. 

\begin{defi}
    Let $(\mathcal{P},\slashed{h})$ be an Einstein Riemannian manifold of dimension $n-1\geq2$, whose (constant) scalar curvature is denoted by $\slashed{r}$. The exterior of the associated \emph{Birmingham-Anti-de Sitter space} of dimension $n+1$, radius $l>0$ and mass $m\geq0$ is the smooth manifold without boundary $\M_{BAdS} := \R \times (r_\star,+\infty) \times \mathcal{P}$ equipped with the Lorentzian metric given by
    \[ \widetilde{g}_{BAdS}(l,m,\slashed{h}) := - f_{m,\slashed{h}}(r) dt^2 + f_{m,\slashed{h}}(r)^{-1} dr^2 + r^2 \slashed{h} \,, \]
    where
    \begin{equation*}
        f_{m,\slashed{h}}(r) := \frac{\slashed{r}}{(n-2)(n-1)}+\frac{r^2}{l^2} - \frac{2m}{r^{n-2}}
    \end{equation*}
    and $r_\star$ is the only real positive root of the polynomial
    \[ X^n + \frac{\slashed{r} l^2}{(n-2)(n-1)} X^{n-2} - 2m l^2 \,. \]
\end{defi}

\begin{rems} \,
    \begin{itemize}
        \item $(\M_{BAdS},\widetilde{g}_{BAdS}(l,m,\slashed{h}))$ is solution to the (VE) with $\Lambda = -n(n-1)/(2l^2)$.
        \item For all $k >0$ and $p,q \in \R$, one has
        \[ \widetilde{g}_{BAdS}(l,m,\slashed{h}) = k^{2p} \widetilde{g}_{BAdS}(k^{-p} l, k^{2q-(p+q)(n-2)} m, k^{2q} \slashed{h}) \]
        with the change of variables $(t,r) \in \R \times \R_+^\star \mapsto (t/k^{p+q},r/k^{p+q}) \in \R \times \R_+^\star$. In what follows, it will be assumed that $l=1$ and $\slashed{r} = \text{sign}(\slashed{r})(n-2)(n-1)$. \qedhere
    \end{itemize}
\end{rems}

In a similar way to Schwarzschild-Anti-de Sitter, one defines a function $x_{m,\slashed{h}}$ with values into $(0,1)$ and extend the manifold into the smooth manifold with boundary $\mathcal{N}_{BAdS} := \R \times [0,1) \times \mathcal{P}$. Then one obtains an aAdS space in the sense of \Cref{def_aAdS}. Using the boundary defining function $1/r$ such as for Schwarzschild-Anti-de Sitter, one finds
\begin{subequations}
    \begin{align}
        \mathfrak{h} &= -dt^2 + \slashed{h} \,, \\
        \label{eq:frakt_BAdS}
        \mathfrak{t} &= \frac{2m}{n} \Big( (n-1) dt^2 + \slashed{h} \Big) \,.
    \end{align}
\end{subequations}
Consequently, for $n$ odd, the free data class of Birmingham-Anti-de Sitter on its conformal boundary $\mathfrak{I} \simeq \R \times \mathcal{P}$ is given by
\begin{equation}
    [(\mathfrak{h},\mathfrak{t})] = \left[\left(-dt^2+\slashed{h},\frac{2m}{n} \Big( (n-1)dt^2 +\slashed{h} \Big) \right)\right] \,.
\end{equation}

\subsubsection{Kerr-Anti-de Sitter}
\label{sec:KAdS}

The Kerr black holes are a generalisation of the Schwarzschild black holes by allowing them to rotate. We will only consider the case $n=3$.

\begin{defi}
    The exterior of the \emph{Kerr-Anti-de Sitter space} of dimension $4$, radius $l>0$, mass $m>0$ and angular momentum per mass $a \in \R$ with $|a| < l$ is the smooth manifold without boundary $\M_{KAdS} := \R \times (r_\star,+\infty) \times \mathbb{S}^2$ equipped with the Lorentzian metric given in the Boyer-Lindquist chart $(t,r,\theta,\varphi) \in \R \times (r_\star,+\infty) \times (0,\pi) \times (0,2\pi)$ by
    \begin{align*}
        \widetilde{g}_{KAdS}(l,m,a) &:= \rho^2 \left( \frac{dr^2}{\Delta_r} + \frac{d\theta^2}{\Delta_\theta} \right) + \frac{\Delta_\theta \sin^2\theta}{\rho^2 \Xi^2} \left( adt - (r^2+a^2) d\varphi\right)^2 \\
        &\quad - \frac{\Delta_r}{\rho^2 \Xi^2} \left( dt - a\sin^2\theta d\varphi \right)^2 \,,
    \end{align*}
    where
    \begin{align*}
        \Delta_r &:= (r^2+a^2)\left(1+\frac{r^2}{l^2}\right) - 2mr \,, & \qquad \Delta_\theta &:= 1-\frac{a^2}{l^2} \cos^2\theta \,, \\
        \rho^2 &:= r^2 + a^2 \cos^2\theta \,, & \qquad \Xi &:= 1-\frac{a^2}{l^2} \,.
    \end{align*}
    and $r_\star$ is the greatest real positive root of the polynomial $\Delta_r$ if it admits at least one or $0$ otherwise.
\end{defi}

\begin{rems} \,
    \begin{itemize}
        \item $(\M_{KAdS},\widetilde{g}_{KAdS}(l,m,a))$ is solution to the (VE) with $\Lambda = -3/l^2$.
        \item One has $\widetilde{g}_{KAdS}(l,m,a) = l^2 \widetilde{g}_{KAdS}(1,m/l,a/l)$ by the change of variables $(t,r) \in \R \times \R_+^\star \mapsto (t/l,r/l) \in \R \times \R_+^\star$. In what follows, it will be assumed that $l=1$. \qedhere
    \end{itemize}
\end{rems}

\noindent With the benefit of hindsight, let us use $1/r$ as a boundary defining function and attach to the manifold the cylinder at infinity $\mathfrak{I} \simeq \R \times \mathbb{S}^2$. The free data associated to $1/r$ is given by
\begin{subequations}
    \begin{align}
        \label{eq:frakh_KAdS}
        \mathfrak{h} &= - \frac{1}{\Xi^2} \left(dt-a \sin^2\theta d\varphi \right)^2 + \frac{d\theta^2}{\Delta_\theta} + \frac{\sin^2\theta \Delta_\theta}{\Xi^2} d\varphi^2  \,, \\
        \label{eq:frakt_KAdS}
        \mathfrak{t} &= \frac{2m}{3} \left( \frac{2}{\Xi^2} \left( dt -a\sin^2\theta d\varphi \right)^2 + \frac{d\theta^2}{\Delta_\theta}  + \frac{\sin^2\theta\Delta_\theta}{\Xi^2} d\varphi^2 \right) \,.
    \end{align}
\end{subequations}
Equation~\eqref{eq:frakh_KAdS} follows straightforwardly from the expression of the metric $\widetilde{g}_{KAdS}$. Note that $\mathfrak{h}$ is locally conformally flat but not Einstein. We used Mathematica to compute the Weyl tensor of Kerr-Anti-de Sitter and take the limit \eqref{eq:lim_weyl_frakt} in order to obtain \eqref{eq:frakt_KAdS}.

\subsubsection{Real analytic free data}
\label{sec:real_analytic_free_data}

Another way to construct aAdS spaces is to determine cases for which the FG expansion given in \Cref{thm:FG_exp} converges. The simplest case is when the free data $(\mathfrak{h},\mathfrak{t})$ is real analytic. For $n$ odd, this implies that the formal series converges and that the resulting metric is real analytic as well in a neighbourhood of the boundary. One can refer to \cite[Theorem 2.3]{FG85} in the case $\mathfrak{t}=0$ and to \cite[Theorem 4.8 and the following remarks]{FG12} in the general case. For $n$ even, the convergence of the formal series, which contains logarithmic terms, for real analytic free data can be derived from \cite{K04}.

Let us also mention the work of LeBrun \cite{LB82}, previous to Fefferman and Graham \cite{FG85}, in which the holomorphic case for $n=3$ and with the additional assumption of self-duality is investigated. This last assumption amounts to set the boundary stress-energy tensor proportional to the Cotton-York tensor by a fixed imaginary number by adapting the discussion in \cite[Section 5]{FG12} to the Lorentzian case. Therefore, it can be seen as a special case of our family of boundary conditions \eqref{eq:bc_robin} in a complex setting.

\subsubsection{Conformally Einstein or locally conformally flat boundaries}
\label{sec:special_boundaries}

Two remarkable cases, discussed for example in \cite[Section 7]{FG12}, give rise to a finite FG expansion. First, consider a smooth Lorentzian manifold $(\mathfrak{I},\mathfrak{h})$ of dimension $n\geq3$ which is solution to the (VE) for a cosmological constant $\Lambda \in \R$. Define $\mathcal{M} := [0,x_{\max}) \times \mathfrak{I}$ where
\[ x_{\max} = \begin{cases}
                \sqrt{\frac{2(n-1)(n-2)}{\Lambda}} & \text{if } \Lambda > 0 \,, \\
                \quad \quad \; +\infty & \text{otherwise} \,,
            \end{cases} \]
and equip the interior $\mathring{\mathcal{M}} = (0,x_{\max}) \times \mathfrak{I}$ with the smooth Lorentzian metric
\[ \widetilde{g} := \frac{dx^2 + \underline{\mathfrak{h}}(x)}{x^2} \quad \text{where} \quad \underline{\mathfrak{h}}(x) = \left(1-\frac{\Lambda x^2}{2(n-1)(n-2)} \right)^2 \mathfrak{h} \,. \]
By construction, $(\mathcal{M},\mathfrak{I},\widetilde{g})$ is an aAdS space. Its free data class is $[(\mathfrak{h},0)]$.

\bigbreak

Secondly, consider a smooth Lorentzian conformal structure $(\mathfrak{I},[\mathfrak{h}])$ of dimension $n\geq3$ which is locally conformally flat. Fix a representative $\mathfrak{h}$ of the conformal class whose Schouten tensor is denoted by $\mathfrak{l}_{ij}$. Define the following Lorentzian metric
\[ \widetilde{g} := \frac{dx^2 + \underline{\mathfrak{h}}(x)}{x^2} \quad \text{where} \quad \underline{\mathfrak{h}}\vphantom{h}_{ij}(x) = \mathfrak{h}_{ij}-\mathfrak{l}_{ij}x^2 + \frac{1}{4} \mathfrak{l}_i{}^k \mathfrak{l}_{kj} x^4  \,, \]
on a neighbourhood $\M$ of $\{0\}\times \mathfrak{I}$ in $\R_+ \times \mathfrak{I}$. Then $(\mathcal{M},\mathfrak{I},\widetilde{g})$ is an aAdS space and its free data class is $[(\mathfrak{h},0)]$. It was shown in \cite{SS00} that, in this case, $\widetilde{g}$ is locally conformally flat as well.

\begin{rem}
    Since $\widetilde{g}$ is locally conformally flat, the Weyl tensor of $\widetilde{g}$ (and thus of any rescaled metric $g$) vanishes identically. This implies that if one introduces a non-zero boundary stress-energy tensor $\mathfrak{t}$, the leading term of the asymptotic expansion \eqref{eq:asymp_Weyl_frakt} of $W^x{}_{ixj}$ is only due to $\mathfrak{t}$. Hence, \eqref{eq:lim_weyl_frakt} holds for all odd dimensions $n\geq3$ if the factor $2/3$ is replaced by $2/(n(n-2))$. This apply to the Schwarzschild-Anti-de Sitter and Birmingham-Anti-de Sitter spaces, already presented in \Cref{sec:SAdS} and \Cref{sec:BAdS}. 
\end{rem}

\subsubsection{Conformally pp-wave boundaries}
\label{sec:ppwave}

We recall here results obtained by Anderson, Leistner and Nurowski in \cite{LN10,ALN20}. Consider a smooth Lorentzian conformal structure $(\mathfrak{I},[\mathfrak{h}])$ of dimension $n\geq3$ which contains a pp-wave metric, that is to say such that there exists a representative $\mathfrak{h}$ and a smooth null vector field $\mathfrak{K}$ on $\mathfrak{I}$ such that
\[ \mathfrak{D}_i \mathfrak{K}_j = 0 \,. \]
Note that this is equivalent to say that $\mathfrak{K}$ is a Killing field for $\mathfrak{h}$ (and thus a conformal Killing field for the conformal structure) which is locally a gradient. The pp-wave representative $\mathfrak{h}$ can be written in a local coordinate system $(u,y,z^1,\dots,z^{n-2})$ under the form
\begin{equation}
    \mathfrak{h} = 2\mathfrak{f}(u,z) du^2 + du \otimes dy + dy \otimes du + dz^2 \,,
\end{equation}
where $\mathfrak{f}$ is a smooth function of $u$ and $z$. With these coordinates, one has $\mathfrak{K} := \partial_y$. The inverse metric is given by
\[ \mathfrak{h}^{-1} = \partial_u \otimes \partial_y + \partial_y \otimes \partial_u - 2\mathfrak{f}(u,z) \partial_y \otimes \partial_y + \partial_z \otimes \partial_z \,. \]
By \cite[Section 3]{LN10}, the only non-vanishing components of the Ricci, Cotton, Weyl and Bach tensors of $\mathfrak{h}$ are given by
\begin{alignat*}{5}
    \mathfrak{r}_{uu} &= - \Delta_z \mathfrak{f} \,, &\quad\quad\quad& \mathfrak{c}_{Auu} &= -\frac{\partial_{z^A} \Delta_z \mathfrak{f}}{n-2} \,, \\
    \mathfrak{w}_{AuBu} &= \partial_{z^A} \partial_{z^B} \mathfrak{f} - \frac{\Delta_z \mathfrak{f}}{n-2} \delta_{AB} \,, && \mathfrak{b}_{uu} &= -\frac{\Delta^2_z \mathfrak{f}}{n-2} \,.
\end{alignat*}
where $\Delta_z$ is the Laplacian in the $z = (z^1,\dots,z^{n-2})$ coordinate and the indices $A$, $B$ ranges from $1$ to $n-2$.

In this setting, the FG expansions simplify greatly. For $n$ odd, the formal series \eqref{FG_exp_n_odd} reduces to
\begin{equation}
    \underline{\mathfrak{h}}(x) := 2\mathfrak{F}(u,z,x) du^2 + du \otimes dy + dy \otimes du + dz^2 \,,
\end{equation}
where
\begin{equation}
    \label{eq:ppwave}
    \mathfrak{F}(u,z,x) := \sum_{k=0}^\infty \frac{\Delta_z^{2k} \mathfrak{f}}{k! \, 2^k \prod_{j=1}^k (n-2j)} x^{2k} + \sum_{k=0}^\infty \frac{(-1)^k \Delta_z^{2k} \mathfrak{a}}{k!2^k\prod_{j=1}^k (n+2j)} x^{n+2k}
\end{equation}
with $\mathfrak{a}$ a smooth function of $u$ and $z$, see \cite[Theorem 3.1]{ALN20}. This function $\mathfrak{a}$ determines entirely the boundary stress-energy tensor through
\begin{equation}
    \mathfrak{t} = 2\mathfrak{a} \, du^2 \,,
\end{equation}
or in covariant form
\begin{equation}
    \label{eq:frakt_ppwave}
    \mathfrak{t} = 2\mathfrak{a} \; \mathfrak{K} \otimes \mathfrak{K} \,.
\end{equation}
In particular, if $\mathfrak{f}$ and $\mathfrak{a}$ are real analytic functions then the formal series \eqref{eq:ppwave} converges and the resulting function $\mathfrak{F}$ is real analytic as well. This is a special case of \Cref{sec:real_analytic_free_data}. One also retrieves subcases of \Cref{sec:special_boundaries} for special choices of $\mathfrak{f}$. For example, $\mathfrak{h}$ Einstein is equivalent to $\Delta_z \mathfrak{f} = 0$ and $\mathfrak{h}$ locally conformally flat is equivalent for $n=3$ to $\partial_z^3 \mathfrak{f} = 0$.

Outside the real analytic class, one can also construct various aAdS spaces with a pp-wave conformal boundary. For instance, consider the functions $\mathfrak{f}$ of the form $\mathfrak{f}(u,z) = \mathfrak{P}(z)\mathfrak{e}(u)$ with $\mathfrak{P}$ a polynomial function and $\mathfrak{e}$ a smooth function. Then, taking $\mathfrak{P}$ of order 3 and $\mathfrak{a}=0$ yields an aAds space with a conformally Bach-flat but not locally conformally flat boundary. Imposing $\mathfrak{a}(u,z) = \mu' (\partial_z^3 \mathfrak{P})(z) \mathfrak{e}(u)$ with $\mu'\in\R$ yields aAdS spaces verifying our boundary conditions \eqref{eq:bc_robin}.

\subsection{Boundary stress-energy tensors}
\label{sec:BSET}

This section focuses on boundary stress-energy tensors. First, a few potential boundary stress-energy tensor classes are inferred in \Cref{sec:ex_BSET} from the examples of aAdS spaces presented in \Cref{sec:ex_aAdS}. This could represent a first step towards treating the geometric \emph{inhomogeneous Robin} boundary conditions. In particular, some of these potential boundary stress-energy tensors arise from conformal Killing fields of the boundary. The interactions between boundary stress-energy tensors and conformal Killing fields are then highlighted in the general case in \Cref{sec:symmetries_currents_bdry}. 
 
\subsubsection{Examples of boundary stress-energy tensor classes}
\label{sec:ex_BSET}

Let $(\mathfrak{S},[\mathfrak{h}])$ be a smooth Lorentzian conformal structure of dimension $n\geq3$. The goal of this section is to construct boundary stress-energy tensor candidates for the conformal structure $(\mathfrak{S},[\mathfrak{h}])$, that is to say to construct trace-free and divergence-free symmetric 2-tensor fields which are conformally invariant of weight $n-2$ (in contravariant form). Let us begin with our family of boundary conditions \eqref{eq:bc_robin}. 

\begin{lem}
    \label{lem:BSET_robin}
    Assume that $n=3$ and $(\mathfrak{S},[\mathfrak{h}])$ is oriented. Then for all $\mu \in \R$, one can construct a boundary stress-energy tensor class by defining for any representative $\mathfrak{h} \in [\mathfrak{h}]$
    \begin{equation}
        \label{eq:BSET_cotton_york}
        \mathfrak{t}_{ij} := \mu \, \mathfrak{y}_{ij} \,,
    \end{equation}
    where $\mathfrak{y}_{ij}$ is the Cotton-York tensor of $\mathfrak{h}$.
\end{lem}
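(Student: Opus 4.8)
The plan is to verify directly that the symmetric $2$-tensor field $\mathfrak{t}_{ij} := \mu\,\mathfrak{y}_{ij}$ satisfies the three algebraic/differential conditions and the one conformal transformation law that characterise a boundary stress-energy tensor class, as laid out in the paragraph preceding the lemma and in \Cref{def:equivalence_relation_pair}. Concretely, for $\mathfrak{t}$ to define a legitimate candidate I must check that, for every representative $\mathfrak{h}\in[\mathfrak{h}]$, the tensor $\mathfrak{t}_{ij}$ is symmetric, trace-free with respect to $\mathfrak{h}$, and divergence-free for the Levi-Civita connection $\mathfrak{D}$ of $\mathfrak{h}$, and that the assignment is consistent across the conformal class in the sense that a rescaling $\mathfrak{h}'=\Omega^2\mathfrak{h}$ produces $\mathfrak{t}'_{ij}=\Omega^{2-n}\mathfrak{t}_{ij}$, which for $n=3$ reads $\mathfrak{t}'_{ij}=\Omega^{-1}\mathfrak{t}_{ij}$.

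All of these follow at once from \Cref{prop:cottonyork}. The orientation hypothesis guarantees that the Cotton-York tensor $\mathfrak{y}_{ij}$ of each representative $\mathfrak{h}$ is well defined, since its construction in \Cref{sec:duality_dim_3} uses the volume form through the left dual of the Cotton tensor; and \Cref{prop:cottonyork} asserts that $\mathfrak{y}_{ij}$ is symmetric, trace-free and divergence-free, establishing the first three points. Multiplying by the fixed real constant $\mu$ manifestly preserves each of these linear properties, so $\mathfrak{t}_{ij}=\mu\,\mathfrak{y}_{ij}$ inherits them.

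For the conformal transformation law, \Cref{prop:cottonyork} states that $\mathfrak{y}_{ij}$ is conformally invariant of weight $1$, i.e. $\mathfrak{y}_{ij}$ associated to $\Omega^2\mathfrak{h}$ equals $\Omega^{-1}$ times the one associated to $\mathfrak{h}$, in the sense of \Cref{sec:conformal_invariance}. Since $\mu$ is the same constant for every representative, $\mathfrak{t}_{ij}=\mu\,\mathfrak{y}_{ij}$ transforms with the identical weight, giving $\mathfrak{t}'_{ij}=\Omega^{-1}\mathfrak{t}_{ij}$. As $n-2=1$ when $n=3$, this is precisely the relation $\mathfrak{t}'_{ij}=\Omega^{2-n}\mathfrak{t}_{ij}$ required by \Cref{def:equivalence_relation_pair}, so the family $(\mathfrak{t}_{ij})$ indexed by representatives $\mathfrak{h}\in[\mathfrak{h}]$ is compatible with the equivalence relation and descends to a well-defined boundary stress-energy tensor class.

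There is no genuine obstacle here: the entire content is already encapsulated in \Cref{prop:cottonyork}, and the lemma amounts to the observation that the numerical coincidence $n-2=1$ in dimension $n=3$ matches the conformal weight of the Cotton-York tensor to the transformation law of the boundary stress-energy tensor. The only point deserving emphasis is the role of orientation, which is needed solely to make the Hodge dual, and hence $\mathfrak{y}_{ij}$, available; this is exactly why the hypothesis that $(\mathfrak{S},[\mathfrak{h}])$ be oriented appears in the statement.
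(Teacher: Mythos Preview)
Your proof is correct and follows exactly the same approach as the paper, which simply invokes \Cref{prop:cottonyork} in a single sentence. Your version is more explicit about unpacking the three properties and the conformal weight matching $n-2=1$, but the underlying argument is identical.
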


\begin{rem}
    If $\mu$ is not constant then one has to require that
    \[ \mathfrak{y}_{ij} \mathfrak{D}^j \mu = 0 \]
    for \eqref{eq:BSET_cotton_york} to define a suitable boundary stress-energy tensor candidate. This is a non-trivial condition on the Cotton-York tensor, and thus on the conformal class $[\mathfrak{h}]$. In particular, we expect that the Cauchy problem with the geometric boundary condition \eqref{eq:BSET_cotton_york} for a non-constant $\mu$ is ill-posed, since it would both impose the boundary stress-energy tensor and restrict the conformal class. This is coherent with the fact that \eqref{eq:BSET_cotton_york} with a non-constant $\mu$ cannot be propagated on the conformal boundary as in \Cref{sec:homogeneous_robin}.
\end{rem}

\begin{proof}
    This is a suitable choice thanks to  \Cref{prop:cottonyork}.
\end{proof}

\noindent The two following classes give rise to fluid-like and null-dust-like boundary stress-energy tensors.

\begin{lem}
    Let $\mathfrak{X}$ be a smooth vector field on $\mathfrak{S}$. If the equation
    \begin{equation}
        \label{eq:diff_frakX}
        \mathfrak{D}_{\mathfrak{X}} \mathfrak{X} + \frac{1}{2} \grad_{\mathfrak{h}} \mathfrak{h}(\mathfrak{X},\mathfrak{X}) = \frac{2\diver \mathfrak{X}}{n} \mathfrak{X}
    \end{equation}
    holds for a representative $\mathfrak{h} \in [\mathfrak{h}]$ then it holds for any representative. Furthermore, if equation~\eqref{eq:diff_frakX} is verified and $\mathfrak{X}$ is nowhere null then one can construct a boundary stress-energy tensor class by defining for any representative $\mathfrak{h} \in [\mathfrak{h}]$
    \begin{equation}
        \label{eq:BSET_fluid}
        \mathfrak{t}^{ij} := \frac{\pm1}{|\mathfrak{h}(\mathfrak{X},\mathfrak{X})|^{n/2+1}} \left( \mathfrak{X}^i \mathfrak{X}^j - \frac{\mathfrak{h}(\mathfrak{X},\mathfrak{X})}{n} \mathfrak{h}^{ij} \right) \,.
    \end{equation}
\end{lem}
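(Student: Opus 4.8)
The plan is to treat the two assertions separately: first the conformal invariance of the prescribed equation on $\mathfrak{X}$, and then the defining properties of a boundary stress-energy tensor candidate (symmetric, trace-free, divergence-free) together with the correct conformal weight. Throughout I write $\Omega$ for a conformal factor, $\omega := \ln\Omega$, $\mathfrak{h}' := \Omega^2\mathfrak{h}$, and $F := \mathfrak{h}(\mathfrak{X},\mathfrak{X})$, and denote by $\mathfrak{D}'$ the Levi-Civita connection of $\mathfrak{h}'$. Viewing $\mathfrak{D}'$ as the Weyl connection for $[\mathfrak{h}]$ whose covector relative to $\mathfrak{h}$ is $d\omega$ (equivalently, via the standard conformal change of Christoffel symbols), \Cref{prop_transition_weyl_conn} gives the transition tensor $Q_i{}^k{}_j = S_{ij}{}^{k\mu}\omega_\mu$, where $S$ is the tensor \eqref{eq:def_S} and $\omega_\mu := (d\omega)_\mu$. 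I abbreviate $\mathfrak{X}\cdot\omega := \mathfrak{X}^i\omega_i$ and $\omega^i := \mathfrak{h}^{ij}\omega_j$.

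For the first assertion I would compute the conformal behaviour of each of the three terms. Contracting the transition tensor yields $(\mathfrak{D}'_{\mathfrak{X}}\mathfrak{X})^i = (\mathfrak{D}_{\mathfrak{X}}\mathfrak{X})^i + 2(\mathfrak{X}\cdot\omega)\mathfrak{X}^i - F\,\omega^i$ and $\diver_{\mathfrak{h}'}\mathfrak{X} = \diver_{\mathfrak{h}}\mathfrak{X} + n(\mathfrak{X}\cdot\omega)$. A direct computation, using $\mathfrak{h}'(\mathfrak{X},\mathfrak{X}) = \Omega^2 F$ and $(\mathfrak{h}')^{-1} = \Omega^{-2}\mathfrak{h}^{-1}$, gives $(\grad_{\mathfrak{h}'}\mathfrak{h}'(\mathfrak{X},\mathfrak{X}))^i = (\grad_{\mathfrak{h}}F)^i + 2F\,\omega^i$. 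Summing the three contributions of the left-hand side minus right-hand side, the terms proportional to $(\mathfrak{X}\cdot\omega)\mathfrak{X}^i$ and to $F\omega^i$ cancel exactly, so the defect of the equation is literally unchanged under the rescaling. Hence the equation holds for $\mathfrak{h}'$ if and only if it holds for $\mathfrak{h}$, proving the first claim.

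For the second assertion, symmetry is immediate and the trace-free property follows at once from $\mathfrak{h}_{ij}\mathfrak{t}^{ij}\propto F-\tfrac{F}{n}\cdot n=0$. For the conformal weight, I would observe that the bracket $\mathfrak{X}^i\mathfrak{X}^j-\tfrac{F}{n}\mathfrak{h}^{ij}$ is conformally invariant of weight $0$ (the factor $\mathfrak{X}^i\mathfrak{X}^j$ is metric-independent, and $\tfrac{F}{n}\mathfrak{h}^{ij}$ is invariant since $\Omega^2 F$ and $\Omega^{-2}\mathfrak{h}^{ij}$ compensate), whereas $|\mathfrak{h}'(\mathfrak{X},\mathfrak{X})|^{n/2+1}=\Omega^{n+2}|F|^{n/2+1}$; the nowhere-null hypothesis makes $|F|^{n/2+1}$ smooth and non-vanishing and fixes its constant sign, so the $\pm$ is a consistent global choice. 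Therefore $\mathfrak{t}^{ij}_{\mathfrak{h}'}=\Omega^{-(n+2)}\mathfrak{t}^{ij}_{\mathfrak{h}}$, equivalently $\mathfrak{t}_{\mathfrak{h}',ij}=\Omega^{2-n}\mathfrak{t}_{\mathfrak{h},ij}$ in covariant form, matching \Cref{def:equivalence_relation_pair}; thus the family $\{\mathfrak{t}_{\mathfrak{h}}\}$ descends to a well-defined class.

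The \emph{main work}, and the one place where the equation on $\mathfrak{X}$ is genuinely used, is the divergence-free property, which I would establish for a fixed representative (it then holds for all by the first assertion). First, contracting the equation with $\mathfrak{X}$ and using metric compatibility, namely $\mathfrak{h}(\mathfrak{D}_{\mathfrak{X}}\mathfrak{X},\mathfrak{X})=\tfrac12\mathfrak{D}_{\mathfrak{X}}F$ and $\mathfrak{h}(\grad_{\mathfrak{h}}F,\mathfrak{X})=\mathfrak{D}_{\mathfrak{X}}F$, yields the scalar identity $\mathfrak{D}_{\mathfrak{X}}F=\tfrac{2}{n}(\diver\mathfrak{X})\,F$. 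Then, writing $\mathfrak{t}^{ij}=\pm|F|^{-(n/2+1)}T^{ij}$ with $T^{ij}:=\mathfrak{X}^i\mathfrak{X}^j-\tfrac{F}{n}\mathfrak{h}^{ij}$, I compute $\mathfrak{D}_j T^{ij}=(\mathfrak{D}_{\mathfrak{X}}\mathfrak{X})^i+(\diver\mathfrak{X})\mathfrak{X}^i-\tfrac1n(\grad_{\mathfrak{h}}F)^i$ and $\mathfrak{D}_j\big(|F|^{-(n/2+1)}\big)=-\tfrac{n+2}{2F}|F|^{-(n/2+1)}\mathfrak{D}_j F$. Substituting the equation for $\mathfrak{D}_{\mathfrak{X}}\mathfrak{X}$ into the first, and the scalar identity into the second, both contributions to $\mathfrak{D}_j\mathfrak{t}^{ij}$ reduce to multiples of $(\diver\mathfrak{X})\mathfrak{X}^i$ and $(\grad_{\mathfrak{h}}F)^i$ carrying exactly opposite coefficients $\pm\tfrac{n+2}{n}$ and $\mp\tfrac{n+2}{2n}$, so they cancel and $\mathfrak{D}_j\mathfrak{t}^{ij}=0$. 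The bookkeeping of these coefficients is the only delicate point; everything else is structural. This completes the proof.
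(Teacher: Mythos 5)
Your proof is correct, and for the substantive part (the divergence-free property) it follows essentially the same route as the paper: contract \eqref{eq:diff_frakX} with $\mathfrak{X}$ to get the scalar identity $\mathfrak{D}_{\mathfrak{X}}\mathfrak{h}(\mathfrak{X},\mathfrak{X})=\tfrac{2\diver\mathfrak{X}}{n}\mathfrak{h}(\mathfrak{X},\mathfrak{X})$ (this is exactly \eqref{eq:diff_frakX_aux}, which the paper records as a remark), then combine it with \eqref{eq:diff_frakX} in the Leibniz expansion of $\mathfrak{D}_j\mathfrak{t}^{ij}$; your coefficient bookkeeping checks out, and the trace-free property and the weight computation ($\mathfrak{t}'^{ij}=\Omega^{-(n+2)}\mathfrak{t}^{ij}$, i.e.\ $\mathfrak{t}'_{ij}=\Omega^{2-n}\mathfrak{t}_{ij}$, matching \Cref{def:equivalence_relation_pair}) are likewise right. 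The only genuine divergence from the paper is in the first assertion: where you verify conformal invariance of \eqref{eq:diff_frakX} by a direct computation with the transition tensor of \Cref{prop_transition_weyl_conn} and observe that the defect vector is literally unchanged, the paper instead rewrites the equation as $Z_i{}^{jk}{}_l\,\mathfrak{X}^i\mathfrak{D}_k\mathfrak{X}^l=0$ with $Z$ the conformally invariant projector \eqref{def_Z} onto symmetric trace-free $2$-tensors, so that invariance follows as for the conformal Killing equation. Your version is more elementary and self-contained; the paper's is shorter and makes the structural reason for the invariance (and the kinship with conformal Killing fields) transparent. Both are valid.
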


\begin{rems} \,
    \begin{itemize}
        \item Equation~\eqref{eq:diff_frakX} implies
        \begin{equation}
            \label{eq:diff_frakX_aux}
            \mathfrak{X} \cdot \mathfrak{h}(\mathfrak{X},\mathfrak{X}) = \frac{2 \diver \mathfrak{X}}{n} \mathfrak{h}(\mathfrak{X},\mathfrak{X}) \,.
        \end{equation}
        \item In particular, all conformal Killing fields of $(\mathfrak{S},[\mathfrak{h}])$ verify \eqref{eq:diff_frakX}.
        \item Since $\mathfrak{X}$ is nowhere null, a representative $\mathfrak{h}_{ij}$ and its associated boundary stress-energy tensor $\mathfrak{t}_{ij}$ defined by \eqref{eq:BSET_fluid} are co-orthogonalisable in the sense of symmetric bilinear forms.
        \item The boundary stress-energy tensors of Schwarzschild-Anti-de Sitter \eqref{eq:frakt_SAdS}, Bir-\\mingham-Anti-de Sitter \eqref{eq:frakt_BAdS} and Kerr-Anti-de Sitter \eqref{eq:frakt_KAdS} are of the form \eqref{eq:BSET_fluid} with $\mathfrak{X}=\sqrt{\frac{2m(n-1)}{n}}\partial_t$. \qedhere
    \end{itemize}
\end{rems}

\begin{proof}
    Equation~\eqref{eq:diff_frakX} can be rewritten as
    \[ Z_i{}^{jk}{}_l \mathfrak{X}^i \mathfrak{D}_k \mathfrak{X}^l = 0 \,, \]
    where $Z_i{}^{jk}{}_l$ is defined by \eqref{def_Z}. The conformal invariance of the equation is thus similar to that of the conformal Killing equation, see \Cref{sec:conformal_Killing_fields}.

    By definition, the tensor field $\mathfrak{t}$ is symmetric, trace-free and is conformally invariant of weight $n-2$. Let us compute its divergence using \eqref{eq:diff_frakX} and \eqref{eq:diff_frakX_aux}
    \begin{align*}
        \pm |\mathfrak{h}(\mathfrak{X},\mathfrak{X})|^{n/2+1} \mathfrak{D}_i \mathfrak{t}^{ij} &= \mathfrak{D}_\mathfrak{X} \mathfrak{X}^j + \frac{1}{2} \mathfrak{D}^j \mathfrak{h}(\mathfrak{X},\mathfrak{X}) + \left(-\frac{n+2}{2} \frac{\mathfrak{X} \cdot \mathfrak{h}(\mathfrak{X},\mathfrak{X})}{\mathfrak{h}(\mathfrak{X},\mathfrak{X})} + \diver \mathfrak{X} \right) \mathfrak{X}^j \\
        &= - \frac{n+2}{2} \left( \frac{\mathfrak{X} \cdot \mathfrak{h}(\mathfrak{X},\mathfrak{X})}{\mathfrak{h}(\mathfrak{X},\mathfrak{X})} - \frac{\diver \mathfrak{X}}{n} \right) \mathfrak{X}^j \\
        &= 0 \,. \qedhere
    \end{align*}
\end{proof}

\begin{lem}
    Let $(\mathfrak{X},\mathfrak{a})$ be a pair constituted of a smooth null vector field $\mathfrak{X}$ and a smooth scalar function $\mathfrak{a}$ on $\mathfrak{S}$ verifying
    \begin{equation}
        \label{eq:diff_frakX_fraka}
        \mathfrak{a} \, \mathfrak{D}_\mathfrak{X} \mathfrak{X} + \diver(\mathfrak{a}\mathfrak{X}) \mathfrak{X} = 0 \,,
    \end{equation}
    for the Levi-Civita connection $\mathfrak{D}$ of a representative $\mathfrak{h} \in [h]$. Equation \eqref{eq:diff_frakX_fraka} is verified for all representatives of the conformal class $[\mathfrak{h}]$ if one assumes that $\mathfrak{X}$ and $\mathfrak{a}$ are conformally invariant of weight $p$ and $q \in \R$ with $2p+q=n+2$. Furthermore, one can construct a boundary stress-energy tensor class by
    \begin{equation}
        \label{eq:BSET_null_dust}
        \mathfrak{t}^{ij} = \mathfrak{a} \, \mathfrak{X}^i \mathfrak{X}^j \,.
    \end{equation}
\end{lem}

\begin{rem}
    The boundary stress-energy tensor of a conformally pp-wave boundary \eqref{eq:frakt_ppwave} is of the form \eqref{eq:BSET_null_dust} with the pair $(\mathfrak{K},2\mathfrak{a})$.
\end{rem}

\noindent Finally, note that the boundary stress-energy tensor classes above can be added together to create other classes if the hypotheses permit it.

\subsubsection{Conformal Killing fields on the boundary}
\label{sec:symmetries_currents_bdry}

Let $(\M,\mathfrak{I},\widetilde{g})$ be an aAdS space. Recall that from the FG expansion, one deduces that all $\mathfrak{S} \in \mathfrak{I}$ are umbilical in the conformal structure $(\M,[g])$, where $[g]$ is the conformal class of rescaled metrics. This implies that the tangent part on $\mathfrak{S}\in \mathfrak{I}$ of conformal Killing fields of $(\M,[g])$ are conformal Killing fields of the induced conformal structure $(\mathfrak{S},[\mathfrak{h}])$. This fact leads to the following known result.

\begin{lem}
    Let $K$ be a Killing field of $(\M\setminus\mathfrak{I},\widetilde{g})$. If $K$ extends smoothly on $\mathfrak{I}$ then it is tangent to $\mathfrak{I}$ and its restriction on $\mathfrak{S} \in \mathfrak{I}$ is a conformal Killing field of $(\mathfrak{S},[\mathfrak{h}])$.
\end{lem}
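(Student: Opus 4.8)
The plan is to transfer the problem from the physical metric $\widetilde{g}$, which is singular at $\mathfrak{I}$, to a rescaled metric $g := x^2 \widetilde{g}$ associated to a boundary defining function $x$, since $g$ is a genuine (nondegenerate) metric up to the conformal boundary. The starting point is the Lie derivative of $g$ along $K$: as $x^2$ is a scalar function and $\mathcal{L}_K \widetilde{g} = 0$, one has on $\M \setminus \mathfrak{I}$
\begin{equation}
    \mathcal{L}_K g = \mathcal{L}_K(x^2 \widetilde{g}) = (K \cdot x^2)\, \widetilde{g} = 2x (K\cdot x)\, \widetilde{g} = \frac{2(K\cdot x)}{x}\, g \,. \tag{$\ast$}
\end{equation}

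First I would prove that $K$ is tangent to $\mathfrak{I}$. The left-hand side of ($\ast$) extends continuously up to $\mathfrak{I}$, because $g$ is $\mathcal{C}^2$ and $K$ is smooth up to $\mathfrak{I}$ by hypothesis. Since $g$ is a nondegenerate Lorentzian metric there, reading off a nonvanishing component of the tensorial identity ($\ast$) shows that the scalar factor $(K\cdot x)/x$ extends continuously up to $\mathfrak{I}$. As $K \cdot x = \langle dx, K\rangle$ is itself smooth up to $\mathfrak{I}$, evaluating $K\cdot x = x \cdot \tfrac{K\cdot x}{x}$ on $\mathfrak{I}$ forces $K \cdot x = 0$ on $\mathfrak{I}$. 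Since $x$ is a boundary defining function, $\mathfrak{I} = \{x=0\}$ with $dx \neq 0$ there, so the vanishing of $K \cdot x$ on $\mathfrak{I}$ is precisely the statement that $K$ is tangent to $\mathfrak{I}$.

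Next I would upgrade ($\ast$) to a genuine conformal Killing identity on all of $\M$. Having shown that $K\cdot x$ vanishes on $\mathfrak{I}$, \Cref{lem:dividing_bdf} guarantees that $\phi := (K\cdot x)/x$ extends to a smooth function up to $\mathfrak{I}$. Both sides of ($\ast$) then extend continuously across $\mathfrak{I}$, so the relation $\mathcal{L}_K g = 2\phi g$ holds on all of $\M$. By statement ii) of the characterisation of conformal Killing fields in \Cref{sec:conformal_Killing_fields}, applied to the representative $g$ of the conformal class $[g]$ now defined over all of $\M$, this says exactly that $K$ is a conformal Killing field of the conformal structure $(\M, [g])$.

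Finally I would invoke the structure-inheritance fact recalled just above the statement: because the boundary hypersurfaces $\mathfrak{S} \in \mathfrak{I}$ are umbilical in $(\M, [g])$ (a consequence of the absence of a first-order term in the FG expansion), the tangential part along $\mathfrak{S}$ of any conformal Killing field of $(\M,[g])$ is a conformal Killing field of the induced conformal structure $(\mathfrak{S}, [\mathfrak{h}])$. As $K$ is tangent to $\mathfrak{I}$, its tangential part along $\mathfrak{S}$ is $K|_{\mathfrak{S}}$ itself, which is therefore a conformal Killing field of $(\mathfrak{S}, [\mathfrak{h}])$, completing the proof. The only genuinely delicate point is the tangency step: one must exploit that the potentially singular factor in ($\ast$) is controlled purely by the regularity of the nonsingular object $\mathcal{L}_K g$, which is what prevents $(K\cdot x)/x$ from blowing up and forces $K\cdot x$ to vanish on the conformal boundary.
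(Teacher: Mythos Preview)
Your proposal is correct and takes essentially the same approach as the paper: compute $\mathcal{L}_K g$ for the rescaled metric, deduce tangency of $K$ to $\mathfrak{I}$, extend the conformal Killing equation by continuity, and then invoke the umbilical structure-inheritance fact. The only cosmetic difference is that the paper writes the key identity as $x\,\mathcal{L}_K g = 2(K\cdot x)\,g$, which holds globally and immediately yields $K\cdot x = 0$ on $\mathfrak{I}$ upon evaluation at $x=0$, whereas you divide by $x$ and argue boundedness first; both routes are valid.
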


\begin{proof}
    Let $K$ be a Killing field of $(\M\setminus\mathfrak{I},\widetilde{g})$ extending smoothly on $\mathfrak{I}$. Then for any boundary defining function $x$ of $\mathfrak{I}$, one has
    \[ x \mathcal{L}_K g = 2 (K \cdot x) g \,, \]
    where $g:=x^2\widetilde{g}$ is the rescaled metric associated to $x$. Hence $K \cdot x = 0$ on $\mathfrak{I}$ and thus $K$ is tangent to the conformal boundary. Moreover, $K$ is a conformal Killing field of $(\M,[g])$ by continuity. It follows from the above discussion that its restriction on $\mathfrak{S} \in \mathfrak{I}$ is a conformal Killing field of $(\mathfrak{S},[\mathfrak{h}])$.
\end{proof}

\noindent Thus symmetries in the bulk imply symmetries on the conformal boundary. The converse was studied for stationary Killing fields by Anderson \cite{A06} and for holographic Killing fields of a domain of the conformal boundary by Holzegel and Shao \cite{HS23}.

Now that we know how to derive conformal Killing fields on the conformal boundary, let us remind that they lead to conserved currents on the boundary by association with the boundary stress-energy tensor. Indeed, if $\mathfrak{K}$ is a conformal Killing field of $(\mathfrak{S},[\mathfrak{h}])$ then for any representative $(\mathfrak{h},\mathfrak{t})$ of the free data class, the covector field
\[ \mathfrak{J}_i := \mathfrak{t}_{ij} \mathfrak{K}^j \]
is divergence-free for the Levi-Civita connection $\mathfrak{D}$ of $\mathfrak{h}$. Note that this covector field is conformally invariant of weight $n-2$.

\section{Local existence and uniqueness of aAdS spaces in dimension 4}
\label{sec:local_existence}

This section contains the precise statement of our main theorem, \Cref{thm:main_precise}, and its proof. For this, we rely on the framework established by Friedrich in \cite{F95}. Recall that a local existence and uniqueness theory for the vacuum Einstein equations \eqref{eq:VE} with a negative cosmological constant $\Lambda<0$ requires to consider an initial boundary value problem due to the presence of a timelike conformal boundary for generic solutions. Applying the standard method for solving the \eqref{eq:VE} by means of generalised wave coordinates would, in the context of aAdS spaces, lead to two main difficulties:
\begin{enumerate}
    \item[a)] The Ricci tensor is not regular up to the conformal boundary $\mathscr{I}$. Thus one has to deal with a system of partial differential equations with singular coefficients for the metric components.
    \item[b)] No information is a priori known on the localisation of the conformal boundary. Consequently, this is a free boundary problem.
\end{enumerate}
As established in \cite{F95}, Friedrich's strategy to overcome these two problems is as follows:
\begin{enumerate}
    \item Recast the \eqref{eq:VE} into an enlarged tensorial system which is regular up to the conformal boundary and includes a boundary defining function of the conformal boundary.
    \item Add some gauge invariance to the previous system by considering Weyl connections instead of only the Levi-Civita connection. This enables more freedom in the gauge choice. The resulting system is called the extended conformal vacuum Einstein equations, see~\eqref{eq:CVE}.
    \item Choose a gauge adapted to the conformal boundary for which the tensorial system reduces to a symmetric hyperbolic system of partial differential equations and for which the conformal boundary is straightened out.
\end{enumerate}
The only limitations of this method are that it constructs Einstein manifolds with a regular conformal extension and that it requires the spacetime dimension to be equal to 4. The former is not restrictive given our definition of aAdS spaces, see \Cref{def_aAdS}. However, as a consequence of the latter, we will always assume that $n=3$ in this section.

\bigbreak

In \Cref{sec:CVE}, the general framework of the extended conformal vacuum Einstein equations is introduced. A particular emphasis is put on describing the gauge invariances as they determine whether an object is analytic (gauge-dependent) or geometric (gauge-independent). The presentation includes a noticeable extension of the Friedrich scalar definition to all Weyl connections, see \Cref{sec:friedrich_scalar}. Let us also point out that the relevance of the (CVE) in the context of 4-dimensional aAdS spaces verifying the vacuum Einstein equations is detailed in \Cref{sec:application_to_4dim_aAdS_spaces}.

The core of the analytic part of the proof is contained in \Cref{sec:geometric_existence}. In particular, the gauge is constructed in \Cref{sec:gauge_construction} and the evolution problem is addressed in \Cref{sec:solving_evolution_system}. The analytic initial boundary value problem is then solved in \Cref{sec:solving_evolution_system} using maximal dissipative boundary conditions for symmetric hyperbolic systems, see \Cref{evol_solve_near}, and we prove the propagation of the constraints in \Cref{sec:propagation_constraints}.

\Cref{sec:geometric_bc} is dedicated to the geometric interpretation of a subset of the analytic boundary conditions. The first step, conducted in \Cref{sec:tensorial_bc_distribution}, consists in rewriting them as tensorial boundary conditions on a 2-dimensional distribution. Thanks to the asymptotic expansions of the curvature tensors based on the FG expansion found in \Cref{sec:expansions_curvature_tensors}, the analytic boundary conditions with constant coefficients are expressed in terms of the boundary stress-energy tensor and the Cotton-York tensor of the conformal boundary in \Cref{sec:replacing_tensors}. Most of the reflective analytic boundary conditions are then interpreted as the geometric Dirichlet boundary conditions in \Cref{sec:dirichlet} or as the geometric homogeneous Robin boundary conditions in \Cref{sec:homogeneous_robin}. This last step relies on auxiliary evolution systems on the conformal boundary. In the case of the Dirichlet boundary conditions, this was done by Friedrich in \cite{F95} while the treatment of the homogeneous Robin boundary conditions is new, see \Cref{prop:partial_to_total_robin_bc}.

The geometric initial data problem is discussed in \Cref{sec:geometric_initial_data}. Its precise formulation is given in \Cref{sec:formulation_gidp}. An adapted version of the conformal method established by Andersson and Chru\'sciel \cite{AC94,AC96} allows for a partial resolution of the geometric initial data problem in certain cases. This method is outlined in \Cref{sec:conformal_method}. Finally, \Cref{sec:unphysical_fields} focuses on the smoothness of the unphysical fields, a requirement of the geometric initial data problem. In the case of initial data constructed by the conformal method and under certain hypotheses, K\'ann\'ar derived necessary and sufficient condition for this requirement to be satisfied in \cite{K96}. We extend this result by deriving the necessary and sufficient conditions in all generality, see \Cref{thm:smooth_unphysical_fields} and its \Cref{cor:smooth_unphysical_fields}.

The last section, \Cref{sec:main_thm}, is comprised of the precise statement of the main theorem, \Cref{thm:main_precise}, as well as further remarks.

\subsection{The extended conformal vacuum Einstein equations}
\label{sec:CVE}

This section introduces the general framework of the extended Conformal Vacuum Einstein equations (CVE). We start by stating the equations and its gauge invariances in \Cref{sec:CVE_properties}. We then introduce the extended Friedrich scalar in \Cref{sec:friedrich_scalar}. It generalises the Friedrich scalar, used by Friedrich in \cite{F95}, to all Weyl connections.

In \Cref{sec:CVE_constraints}, we derive the constraints implied by the (CVE) on any non-null hypersurface. These are then rewritten into an intrinsic framework in \Cref{sec:CVC}. In particular, the constraints greatly simplify on a particular subset of a solution to the (CVE), called the conformal boundary. As its name suggests, this subset is closely related to the conformal boundary of Einstein spacetimes. More generally, the explicit relations between the (VE) and the (CVE) are given in \Cref{sec:CVE&VE}, as well as the equivalent relations at the level of the constraints. The specific case of 4-dimensional aAdS spaces satisfying the (VE) is further discussed in \Cref{sec:application_to_4dim_aAdS_spaces}, justifying the use of the (CVE) to solve our problem.

\subsubsection{The equations and gauge invariances}
\label{sec:CVE_properties}

\begin{defi}
    Denote by $\mathscr{E}$ the set of all quintuples $(\M,g,V,\Theta,\kappa)$ where
    \begin{itemize}
        \item $(\M,g)$ is a 4-dimensional smooth Lorentzian manifold with corners (see \Cref{sec:manifolds_with_corners}),
        \item $V$ is a Weyl candidate on $\M$ (see \Cref{def_Weyl_candidate}),
        \item $\Theta$ is a smooth function on $\M$,
        \item $\kappa$ is a smooth covector field on $\M$.
    \end{itemize}
\end{defi}

\begin{defi}
    A quintuple $(\M,g,V,\Theta,\kappa) \in \mathscr{E}$ is said to be a solution to the \emph{extended Conformal Vacuum Einstein equations (CVE)} if the following tensorial equations are satisfied on $\M$
    \begin{subequations}
    \makeatletter
    \def\@currentlabel{CVE}
    \makeatother
    \label{eq:CVE}
    \renewcommand{\theequation}{CVE.\arabic{equation}}
        \begin{align}
            \label{eq:CVE_V}
            \widehat{\nabla}_\alpha V^\alpha {}_{\beta\mu\nu} &= \kappa_\alpha V^\alpha {}_{\beta\mu\nu} \,, \\
            \label{eq:CVE_Gamma} 
            \widehat{R}^\alpha {}_{\beta\mu\nu} &= \Theta V^\alpha {}_{\beta\mu\nu} + 2 S_{\beta[\mu} {}^{\alpha\xi} \widehat{L}_{\nu]\xi} \,, \\
            \label{eq:CVE_L}
            2 \widehat{\nabla}_{[\alpha} \widehat{L}_{\beta]\gamma} &= (\widehat{\zeta}_g)_\mu V^\mu {}_{\gamma\alpha\beta} \,, \\
            \label{eq:CVE_const}
            \Theta^2 \widehat{L}_{\alpha\beta} &= - \Theta \widehat{\nabla}_{\alpha} (\widehat{\zeta}_g)_{\beta} + (\widehat{\zeta}_g)_{\beta} (d\Theta)_{\alpha} - \frac{1}{2} S_{\alpha\beta}{}^{\mu\nu} (\widehat{\zeta}_g)_\mu (\widehat{\zeta}_g)_\nu + \frac{\Lambda}{6}  g_{\alpha\beta} \,.
        \end{align}
    \end{subequations}
    where
    \begin{itemize}
        \item $\widehat{\nabla}$ is the Weyl connection on the conformal structure $(\M,[g])$ associated to $\kappa$ with respect to $g$ given by~\Cref{prop:exist&uniq_weylconn},
        \item $\widehat{R}^\alpha {}_{\beta\mu\nu}$ and $\widehat{L}_{\mu\nu}$ are the Riemann and Schouten tensors of $\widehat{\nabla}$ (see \Cref{def_riemann_hat} and \Cref{def:schouten_hat}),
        \item $S_{\alpha\beta}{}^{\mu\nu}$ is the tensor field defined by \eqref{eq:def_S},
        \item $\widehat{\zeta}_g$ is the smooth covector field on $\M$ defined by
        \begin{equation}
            \label{eq:def_zeta}
            (\widehat{\zeta}_g)_\alpha := \Theta \, \kappa_\alpha + (d\Theta)_\alpha \,,
        \end{equation}
        \item $\Lambda \in \R$ is the cosmological constant.
    \end{itemize}
\end{defi}

\begin{rem}
    By definition of the Cotton tensor \eqref{eq:def_cotton} and the decomposition of the Riemann tensor \eqref{eq:decomposition_Riemann_Weyl} of a Weyl connection, equations \eqref{eq:CVE_Gamma} and \eqref{eq:CVE_L} mean that the Cotton and Weyl tensors of $\widehat{\nabla}$ are given by
    \[ \widehat{W}^\alpha{}_{\beta\mu\nu} = \Theta V^\alpha{}_{\beta\mu\nu} \,, \qquad \qquad  \widehat{C}_{\alpha\beta\gamma} = (\widehat{\zeta}_g)_\mu V^\mu{}_{\gamma\alpha\beta} \,. \qedhere \]
\end{rem}

\noindent To describe the gauge invariances of the (CVE), let us introduce the following transformations on quintuples.

\begin{defi}
    \label{def:transfo_dim4}
    Let $(\M,g,V,\Theta,\kappa) \in \mathscr{E}$. Define
    \begin{itemize}
        \item for any confomorphism $\phi : (\M,g) \to (\overline{\M},\overline{g})$ (see \Cref{def:confomorphism}),
        \[ \Conf_\phi \left[(\M,g,V,\Theta,\kappa)\right] := \left(\overline{\M},\overline{g}, \phi_\star(\Omega^{-1} V), \phi_\star(\Omega \Theta), \phi_\star(\kappa-d\ln\Omega) \right) \in \mathscr{E} \,, \]
        where $\Omega \in \mathcal{C}^\infty(\M,\R_+^\star)$ is the conformal factor of $\phi$, that is $\phi^\star \overline{g} = \Omega^2 g$,
        \item for any smooth covector field $\omega$ on $\M$,
        \[ \Weylchg_\omega \left[(\M,g,V,\Theta,\kappa)\right] := \left(\M,g,V,\Theta,\kappa+\omega\right) \in \mathscr{E} \,, \]
        \item the sign change operator by
        \[ \Signchg \left[(\M,g,V,\Theta,\kappa)\right] := \left(\M,g,-V,-\Theta,\kappa\right) \in \mathscr{E} \,. \]
    \end{itemize}
\end{defi}

\begin{rem}
    Under the above transformations, the covector field $\widehat{\zeta}_g$ transforms as follows
    \begin{equation}
        \label{eq:transfo_zeta}
        \widehat{\zeta}_g \overset{\Conf_\phi}{\longmapsto} \phi_\star(\Omega \widehat{\zeta}_g) \,, \qquad \widehat{\zeta}_g \overset{\Weylchg_\omega}{\longmapsto} \widehat{\zeta}_g + \Theta \omega \,, \qquad \widehat{\zeta}_g \overset{\Signchg}{\longmapsto} -\widehat{\zeta}_g \,. \qedhere
    \end{equation}
\end{rem}

\begin{prop}
    \label{prop:invariance_CVE}
    The \eqref{eq:CVE} enjoy the following gauge invariances
    \begin{itemize}
        \item[i)] \emph{invariance under confomorphisms}: for any quintuple $(\M,g,V,\Theta,\kappa) \in \mathscr{E}$ and any confomorphism $\phi$ of $(\M,g)$, $(\M,g,V,\Theta,\kappa)$ is a solution if and only if $\Conf_\phi\big[(\M,g,V,\Theta,\kappa)\big]$ is a solution,
        \item[ii)] \emph{invariance under Weyl connection changes}: for any quintuple $(\M,g,V,\Theta,\kappa) \in \mathscr{E}$ and any smooth covector field $\omega$ on $\M$, $(\M,g,V,\Theta,\kappa)$ is a solution if and only if $\Weylchg_\omega \left[(\M,g,V,\Theta,\kappa)\right]$ is a solution,
        \item[iii)] \emph{invariance under sign change}: a quintuple $(\M,g,V,\Theta,\kappa) \in \mathscr{E}$ is a solution if and only if $\Signchg [(\M,g,V,\allowbreak \Theta,\kappa)]$ is a solution.
    \end{itemize}
\end{prop}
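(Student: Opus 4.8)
The plan is to verify each of the three invariances by substituting the transformed quintuple into the four equations \eqref{eq:CVE_V}--\eqref{eq:CVE_const} and checking that the transformed equations are equivalent to the original ones. Since each transformation is manifestly invertible (confomorphisms have inverse confomorphisms, $\Weylchg_\omega$ has inverse $\Weylchg_{-\omega}$, and $\Signchg$ is an involution), it suffices to prove one implication in each case; the ``if and only if'' then follows automatically. The crucial preliminary observation is to track how the \emph{associated Weyl connection} $\widehat{\nabla}$ and its curvature tensors $\widehat{R}$, $\widehat{L}$, $\widehat{W}$, $\widehat{C}$ transform, since these are not independent data but are determined by $g$ and $\kappa$ via \Cref{prop:exist&uniq_weylconn}.

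\textbf{Invariance under Weyl connection changes ($iii$-$ii$) first}, as it is the most delicate and genuinely uses the structure of the equations. Under $\Weylchg_\omega$ only $\kappa \mapsto \kappa + \omega$ changes, so the associated Weyl connection changes from $\widehat{\nabla}$ to $\widecheck{\nabla}$ whose covector field satisfies $\widecheck{\kappa}_g = \widehat{\kappa}_g + \omega$. I would invoke \Cref{prop:transformation_rules}: equation \eqref{transfo_weyl} gives $\widecheck{W} = \widehat{W}$, so \eqref{eq:CVE_Gamma} (in the form $\widehat{W} = \Theta V$) is preserved because $\Theta$ and $V$ are unchanged; the transformation of the full Riemann tensor is then controlled by the change of the Schouten tensor \eqref{transfo_schouten}, so \eqref{eq:CVE_Gamma} itself holds for $\widecheck{\nabla}$ with the same $V$. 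For \eqref{eq:CVE_L}, rewritten as $\widehat{C} = \widehat{\zeta}_g \cdot V$, I use the Cotton transformation \eqref{transfo_cotton} together with the transformation rule $\widehat{\zeta}_g \mapsto \widehat{\zeta}_g + \Theta\omega$ from \eqref{eq:transfo_zeta}; the extra term $\Theta\omega_\rho V^\rho{}_{\nu\lambda\mu}$ produced on the right-hand side must match the term $(\widecheck{\kappa}_g - \widehat{\kappa}_g)_\rho \widecheck{W}^\rho{}_{\nu\lambda\mu} = \omega_\rho \Theta V^\rho{}_{\nu\lambda\mu}$ coming from \eqref{transfo_cotton} (using $\widecheck{W} = \Theta V$), and these indeed coincide. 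For \eqref{eq:CVE_V}, the divergence of a Weyl candidate changes by a transition-tensor contraction computed from \Cref{prop_transition_weyl_conn}; I expect the resulting correction to combine with $\kappa \mapsto \kappa + \omega$ so that the equation is preserved. For \eqref{eq:CVE_const}, I substitute $\widehat{\zeta}_g \mapsto \widehat{\zeta}_g + \Theta\omega$ and $\widehat{L} \mapsto \widecheck{L}$ via \eqref{transfo_schouten}, and check that the algebraic and first-derivative terms reorganise into the same identity; this is the computation I expect to require the most care, since both the connection acting on $\widehat{\zeta}_g$ and the $S$-quadratic term pick up $\omega$-dependent pieces that must cancel.

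\textbf{Invariance under sign change ($iii$) is immediate}: replacing $(V,\Theta) \mapsto (-V,-\Theta)$ leaves $\kappa$, $g$, hence $\widehat{\nabla}$ and all its curvature tensors unchanged, and by \eqref{eq:def_zeta} we have $\widehat{\zeta}_g \mapsto -\widehat{\zeta}_g$. Then \eqref{eq:CVE_V} is linear homogeneous in $V$ (both sides flip sign); \eqref{eq:CVE_Gamma} has $\Theta V \mapsto (-\Theta)(-V) = \Theta V$ unchanged; \eqref{eq:CVE_L} has $\widehat{\zeta}_g \cdot V \mapsto (-\widehat{\zeta}_g)\cdot(-V)$ unchanged; and in \eqref{eq:CVE_const} every term is quadratic in $(\widehat{\zeta}_g,\Theta)$ or linear in $\widehat{L}$ multiplied by $\Theta$ or $d\Theta$, so one checks term by term that the overall sign is preserved. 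This is a quick parity count.

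\textbf{Invariance under confomorphisms ($i$) is conceptually the cleanest} but requires care with weights. The tensors $S_{\alpha\beta}{}^{\mu\nu}$, $\widehat{W}^\alpha{}_{\beta\mu\nu}$ are conformally invariant (weight $0$), and the weights prescribed in \Cref{def:transfo_dim4}, namely $V \mapsto \Omega^{-1}V$ and $\Theta \mapsto \Omega\Theta$, are chosen precisely so that $\Theta V$ is conformally invariant, matching $\widehat{W}$. The key point is that the associated Weyl connection is \emph{unchanged} as an abstract connection under a confomorphism combined with $\kappa \mapsto \kappa - d\ln\Omega$: this is exactly the content of the first remark after the definition of Weyl connections, since pushing forward by $\phi$ and rescaling $g$ by $\Omega^2$ while shifting $\kappa$ by $-d\ln\Omega$ keeps \eqref{weyl_conn} satisfied. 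Hence $\widehat{R}$, $\widehat{L}$, $\widehat{C}$ are simply pushed forward, and $\widehat{\zeta}_g \mapsto \phi_\star(\Omega\widehat{\zeta}_g)$ as in \eqref{eq:transfo_zeta}. I then check that each of the four equations transforms covariantly: \eqref{eq:CVE_Gamma} and \eqref{eq:CVE_L} hold because both sides are conformally invariant of matching weight, while \eqref{eq:CVE_const}, which carries the cosmological constant term $\tfrac{\Lambda}{6}g_{\alpha\beta}$, is the term to watch --- one verifies that $\Theta^2\widehat{L}_{\alpha\beta}$, the $\Theta$-linear derivative terms, the $\widehat{\zeta}_g$-quadratic term, and $g_{\alpha\beta}$ all transform with the same conformal weight so that the equation is preserved after push-forward.

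\textbf{The main obstacle} I anticipate is the detailed verification of \eqref{eq:CVE_const} under $\Weylchg_\omega$: unlike the other equations it mixes a $\Theta$-weighted covariant derivative of $\widehat{\zeta}_g$, an $S$-quadratic term in $\widehat{\zeta}_g$, and the Schouten tensor, so one must simultaneously use \eqref{transfo_schouten}, the transition tensor from \Cref{prop_transition_weyl_conn}, and $\widehat{\zeta}_g \mapsto \widehat{\zeta}_g + \Theta\omega$, and confirm that the numerous $\omega$-linear and $\omega$-quadratic cross terms cancel exactly. The bookkeeping is routine but error-prone; the other cases reduce to transformation laws already recorded in \Cref{prop:transformation_rules} and \eqref{eq:transfo_zeta}.
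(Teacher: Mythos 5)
Your proposal is correct and follows essentially the same route as the paper, which simply notes that invariance under confomorphisms and sign change is trivial and that invariance under Weyl connection changes follows from the transformation rules of \Cref{prop:transformation_rules}; your write-up is a fleshed-out version of exactly that argument, with the key identifications (the compensation between $\widehat{\zeta}_g \mapsto \widehat{\zeta}_g + \Theta\omega$ and the Cotton transformation term $\omega_\rho \widecheck{W}^\rho{}_{\nu\lambda\mu} = \Theta\,\omega_\rho V^\rho{}_{\nu\lambda\mu}$, and the fact that the associated Weyl connection is unchanged as an abstract connection under $\Conf_\phi$) all correctly placed.
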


\begin{rem}
    The local statement of invariance under confomorphisms includes, in particular, invariance under changes of coordinates which expresses the tensorial nature of the equations.
\end{rem}

\begin{proof}
    The invariance of the (CVE) under confomorphisms and under sign change is trivial. The invariance under Weyl connection changes can be proven using the transformation rules of the Weyl, Schouten and Cotton tensors given by \Cref{prop:transformation_rules}.
\end{proof}

\begin{defi}
    Two quintuples in $\mathscr{E}$ are said to be \emph{gauge-equivalent} if they are related by a composition of confomorphisms $\Conf_\phi$, Weyl connection changes $\Weylchg_\omega$ and sign changes $\Signchg$.
\end{defi}

\begin{rem}
    From the properties
    \begin{alignat*}{3}
        S^2 &= \text{Id} \,, &\qquad\quad T_{\omega_1} T_{\omega_2} &= T_{\omega_1+\omega_2} \,, & \qquad\quad C_{\phi_1} C_{\phi_2} &= C_{\phi_1 \circ \phi_2} \,, \\
        [S,T_\omega] &= 0 \,, & \qquad [S,C_\phi] &= 0 \,, & \qquad  C_\phi T_\omega &= T_{\phi_\star \omega} C_\phi \,,
    \end{alignat*}
    one deduces that if $(\M,g,V,\Theta,\kappa)$, $\left(\overline{\M},\overline{g},\overline{V},\overline{\Theta},\overline{\kappa}\right) \in \mathscr{E}$ are two gauge-equivalent quintuples then there exists a confomorphism $\phi : (\M,g) \to \left(\overline{\M},\overline{g}\right)$, a smooth covector field $\omega$ on $\M$ and $k\in\{0,1\}$ such that
    \[ \left(\overline{\M},\overline{g},\overline{V},\overline{\Theta},\overline{\kappa}\right) = C_\phi T_\omega S^k \left[(\M,g,V,\Theta,\kappa)\right] \,. \qedhere \]
\end{rem}

\subsubsection{The extended Friedrich scalar}
\label{sec:friedrich_scalar}

In this section, we introduce a new quantity, called the extended Friedrich scalar, which extends the original Friedrich scalar to all Weyl connections. This scalar field enables us to split equation~\eqref{eq:CVE_const} into two equations that behave better where $\Theta$ vanishes, see \Cref{lem:split_CVE_regular} below. This modification enables us to obtain a propagation system for the constraints free of singularities contrary to Friedrich's work. The (extended) Friedrich scalar is also linked to the properties of the conformal boundary as we will see in~\Cref{sec:CVE_conf_boundary}.

\begin{defi}
    The \emph{extended Friedrich scalar} of a quintuple $(\M,g,V,\Theta,\kappa) \in \mathscr{E}$ is defined by
    \begin{equation}
        \label{eq:extended_Friedrich_scalar}
        \widehat{s}_g := \frac{g^{\alpha\beta}}{4}  \left( \widehat{\nabla}_\alpha (\widehat{\zeta}_g)_\beta + \kappa_\alpha (\widehat{\zeta}_g)_\beta + \Theta \widehat{L}_{\alpha\beta} \right) \,,
    \end{equation}
    where $\widehat{\zeta}_g$ is the covector field given by \eqref{eq:def_zeta} and $\widehat{\nabla}$ is the Weyl connection associated to $\kappa$ with respect to $g$.
\end{defi}

\begin{rem}
    When $\kappa = 0$, that is for the Levi-Civita connection $\nabla$ of $g$, the extended Friedrich scalar reduces to the Friedrich scalar given by
    \[ s_g = \frac{1}{4} \left( \Box_g \Theta + L \Theta \right) \,, \]
    where $L := L_\alpha{}^\alpha$ is the trace of the Schouten tensor of $\nabla$. Note that $\Box_g+L$ is not the linear scalar conformal wave operator which writes $\Box_g - L$ in dimension $4$.
\end{rem}

\begin{ex}
    On the Einstein cylinder $(\M_{EC},g_{EC})$ defined by \eqref{eq:EC} and with $\Theta = \cos\psi$, one has 
    \[ s_{g_{EC}} = \frac{1}{4} \left( \Box_{g_{EC}} \cos\psi + \cos\psi \right) = \frac{1}{4} \left( -3\cos\psi +\cos\psi\right) = - \frac{\Theta}{2} \,. \qedhere \]
\end{ex}

\begin{prop}
    \label{prop:transfo_s}
    Under the transformations on quintuples defined in \Cref{def:transfo_dim4}, the extended Friedrich scalar transforms as follows
        \begin{equation}
            \label{eq:transfo_s}
            \widehat{s}_g \overset{\Conf_\phi}{\longmapsto} \phi_\star(\Omega^{-1} \widehat{s}_g) \,, \qquad \widehat{s}_g \overset{\Weylchg_\omega}{\longmapsto} \widehat{s}_g + g^{-1}(\omega,\widehat{\zeta}_g) + \frac{\Theta}{2} g^{-1}(\omega,\omega) \,, \qquad \widehat{s}_g \overset{\Signchg}{\longmapsto} -\widehat{s}_g \,.
        \end{equation}
\end{prop}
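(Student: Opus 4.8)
The plan is to verify each of the three transformation rules in \eqref{eq:transfo_s} by direct computation from the definition \eqref{eq:extended_Friedrich_scalar}, leveraging the already-established transformation laws for the constituent objects. The quantity to track is
\[ \widehat{s}_g = \frac{g^{\alpha\beta}}{4} \left( \widehat{\nabla}_\alpha (\widehat{\zeta}_g)_\beta + \kappa_\alpha (\widehat{\zeta}_g)_\beta + \Theta \widehat{L}_{\alpha\beta} \right) \,. \]

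First I would dispatch the two easy cases. For the sign change $\Signchg$, note that under $(\Theta,V)\mapsto(-\Theta,-V)$ with $g,\kappa$ unchanged, the covector $\widehat{\zeta}_g = \Theta\kappa + d\Theta$ flips sign by \eqref{eq:transfo_zeta}, while $\widehat{\nabla}$, $\kappa$ and $\widehat{L}$ are determined by $g$ and $\kappa$ alone and are thus unchanged. Each of the three terms in the bracket is linear in exactly one factor among $\{\widehat{\zeta}_g,\Theta\}$, so the whole bracket picks up a single minus sign, giving $\widehat{s}_g \mapsto -\widehat{s}_g$. For the confomorphism $\Conf_\phi$, I would observe that $\widehat{s}_g$ is a smooth scalar built covariantly from $(g,V,\Theta,\kappa)$; since $\Conf_\phi$ merely pushes forward under $\phi$ while rescaling, the only nontrivial point is the conformal weight. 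The relevant scaling behaviour follows from $\widehat{\zeta}_g \mapsto \phi_\star(\Omega\widehat{\zeta}_g)$ in \eqref{eq:transfo_zeta}, together with the fact that $g^{\alpha\beta}$ carries weight $+2$, the Weyl-covariant derivative and $\kappa$ are intrinsic to $[g]$ with $\kappa \mapsto \kappa - d\ln\Omega$, and $\Theta \mapsto \phi_\star(\Omega\Theta)$; balancing these weights yields overall weight $-1$, i.e. $\widehat{s}_g \mapsto \phi_\star(\Omega^{-1}\widehat{s}_g)$.

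The main obstacle is the Weyl connection change $\Weylchg_\omega$, where $\kappa \mapsto \kappa+\omega$ and $g,V,\Theta$ stay fixed. Here the connection itself changes: writing $\widecheck{\nabla}$ for the connection of $\kappa+\omega$ and $\widehat{\nabla}$ for that of $\kappa$, the transition tensor is $Q_\alpha{}^\gamma{}_\beta = S_{\alpha\beta}{}^{\gamma\mu}\omega_\mu$ by \Cref{prop_transition_weyl_conn}. I would substitute $\widecheck{\zeta}_g = \widehat{\zeta}_g + \Theta\omega$ from \eqref{eq:transfo_zeta}, expand $\widecheck{\nabla}_\alpha(\widecheck{\zeta}_g)_\beta$ using $\widecheck{\nabla} = \widehat{\nabla} + Q$ on the covector $\widecheck{\zeta}_g$, and insert the Schouten transformation law \eqref{transfo_schouten} for $\widecheck{L}_{\alpha\beta}$. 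After tracing with $g^{\alpha\beta}/4$, the terms linear in $\widehat{\nabla}\Theta$, $\widehat{\nabla}\omega$ and the Schouten $\widehat\nabla$-derivative of $\omega$ must be shown to cancel, leaving precisely the extra contributions $g^{-1}(\omega,\widehat{\zeta}_g) + \tfrac{\Theta}{2}g^{-1}(\omega,\omega)$. The delicate bookkeeping involves the $S_{\alpha\beta}{}^{\mu\nu}$ contractions: one uses $g^{\alpha\beta}S_{\alpha\beta}{}^{\mu\nu} = (2-4)g^{\mu\nu} = -2g^{\mu\nu}$ in dimension $N=4$ (since $g^{\alpha\beta}S_{\alpha\beta}{}^{\mu\nu}=(2-N)g^{\mu\nu}$), so the quadratic term $-\tfrac12 S_{\alpha\beta}{}^{\mu\nu}\omega_\mu\omega_\nu$ inside $\widecheck L$ traces to $+g^{-1}(\omega,\omega)$, and with the factor $\Theta/4$ and combination with the $Q$-generated quadratic piece produces the stated coefficient $\Theta/2$. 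I expect the verification that all the $\Theta$-derivative cross-terms combine to $g^{-1}(\omega,\widehat\zeta_g)$ to require careful use of $\widehat\zeta_g = \Theta\kappa + d\Theta$ and of $\widehat\nabla_\alpha\Theta = (d\Theta)_\alpha$, but no identity beyond those already recorded in \Cref{prop:transformation_rules} and \Cref{prop_transition_weyl_conn} is needed.
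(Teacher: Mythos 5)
Your proposal is correct. The paper states this proposition without proof, treating it as a routine verification, and your direct computation from the definition \eqref{eq:extended_Friedrich_scalar} is exactly the expected argument: the sign-change and confomorphism cases go through as you describe (the $(d\Omega)_\alpha(\widehat{\zeta}_g)_\beta$ cross-terms from $\widehat{\nabla}_\alpha(\Omega\widehat{\zeta}_g)_\beta$ and from $\overline{\kappa}_\alpha\overline{\zeta}_\beta$ cancel, leaving overall weight $-1$), and for $\Weylchg_\omega$ the bookkeeping you outline checks out — $g^{\alpha\beta}S_{\alpha\beta}{}^{\mu\nu}=-2g^{\mu\nu}$ in dimension $4$, the traces of $\widehat{\nabla}_\alpha\omega_\beta$ cancel between the $\widecheck{\nabla}\widecheck{\zeta}$ term and the Schouten transformation law \eqref{transfo_schouten}, the remaining quadratic pieces sum to $2\Theta g^{-1}(\omega,\omega)$ giving the coefficient $\Theta/2$ after dividing by $4$, and the linear pieces assemble to $4g^{-1}(\omega,\widehat{\zeta}_g)$ using $\widehat{\zeta}_g=\Theta\kappa+d\Theta$.
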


\begin{lem}
    \label{lem:split_CVE_regular}
    Let $(\M,g,V,\Theta,\kappa) \in \mathscr{E}$. Equation \eqref{eq:CVE_const} is equivalent to 
    \begin{subequations}
        \begin{align}
            \label{eq:CVE_zeta}
            \widehat{\nabla}_\alpha (\widehat{\zeta}_g)_\beta + \kappa_\alpha (\widehat{\zeta}_g)_\beta + \Theta \widehat{L}_{\alpha\beta} &= \widehat{s}_g \,  g_{\alpha\beta} \,, \\
            \label{eq:CVE_const_bis}
            \frac{1}{2} \left( g^{-1}(\widehat{\zeta}_g,\widehat{\zeta}_g) + \frac{\Lambda}{3} \right) &= \widehat{s}_g \Theta \,.
        \end{align}
    \end{subequations}
    Furthermore, if \eqref{eq:CVE_zeta} and \eqref{eq:CVE_const_bis} hold then
    \begin{equation}
        \label{eq:grad_s}
        (d\widehat{s}_g)_\alpha - \kappa_\alpha \widehat{s}_g = - \widehat{L}_{\alpha\beta} g^{\beta\gamma} (\widehat{\zeta}_g)_\gamma \,.
    \end{equation}
\end{lem}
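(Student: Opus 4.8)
The plan is to prove \Cref{lem:split_CVE_regular} in three stages, exploiting the definition of the extended Friedrich scalar as the quantity that isolates the pure-trace part of \eqref{eq:CVE_const}. First I would establish the equivalence of \eqref{eq:CVE_const} with the pair \eqref{eq:CVE_zeta}--\eqref{eq:CVE_const_bis}. Rewriting \eqref{eq:CVE_const}, I would use the definition $(\widehat{\zeta}_g)_\alpha := \Theta\kappa_\alpha + (d\Theta)_\alpha$ to express $(d\Theta)_\alpha = (\widehat{\zeta}_g)_\alpha - \Theta\kappa_\alpha$, so that the term $(\widehat{\zeta}_g)_\beta(d\Theta)_\alpha$ becomes $(\widehat{\zeta}_g)_\alpha(\widehat{\zeta}_g)_\beta - \Theta\kappa_\alpha(\widehat{\zeta}_g)_\beta$. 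Recalling $S_{\alpha\beta}{}^{\mu\nu}(\widehat{\zeta}_g)_\mu(\widehat{\zeta}_g)_\nu = 2(\widehat{\zeta}_g)_\alpha(\widehat{\zeta}_g)_\beta - g_{\alpha\beta}\,g^{-1}(\widehat{\zeta}_g,\widehat{\zeta}_g)$ from \eqref{eq:def_S}, the quadratic terms in $\widehat{\zeta}_g$ combine into $-\tfrac{1}{2}g_{\alpha\beta}\,g^{-1}(\widehat{\zeta}_g,\widehat{\zeta}_g)$. Thus \eqref{eq:CVE_const} rearranges to
\begin{equation*}
    \Theta\Big( \widehat{\nabla}_\alpha(\widehat{\zeta}_g)_\beta + \kappa_\alpha(\widehat{\zeta}_g)_\beta + \Theta\widehat{L}_{\alpha\beta} \Big) = \tfrac{1}{2} g_{\alpha\beta} \Big( g^{-1}(\widehat{\zeta}_g,\widehat{\zeta}_g) + \tfrac{\Lambda}{3} \Big).
\end{equation*}
This shows the bracketed tensor on the left is pure trace wherever $\Theta\neq 0$; its trace is exactly $4\widehat{s}_g$ by \eqref{eq:extended_Friedrich_scalar}, giving \eqref{eq:CVE_zeta}, and substituting back yields \eqref{eq:CVE_const_bis}. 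Conversely, multiplying \eqref{eq:CVE_zeta} by $\Theta$ and inserting \eqref{eq:CVE_const_bis} recovers the displayed identity and hence \eqref{eq:CVE_const}. The only delicate point is that the forward implication must not divide by $\Theta$; instead I read off \eqref{eq:CVE_zeta} as the statement that the left-hand tensor is pure trace with trace $4\widehat{s}_g$ by the very definition of $\widehat{s}_g$, then \eqref{eq:CVE_const_bis} follows by taking the trace of the whole equation.

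For the second stage, the derivation of \eqref{eq:grad_s}, the idea is to differentiate \eqref{eq:CVE_zeta} and antisymmetrise, using the other (CVE) equations to eliminate second-order terms. I would apply $\widehat{\nabla}_{[\gamma}$ to \eqref{eq:CVE_zeta} with the free index being the second slot, converting the second covariant derivative of $\widehat{\zeta}_g$ into curvature via $2\widehat{\nabla}_{[\gamma}\widehat{\nabla}_{\alpha]}(\widehat{\zeta}_g)_\beta = -\widehat{R}^\mu{}_{\beta\gamma\alpha}(\widehat{\zeta}_g)_\mu + (\text{terms from }\widehat{\nabla}\epsilon\text{-type torsion vanishing})$, then feed in \eqref{eq:CVE_Gamma} for $\widehat{R}$ and \eqref{eq:CVE_L} for $\widehat{\nabla}_{[\alpha}\widehat{L}_{\beta]\gamma}$. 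The Weyl-candidate contractions $V^\mu{}_{\beta\gamma\alpha}(\widehat{\zeta}_g)_\mu$ should cancel against the Cotton-type terms $(\widehat{\zeta}_g)_\mu V^\mu{}_{\gamma\alpha\beta}$ once the algebraic symmetries of $V$ (the cyclic identity $V^\alpha{}_{[\beta\mu\nu]}=0$) are invoked, leaving a first-order relation whose trace, after contracting with $g^{\alpha\gamma}$ and using \eqref{eq:CVE_const_bis} to handle the $g^{-1}(\widehat{\zeta}_g,\widehat{\zeta}_g)$ contribution, collapses to \eqref{eq:grad_s}.

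Alternatively, and more cleanly, I expect the fastest route to \eqref{eq:grad_s} is to differentiate \eqref{eq:CVE_const_bis} directly: taking $\widehat{\nabla}_\alpha$ of $\tfrac{1}{2}g^{-1}(\widehat{\zeta}_g,\widehat{\zeta}_g) + \tfrac{\Lambda}{6} = \widehat{s}_g\Theta$, accounting for $\widehat{\nabla}_\alpha g^{\beta\gamma} = 2\kappa_\alpha g^{\beta\gamma}$ (the Weyl-connection non-metricity), and using $\widehat{\nabla}_\alpha\Theta = (d\Theta)_\alpha = (\widehat{\zeta}_g)_\alpha - \Theta\kappa_\alpha$ together with \eqref{eq:CVE_zeta} to replace $\widehat{\nabla}_\alpha(\widehat{\zeta}_g)_\beta$. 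The contraction $(\widehat{\zeta}_g)^\beta\widehat{\nabla}_\alpha(\widehat{\zeta}_g)_\beta$ produced by the product rule brings in $\widehat{s}_g(\widehat{\zeta}_g)_\alpha - \Theta\widehat{L}_{\alpha\beta}g^{\beta\gamma}(\widehat{\zeta}_g)_\gamma - \kappa_\alpha(\widehat{\zeta}_g)^\beta(\widehat{\zeta}_g)_\beta$ via \eqref{eq:CVE_zeta}, and after substituting $(\widehat{\zeta}_g)^\beta(\widehat{\zeta}_g)_\beta = 2\widehat{s}_g\Theta - \tfrac{\Lambda}{3}$ from \eqref{eq:CVE_const_bis} the $\widehat{s}_g$ and $\Lambda$ terms should organise so that one can divide through by $\Theta$ formally but in fact obtain a $\Theta$-free identity. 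The main obstacle I anticipate is precisely managing these non-metricity corrections and the $\kappa$-dependent terms so that the factor of $\Theta$ cancels cleanly and no spurious $\kappa$ remains, yielding the stated $(d\widehat{s}_g)_\alpha - \kappa_\alpha\widehat{s}_g = -\widehat{L}_{\alpha\beta}g^{\beta\gamma}(\widehat{\zeta}_g)_\gamma$; I would cross-check the bookkeeping against the transformation rule \eqref{eq:transfo_s} for $\widehat{s}_g$, which must be consistent with \eqref{eq:grad_s} under $\Weylchg_\omega$.
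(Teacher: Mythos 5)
Your computation for the equivalence is the paper's: rewrite \eqref{eq:CVE_const} using \eqref{eq:def_zeta} and \eqref{eq:def_S} to get $\Theta\bigl(\widehat{\nabla}_\alpha(\widehat{\zeta}_g)_\beta + \kappa_\alpha(\widehat{\zeta}_g)_\beta + \Theta\widehat{L}_{\alpha\beta}\bigr) = \tfrac{1}{2}g_{\alpha\beta}\bigl(g^{-1}(\widehat{\zeta}_g,\widehat{\zeta}_g)+\tfrac{\Lambda}{3}\bigr)$, take the trace, substitute back; and your ``cleaner'' route to \eqref{eq:grad_s} (differentiate \eqref{eq:CVE_const_bis}, use $\widehat{\nabla}_\alpha g^{\beta\gamma}=2\kappa_\alpha g^{\beta\gamma}$ and \eqref{eq:CVE_zeta}) is exactly the paper's second half. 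But there is a genuine gap at the one place where the lemma actually has content, namely the set $\{\Theta=0\}$. In both halves you end up with an identity of the form $\Theta\cdot Y_{\alpha\beta}=0$ (resp. $\Theta\cdot Y_\alpha=0$), and you need $Y=0$ everywhere. You correctly refuse to divide by $\Theta$, but your substitute --- ``read off \eqref{eq:CVE_zeta} as the statement that the left-hand tensor is pure trace with trace $4\widehat{s}_g$ by the very definition of $\widehat{s}_g$'' --- does not close the gap: the definition \eqref{eq:extended_Friedrich_scalar} only guarantees that the \emph{traces} of the two sides of \eqref{eq:CVE_zeta} agree, which is automatic; it says nothing about the trace-free part of $\widehat{\nabla}_\alpha(\widehat{\zeta}_g)_\beta + \kappa_\alpha(\widehat{\zeta}_g)_\beta + \Theta\widehat{L}_{\alpha\beta}$ at a point where $\Theta$ vanishes, and the identity $\Theta\,Y_{\alpha\beta}=0$ gives no information there either. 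The paper closes this by splitting $\M$: on $\supp\Theta:=\overline{\{\Theta\neq0\}}$ the identity $Y=0$ extends by continuity from the dense open set $\{\Theta\neq0\}$; on the open complement $\M\setminus\supp\Theta$ one has $\widehat{\zeta}_g=\Theta\kappa+d\Theta\equiv0$, hence also $\widehat{\nabla}\widehat{\zeta}_g=0$ and $\widehat{s}_g=0$ there, so \eqref{eq:CVE_zeta} and \eqref{eq:grad_s} reduce to $0=0$. You need some version of this two-region argument; without it the forward implication and the last claim of the lemma are unproved precisely at the conformal boundary, which is the locus the lemma is designed for.

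A secondary point: your first proposed route to \eqref{eq:grad_s} (differentiate \eqref{eq:CVE_zeta}, antisymmetrise, and feed in \eqref{eq:CVE_Gamma} and \eqref{eq:CVE_L}) is not admissible for the lemma as stated, since its hypotheses are only \eqref{eq:CVE_zeta} and \eqref{eq:CVE_const_bis}, not the full \eqref{eq:CVE}; stick with the direct differentiation of \eqref{eq:CVE_const_bis}. The converse direction of the equivalence (multiply \eqref{eq:CVE_zeta} by $\Theta$ and insert \eqref{eq:CVE_const_bis}) is fine as you describe it and needs no division by $\Theta$.
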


\begin{proof}
    By definition of $\widehat{\zeta}_g$, see equation~\eqref{eq:def_zeta}, \eqref{eq:CVE_const} rewrites as
    \[ \Theta \left( \widehat{\nabla}_\alpha (\widehat{\zeta}_g)_\beta + \kappa_\alpha (\widehat{\zeta}_g)_\beta + \Theta \widehat{L}_{\alpha\beta} \right) = \frac{g_{\alpha\beta}}{2}  \left( g^{-1}(\widehat{\zeta}_g,\widehat{\zeta}_g) + \frac{\Lambda}{3} \right) \,. \]
    Taking the trace with respect to the metric $g$ gives \eqref{eq:CVE_const_bis}. Then
    \[ \Theta \left( \widehat{\nabla}_\alpha (\widehat{\zeta}_g)_\beta + \kappa_\alpha (\widehat{\zeta}_g)_\beta + \Theta \widehat{L}_{\alpha\beta} - \widehat{s}_g \,  g_{\alpha\beta} \right) = 0 \,. \]
    Thus, \eqref{eq:CVE_zeta} holds on $\{p \in \M \mid \Theta(p) \neq 0\}$, and by continuity on
    \[ \supp \Theta := \overline{\{p\in \M \mid \Theta(p) \neq 0\}} \,. \]
    If $p \notin \supp \Theta$ then there exists an open neighbourhood $\mathcal{U}$ of $p$ on which $\Theta$ vanishes identically. Then $\widehat{\zeta}_g$ also vanishes on $\mathcal{U}$ and \eqref{eq:CVE_zeta} becomes trivial. Hence, equation \eqref{eq:CVE_zeta} is verified everywhere on $\M$.

    Furthermore, by taking the covariant derivative of \eqref{eq:CVE_const_bis} and using \eqref{eq:CVE_zeta},
    \begin{align*}
        \Theta (d\widehat{s}_g)_\alpha + \widehat{s}_g (d\Theta)_\alpha &= \kappa_\alpha g^{-1}(\widehat{\zeta}_g,\widehat{\zeta}_g) + g^{-1}(\widehat{\nabla}_\alpha \widehat{\zeta}_g,\widehat{\zeta}_g) \\
        &= - \Theta \widehat{L}_{\alpha\beta} g^{\beta\gamma} (\widehat{\zeta}_g)_\gamma + \widehat{s}_g (\widehat{\zeta}_g)_\alpha \,.
    \end{align*}
    Thus
    \[ \Theta \left( (d\widehat{s}_g)_\alpha - \kappa_\alpha \widehat{s}_g + \widehat{L}_{\alpha\beta} g^{\beta\gamma} (\widehat{\zeta}_g)_\gamma \right) = 0 \,. \]
    Consequently, \eqref{eq:grad_s} is verified on $\supp\Theta$. As above, one can show that it also holds on the complementary set. Hence the result.
\end{proof}

\subsubsection{The constraint equations on non-null hypersurfaces}
\label{sec:CVE_constraints}

Let $(\M,g,V,\Theta,\kappa) \in \mathscr{E}$ be a solution to the (CVE) and $\mathcal{S} \subset \M$ be a non-null hypersurface. The (CVE) induce some tensorial equations on $\mathcal{S}$ called the constraint equations. To state them, introduce
\begin{itemize}
    \item $e_\perp$ a unit-normal vector field on $\mathcal{S}$ with respect to $g$,
    \item $\varepsilon := g(e_\perp,e_\perp) \in \{\pm 1\}$ the $g$ pseudo-norm of $e_\perp$,
    \item $h$ the metric induced by $g$ on $\mathcal{S}$,
    \item $\widehat{D}$ the Weyl connection induced by $\widehat{\nabla}$ on $\mathcal{S}$ (see \Cref{sec:induced_conformal_structure}),
    \item $\widehat{K}_g$ the extrinsic curvature of $\widehat{\nabla}$ with respect to $g$ on $\mathcal{S}$ (see \Cref{sec:extrinsic_curvature}).
\end{itemize}

\begin{prop}
    The constraint equations induced by the \eqref{eq:CVE} on the non-null hypersurface $\mathcal{S}$ are
    \begin{subequations}
        \label{eq:constraints_hypersurface}
        \begin{align}
            \label{eq:constraint_a}
            \widehat{D}_i M_{jk}{}^i &=  2 (\widehat{K}_g)^i{}_{[j} E_{k]i} + 2\kappa_i M_{jk}{}^i \,, \\
            \label{eq:constraint_b}
            \widehat{D}_i E^i{}_j &= \varepsilon (\widehat{K}_g)^{ik} M_{jki} + 3\kappa_i E^i{}_j \,, \\
            \label{eq:constraint_c}
            \widehat{L}_{ij} - \widehat{l}_{ij} &= \Theta E_{ij} -\varepsilon F_{ij}(\widehat{K}_g) \,, \\
            \label{eq:constraint_d}
            2\widehat{L}_{k\perp} &= \varepsilon h^{ij} G_{kij}(\widehat{K}_g,\kappa) \,, \\
            \label{eq:constraint_e}
            \Theta M_{klj} &= G_{klj}(\widehat{K}_g,\kappa) + h_{j[k} h^{ip} G_{l]ip}(\widehat{K}_g,\kappa) \,, \\
            \label{eq:constraint_f}
            2 \widehat{D}_{[i} \widehat{L}_{j]k} &= 2 (\widehat{K}_g)_{k[i} \widehat{L}_{j]\perp} + z M_{ijk} - 2(\widehat{\zeta}_g)_l S_{k[i}{}^{lm} E_{j]m}  \,, \\
            \label{eq:constraint_g}
            2 \widehat{D}_{[i} \widehat{L}_{j]\perp} &= - 2\varepsilon (\widehat{K}_g)^k{}_{[i} \widehat{L}_{j]k} + 2 \kappa_{[i} \widehat{L}_{j]\perp} -\varepsilon  M_{ij}{}^k (\widehat{\zeta}_g)_k \,, \\
            \label{eq:constraint_h}
            \widehat{D}_i (\widehat{\zeta}_g)_j + \kappa_i (\widehat{\zeta}_g)_j & = z (\widehat{K}_g)_{ij} - \Theta \widehat{L}_{ij} + \widehat{s}_g h_{ij} \,, \\
            \label{eq:constraint_i}
            (dz)_i & = - \varepsilon (\widehat{K}_g)_i{}^j (\widehat{\zeta}_g)_j - \Theta \widehat{L}_{i\perp} \,, \\
            \label{eq:constraint_j}
            (d\widehat{s}_g)_i  - \kappa_i \widehat{s}_g & = - \varepsilon z \widehat{L}_{i\perp} - \widehat{L}_{ij} h^{jk} (\widehat{\zeta}_g)_k \,, \\
            \label{eq:constraint_k}
            \widehat{s}_g \Theta & = \frac{1}{2} \left( h^{ij} (\widehat{\zeta}_g)_i (\widehat{\zeta}_g)_j + \varepsilon z^2 + \frac{\Lambda}{3} \right) \,,
        \end{align}
    \end{subequations}
    where
    \begin{itemize}
        \item $(E_{ij},M_{ijk})$ is the decomposition of the Weyl candidate $V$ with respect to $\mathcal{S}$ given by~\Cref{lem:decomposition_Weyl_cand} (the indices of $E$ and $M$ are raised with the metric $h$),
        \item $z := (\widehat{\zeta}_g)_\perp$ is seen as a smooth scalar field on $\mathcal{S}$,
        \item $\widehat{L}_{i\perp}$ is seen as smooth vector fields on $\mathcal{S}$,
        \item $F_{ij}(\widehat{K}_g)$ and $G_{ijk}(\widehat{K}_g,\kappa)$ are defined by equations~\eqref{eq:def_F_G}.
    \end{itemize}
\end{prop}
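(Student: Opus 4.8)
The plan is to derive each of the constraint equations \eqref{eq:constraint_a}--\eqref{eq:constraint_k} by projecting the four tensorial equations \eqref{eq:CVE_V}--\eqref{eq:CVE_const} onto the non-null hypersurface $\mathcal{S}$, using the adapted frame field $(e_\perp,(e_{\bf i}))$ and the decomposition machinery already assembled in the preliminaries. The essential tools are: the decomposition of the Weyl candidate $V$ into $(E,M,P)$ from \Cref{lem:decomposition_Weyl_cand} (with $P=0$ since $N=4$ allows nonzero $P$, so here $P$ must be carried along or shown to drop out in the relevant projections), the Codazzi equations for Weyl connections in the form \eqref{codazzi_weyl_schouten}, the connection coefficients of \Cref{lem:connection_coeff}, and the splitting of \eqref{eq:CVE_const} provided by \Cref{lem:split_CVE_regular}.

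First I would treat the divergence equation \eqref{eq:CVE_V}, namely $\widehat{\nabla}_\alpha V^\alpha{}_{\beta\mu\nu} = \kappa_\alpha V^\alpha{}_{\beta\mu\nu}$. Projecting the free index $\beta$ and the pair $(\mu,\nu)$ either tangentially or normally, and expanding the covariant derivative $\widehat{\nabla}_\alpha$ into its tangential part $\widehat{D}_{\bf i}$ plus normal corrections involving $\widehat{K}_g$ via \Cref{lem:connection_coeff}, should yield the two Bianchi-type constraints \eqref{eq:constraint_a} and \eqref{eq:constraint_b}. The terms $2(\widehat{K}_g)^i{}_{[j}E_{k]i}$ and $\varepsilon(\widehat{K}_g)^{ik}M_{jki}$ arise precisely from commuting the normal derivative through the decomposition, while the factors $2\kappa_i$ and $3\kappa_i$ match the Weyl-weight bookkeeping of $M$ (weight $1$) and $E$ (weight $0$) as recorded after \Cref{lem:decomposition_Weyl_cand}. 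Next, equation \eqref{eq:CVE_Gamma}, which via the remark after the (CVE) says $\widehat{W}^\alpha{}_{\beta\mu\nu}=\Theta V^\alpha{}_{\beta\mu\nu}$, feeds directly into the Codazzi relations \eqref{eq:codazzi_E}, \eqref{eq:codazzi_schouten}, \eqref{eq:codazzi_M}: substituting $\widehat{W}=\Theta V$ into the $(E,M)$ decomposition of $\widehat{W}$ converts these into \eqref{eq:constraint_c}, \eqref{eq:constraint_d} and \eqref{eq:constraint_e} respectively, the factor $\Theta$ now multiplying $E$ and $M$ exactly as stated.

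For \eqref{eq:CVE_L}, rewritten as $\widehat{C}_{\alpha\beta\gamma}=(\widehat{\zeta}_g)_\mu V^\mu{}_{\gamma\alpha\beta}$, I would project the defining relation $\widehat{C}_{\lambda\mu\nu}=2\widehat{\nabla}_{[\lambda}\widehat{L}_{\mu]\nu}$ onto $\mathcal{S}$. Splitting $\widehat{L}$ into its tangential block $\widehat{L}_{ij}$, mixed part $\widehat{L}_{i\perp}$, and using \Cref{lem:connection_coeff} to express normal derivatives through $\widehat{K}_g$, together with \eqref{eq:SxJ} from the duality remark to rewrite $(\widehat{\zeta}_g)_l S_{k[i}{}^{lm}E_{j]m}$, should produce \eqref{eq:constraint_f} and \eqref{eq:constraint_g}; the scalar $z:=(\widehat{\zeta}_g)_\perp$ enters through the contraction of $(\widehat{\zeta}_g)_\mu V^\mu{}_{\gamma\alpha\beta}$ with the normal direction. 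Finally, equation \eqref{eq:CVE_const} I would replace by its equivalent pair \eqref{eq:CVE_zeta}--\eqref{eq:CVE_const_bis} from \Cref{lem:split_CVE_regular}: the tangential-tangential projection of \eqref{eq:CVE_zeta} gives \eqref{eq:constraint_h} (with $z(\widehat{K}_g)_{ij}$ coming from $(\widehat{\zeta}_g)_\perp \widehat{\Gamma}_{\bf i}{}^\perp{}_{\bf j}$), the normal-tangential projection gives \eqref{eq:constraint_i}, the gradient relation \eqref{eq:grad_s} projected tangentially yields \eqref{eq:constraint_j}, and \eqref{eq:CVE_const_bis} becomes the scalar constraint \eqref{eq:constraint_k} once $g^{-1}(\widehat{\zeta}_g,\widehat{\zeta}_g)$ is split as $h^{ij}(\widehat{\zeta}_g)_i(\widehat{\zeta}_g)_j+\varepsilon z^2$.

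The main obstacle will be the careful and consistent bookkeeping of the normal-direction connection coefficients throughout: because $\widehat{\nabla}$ is a Weyl and not a Levi-Civita connection, the normal derivatives generate both extrinsic-curvature terms $\widehat{K}_g$ and Weyl-covector terms $\kappa$ (equivalently $\widehat{\kappa}_g$), and these do not enjoy the usual metric-compatibility simplifications. Keeping track of which $\kappa_i$ versus $\widehat{\kappa}_g$ contributions survive, and matching the precise antisymmetrisations and trace-corrections in the definitions of $F_{ij}(\widehat{K}_g)$ and $G_{ijk}(\widehat{K}_g,\kappa)$ from \eqref{eq:def_F_G}, is where the computation is most error-prone; the rest is systematic projection and substitution.
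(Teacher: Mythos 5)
Your proposal follows essentially the same route as the paper: project \eqref{eq:CVE_V} to get \eqref{eq:constraint_a}--\eqref{eq:constraint_b} using the connection coefficients of \Cref{lem:connection_coeff}, feed $\widehat{W}=\Theta V$ into the Codazzi equations \eqref{codazzi_weyl_schouten} for \eqref{eq:constraint_c}--\eqref{eq:constraint_e}, project \eqref{eq:CVE_L} for \eqref{eq:constraint_f}--\eqref{eq:constraint_g}, and use the split \eqref{eq:CVE_zeta}, \eqref{eq:grad_s}, \eqref{eq:CVE_const_bis} of \Cref{lem:split_CVE_regular} for \eqref{eq:constraint_h}--\eqref{eq:constraint_k}. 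One point to tidy up: your parenthetical about $P$ is muddled --- $P$ is a Weyl candidate \emph{tangent to the $3$-dimensional hypersurface} $\mathcal{S}$, and Weyl candidates vanish identically in dimension $3$, so $P=0$ outright and \eqref{eq:codazzi_P} is simply discarded; nothing needs to be ``carried along''.
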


\begin{rem}
    The fields $M$, $z$, $\widehat{L}_{i\perp}$ and $\widehat{K}_g$ are defined up to a sign unless an orientation on $\M$ is specified so that $e_\perp$ can be defined uniquely.
\end{rem}

\begin{proof}
    Let us use a frame field $(e_\perp,(e_{\bf i}))$ adapted to $\mathcal{S}$ with respect to $g$, see \Cref{sec:adapted_frame_fields}. The constraints on $\mathcal{S}$ arise from the equations in the (CVE) with no normal covariant derivatives. Namely,
    \begin{itemize}
        \item Equation~\eqref{eq:CVE_V}. One has
        \begin{align*}
            \widehat{\nabla}_{\bf i} V^{\bf i}{}_{\perp \bf jk} &= \widehat{D}_{\bf i} V^{\bf i}{}_{\perp \bf jk} - \widehat{\Gamma}_{\bf i}{}^\perp{}_\perp V^{\bf i}{}_{\bf \perp jk} - \widehat{\Gamma}_{\bf i}{}^{\bf l}{}_\perp V^{\bf i}{}_{\bf ljk}  - \widehat{\Gamma}_{\bf i}{}^\perp{}_{\bf j} V^{\bf i}{}_{\perp \bf \perp k} -  \widehat{\Gamma}_{\bf i}{}^\perp{}_{\bf k} V^{\bf i}{}_{\perp \bf j\perp} \\
            &= -\varepsilon \widehat{D}_{\bf i} M_{\bf jk}{}^{\bf i}  +\varepsilon \kappa_{\bf i} M_{\bf jk}{}^{\bf i} -2 \varepsilon (\widehat{K}_g)_{\bf i}{}^{\bf l} S_{\bf l[j}{}^{\bf im} E_{\bf k]m} + 2\varepsilon (\widehat{K}_g)^{\bf i}{}_{\bf [j} E_{\bf k]i} \\
            &= -\varepsilon \left(\widehat{D}_{\bf i} M_{\bf jk}{}^{\bf i} -\kappa_{\bf i} M_{\bf jk}{}^{\bf i} \right)+ 2\varepsilon (\widehat{K}_g)^{\bf i}{}_{\bf [j} E_{\bf k]i} \,, \\
        \widehat{\nabla}_{\bf i} V^{\bf i}{}_{\perp \bf j \perp} &= \widehat{D}_{\bf i} V^{\bf i}{}_{\perp \bf j\perp} - 2\widehat{\Gamma}_{\bf i}{}^\perp{}_\perp V^{\bf i}{}_{\bf \perp j\perp} - \widehat{\Gamma}_{\bf i}{}^{\bf k}{}_\perp V^{\bf i}{}_{\bf kj\perp} - \widehat{\Gamma}_{\bf i}{}^{\bf k}{}_\perp V^{\bf i}{}_{\bf \perp jk} \\
            &= \varepsilon \widehat{D}_{\bf i} E^{\bf i}{}_{\bf j} - 2\varepsilon \kappa_{\bf i} E^{\bf i}{}_{\bf j} - (\widehat{K}_g)^{\bf ik} M_{\bf ikj} - (\widehat{K}_g)^{\bf ik} M_{\bf jki} \\
            &= \varepsilon \left(\widehat{D}_{\bf i} E^{\bf i}{}_{\bf j} - 2\kappa_{\bf i} E^{\bf i}{}_{\bf j}\right) - (\widehat{K}_g)^{\bf ik} M_{\bf jki} \,,
        \end{align*}
        and
        \begin{align*}
            \kappa_{\bf a} V^{\bf a}{}_{\bf \perp jk} &= -\varepsilon \kappa_{\bf i} M_{\bf jk}{}^{\bf i} \,, \\
            \kappa_{\bf a} V^{\bf a}{}_{\bf \perp j\perp} &= \varepsilon \kappa_{\bf i} E^{\bf i}{}_{\bf j} \,.
        \end{align*}
        Hence \eqref{eq:constraint_a} and \eqref{eq:constraint_b}.
        
        \item Equation~\eqref{eq:CVE_Gamma}. This leads to a slightly modified version of the Codazzi equations \eqref{codazzi_weyl_schouten} where the Weyl tensor $\widehat{W}^\alpha{}_{\beta\mu\rho}$ is replaced by $\Theta V^\alpha{}_{\beta\mu\rho}$. Since $\dim \mathcal{S} = n = 3$, Weyl candidates vanish identically. Therefore, equation~\eqref{eq:codazzi_P} trivially holds and is discarded. Hence \eqref{eq:constraint_c}-\eqref{eq:constraint_e}.

        \item Equation~\eqref{eq:CVE_L}. From
            \begin{align*}
                \widehat{\nabla}_{\bf i} \widehat{L}_{\bf jk} &= \widehat{D}_{\bf i} \widehat{L}_{\bf jk} - \widehat{\Gamma}_{\bf i}{}^\perp{}_{\bf j} \widehat{L}_{\perp \bf k} - \widehat{\Gamma}_{\bf i}{}^\perp{}_{\bf k} \widehat{L}_{\bf j \perp} \\
                &= \widehat{D}_{\bf i} \widehat{L}_{\bf jk} - (\widehat{K}_g)_{\bf ij} \widehat{L}_{\bf k\perp} - (\widehat{K}_g)_{\bf ik} \widehat{L}_{\bf j\perp} \,, \\
                \widehat{\nabla}_{\bf i} \widehat{L}_{\bf j\perp} &= \widehat{D}_{\bf i} \widehat{L}_{\bf j\perp} - \widehat{\Gamma}_{\bf i}{}^\perp{}_{\bf j} \widehat{L}_{\perp\perp} - \widehat{\Gamma}_{\bf i}{}^{\bf k}{}_\perp \widehat{L}_{\bf jk} - \widehat{\Gamma}_{\bf i}{}^\perp{}_\perp \widehat{L}_{\bf j\perp} \\
                &= \widehat{D}_{\bf i} \widehat{L}_{\bf j\perp} - (\widehat{K}_g)_{\bf ij} \widehat{L}_{\perp\perp} + \varepsilon (\widehat{K})_{\bf i}{}^{\bf k} \widehat{L}_{\bf jk} - \kappa_{\bf i} \widehat{L}_{\bf j\perp} \,,
            \end{align*}
            one deduces
            \begin{align*}
                2 \widehat{\nabla}_{\bf [i} \widehat{L}_{\bf j]k} &= 2 \widehat{D}_{\bf [i} \widehat{L}_{\bf j]k} - 2 (\widehat{K}_g)_{\bf k[i} \widehat{L}_{\bf j]\perp} \,, \\
                2 \widehat{\nabla}_{\bf [i} \widehat{L}_{\bf j]\perp} &= 2 \widehat{D}_{\bf [i} \widehat{L}_{\bf j]\perp} + 2\varepsilon (\widehat{K}_g)^{\bf k}{}_{\bf [i} \widehat{L}_{\bf j]k} - 2 \kappa_{\bf [i} \widehat{L}_{\bf j]\perp} \,.
            \end{align*}
            Moreover
            \begin{align*}
                (\widehat{\zeta}_g)_{\bf d} V^{\bf d}{}_{\bf kij} &= (\widehat{\zeta}_g)_\perp V^\perp{}_{\bf kij} + (\widehat{\zeta}_g)_{\bf l} V^{\bf l}{}_{\bf kij} \\
                &= (\widehat{\zeta}_g)_\perp M_{\bf ijk} - 2(\widehat{\zeta}_g)_{\bf l} S_{\bf k[i}{}^{\bf lm} E_{\bf j]m} \,, \\
                (\widehat{\zeta}_g)_{\bf d} V^{\bf d}{}_{\bf \perp ij} &= (\widehat{\zeta}_g)_{\bf l} V^{\bf l}{}_{\bf \perp ij} \\
                &= -\varepsilon M_{\bf ij}{}^{\bf k} (\widehat{\zeta}_g)_{\bf k} \,.
            \end{align*}
            Hence \eqref{eq:constraint_f} and \eqref{eq:constraint_g}.

        \item Equation~\eqref{eq:CVE_zeta}. From
        \begin{align*}
            \widehat{\nabla}_{\bf i} (\widehat{\zeta}_g)_{\bf j} &= \widehat{D}_{\bf i} (\widehat{\zeta}_g)_{\bf j}  - \widehat{\Gamma}_{\bf i}{}^\perp{}_{\bf j} (\widehat{\zeta}_g)_\perp \\
            &= \widehat{D}_{\bf i} (\widehat{\zeta}_g)_{\bf j} - (\widehat{K}_g)_{\bf ij} (\widehat{\zeta}_g)_\perp \,, \\
            \widehat{\nabla}_{\bf i} (\widehat{\zeta}_g)_\perp  &= (d(\widehat{\zeta}_g)_\perp)_{\bf i} - \widehat{\Gamma}_{\bf i}{}^\perp{}_\perp (\widehat{\zeta}_g)_\perp - \widehat{\Gamma}_{\bf i}{}^{\bf k}{}_\perp (\widehat{\zeta}_g)_{\bf k}  \\
            &= (d(\widehat{\zeta}_g)_\perp)_{\bf i} - \kappa_{\bf i}{} (\widehat{\zeta}_g)_\perp + \varepsilon (\widehat{K}_g)_{\bf i}{}^{\bf k} (\widehat{\zeta}_g)_{\bf k} \,,
        \end{align*}
        one obtains \eqref{eq:constraint_h} and \eqref{eq:constraint_i}.
        
        \item Equation~\eqref{eq:grad_s} gives \eqref{eq:constraint_j}.

        \item Equation~\eqref{eq:CVE_const_bis} gives directly \eqref{eq:constraint_k}. \qedhere
    \end{itemize}
\end{proof}

\subsubsection{The extended conformal vacuum constraint equations}
\label{sec:CVC}

An intrinsic framework for the constraint equations \eqref{eq:constraints_hypersurface}, that is one independent of the (CVE), can be introduced as done below. 

\begin{defi}
    Denote by $\mathscr{D}$ the set of all 10-tuples $(\mathcal{S},h,E,M,\Psi,\kappa,K,z,T, Q)$ where
    \begin{itemize}
        \item $(\mathcal{S},h)$ is a 3-dimensional smooth Riemannian or Lorentzian manifold with corners (see \Cref{sec:manifolds_with_corners}),
        \item $E$ is a symmetric trace-free 2-tensor and $M$ is a Cotton candidate (see \Cref{def:cotton_candidate}) on $(\mathcal{S},h)$,
        \item $\Psi$ and $z$ are smooth functions on $\mathcal{S}$,
        \item $\kappa$ and $T$ are smooth covector fields on $\mathcal{S}$,
        \item $K$ is a smooth symmetric 2-tensor on $\mathcal{S}$,
        \item $Q$ is a smooth 2-tensor on $\mathcal{S}$.
    \end{itemize}
\end{defi}

\begin{defi}
    A 10-tuple $(\mathcal{S},h,E,M,\Psi,\kappa,K,z,T,Q) \in \mathscr{D}$ is said to be a solution to the \emph{extended Conformal Vacuum Constraint equations (CVC)} if the following tensorial equations are satisfied on $\mathcal{S}$
    \begin{subequations}
        \makeatletter
        \def\@currentlabel{CVC}
        \makeatother
        \label{eq:CVC}
        \renewcommand{\theequation}{CVC.\arabic{equation}}
        \begin{align}
            \label{eq:CVC_a}
            \widehat{D}_i M_{jk}{}^i &=  2 K^i{}_{[j} E_{k]i} + 2\kappa_i M_{jk}{}^i \,, \\
            \label{eq:CVC_b}
            \widehat{D}_i E^i{}_j &= \varepsilon K^{ik} M_{jki} + 3\kappa_i E^i{}_j \,, \\
            \label{eq:CVC_c}
            Q_{ij} - \widehat{l}_{ij} &= \Psi E_{ij} - \varepsilon F_{ij}(K) \,, \\
            \label{eq:CVC_d}
            2T_k &= \varepsilon h^{ij} G_{kij}(K,\kappa) \,, \\
            \label{eq:CVC_e}
            \Psi M_{klj} &= G_{klj}(K,\kappa) + h_{j[k} h^{ip} G_{l]ip}(K,\kappa) \,, \\
            \label{eq:CVC_f}
            2 \widehat{D}_{[i} Q_{j]k} &= 2 K_{k[i} T_{j]} + z M_{ijk} - 2(\widehat{\zeta}_h)_l S_{k[i}{}^{lm} E_{j]m}  \,, \\
            \label{eq:CVC_g}
            2 \widehat{D}_{[i} T_{j]} &= - 2\varepsilon K^k{}_{[i} Q_{j]k} + 2 \kappa_{[i} T_{j]} -\varepsilon M_{ij}{}^k (\widehat{\zeta}_h)_k  \,, \\
            \label{eq:CVC_h}
            \widehat{D}_i (\widehat{\zeta}_h)_j + \kappa_i (\widehat{\zeta}_h)_j & = z K_{ij} - \Psi Q_{ij} + \widehat{\sigma} h_{ij} \,, \\
            \label{eq:CVC_i}
            (dz)_i & = - \varepsilon K_i{}^j (\widehat{\zeta}_h)_j - \Psi T_i \,, \\
            \label{eq:CVC_j}
            (d\widehat{\sigma})_i  - \kappa_i \widehat{\sigma} & = - \varepsilon z T_i - Q_{ij} h^{jk} (\widehat{\zeta}_h)_k \,, \\
            \label{eq:CVC_k}
            \widehat{\sigma} \Psi & = \frac{1}{2} \left( h^{ij} (\widehat{\zeta}_h)_i (\widehat{\zeta}_h)_j + \varepsilon z^2 + \frac{\Lambda}{3} \right) \,,
        \end{align}    
    \end{subequations}
    where
    \begin{itemize}
        \item $\varepsilon = -1$ (respectively $+1$) if $(\mathcal{S},h)$ is Riemannian (respectively Lorentzian), 
        \item $\widehat{D}$ is the Weyl connection associated to the covector field $\kappa$ with respect to $h$,
        \item $\widehat{l}_{ij}$ is the Schouten tensor of $\widehat{D}$,
        \item $F_{ij}(K)$ and $G_{ijk}(K,\kappa)$ are defined by equations~\eqref{eq:def_F_G},
        \item one defines the following fields
            \begin{subequations}
                \label{eq:dim3}
                \begin{align}
                    \label{eq:zeta_dim3}
                    (\widehat{\zeta}_h)_i &:= \Psi \kappa_i + (d\Psi)_i \,, \\
                    \label{eq:Friedrich_scalar_dim3}
                    \widehat{s}_h &:= \frac{h^{ij}}{3} \left( \widehat{D}_i (\widehat{\zeta}_h)_j + \kappa_i (\widehat{\zeta}_h)_j + \Psi \widehat{l}_{ij} \right) \,, \\
                    \label{eq:def_sigma_hat}
                    \widehat{\sigma} &:= \widehat{s}_h - \frac{z}{3} K_k{}^k - \varepsilon \frac{\Psi}{12} \left( \left(K_k{}^k\right)^2 - K^{kl}K_{kl} \right) \,.
                \end{align}
            \end{subequations}
    \end{itemize}
\end{defi}

\begin{rems} \,
	\begin{itemize}
		\item The correspondence between the intrinsic fields of the (CVC) and the fields of \eqref{eq:constraints_hypersurface} are as follows: $h$ is the metric induced by $g$ on $\mathcal{S}$, $(E,M)$ is still the restriction of the electromangetic decomposition of $V$ with respect to $\mathcal{S}$, $\Psi$ is the restriction of $\Theta$ on $\mathcal{S}$, $\kappa_i$ is the tangent part on $\mathcal{S}$ of the 4-dimensional covector field $\kappa_\mu$, $K_{ij}$ is the extrinsic curvature of the connection $\widehat{\nabla}$ with respect to $h$, $z$ is the scalar field $(\widehat{\zeta}_h)_\perp$, $T_i$ is the covector field $\widehat{L}_{i\perp}$ and $Q_{ij}$ is the field $\widehat{L}_{ij}$ on $\mathcal{S}$.
		
		\item Equations \eqref{eq:zeta_dim3} and \eqref{eq:Friedrich_scalar_dim3} are the 3-dimensional analogue of equations \eqref{eq:def_zeta} and \eqref{eq:extended_Friedrich_scalar}. Hence, $\widehat{s}_h$ is called the 3-dimensional (extended) Friedrich scalar. Its non-extended version, that is when $\kappa = 0$, is given by
		\begin{equation}
			\label{eq:def_Friedrich_scalar_dim3_nonextended}
			s_h := \frac{1}{3} \left( \Delta_h \Psi + \Psi h^{ij} l_{ij} \right) = \frac{1}{3} \left( \Delta_h \Psi + \frac{r}{4} \Psi \right) \,. \qedhere
		\end{equation}
	\end{itemize}
\end{rems}

\noindent Let us detail below the gauges invariances of the (CVC), inherited from these of the (CVE) which have been presented in \Cref{sec:CVE_properties}. The invariance under Weyl connection changes split into two gauge invariances by distinguishing tangent and normal directions.

\begin{defi}
    \label{def:transfo_dim3}
    Let $(\mathcal{S},h,E,M,\Psi,\kappa,K,z,T,Q) \in \mathscr{D}$. Define
    \begin{itemize}
        \item for any confomorphism $\phi : (\mathcal{S},h) \to (\overline{\mathcal{S}},\overline{h})$ (see \Cref{def:confomorphism}),
        \begin{align*}
            \Conf_\phi \left[(\mathcal{S},h,E,M,\Psi,\kappa,K,z,T,Q) \right] := \big(\overline{\mathcal{S}},\overline{h},\phi_\star(\Omega^{-1} E),\phi_\star M,\phi_\star(\Omega\Psi),\phi_\star(\kappa-d\ln\Omega), \\
             \phi_\star(\Omega K),\phi_\star z, \phi_\star(\Omega^{-1} T),\phi_\star Q \big) \in \mathscr{D} \,,
        \end{align*}
        where $\Omega \in \mathcal{C}^\infty(\mathcal{S},\R_+^\star)$ is the conformal factor of $\phi$, that is $\phi^\star \overline{h} = \Omega^2 h$,
        
        \item for any smooth covector field $\omega \in T^\star \mathcal{S}$,
        \begin{align*}
            \Weylchg_\omega^\parallel \left[(\mathcal{S},h,E,M,\Psi,\kappa,K,z,T,Q)\right] := \Big(\mathcal{S},h,E,M,\Psi,\kappa+\omega,K,z,T_i-\varepsilon K_i{}^j\omega_j, \\
            Q_{ij}-\widehat{D}_i\omega_j+\frac{1}{2}S_{ij}{}^{kl}\omega_k\omega_l \Big) \in \mathscr{D} \,,
        \end{align*}

        \item for any smooth function $\chi \in \mathcal{C}^\infty(\mathcal{S},\R)$,
        \begin{align*}
            \Weylchg_\chi^\perp \left[(\mathcal{S},h,E,M,\Psi,\kappa,K,z,T,Q)\right] := \Big(\mathcal{S},h,E,M,\Psi,\kappa,K-\varepsilon \chi h,z+\chi\Psi,T-d\chi+\chi\kappa,\\
            Q+\chi K - \frac{\varepsilon\chi^2}{2} h \Big) \in \mathscr{D} \,,
        \end{align*}
        
        \item the sign change operator by
        \begin{align*}
            \Signchg \left[(\mathcal{S},h,E,M,\Psi,\kappa,K,z,T,Q)\right] := (\mathcal{S},h,-E,-M,-\Psi,\kappa,K,-z,T,Q) \in \mathscr{D} \,,
        \end{align*}
    \end{itemize}
\end{defi}

\begin{rem}
    Under the transformations defined above, the scalar field $\widehat{s}_h$ transforms as follows
    \begin{subequations}
        \label{eq:transfo_s_dim3}
        \begin{alignat}{3}
            &\widehat{s}_h \overset{\Conf_\phi}{\longmapsto} \phi_\star(\Omega^{-1} \widehat{s}_h) \,, \qquad & \widehat{s}_h & \overset{\Weylchg_\omega^\parallel}{\longmapsto} \widehat{s}_h + h^{-1}(\omega,\widehat{\zeta}_h) + \frac{\Psi}{2} h^{-1}(\omega,\omega) \,, \\
            &\widehat{s}_h \overset{\Signchg}{\longmapsto} -\widehat{s}_h \,. \qquad & \widehat{s}_h & \overset{\Weylchg_\chi^\perp}{\longmapsto} \widehat{s}_h \,,
    \end{alignat}
    \end{subequations}
    and the scalar field $\widehat{\sigma}$ as follows
    \begin{subequations}
    \begin{alignat}{3}
        &\widehat{\sigma} \overset{\Conf_\phi}{\longmapsto} \phi_\star(\Omega^{-1} \widehat{\sigma}) \,, \qquad &\widehat{\sigma}& \overset{\Weylchg_\omega^\parallel}{\longmapsto} \widehat{\sigma} + h^{-1}(\omega,\widehat{\zeta}_h) + \frac{\Psi}{2} h^{-1}(\omega,\omega) \,, \\
        &\widehat{\sigma} \overset{\Signchg}{\longmapsto} -\widehat{\sigma} \,, \qquad &\widehat{\sigma}& \overset{\Weylchg_\chi^\perp}{\longmapsto} \widehat{\sigma} + \varepsilon \left( \chi z +\frac{\Psi}{2} \chi^2 \right) \,.
    \end{alignat} 
    \end{subequations}
\end{rem}

\begin{defi}
    Two 10-tuples in $\mathscr{D}$ are said to be \emph{gauge-equivalent} if they are related by a composition of confomorphisms $\Conf_\phi$, tangent Weyl connection changes $\Weylchg^\parallel_\omega$, normal Weyl connection changes $\Weylchg^\perp_\chi$ and sign changes $\Signchg$.
\end{defi}

\begin{prop}
    The (CVC) are invariant under the transformations of \Cref{def:transfo_dim3}.
\end{prop}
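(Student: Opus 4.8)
The plan is to reduce the invariance of the (CVC) to the already-established invariance of the (CVE) under \Cref{prop:invariance_CVE}, exploiting the dictionary between the intrinsic fields of a $10$-tuple in $\mathscr{D}$ and the constraint fields of \eqref{eq:constraints_hypersurface} recorded in the remark following the definition of the (CVC). The three transformations on quintuples in $\mathscr{E}$ --- namely $\Conf_\phi$, $\Weylchg_\omega$ and $\Signchg$ --- induce, after restriction to a non-null hypersurface $\mathcal{S}$, exactly the four transformations on $\mathscr{D}$: the confomorphism and sign change restrict directly, whereas a $4$-dimensional Weyl connection change $\Weylchg_\omega$ splits, according to the decomposition of $\omega$ into its tangential part $\Tan\omega$ and its normal component $\omega_\perp$, into the tangential change $\Weylchg^\parallel_{\Tan\omega}$ and the normal change $\Weylchg^\perp_{\omega_\perp}$. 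Thus every term of every constraint equation transforms exactly as its $4$-dimensional counterpart does under the corresponding (CVE) transformation, and invariance is inherited. Since the equations and the transformation rules involved are purely tensorial and algebraic identities, the same computations apply verbatim to an abstract $10$-tuple that does not a priori arise from a genuine solution.

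First I would dispatch the two easy cases. For $\Conf_\phi$, the equations are tensorial, so it suffices to check conformal weights: the prescribed weights of $E$, $M$, $\Psi$, $\kappa$, $K$, $z$, $T$, $Q$ together with the induced weights of $\widehat{\zeta}_h$, $\widehat{s}_h$ and $\widehat{\sigma}$ given in the remark preceding this proposition are chosen so that both sides of each of \eqref{eq:CVC_a}--\eqref{eq:CVC_k} are conformally invariant of the same weight; here one uses that $\widehat{l}_{ij}$ transforms as a Schouten tensor and that $F_{ij}(K)$ and $G_{kij}(K,\kappa)$ carry the same weights as $Q_{ij}$ and $\Psi M_{klj}$ respectively. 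For $\Signchg$, only $E$, $M$, $\Psi$ and $z$ change sign; one then observes that $\widehat{\zeta}_h$, $\widehat{s}_h$ and $\widehat{\sigma}$ are all odd, and checks that \eqref{eq:CVC_a}, \eqref{eq:CVC_b}, \eqref{eq:CVC_e}, \eqref{eq:CVC_f}, \eqref{eq:CVC_i}, \eqref{eq:CVC_j} are odd while the remaining equations are even, so each is preserved.

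The substantive cases are the two Weyl connection changes. For $\Weylchg^\parallel_\omega$, the induced Weyl connection $\widehat{D}$ becomes the one associated to $\kappa+\omega$, so its Schouten tensor transforms by \eqref{transfo_schouten}, which is precisely the prescribed change of $Q_{ij}$; the Cotton candidate $M$ and the tensor $E$ are unchanged, since the Weyl and, in dimension $3$, the Cotton tensors are invariant under Weyl connection changes by \Cref{prop:transformation_rules}; and the change of $T_i$, corresponding to $\widehat{L}_{i\perp}$, by $-\varepsilon K_i{}^j\omega_j$ is read off from \Cref{lem:connection_coeff} together with the transformation of the Schouten tensor. For $\Weylchg^\perp_\chi$, the extrinsic curvature transforms by \eqref{eq:extrinsic_curvature_change_Weyl_conn} with $(\widecheck{\kappa}-\widehat{\kappa})_\perp=\chi$, giving $K_{ij}\mapsto K_{ij}-\varepsilon\chi h_{ij}$, while $z=(\widehat{\zeta}_h)_\perp$, $T$ and $Q$ shift as prescribed. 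In both cases I would then substitute these rules into \eqref{eq:CVC_a}--\eqref{eq:CVC_k} and check term-by-term cancellation, relying on the transformation rules for $\widehat{s}_h$ and $\widehat{\sigma}$ stated in \eqref{eq:transfo_s_dim3} and the lines following it.

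The hard part will be the bookkeeping for the normal change $\Weylchg^\perp_\chi$ in the equations \eqref{eq:CVC_c}--\eqref{eq:CVC_k} that involve the quadratic expressions $F_{ij}(K)$ and $G_{kij}(K,\kappa)$ defined in \eqref{eq:def_F_G}: one must verify that, upon $K_{ij}\mapsto K_{ij}-\varepsilon\chi h_{ij}$, the contributions linear and quadratic in $\chi$ produced by $F$ and $G$ are exactly absorbed by the compensating shifts of $z$, $T$, $Q$ and $\widehat{\sigma}$ --- in particular the $-\varepsilon\chi^2/2\,h$ term in the change of $Q$ and the $\varepsilon(\chi z+\frac{\Psi}{2}\chi^2)$ term in the change of $\widehat{\sigma}$. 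The analogous quadratic $\frac{1}{2}S_{ij}{}^{kl}\omega_k\omega_l$ term for $\Weylchg^\parallel_\omega$ requires the same care. These quadratic cancellations are exactly what makes the four transformations on $\mathscr{D}$ a genuine group action mirroring the $4$-dimensional one, and confirming them is the crux of the verification.
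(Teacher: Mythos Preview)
The paper states this proposition without proof; it is treated as a routine consequence of the $4$-dimensional invariance in \Cref{prop:invariance_CVE}, in the spirit of the disclaimer at the start of \Cref{sec:preliminaries} that standard proofs are omitted. Your proposal is therefore not being compared against an existing argument but is supplying one, and the strategy you outline is correct and complete in principle: the four transformations on $\mathscr{D}$ are precisely the restriction to a non-null hypersurface of $\Conf_\phi$, $\Signchg$, and the tangential/normal pieces of $\Weylchg_\omega$, so the transformation rules for $Q$, $T$, $K$, $z$ and for $\widehat{s}_h$, $\widehat{\sigma}$ recorded in and after \Cref{def:transfo_dim3} are exactly what one obtains by restricting \Cref{prop:transformation_rules}, \eqref{eq:extrinsic_curvature_change_Weyl_conn} and \eqref{eq:transfo_s}. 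Your observation that the verification is a formal algebraic identity---hence applies to an abstract $10$-tuple and not merely one induced by a genuine (CVE) solution---is the one point that genuinely needs saying, and you say it. The quadratic bookkeeping you flag for $\Weylchg^\perp_\chi$ and $\Weylchg^\parallel_\omega$ is indeed the only place where care is required, and the compensating terms in the definitions of the transformations are designed precisely so that these cancellations go through.
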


\subsubsection{The conformal boundary}
\label{sec:CVE_conf_boundary}

One main advantage of the (CVE) compared with the (VE) is that they allow to directly include the conformal boundary of spacetimes within the manifold by removing all singularities on it. In fact, the conformal boundary is simply encoded through the following definition.

\begin{defi}
    \label{def:conf_bound_VCE}
    The \emph{conformal boundary} of a solution $(\M,g,V,\Theta,\kappa) \in \mathscr{E}$ of the (CVE) is defined as the gauge-invariant subset
    \begin{equation}
        \mathscr{I} := \{ p \in \M \mid \Theta(p) = 0 \} \,.
    \end{equation}
\end{defi}

\noindent The geometry and causal nature of $\mathscr{I}$ is given by the following lemma.

\begin{lem}
    \label{lem:pseudonorm_dTheta}
    Let $(\M,g,V,\Theta,\kappa) \in \mathscr{E}$ be a solution to the (CVE) with $\mathscr{I} \neq \varnothing$. Then
    \begin{equation}
        \label{eq:pseudonorm_dTheta}
        g^{-1}(d\Theta,d\Theta) = -\frac{\Lambda}{3} \quad \text{on } \mathscr{I} \,.
    \end{equation}
    In particular, if $\Lambda<0$ (respectively $\Lambda>0$) then $\mathscr{I}$ is a union of disjoint timelike (respectively spacelike) hypersurfaces.
\end{lem}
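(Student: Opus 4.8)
The plan is to extract the pointwise norm of $d\Theta$ on $\mathscr{I}$ directly from the algebraic constraint \eqref{eq:CVE_const_bis}, which is the scalar trace of \eqref{eq:CVE_const} isolated in \Cref{lem:split_CVE_regular}. First I would recall that by definition of the conformal boundary, $\Theta$ vanishes identically on $\mathscr{I}$, and that the covector field $\widehat{\zeta}_g$ is given by $(\widehat{\zeta}_g)_\alpha = \Theta \kappa_\alpha + (d\Theta)_\alpha$ via \eqref{eq:def_zeta}. Restricting to a point $p \in \mathscr{I}$, the first term $\Theta \kappa_\alpha$ vanishes since $\Theta(p) = 0$, so $\widehat{\zeta}_g = d\Theta$ on $\mathscr{I}$.

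Next I would evaluate the scalar constraint \eqref{eq:CVE_const_bis}, namely
\[ \frac{1}{2} \left( g^{-1}(\widehat{\zeta}_g,\widehat{\zeta}_g) + \frac{\Lambda}{3} \right) = \widehat{s}_g \Theta \,, \]
at the point $p \in \mathscr{I}$. The right-hand side $\widehat{s}_g \Theta$ vanishes since $\Theta(p) = 0$ and $\widehat{s}_g$ is smooth (hence bounded near $p$). Substituting $\widehat{\zeta}_g = d\Theta$ on $\mathscr{I}$ into the left-hand side then yields $g^{-1}(d\Theta,d\Theta) + \Lambda/3 = 0$, which is exactly \eqref{eq:pseudonorm_dTheta}.

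For the second assertion, I would argue that \eqref{eq:pseudonorm_dTheta} shows $d\Theta$ is non-vanishing on $\mathscr{I}$ whenever $\Lambda \neq 0$: indeed $g^{-1}(d\Theta,d\Theta) = -\Lambda/3 \neq 0$ forces $d\Theta \neq 0$. Since $\Theta$ is a smooth function with non-vanishing differential on its zero set, $\mathscr{I}$ is a smooth embedded hypersurface by the regular value theorem. Its causal character is read off from the sign of the norm of its conormal $d\Theta$: when $\Lambda < 0$ one has $g^{-1}(d\Theta,d\Theta) = -\Lambda/3 > 0$, so the conormal is spacelike and the hypersurface is timelike; conversely, when $\Lambda > 0$, the conormal is timelike and $\mathscr{I}$ is spacelike. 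I would also note that $\mathscr{I}$, as the zero set of a smooth function, decomposes into disjoint connected hypersurface components.

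The argument is essentially immediate once the algebraic constraint is in the form \eqref{eq:CVE_const_bis}; there is no real obstacle. The only point requiring a modicum of care is the passage from $\widehat{\zeta}_g$ to $d\Theta$, which relies crucially on $\Theta$ vanishing on $\mathscr{I}$ to kill the $\Theta \kappa$ term, and the observation that smoothness of $\widehat{s}_g$ guarantees the right-hand side vanishes rather than being merely indeterminate. Thus the main content of the lemma is already packaged in the equivalent split form of the constraint provided by \Cref{lem:split_CVE_regular}, and the proof amounts to a clean evaluation on the level set $\{\Theta = 0\}$.
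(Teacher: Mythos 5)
Your proof is correct and follows exactly the paper's (very terse) argument: evaluate \eqref{eq:def_zeta} on $\mathscr{I}$ to get $\widehat{\zeta}_g = d\Theta$ there, then evaluate \eqref{eq:CVE_const_bis} on $\mathscr{I}$ where the right-hand side vanishes. The additional details you supply for the causal-character claim are also sound.
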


\begin{proof}
    The result is straightforward from \eqref{eq:def_zeta} and \eqref{eq:CVE_const_bis}.
\end{proof}

\begin{cor}
	\label{cor:I}
	Let $(\M,g,V,\Theta,\kappa) \in \mathscr{E}$ be a solution to the (CVE) for a cosmological constant $\Lambda \neq 0$ such that $\mathscr{I} \neq \varnothing$. Then for any connected component $\mathfrak{S}$ of $\mathscr{I}$, the 10-tuple $(\mathfrak{S},\mathfrak{h},\mathfrak{E},\mathfrak{M},\Psi,\kappa,\mathfrak{K},\mathfrak{z},\mathfrak{T},\mathfrak{Q})$ induced by $(\M,g,V,\Theta,\kappa)$ on $\mathfrak{S}$ satisfies
	\begin{subequations}
		\label{eq:constraints_I}
		\begin{align}
			\mathfrak{z}^2 &= \frac{|\Lambda|}{3} \,, \\
			\label{eq:extrinsic_curvature_conf_boundary}
			\mathfrak{K}_{ij} &= -\frac{\widehat{\sigma}}{\mathfrak{z}} \mathfrak{h}_{ij} \,, \\
			\widehat{\mathfrak{D}}_i \mathfrak{E}^i{}_j &= 3\kappa_i \mathfrak{E}^i{}_j \,, \\
            \label{eq:conf_bound_cotton_magnetic}
			\mathfrak{z} \mathfrak{M}_{ijk} &= \widehat{\mathfrak{c}}_{ijk} = \mathfrak{c}_{ijk} \,, \\
			\mathfrak{T}_i &= \frac{\sign(\Lambda)}{\mathfrak{z}} \left( (d\widehat{\sigma})_i - \kappa_i \widehat{\sigma} \right) \,, \\
			\mathfrak{Q}_{ij} &=  \widehat{\mathfrak{l}}_{ij} + \frac{3\widehat{\sigma}^2}{2\Lambda} \mathfrak{h}_{ij} \,,
		\end{align}
	\end{subequations}
	where $\widehat{\sigma}$ is the scalar field defined by \eqref{eq:def_sigma_hat}, $\widehat{\mathfrak{D}}$ is the Weyl connection associated to $\kappa$ with respect to $\mathfrak{h}$ (it is also the connection induced by $\widehat{\nabla}$ on $\mathfrak{S}$ by \Cref{lem:induced_weyl_connection}), $\widehat{\mathfrak{l}}_{ij}$ and $\widehat{\mathfrak{c}}_{ijk}$ are respectively the Schouten and Cotton tensors of $\widehat{\mathfrak{D}}$ and $\mathfrak{c}_{ijk}$ is the Cotton tensor of the Levi-Civita connection $\mathfrak{D}$ of $\mathfrak{h}$. Note that since $\mathfrak{S}$ is 3-dimensional, the Cotton tensors of any two Weyl connection coincide, see the transformation law \eqref{transfo_cotton}.
\end{cor}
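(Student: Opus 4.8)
The plan is to specialise the extended conformal vacuum constraint equations \eqref{eq:CVC} to the conformal boundary, where they collapse because the defining function vanishes there. By \Cref{lem:pseudonorm_dTheta} each connected component $\mathfrak{S}$ of $\mathscr{I}$ is a non-null hypersurface, so the induced $10$-tuple is indeed a solution of the (CVC). The engine of the proof is that $\Psi$, being the restriction of $\Theta$, vanishes identically on $\mathfrak{S}$, and then \eqref{eq:zeta_dim3} forces $(\widehat{\zeta}_h)_i = \Psi\kappa_i + (d\Psi)_i = 0$ on $\mathfrak{S}$ as well. Feeding these two facts into \eqref{eq:CVC_k} gives $0 = \tfrac{1}{2}(\varepsilon\mathfrak{z}^2 + \Lambda/3)$, that is $\varepsilon\mathfrak{z}^2 = -\Lambda/3$; since $\Lambda \neq 0$ and $\varepsilon = -\sign(\Lambda)$ by \Cref{lem:pseudonorm_dTheta}, this yields $\mathfrak{z}^2 = |\Lambda|/3$ and in particular $\mathfrak{z} \neq 0$.

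I would then read off the remaining equations one at a time. Because $\widehat{\zeta}_h$ vanishes identically on $\mathfrak{S}$ so does its covariant derivative, hence \eqref{eq:CVC_h} reduces to $\mathfrak{z}\mathfrak{K}_{ij} + \widehat{\sigma}\mathfrak{h}_{ij} = 0$, giving the umbilical curvature $\mathfrak{K}_{ij} = -(\widehat{\sigma}/\mathfrak{z})\mathfrak{h}_{ij}$. Inserting this into \eqref{eq:CVC_b} and using that $\mathfrak{M}$ is trace-free (\Cref{def:cotton_candidate}) kills the term $\varepsilon\mathfrak{K}^{ik}\mathfrak{M}_{jki}$, leaving $\widehat{\mathfrak{D}}_i\mathfrak{E}^i{}_j = 3\kappa_i\mathfrak{E}^i{}_j$. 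Equation \eqref{eq:CVC_j} becomes $(d\widehat{\sigma})_i - \kappa_i\widehat{\sigma} = -\varepsilon\mathfrak{z}\,\mathfrak{T}_i$, which inverts to the stated expression for $\mathfrak{T}_i$ once one notes $-\varepsilon = \sign(\Lambda)$. Finally, \eqref{eq:CVC_c} with $\Psi = 0$ reads $\mathfrak{Q}_{ij} = \widehat{\mathfrak{l}}_{ij} - \varepsilon F_{ij}(\mathfrak{K})$; since $\mathfrak{K}$ is pure trace, the umbilical value $F_{ij}(\mathfrak{K}) = \tfrac{1}{2}(\widehat{\sigma}/\mathfrak{z})^2\mathfrak{h}_{ij}$ from the remark following \Cref{cor:codazzi}, combined with $\varepsilon\mathfrak{z}^2 = -\Lambda/3$, produces $\mathfrak{Q}_{ij} = \widehat{\mathfrak{l}}_{ij} + \tfrac{3\widehat{\sigma}^2}{2\Lambda}\mathfrak{h}_{ij}$.

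The delicate step, which I expect to be the main obstacle, is the magnetic identity $\mathfrak{z}\mathfrak{M}_{ijk} = \widehat{\mathfrak{c}}_{ijk}$. Starting from \eqref{eq:CVC_f} with $\widehat{\zeta}_h = 0$, one has $\mathfrak{z}\mathfrak{M}_{ijk} = 2\widehat{\mathfrak{D}}_{[i}\mathfrak{Q}_{j]k} - 2\mathfrak{K}_{k[i}\mathfrak{T}_{j]}$. Substituting the expression for $\mathfrak{Q}$ just obtained, the $\widehat{\mathfrak{l}}$ contribution is by definition the Cotton tensor $\widehat{\mathfrak{c}}_{ijk} = 2\widehat{\mathfrak{D}}_{[i}\widehat{\mathfrak{l}}_{j]k}$, and one must show the leftover terms cancel. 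Expanding $\widehat{\mathfrak{D}}_{[i}(\widehat{\sigma}^2\mathfrak{h}_{j]k})$ with the Weyl-connection rule $\widehat{\mathfrak{D}}_i\mathfrak{h}_{jk} = -2\kappa_i\mathfrak{h}_{jk}$ produces the combination $(d\widehat{\sigma})_i - \widehat{\sigma}\kappa_i$ (antisymmetrised over $i,j$), which I then rewrite via the $\mathfrak{T}$-equation. The residual expression is proportional to $\mathfrak{T}_{[i}\mathfrak{h}_{j]k}$ with coefficient $\tfrac{3\varepsilon\mathfrak{z}^2}{\Lambda} + 1$, which vanishes precisely because $\varepsilon\mathfrak{z}^2 = -\Lambda/3$. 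The only genuine bookkeeping is tracking the antisymmetrisation signs so that $\mathfrak{K}_{k[i}\mathfrak{T}_{j]}$ and the $\widehat{\sigma}^2$-term are seen to be opposite; once the coefficient vanishes, $\mathfrak{z}\mathfrak{M}_{ijk} = \widehat{\mathfrak{c}}_{ijk}$ follows, and the identification $\widehat{\mathfrak{c}}_{ijk} = \mathfrak{c}_{ijk}$ is immediate from \eqref{transfo_cotton} since the Weyl tensor vanishes in dimension three.
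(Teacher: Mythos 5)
Your proposal is correct and follows essentially the same route as the paper: restrict the (CVC) to $\mathfrak{S}$, use $\Psi=0$ (hence $\widehat{\zeta}_h=0$) to collapse the system, read $\mathfrak{z}^2$ off \eqref{eq:CVC_k}, and then extract each remaining identity in the same order, with the Cotton identity obtained from \eqref{eq:CVC_f} after substituting the expression for $\mathfrak{Q}$. Your expansion of $\widehat{\mathfrak{D}}_{[i}(\widehat{\sigma}^2\mathfrak{h}_{j]k})$ and the cancellation against $2\mathfrak{K}_{k[i}\mathfrak{T}_{j]}$ is exactly the computation the paper leaves implicit, and your sign bookkeeping checks out.
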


\begin{rem}
	Under the hypotheses of the corollary, $\mathscr{I}$ is a union of disjoint non-null hypersurfaces by \Cref{lem:pseudonorm_dTheta}. Thus, a connected component $\mathfrak{S}$ of $\mathscr{I}$ is a hypersurface of $(\M,g)$.
\end{rem}

\begin{proof}
	Let $\mathfrak{S}$ be a connected component of $\mathscr{I}$. Since $\Psi = \Omega|_\mathfrak{S} = 0$, the (CVC) on $\mathfrak{S}$ simplify into
	\begin{subequations}
		\begin{align}
			\label{eq:constraint_a_bis}
			\widehat{\mathfrak{D}}_i \mathfrak{M}_{jk}{}^i &=  2 \mathfrak{K}^i{}_{[j} \mathfrak{E}_{k]i} + 2\kappa_i \mathfrak{M}_{jk}{}^i \,, \\
			\label{eq:constraint_b_bis}
			\widehat{\mathfrak{D}}_i \mathfrak{E}^i{}_j &= \varepsilon \mathfrak{K}^{ik} \mathfrak{M}_{jki} + 3\kappa_i \mathfrak{E}^i{}_j \,, \\
			\label{eq:constraint_c_bis}
			\mathfrak{Q}_{ij} - \widehat{\mathfrak{l}}_{ij} &= -\varepsilon F_{ij}(\mathfrak{K}) \,, \\
			\label{eq:constraint_d_bis}
			2\mathfrak{T}_k &= \varepsilon \mathfrak{h}^{ij} G_{kij}(\mathfrak{K},\kappa) \,, \\
			\label{eq:constraint_e_bis}
			0 &= G_{klj}(\mathfrak{K},\kappa) + \mathfrak{h}_{j[k} \mathfrak{h}^{ip} G_{l]ip}(\mathfrak{K},\kappa) \,, \\
			\label{eq:constraint_f_bis}
			2 \widehat{\mathfrak{D}}_{[i} \mathfrak{Q}_{j]k} &= 2 \mathfrak{K}_{k[i} \mathfrak{T}_{j]} + \mathfrak{z} \mathfrak{M}_{ijk}  \,, \\
			\label{eq:constraint_g_bis}
			2 \widehat{\mathfrak{D}}_{[i} \mathfrak{T}_{j]} &= - 2\varepsilon \mathfrak{K}^k{}_{[i} \mathfrak{Q}_{j]k} + 2 \kappa_{[i} \mathfrak{T}_{j]} \,, \\
			\label{eq:constraint_h_bis}
			0 & = \mathfrak{z} \mathfrak{K}_{ij} + \widehat{\sigma} \mathfrak{h}_{ij} \,, \\
			\label{eq:constraint_i_bis}
			(d\mathfrak{z})_i & = 0 \,, \\
			\label{eq:constraint_j_bis}
			(d\widehat{\sigma})_i  - \kappa_i \widehat{\sigma} & = - \varepsilon \mathfrak{z} \mathfrak{T}_i \,, \\
			\label{eq:constraint_k_bis}
			0 & = \varepsilon \mathfrak{z}^2 + \frac{\Lambda}{3} \,.
		\end{align}    
	\end{subequations}
	Equation~\eqref{eq:constraint_i_bis} implies that $\mathfrak{z}$ is constant. Then \eqref{eq:constraint_k_bis} gives the value of $\mathfrak{z}^2$ and $\varepsilon = -\sign(\lambda)$. Then equation \eqref{eq:constraint_h_bis} gives \eqref{eq:extrinsic_curvature_conf_boundary}. It follows from \eqref{eq:constraint_c_bis} and \eqref{eq:constraint_d_bis} or \eqref{eq:constraint_j_bis} that
	\begin{align*}
		\mathfrak{Q}_{ij} &=  \widehat{\mathfrak{l}}_{ij} - \frac{\varepsilon \widehat{\sigma}^2}{2\mathfrak{z}^2} \mathfrak{h}_{ij} = \widehat{\mathfrak{l}}_{ij} + \frac{3\widehat{\sigma}^2}{2\Lambda} \mathfrak{h}_{ij} \,, \\
		\mathfrak{T}_i &= - \frac{\varepsilon}{\mathfrak{z}} \left( (d\widehat{\sigma})_i - \kappa_i \widehat{\sigma} \right) = \frac{\sign(\Lambda)}{\mathfrak{z}} \left( (d\widehat{\sigma})_i - \kappa_i \widehat{\sigma} \right) \,.
	\end{align*}
	Then \eqref{eq:constraint_f_bis} and \eqref{eq:constraint_b_bis} resume to
	\begin{align*}
		\widehat{\mathfrak{c}}_{ijk} &= \mathfrak{z} \mathfrak{M}_{ijk} \,, \\
		\widehat{\mathfrak{D}}_i \mathfrak{E}^i{}_j &= 3\kappa_i \mathfrak{E}^i{}_j \,.
	\end{align*}
	Finally, \eqref{eq:constraint_a_bis} is just the contracted fourth Bianchi identity~\eqref{eq:fourth_Bianchi_id_Weyl_con}, and equations \eqref{eq:constraint_e_bis}, \eqref{eq:constraint_g_bis} hold.
\end{proof}

\noindent Equation \eqref{eq:extrinsic_curvature_conf_boundary} means that the extrinsic curvature $\widehat{\mathfrak{K}}_g$ of $\widehat{\nabla}$ with respect to $g$ is given by
\[ (\widehat{\mathfrak{K}}_g)_{ij} = - \frac{\widehat{s}_g |_\mathfrak{S}}{\mathfrak{z}} \mathfrak{h}_{ij} \,, \]
where $\widehat{s}_g$ is the extended Friedrich scalar defined by \eqref{eq:extended_Friedrich_scalar}. This implies that all connected components $\mathfrak{S}$ of the conformal boundary $\mathscr{I}$ are umbilical in the conformal structure $(\M,[g])$, see \Cref{def:tllygeod_umbilical}. Furthermore, the extended Friedrich scalar controls the extrinsic curvature and thus plays an important role to determine whether a conformal geodesic starting on the conformal boundary stay on it or not, as stated in the lemma below. This lemma is given here for comparison purpose since it was used in Friedrich's proof but will not be necessary in our work.

\begin{lem}[Extended version of Friedrich \protect{\cite[Lemma 4.1]{F95}}]
    \label{lem:conf_geod_stay_conf_bound}
    Let $(\M,g,V,\Theta,\kappa)\in\mathscr{E}$ be a solution to the (CVE), $\widehat{\nabla}$ be the Weyl connection associated to $\kappa$ with respect to $g$ given by \Cref{prop:exist&uniq_weylconn} and $(x,\widehat{\beta})$ be a $\widehat{\nabla}$-conformal geodesic of initial data $(x_\star,\dot{x}_\star,\widehat{\beta}_\star)$ with $x_\star \in \mathfrak{S}$, where $\mathfrak{S}$ is a connected component of $\mathscr{I}$. If the initial data satisfy
    \begin{equation}
        \label{eq:hypotheses_lem_conf_geo_stay_conf_bound}
         \langle d\Theta, \dot{x}_\star\rangle = 0 \quad \text{and} \quad g^{-1}(\widehat{\beta}_\star, d\Theta) = - \widehat{s}_g(x_\star) \,,
    \end{equation}
    then
    \begin{itemize}
        \item[i)] the curve $x$ stays on $\mathfrak{S}$,
        \item[ii)] along the curve $x$, the normal part of $\widehat{\beta}$ is fully imposed by
        \[ g^{-1}(\widehat{\beta},d\Theta) = -\widehat{s}_g \,, \]
        \item[iii)] $(x,\Tan \widehat{\beta})$ is a $\widehat{\mathfrak{D}}$-conformal geodesic where $\widehat{\mathfrak{D}}$ is the Weyl connection induced by $\widehat{\nabla}$ on $\mathfrak{S}$, see \Cref{sec:induced_conformal_structure}.
    \end{itemize}
\end{lem}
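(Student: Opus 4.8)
The plan is to follow the conformal geodesic equations componentwise along an adapted frame, exploiting the simplified constraint structure on the conformal boundary established in \Cref{cor:I}. First I would fix a representative $g \in [\widetilde{g}]$ and work with the $\widehat{\nabla}$-conformal geodesic equations \eqref{eq_conf_geo_x}--\eqref{eq_conf_geo_b}. The quantity to track is the normal component $u := \langle d\Theta, \dot{x}\rangle$ together with $v := g^{-1}(\widehat{\beta}, d\Theta) + \widehat{s}_g$; the goal is to show that the pair $(u,v)$ satisfies a linear homogeneous system of ODEs along $x$ with vanishing initial data, forcing $u \equiv 0$ and $v \equiv 0$. The hypotheses \eqref{eq:hypotheses_lem_conf_geo_stay_conf_bound} are precisely the statement that $(u,v) = (0,0)$ at $\tau_\star$.

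The key computation is to differentiate $u$ and $v$ along the curve. For $u$, I would compute $\dot{x} \cdot \langle d\Theta, \dot{x}\rangle = (\widehat{\nabla}_{\dot{x}} d\Theta)(\dot{x}) + \langle d\Theta, \widehat{\nabla}_{\dot{x}}\dot{x}\rangle$. The second term is handled using the conformal geodesic equation \eqref{eq_conf_geo_x}, producing a term $-S_{\lambda\sigma}{}^{\mu\nu}\dot{x}^\lambda\dot{x}^\sigma\widehat{\beta}_\nu (d\Theta)_\mu$ which rearranges into a multiple of $v$ and $u$ via the definition \eqref{eq:def_S} of $S$. For the first term I would use $(d\Theta)_\alpha = (\widehat{\zeta}_g)_\alpha - \Theta\kappa_\alpha$ from \eqref{eq:def_zeta} together with equation \eqref{eq:CVE_zeta}, which expresses $\widehat{\nabla}_\alpha(\widehat{\zeta}_g)_\beta + \kappa_\alpha(\widehat{\zeta}_g)_\beta + \Theta\widehat{L}_{\alpha\beta} = \widehat{s}_g g_{\alpha\beta}$; restricting to $\mathscr{I}$ where $\Theta = 0$ this collapses so that the Hessian of $\Theta$ is controlled by $\widehat{s}_g g$ and the $\kappa$-terms. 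For $v$, I would differentiate $g^{-1}(\widehat{\beta}, d\Theta)$ using \eqref{eq_conf_geo_b} for $\widehat{\nabla}_{\dot{x}}\widehat{\beta}$ and differentiate $\widehat{s}_g$ using \eqref{eq:grad_s}, namely $(d\widehat{s}_g)_\alpha - \kappa_\alpha\widehat{s}_g = -\widehat{L}_{\alpha\beta}g^{\beta\gamma}(\widehat{\zeta}_g)_\gamma$. The Schouten-tensor contributions from these two derivatives should cancel by design, leaving a linear expression in $u$ and $v$.

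Once the linear homogeneous system $\dot{u} = a u + b v$, $\dot{v} = c u + d v$ (with smooth coefficients along $x$ depending on $\widehat{\beta}$, $\dot{x}$, $\widehat{s}_g$, $\widehat{L}$) is in hand, uniqueness for linear ODEs gives $u \equiv v \equiv 0$, which is claim $ii)$. Claim $i)$ then follows because $u \equiv 0$ means $\dot{x}$ stays tangent to the level set $\{\Theta = 0\} = \mathscr{I}$: since $\langle d\Theta,\dot{x}\rangle = 0$ and $\Theta(x_\star) = 0$, the function $\tau \mapsto \Theta(x(\tau))$ has vanishing derivative and initial value, so $x$ remains on $\mathfrak{S}$. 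For claim $iii)$, I would invoke \Cref{lem_tot_geod_curve}: since $\mathfrak{S}$ is totally geodesic for a suitable Weyl connection (by umbilicity, from \eqref{eq:extrinsic_curvature_conf_boundary} and \Cref{lem:link_umbilical_totally_geodesic}), the tangential projection of the conformal geodesic equations reduces to the induced $\widehat{\mathfrak{D}}$-conformal geodesic equations on $\mathfrak{S}$, with $\Tan\widehat{\beta}$ playing the role of the covector; the normal parts decouple precisely because $u = 0$ and the extrinsic curvature is pure trace.

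The main obstacle I anticipate is the bookkeeping in the $v$-derivative: one must verify that the $\widehat{L}$-terms coming from \eqref{eq_conf_geo_b} (through $\widehat{L}_{\sigma\lambda}\dot{x}^\sigma$) and from the derivative of $\widehat{s}_g$ via \eqref{eq:grad_s} combine so that no uncontrolled Schouten contribution survives — that is, that the system genuinely closes in $(u,v)$ alone rather than coupling to the full tangential data of $\widehat{\beta}$. This requires carefully using $g(\dot{x},\dot{x}) = -1$ (valid after passing to the canonical normalisation, or tracking the factor via \eqref{derivee_pseudonorme}) and the constraint \eqref{eq:CVE_const_bis} relating $g^{-1}(\widehat{\zeta}_g,\widehat{\zeta}_g)$ to $\widehat{s}_g\Theta$, which vanishes on $\mathscr{I}$. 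I expect the cancellation to be exactly the algebraic identity that makes the conformal boundary an invariant set for this specially-tuned family of conformal geodesics, mirroring Friedrich's original argument in \cite[Lemma 4.1]{F95} but now carried out for a general Weyl connection rather than the Levi-Civita connection.
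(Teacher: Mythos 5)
Your route is genuinely different from the paper's: the paper first performs a Weyl connection change so that the extended Friedrich scalar vanishes on $\mathfrak{S}$, then builds a local Gaussian gauge adapted to $\mathfrak{S}$ and runs an open--closed argument on the parameter interval, showing that the triple of normal components $(y^\perp\circ x,\dot{x}^\perp,\widecheck{\beta}_\perp)$ obeys a homogeneous ODE system. Your idea of tracking the globally defined scalars $u=\langle d\Theta,\dot x\rangle$ and $v=g^{-1}(\widehat\beta,d\Theta)+\widehat s_g$ directly along the curve is attractive because it avoids both the gauge construction and the connectedness argument. However, as written it is circular. When you compute $\dot u$ via \eqref{eq:CVE_zeta} and $\dot v$ via \eqref{eq_conf_geo_b} and \eqref{eq:grad_s}, the Schouten and $\widehat s_g\,g(\dot x,\dot x)$ contributions do cancel as you anticipated, but what survives is not a system in $(u,v)$ alone: there remain terms proportional to $\Theta\circ x$, coming from $\widehat\zeta_g = d\Theta+\Theta\kappa$, from $-\Theta\widehat L_{\alpha\beta}$ in the Hessian of $\Theta$, and from $-\Theta\widehat\nabla_\alpha\kappa_\beta$. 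You propose to kill these by ``restricting to $\mathscr I$ where $\Theta=0$'', but whether the curve lies on $\mathscr I$ is precisely claim $i)$ --- you cannot assume it while deriving the ODE system. The fix is cheap: adjoin $w:=\Theta\circ x$ as a third unknown with $\dot w = u$; the triple $(w,u,v)$ then satisfies a linear homogeneous system with smooth coefficients and vanishing initial data (since $x_\star\in\mathfrak S$ gives $w(\tau_\star)=0$), and uniqueness for linear ODEs yields $i)$ and $ii)$ simultaneously. This is the global analogue of the paper's triple of normal components.

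For $iii)$ you should be more careful. For a general Weyl connection $\widehat\nabla$ the component $\mathfrak S$ is umbilical but \emph{not} totally geodesic --- by \eqref{eq:extrinsic_curvature_conf_boundary} its extrinsic curvature equals $-(\widehat s_g/\mathfrak z)\,\mathfrak h$ --- and $\widehat L_{i\perp}$ does not vanish in general, so \Cref{lem_tot_geod_curve} does not apply directly and the tangential projection of \eqref{eq_conf_geo_x}--\eqref{eq_conf_geo_b} picks up extrinsic curvature and mixed Schouten terms. These must be shown to cancel against the normal part $g^{-1}(\widehat\beta,d\Theta)=-\widehat s_g$ supplied by $ii)$. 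The paper avoids this bookkeeping entirely by its preliminary Weyl connection change $\widehat\nabla\mapsto\widecheck\nabla$, which makes $\mathfrak S$ totally geodesic and kills $\widecheck L_{i\perp}$, and then returns via \Cref{prop:confgeo_changweylconn}, noting that $\Tan\omega=0$ on $\mathfrak S$ so the induced connection and the tangential covector are unchanged. Either perform that reduction first or carry out the cancellation explicitly; as stated, the decoupling in your last step is asserted rather than proved.
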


\begin{proof}
    We adapt the proof provided in \cite[Section 4.4]{F95} to Weyl connections. Let us divide it into three steps.
    \begin{itemize}
        \item \underline{Step 1}: Reduction to the case $\widehat{s}_g = 0$ on $\mathfrak{S}$

        Take a smooth covector field $\omega$ on $\M$ such that
        \[ \omega = \frac{3\widehat{s}_g}{\Lambda} d\Theta \quad \text{on } \mathfrak{S} \,. \]
        Then, using \eqref{eq:pseudonorm_dTheta},
        \[ g^{-1}(\omega,d\Theta) = -\widehat{s}_g \quad \text{on } \mathfrak{S} \,. \]
        Let $\widecheck{\nabla}$ be the Weyl connection associated to $\widecheck{\kappa}_g :=\kappa+\omega$ with respect to $g$. By the transformation laws \eqref{eq:transfo_s} of the extended Friedrich scalar, one has $\widecheck{s}_g = 0$ on $\mathfrak{S}$. By~\Cref{prop:confgeo_changweylconn}, the $\widehat{\nabla}$-conformal geodesic $(x,\widehat{\beta})$ corresponds to the $\widecheck{\nabla}$-conformal geodesic $(x,\widecheck{\beta})$ where $\widecheck{\beta} := \widehat{\beta} -\omega$ along $x$. 
        
        \item \underline{Step 2}: Gauge construction
        
        Fix $p \in \mathfrak{S}$. Let $(y^i)$ be a local chart on an open neighbourhood $\mathfrak{U}$ of $p$ in $\mathfrak{S}$ and $(e_{\bf i})$ be a smooth frame field on $\mathfrak{U}$.
        
        Consider $\widecheck{\nabla}$-geodesic curves starting from $\mathfrak{U}$ with initial vector a unit normal vector $e_\perp$ to $\mathfrak{S}$ with respect to $\mathfrak{h}$, the metric induced by $g$ on $\mathfrak{S}$. The geodesics form a congruence on a neighbourhood $\mathcal{U}$ of $p$ in $\M$. Let us denote the tangent vector field to the geodesics by $e_\perp$ and the parameter of the geodesics initialised at $0$ on $\mathfrak{U}$ by $y^\perp$. Thus
        \[ \widecheck{\nabla}_{e_\perp} e_\perp = 0 \,, \quad e_\perp{}^\perp := \langle dy^\perp,e_\perp\rangle =1 \,, \quad \mathfrak{S} \cap \mathcal{U} = \{ q \in \mathcal{U} \mid y^\perp(q) = 0\} \,. \]
        
        Extend the functions $(y^i)$ from $\mathfrak{U}$ to $\mathcal{U}$ by taking them constant along the geodesics and the vector fields $(e_{\bf i})$ by parallel transport along the geodesics. That is to say
        \[ e_\perp{}^i := \langle dy^i,e_\perp\rangle = 0 \,, \qquad \widecheck{\nabla}_{e_\perp} e_{\bf i} = 0 \,. \]
        Then $(y^\alpha) := (y^\perp,(y^i))$ are Gaussian coordinates associated to the congruence of geodesics and $(e_{\bf a}) := (e_\perp,(e_{\bf i}))$ is a frame field adapted to $\mathfrak{S}$ with respect to $g$ on $\mathcal{U}$ with $g(e_\perp,e_{\bf i}) = 0$ on $\mathcal{U}$.

        \item \underline{Step 3}: Analysis

        Denote by $I$ the (maximal) open interval (containing the initial parameter $\tau_\star$) on which the conformal geodesic $(x,\widecheck{\beta})$ is defined. Define the subinterval
        \[ I_\mathfrak{S} := \left\{ \tau \in I \, \big| \, x(\tau) \in \mathfrak{S}, \dot{x}(\tau) \text{ tangent to } \mathfrak{S}, \widecheck{\beta}(\tau) \text{ tangent to } \mathfrak{S} \right\} \,. \]
        Thanks to hypotheses \eqref{eq:hypotheses_lem_conf_geo_stay_conf_bound} and step 1, one has $\tau_\star \in I_\mathfrak{S}$. By smoothness of the conformal geodesic $(x,\widecheck{\beta})$, $I_\mathfrak{S}$ is a closed subset of $I$.

        Let $\tau \in I_\mathfrak{S}$ and $((y^\alpha),(e_{\bf a}))$ be the gauge constructed on a neighbourhood $\mathcal{U}$ of $x(\tau) \in \mathfrak{S}$ in $\M$ by step 2. As $\widecheck{s}_g = 0$ on $\mathfrak{S}$, the constraint equations~\eqref{eq:constraints_I} on $\mathfrak{S}$ give in particular
        \[ (\widecheck{\mathfrak{K}}_g)_{\bf ij} = 0 \,, \qquad \widecheck{L}_{\bf i\perp} = 0 \,, \qquad \widecheck{L}_{\perp\bf i} = 0 \,, \qquad \widecheck{L}_{\bf ij} = \widecheck{\mathfrak{l}}_{\bf ij} \,. \]
        By writing the $\widecheck{\nabla}$-conformal geodesic equations in the frame field $(e_{\bf a})$ and coordinates $(y^\alpha)$, one finds that the normal components verify
        \begin{align*}
            \dot{x}(y^\perp \circ x) &= \dot{x}^\perp \,, \\
            \dot{x}(\dot{x}^\perp) &= -2 \langle \widecheck{\beta}, \dot{x}\rangle \dot{x}^\perp + g(\dot{x},\dot{x}) g(e_\perp,e_\perp) \widecheck{\beta}_\perp \,, \\
            \dot{x}(\widecheck{\beta}_\perp) &= \langle \widecheck{\beta},\dot{x}\rangle \widecheck{\beta}_\perp - \frac{1}{2} g^{-1}(\widecheck{\beta},\widecheck{\beta}) g(e_\perp,e_\perp) \dot{x}^\perp + \widecheck{L}_{\perp\perp} \dot{x}^\perp \,,
        \end{align*}
        This is a homogeneous system in $(y^\perp \circ x,\dot{x}^\perp,\widecheck{\beta}_\perp)$. Since the latter triple vanishes at $\tau \in I_\mathfrak{S}$, it vanishes on an open neighbourhood of $\tau$ in $I$. Thus $I_\mathfrak{S}$ is an open subset of $I$. Since $I$ is connected, one has $I_\mathfrak{S} = I$. Hence the curve $x$ stays on $\mathfrak{S}$ and the covector field $\widecheck{\beta}$ remains tangent to $\mathfrak{S}$.
        
        Finally, the tangential components of the $\widecheck{\nabla}$-conformal geodesic equations, in the gauge constructed by step 2, reduce to the $\widecheck{\mathfrak{D}}$-conformal geodesic equations. \qedhere
    \end{itemize}
\end{proof}

Similarly, one can define the conformal boundary of a solution $(\mathcal{S},h,E,M,\Psi,\kappa,K,z,T,Q) \in \mathscr{D}$ to the (CVC) as the gauge-invariant subset
\[ \slashed{\mathscr{I}} := \{ p \in \mathcal{S} \mid \Psi(p) = 0 \} \,. \]
However, contrary to the (CVE), the conformal boundary can cover the whole of $\mathcal{S}$, just as considered above in \Cref{cor:I}.

\begin{lem}
    Let $(\mathcal{S},h,E,M,\Psi,\kappa,K,z,T,Q) \in \mathscr{D}$ be a solution to the (CVC) with $\slashed{\mathscr{I}} \neq \varnothing$. Then
    \begin{equation}
        \label{eq:pseudonorm_dPsi}
        h^{-1}(d\Psi,d\Psi) = -\varepsilon z^2 - \frac{\Lambda}{3} \quad \text{on } \slashed{\mathscr{I}} \,.
    \end{equation}
\end{lem}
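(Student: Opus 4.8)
The plan is to mirror the proof of \Cref{lem:pseudonorm_dTheta}, exploiting the single algebraic constraint \eqref{eq:CVC_k}, which is the $3$-dimensional analogue of \eqref{eq:CVE_const_bis}, together with the defining relation \eqref{eq:zeta_dim3}. First I would fix a point $p \in \slashed{\mathscr{I}}$, so that $\Psi(p) = 0$ by definition of the conformal boundary. Evaluating \eqref{eq:zeta_dim3} at such a point, the term $\Psi \kappa_i$ drops out and one is left with $(\widehat{\zeta}_h)_i = (d\Psi)_i$ on $\slashed{\mathscr{I}}$. Consequently
\[ h^{ij} (\widehat{\zeta}_h)_i (\widehat{\zeta}_h)_j = h^{ij} (d\Psi)_i (d\Psi)_j = h^{-1}(d\Psi,d\Psi) \quad \text{on } \slashed{\mathscr{I}} \,. \]

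Next I would turn to the constraint \eqref{eq:CVC_k}, which holds everywhere on $\mathcal{S}$ since the $10$-tuple is assumed to be a solution to the (CVC). Its left-hand side $\widehat{\sigma}\Psi$ vanishes identically on $\slashed{\mathscr{I}}$ because of the factor $\Psi$, the scalar field $\widehat{\sigma}$ being smooth and hence finite. Substituting the identity just obtained for $h^{ij}(\widehat{\zeta}_h)_i(\widehat{\zeta}_h)_j$ into the right-hand side of \eqref{eq:CVC_k} and multiplying by $2$ then yields
\[ 0 = h^{-1}(d\Psi,d\Psi) + \varepsilon z^2 + \frac{\Lambda}{3} \quad \text{on } \slashed{\mathscr{I}} \,, \]
which is precisely \eqref{eq:pseudonorm_dPsi} after rearrangement.

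There is no genuine obstacle here: the statement is a direct consequence of evaluating two of the constraint equations on the zero set of $\Psi$, and the only point requiring a word of care is that $\widehat{\sigma}$ is smooth, so that the product $\widehat{\sigma}\Psi$ indeed vanishes wherever $\Psi = 0$. I would also emphasise the contrast with \Cref{cor:I}: since $\slashed{\mathscr{I}}$ is here merely the zero set of $\Psi$ rather than an open set on which $\Psi$ vanishes identically, one cannot invoke the analogue of \eqref{eq:constraint_i_bis} to conclude that $z$ is constant, so the term $\varepsilon z^2$ must remain in the final identity.
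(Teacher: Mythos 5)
Your proof is correct and follows exactly the route the paper takes: the paper's own proof is the one-line remark that the result is straightforward from \eqref{eq:zeta_dim3} and \eqref{eq:CVC_k}, which is precisely the evaluation you carry out. Your additional observations about the smoothness of $\widehat{\sigma}$ and the contrast with \Cref{cor:I} are accurate but not needed.
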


\begin{proof}
    The result is straightforward from \eqref{eq:zeta_dim3} and \eqref{eq:CVC_k}.
\end{proof}

\subsubsection{Relation with the vacuum Einstein equations}
\label{sec:CVE&VE}

The (CVE) are closely related to the Vacuum Einstein equations (VE) away from the conformal boundary as shown in the following proposition.

\begin{prop} \,
    \label{prop:CVE_VE}
    \begin{itemize}
        \item[i)] If $(\M,\widetilde{g})$ is a solution to the \eqref{eq:VE} then for any smooth nowhere vanishing function $\Theta \in \mathcal{C}^\infty(\M,\R^\star)$ and any smooth covector field $\kappa$ on $\M$, the quintuple $(\M,\Theta^2 \widetilde{g},\Theta^{-1} \widetilde{W}, \Theta,\kappa) \in \mathscr{E}$, where $\widetilde{W}$ is the Weyl tensor of $\widetilde{g}$, is a solution to the \eqref{eq:CVE}.
        \item[ii)] Let $(\M,g,V,\Theta,\kappa) \in \mathscr{E}$ be a solution to the \eqref{eq:CVE}. Then, on any open set $\mathcal{U}$ where $\Theta$ nowhere vanishes, $(\mathcal{U},|\Theta|^{-2} g)$ is a solution to the \eqref{eq:VE}.
    \end{itemize}
\end{prop}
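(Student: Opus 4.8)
The plan is to prove both directions by direct computation, exploiting the conformal invariance built into the (CVE). The key idea throughout is that when $\Theta$ is nowhere zero, the covector $\kappa$ is pure gauge (by the Weyl connection change invariance, \Cref{prop:invariance_CVE} $ii)$), so I may reduce to the Levi-Civita case $\kappa = 0$ and then undo the conformal rescaling to recover the (VE).

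For part $i)$, I start from a solution $(\M,\widetilde{g})$ of the (VE) and set $g := \Theta^2 \widetilde{g}$, $V := \Theta^{-1}\widetilde{W}$. First I would treat the distinguished case $\kappa = 0$, where $\widehat{\nabla} = \nabla$ is the Levi-Civita connection of $g$, and then recover the general $\kappa$ by applying $\Weylchg_\kappa$ together with \Cref{prop:invariance_CVE}. With $\kappa = 0$ one has $\widehat{\zeta}_g = d\Theta$ by \eqref{eq:def_zeta}. The four equations are then verified as follows. Equation \eqref{eq:CVE_Gamma} is just the decomposition \eqref{eq:decomposition_Riemann_Weyl} of the Riemann tensor of $\nabla$, once one checks that the Weyl tensor of $g$ equals $\Theta V = \widetilde{W}$ (the Weyl tensor is conformally invariant, \Cref{prop:transformation_rules} equation \eqref{transfo_weyl}). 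Equation \eqref{eq:CVE_V} reduces to the contracted second Bianchi identity \eqref{Bianchi2_Weyl} in dimension $N=4$, which gives $\nabla_\alpha \widetilde{W}^\alpha{}_{\beta\mu\nu} = C_{\mu\nu\beta}$; since $\widetilde{g}$ is Einstein its Cotton tensor vanishes (\Cref{sec:weyl_schouten_cotton}), and I must track how the Cotton tensor and divergence transform under the rescaling $\widetilde{g}\mapsto g$. Equation \eqref{eq:CVE_L} is the expression $2\nabla_{[\alpha}L_{\beta]\gamma} = C_{\gamma\alpha\beta}$ for the Cotton tensor of $g$, which I compute via the transformation law \eqref{transfo_cotton} relating the Cotton tensors of the Levi-Civita connections of $g$ and $\widetilde{g}$ (these being Weyl connections for the common conformal class). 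Finally, equation \eqref{eq:CVE_const} is the most substantial: it is the conformal transformation law of the Schouten tensor, \eqref{transfo_schouten}, applied between $\widetilde{g}$ and $g = \Theta^2\widetilde{g}$, combined with the fact that the Schouten tensor $\widetilde{L}$ of the Einstein metric $\widetilde{g}$ is pure trace, equal to $\tfrac{\Lambda}{6}\,\widetilde{g}^{-1}$ after contracting \eqref{def:schouten} with the (VE) condition $\widetilde{R}_{\mu\nu} = \tfrac{2\Lambda}{N-2}\widetilde{g}_{\mu\nu}$ in $N=4$.

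For part $ii)$, the argument runs in reverse on an open set $\mathcal{U}$ where $\Theta$ never vanishes. Again I first reduce to $\kappa = 0$ using \Cref{prop:invariance_CVE} $ii)$, and I may also use the sign change $\Signchg$ (invariance $iii)$) to assume $\Theta > 0$, so that $|\Theta|^{-2} = \Theta^{-2}$ and $\widetilde{g} := \Theta^{-2}g$ is a genuine metric. The crucial input is equation \eqref{eq:CVE_const}, which on $\mathcal{U}$ I rewrite using \Cref{lem:split_CVE_regular}: since $\Theta \neq 0$ there, equation \eqref{eq:CVE_const} is equivalent to \eqref{eq:CVE_zeta} together with \eqref{eq:CVE_const_bis}. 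Dividing \eqref{eq:CVE_zeta} (with $\kappa=0$, so $\widehat{\zeta}_g = d\Theta$) by $\Theta$ yields precisely the transformation law \eqref{transfo_schouten} for the Schouten tensor under $g \mapsto \Theta^{-2}g$, and reading it backwards shows that the Schouten tensor $\widetilde{L}$ of $\widetilde{g}$ is pure trace, forcing $\widetilde{g}$ to be Einstein; the cosmological constant is pinned down by \eqref{eq:CVE_const_bis}, which fixes the trace to the value $\tfrac{\Lambda}{6}$. I expect the main obstacle to be the careful bookkeeping of the conformal weights and the exact form of the transformation laws \eqref{transfo_schouten}–\eqref{transfo_cotton} when passing between $\widetilde{g}$ and $g$ — in particular matching the quadratic terms $S_{\mu\nu}{}^{\sigma\rho}(d\ln\Theta)_\sigma(d\ln\Theta)_\rho$ in \eqref{transfo_schouten} against the term $-\tfrac12 S_{\alpha\beta}{}^{\mu\nu}(\widehat{\zeta}_g)_\mu(\widehat{\zeta}_g)_\nu$ in \eqref{eq:CVE_const}, where the relation $\widehat{\zeta}_g = d\Theta = \Theta\, d\ln\Theta$ supplies exactly the needed factor of $\Theta$. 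Once the Levi-Civita cases are settled, the reduction to general $\kappa$ via the gauge invariances is immediate and requires no further computation.
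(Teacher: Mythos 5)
Your proposal is correct and follows essentially the same route as the paper: both hinge on the conformal invariance of the Weyl tensor, the vanishing of the Cotton tensor for Einstein metrics together with the transformation law \eqref{transfo_cotton}, the contracted second Bianchi identity \eqref{Bianchi2_Weyl}, and the Schouten transformation law \eqref{transfo_schouten} read forwards for $i)$ and backwards for $ii)$. The only (harmless) organizational difference is that you first reduce to $\kappa=0$ and invoke the gauge invariance $\Weylchg_\omega$ of \Cref{prop:invariance_CVE}, and route $ii)$ through \Cref{lem:split_CVE_regular}, whereas the paper works directly with the Weyl connection associated to $\kappa+d\ln|\Theta|$ with respect to $\widetilde{g}$ and manipulates \eqref{eq:CVE_const} by hand.
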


\begin{proof}
    $i)$ Assume that $(\M,\widetilde{g})$ is a 4-dimensional smooth Lorentzian manifold solution to the (VE), that is
    \begin{equation}
        \label{eq:einstein_proof}
        \widetilde{L}_{\alpha\beta} = \frac{\Lambda}{6} \widetilde{g}_{\alpha\beta} \,.
    \end{equation}
    Let $\Theta$ be a smooth nowhere vanishing function and $\kappa$ be a smooth covector field on $\M$. Take $\widehat{\nabla}$ the Weyl connection associated to $\kappa + d\ln|\Theta|$ with respect to $\widetilde{g}$ by \Cref{prop:exist&uniq_weylconn}. It follows that it is associated to $\kappa$ with respect to $g := \Theta^2 \widetilde{g}$.
    
    Equation~\eqref{eq:CVE_Gamma} is the decomposition~\eqref{eq:decomposition_Riemann_Weyl} of the Riemann tensor of a Weyl connection with $V := \Theta^{-1} \widehat{W}$. Since $\widehat{W} = \widetilde{W}$ by \eqref{transfo_weyl}, one has $V = \Theta^{-1} \widetilde{W}$. Equation~\eqref{eq:einstein_proof} implies that the Cotton tensor $\widetilde{C}_{\alpha\beta\gamma}$ of $\widetilde{g}$ vanishes identically. Then the transformation law~\eqref{transfo_cotton} of the Cotton tensor gives equation~\eqref{eq:CVE_L}. Moreover, equation~\eqref{eq:CVE_V} comes from the once contracted second Bianchi identity~\eqref{Bianchi2_Weyl}. Finally, equation~\eqref{eq:CVE_const} is derived from the transformation law~\eqref{transfo_schouten} of the Schouten tensor and \eqref{eq:einstein_proof}.

    \bigbreak

    \noindent $ii)$ Let $(\M,g,V,\Theta,\kappa) \in \mathscr{E}$ be a solution to the (CVE) and $\mathcal{U}$ be an open set where $\Theta$ nowhere vanishes. Then $\widetilde{g} := \Theta^{-2} g|_{\mathcal{U}}$ is a well-defined Lorentzian metric on $\mathcal{U}$ in the conformal class $[g]$ on $\mathcal{U}$. Using \eqref{eq:CVE_const} and \eqref{eq:def_zeta}, one finds
    \[ \widehat{L}_{\alpha\beta} = -\widehat{\nabla}_\alpha (\widehat{\kappa}_{\widetilde{g}})_\beta - \frac{1}{2} S_{\alpha\beta}{}^{\mu\nu} (\widehat{\kappa}_{\widetilde{g}})_\mu (\widehat{\kappa}_{\widetilde{g}})_\nu + \frac{\Lambda}{6}  \frac{g_{\alpha\beta}}{\Theta^2} \,, \]
    where $\widehat{\kappa}_{\widetilde{g}} = \kappa + d\ln|\Theta|$. Thus, the transformation law~\eqref{transfo_schouten} of the Schouten tensor gives that the Schouten tensor of $\widetilde{g}$ verifies
    \[ \widetilde{L}_{\alpha\beta} = \frac{\Lambda}{6} \widetilde{g}_{\alpha\beta} \,. \]
    Consequently, $(\mathcal{U},\widetilde{g})$ is a solution to the (VE).
\end{proof}

A similar situation occurs for the constraint equations. Recall that a triple $(\widetilde{\mathcal{S}},\widetilde{h},\widetilde{K})$, where $(\widetilde{\mathcal{S}},\widetilde{h})$ is a 3-dimensional Riemannian or Lorentzian manifold and $\widetilde{K}$ is a symmetric 2-tensor on $\widetilde{\mathcal{S}}$, is said to be a solution to the Vacuum Constraint equations (VC) if
\begin{subequations}
    \makeatletter
    \def\@currentlabel{VC}
    \makeatother
    \label{eq:VC}
    \renewcommand{\theequation}{VC.\arabic{equation}}
    \begin{align}
        \label{eq:VC_hamiltonian}
        \widetilde{r} -\varepsilon \left( (\widetilde{K}^i{}_i)^2 - \widetilde{K}_{ij} \widetilde{K}^{ij} \right) &= 2\Lambda \,, \\
        \label{eq:VC_momentum}
        \widetilde{D}_i \widetilde{K}^i{}_j - \widetilde{D}_j \widetilde{K}^i{}_i &= 0 \,,
    \end{align}
\end{subequations}
where $\widetilde{D}$ is the Levi-Civita connection of $\widetilde{h}$, $\widetilde{r}$ is its scalar curvature, $\Lambda$ is the cosmological constant and $\varepsilon = \pm 1$ according to the signature. The counterpart of \Cref{prop:CVE_VE} is the following proposition.

\begin{prop} \,
    \label{prop:CVC_VC}
    \begin{itemize}
        \item[i)] If $(\widetilde{\mathcal{S}},\widetilde{h},\widetilde{K})$ is a solution to the \eqref{eq:VC} then, for any smooth nowhere vanishing scalar function $\Psi \in \mathcal{C}^\infty(\widetilde{\mathcal{S}},\R^\star)$, any smooth covector field $\kappa$ on $\widetilde{\mathcal{S}}$ and any smooth scalar field $z$ on $\widetilde{\mathcal{S}}$, the 10-tuple $(\widetilde{\mathcal{S}},h:=\Psi^2 \widetilde{h},E,M,\Psi,\kappa,K,z,T,Q) \in \mathscr{D}$ is a solution to the \eqref{eq:CVC} where one defines in this order
        \begin{subequations}
            \label{CVC_def}
            \begin{align}
                \label{CVC_def_K}
                K_{ij} &:= |\Psi| \widetilde{K}_{ij} - \varepsilon z\Psi^{-1} h_{ij} \,, \\
                \label{CVC_def_Q}
                Q_{ij} &:= \Psi^{-1} \left( -\widehat{D}_i (\widehat{\zeta}_h)_j - \kappa_i (\widehat{\zeta}_h)_j + z K_{ij} + \widehat{\sigma} h_{ij} \right) \,, \\
                \label{CVC_def_E}
                E_{ij} &:= \Psi^{-1} \left( Q_{ij} - \widehat{l}_{ij} + \varepsilon F_{ij}(K) \right) \,, \\
                \label{CVC_def_T}
                T_i &:= \Psi^{-1} \left( -(dz)_i - \varepsilon K_i{}^j (\widehat{\zeta}_h)_j \right)  \,, \\
                \label{CVC_def_M}
                M_{klj} &:= \Psi^{-1} \left( G_{klj}(K,\kappa) + h_{j[k} h^{ip} G_{l]ip}(K,\kappa) \right) \,,
            \end{align} 
        \end{subequations}
        $\widehat{D}$ being the Weyl connection associated to $\kappa$ with respect to $h$, $\widehat{l}_{ij}$ being its Schouten tensor, $\widehat{\sigma}$ and $\widehat{\zeta}_h$ being defined by \eqref{eq:dim3} and $F_{ij}(K)$, $G_{ijk}(K,\kappa)$ being defined by \eqref{eq:def_F_G}.
        
        \item[ii)] Let $(\mathcal{S},h,E,M,\Psi,\kappa,K,z,T,Q) \in \mathscr{D}$ be a solution to the \eqref{eq:CVC}. Then, on any open set $\mathcal{U}$ where $\Psi$ nowhere vanishes, $(\mathcal{U},\Psi^{-2} h,|\Psi|^{-1}(K+\varepsilon z \Psi^{-1} h))$ is a solution to the \eqref{eq:VC}.
    \end{itemize}
\end{prop}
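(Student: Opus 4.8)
The plan is to exploit the gauge invariance of the \eqref{eq:CVC} under the transformations of \Cref{def:transfo_dim3} to reduce both statements to a single \emph{base gauge} in which $\Psi\equiv 1$, $\kappa\equiv 0$ and $z\equiv 0$. For part $i)$, I would first observe that the 10-tuple defined by \eqref{CVC_def} is gauge-equivalent to the base-gauge tuple built from the same physical data $(\widetilde{\mathcal{S}},\widetilde{h},\widetilde{K})$: starting from $(\Psi,\kappa,z)=(1,0,0)$ one applies a sign change $\Signchg$ if $\Psi<0$, then a confomorphism $\Conf_{\mathrm{id}}$ with conformal factor $\Omega=|\Psi|$ (rescaling the metric to $h=\Psi^2\widetilde h$ and turning $\kappa$ into $-d\ln|\Psi|$), then a tangential Weyl change $\Weylchg^\parallel_\omega$ with $\omega=\kappa+d\ln|\Psi|$ to install the prescribed $\kappa$, and finally a normal Weyl change $\Weylchg^\perp_\chi$ with $\chi=z/\Psi$ to install the prescribed $z$. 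A direct check that the fields transported by this composition coincide with \eqref{CVC_def_K}--\eqref{CVC_def_M} — which is precisely how those formulas are engineered — shows that it suffices to verify the \eqref{eq:CVC} for the base-gauge tuple, since gauge transformations preserve solutions.

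In the base gauge one has $\widehat{\zeta}_h=0$, $\widehat{D}$ is the Levi-Civita connection $\widetilde D$ of $\widetilde h$, and $K_{ij}=\widetilde K_{ij}$, while \eqref{eq:def_sigma_hat} together with the Hamiltonian constraint \eqref{eq:VC_hamiltonian} yields $\widehat{\sigma}=\Lambda/6$. The definitions then reduce to $Q_{ij}=\tfrac{\Lambda}{6}\widetilde h_{ij}$, $T_i=0$, $E_{ij}=\tfrac{\Lambda}{6}\widetilde h_{ij}-\widetilde l_{ij}+\varepsilon F_{ij}(\widetilde K)$ and $M_{klj}=2\widetilde D_{[k}\widetilde K_{l]j}+\widetilde h_{j[k}\widetilde h^{ip}G_{l]ip}(\widetilde K,0)$. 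With these values, equations \eqref{eq:CVC_c}, \eqref{eq:CVC_e}, \eqref{eq:CVC_h}, \eqref{eq:CVC_i} hold by construction of $E,M,Q,T$; equation \eqref{eq:CVC_d} becomes $\varepsilon\,\widetilde h^{ij}(\widetilde D_k\widetilde K_{ij}-\widetilde D_i\widetilde K_{kj})=0$, which is exactly the momentum constraint \eqref{eq:VC_momentum}; equation \eqref{eq:CVC_k} reproduces the Hamiltonian constraint \eqref{eq:VC_hamiltonian}; and \eqref{eq:CVC_f}, \eqref{eq:CVC_g}, \eqref{eq:CVC_j} collapse to trivial identities because $Q$ and $\widehat{\sigma}$ are pure-trace and constant respectively and $T=z=\widehat{\zeta}_h=\kappa=0$.

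The only substantive equations left are the Bianchi-type constraints \eqref{eq:CVC_a} and \eqref{eq:CVC_b}, which in the base gauge read $\widetilde D_i M_{jk}{}^i=2\widetilde K^i{}_{[j}E_{k]i}$ and $\widetilde D_i E^i{}_j=\varepsilon\widetilde K^{ik}M_{jki}$; this is the main obstacle. I would verify them by the standard computation that expresses $E$ and $M$ through the Gauss--Codazzi relations (as in \Cref{cor:codazzi}, specialised to a vacuum slice where $\widehat L_{ij}=\tfrac{\Lambda}{6}\widetilde h_{ij}$), takes their divergences using the contracted second Bianchi identity for $\widetilde h$, and substitutes the constraints \eqref{eq:VC}. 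Equivalently, and more conceptually, one may embed $(\widetilde{\mathcal S},\widetilde h,\widetilde K)$ as a spacelike hypersurface of a vacuum Einstein development $(\widetilde{\mathcal M},\widetilde g)$: by \Cref{prop:CVE_VE} the quintuple $(\widetilde{\mathcal M},\widetilde g,\widetilde W,1,0)$ solves the \eqref{eq:CVE}, its induced electromagnetic decomposition matches the base-gauge $E$ and $M$, and \eqref{eq:CVC_a}--\eqref{eq:CVC_b} are then merely the restriction to the slice of the contracted Bianchi equation \eqref{eq:CVE_V}.

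Part $ii)$ is the converse reading of the same reduction. Given a \eqref{eq:CVC} solution with $\Psi$ nowhere vanishing, the inverse gauge transformation brings it to base gauge with metric $\Psi^{-2}h$ and extrinsic curvature $|\Psi|^{-1}(K+\varepsilon z\Psi^{-1}h)$, and there equations \eqref{eq:CVC_k} and \eqref{eq:CVC_d} (the latter using $T=0$ obtained from \eqref{eq:CVC_i}) are precisely \eqref{eq:VC_hamiltonian} and \eqref{eq:VC_momentum}. Notably, no Bianchi-type identity is required for this direction, so part $ii)$ is genuinely the easy half and follows directly from the gauge reduction.
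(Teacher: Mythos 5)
Your part ii) is exactly the paper's argument: the same composition $\Weylchg^\perp_{-z/\Psi}$, $\Weylchg^\parallel_{-(\kappa+d\ln|\Psi|)}$, $\Conf$ with $\Omega=|\Psi|^{-1}$, followed by reading off \eqref{eq:CVC_k} and \eqref{eq:CVC_d} (with $\widetilde T=0$ from \eqref{eq:CVC_i}). For part i), however, you take a genuinely different route. The paper verifies all the equations of the \eqref{eq:CVC} by direct computation in the general gauge, one by one; you instead first argue that the tuple \eqref{CVC_def} is the gauge transform of the base-gauge tuple ($\Psi\equiv1$, $\kappa\equiv0$, $z\equiv0$) and verify the system only there, where your simplifications are all correct ($\widehat\sigma=\Lambda/6$ via \eqref{eq:VC_hamiltonian}, $Q=\tfrac{\Lambda}{6}\widetilde h$, $T=0$, triviality of \eqref{eq:CVC_f}, \eqref{eq:CVC_g}, \eqref{eq:CVC_j}, and the clean reduction of \eqref{eq:CVC_d}, \eqref{eq:CVC_k} to the two constraints). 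What this buys is a much more transparent bookkeeping; what it costs is hidden in two places. First, the ``direct check'' that the composition of gauge maps reproduces \eqref{CVC_def_Q}--\eqref{CVC_def_M} is not free: for $K$ and $T$ it is a few lines, but for $Q$, $E$ and $M$ it amounts to proving that the combinations $Q_{ij}-\widehat l_{ij}+\varepsilon F_{ij}(K)$ and $G_{klj}(K,\kappa)+h_{j[k}h^{ip}G_{l]ip}(K,\kappa)$ are gauge-covariant, which is essentially the same algebra the paper spends on \eqref{eq:CVC_c}--\eqref{eq:CVC_e} in the general gauge — the work is relocated rather than removed. Second, the only substantive equations, \eqref{eq:CVC_a}--\eqref{eq:CVC_b}, are precisely the two on which the paper spends most of its proof, and your ``more conceptual'' derivation by embedding the slice into a vacuum development via \Cref{prop:CVE_VE} covers only the Riemannian case $\varepsilon=-1$: the proposition as stated also allows Lorentzian $(\widetilde{\mathcal S},\widetilde h)$, for which no well-posed development exists, and the embedding moreover imports the Cauchy existence theorem as an external ingredient. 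So that alternative can only serve as a consistency check; the divergence identities for $E$ and $M$ must ultimately be established by the Gauss--Codazzi/contracted-Bianchi computation you mention (essentially \Cref{cor:codazzi} plus the constraints), which is what the paper carries out explicitly. With that computation supplied, your outline is a valid and arguably cleaner proof.
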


\begin{rem}
    In the case of $ii)$, note that the fields
    \[ \widetilde{h}_{ij} := \Psi^{-2} h_{ij} \,, \qquad \widetilde{K}_{ij} := \frac{1}{|\Psi|} \left(K_{ij} +\varepsilon \frac{z}{\Psi} h_{ij} \right) \,, \]
    are gauge invariants, meaning that they are left invariant by the gauge transformations of the (CVC) described in \Cref{def:transfo_dim3}.
\end{rem}

\begin{proof}
    $i)$ Let $(\widetilde{\mathcal{S}},\widetilde{h},\widetilde{K})$ be a solution to the (VC), $\Psi$ be a smooth nowhere vanishing function, $\kappa$ be a smooth covector field and $z$ be a smooth scalar field on $\widetilde{\mathcal{S}}$. By definition of the fields $K$, $Q$, $E$, $M$ and $T$ through equations \eqref{CVC_def}, the 10-tuple written in the proposition belongs in $\mathscr{D}$ and \eqref{eq:CVC_c}, \eqref{eq:CVC_e}, \eqref{eq:CVC_h}, \eqref{eq:CVC_i} are directly verified. Let us prove that the other equations of the (CVC) also hold.
    \begin{itemize}
        \item Equation \eqref{eq:CVC_k}: one deduces from \eqref{CVC_def_K} that
        \[ (\widetilde{K}_i{}^i)^2 - \widetilde{K}_{ij} \widetilde{K}^{ij} = \Psi^2 \left( (K_i{}^i)^2 - K_{ij} K^{ij}\right)  + 4\varepsilon z \Psi K_i{}^i + 6z^2 \,. \]
        Furthermore, one can derive from \eqref{eq:transfo_s_dim3} that
        \[ \widetilde{r} = 4 \widetilde{l} = 12 \left( \widehat{s}_h \Psi - \frac{1}{2} h^{-1}(\widehat{\zeta}_h,\widehat{\zeta}_h) \right) \,. \]
        The result follows from \eqref{eq:VC_hamiltonian}.
        \item Equation \eqref{eq:CVC_j}: by differentiating \eqref{eq:CVC_k}, one has
        \[ \widehat{\sigma} (d\Psi)_i + \Psi (d\widehat{\sigma})_i = \varepsilon z (dz)_i + \left( \widehat{D}_i (\widehat{\zeta}_h)_k + \kappa_i (\widehat{\zeta}_h)_k \right) h^{kl} (\widehat{\zeta}_h)_l \,. \]
        Using \eqref{CVC_def_Q}, \eqref{CVC_def_T} and by definition of $\widehat{\zeta}_h$, one finds
        \[ \Psi \left( (d\widehat{\sigma})_i  - \kappa_i \widehat{\sigma} + \varepsilon z T_i + Q_{ij} h^{jk} (\widehat{\zeta}_h)_k \right) = 0 \,. \]
        \item Equation \eqref{eq:CVC_d}: one can write \eqref{eq:VC_momentum} under the form
        \[ 2 \widetilde{h}^{ik} \widetilde{D}_{[j} \widetilde{K}_{k]i} = 0 \,. \]
        Yet using \eqref{CVC_def_K},
        \begin{align*}
            \sign(\Psi) \widetilde{D}_{[j} \widetilde{K}_{k]i} &= \widetilde{D}_{[j} \left( \Psi^{-1} K_{k]i} + \varepsilon z \widetilde{h}_{k]i} \right) \\
            &= \Psi^{-1} \widetilde{D}_{[j}  K_{k]i} - \Psi^{-2} (d\Psi)_{[j} K_{k]i} - \varepsilon \widetilde{h}_{i[j} (dz)_{k]} \\
            &= \Psi^{-1} \left( \widehat{D}_{[j} K_{k]i} + S_{i[j}{}^{lm} K_{k]l} (\widehat{\zeta}_h)_m \Psi^{-1} \right) - \Psi^{-2} (d\Psi)_{[j} K_{k]i} - \varepsilon \widetilde{h}_{i[j} (dz)_{k]}  \\
            &= \Psi^{-1} \widehat{D}_{[j} K_{k]i} - \Psi^{-2} \left( - (\widehat{\zeta}_h)_{[j} K_{k]i} + h_{i[j} K_{k]}{}^l (\widehat{\zeta}_h)_l + (d\Psi)_{[j} K_{k]i} \right) \\
            &\quad - \varepsilon \widetilde{h}_{i[j} (dz)_{k]}  \\
            &= \Psi^{-1} \left( \widehat{D}_{[j} K_{k]i} + \kappa_{[j} K_{k]i} \right) + \varepsilon \widetilde{h}_{i[j} \left( - (dz)_{k]} + \varepsilon K_{k]}{}^l (\widehat{\zeta}_h)_l \right) \,.
        \end{align*}
        By taking the trace and with equations \eqref{CVC_def_T} and \eqref{eq:def_F_G}, one deduces that
        \[ 0 = 2 \widetilde{h}^{ik} \widetilde{D}_{[j} \widetilde{K}_{k]i} = |\Psi| \varepsilon \left( \varepsilon h^{ik} G_{jki}(K,\kappa) - 2T_j \right) \,.  \]
        Hence the result. Note that by combining \eqref{CVC_def_M} and \eqref{eq:CVC_d}, one has now
        \begin{equation}
            \label{eq:aux_G}
            G_{ijk}(K,\kappa) = \Psi M_{ijk} - 2\varepsilon h_{k[i} T_{j]} \,. 
        \end{equation}
        
        \item Equation \eqref{eq:CVC_g}: from \eqref{CVC_def_T} and \eqref{CVC_def_Q}, one has
        \begin{align*}
            \widehat{D}_i T_j &= \Psi^{-1} \Big( -(d\Psi)_i T_j - \widehat{D}_i \widehat{D}_j z - \varepsilon \left(\widehat{D}_i K_{jk} + \kappa_i K_{jk} \right) h^{kl} (\widehat{\zeta}_h)_l + \varepsilon K_j{}^l \left(-\widehat{D}_i (\widehat{\zeta}_h)_l - \kappa_i  (\widehat{\zeta}_h)_l \right) \Big) \\
            &= \Psi^{-1} \Big( -(d\Psi)_i T_j - \widehat{D}_i \widehat{D}_j z - \varepsilon \left(\widehat{D}_i K_{jk} + \kappa_i K_{jk} \right) h^{kl} (\widehat{\zeta}_h)_l + \varepsilon K_j{}^l \left(\Psi Q_{il} -z K_{il} - \widehat{\sigma} h_{il} \right) \Big) \,.
        \end{align*}
        Hence
        \[ 2 \widehat{D}_{[i} T_{j]} = \Psi^{-1} \left( -2(d\Psi)_{[i} T_{j]} - \varepsilon G_{ijk} h^{kl} (\widehat{\zeta}_h)_l - 2\varepsilon \Psi K^l{}_{[i} Q_{j]l} \right) \,. \]
        By injecting \eqref{eq:aux_G}, one has
        \begin{align*}
            2 \widehat{D}_{[i} T_{j]} &= \Psi^{-1} \left( -2(d\Psi)_{[i} T_{j]} - \varepsilon \Psi M_{ijk} h^{kl} (\widehat{\zeta}_h)_l + 2 (\widehat{\zeta}_h)_{[i} T_{j]} - 2\varepsilon \Psi K^l{}_{[i} Q_{j]l} \right) \\
            &= \Psi^{-1} \left(- \varepsilon \Psi M_{ijk} h^{kl} (\widehat{\zeta}_h)_l + 2 \Psi \kappa_{[i} T_{j]} - 2\varepsilon \Psi K^l{}_{[i} Q_{j]l} \right) \\
            &= - \varepsilon M_{ijk} h^{kl} (\widehat{\zeta}_h)_l + 2\kappa_{[i} T_{j]} - 2\varepsilon K^l{}_{[i} Q_{j]l} \,.
        \end{align*}
        
        \item Equation \eqref{eq:CVC_f}: from \eqref{CVC_def_Q}, one has
        \begin{align*}
            \widehat{D}_i Q_{jk} &= \Psi^{-1} \Big( - (d\Psi)_i Q_{jk} - \widehat{D}_i \widehat{D}_j (\widehat{\zeta}_h)_k - (\widehat{\zeta}_h)_k \widehat{D}_i \kappa_j - \kappa_j \widehat{D}_i (\widehat{\zeta}_h)_k \\
            &\qquad\quad + (dz)_i K_{jk} + z \widehat{D}_i K_{jk} + \left((d\widehat{\sigma})_i - 2\kappa_i \widehat{\sigma}\right) h_{jk} \Big) \,.
        \end{align*}
        By using \eqref{CVC_def_Q}, \eqref{CVC_def_T} and \eqref{eq:CVC_j}, one deduces that
        \begin{align*}
            \widehat{D}_i Q_{jk} &= \Psi^{-1} \Big( - (d\Psi)_i Q_{jk} - \widehat{D}_i \widehat{D}_j (\widehat{\zeta}_h)_k - (\widehat{\zeta}_h)_k \widehat{D}_i \kappa_j \\
            &\qquad\quad + \kappa_j \left( \kappa_i (\widehat{\zeta}_h)_k - z K_{ik} - \widehat{\sigma} h_{ik} + \Psi Q_{ik} \right) - \Psi T_i K_{jk} \\
            &\qquad\quad - \varepsilon K_i{}^l (\widehat{\zeta}_h)_l K_{jk} + z \widehat{D}_i K_{jk} - \widehat{\sigma} \kappa_i h_{jk} -\varepsilon z T_i h_{jk} - h_{jk} Q_{il} h^{lm} (\widehat{\zeta}_h)_m \Big) \,.
        \end{align*}
        Hence
        \begin{align*}
            2 \widehat{D}_{[i} Q_{j]k} &= \Psi^{-1} \Big( - 2(d\Psi)_{[i} Q_{j]k} - 2 \widehat{D}_{[i} \widehat{D}_{j]} (\widehat{\zeta}_h)_k - 2(\widehat{\zeta}_h)_k \widehat{D}_{[i} \kappa_{j]} \\
            &\qquad\quad + 2z \kappa_{[i} K_{j]k} - 2 \Psi \kappa_{[i} Q_{j]k} - 2 \Psi T_{[i} K_{j]k} - 2\varepsilon K_{[i}{}^l K_{j]k} (\widehat{\zeta}_h)_l \\
            &\qquad\quad + 2z \widehat{D}_{[i} K_{j]k} - 2\varepsilon z T_{[i} h_{j]k} + 2h_{k[i} Q_{j]l} h^{lm} (\widehat{\zeta}_h)_m \Big) \\
            &= \Psi^{-1} \Big( - 2 \widehat{D}_{[i} \widehat{D}_{j]} (\widehat{\zeta}_h)_k - 2(\widehat{\zeta}_h)_{[i} Q_{j]k} - 2(\widehat{\zeta}_h)_k \widehat{D}_{[i} \kappa_{j]} \\
            &\qquad\quad + 2h_{k[i} Q_{j]l} h^{lm} (\widehat{\zeta}_h)_m - 2\varepsilon K_{[i}{}^l K_{j]k} (\widehat{\zeta}_h)_l - 2\Psi T_{[i} K_{j]k} \\
            &\qquad\quad + z \left( 2\widehat{D}_{[i} K_{j]k} + 2\kappa_{[i} K_{j]k} + 2\varepsilon h_{k[i} T_{j]} \right) \Big) \,.
        \end{align*}
        With \eqref{CVC_def_Q} and \eqref{eq:aux_G}, it follows that
        \begin{align*}
            2 \widehat{D}_{[i} Q_{j]k} &= \Psi^{-1} \left( -2 \widehat{D}_{[i} \widehat{D}_{j]} (\widehat{\zeta}_h)_k - 2 S_{k[i}{}^{lm} Q_{j]m} (\widehat{\zeta}_h)_l - 2\varepsilon K_{[i}{}^l K_{j]k} (\widehat{\zeta}_h)_l  \right) + 2 K_{k[i} T_{j]} + zM_{ijk} \,.
        \end{align*}
        One has
        \[ -2 \widehat{D}_{[i} \widehat{D}_{j]} (\widehat{\zeta}_h)_k = \widehat{r}^l{}_{kij} (\widehat{\zeta}_h)_l = 2S_{k[i}{}^{lm} \widehat{l}_{j]m} (\widehat{\zeta}_h)_l \,, \]
        and, since Weyl candidates vanish identically in dimension $3$ (see also \eqref{eq:codazzi_P}),
        \[ 2 K_{[i}{}^l K_{j]k} = 2 S_{k[i}{}^{lm} F_{ j]m}(K) \,. \]
        One concludes with \eqref{CVC_def_E}.
        
        \item Equation \eqref{eq:CVC_a}: from \eqref{eq:aux_G}, one has
        \[ \widehat{D}_i M_{jk}{}^i = \Psi^{-1} \left( -(d\Psi)_i M_{jk}{}^i + \widehat{D}_i G_{jk}{}^i(K,\kappa) + 2\varepsilon \widehat{D}_{[j} T_{k]} \right) \,. \]
        By definition of $G_{ijk}(K,\kappa)$, see \eqref{eq:def_F_G} and using \eqref{eq:CVC_d}, one has
        \begin{align*}
            \widehat{D}_i G_{jk}{}^i(K,\kappa) &= \widehat{D}_i \left( h^{il} \left( 2\widehat{D}_{[j} K_{k]l} + 2 \kappa_{[j} K_{k]l} \right) \right) \\
            &= \widehat{D}_i \left( 2\widehat{D}_{[j} K_{k]}{}^i - 2 \kappa_{[j} K_{k]}{}^i \right) \\
            &= \widehat{D}_i \widehat{D}_j K_k{}^i - \widehat{D}_i \widehat{D}_k K_j{}^i - \widehat{D}_i \left( 2 \kappa_{[j} K_{k]}{}^i \right) \\
            &= \widehat{D}_j \widehat{D}_i K_k{}^i - \widehat{r}^l{}_{kij} K_l{}^i + \widehat{r}^i{}_{lij} K_k{}^l - \widehat{D}_k \widehat{D}_i K_j{}^i + \widehat{r}^l{}_{jik} K_l{}^i - \widehat{r}^i{}_{lik} K_j{}^l - \widehat{D}_i \left( 2 \kappa_{[j} K_{k]}{}^i \right) \\
            &= \widehat{D}_j \widehat{D}_i K_k{}^i - \widehat{D}_k \widehat{D}_i K_j{}^i + \widehat{r}^l{}_{ijk} K_l{}^i + 2 \widehat{r}^i{}_{li[j} K_{k]}{}^l - \widehat{D}_i \left( 2 \kappa_{[j} K_{k]}{}^i \right) \\
            &= \widehat{D}_j \left( \widehat{D}_i K_k{}^i - \kappa_i K_k{}^i - \widehat{D}_k K + \kappa_k K \right) + \widehat{D}_j \left( \kappa_i K_k{}^i - \kappa_k K \right) \\
            &\quad  - \widehat{D}_k \left( \widehat{D}_i K_j{}^i - \kappa_i K_j{}^i - \widehat{D}_j K + \kappa_j K \right) - \widehat{D}_k \left( \kappa_i K_j{}^i - \kappa_j K \right) \\
            &\quad + \widehat{r}^l{}_{ijk} K_l{}^i + 2 \widehat{r}^i{}_{li[j} K_{k]}{}^l - \widehat{D}_i \left( 2 \kappa_{[j} K_{k]}{}^i \right) \\
            &= - 2\widehat{D}_{[j} \left( h^{il} G_{k]il}(K,\kappa) \right) + \widehat{r}_{(il)jk} K^{il} + 2 \widehat{r}^i{}_{li[j} K_{k]}{}^l + \kappa_i \left( 2\widehat{D}_{[j} K_{k]}{}^i - 2 \kappa_{[j} K_{k]}{}^i \right)  \\
            &\quad - 2 K^i{}_{[j} \widehat{D}_{k]} \kappa_i + 2 K^i{}_{[j} \widehat{D}_{|i|} \kappa_{k]} - 2K \widehat{D}_{[j} \kappa_{k]} - 2 \kappa_{[j} \left( \widehat{D}_{|i|} K_{k]}{}^i - K_{k]}{}^i \kappa_i -  \widehat{D}_{k]} K \right) \\
            &= -4\varepsilon \widehat{D}_{[j} T_{k]} + \widehat{r}_{(il)jk} K^{il} + 2 \widehat{r}^i{}_{li[j} K_{k]}{}^l + \kappa_i G_{jk}{}^i(K,\kappa) \\
            &\quad - 2 K^i{}_{[j} \widehat{D}_{k]} \kappa_i + 2 K^i{}_{[j} \widehat{D}_{|i|} \kappa_{k]} - 2K \widehat{D}_{[j} \kappa_{k]} + 4\varepsilon \kappa_{[j} T_{k]} \,.
        \end{align*}
        One deduces from \eqref{eq:symetric_part_Riemann} and \eqref{eq:antisymetric_part_ricci} that
        \[ \widehat{r}_{(il)jk} K^{il} = 2 g_{il} K^{il} \widehat{D}_{[j} \kappa_{k]}  = 2 K \widehat{D}_{[j} \kappa_{k]} \,, \]
        and from \eqref{eq:decomposition_Riemann_Weyl} and \eqref{eq:antisym_part_schouten} that
        \[ \widehat{r}^i{}_{lij} = 2 S_{l[i}{}^{im} \widehat{l}_{j]m} = \widehat{l}_{jl} - \widehat{D}_{[j} \kappa_{l]} + g_{jl} \widehat{l}_i{}^i \,. \]
        It follows that
        \[ \widehat{D}_i G_{jk}{}^i(K,\kappa) = -4\varepsilon \widehat{D}_{[j} T_{k]} + \kappa_i G_{jk}{}^i(K,\kappa) - 2 K^i{}_{[j} \widehat{l}_{k]i} + 4\varepsilon \kappa_{[j} T_{k]} \,, \]
        and thus
        \[ \widehat{D}_i M_{jk}{}^i = \Psi^{-1} \left( -(d\Psi)_i M_{jk}{}^i + \kappa_i G_{jk}{}^i(K,\kappa) - 2 K^i{}_{[j} \widehat{l}_{k]i} - 2\varepsilon \widehat{D}_{[j} T_{k]} + 4\varepsilon \kappa_{[j} T_{k]} \right) \,. \]
        Using \eqref{eq:aux_G} and \eqref{eq:CVC_g}, one obtains
        \begin{align*}
            \widehat{D}_i M_{jk}{}^i &= \Psi^{-1} \left( -(d\Psi)_i M_{jk}{}^i + \Psi \kappa_i M_{jk}{}^i - 2 K^i{}_{[j} \widehat{l}_{k]i} - \varepsilon \left( 2\widehat{D}_{[j} T_{k]} - 2 \kappa_{[j} T_{k]} \right) \right) \\
            &= \Psi^{-1} \left( -(d\Psi)_i M_{jk}{}^i + \Psi \kappa_i M_{jk}{}^i - 2 K^i{}_{[j} \widehat{l}_{k]i} + K^i{}_{[j} Q_{k]i} + M_{jk}{}^i (\widehat{\zeta}_h)_i \right) \\
            &= 2 \kappa_i M_{jk}{}^i + \Psi^{-1} 2 K^i{}_{[j} \left( Q_{k]i} - \widehat{l}_{k]i} \right) \,.
        \end{align*}
        From the definition of $F_{ij}(K)$, see \eqref{eq:def_F_G}, one deduces that $K^i{}_{[j} F_{k]i}(K) = 0$. One concludes using \eqref{CVC_def_E}.
        
        \item Equation \eqref{eq:CVC_b}: from \eqref{CVC_def_E} one has
        \begin{align*}
            \widehat{D}_i E_j{}^i &= h^{ik} \widehat{D}_i E_{jk} + 2\kappa_i E_j{}^i \\
            &= \Psi^{-1} h^{ik} \left( -(d\Psi)_i E_{jk} + \widehat{D}_i Q_{jk} - \widehat{D}_i \widehat{l}_{jk} + \varepsilon \widehat{D}_i F_{jk}(K) \right) + 2\kappa_i E_j{}^i \\
            &= \Psi^{-1} h^{ik} \left( -(d\Psi)_i E_{jk} + 2\widehat{D}_{[i} Q_{j]k} - \widehat{c}_{ijk} + \widehat{D}_j \left( \Psi E_{ik} - \varepsilon F_{ik}(K) \right) + \varepsilon \widehat{D}_i F_{jk}(K) \right) + 2\kappa_i E_j{}^i \\
            &= \Psi^{-1} h^{ik} \left( -(d\Psi)_i E_{jk} + 2\widehat{D}_{[i} Q_{j]k} - \widehat{c}_{ijk} + 2\varepsilon \widehat{D}_{[i} F_{j]k}(K) \right) + 2\kappa_i E_j{}^i \,.
        \end{align*}
        Using \eqref{eq:CVC_f} and  \eqref{Bianchi2_Weyl}, one has
        \begin{align*}
            h^{ik} \left( 2\widehat{D}_{[i} Q_{j]k} - \widehat{c}_{ijk} \right) &= h^{ik} \left( 2 K_{k[i} T_{j]} + z M_{ijk} - 2 (\widehat{\zeta}_h)_l S_{k[i}{}^{lm} E_{j]m} \right) \\
            &= 2 h^{ik} K_{k[i} T_{j]} + (\widehat{\zeta}_h)_i E_j{}^i \,.
        \end{align*}
        Thus, it remains to prove that
        \[ K^{ik} M_{jki} = \Psi^{-1} h^{ik} \left( 2\varepsilon K_{k[i} T_{j]} + 2 \widehat{D}_{[i} F_{j]k}(K) \right) \,. \]
        With \eqref{eq:aux_G}, this is equivalent to
        \[ K^{ik} G_{jki}(K,\kappa) + 4\varepsilon K^k{}_{[j} T_{k]} = 2 h^{ik} \widehat{D}_{[i} F_{j]k}(K) \,. \]
        To prove the above equality, define
        \[ P_{ij} := K K_{ij} - K_{ik} K^k{}_j \,, \qquad P := h^{ij} P_{ij} = K^2 - K_{ij} K^{ij} \,. \]
        By definition of $F_{ij}(K)$, see \eqref{eq:def_F_G}, one has $F_{ij}(K) = P_{ij} - P h_{ij}/4$. Thus
        \begin{align*}
            2 h^{ik} \widehat{D}_{[i} F_{j]k}(K) &= \widehat{D}_i F^i{}_j(K) - 2\kappa_i F^i{}_j(K) - \widehat{D}_j F^i{}_i(K) + 2\kappa_j F^i{}_i(K) \\
            &= \widehat{D}_i P^i{}_j - 2\kappa_i P^i{}_j - \frac{1}{2} \left( \widehat{D}_j P - 2\kappa_j P \right) \\
            &= \left( K \widehat{D}_i K^i{}_j + K_j{}^i \widehat{D}_i K - K^{ik} \widehat{D}_i K_{kj} - K_j{}^k \widehat{D}_i K^i{}_k - 2\kappa_i K_{kj} K^{ki} \right) \\
            &\quad - 2\kappa_i \left( K K^i{}_j - K^{ik} K_{kj} \right) - \left( K \widehat{D}_j K - K^{ik} \widehat{D}_j K_{ik} - 2 \kappa_j K^{ik} K_{ik} \right) \\
            &\quad + \kappa_j \left( K^2 - K_{ik} K^{ik} \right) \\
            &= \left(K_j{}^i - K \delta_j{}^i\right) \left( \widehat{D}_i K -\widehat{D}_k K^k{}_i \right) + 2 K^{ik} \widehat{D}_{[j} K_{i]k} \\
            &\quad - 2 K \kappa_i K^i{}_j + \kappa_j K^{ik}K_{ik} + \kappa_j K^2   \\
            &= \left(K_j{}^i - K \delta_j{}^i\right) \left( \widehat{D}_i K -\widehat{D}_k K^k{}_i \right) + K^{ik} G_{jik}(K,\kappa) \\
            &\quad - 2 K \kappa_i K^i{}_j + \kappa_i K^{ik} K_{jk} + \kappa_j K^2   \\
            &= \left(K_j{}^i - K \delta_j{}^i\right) \left( \widehat{D}_i K - \kappa_i K - \widehat{D}_k K^k{}_i + \kappa_k K^k{}_i \right) + K^{ik} G_{jik}(K,\kappa) \\
            &= \left(K_j{}^i - K \delta_j{}^i\right) h^{kl} G_{ikl}(K,\kappa) + K^{ik} G_{jik}(K,\kappa) \\
            &= 2\varepsilon \left(K_j{}^i - K \delta_j{}^i\right) T_i + K^{ik} G_{jik}(K,\kappa) \\
            &= 4\varepsilon K^i{}_{[j} T_{i]} + K^{ik} G_{jik}(K,\kappa) \,,
        \end{align*}
        where \eqref{eq:CVC_d} has been used to obtain the penultimate equality.
    \end{itemize} 

    \bigbreak

    $ii)$ Let $(\mathcal{S},h,E,M,\Psi,\kappa,K,z,T,Q) \in \mathscr{D}$ be a solution to the (CVC) and let $\mathcal{U}$ be a set where $\Psi$ nowhere vanishes. The 10-tuple $(\mathcal{U},h,E,M,\Psi,\kappa,K,z,T,Q) \in \mathscr{D}$ is a fortiori solution to the (CVC). Let use the gauge freedoms of \Cref{def:transfo_dim3} to modify it. By applying first $\Weylchg_\chi^\perp$ with $\chi := -(z/\Psi)|_\mathcal{U}$, then $\Weylchg_\omega^\parallel$ with $\omega := -(\kappa+d\ln|\Psi|)|_\mathcal{U}$ and finally $\Conf_\phi$ with the conformal factor $\Omega := |\Psi|^{-1}|_\mathcal{U}$, one obtains a gauge-equivalent 10-tuple $(\mathcal{U},\widetilde{h},\widetilde{E},\widetilde{M},\widetilde{\Psi},\widetilde{\kappa},\widetilde{K},\widetilde{z},\widetilde{T},\widetilde{Q}) \in \mathscr{D}$, thus also solution to the (CVC), such that $\widetilde{h} = \Psi^{-2} h|_\mathcal{U}$, $\widetilde{\Psi} = \pm 1$, $\widetilde{\kappa} = 0$, $\widetilde{K} = |\Psi|^{-1}(K+\varepsilon z \Psi^{-1} h)|_\mathcal{U}$ and $\widetilde{z}=0$. One deduces that $\widetilde{T} = 0$ from \eqref{eq:CVC_i}. Finally, equations \eqref{eq:CVC_k} and \eqref{eq:CVC_d} resume to the (VC) on the triple $(\mathcal{U},\widetilde{h},\widetilde{K})$.
\end{proof}

\subsubsection{Application to 4-dimensional aAdS spaces}
\label{sec:application_to_4dim_aAdS_spaces}

We would like to end this presentation of the general framework of the (CVE) by emphasizing its application for 4-dimensional aAdS spaces and the links between the conformal point of view and the FG-type asymptotic expansions near the conformal boundary described in \Cref{sec:expansion_conf_bound}.

A straightforward extension of \Cref{prop:CVE_VE} gives the following proposition.

\begin{prop} \,
    \label{prop:aAdS_VE}
    \begin{itemize}
        \item[i)] Let $(\M,\mathfrak{I},\widetilde{g})$ be a 4-dimensional aAdS space, in the sense of \Cref{def_aAdS}, with smooth rescaled metrics. If $(\M\setminus\mathfrak{I},\widetilde{g})$ is solution to the \eqref{eq:VE} -- for $\Lambda = -3$ necessarily -- then for any boundary defining function $x$ of $\mathfrak{I}$ and any smooth covector field $\kappa $ on $\M$, the rescaled Weyl tensor $x^{-1}\widetilde{W}$ extends smoothly to $\mathfrak{I}$ and the quintuple $(\M,g:=x^2\widetilde{g},x^{-1}W,x,\kappa)\in\mathscr{E}$ is solution to the \eqref{eq:CVE} with $\Lambda=-3$.

        Moreover, the conformal boundary $\mathscr{I}$, as defined in \Cref{def:conf_bound_VCE}, coincide with the conformal boundary $\mathfrak{I}$ of the aAdS space, hence the common name.
        
        \item[ii)] Conversely, let $(\M,g,V,\Theta,\kappa)\in \mathscr{E}$ be a quintuple with non-empty conformal boundary $\mathscr{I}$. If it is solution to the \eqref{eq:CVE} for $\Lambda = -3$ then for any closure $\mathcal{U}$ of a connected component of $\M\setminus\mathscr{I}$, the triple $(\mathcal{U},\mathfrak{I} := \mathscr{I}\cap\mathcal{U},\widetilde{g}:=|\Theta|^{-2} g)$ is a 4-dimensional aAdS space with smooth rescaled metrics and $(\mathcal{U}\setminus\mathfrak{I},\widetilde{g})$ is solution to the \eqref{eq:VE} -- for $\Lambda = -3$ necessarily.
    \end{itemize}
\end{prop}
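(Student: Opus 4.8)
The plan is to reduce everything to \Cref{prop:CVE_VE}, applied on the region where $\Theta$ does not vanish, and then to control the behaviour across the conformal boundary by smoothness and continuity arguments.

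For $i)$, I would first note that on $\M\setminus\mathfrak{I}$ the boundary defining function $x$ is strictly positive, so setting $\Theta:=x$ and applying \Cref{prop:CVE_VE}~$i)$ shows that $(\M\setminus\mathfrak{I},g:=x^2\widetilde{g},x^{-1}\widetilde{W},x,\kappa)$ solves the \eqref{eq:CVE}; by conformal invariance of the $(1,3)$--Weyl tensor \eqref{transfo_weyl} one has $\widetilde{W}=W$, the Weyl tensor of $g$, so the field is indeed $V=x^{-1}W$. The value $\Lambda=-3$ is forced: since $\widetilde{g}$ is exactly Einstein and $(\M,\mathfrak{I},\widetilde{g})$ is aAdS, comparing with the normalisation \eqref{approximate_solution} gives $2\Lambda/(n-1)=-n$, i.e. $\Lambda=-3$ for $n=3$. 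The crux is then to show $V=x^{-1}W$ extends smoothly up to $\mathfrak{I}$. Since two boundary defining functions differ by a smooth positive factor (\Cref{lemma:bdf}) and $W$ is conformally invariant, I may assume $x$ is adapted to a Fefferman--Graham gauge; the curvature expansions of \Cref{prop:expansion_curvature_tensors} (valid for the odd dimension $n=3$), together with the vanishing of the three--dimensional Weyl tensor $\mathfrak{w}$, show that every frame component of $W$ is $\grando{x}$, equivalently that the $P$--part of the decomposition in \Cref{lem:decomposition_Weyl_cand} vanishes while $E,M=\grando{x}$. Hence $W$ vanishes on $\mathfrak{I}$ and \Cref{lem:dividing_bdf} yields a smooth extension of $x^{-1}W$. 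With $V$ smooth the quintuple lies in $\mathscr{E}$, and each tensorial equation of the \eqref{eq:CVE} holds on the dense open set $\M\setminus\mathfrak{I}$ between continuous fields, hence on all of $\M$ by continuity. Finally $\mathscr{I}=\{\Theta=0\}=\{x=0\}=\mathfrak{I}$ by \Cref{def:conf_bound_VCE} and condition $ii)$ of the definition of a boundary defining function, proving the coincidence of the two notions of conformal boundary.

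For $ii)$, I would start from \Cref{lem:pseudonorm_dTheta}: with $\Lambda=-3$ one has $g^{-1}(d\Theta,d\Theta)=1$ on $\mathscr{I}$, so $d\Theta\neq0$ there and $\mathscr{I}$ is a smooth two--sided timelike hypersurface across which $\Theta$ changes sign. Consequently the closure $\mathcal{U}$ of a connected component of $\M\setminus\mathscr{I}$ is a smooth manifold with corners whose new boundary hypersurface is $\mathfrak{I}:=\mathscr{I}\cap\mathcal{U}$. On $\mathcal{U}\setminus\mathfrak{I}$ the function $\Theta$ is nowhere zero, so \Cref{prop:CVE_VE}~$ii)$ gives that $\widetilde{g}:=|\Theta|^{-2}g$ solves the \eqref{eq:VE}, necessarily with $\Lambda=-3$. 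It remains to exhibit a boundary defining function realising the aAdS structure, and the natural choice is $x:=|\Theta|$. On $\mathcal{U}$ the function $\Theta$ has constant sign, so $x$ equals $\pm\Theta$ there and is therefore smooth up to $\mathfrak{I}$; it is nonnegative, vanishes exactly on $\mathfrak{I}$, and is transverse to $\mathfrak{I}$ since $d\Theta\neq0$, hence a genuine boundary defining function. With this choice the rescaled metric is $x^2\widetilde{g}=g$, smooth on $\mathcal{U}$, establishing condition $i)$ of \Cref{def_aAdS} (indeed with $\mathcal{C}^\infty$ regularity), while $(x^2\widetilde{g})^{-1}(dx,dx)=g^{-1}(d\Theta,d\Theta)=1$ on $\mathfrak{I}$ by \Cref{lem:pseudonorm_dTheta} gives condition $ii)$. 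Thus $(\mathcal{U},\mathfrak{I},\widetilde{g})$ is a $4$--dimensional aAdS space with smooth rescaled metrics, completing the proof.

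I expect the main obstacle to be the smooth extension of $x^{-1}W$ up to $\mathfrak{I}$ in part $i)$: this is exactly the point where the spacetime dimension $4$ (equivalently $n=3$) is essential, since it is the vanishing of the boundary three--dimensional Weyl tensor that forces all components of $W$ to be $\grando{x}$. Everything else is a routine combination of \Cref{prop:CVE_VE}, a density/continuity argument, and the elementary division \Cref{lem:dividing_bdf}.
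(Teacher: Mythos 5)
Your proof is correct and follows essentially the same route as the paper: part $i)$ reduces to \Cref{prop:CVE_VE}~$i)$ away from $\mathfrak{I}$, obtains the smooth extension of $x^{-1}W$ from the curvature expansions of \Cref{prop:expansion_curvature_tensors} together with \Cref{lem:dividing_bdf}, and extends the equations by continuity, while part $ii)$ combines \Cref{prop:CVE_VE}~$ii)$ with \Cref{lem:pseudonorm_dTheta} to show $|\Theta|$ is a boundary defining function realising the aAdS structure. The extra detail you supply (the vanishing of the three-dimensional Weyl tensor $\mathfrak{w}$ forcing $W=\grando{x}$, and the reduction to a FG gauge via conformal invariance of $W$) is exactly what the paper's citation of those lemmas implicitly relies on.
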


\begin{proof} \,
    \begin{itemize}
        \item[i)] By $i)$ of \Cref{prop:CVE_VE}, the quintuplet $(\M\setminus\mathfrak{I},g,x^{-1}W,x,\kappa)$ is solution to the (CVE) with $\Lambda=-3$. The rescaled metric $g := x^2\widetilde{g}$ extends smoothly to $\mathfrak{I}$ by assumption. The rescaled Weyl tensor $x^{-1}W$ extends smoothly on $\mathfrak{I}$ thanks to \Cref{lem:dividing_bdf} and the asymptotic expansions of \Cref{prop:expansion_curvature_tensors} (with $n=3$). Hence the quintuplet $(\M,g,x^{-1}W,x,\kappa)$ is in $\mathscr{E}$ and, by a continuity argument, is solution to the (CVE) with $\Lambda=-3$. Since $x$ is a boundary defining function of $\mathfrak{I}$, one has $\mathscr{I}=\mathfrak{I}$.
        
        \item[ii)] By $ii)$ of \Cref{prop:CVE_VE}, $(\mathcal{U}\setminus\mathfrak{I},\widetilde{g})$ is solution to the (VE) for $\Lambda = -3$. By definition of $\mathcal{U}$, $|\Theta|$ is positive on $\mathcal{U}$. With \Cref{lem:pseudonorm_dTheta}, one deduces that $|\Theta|$ is a boundary defining function of $\mathscr{I}\cap\mathcal{U}$ and that $(\mathcal{U},\mathscr{I}\cap\mathcal{U},|\Theta|^{-2}g)$ is aAdS. \qedhere
    \end{itemize}
\end{proof}

Note that the umbilical character (see \Cref{def:tllygeod_umbilical}) of connected components of the conformal boundary has been proven by two different methods in dimension 4. First, \Cref{cor:I} states that the extrinsic curvature of a connected component of the conformal boundary for the Weyl connection $\widehat{\nabla}$ is given by
\begin{equation}
    \tag{\ref{eq:extrinsic_curvature_conf_boundary}}
    (\widehat{\mathfrak{K}}_g)_{ij} = -\frac{\widehat{s}_g |_\mathfrak{S}}{\mathfrak{z}} \mathfrak{h}_{ij} \,.
\end{equation}
This was obtained using the framework of the (CVE), which is only applicable in dimension 4. Secondly, the connected components of the conformal boundary are totally geodesic in all dimensions for the Levi-Civita connection $\nabla$ of a rescaled metric $g = x^2 \widetilde{g}$ with $x$ a boundary defining function verifying \eqref{eq:FG_bdf}, as has been deduced from the FG expansion in \Cref{sec:FG_expansion}. In particular, this implies that the connected components are umbilical by \Cref{lem:link_umbilical_totally_geodesic}.

A comparison of the two results indicates that the Friedrich scalar $s_g$, associated to a rescaled metric $g=x^2\widetilde{g}$ where $x$ is a boundary defining function verifying \eqref{eq:FG_bdf}, vanishes on the conformal boundary in dimension 4. We propose to prove this result more simply and in fact for all dimensions using the framework of \Cref{sec:aAdS_spaces}. Let $(\M,\mathfrak{I},\widetilde{g})$ be an aAdS space of dimension $n+1$. The transformation law of the scalar curvature under conformal rescalings implies that for any boundary defining function $x$ of $\mathfrak{I}$, one has
\begin{equation}
    \label{eq:scalar_curvatures}
    R - \frac{\widetilde{R}}{x^2} = n \left( -2 \frac{\Box_g x}{x} + (n+1) \frac{g^{-1}(dx,dx)}{x^2} \right) \,,
\end{equation}
where $g := x^2\widetilde{g}$ is the associated rescaled metric, $R$ is its scalar curvature and $\widetilde{R}$ is the scalar curvature of $\widetilde{g}$. Define the Friedrich scalar of $g$ by
\[ s_g := \frac{1}{n+1} \left( \Box_g x + Lx \right) \,, \]
where $L = R/2n$ is the trace of the Schouten tensor of the Levi-Civita connection $\nabla$ of $g$. From \eqref{eq:scalar_curvatures}, one deduces that
\[ s_g = \frac{1}{2x} \left( g^{-1}(dx,dx) + \frac{\widetilde{R}}{n(n+1)} \right) \,. \]
In particular, if $(\M\setminus\mathfrak{I},\widetilde{g})$ is solution to the (VE) for the normalised negative cosmological constant $\Lambda=-n(n-1)/2$ then
\[ s_g = \frac{g^{-1}(dx,dx)-1}{2x} \,. \]
Hence, $s_g$ vanishes on the conformal boundary if and only if $g^{-1}(dx,dx) = 1+\grando{x^2}$, which is a weaker condition than \eqref{eq:FG_bdf}.

\bigbreak

Finally, let us state the equivalent statement of \Cref{prop:aAdS_VE} for asymptotically hyperbolic spaces. Contrary to the aAdS case, this is not a direct application of \Cref{prop:CVC_VC}. Its proof actually relies on a new theorem which will be stated later in the article, see \Cref{thm:smooth_unphysical_fields}.

\bigbreak

\begin{prop} \,
    \label{prop:aH_VC}
    \begin{itemize}
        \item[i)] Let $(\mathcal{S},\slashed{\mathfrak{I}},\widetilde{h})$ be a 3-dimensional aH space, in the sense of \Cref{def:aH}, with smooth rescaled metrics and let $\widetilde{K}_{ij}$ be a smooth symmetric 2-tensor field on $\mathcal{S}\setminus\slashed{\mathfrak{I}}$. If $(\mathcal{S}\setminus\slashed{\mathfrak{I}},\widetilde{h},\widetilde{K})$ is solution to the \eqref{eq:VC} for $\Lambda = -3$ and if, for any boundary defining function $\Psi$ of $\slashed{\mathfrak{I}}$,
        \begin{itemize}
            \item[a)] the field $P_{ij} := \Psi \widetilde{K}_{ij}$ extends smoothly to $\slashed{\mathfrak{I}}$ and verifies
            \begin{equation}
                \tag{\ref{eq:normal_derivative_Pdagger}}
                \left( D_\perp P\right)^\dagger_{AB} = s_h P_{AB}^\dagger \quad \text{on } \slashed{\mathfrak{I}} \,,
            \end{equation}
        
        \item[b)] all connected components $\mathfrak{S}$ of $\slashed{\mathfrak{I}}$ are umbilical in $(\mathcal{S},[h])$ with
        \begin{equation}
            \tag{\ref{eq:umbilical_condition}}
            \slashed{\mathfrak{K}}_{AB} = s_h \;  \slashed{\mathfrak{h}}_{AB} \,,
        \end{equation}
    \end{itemize}
    where
    \begin{itemize}
        \item[$\bullet$] $h := \Psi^2 \widetilde{h}$ is the rescaled metric associated to $\Psi$, $D$ is the Levi-Civita connection of $h$ and $s_h$ is the non-extended 3-dimensional Friedrich scalar field of $h$ defined by \eqref{eq:def_Friedrich_scalar_dim3_nonextended},
        \item[$\bullet$] $\slashed{\mathfrak{h}}$ is the metric induced by $h$ on $\mathfrak{S}$, $\slashed{\mathfrak{K}}$ is the extrinsic curvature of $D$ on $\mathfrak{S}$,
        \item[$\bullet$] $\perp$ denotes the outward-pointing normal of $\slashed{\mathfrak{I}}$ with respect to $h$ while $A,B$ are tangential indices,
        \item[$\bullet$] $X^\dagger_{AB}$ is the trace-free part of the restriction of a 2-tensor $X_{ij}$ on $\slashed{\mathfrak{I}}$,
        \end{itemize}
        then for any boundary defining function $\Psi$ of $\slashed{\mathfrak{I}}$, any smooth covector field $\kappa$ on $\mathcal{S}$ and any smooth scalar field $z$ on $\mathcal{S}$ which vanishes on $\slashed{\mathfrak{I}}$,
        \begin{itemize}
            \item[\ding{70}] the fields $K_{ij}$, $Q_{ij}$, $E_{ij}$, $T_i$, $M_{ijk}$ defined by \eqref{CVC_def} extend smoothly to the conformal boundary $\slashed{\mathfrak{I}}$,
            \item[\ding{70}] the 10-tuple $(\mathcal{S},h,E,M,\Psi,\kappa,K,z,T,Q) \in \mathscr{D}$ is a Riemannian solution of the \eqref{eq:CVC} for $\Lambda=-3$ and verifies the geometric conditions
            \begin{subequations}
                \begin{align}
                \tag{\ref{eq:z_vanish}}
                z &= 0 \quad \text{on } \slashed{\mathscr{I}} \,, \\
                \tag{\ref{eq:K_vanish}}
                K_{\perp i} + \varepsilon \frac{z}{\Psi} h_{\perp i} &= 0 \quad \text{on } \slashed{\mathscr{I}} \,,
                \end{align}
            \end{subequations}
        \end{itemize}
        Moreover, the conformal boundary $\slashed{\mathscr{I}}$ of the 10-tuple coincide with the conformal boundary $\slashed{\mathfrak{I}}$ of the aH space, hence the common name.
        
        \item[ii)] Conversely, let $(\mathcal{S},h,E,M,\Psi,\kappa,K,z,T,Q) \in \mathscr{D}$ be a 10-tuple with non-empty conformal boundary $\slashed{\mathscr{I}}$. If it is a Riemannian solution to the \eqref{eq:CVC} for $\Lambda = -3$ and if it satisfies the geometric conditions \eqref{eq:gauge_independent_conditions_CVC} then for any closure $\mathcal{U}$ of a connected component of $\mathcal{S}\setminus\slashed{\mathfrak{I}}$, the triple $(\mathcal{U},\slashed{\mathfrak{I}} := \slashed{\mathscr{I}}\cap\mathcal{U},\widetilde{h}:=|\Psi|^{-2}h)$ is a 3-dimensional aH space with smooth rescaled metrics, $(\mathcal{U}\setminus\slashed{\mathfrak{I}},\widetilde{h}:=|\Psi|^{-2}h,\widetilde{K}:=|\Psi|^{-1}(K-z\Psi^{-1}h))$ is solution to the \eqref{eq:VC} for $\Lambda = -3$ and, for any boundary defining function $\Psi$ of $\slashed{\mathfrak{I}}$, the conditions $a)$ and $b)$ stated in $i)$ hold.
    \end{itemize}
\end{prop}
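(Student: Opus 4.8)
The statement \Cref{prop:aH_VC} is an equivalence (really a pair of converse implications) relating asymptotically hyperbolic solutions of the \eqref{eq:VC} to Riemannian solutions of the \eqref{eq:CVC} satisfying certain geometric boundary conditions, with the crucial added content that the unphysical fields extend smoothly up to the conformal boundary. My plan is to treat the two directions separately, leaning on \Cref{prop:CVC_VC} for the algebraic relation between the constraints and on the (as yet unstated) \Cref{thm:smooth_unphysical_fields} for the smoothness, which is where the real work hides.

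\textbf{Direction $ii)$ (CVC to VC).} This is the easier half and I would do it first. Given a solution $(\mathcal{S},h,E,M,\Psi,\kappa,K,z,T,Q)$ of the \eqref{eq:CVC} with the geometric conditions holding, part $ii)$ of \Cref{prop:CVC_VC} immediately gives that $(\mathcal{U},\widetilde{h},\widetilde{K})$ with $\widetilde{h} = |\Psi|^{-2}h$ and $\widetilde{K} = |\Psi|^{-1}(K + \varepsilon z\Psi^{-1}h)$ (with $\varepsilon=-1$ in the Riemannian case, so $\widetilde K = |\Psi|^{-1}(K - z\Psi^{-1}h)$) solves the \eqref{eq:VC} away from $\slashed{\mathfrak{I}}$. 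That $|\Psi|$ is a boundary defining function of $\slashed{\mathscr{I}}\cap\mathcal{U}$ and that $(\mathcal{U},\slashed{\mathfrak{I}},\widetilde{h})$ is aH follows exactly as in part $ii)$ of \Cref{prop:aAdS_VE}, using the analogue \eqref{eq:pseudonorm_dPsi} of \eqref{eq:pseudonorm_dTheta} to control $h^{-1}(d\Psi,d\Psi)$ on the boundary (here the geometric condition $z=0$ on $\slashed{\mathscr{I}}$ reduces \eqref{eq:pseudonorm_dPsi} to $h^{-1}(d\Psi,d\Psi) = -\Lambda/3 = 1$). The remaining task is to recover conditions $a)$ and $b)$ for $\widetilde{K}$: these are geometric statements about the trace-free normal derivative of $P = \Psi\widetilde K$ and about umbilicity of $\slashed{\mathfrak{I}}$, and I expect them to follow by reading off the appropriate components of the boundary constraints \eqref{eq:constraints_I} — in particular \eqref{eq:extrinsic_curvature_conf_boundary} gives umbilicity and $b)$ directly, while $a)$ should emerge from the equations governing $E$ and $Q$ restricted to $\slashed{\mathscr{I}}$ together with the definitions \eqref{CVC_def}.

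\textbf{Direction $i)$ (VC to CVC).} Here part $i)$ of \Cref{prop:CVC_VC} already tells us that, \emph{away from} $\slashed{\mathfrak{I}}$ and for any nowhere-vanishing $\Psi$, the fields defined by \eqref{CVC_def} yield a solution of the \eqref{eq:CVC}. The genuinely new difficulty is that $\Psi$ now vanishes on $\slashed{\mathfrak{I}}$, so the factors $\Psi^{-1}$ in \eqref{CVC_def_Q}--\eqref{CVC_def_M} are singular, and one must show $K,Q,E,T,M$ nonetheless extend smoothly to $\slashed{\mathfrak{I}}$. This is precisely the content I would extract from \Cref{thm:smooth_unphysical_fields}: the hypotheses $a)$ and $b)$ — the twisted-derivative condition \eqref{eq:normal_derivative_Pdagger} on the trace-free normal derivative of $P=\Psi\widetilde K$ and the umbilicity \eqref{eq:umbilical_condition} with mean curvature equal to the Friedrich scalar $s_h$ — are exactly the necessary and sufficient conditions ensuring that the formally $\Psi^{-1}$-singular combinations in \eqref{CVC_def} have vanishing boundary value of the relevant numerator, so that \Cref{lem:dividing_bdf} applies and each field extends. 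Once smoothness is established, a continuity argument extends the \eqref{eq:CVC} from $\mathcal{S}\setminus\slashed{\mathfrak{I}}$ to all of $\mathcal{S}$, exactly as in the proof of part $i)$ of \Cref{prop:aAdS_VE}. The geometric conditions \eqref{eq:z_vanish}, \eqref{eq:K_vanish} hold because $z$ is chosen to vanish on $\slashed{\mathfrak{I}}$ and \eqref{CVC_def_K} then forces the normal component of $K+\varepsilon z\Psi^{-1}h$ to vanish there; finally $\slashed{\mathscr{I}} = \slashed{\mathfrak{I}}$ since $\Psi$ is a boundary defining function of $\slashed{\mathfrak{I}}$.

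\textbf{Main obstacle.} The crux is entirely the smoothness of the unphysical fields up to $\slashed{\mathfrak{I}}$ in direction $i)$, and correspondingly the \emph{derivation} of conditions $a)$ and $b)$ in direction $ii)$; everything else is a transcription of \Cref{prop:CVC_VC} plus a continuity/boundary-defining-function argument mirroring \Cref{prop:aAdS_VE}. Since the smoothness criterion is the substance of \Cref{thm:smooth_unphysical_fields}, the honest structure of this proof is to reduce \Cref{prop:aH_VC} to that theorem: I would state clearly that part $i)$ invokes the sufficiency direction of \Cref{thm:smooth_unphysical_fields} to guarantee the smooth extension, and that part $ii)$ invokes its necessity direction to recover $a)$ and $b)$ from a given smooth (CVC) solution. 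The one point requiring care is checking that the gauge choices ($\kappa$ arbitrary, $z$ vanishing on the boundary) in the statement are compatible with the hypotheses of \Cref{thm:smooth_unphysical_fields}, and that the trace-free/umbilical conditions there are stated with respect to the same normalisation of the normal and the same Friedrich scalar $s_h$ used in \eqref{eq:normal_derivative_Pdagger} and \eqref{eq:umbilical_condition}.
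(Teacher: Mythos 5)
Your proposal is correct and follows essentially the same route as the paper: both directions reduce to \Cref{prop:CVC_VC} for the constraint correspondence away from the boundary, to \Cref{thm:smooth_unphysical_fields} for the smooth extension of the unphysical fields (sufficiency in $i)$, necessity to recover $a)$ and $b)$ in $ii)$), and to \eqref{eq:pseudonorm_dPsi} plus a continuity argument for the boundary-defining-function and aH claims. Your first instinct in direction $ii)$ to read $a)$ and $b)$ off \eqref{eq:constraints_I} is misplaced — those are the constraints on the $3$-dimensional conformal boundary of a $4$-dimensional (CVE) solution, not on the $2$-dimensional conformal boundary of a $3$-dimensional (CVC) solution — but your self-correction in the final paragraph, invoking the necessity direction of \Cref{thm:smooth_unphysical_fields}, is exactly what the paper does.
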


\begin{proof} \,
    \begin{itemize}
        \item[i)] By $i)$ of \Cref{prop:CVC_VC}, the 10-tuple $(\mathcal{S}\setminus\slashed{\mathfrak{I}},h,E,M,\Psi,\kappa,K,z,T,Q)$ is solution to the (CVC) for $\Lambda=-3$. By \Cref{thm:smooth_unphysical_fields}, the hypotheses imply that the unphysical fields extend smoothly to $\slashed{\mathfrak{I}}$ and verifies the geometric conditions \eqref{eq:gauge_independent_conditions_CVC}. Hence the 10-tuple $(\mathcal{S},h,E,M,\Psi,\kappa,K,z,T,Q)$ is in $\mathscr{D}$ and, by a continuity argument, is solution to the (CVC) with $\Lambda=-3$. Since $\Psi$ is a boundary defining function of $\slashed{\mathfrak{I}}$, one has $\slashed{\mathscr{I}} = \slashed{\mathfrak{I}}$.

        \item[ii)] By $ii)$ of \Cref{prop:CVC_VC}, $(\mathcal{U}\setminus\slashed{\mathscr{I}},\widetilde{h},\widetilde{K})$ is solution to the (VC) with $\Lambda=-3$. By definition of $\mathcal{U}$, $|\Psi|$ is positive on $\mathcal{U}$. Thanks to \eqref{eq:z_vanish}, one deduces from \Cref{eq:pseudonorm_dPsi} that $|\Psi|$ is a boundary defining function of $\slashed{\mathscr{I}}\cap\mathcal{U}$ and that $(\mathcal{U},\slashed{\mathscr{I}}\cap\mathcal{U},|\Psi|^{-2}h)$ is aH. Then, by \Cref{thm:smooth_unphysical_fields}, the conditions $a)$ and $b)$ hold. \qedhere
    \end{itemize}
\end{proof}

\subsection{Evolution system and propagation of the constraints}
\label{sec:geometric_existence}

This section contains the analytic part of the proof and follows, for the most part, the work of Friedrich \cite{F95}. We first introduce convenient quantities encoding the (CVE), called \emph{zero quantities}, in \Cref{sec:zero_quantities}. In a specific gauge adapted to the conformal boundary and constructed in \Cref{sec:gauge_construction}, a suitable evolution system can be derived from some of the zero quantities as shown in \Cref{sec:derivation_evolution_system}. Particular care has to be taken concerning this step so that the constraints, constituted by the remaining zero quantities, propagate tangentially to the boundary at a later step. If not, the propagation of the constraints cannot be ensured as it would require additional first-order boundary conditions. Friedrich's method consists in modifying the evolution system by adding some multiple of the constraints. Another strategy is followed in this paper: reducing the number of unknowns by using algebraic symmetries only and choosing judiciously the evolution equations. This was inspired by the work \cite{HLSW20} studying linear fields on AdS which can be seen as a mix of the two methods. The evolution system is then solved in \Cref{sec:solving_evolution_system} under certain analytic boundary conditions, see \eqref{raw_bc}. Finally, the propagation of the constraints is proven in \Cref{sec:propagation_constraints}.

For the sake of simplicity, from now on we will take the normalised negative cosmological constant, that is
\[ \Lambda = -3 \,. \]

\subsubsection{Zero quantities}
\label{sec:zero_quantities}

Let us rewrite the \eqref{eq:CVE} with the so-called \emph{zero quantities}. These vanish identically if and only if the (CVE) hold and are particularly convenient to write the propagation of the constraints, see \Cref{sec:propagation_constraints}. 

\begin{defi}
    \label{def:zero_quantities}
    Let $(\M,g,V,\Theta,\kappa) \in \mathscr{E}$ be a quintuple, $\acute{\nabla}$ be a connection on $\M$, $U_{\alpha\beta}$ be a smooth 2-tensor field on $\M$, $\zeta_\alpha$ be a smooth covector field on $\M$, $s$ be a smooth scalar function on $\M$ and $(e_{\bf a})$ be a smooth frame field on $\M$. Define the corresponding \emph{zero quantities} as follows
    \begin{subequations}
        \label{eq:zero_quantities}
        \begin{align}
            \label{eq:ZQ_varphi}
            \varphi_{\bf cdb} &:= \acute{\nabla}_{\bf a} V^{\bf a} {}_{\bf bcd} - \kappa_{\bf a} V^{\bf a} {}_{\bf bcd} \,, \\
            \label{eq:ZQ_varpi}
            \varpi^{\bf a}{}_{\bf bcd} &:= \acute{R}^{\bf a}{}_{\bf bcd} - \Theta V^{\bf a} {}_{\bf bcd} - 2 S_{\bf b[c} {}^{\bf ae} U_{\bf d]e} \,, \\
            \label{eq:ZQ_Pi}
            \Pi_{\bf ab} &:= \acute{\nabla}_{\bf a} \kappa_{\bf b} - \acute{\nabla}_{\bf b} \kappa_{\bf a} + 2 U_{\bf [ab]} \,, \\
            \label{eq:ZQ_varrho}
            \varrho_{\bf abc} &:= \acute{\nabla}_{\bf a} U_{\bf bc} - \acute{\nabla}_{\bf b} U_{\bf ac} - \zeta_{\bf d} V^{\bf d}{}_{\bf cab} \,, \\
            \label{eq:ZQ_varkappa}
            \varkappa_{\bf ab} &:= \acute{\nabla}_{\bf a} \zeta_{\bf b} + \kappa_{\bf a} \zeta_{\bf b} + \Theta U_{\bf ab} - s g_{\bf ab} \,, \\
            \label{eq:ZQ_mho}
            \mho_{\bf a} &:= e_{\bf a}( s) -\kappa_{\bf a} s + U_{\bf ab}  g^{\bf bc} \zeta_{\bf c} \,, \\
            \label{eq:ZQ_varsigma}
            \varsigma_{\bf a} &:= \zeta_{\bf a} - \Theta \, \kappa_{\bf a} - (d\Theta)_{\bf a}  \,, \\
            \label{eq:ZQ_Sigma}
            \acute{\Sigma}_{\bf a}{}^{\bf c}{}_{\bf b} &:= \acute{\Gamma}_{\bf a}{}^{\bf c}{}_{\bf b} - \acute{\Gamma}_{\bf b}{}^{\bf c}{}_{\bf a} - \langle \omega^{\bf c}, [e_{\bf a},e_{\bf b}] \rangle \,, \\
            \label{eq:ZQ_vartheta}
            \acute{\vartheta}_{\bf abc} &:= \acute{\nabla}_{\bf a} g_{\bf bc} + 2\kappa_{\bf a} g_{\bf bc} \,, \\
            \label{eq:ZQ_aleph}
            \aleph &:= \frac{1}{2} \left( g^{\bf ab}\zeta_{\bf a}\zeta_{\bf b} -1 \right) - s \Theta \,,
        \end{align}
    \end{subequations}
    where $(\omega^{\bf a})$ is the dual coframe field of $(e_{\bf a})$, $\acute{R}^{\bf a}{}_{\bf bcd}$ are the frame components of the Riemann tensor of $\acute{\nabla}$ and
    \[ \acute{\Gamma}_{\bf a}{}^{\bf c}{}_{\bf b} := \langle \omega^{\bf c}, \acute{\nabla}_{e_{\bf a}} e_{\bf b} \rangle \,. \]
\end{defi}

\begin{rems} \,
    \begin{itemize}
    \item The (CVE) are linked to the zero quantities in the following way
    \begin{alignat*}{12}
        \eqref{eq:ZQ_varphi} &\leftrightarrow \eqref{eq:CVE_V} \,, & \qquad \eqref{eq:ZQ_varpi} &\leftrightarrow \eqref{eq:CVE_Gamma} \,, & \qquad \eqref{eq:ZQ_varrho} &\leftrightarrow \eqref{eq:CVE_L} \,, &\qquad \eqref{eq:ZQ_varkappa} &\leftrightarrow \eqref{eq:CVE_zeta} \,, \\
        \eqref{eq:ZQ_varsigma} &\leftrightarrow \eqref{eq:def_zeta} \,, &\qquad
        \eqref{eq:ZQ_Sigma} &\leftrightarrow \eqref{eq_torsion_free} \,, &\qquad \eqref{eq:ZQ_vartheta} &\leftrightarrow \eqref{weyl_conn} \,, & \qquad  \eqref{eq:ZQ_aleph} &\leftrightarrow \eqref{eq:CVE_const_bis} \,.
    \end{alignat*}
    Furthermore, the remaining zero quantities come from the two equations indicated below, which are redundant to the (CVE) but must be added to form a closed system,
    \[ \eqref{eq:ZQ_Pi} \leftrightarrow \eqref{eq:antisym_part_schouten} \,, \qquad \eqref{eq:ZQ_mho} \leftrightarrow \eqref{eq:grad_s} \,. \]
    
    \item A general connection $\acute{\nabla}$ is introduced because in the propagation of the constraints we do not yet know that it is a Weyl connection. This is indeed encoded by the vanishing of $\acute{\Sigma}_{\bf a}{}^{\bf c}{}_{\bf b}$ and $\acute{\vartheta}_{\bf abc}$. 
    
    \item By definition of the Riemann tensor, one has
    \begin{align*}
        \acute{R}^{\bf a}{}_{\bf bcd} &= \langle \omega^{\bf a}, \acute{\nabla}_{\bf c} \acute{\nabla}_{\bf d} e_{\bf b} - \acute{\nabla}_{\bf d} \acute{\nabla}_{\bf c} e_{\bf b} - \acute{\nabla}_{[e_{\bf c},e_{\bf d}]} e_{\bf b} \rangle \\
        &= e_{\bf c}(\acute{\Gamma}_{\bf d}{}^{\bf a}{}_{\bf b}) - e_{\bf d}(\acute{\Gamma}_{\bf c}{}^{\bf a}{}_{\bf b}) + \acute{\Gamma}_{\bf d}{}^{\bf e}{}_{\bf b} \acute{\Gamma}_{\bf c}{}^{\bf a}{}_{\bf e} - \acute{\Gamma}_{\bf c}{}^{\bf e}{}_{\bf b} \acute{\Gamma}_{\bf d}{}^{\bf a}{}_{\bf e} + \left( \acute{\Gamma}_{\bf d}{}^{\bf e}{}_{\bf c} - \acute{\Gamma}_{\bf c}{}^{\bf e}{}_{\bf d} \right) \acute{\Gamma}_{\bf e}{}^{\bf a}{}_{\bf b} + \acute{\Sigma}_{\bf c}{}^{\bf e}{}_{\bf d} \acute{\Gamma}_{\bf e}{}^{\bf a}{}_{\bf b} \,. \qedhere
    \end{align*}
    \end{itemize}
\end{rems}

\noindent The following lemma holds as a direct consequence of \Cref{def:zero_quantities}.

\begin{lem} \,
    \label{lem:link_zero_quantities_CVE}
    \begin{itemize}
        \item[i)] Let $(\M,g,V,\Theta,\kappa) \in \mathscr{E}$ be a solution to the (CVE), $\acute{\nabla}$ be the Weyl connection $\widehat{\nabla}$ associated to $\kappa$ with respect to $g$, $U_{\alpha\beta}$ be the Schouten tensor $\widehat{L}_{\alpha\beta}$ of $\widehat{\nabla}$, $\zeta$ be the covector field $\widehat{\zeta}_g$ defined by \eqref{eq:def_zeta}, $s$ be the extended Friedrich scalar $\widehat{s}_g$ and $(e_{\bf a})$ be a smooth field on $\M$. Then all the zero quantities defined by \eqref{eq:zero_quantities} vanish identically on $\M$.
        
        \item[ii)] Reciprocally, let $(\M,g,V,\Theta,\kappa) \in \mathscr{E}$, $\acute{\nabla}$ be a connection on $\M$, $U_{\alpha\beta}$ be a smooth 2-tensor field on $\M$, $\zeta_\alpha$ be a smooth covector field on $\M$, $s$ be a smooth scalar field on $\M$ and $(e_{\bf a})$ be a smooth frame field on $\M$ such that all zero quantities defined by \eqref{eq:zero_quantities} vanish identically on $\M$. Then $\acute{\nabla}$ is the Weyl connection $\widehat{\nabla}$ associated to $\kappa$ with respect to $g$, $U_{\alpha\beta}$ is its Schouten tensor $\widehat{L}_{\alpha\beta}$, $\zeta_\alpha$ is the covector field $(\widehat{\zeta}_g)_\alpha$, $s$ is the extended Friedrich scalar $\widehat{s}_g$ and $(\M,g,V,\Theta,\kappa)$ is solution to the (CVE) with $\Lambda=-3$.
    \end{itemize}
\end{lem}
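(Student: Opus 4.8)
The plan is to unwind the definitions in two directions. Part $i)$ is essentially a dictionary check: given a genuine solution to the \eqref{eq:CVE} with the canonically associated data $(\widehat{\nabla}, \widehat{L}, \widehat{\zeta}_g, \widehat{s}_g)$, I would go through the list \eqref{eq:ZQ_varphi}--\eqref{eq:ZQ_aleph} and match each zero quantity against the equation it encodes, using the correspondence table already displayed in the remarks. Concretely, $\varphi$ vanishes by \eqref{eq:CVE_V}; $\varpi$ vanishes because $\acute{\nabla} = \widehat{\nabla}$ is precisely the Weyl connection whose Riemann tensor satisfies the decomposition \eqref{eq:CVE_Gamma} with $U = \widehat{L}$; $\varrho$ vanishes by \eqref{eq:CVE_L}; $\varkappa$ vanishes by \eqref{eq:CVE_zeta}, which is available through \Cref{lem:split_CVE_regular} since \eqref{eq:CVE_const} holds; $\mho$ vanishes by \eqref{eq:grad_s} from the same lemma; $\varsigma$ vanishes by the very definition \eqref{eq:def_zeta} of $\widehat{\zeta}_g$; $\acute{\Sigma}$ vanishes since Weyl connections are torsion-free; $\acute{\vartheta}$ vanishes by the defining property \eqref{weyl_conn} of a Weyl connection; and $\aleph$ vanishes by \eqref{eq:CVE_const_bis}. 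The only quantity without a direct line in the \eqref{eq:CVE} is $\Pi$, for which I would invoke \eqref{eq:antisym_part_schouten}: the antisymmetric part of $\widehat{R}_{\alpha\beta}$ equals $N \widehat{\nabla}_{[\alpha}(\widehat{\kappa}_g)_{\beta]}$, and combined with \eqref{eq:symetric_part_Riemann} this forces $\Pi_{\bf ab} = \acute{\nabla}_{\bf [a}\kappa_{\bf b]}\cdot 2 + 2U_{\bf [ab]} = 0$ once one reads off $U_{[\alpha\beta]} = \widehat{L}_{[\alpha\beta]} = -\widehat{\nabla}_{[\alpha}(\widehat{\kappa}_g)_{\beta]}$ from \eqref{eq:antisym_part_schouten} of the Schouten tensor.

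The converse $ii)$ is the substantive direction and proceeds by identifying each auxiliary field in turn. First I would use the vanishing of $\acute{\Sigma}$ and $\acute{\vartheta}$ to conclude that $\acute{\nabla}$ is torsion-free and satisfies $\acute{\nabla}_\alpha g_{\beta\gamma} = -2\kappa_\alpha g_{\beta\gamma}$; by the uniqueness statement of \Cref{prop:exist&uniq_weylconn}, this pins down $\acute{\nabla} = \widehat{\nabla}$, the Weyl connection associated to $\kappa$ with respect to $g$. Next, $\varsigma = 0$ gives $\zeta = \widehat{\zeta}_g$ directly from \eqref{eq:def_zeta}. Then $\varpi = 0$ says $\widehat{R}^\alpha{}_{\beta\mu\nu} = \Theta V^\alpha{}_{\beta\mu\nu} + 2S_{\beta[\mu}{}^{\alpha\xi}U_{\nu]\xi}$; comparing with the intrinsic decomposition \eqref{eq:decomposition_Riemann_Weyl} of $\widehat{R}$ and using that $V$ is a Weyl candidate (hence totally trace-free), a trace argument isolates the Schouten tensor and forces $U_{\alpha\beta} = \widehat{L}_{\alpha\beta}$, which simultaneously recovers \eqref{eq:CVE_Gamma}. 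Having identified $U$ with $\widehat{L}$, the vanishing of $\varphi$, $\varrho$ yields \eqref{eq:CVE_V} and \eqref{eq:CVE_L} verbatim.

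It remains to identify $s$ with the extended Friedrich scalar and to reconstitute \eqref{eq:CVE_const}. From $\varkappa = 0$ I read off $\widehat{\nabla}_\alpha(\widehat{\zeta}_g)_\beta + \kappa_\alpha(\widehat{\zeta}_g)_\beta + \Theta\widehat{L}_{\alpha\beta} = s\,g_{\alpha\beta}$; taking the trace and comparing with definition \eqref{eq:extended_Friedrich_scalar} forces $s = \widehat{s}_g$. Then $\varkappa = 0$ is exactly \eqref{eq:CVE_zeta} and $\aleph = 0$ is exactly \eqref{eq:CVE_const_bis}; by \Cref{lem:split_CVE_regular} these two together are equivalent to \eqref{eq:CVE_const}. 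Thus all four \eqref{eq:CVE} equations hold and $(\M,g,V,\Theta,\kappa)$ is a solution with $\Lambda = -3$ (the value being fixed by the standing convention and appearing through $\aleph$). I would note that the remaining zero quantities $\Pi$ and $\mho$ play an auxiliary role here: $\mho = 0$ reproduces \eqref{eq:grad_s}, which is automatically implied by \eqref{eq:CVE_zeta}--\eqref{eq:CVE_const_bis} and so is consistent rather than an additional constraint, while $\Pi = 0$ is consistent with the antisymmetric parts already fixed.

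The main obstacle I anticipate is the trace bookkeeping in the $\varpi$ step of part $ii)$. Extracting $U_{\alpha\beta} = \widehat{L}_{\alpha\beta}$ from $\varpi = 0$ requires care because $\widehat{L}$ is not symmetric for a general Weyl connection: its antisymmetric part is governed by \eqref{eq:antisym_part_schouten}, so one must check that the contraction $S_{\beta[\mu}{}^{\alpha\xi}U_{\nu]\xi}$ matched against $2S_{\beta[\mu}{}^{\alpha\xi}\widehat{L}_{\nu]\xi}$ determines $U$ completely, including its trace and antisymmetric parts, without an ambiguity that a trace-free term could hide. The right move is to contract over suitable index pairs and exploit both \eqref{eq:symetric_part_Riemann} and the algebraic symmetries of $\widehat{R}^\alpha{}_{\beta\mu\nu}$ to recover all components of $U$; this is routine but is where the argument has genuine content rather than mere definition-chasing.
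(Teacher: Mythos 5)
Your proposal is correct and follows essentially the same route as the paper's own (very terse) proof: part $i)$ is a direct dictionary check against the correspondence table, and part $ii)$ identifies $\acute{\nabla}$, $U$, $\zeta$, $s$ in the same order, using the torsion and metricity zero quantities, the uniqueness of the decomposition \eqref{eq:decomposition_Riemann_Weyl}, the definition \eqref{eq:def_zeta}, and the trace of $\varkappa$, before reassembling \eqref{eq:CVE_const} via \Cref{lem:split_CVE_regular}. Your extra care about the antisymmetric and trace parts of $U$ in the $\varpi$ step is exactly what the paper compresses into the phrase ``uniqueness of the decomposition,'' so there is no gap.
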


\begin{proof}
    The proof of $i)$ is immediate. For $ii)$, one has to deduce in this order that $\acute{\nabla}$ is the Weyl connection $\widehat{\nabla}$ associated to $\kappa$ with respect to $g$ from $\acute{\Sigma}_{\bf a}{}^{\bf c}{}_{\bf b} = 0$ and $\acute{\vartheta}_{\bf abc} = 0$, that $U_{\alpha\beta}$ is the Schouten tensor of $\widehat{\nabla}$ from $\varpi^{\bf a}{}_{\bf bcd} = 0$ (uniqueness of the decomposition), that $\zeta_\alpha$ is the covector field $(\widehat{\zeta}_g)_\alpha$ from $\varsigma_{\bf a} = 0$, that $s$ is the extended Friedrich scalar $\widehat{s}_g$ from $g^{\bf ab}\varkappa_{\bf ab} = 0$. It follows that $(\M,g,V,\Theta,\kappa)$ is solution to the (CVE).
\end{proof}

\noindent In order to propagate the constraints, one needs a system of differential equations on the zero quantities. This is given by the next proposition.

\begin{prop}
    Let $(\M,g,V,\Theta,\kappa) \in \mathscr{E}$ be a quintuple and $\acute{\nabla}$, $U_{\alpha\beta}$, $\zeta_\alpha$, $s$, $(e_{\bf a})$ be as in \Cref{def:zero_quantities}. Assume furthermore that $(\M,g)$ is oriented. Then the zero quantities of \Cref{def:zero_quantities} verify the following differential equations
    \begin{subequations}
        \label{eq:diff_zero_quantities}
        \begin{align}
            \label{eq:div_varphi}
            g^{\bf be} \acute{\nabla}_{\bf e} \varphi_{\bf cdb} &= -\frac{1}{2} \acute{\Sigma}_{\bf a}{}^{\bf b}{}_{\bf e} \acute{\nabla}_{\bf b} V^{\bf ae}{}_{\bf cd} - \varpi^{\bf a}{}_{\bf bae} V^{\bf be}{}_{\bf cd} - V^{\bf ae}{}_{\bf b[c} \varpi^{\bf b}{}_{\bf d]ae} \nonumber \\
            &\quad - g^{\bf ep} g^{\bf fq} \acute{\vartheta}_{\bf epq}\acute{\vartheta}_{\bf abf} V^{\bf ab}{}_{\bf cd} + g^{\bf ab} g^{\bf ef} \acute{\vartheta}_{\bf eaf} \varphi_{\bf cdb} \nonumber \\
            &\quad + g^{\bf ef} \acute{\vartheta}_{\bf abf} \acute{\nabla}_{\bf e} V^{\bf ab}{}_{\bf cd} + \kappa_{\bf a} g^{\bf ab} \left( \varphi_{\bf cdb} - \acute{\vartheta}_{\bf epb} V^{\bf ep}{}_{\bf cd} \right) \nonumber \\
            &\quad+ \frac{3}{2} V^{\bf ae}{}_{\bf cd} \Pi_{\bf ae} + g^{\bf ef} V^{\bf ab}{}_{\bf cd} \acute{\nabla}_{\bf e} \acute{\vartheta}_{\bf abf} \,, \\
            \label{eq:curl_varpi}
            \acute{\nabla}_{\bf [a} \varpi^{\bf d}{}_{\bf |e|bc]} &= - \acute{\Sigma}_{\bf [a}{}^{\bf f}{}_{\bf b} \acute{R}^{\bf d}{}_{\bf |e|c]f} - V^{\bf d}{}_{\bf e[ab} \varsigma_{\bf c]} + S_{\bf e[a}{}^{\bf df} \varrho_{\bf bc]f} \nonumber \\
            &\quad + 2 \left( g^{\bf df} \acute{\vartheta}_{\bf [ab|e|} - g^{\bf dp}g^{\bf fq} \acute{\vartheta}_{\bf [a|pq|} g_{\bf b|e|} \right) U_{\bf c]f} \nonumber \\
            &\quad - \frac{\Theta}{6} \epsilon^{\bf f}{}_{\bf abc} \epsilon{}^{\bf d}{}_{\bf e}{}^{\bf qr} \varphi_{\bf qrf} - \frac{\Theta}{6} \acute{\vartheta}_{\bf ruv} V^{\bf r}{}_{\bf fst} F_{\bf abce}{}^{\bf stuvfd} \,, \\
            \label{eq:curl_Pi}
            \acute{\nabla}_{\bf [a} \Pi_{\bf bc]} &= \frac{1}{2} \acute{\Sigma}_{\bf [a}{}^{\bf d}{}_{\bf b} \acute{\nabla}_{\bf |d|} \kappa_{\bf c]} - \frac{1}{2} \varpi^{\bf d}{}_{\bf [cab]} \kappa_{\bf d} +  \varrho_{\bf [abc]} \,, \\
            \label{eq:curl_varrho}
            \acute{\nabla}_{[\bf a} \varrho_{\bf bc]d} &= \acute{\Sigma}_{\bf [a}{}^{\bf e}{}_{\bf b} \acute{\nabla}_{\bf |e|} U_{\bf c]d} - \varpi^{\bf e}{}_{\bf [cab]} U_{\bf ed} - \varpi^{\bf e}{}_{\bf d[ab} U_{\bf c]e} - V^{\bf e}{}_{\bf d[ab} \varkappa_{\bf c]e} \nonumber \\
            &\quad - \frac{1}{6} \zeta_{\bf e} \epsilon^{\bf f}{}_{\bf abc} \epsilon{}^{\bf e}{}_{\bf d}{}^{\bf qr} \varphi_{\bf qrf} - \frac{1}{6} \zeta_{\bf e} \acute{\vartheta}_{\bf ruv} V^{\bf r}{}_{\bf fst} F_{\bf abcd}{}^{\bf stuvfe} \,, \\
            \label{eq:curl_varkappa}
            \acute{\nabla}_{\bf [a} \varkappa_{\bf b]c} &= \frac{1}{2} \acute{\Sigma}_{\bf a}{}^{\bf d}{}_{\bf b} \acute{\nabla}_{\bf d} \zeta_{\bf c} - \frac{1}{2} \varpi^{\bf d}{}_{\bf cab} \zeta_{\bf d} - \varsigma_{\bf [a} U_{\bf b]c} + \frac{1}{2} \zeta_{\bf c} \Pi_{\bf ab} \nonumber \\
            &\quad - \kappa_{\bf [a} \varkappa_{\bf b]c} + \frac{\Theta}{2} \varrho_{\bf abc} - s \acute{\vartheta}_{\bf [ab]c} + g_{\bf c[a} \mho_{\bf b]} \,, \\
            \label{eq:curl_mho}
            \acute{\nabla}_{\bf [a} \mho_{\bf b]} &= \frac{1}{2} \acute{\Sigma}_{\bf a}{}^{\bf c}{}_{\bf b} e_{\bf c}(s) -  \frac{s}{2} \Pi_{\bf ab} + \kappa_{\bf [a} \mho_{\bf b]} + \frac{1}{2} g^{\bf cd} \zeta_{\bf d} \varrho_{\bf abc} \nonumber \\
            &\quad - g^{\bf ce} g^{\bf df} \zeta_{\bf d} \acute{\vartheta}_{\bf [a|ef|} U_{\bf b]c} + g^{\bf cd} \varkappa_{\bf [a|d|} U_{\bf b]c} \,, \\
            \label{eq:curl_varsigma}
            \acute{\nabla}_{\bf [a} \varsigma_{\bf b]} &= \varkappa_{\bf [ab]} - \kappa_{\bf [a} \varsigma_{\bf b]} - \frac{\Theta}{2} \Pi_{\bf ab} - \frac{1}{2} \acute{\Sigma}_{\bf a}{}^{\bf c}{}_{\bf b} (d\Theta)_{\bf c} \,, \\
            \label{eq:curl_Sigma}
            \acute{\nabla}_{\bf [a} \acute{\Sigma}_{\bf b}{}^{\bf d}{}_{\bf c]} &= - \varpi^{\bf d}{}_{\bf [cab]} - \acute{\Sigma}_{\bf [a}{}^{\bf e}{}_{\bf b} \acute{\Sigma}_{\bf c]}{}^{\bf d}{}_{\bf e} \,, \\
            \label{eq:curl_vartheta}
            \acute{\nabla}_{\bf [a} \acute{\vartheta}_{\bf b]cd} &= \frac{1}{2} \acute{\Sigma}_{\bf a}{}^{\bf e}{}_{\bf b} \acute{\vartheta}_{\bf ecd} - g_{\bf e(c} \varpi^{\bf e}{}_{\bf d)ab} + g_{\bf cd} \Pi_{\bf ab} - 2\kappa_{\bf [a} \acute{\vartheta}_{\bf b]cd} \,, \\
            \label{eq:curl_aleph}
            \acute{\nabla}_{\bf a} \aleph &= g^{\bf bc} \zeta_{\bf c} \varkappa_{\bf ab} + s \varsigma_{\bf a} - \frac{1}{2} \zeta_{\bf b} \zeta_{\bf c} g^{\bf bd} g^{\bf ce} \acute{\vartheta}_{\bf ade} - \Theta \mho_{\bf a} \,,
        \end{align}
    \end{subequations}
    where
    \begin{subequations}
        \begin{align}
            G^{\bf uvabcd} &:= \frac{1}{2} g^{\bf uv} g^{\bf ab} g^{\bf cd} - g^{\bf au} g^{\bf bv} g^{\bf cd}  - g^{\bf cu} g^{\bf dv} g^{\bf ab} \,, \\
            F_{\bf abce}{}^{\bf stuvfd} &:= \epsilon_{\bf pe}{}^{\bf st}   \epsilon_{\bf xabc} G^{\bf uvxfdp} + \epsilon^{\bf f}{}_{\bf abc} \epsilon{}^{\bf d}{}_{\bf exy} G^{\bf uvsxty} \,.
        \end{align}
    \end{subequations}
\end{prop}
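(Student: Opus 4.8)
The plan is to establish the identities in \eqref{eq:diff_zero_quantities} by direct computation, treating each zero quantity in turn and exploiting the fact that these are purely formal tensorial identities: they hold for \emph{any} quintuple in $\mathscr{E}$ together with any choice of $\acute{\nabla}$, $U_{\bf ab}$, $\zeta_{\bf a}$, $s$ and frame $(e_{\bf a})$, not merely for solutions of the (CVE). Consequently no use of the field equations is permitted, and the derivation rests entirely on the definitions \eqref{eq:zero_quantities}, the commutation of covariant derivatives through the Riemann tensor of the general connection $\acute{\nabla}$ (recalled in the final remark before the lemma), the algebraic symmetries of the Weyl candidate $V$ from \Cref{def_Weyl_candidate}, and the duality identities in dimension $4$ from \Cref{sec:duality_dim_4}, notably \eqref{eq:identity_volume_form} and the relation \eqref{curl_Q_div_Q_star} linking the curl and the divergence of the dual.

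\textbf{Order of the steps.} I would proceed roughly in increasing order of difficulty. First the algebraically trivial ones: \eqref{eq:curl_Sigma} follows from the first Bianchi identity for $\acute{\nabla}$ together with \eqref{eq:ZQ_Sigma} and \eqref{eq:ZQ_varpi}; \eqref{eq:curl_aleph} comes from differentiating \eqref{eq:ZQ_aleph} and substituting \eqref{eq:ZQ_varkappa}, \eqref{eq:ZQ_mho}, \eqref{eq:ZQ_varsigma}; \eqref{eq:curl_vartheta} is obtained by antisymmetrising $\acute{\nabla}_{\bf a}\acute{\vartheta}_{\bf bcd}$ and recognising the Riemann tensor acting on $g_{\bf cd}$ via \eqref{eq:ZQ_varpi} and the antisymmetric-Ricci term \eqref{eq:ZQ_Pi}. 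Next the covector-level identities \eqref{eq:curl_varsigma}, \eqref{eq:curl_mho}, \eqref{eq:curl_Pi}, \eqref{eq:curl_varkappa}: each is a matter of antisymmetrising the defining expression, commuting the two $\acute{\nabla}$'s (producing $\acute{\Sigma}$ and $\varpi$ terms), and rearranging using the other definitions; the scalar $s$, $\Theta$ and $\kappa$ factors will reorganise exactly into the stated right-hand sides, with the volume-form contractions only entering through $\varrho$. Then \eqref{eq:curl_varrho}, which differentiates \eqref{eq:ZQ_varrho} and requires replacing $\zeta_{\bf d}\acute{\nabla}_{[\bf a}V^{\bf d}{}_{|\bf c|bc]}$ by the curl of $V$, the main place where the auxiliary tensor $F_{\bf abce}{}^{\bf stuvfd}$ is generated via \eqref{curl_Q_div_Q_star}.

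\textbf{The main obstacle.} The hard part will be \eqref{eq:div_varphi} and \eqref{eq:curl_varpi}, the two equations governing the Weyl-candidate sector, because they are where the dimension-$4$ duality machinery does real work. For \eqref{eq:div_varphi} one differentiates $\varphi_{\bf cdb}=\acute{\nabla}_{\bf a}V^{\bf a}{}_{\bf bcd}-\kappa_{\bf a}V^{\bf a}{}_{\bf bcd}$ once more and contracts; the commutator of the two divergences produces curvature terms that must be converted, via \eqref{eq:ZQ_varpi}, into $\varpi$ and $\Theta V$ contributions, while the many $\acute{\vartheta}$ terms arise because $\acute{\nabla}$ is not assumed metric and the index raising no longer commutes with differentiation—this is the source of every $g^{\bf ef}\acute{\vartheta}$ factor on the right. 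For \eqref{eq:curl_varpi} I would take the second Bianchi identity for $\acute{R}^{\bf d}{}_{\bf ebc}$, subtract the $\Theta V$ and $U$ pieces, and then use \eqref{eq:aux_duality_4}--\eqref{curl_Q_div_Q_star} to rewrite the antisymmetrised derivative of $\Theta V$ in terms of $\varphi$ and its dual, which is precisely what manufactures the $\epsilon\epsilon\varphi$ term and the $F$-tensor contraction. Keeping track of the non-metricity corrections (the $\acute{\vartheta}$ terms) and the sign and normalisation conventions in the duality identities is where the bookkeeping is most delicate; I would check the trace and the formal $\kappa=0$, $\acute{\vartheta}=0$, $\acute{\Sigma}=0$ specialisation against the Levi-Civita Bianchi identities of \Cref{sec:weyl_schouten_cotton} as a consistency test at each stage.
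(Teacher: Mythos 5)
Your strategy matches the paper's proof essentially step for step: the paper likewise proceeds by direct differentiation of each definition in \eqref{eq:zero_quantities}, converts commutators of $\acute{\nabla}$ into $\acute{\Sigma}$, $\varpi$ and $\Pi$ terms, and isolates exactly the two Weyl-candidate equations \eqref{eq:div_varphi} and \eqref{eq:curl_varpi} (together with \eqref{eq:curl_varrho}) as the ones requiring the dimension-$4$ duality identities, generalised to the arbitrary connection $\acute{\nabla}$ with the $\acute{\nabla}_\sigma\epsilon_{\alpha\beta\mu\nu}$ correction expressed through $\acute{\vartheta}$ and $\kappa$ — which is precisely your ``non-metricity corrections'' and is what produces the $F_{\bf abce}{}^{\bf stuvfd}$ contractions. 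The only ingredient you leave implicit is the packaged identity relating $\acute{\nabla}_{\bf [a}V^{\bf d}{}_{\bf |e|bc]}-V^{\bf d}{}_{\bf e[ab}\kappa_{\bf c]}$ to $\epsilon\epsilon\varphi$ plus $\acute{\vartheta}VF$ terms (the paper's intermediate identity combining \eqref{curl_Q_div_Q_star} with \eqref{eq:SxJ}), which is used verbatim in both \eqref{eq:curl_varpi} and \eqref{eq:curl_varrho}; with that made explicit, the remaining work is the bookkeeping you describe.
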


\begin{rem}
    Note that the right hand sides of equations~\eqref{eq:diff_zero_quantities} are homogeneous terms of the zero quantities except the last term in the right hand side of \eqref{eq:div_varphi}. This can be addressed by adding $\acute{\nabla}_{\bf b} \acute{\vartheta}_{\bf cde}$ as a zero quantity and expressing $\acute{\nabla}_{\bf [a} \acute{\nabla}_{\bf b]} \acute{\vartheta}_{\bf cde}$ with homogeneous terms. However, this will not be necessary in our case thanks to a triangular structure in the propagation of the constraints, see the proof of \Cref{lem:propagation_constraints}.
\end{rem}

\begin{proof}
    First, let us derive some useful identities.
    \begin{itemize}
        \item Equations \eqref{div_Q_div_Q_star} and \eqref{curl_Q_div_Q_star} can be generalised for an arbitrary connection $\acute{\nabla}$ as follows
        \[ \acute{\nabla}_\alpha (Q\star)^\alpha{}_{\beta\lambda\xi} - (\star J)_{\lambda\xi\beta} = \frac{1}{2} \epsilon_{\lambda\xi}{}^{\mu\nu} \left( \acute{\nabla}_\alpha Q^\alpha{}_{\beta\mu\nu} - J_{\mu\nu\beta} \right)
        + \frac{1}{2}Q^\alpha{}_{\beta\mu\nu} \acute{\nabla}_\alpha \epsilon_{\lambda\xi}{}^{\mu\nu} \,, \]
        \begin{align*}
            3 \left( \acute{\nabla}_{[\alpha} Q^\rho{}_{|\nu|\lambda\xi]} - S_{\nu[\alpha}{}^{\rho\gamma} J_{\lambda\xi]\gamma} \right) &= \epsilon^\beta{}_{\alpha\lambda\xi} g^{\rho\mu} \left( \acute{\nabla}_\sigma (Q\star)^\sigma{}_{\beta\mu\nu} - (\star J)_{\mu\nu\beta} \right) + (Q\star)^\sigma{}_{\beta\mu\nu} \acute{\nabla}_\sigma \left( \epsilon^\beta{}_{\alpha\lambda\xi} g^{\rho\mu} \right) \,.
        \end{align*}
    
        \item By definition of the volume form $\epsilon_{\alpha\beta\mu\nu}$ of $g$ 
        \begin{equation*}
            \acute{\nabla}_\sigma \epsilon_{\alpha\beta\mu\nu} = \left( \frac{g^{\xi\rho}}{2} \acute{\vartheta}_{\sigma\xi\rho} - 4\kappa_\sigma \right) \epsilon_{\alpha\beta\mu\nu} \,,
        \end{equation*}
        Thus
        \[ \acute{\nabla}_\sigma \left(\epsilon_{\alpha\beta\mu\nu} g^{\mu\xi} g^{\nu\rho} \right) = \acute{\vartheta}_{\bf \sigma\chi\psi} \epsilon_{\alpha\beta\mu\nu}  G^{\chi\psi\mu\xi\nu\rho} \,. \]
        In particular,
        \begin{align*}
            \acute{\nabla}_\sigma \left( \epsilon^\beta{}_{\alpha\lambda\xi} g^{\rho\mu} \right) &= \acute{\nabla}_\sigma \left( \epsilon_{\nu\alpha\lambda\xi} g^{\bf \nu\beta} g^{\rho\mu} \right) = \acute{\vartheta}_{\sigma\chi\psi} \epsilon_{\nu\alpha\lambda\xi}  G^{\bf \chi\psi\nu\beta\rho\mu} \,, \\
            \acute{\nabla}_\alpha \epsilon_{\lambda\xi}{}^{\mu\nu} &= \acute{\nabla}_\alpha \left( \epsilon_{\lambda\xi\rho\sigma} g^{\mu\rho} g^{\nu\sigma} \right) = \acute{\vartheta}_{\alpha\chi\psi} \epsilon_{\lambda\xi\rho\sigma}  G^{\chi\psi\mu\rho\nu\sigma} \,.
        \end{align*}

        \item By equation \eqref{eq:SxJ}, for all 1-form $\omega_\alpha$
        \begin{equation}
            \label{eq:id_SxV}
            S_{\lambda[\alpha}{}^{\mu\nu} V^\sigma{}_{|\nu|\beta\gamma]} \omega_\sigma = V^\mu{}_{\lambda[\alpha\beta} \omega_{\gamma]} \,.
        \end{equation}
    \end{itemize}
    Combining the three points above, one deduces that
    \begin{align}
        \acute{\nabla}_{\bf [a} V^{\bf d}{}_{\bf |e|bc]} - V^{\bf d}{}_{\bf e[ab} \kappa_{\bf c]} &= \frac{1}{6} \epsilon^{\bf f}{}_{\bf abc} \epsilon{}^{\bf d}{}_{\bf e}{}^{\bf qr} \varphi_{\bf qrf} + \frac{1}{3} (V\star)^{\bf r}{}_{\bf fpe} \acute{\vartheta}_{\bf ruv} \epsilon_{\bf sabc} G^{\bf uvsfdp} \nonumber \\
        &\quad + \frac{1}{6} \epsilon^{\bf f}{}_{\bf abc} g^{\bf dp} V^{\bf r}{}_{\bf fst} \acute{\vartheta}_{\bf ruv} \epsilon_{\bf pexy} G^{\bf uvsxty} \nonumber \\
        &= \frac{1}{6} \epsilon^{\bf f}{}_{\bf abc} \epsilon{}^{\bf d}{}_{\bf e}{}^{\bf qr} \varphi_{\bf qrf} + \frac{1}{6} \epsilon_{\bf pe}{}^{\bf xy} V^{\bf r}{}_{\bf fxy} \acute{\vartheta}_{\bf ruv} \epsilon_{\bf sabc} G^{\bf uvsfdp} \nonumber \\
        &\quad + \frac{1}{6} \epsilon^{\bf f}{}_{\bf abc} g^{\bf dp} V^{\bf r}{}_{\bf fst} \acute{\vartheta}_{\bf ruv} \epsilon_{\bf pexy} G^{\bf uvsxty} \nonumber \\
        &= \frac{1}{6} \epsilon^{\bf f}{}_{\bf abc} \epsilon{}^{\bf d}{}_{\bf e}{}^{\bf qr} \varphi_{\bf qrf} + \frac{1}{6} \acute{\vartheta}_{\bf ruv} V^{\bf r}{}_{\bf fst} F_{\bf abce}{}^{\bf stuvfd} \,.
        \label{eq:id_curl_V_varphi}
    \end{align}
    Now let us prove equations \eqref{eq:diff_zero_quantities}.
    \begin{itemize}

        \item Equation \eqref{eq:div_varphi}
        \begin{align*}
            g^{\bf be} \acute{\nabla}_{\bf e} \varphi_{\bf cdb} &=  \acute{\nabla}_{\bf e} \left(g^{\bf be} \varphi_{\bf cdb}\right) + g^{\bf ab} g^{\bf ef} \acute{\vartheta}_{\bf eaf} \varphi_{\bf cdb} - 2\kappa_{\bf e} g^{\bf be} \varphi_{\bf cdb} \\
            &= \acute{\nabla}_{\bf e} \left(g^{\bf be} \acute{\nabla}_{\bf a} V^{\bf a}{}_{\bf bcd} - \kappa_{\bf a} V^{\bf ae}{}_{\bf cd} \right) + g^{\bf ab} g^{\bf ef} \acute{\vartheta}_{\bf eaf} \varphi_{\bf cdb} - 2\kappa_{\bf e} g^{\bf be} \varphi_{\bf cdb} \\
            &= \acute{\nabla}_{\bf e} \acute{\nabla}_{\bf a}  V^{\bf ae}{}_{\bf cd} -\acute{\nabla}_{\bf e} \left( -g^{\bf ef}\acute{\vartheta}_{\bf abf} V^{\bf ab}{}_{\bf cd} + 3\kappa_{\bf a} V^{\bf ae}{}_{\bf cd} \right) + g^{\bf ab} g^{\bf ef} \acute{\vartheta}_{\bf eaf} \varphi_{\bf cdb} - 2\kappa_{\bf e} g^{\bf be} \varphi_{\bf cdb} \\
            &= - \acute{\nabla}_{\bf [a} \acute{\nabla}_{\bf e]}  V^{\bf ae}{}_{\bf cd} + \left( - g^{\bf ep} g^{\bf fq} \acute{\vartheta}_{\bf epq} + 2\kappa_{\bf e} g^{\bf ef} \right) \acute{\vartheta}_{\bf abf} V^{\bf ab}{}_{\bf cd} + g^{\bf ef} V^{\bf ab}{}_{\bf cd} \acute{\nabla}_{\bf e} \acute{\vartheta}_{\bf abf}  \\
            &\quad + g^{\bf ef} \acute{\vartheta}_{\bf abf} \acute{\nabla}_{\bf e} V^{\bf ab}{}_{\bf cd} + 3V^{\bf ae}{}_{\bf cd} \acute{\nabla}_{\bf [a} \kappa_{\bf e]} + 3\kappa_{\bf a} \acute{\nabla}_{\bf e} V^{\bf ea}{}_{\bf cd} + g^{\bf ab} g^{\bf ef} \acute{\vartheta}_{\bf eaf} \varphi_{\bf cdb}  \\
            &\quad - 2\kappa_{\bf e} g^{\bf be} \varphi_{\bf cdb} \\
            &= - \acute{\nabla}_{\bf [a} \acute{\nabla}_{\bf e]}  V^{\bf ae}{}_{\bf cd} + \left( - g^{\bf ep} g^{\bf fq} \acute{\vartheta}_{\bf epq} + 2\kappa_{\bf e} g^{\bf ef} \right) \acute{\vartheta}_{\bf abf} V^{\bf ab}{}_{\bf cd} + g^{\bf ef} V^{\bf ab}{}_{\bf cd} \acute{\nabla}_{\bf e} \acute{\vartheta}_{\bf abf}  \\
            &\quad + g^{\bf ef} \acute{\vartheta}_{\bf abf} \acute{\nabla}_{\bf e} V^{\bf ab}{}_{\bf cd} + 3V^{\bf ae}{}_{\bf cd} \left( \frac{1}{2} \Pi_{\bf ae} - U_{\bf [ae]} \right) + 3\kappa_{\bf a} g^{\bf ab} \left( \varphi_{\bf cdb} - \acute{\vartheta}_{\bf epb} V^{\bf ep}{}_{\bf cd} \right) \\
            &\quad + g^{\bf ab} g^{\bf ef} \acute{\vartheta}_{\bf eaf} \varphi_{\bf cdb} - 2\kappa_{\bf e} g^{\bf be} \varphi_{\bf cdb} \\
            &= - \acute{\nabla}_{\bf [a} \acute{\nabla}_{\bf e]}  V^{\bf ae}{}_{\bf cd} - g^{\bf ep} g^{\bf fq} \acute{\vartheta}_{\bf epq}\acute{\vartheta}_{\bf abf} V^{\bf ab}{}_{\bf cd} + g^{\bf ab} g^{\bf ef} \acute{\vartheta}_{\bf eaf} \varphi_{\bf cdb} + g^{\bf ef} \acute{\vartheta}_{\bf abf} \acute{\nabla}_{\bf e} V^{\bf ab}{}_{\bf cd} \\
            &\quad + 3V^{\bf ae}{}_{\bf cd} \left( \frac{1}{2} \Pi_{\bf ae} - U_{\bf ae} \right) + \kappa_{\bf a} g^{\bf ab} \left( \varphi_{\bf cdb} - \acute{\vartheta}_{\bf epb} V^{\bf ep}{}_{\bf cd} \right) + g^{\bf ef} V^{\bf ab}{}_{\bf cd} \acute{\nabla}_{\bf e} \acute{\vartheta}_{\bf abf} \,.
        \end{align*}
        Moreover,
        \begin{align*}
            - \acute{\nabla}_{\bf [a} \acute{\nabla}_{\bf e]}  V^{\bf ae}{}_{\bf cd} &= - \frac{1}{2} \acute{\Sigma}_{\bf a}{}^{\bf b}{}_{\bf e} \acute{\nabla}_{\bf b} V^{\bf ae}{}_{\bf cd} -\frac{1}{2} \acute{R}^{\bf a}{}_{\bf bae} V^{\bf be}{}_{\bf cd} - \frac{1}{2} \acute{R}^{\bf e}{}_{\bf bae} V^{\bf ab}{}_{\bf cd} \\
            &\quad + \frac{1}{2} \acute{R}^{\bf b}{}_{\bf cae} V^{\bf ae}{}_{\bf bd} + \frac{1}{2} \acute{R}^{\bf b}{}_{\bf dae} V^{\bf ae}{}_{\bf cb} \\
            &= -\frac{1}{2} \acute{\Sigma}_{\bf a}{}^{\bf b}{}_{\bf e} \acute{\nabla}_{\bf b} V^{\bf ae}{}_{\bf cd} - \acute{R}^{\bf a}{}_{\bf bae} V^{\bf be}{}_{\bf cd} - V^{\bf ae}{}_{\bf b[c} \acute{R}^{\bf b}{}_{\bf d]ae} \\
            &= -\frac{1}{2} \acute{\Sigma}_{\bf a}{}^{\bf b}{}_{\bf e} \acute{\nabla}_{\bf b} V^{\bf ae}{}_{\bf cd} - \left(\varpi^{\bf a}{}_{\bf bae} + 2S_{\bf b[a}{}^{\bf af} U_{\bf e]f} \right) V^{\bf be}{}_{\bf cd} \\
            &\quad - V^{\bf ae}{}_{\bf b[c} \left( \varpi^{\bf b}{}_{\bf d]ae} + 2 S_{\bf d][a}{}^{\bf bf}U_{\bf e]f} \right) \\
            &= -\frac{1}{2} \acute{\Sigma}_{\bf a}{}^{\bf b}{}_{\bf e} \acute{\nabla}_{\bf b} V^{\bf ae}{}_{\bf cd} - \left(\varpi^{\bf a}{}_{\bf bae} + S_{\bf ba}{}^{\bf af} U_{\bf ef} \right) V^{\bf be}{}_{\bf cd} \\
            &\quad - V^{\bf ae}{}_{\bf b[c} \left( \varpi^{\bf b}{}_{\bf d]ae} + 2 S_{\bf d]a}{}^{\bf bf}U_{\bf ef} \right) \\
            &= -\frac{1}{2} \acute{\Sigma}_{\bf a}{}^{\bf b}{}_{\bf e} \acute{\nabla}_{\bf b} V^{\bf ae}{}_{\bf cd} - \varpi^{\bf a}{}_{\bf bae} V^{\bf be}{}_{\bf cd} - V^{\bf ae}{}_{\bf b[c} \varpi^{\bf b}{}_{\bf d]ae} + 3 V^{\bf ae}{}_{\bf cd} U_{\bf ae} \,.
        \end{align*}
        Hence the result.
        
        \item Equation \eqref{eq:curl_varpi}. Note that by definition \eqref{eq:def_S} of the tensor $S_{\alpha\beta}{}^{\mu\nu}$,
        \begin{align*}
            \acute{\nabla}_{\bf a} S_{\bf bc}{}^{\bf de} &= - g^{\bf de} \acute{\nabla}_{\bf a} g_{\bf bc} - g_{\bf bc} \acute{\nabla}_{\bf a} g^{\bf de} \\
            &= - g^{\bf de} \left( \acute{\vartheta}_{\bf abc} - 2 \kappa_{\bf a} g_{\bf bc} \right) - g_{\bf bc} \left( -g^{\bf dp} g^{\bf eq} \acute{\vartheta}_{\bf apq} + 2\kappa_{\bf a} g^{\bf de} \right) \\
            &= - g^{\bf de} \acute{\vartheta}_{\bf abc} + g_{\bf bc} g^{\bf dp} g^{\bf eq} \acute{\vartheta}_{\bf apq} \,.
        \end{align*}
        Now, with \eqref{eq:id_SxV},
        \begin{align*}
            \acute{\nabla}_{\bf [a} \varpi^{\bf d}{}_{\bf |e|bc]} &= \acute{\nabla}_{\bf [a} \acute{R}^{\bf d}{}_{\bf |e|bc]} - \Theta \acute{\nabla}_{\bf [a} V^{\bf d}{}_{\bf |e|bc]} - V^{\bf d}{}_{\bf e[ab} (d\Theta)_{\bf c]} \\
            &\quad + 2 S_{\bf e[a}{}^{\bf df} \acute{\nabla}_{\bf b} U_{\bf c]f} - 2 \acute{\nabla}_{\bf [a} \left( S_{\bf |e|b}{}^{\bf df} \right) U_{\bf c]f} \\
            &= \acute{\nabla}_{\bf [a} \acute{R}^{\bf d}{}_{\bf |e|bc]} - \Theta \left( \acute{\nabla}_{\bf [a} V^{\bf d}{}_{\bf |e|bc]} -V^{\bf d}{}_{\bf e[ab} \kappa_{\bf c]} \right) - V^{\bf d}{}_{\bf e[ab}  \left( \varsigma_{\bf c]} - \zeta_{\bf c]} \right)  \\
            &\quad + S_{\bf e[a}{}^{\bf df} \left(\varrho_{\bf bc]f} - V^{\bf p}{}_{\bf |f|bc]} \zeta_{\bf p} \right) + 2 \left( g^{\bf df} \acute{\vartheta}_{\bf [ab|e|} - g^{\bf dp}g^{\bf fq} \acute{\vartheta}_{\bf [a|pq|} g_{\bf b|e|} \right) U_{\bf c]f} \\
            &= \acute{\nabla}_{\bf [a} \acute{R}^{\bf d}{}_{\bf |e|bc]} - \Theta \left( \acute{\nabla}_{\bf [a} V^{\bf d}{}_{\bf |e|bc]} - V^{\bf d}{}_{\bf e[ab} \kappa_{\bf c]} \right)- V^{\bf d}{}_{\bf e[ab}  \varsigma_{\bf c]} \\
            &\quad + S_{\bf e[a}{}^{\bf df} \varrho_{\bf bc]f} + 2 \left( g^{\bf df} \acute{\vartheta}_{\bf [ab|e|} - g^{\bf dp}g^{\bf fq} \acute{\vartheta}_{\bf [a|pq|} g_{\bf b|e|} \right) U_{\bf c]f} \,.
        \end{align*}
        The second Bianchi identity for $\acute{\nabla}$ gives
        \[ \acute{\nabla}_{\bf [a} \acute{R}^{\bf d}{}_{\bf |e|bc]} = - \acute{\Sigma}_{\bf [a}{}^{\bf f}{}_{\bf b} \acute{R}^{\bf d}{}_{\bf |e|c]f} \,, \]
        and one concludes with \eqref{eq:id_curl_V_varphi}.

        \item Equation \eqref{eq:curl_Pi}
        \begin{align*}
            \acute{\nabla}_{\bf [a} \Pi_{\bf bc]} &= \acute{\nabla}_{\bf [a} \acute{\nabla}_{\bf b} \kappa_{\bf c]} + 2 \acute{\nabla}_{\bf [a} U_{\bf bc]} \\
            &= \frac{1}{2} \acute{\Sigma}_{\bf [a}{}^{\bf d}{}_{\bf b} \acute{\nabla}_{\bf |d|} \kappa_{\bf c]} - \frac{1}{2} \acute{R}^{\bf d}{}_{\bf [cab]} \kappa_{\bf d} + 2 \acute{\nabla}_{\bf [a} U_{\bf bc]} \\
            &= \frac{1}{2} \acute{\Sigma}_{\bf [a}{}^{\bf d}{}_{\bf b} \acute{\nabla}_{\bf |d|} \kappa_{\bf c]} - \frac{1}{2} \varpi^{\bf d}{}_{\bf [cab]} \kappa_{\bf d} +  \varrho_{\bf [abc]} \,.
        \end{align*}

        \item Equation \eqref{eq:curl_varrho}
        \begin{align*}
            \acute{\nabla}_{[\bf a} \varrho_{\bf bc]d} &= 2 \acute{\nabla}_{\bf [a} \acute{\nabla}_{\vphantom{[}\bf b} U_{\bf c]d} - \zeta_{\bf e} \acute{\nabla}_{\bf [a} V^{\bf e}{}_{\bf |d|bc]} - V^{\bf e}{}_{\bf d[ab} \acute{\nabla}_{\bf c]} \zeta_{\bf e}  \\
            &= \acute{\Sigma}_{\bf [a}{}^{\bf e}{}_{\bf b} \acute{\nabla}_{\bf |e|} U_{\bf c]d} -\acute{R}^{\bf e}{}_{\bf [cab]} U_{\bf ed} - \acute{R}^{\bf e}{}_{\bf d[ab} U_{\bf c]e} \\
            &\quad - \zeta_{\bf e} \acute{\nabla}_{\bf [a} V^{\bf e}{}_{\bf |d|bc]} - V^{\bf e}{}_{\bf d[ab} \acute{\nabla}_{\bf c]} \zeta_{\bf e} \\
            &= \acute{\Sigma}_{\bf [a}{}^{\bf e}{}_{\bf b} \acute{\nabla}_{\bf |e|} U_{\bf c]d} - \varpi^{\bf e}{}_{\bf [cab]} U_{\bf ed} - \left(\varpi^{\bf e}{}_{\bf d[ab} + \Theta V^{\bf e}{}_{\bf d[ab} \right) U_{\bf c]e} \\
            &\quad - \zeta_{\bf e} \acute{\nabla}_{\bf [a} V^{\bf e}{}_{\bf |d|bc]} - V^{\bf e}{}_{\bf d[ab} \left( \varkappa_{\bf c]e} - \kappa_{\bf c]}\zeta_{\bf e} - \Theta U_{\bf c]e} \right) \\
            &= \acute{\Sigma}_{\bf [a}{}^{\bf e}{}_{\bf b} \acute{\nabla}_{\bf |e|} U_{\bf c]d} - \varpi^{\bf e}{}_{\bf [cab]} U_{\bf ed} - \varpi^{\bf e}{}_{\bf d[ab} U_{\bf c]e} - V^{\bf e}{}_{\bf d[ab} \varkappa_{\bf c]e} \\
            &\quad - \zeta_{\bf e} \left(\acute{\nabla}_{\bf [a} V^{\bf e}{}_{\bf |d|bc]} - V^{\bf e}{}_{\bf d[ab} \kappa_{\bf c]} \right) \,,
        \end{align*}
        and one concludes using \eqref{eq:id_curl_V_varphi}.
        
        \item Equation \eqref{eq:curl_varkappa}
        \begin{align*}
            \acute{\nabla}_{\bf [a} \varkappa_{\bf b]c} &= \acute{\nabla}_{\bf [a} \acute{\nabla}_{\bf b]} \zeta_{\bf c} + \zeta_{\bf c} \acute{\nabla}_{\bf [a} \kappa_{\bf b]} - \kappa_{\bf [a} \acute{\nabla}_{\bf b]} \zeta_{\bf c} + (d\Theta)_{\bf [a} U_{\bf b]c} \\
            &\quad + \Theta \acute{\nabla}_{\bf [a} U_{\bf b]c} - s \acute{\nabla}_{\bf [a} g_{\bf b]c} + g_{\bf c[a} \acute{\nabla}_{\bf b]} s  \\
            &= \frac{1}{2} \acute{\Sigma}_{\bf a}{}^{\bf d}{}_{\bf b} \acute{\nabla}_{\bf d} \zeta_{\bf c} - \frac{1}{2} \acute{R}^{\bf d}{}_{\bf cab} \zeta_{\bf d} + \zeta_{\bf c} \acute{\nabla}_{\bf [a} \kappa_{\bf b]} - \kappa_{\bf [a} \acute{\nabla}_{\bf b]} \zeta_{\bf c} \\
            &\quad + (d\Theta)_{\bf [a} U_{\bf b]c} + \Theta \acute{\nabla}_{\bf [a} U_{\bf b]c} - s \acute{\nabla}_{\bf [a} g_{\bf b]c} + g_{\bf c[a} \acute{\nabla}_{\bf b]} s \\
            &= \frac{1}{2} \acute{\Sigma}_{\bf a}{}^{\bf d}{}_{\bf b} \acute{\nabla}_{\bf d} \zeta_{\bf c} - \frac{1}{2} \left( \varpi^{\bf d}{}_{\bf cab} + \Theta V^{\bf d}{}_{\bf cab} + 2S_{\bf c[a}{}^{\bf de} U_{\bf b]e} \right) \zeta_{\bf d} \\
            &\quad + \zeta_{\bf c} \left( \frac{1}{2} \Pi_{\bf ab} - U_{\bf [ab]} \right) - \kappa_{\bf [a} \left( \varkappa_{\bf b]c} - \Theta U_{\bf b]c} + sg_{\bf b]c} \right) \\
            &\quad + (d\Theta)_{\bf [a} U_{\bf b]c} + \frac{\Theta}{2}  \left( \varrho_{\bf abc} + \zeta_{\bf d} V^{\bf d}{}_{\bf cab} \right) \\
            &\quad - s \left( \acute{\vartheta}_{\bf [ab]c} -2\kappa_{\bf [a}g_{\bf b]c} \right) + g_{\bf c[a} \left( \mho_{\bf b]} + \kappa_{\bf b]} s - U_{\bf b]d} g^{\bf de} \zeta_{\bf e} \right) \\
            &= \frac{1}{2} \acute{\Sigma}_{\bf a}{}^{\bf d}{}_{\bf b} \acute{\nabla}_{\bf d} \zeta_{\bf c} - \frac{1}{2} \varpi^{\bf d}{}_{\bf cab} \zeta_{\bf d} - \zeta_{\bf [a} U_{\bf b]c} + \frac{1}{2} \zeta_{\bf c} \Pi_{\bf ab} \\
            &\quad - \kappa_{\bf [a} \left( \varkappa_{\bf b]c} - \Theta U_{\bf b]c} \right) + (d\Theta)_{\bf [a} U_{\bf b]c} + \frac{\Theta}{2} \varrho_{\bf abc} - s \acute{\vartheta}_{\bf [ab]c} + g_{\bf c[a} \mho_{\bf b]} \\
            &= \frac{1}{2} \acute{\Sigma}_{\bf a}{}^{\bf d}{}_{\bf b} \acute{\nabla}_{\bf d} \zeta_{\bf c} - \frac{1}{2} \varpi^{\bf d}{}_{\bf cab} \zeta_{\bf d} - \varsigma_{\bf [a} U_{\bf b]c} + \frac{1}{2} \zeta_{\bf c} \Pi_{\bf ab} \\
            &\quad - \kappa_{\bf [a} \varkappa_{\bf b]c} + \frac{\Theta}{2} \varrho_{\bf abc} - s \acute{\vartheta}_{\bf [ab]c} + g_{\bf c[a} \mho_{\bf b]} \,.
        \end{align*}

        \item Equation \eqref{eq:curl_mho}
        \begin{align*}
            \acute{\nabla}_{\bf [a} \mho_{\bf b]} &= \acute{\nabla}_{\bf [a} \acute{\nabla}_{\bf b]} s - s \acute{\nabla}_{\bf [a} \kappa_{\bf b]} + \kappa_{\bf [a} \acute{\nabla}_{\bf b]} s + g^{\bf cd} \zeta_{\bf d} \acute{\nabla}_{\bf [a} U_{\bf b]c} \\
            &\quad - g^{\bf ce} g^{\bf df} \zeta_{\bf d} \acute{\nabla}_{\bf [a} g_{\bf |ef|} U_{\bf b]c} + g^{\bf cd} \acute{\nabla}_{\bf [a} \zeta_{\bf |d|} U_{\bf b]c} \\
            &= \frac{1}{2} \acute{\Sigma}_{\bf a}{}^{\bf c}{}_{\bf b} e_{\bf c}(s) - s \left( \frac{1}{2} \Pi_{\bf ab} - U_{\bf [ab]} \right) + \kappa_{\bf [a} \left( \mho_{\bf b]} - U_{\bf b]c} g^{\bf cd} \zeta_{\bf d} \right) \\
            &\quad + \frac{1}{2} g^{\bf cd} \zeta_{\bf d} \varrho_{\bf abc} - g^{\bf ce} g^{\bf df} \zeta_{\bf d} \left(\acute{\vartheta}_{\bf [a|ef|} -2g_{\bf ef} \kappa_{\bf [a} \right) U_{\bf b]c} \\
            &\quad + g^{\bf cd} \left( \varkappa_{\bf [a|d|} - \zeta_{\bf d} \kappa_{\bf [a} + s g_{\bf d[a} \right) U_{\bf b]c} \\
            &= \frac{1}{2} \acute{\Sigma}_{\bf a}{}^{\bf c}{}_{\bf b} e_{\bf c}(s) -  \frac{s}{2} \Pi_{\bf ab} + \kappa_{\bf [a} \mho_{\bf b]} + \frac{1}{2} g^{\bf cd} \zeta_{\bf d} \varrho_{\bf abc} \\
            &\quad - g^{\bf ce} g^{\bf df} \zeta_{\bf d} \acute{\vartheta}_{\bf [a|ef|} U_{\bf b]c} + g^{\bf cd} \varkappa_{\bf [a|d|} U_{\bf b]c} \,.
        \end{align*}
        
        \item Equation \eqref{eq:curl_varsigma}
        \begin{align*}
            \acute{\nabla}_{\bf [a} \varsigma_{\bf b]} &= \acute{\nabla}_{\bf [a} \zeta_{\bf b]} - (d\Theta)_{\bf [a} \kappa_{\bf b]} - \Theta \acute{\nabla}_{\bf [a} \kappa_{\bf b]} - \acute{\nabla}_{[\bf a} \acute{\nabla}_{\bf b]} \Theta \\
            &= \left( \varkappa_{\bf [ab]} - \kappa_{\bf [a}\zeta_{\bf b]} - \Theta U_{\bf [ab]} \right) + \kappa_{\bf [a} (d\Theta)_{\bf b]} - \Theta \left( \frac{1}{2} \Pi_{\bf ab} - U_{\bf [ab]} \right) - \frac{1}{2} \acute{\Sigma}_{\bf a}{}^{\bf c}{}_{\bf b} (d\Theta)_{\bf c} \\
            &= \varkappa_{\bf [ab]} - \kappa_{\bf [a}\varsigma_{\bf b]} - \frac{\Theta}{2} \Pi_{\bf ab} - \frac{1}{2} \acute{\Sigma}_{\bf a}{}^{\bf c}{}_{\bf b} (d\Theta)_{\bf c} \,.
        \end{align*}

        \item Equation \eqref{eq:curl_Sigma}. With the first Bianchi identity for $\acute{\nabla}$,
        \[ \acute{\nabla}_{\bf [a} \acute{\Sigma}_{\bf b}{}^{\bf d}{}_{\bf c]} = - \acute{R}^{\bf d}{}_{\bf [cab]} - \acute{\Sigma}_{\bf [a}{}^{\bf e}{}_{\bf b} \acute{\Sigma}_{\bf c]}{}^{\bf d}{}_{\bf e} = - \varpi^{\bf d}{}_{\bf [cab]} - \acute{\Sigma}_{\bf [a}{}^{\bf e}{}_{\bf b} \acute{\Sigma}_{\bf c]}{}^{\bf d}{}_{\bf e} \,. \]

        \item Equation \eqref{eq:curl_vartheta}
        \begin{align*}
            \acute{\nabla}_{\bf [a} \acute{\vartheta}_{\bf b]cd} &= \acute{\nabla}_{\bf [a} \acute{\nabla}_{\bf b]} g_{\bf cd} + 2 g_{\bf cd} \acute{\nabla}_{\bf [a} \kappa_{\bf b]} - 2\kappa_{\bf [a} \acute{\nabla}_{\bf b]} g_{\bf cd} \\
            &= \frac{1}{2} \acute{\Sigma}_{\bf a}{}^{\bf e}{}_{\bf b} \acute{\nabla}_{\bf e} g_{\bf cd} - \frac{1}{2} \left( \acute{R}^{\bf e}{}_{\bf cab} g_{\bf ed} + \acute{R}^{\bf e}{}_{\bf dab} g_{\bf ce} \right) + 2 g_{\bf cd} \acute{\nabla}_{\bf [a} \kappa_{\bf b]} - 2\kappa_{\bf [a} \acute{\vartheta}_{\bf b]cd} \\
            &= \frac{1}{2} \acute{\Sigma}_{\bf a}{}^{\bf e}{}_{\bf b} \acute{\vartheta}_{\bf ecd} - \frac{1}{2} \Big( \varpi^{\bf e}{}_{\bf cab} g_{\bf ed} + 2 S_{\bf c[a}{}^{\bf ef} U_{\bf b]f} g_{\bf ed} + \varpi^{\bf e}{}_{\bf dab} g_{\bf ce} \\
            &\quad + 2 S_{\bf d[a}{}^{\bf ef} U_{\bf b]f} g_{\bf ce} \Big) + g_{\bf cd} \left( \Pi_{\bf ab} - 2U_{\bf [ab]} \right) - 2\kappa_{\bf [a} \acute{\vartheta}_{\bf b]cd} \\
            &= \frac{1}{2} \acute{\Sigma}_{\bf a}{}^{\bf e}{}_{\bf b} \acute{\vartheta}_{\bf ecd} - g_{\bf e(c} \varpi^{\bf e}{}_{\bf d)ab} + g_{\bf cd} \Pi_{\bf ab} - 2\kappa_{\bf [a} \acute{\vartheta}_{\bf b]cd} \,.
        \end{align*}

        \item Equation \eqref{eq:curl_aleph}
        \begin{align*}
            \acute{\nabla}_{\bf a} \aleph &= g^{\bf bc} \zeta_{\bf c} \acute{\nabla}_{\bf a} \zeta_{\bf b} - \frac{1}{2} \zeta_{\bf b} \zeta_{\bf c} g^{\bf bd} g^{\bf ce} \acute{\nabla}_{\bf a} g_{\bf de} - \Theta \acute{\nabla}_{\bf a} s - s \acute{\nabla}_{\bf a} \Theta \\
            &= g^{\bf bc} \zeta_{\bf c} \left( \varkappa_{\bf ab} - \kappa_{\bf a} \zeta_{\bf b} - \Theta U_{\bf ab} + sg_{\bf ab} \right) - \frac{1}{2} \zeta_{\bf b} \zeta_{\bf c} g^{\bf bd} g^{\bf ce} \left( \acute{\vartheta}_{\bf ade} - 2 \kappa_{\bf a} g_{\bf de} \right) \\
            &\quad - \Theta \left(\mho_{\bf a} + \kappa_{\bf a} s - U_{\bf ab} g^{\bf bc} \zeta_{\bf c} \right) - s (d\Theta)_{\bf a} \\
            &= g^{\bf bc} \zeta_{\bf c} \varkappa_{\bf ab} + s \varsigma_{\bf a} - \frac{1}{2} \zeta_{\bf b} \zeta_{\bf c} g^{\bf bd} g^{\bf ce} \acute{\vartheta}_{\bf ade} - \Theta \mho_{\bf a} \,. \qedhere
        \end{align*}
    \end{itemize}
\end{proof}

\subsubsection{Gauge construction}
\label{sec:gauge_construction}

In order to solve the (CVE), one needs to fix the gauge invariances presented in \Cref{prop:invariance_CVE} to obtain a suitable hyperbolic system. A gauge choice for the (CVE) consists of choosing a conformal factor $\Theta$, a Weyl connection $\widehat{\nabla}$, an orthonormal frame field $(e_{\bf a})$ and a coordinate system $(y^\mu)$. The next proposition states the existence of a gauge choice which is adapted to the conformal boundary.

\begin{prop}[Adaptation of Friedrich \protect{\cite[Section 5.1]{F95}}]
    \label{prop:construction_gauge}
    Let $(\M,g,V,\Theta,\kappa)\in\mathscr{E}$ be a time-oriented and oriented solution to the \eqref{eq:CVE} for $\Lambda=-3$. Assume that there exists a compact spacelike hypersurface $\mathcal{S}_\star$ with (non-empty) boundary $\partial\mathcal{S}_\star$ such that
    \begin{subequations}
        \begin{align}
            \label{eq:hyp_boundary}
            \partial \mathcal{S}_\star &= \mathcal{S}_\star \cap \mathscr{I} \,, \\
            \label{eq:hyp_Sigma}
            n_\star(\Theta) &= 0 \quad \text{on } \partial \mathcal{S}_\star \,,
        \end{align}
    \end{subequations}
    where $n_\star$ is the future-directed unit normal vector field of $\mathcal{S}_\star$ with respect to $g$. The second assumption means that the normal $n_\star$ is tangent to $\mathscr{I}$ on $\partial \mathcal{S}_\star$.
    
    Then for all $p \in \mathcal{S}_\star$ there exists a neighbourhood $\mathcal{U}$ of $p$ in $\M$ and a gauge choice $\left(\breve{\Theta},\widecheck{\nabla},(e_{\bf a}),(y^\mu)\right)$ on $\mathcal{U}$ -- giving rise to a solution $\left(\mathcal{U},\breve{g},\breve{V},\breve{\Theta},\widecheck{\kappa}_{\breve{g}}\right)$ of the (CVE) which is gauge-equivalent to $\left(\mathcal{U},g,V,\Theta,\kappa\right)$ -- verifying the following core properties
    \begin{subequations}
        \label{eq:gauge_core_properties}
        \begin{align}
            \label{eq:core_a}
            \mathcal{U}_\star := \mathcal{S_\star} \cap \mathcal{U} &= \{q \in \mathcal{U} \mid y^0(q) = 0\}  \,, \\
            \label{eq:core_b}
            \breve{\Theta}_\star := \breve{\Theta} |_{\mathcal{U}_\star} &\geq 0 \,, \\
            \label{eq:core_c}
            e_{\bf i}{}^0 |_{\mathcal{U}_\star} &= 0 \,, \\
            \label{eq:core_d}
            y^3 &\geq 0 \,, \\
            \label{eq:core_e}
            e_{\bf 0}{}^\mu &= \delta_{\bf 0}{}^\mu \,, \\
            \label{eq:core_f}
            \widecheck{\Gamma}_{\bf 0}{}^{\bf b}{}_{\bf a} &= 0 \,, \\
            \label{eq:core_g}
            \widecheck{L}_{\bf 0a} &= 0 \,, \\
            \label{eq:core_h}
            (\widecheck{\kappa}_{\breve{g}})_{\bf 0} &= 0 \,, \\
            \label{eq:core_i}
            \breve{g}_{\bf ab} &= \eta_{\bf ab} \,, \\
            \label{eq:core_j}
            \breve{\epsilon}_{\bf 0123} &= +1 \,,
        \end{align}
    \end{subequations}
    and the following additional properties
    \begin{subequations}
        \label{gauge_additional_properties}
        \begin{align}
            \label{gauge_s}
            \widecheck{s}_{\breve{g}}(y^0,\dots,y^3) &= (\widecheck{s}_{\breve{g}})_\star(y^1,y^2,y^3) \,, \\
            \label{gauge_zeta_0}
            (\widecheck{\zeta}_{\breve{g}})_{\bf 0}(y^0,\dots,y^3) &= - (\widecheck{s}_{\breve{g}})_\star(y^1,y^2,y^3) \, y^0 \,, \\
            \label{gauge_Theta}
            \breve{\Theta}(y^0,\dots,y^3) &= \breve{\Theta}_\star(y^1,y^2,y^3) \left( 1 - f_\star(y^1,y^2,y^3) \frac{(y^0)^2}{2} \right) \,, \\
            \label{gauge_IcapU}
            \mathscr{I} \cap \mathcal{U} &= \{ q \in \mathcal{U} \mid y^3(q) = 0\} \,,
        \end{align}
    \end{subequations}
    where $(\widecheck{s}_{\breve{g}})_\star := \widecheck{s}_{\breve{g}}|_{\mathcal{U}_\star}$ and $f_\star := \breve{\Theta}_\star^{-1} (\widecheck{s}_{\breve{g}})_\star$ are smooth functions on $\mathcal{U}_\star$.
\end{prop}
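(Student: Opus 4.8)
The plan is to realise the required gauge as a conformal Gaussian system built from a congruence of timelike conformal geodesics issuing orthogonally from $\mathcal{S}_\star$, using the canonical construction of \Cref{prop_canonical_congruence}, and to extract the additional properties \eqref{gauge_additional_properties} by integrating ordinary differential equations along the congruence that come directly from the (CVE) — thereby replacing Friedrich's reliance on the physical Einstein metric in \Cref{prop_conf_factor} and \Cref{lem:conf_geod_stay_conf_bound}. First I would normalise the sign of the conformal factor: since $\Theta$ vanishes on $\partial\mathcal{S}_\star = \mathcal{S}_\star\cap\mathscr{I}$ and is nowhere zero in the interior of the connected hypersurface $\mathcal{S}_\star$, it has constant sign there, so after possibly applying $\Signchg$ we may assume $\Theta\geq0$ on $\mathcal{S}_\star$, giving \eqref{eq:core_b}. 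Then I would launch the $\widehat{\nabla}$-conformal geodesics with initial position on $\mathcal{S}_\star$, initial tangent $\dot{x}_\star = n_\star$ the future unit normal, and an initial covector $\widehat{\beta}_\star$ whose precise choice is the heart of the argument (see below).

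For the core properties I would pass to the canonical Weyl connection $\widecheck{\nabla}$ and representative $\breve{g}$ of \Cref{prop_canonical_congruence}, so that $\widecheck{\beta}=0$ and $\breve{g}(\dot{x},\dot{x})=-1$ along the curves. Taking $e_{\bf 0}=\dot{x}$ and parallel-transporting a positively oriented $\breve{g}$-orthonormal spatial frame $(e_{\bf 1},e_{\bf 2},e_{\bf 3})$ tangent to $\mathcal{S}_\star$ along the congruence, with $y^0=\tau$ (the parameter, vanishing on $\mathcal{S}_\star$) and spatial coordinates $(y^1,y^2,y^3)$ constant along the curves chosen so that $\mathcal{S}_\star=\{y^3\geq0\}$ locally and $\partial\mathcal{S}_\star=\{y^0=y^3=0\}$, yields at once \eqref{eq:core_a}, \eqref{eq:core_d} and \eqref{eq:core_e}. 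Property \eqref{eq:core_f} is parallel transport, \eqref{eq:core_g} is the condition $\widecheck{L}(\dot{x},.)=0$ of \Cref{prop_canonical_congruence}, and \eqref{eq:core_h} follows from \eqref{derivee_pseudonorme}, since $\breve{g}(\dot{x},\dot{x})\equiv-1$ and $\widecheck\beta=0$ force $(\widecheck{\kappa}_{\breve{g}})_{\bf 0}=0$; this makes parallel transport preserve orthonormality, giving \eqref{eq:core_i} and, with the tangency of the $e_{\bf i}$ on $\mathcal{S}_\star$, also \eqref{eq:core_c} and \eqref{eq:core_j}.

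For the additional properties I would restrict the (CVE) to this gauge and read them as transport equations in $y^0$. Since $(\widecheck{\kappa}_{\breve{g}})_{\bf 0}=0$, equation \eqref{eq:def_zeta} gives $(\widecheck{\zeta}_{\breve{g}})_{\bf 0}=\partial_{y^0}\breve{\Theta}$; the ${\bf 00}$-component of \eqref{eq:CVE_zeta} together with $\widecheck{L}_{\bf 00}=0$ and $\widecheck\Gamma_{\bf 0}{}^{\bf c}{}_{\bf 0}=0$ yields $\partial_{y^0}(\widecheck{\zeta}_{\breve{g}})_{\bf 0}=-\widecheck{s}_{\breve{g}}$, while the ${\bf 0}$-component of \eqref{eq:grad_s} with $\widecheck{L}_{\bf 0\beta}=0$ gives $\partial_{y^0}\widecheck{s}_{\breve{g}}=0$, which is \eqref{gauge_s}. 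Integrating these, once the initial datum $(\widecheck{\zeta}_{\breve{g}})_{\bf 0}|_{\mathcal{U}_\star}=0$ is arranged, produces \eqref{gauge_zeta_0} and then \eqref{gauge_Theta} with $f_\star=\breve{\Theta}_\star^{-1}(\widecheck{s}_{\breve{g}})_\star$. Finally \eqref{gauge_IcapU} follows for $\mathcal{U}$ small directly from \eqref{gauge_Theta}: the factor $1-f_\star(y^0)^2/2$ is then nonzero and $\breve{\Theta}_\star$ vanishes exactly on $\{y^3=0\}$, so \Cref{def:conf_bound_VCE} gives $\mathscr{I}\cap\mathcal{U}=\{\breve{\Theta}=0\}=\{y^3=0\}$, bypassing \Cref{lem:conf_geod_stay_conf_bound} entirely.

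The hard part, and the only place where \eqref{eq:hyp_boundary}--\eqref{eq:hyp_Sigma} are genuinely used, is choosing $\widehat{\beta}_\star$ to secure both $(\widecheck{\zeta}_{\breve{g}})_{\bf 0}|_{\mathcal{U}_\star}=0$ and the smoothness of $f_\star$ up to $\partial\mathcal{S}_\star$. Tracking the gauge transformations \eqref{eq:transfo_zeta} and \eqref{eq:transfo_s} (the conformal factor equals $1$ on $\mathcal{U}_\star$ since $n_\star$ is $g$-unit), one finds $\widecheck{\zeta}_{\breve{g}}=\Omega(\widehat{\zeta}_g+\Theta\widehat{\beta})$, whence $(\widecheck{\zeta}_{\breve{g}})_{\bf 0}=\langle\widehat{\zeta}_g,n_\star\rangle+\Theta\,\langle\widehat{\beta}_\star,n_\star\rangle$ on $\mathcal{U}_\star$, and $\widecheck{s}_{\breve{g}}=\widehat{s}_g+g^{-1}(\widehat{\beta}_\star,\widehat{\zeta}_g)$ on $\partial\mathcal{S}_\star$. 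As $n_\star$ is tangent to $\mathscr{I}$ where $\Theta=0$, the hypothesis \eqref{eq:hyp_Sigma} gives $\langle\widehat{\zeta}_g,n_\star\rangle=n_\star(\Theta)=0$ on $\partial\mathcal{S}_\star$; and since $g^{-1}(d\Theta,d\Theta)=1$ there by \Cref{lem:pseudonorm_dTheta}, $\Theta_\star:=\Theta|_{\mathcal{U}_\star}$ is a local boundary defining function of $\partial\mathcal{S}_\star$ in $\mathcal{S}_\star$. Thus by \Cref{lem:dividing_bdf} the prescription $\langle\widehat{\beta}_\star,n_\star\rangle:=-\langle\widehat{\zeta}_g,n_\star\rangle/\Theta_\star$ extends smoothly and kills $(\widecheck{\zeta}_{\breve{g}})_{\bf 0}$ on all of $\mathcal{U}_\star$, while imposing additionally $g^{-1}(\widehat{\beta}_\star,\widehat{\zeta}_g)=-\widehat{s}_g$ on $\partial\mathcal{S}_\star$ — a condition on the $d\Theta$-direction, orthogonal to $n_\star$ and hence independent of the previous one — forces $\widecheck{s}_{\breve{g}}$ to vanish on $\partial\mathcal{S}_\star$, so that $f_\star=(\widecheck{s}_{\breve{g}})_\star/\breve{\Theta}_\star$ is smooth, again by \Cref{lem:dividing_bdf}. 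I would conclude by shrinking $\mathcal{U}$ so that the congruence is an embedding and the above transport identities hold throughout.
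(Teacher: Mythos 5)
Your proof is correct and follows essentially the same route as the paper's: sign normalisation via $\Signchg$, a congruence of timelike conformal geodesics launched orthogonally from $\mathcal{S}_\star$, the canonical connection and metric of \Cref{prop_canonical_congruence} with conformal Gaussian coordinates and a parallel frame, and the transport equations $e_{\bf 0}(\widecheck{s}_{\breve{g}})=0$, $e_{\bf 0}\big((\widecheck{\zeta}_{\breve{g}})_{\bf 0}\big)=-\widecheck{s}_{\breve{g}}$ read off from the (CVE) to get \eqref{gauge_additional_properties}. The only (cosmetic) difference is that you encode the paper's preliminary Weyl-connection change $\Weylchg_\omega$ of Step~1 in the initial covector $\widehat{\beta}_\star$ of the congruence instead — equivalent by \Cref{prop:confgeo_changweylconn}, and imposing literally the same two conditions, namely $\langle\cdot,n_\star\rangle=-\Theta^{-1}\langle\widehat{\zeta}_g,n_\star\rangle$ on $\mathcal{S}_\star$ and $g^{-1}(\cdot,d\Theta)=-\widehat{s}_g$ on $\partial\mathcal{S}_\star$, with the same appeal to \Cref{lem:dividing_bdf} for smoothness up to the corner.
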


\begin{rems} \,
    \begin{itemize}
        \item Only the additional properties rely on the hypothesis that $(\M,g,V,\Theta,\kappa)$ is a solution to the (CVE).
        
        \item The proof has been improved from Friedrich's original version which is based on \Cref{lem:conf_geod_stay_conf_bound}, $ii)$ of \Cref{prop:CVE_VE} and \Cref{prop_conf_factor}. The modifications enable to treat all conformal geodesics at once and only refers to the (CVE), instead of discriminating the conformal geodesics by whether they start on the conformal boundary or not and referring to the (VE).
        
        \item As pointed out by Friedrich, one has $e_{\bf i}{}^0=0$ on $\mathcal{U}_\star$ but not necessarily for all times $y^0$. Thus the vector fields $(e_{\bf i})$ are in general not tangent to the spacelike hypersurfaces of constant time coordinate $y^0$. They only provide a foliation of $\mathcal{U}$ by 3-dimensional spacelike distributions. \qedhere
    \end{itemize}
\end{rems}

\begin{proof}
    Apart from the modifications mentioned in the above remarks, the proof follows \cite[Section 5.1]{F95}. We divide the proof into five steps.
    \begin{itemize}
        \item \underline{Step 1}: Modifying the quintuple $(\M,g,V,\Theta,\kappa)$ via the gauge invariances
        
        Taking advantage of the gauge invariance of the (CVE) described in \Cref{prop:invariance_CVE}, one can assume that
        \begin{equation}
            \label{conditions}
            \Theta \geq 0 \quad \text{on } \mathcal{S}_\star \,, \qquad \langle\widehat{\zeta}_g,n_\star\rangle = 0 \quad \text{on } \mathcal{S}_\star \,, \qquad \widehat{s}_g = 0 \quad \text{on } \partial\mathcal{S}_\star\,. 
        \end{equation}
        Indeed,
        \begin{itemize}
            \item[\ding{70}] By hypothesis \eqref{eq:hyp_boundary}, the function $\Theta$ does not change sign on $\mathcal{S}_\star$. Thus, one can use the invariance by sign change $\Signchg$ to ensure that the gauge-equivalent quintuple thereby obtained satisfies the inequality in \eqref{conditions}.
            
            \item[\ding{70}] One can find a smooth covector field $\omega$ on $\M$ such that
            \begin{alignat*}{3}
                g^{-1}(\omega,d\Theta) &= -\widehat{s}_g \quad && \text{on } \partial\mathcal{S}_\star \,, \\
                \langle\omega,n_\star\rangle &= - \Theta^{-1} \langle\widehat{\zeta}_g,n_\star\rangle \quad && \text{on } \mathcal{S}_\star \,.
            \end{alignat*}
            Note that the right hand side of the second equation is smooth up to $\partial\mathcal{S}_\star$ thanks to hypotheses \eqref{eq:hyp_boundary}-\eqref{eq:hyp_Sigma}. Applying the Weyl connection change $\Weylchg_\omega$ guarantees that the gauge-equivalent quintuple thereby obtained satisfies the last two equations of \eqref{conditions} thanks to \eqref{eq:transfo_zeta} and \eqref{eq:transfo_s}.
        \end{itemize}
        The two operations described above can be performed in any order and do not conflict by \eqref{eq:transfo_zeta} and \Cref{prop:transfo_s}.
        
        \item \underline{Step 2}: Choice of conformal geodesics and construction of $\mathcal{U}$
        
        Let us assume that \eqref{conditions} hold thanks to step 1. Let $\widehat{\nabla}$ be the Weyl connection associated to $\kappa$ with respect to $g$. From each point $q$ of $\mathcal{S}_\star$, start a future-directed $\widehat{\nabla}$-conformal geodesic $(x,\widehat{\beta})$ with initial parameter $\tau_\star = 0$ and initial data
        \begin{equation}
            \label{initial_cdt_conf_geo}
            x_\star = q \,, \quad \dot{x}_\star = n_\star|_q \,, \quad \widehat{\beta}_\star = 0 \,.
        \end{equation}
        Fix $p \in \mathcal{S}_\star$ and let $\mathcal{U}_\star$ be an open neighbourhood of $p$ in $\mathcal{S}_\star$. Take $\overline{\tau} > 0$ such that all the conformal geodesics starting on $\mathcal{U}_\star$ are defined on $(-\overline{\tau},+\overline{\tau})$ and do not intersect (that is to say no caustic forms). Define $\mathcal{U}$ as the union of the images of the interval $(-\overline{\tau},+\overline{\tau})$ under the curves $x$ of the conformal geodesics described above. Then $\mathcal{U}$ is an open neighbourhood of $p$ in $\M$ such that
        \begin{itemize}
            \item[i)] it is covered by the conformal geodesics $(x,\widehat{\beta})$ defined above,
            \item[ii)] these conformal geodesics do not intersect on $\mathcal{U}$,
            \item[iii)] the intersection of $\mathcal{U}$ with each conformal geodesic $(x,\widehat{\beta})$ is connected.
        \end{itemize}
        As a consequence, one has a smooth  congruence of timelike conformal geodesics on $\mathcal{U}$ .
        
        \item \underline{Step 3}: Constructing $\widecheck{\nabla}$, $\breve{g}$, $(y^\mu)$ and $(e_{\bf a})$
        
        Let $\widecheck{\nabla}$ and $\breve{g}$ be respectively the Weyl connection on $\mathcal{U}$ and the metric on $\mathcal{U}$ conformally related to $g|_{\mathcal{U}}$ given by \Cref{prop_canonical_congruence}. Modifying accordingly the quintuple is done by a confomorphism of conformal factor $\Psi = \breve{\Theta}/\Theta$ and a Weyl connection change of covector field $\omega = \widehat{\beta}$. Note that $\widecheck{\nabla}$ and $\breve{g}$ coincide respectively with $\widehat{\nabla}$ and $g$ on $\mathcal{U}_\star$ due to the particular choice of initial data \eqref{initial_cdt_conf_geo}. 
        
        Up to reducing $\mathcal{U}_\star$, one can take
        \begin{itemize}
            \item[\ding{70}] $(y^i) = (y^1,y^2,y^3)$ a coordinate system on $\mathcal{U}_\star$ such that
            \begin{subequations}
                \label{gauge_choice_coord}
                \begin{align}
                    y^3 &\geq 0 \,, \\
                    \mathcal{U}_\star \cap \partial\mathcal{S}_\star &= \{ q \in \mathcal{U}_\star \mid y^3(q) = 0\} \,,
                \end{align}
            \end{subequations}
            \item[\ding{70}] $(e_{\bf i}) = (e_{\bf 1},e_{\bf 2},e_{\bf 3})$ a frame field on $\mathcal{U}_\star$ such that
            \begin{equation}
                \label{gauge_choice_frame}
                g(e_{\bf i},e_{\bf j}) = \delta_{\bf i,j} \,.
            \end{equation}
        \end{itemize}
        Construct now
        \begin{itemize}
            \item[\ding{70}] a conformal Gaussian coordinate system $(y^\mu)$ on $\mathcal{U}$ by
            \[ y^0(x(\tau)) := \tau \,, \qquad y^i(x(\tau)) := y^i(x(0)) \,, \]
            \item[\ding{70}] a frame field $(e_{\bf a})$ on $\mathcal{U}$ by
            \begin{equation}
                \label{eq_inter}
                e_{\bf 0} := \dot{x} \,, \qquad \widecheck{\nabla}_{\dot{x}} e_{\bf i} = 0 \,.
            \end{equation}
        \end{itemize} 
        
        \item \underline{Step 4}: Verifying the core properties \eqref{eq:gauge_core_properties}
        
        One derives directly \eqref{eq:core_a}, \eqref{eq:core_c}, \eqref{eq:core_d} and \eqref{eq:core_e} while \eqref{eq:core_b} comes from \eqref{conditions} ensured in step 1. Moreover, \Cref{prop_canonical_congruence} and \eqref{eq_inter} give
        \begin{align*}
            \widecheck{\nabla}_{e_{\bf 0}} e_{\bf a} &= 0 \,, \\
            \widecheck{L}(e_{\bf 0},.) &= 0 \,, \\
            \breve{g}(e_{\bf 0},e_{\bf 0}) &= -1 \,,
        \end{align*}
        or in frame components
        \begin{align*}
            \widecheck{\Gamma}_{\bf 0}{}^{\bf b} {}_{\bf a} &= 0 \,, \\
            \widecheck{L}_{\bf 0a} &= 0 \,, \\
            \breve{g}_{\bf 00} &= -1 \,.
        \end{align*}        
        Hence \eqref{eq:core_f} and \eqref{eq:core_g}. Additionally,
        \[ e_{\bf 0}(\breve{g}_{\bf ab}) = \widecheck{\nabla}_{e_{\bf 0}} \breve{g}_{\bf ab} = -2 (\widecheck{\kappa}_{\breve{g}})_{\bf 0} \breve{g}_{\bf ab} \,, \]
        where $\widecheck{\kappa}_{\breve{g}}$ is the covector field associated to $\widecheck{\nabla}$ with respect to $\breve{g}$. Since $\breve{g}_{\bf 00} = -1$, one deduces that $(\widecheck{\kappa}_{\breve{g}})_{\bf 0} = 0$ and that the functions $\breve{g}_{\bf ab}$ are constant along the conformal geodesics. By \eqref{initial_cdt_conf_geo} and \eqref{gauge_choice_frame}, it follows that $(e_{\bf a})$ is a $\breve{g}$-orthonormal frame field, that is \eqref{eq:core_i}. Finally, up to switching $e_{\bf 1}$ and $e_{\bf 2}$, one can assume that
        \[ \breve{\epsilon}_{\bf 0123} = +1 \,. \]

        \item \underline{Step 5}: Verifying the additional properties \eqref{gauge_additional_properties}

        Assume that $(\M,g,V,\Theta,\kappa)$ is solution to the (CVE). Then so it is of the quintuple $\left(\mathcal{U},\breve{g},\breve{V},\breve{\Theta},\widecheck{\kappa}_{\breve{g}}\right)$ because it is gauge-equivalent to $\left(\mathcal{U},g,V,\Theta,\kappa\right)$.

        By the transformation laws \eqref{eq:transfo_zeta},
        \[ \widecheck{\zeta}_{\breve{g}} = \widehat{\zeta}_{\breve{g}} + \breve{\Theta} \widehat{\beta} = \frac{\breve{\Theta}}{\Theta} \widehat{\zeta}_g + \breve{\Theta} \widehat{\beta} \,. \]
        In particular, since $\breve{\Theta}_\star = \Theta_\star := \Theta|_{\mathcal{U}_\star}$ and $\widehat{\beta}_\star = 0$, one has $(\widecheck{\zeta}_{\breve{g}})_\star = (\widehat{\zeta}_g)_\star$. It follows from \eqref{eq:CVE_const_bis}, derived from the (CVE), that $(\widehat{s}_g)_\star = (\widecheck{s}_{\breve{g}})_\star$.

        By \Cref{lem:link_zero_quantities_CVE} $i)$, all zero quantities vanish on $\mathcal{U}$. Using the core properties \eqref{eq:gauge_core_properties}, one deduces that
        \begin{alignat*}{6}
            0 &= \mho_{\bf 0} &&= e_{\bf 0}(\widecheck{s}_{\breve{g}}) \,, \\
            0 &= \varkappa_{\bf 00} && = e_{\bf 0}((\widecheck{\zeta}_{\breve{g}})_{\bf 0}) + \widecheck{s}_{\breve{g}} \,, \\
            0 &= \varsigma_{\bf 0} && =  (\widecheck{\zeta}_{\breve{g}})_{\bf 0} - (d\breve{\Theta})_{\bf 0} \,.
        \end{alignat*}
        Yet with \eqref{conditions} ensured in step 1, one has
        \begin{align*}
            (\widecheck{\zeta}_{\breve{g}})_{\bf 0} = (\widehat{\zeta}_g)_{\bf 0} = \langle \widehat{\zeta}_g,n_\star\rangle = 0 \quad \text{on } \mathcal{U}_\star\,, \\
            f_\star := \breve{\Theta}_\star^{-1} (\widecheck{s}_{\breve{g}})_\star = \Theta_\star^{-1} (\widehat{s}_g)_\star \in \mathcal{C}^\infty(\mathcal{U}_\star,\R) \,.
        \end{align*}
        Hence \eqref{gauge_s}-\eqref{gauge_Theta}. 

        Finally, let us show \eqref{gauge_IcapU}. If $\sup_{\overline{\mathcal{U}_\star}} f_\star > 0$, let us reduce $\overline{\tau}$ such that
        \[ \overline{\tau} < \sqrt{\frac{2}{\sup_{\overline{\mathcal{U}_\star}} f_\star}} \,, \]
        so that the inequality
        \begin{equation}
            \label{gauge_ineq}
            1 - f_\star(y^1,y^2,y^3) \frac{(y^0)^2}{2} > 0
        \end{equation}
        is enforced everywhere on $\mathcal{U}$. For $q \in \mathcal{U}$, one has
        \begin{alignat*}{7}
            y^3(q) = 0 & \iff && \exists \, (x,\widehat{\beta}) \text{ in the congruence and} &\quad & \text{[by Gaussian coord.]} \\
            & && \tau \in (-\overline{\tau},+\overline{\tau}) \text{ such that} && \\
            &&& q = x(\tau) \text{ and } y^3(x_\star) = 0 && \\
            & \iff && \exists \, (x,\widehat{\beta}) \text{ in the congruence and} && [\text{by } \eqref{gauge_choice_coord}] \\
            &&& \tau \in (-\overline{\tau},+\overline{\tau}) \text{ such that} && \\
            &&& q = x(\tau) \text{ and } x_\star \in \partial \mathcal{S}_\star \cap \mathcal{U}_\star && \\
            & \iff && \exists \, (x,\widehat{\beta}) \text{ in the congruence and} && [\text{by } \eqref{eq:hyp_boundary}] \\
            &&& \tau \in (-\overline{\tau},+\overline{\tau}) \text{ such that} && \\
            &&& q = x(\tau) \text{ and } x_\star \in \partial \mathscr{I} \cap \mathcal{U}_\star && \\
            & \iff && \exists \, (x,\widehat{\beta}) \text{ in the congruence and} && [\text{by def. of } \mathscr{I}]  \\
            &&& \tau \in (-\overline{\tau},+\overline{\tau}) \text{ such that } q = x(\tau) && \\
            &&&  \text{and } \breve{\Theta}_\star(y^1(x_\star),y^2(x_\star),y^3(x_\star)) = 0 && \\
            & \iff && \breve{\Theta}_\star(y^1(q),y^2(q),y^3(q)) && [\text{by Gaussian coord.}] \\
            & \iff && \breve{\Theta}(y^0(q),y^1(q),y^2(q),y^3(q)) = 0 && [\text{by } \eqref{gauge_ineq},\eqref{gauge_IcapU}] \\
            & \iff && q \in \mathscr{I} \cap \mathcal{U} && [\text{by def. of } \mathscr{I}] \,.
        \end{alignat*}
        Consequently, \eqref{gauge_IcapU} holds. \qedhere
    \end{itemize}
\end{proof}

\subsubsection{Derivation of the evolution system}
\label{sec:derivation_evolution_system}

The gauge constructed in the section above leads to a convenient evolution system based on some of the frame components of the zero quantities defined by \eqref{eq:zero_quantities}. Near the boundary, particular care must be paid to the treatment of the zero quantities $\varphi_{\bf abc}$, which lead to the evolution on the rescaled Weyl tensor, in order to prove the local existence and uniqueness of solutions for both the evolution system and the propagation of the constraints which will be studied in~\Cref{sec:propagation_constraints}. More precisely, the evolution system must be designed so that the constraints propagate tangentially to the boundary. This ensures that no additional first-order boundary conditions are required. For a detailed explanation of this difficulty, one can refer to \cite[Example 32]{ST12}, which discusses a model equation, or \cite{HLSW20}, which is set in the simpler framework of AdS space.

\begin{lem}
    \label{lem:weyl_system}
    Under the hypotheses of \Cref{prop:construction_gauge} and in the gauge constructed therein, the following differential equations on some frame components of the electromagnetic decomposition $(\breve{E},\breve{H})$ of the Weyl candidate $\breve{V}$ in the adapted frame $(e_{\bf 0},(e_{\bf i}))$ hold on $\mathcal{U}$
    \begin{subequations}
        \label{weyl_system}
        \begin{align}
            \label{evol_EAB}
            e_{\bf 0}\left(\breve{E}_{\bf AB}\right) + \slashed{\epsilon}_{\bf (A|}{}^{\bf C} e_{\bf C}\left(\breve{H}_{\bf 3|B)}\right) - \slashed{\epsilon}_{\bf (A|}{}^{\bf C}  e_{\bf 3} \left(\breve{H}_{\bf |B)C}\right) + N_{\bf AB}\left(\breve{E},\breve{H}\right) &= 0 \,, \\
            \label{evol_E3A}
            e_{\bf 0}\left(\breve{E}_{\bf 3A}\right)
            + \slashed{\epsilon}^{\bf BC} e_{\bf B}\left(\breve{H}_{\bf CA}\right) + N_{\bf A}\left(\breve{E},\breve{H}\right) &= 0 \,, \\
            \label{evol_HAB}
            e_{\bf 0}\left(\breve{H}_{\bf AB}\right) - \slashed{\epsilon}_{\bf (A|}{}^{\bf C} e_{\bf C}\left(\breve{E}_{\bf 3|B)}\right) + \slashed{\epsilon}_{\bf (A|}{}^{\bf C}  e_{\bf 3} \left(\breve{E}_{\bf |B)C}\right) + N_{\bf AB}\left(\breve{H},-\breve{E}\right) &= 0 \,, \\
            \label{evol_H3A}
            e_{\bf 0}\left(\breve{H}_{\bf 3A}\right)
            - \slashed{\epsilon}^{\bf BC} e_{\bf B}\left(\breve{E}_{\bf CA}\right) + N_{\bf A}\left(\breve{H},-\breve{E}\right) &= 0 \,,
        \end{align}
    \end{subequations}
    where
    \[ \slashed{\epsilon}_{\bf AB} := \breve{\epsilon}_{\bf 0AB3} \,, \]
    and $N_{\bf AB}(X,Y)$, $N_{\bf A}(X,Y)$ are linear terms in $(X,Y)$ of the form
    \[ \widecheck{\Gamma} \times X + \widecheck{\kappa}_{\breve{g}} \times X + \widecheck{\Gamma} \times Y + \widecheck{\kappa}_{\breve{g}} \times Y \,, \]
    with $N_{\bf AB}$ symmetric.
\end{lem}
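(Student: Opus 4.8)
The plan is to extract the whole system from the single zero quantity $\varphi_{\bf cdb}$ defined in \eqref{eq:ZQ_varphi}, exploiting the self-duality of Weyl candidates in dimension $4$. Since $\left(\mathcal{U},\breve{g},\breve{V},\breve{\Theta},\widecheck{\kappa}_{\breve{g}}\right)$ is gauge-equivalent to $\left(\mathcal{U},g,V,\Theta,\kappa\right)$ and hence a solution to the (CVE), \Cref{lem:link_zero_quantities_CVE} guarantees that all zero quantities vanish on $\mathcal{U}$; in particular $\varphi_{\bf cdb}=0$, which is exactly the divergence equation $\widecheck{\nabla}_{\bf a}\breve{V}^{\bf a}{}_{\bf bcd}=(\widecheck{\kappa}_{\breve{g}})_{\bf a}\breve{V}^{\bf a}{}_{\bf bcd}$. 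Setting $J_{\bf cdb}:=(\widecheck{\kappa}_{\breve{g}})_{\bf a}\breve{V}^{\bf a}{}_{\bf bcd}$, which is a Cotton candidate by the remark closing \Cref{sec:duality_dim_4}, the duality identity \eqref{div_Q_div_Q_star}---valid as stated because $\widecheck{\nabla}$ is a genuine Weyl connection---shows that the dual candidate $\breve{V}\star$ satisfies the companion equation $\widecheck{\nabla}_{\bf a}(\breve{V}\star)^{\bf a}{}_{\bf bcd}=(\widecheck{\kappa}_{\breve{g}})_{\bf a}(\breve{V}\star)^{\bf a}{}_{\bf bcd}$ as well. These two divergence equations will yield, respectively, the evolution of the electric part $\breve{E}$ and of the magnetic part $\breve{H}$.

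Next I would use the electromagnetic decomposition of $\breve{V}$ relative to the timelike vector $e_{\bf 0}$, following \Cref{lem:decomposition_Weyl_cand} with $e_\perp=e_{\bf 0}$ and $\varepsilon=-1$: the symmetric trace-free spatial tensor $\breve{E}_{\bf ij}=\breve{V}^{\bf 0}{}_{\bf i0j}$ and its dual $\breve{H}_{\bf ij}=(\breve{V}\star)^{\bf 0}{}_{\bf i0j}$ determine $\breve{V}$ entirely, the orientation normalisation $\breve{\epsilon}_{\bf 0123}=+1$ from \eqref{eq:core_j} fixing $\slashed{\epsilon}_{\bf AB}:=\breve{\epsilon}_{\bf 0AB3}$. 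I would then read off the frame components of $\varphi_{\bf cdb}=0$ and of its dual, isolating in the contraction the index ${\bf a}={\bf 0}$. Thanks to the gauge conditions $\widecheck{\Gamma}_{\bf 0}{}^{\bf b}{}_{\bf a}=0$ in \eqref{eq:core_f}, $(\widecheck{\kappa}_{\breve{g}})_{\bf 0}=0$ in \eqref{eq:core_h} and $e_{\bf 0}{}^\mu=\delta_{\bf 0}{}^\mu$ in \eqref{eq:core_e}, this term reduces to the bare directional derivative $e_{\bf 0}(\breve{V}^{\bf 0}{}_{\bf bcd})=\partial_{y^0}(\breve{V}^{\bf 0}{}_{\bf bcd})$, producing the clean time derivatives $e_{\bf 0}(\breve{E})$ and $e_{\bf 0}(\breve{H})$, while every connection coefficient $\widecheck{\Gamma}$ with a nonzero first slot and every factor of $\widecheck{\kappa}_{\breve{g}}$ is swept into the lower-order terms $N_{\bf AB}$, $N_{\bf A}$.

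The heart of the computation is to convert the spatial part of the divergence into the curl-type principal operators displayed in \eqref{weyl_system}. The remaining contraction indices ${\bf a}={\bf i}\in\{{\bf 1},{\bf 2},{\bf 3}\}$ contribute spatial derivatives of the components of $\breve{V}$; rewriting the purely spatial components through the Hodge dual in the $3$-dimensional distribution and then splitting the spatial frame index into ${\bf A}\in\{{\bf 1},{\bf 2}\}$ and ${\bf 3}$ transforms them into the operators $\slashed{\epsilon}_{\bf (A|}{}^{\bf C}e_{\bf C}(\cdot)-\slashed{\epsilon}_{\bf (A|}{}^{\bf C}e_{\bf 3}(\cdot)$ and $\slashed{\epsilon}^{\bf BC}e_{\bf B}(\cdot)$ of \eqref{evol_EAB}--\eqref{evol_H3A}. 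Choosing the free external indices ${\bf bcd}$ so that the resulting equation is trace-free and symmetric in the relevant pair reproduces each of the four equations, and the symmetry of $N_{\bf AB}$ follows from the symmetry of $\breve{E}$, $\breve{H}$ together with the antisymmetry of $\slashed{\epsilon}$. The main obstacle is precisely this bookkeeping: one must verify that the divergence equation for $\breve{V}$ feeds \emph{only} $\breve{H}$-derivatives (beyond the time derivative) into \eqref{evol_EAB}--\eqref{evol_E3A}, and dually that $\breve{V}\star$ feeds only $\breve{E}$-derivatives into \eqref{evol_HAB}--\eqref{evol_H3A}, so that the principal part decouples into the stated form; correctly tracking the $\varepsilon=-1$ signs and the $\slashed{\epsilon}$ contractions is where the delicacy lies, but no idea beyond the duality of \Cref{sec:duality_dim_4} is needed.
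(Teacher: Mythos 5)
Your proposal is correct and follows essentially the same route as the paper: it invokes \Cref{lem:link_zero_quantities_CVE} to get $\varphi_{\bf abc}=0$, extracts the components $\varphi_{\bf 0(AB)}$ and $\varphi_{\bf 0A3}$ using the electromagnetic decomposition and the gauge conditions \eqref{eq:core_e}, \eqref{eq:core_f}, \eqref{eq:core_h} to isolate the clean $e_{\bf 0}$-derivatives, and obtains the $\breve{H}$-equations from the dual divergence equation via \eqref{div_Q_div_Q_star} and \eqref{sym_dual_EH}. The only remaining work is the index bookkeeping you correctly flag, which the paper carries out explicitly and which presents no further obstruction.
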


\begin{proof}
    By \Cref{lem:link_zero_quantities_CVE} $i)$, all zero quantities with $\acute{\nabla}=\widecheck{\nabla}$, $U_{\alpha\beta} = \widecheck{L}_{\alpha\beta}$, $\zeta_\alpha = (\widecheck{\zeta}_{\breve{g}})_\alpha$ and $s=\widecheck{s}_{\breve{g}}$ vanish identically. We will used here that $\varphi_{\bf abc} = 0$  for some frame indices and the electromagnetic decomposition $(\breve{E},\breve{H})$ of the Weyl candidate $\breve{V}$.
    
    \begin{itemize}
    
        \item Consider $\varphi_{\bf 0(AB)} = 0$. Let us compute first $\varphi_{\bf 0AB}$. One has
        \begin{align*}
            \widecheck{\nabla}_{\bf 0} \breve{V}^{\bf 0}{}_{\bf B0A} &= e_{\bf 0}(\breve{V}^{\bf 0}{}_{\bf B0A}) = e_{\bf 0}(\breve{E}_{\bf AB}) \,, \\
            \widecheck{\nabla}_{\bf C} \breve{V}^{\bf C}{}_{\bf B0A} &= e_{\bf C}(\breve{V}^{\bf C}{}_{\bf B0A}) + \widecheck{\Gamma} \times \breve{V} = \slashed{\epsilon}_{\bf B}{}^{\bf C} e_{\bf C}(\breve{H}_{\bf 3A}) + \widecheck{\Gamma} \times \breve{E} + \widecheck{\Gamma} \times \breve{H} \,, \\
            \widecheck{\nabla}_{\bf 3} \breve{V}^{\bf 3}{}_{\bf B0A} &= e_{\bf 3}(\breve{V}^{\bf 3}{}_{\bf B0A}) + \widecheck{\Gamma} \times \breve{V} = - \slashed{\epsilon}_{\bf B}{}^{\bf C} e_{\bf 3}(\breve{H}_{\bf AC}) + \widecheck{\Gamma} \times \breve{E} + \widecheck{\Gamma} \times \breve{H} \,, \\
            (\widecheck{\kappa}_{\breve{g}})_{\bf a} \breve{V}^{\bf a}{}_{\bf A0B} &= \widecheck{\kappa}_{\breve{g}} \times \breve{V} = \widecheck{\kappa}_{\breve{g}} \times \breve{E} + \widecheck{\kappa}_{\breve{g}} \times \breve{H} \,.
        \end{align*}
        Taking the symmetric part in $({\bf A},{\bf B})$ gives \eqref{evol_EAB}.
        
        \item Consider $\varphi_{\bf 0A3} = 0$. One has
        \begin{align*}
            \widecheck{\nabla}_{\bf 0} \breve{V}^{\bf 0}{}_{\bf 30A} &= e_{\bf 0}(\breve{V}^{\bf 0}{}_{\bf 30A}) = e_{\bf 0}(\breve{E}_{\bf 3A}) \,, \\
            \widecheck{\nabla}_{\bf B} \breve{V}^{\bf B}{}_{\bf 30A} &= e_{\bf B}(\breve{V}^{\bf B}{}_{\bf 30A}) + \widecheck{\Gamma} \times \breve{V} = \slashed{\epsilon}^{\bf BC} e_{\bf B}(\breve{H}_{\bf CA}) + \widecheck{\Gamma} \times \breve{E} + \widecheck{\Gamma} \times \breve{H} \,, \\
            \widecheck{\nabla}_{\bf 3} \breve{V}^{\bf 3}{}_{\bf 30A} &= e_{\bf 3}(\breve{V}^{\bf 3}{}_{\bf 30A}) + \widecheck{\Gamma} \times \breve{V} = + \widecheck{\Gamma} \times \breve{E} + \widecheck{\Gamma} \times \breve{H} \,, \\
            (\widecheck{\kappa}_{\breve{g}})_{\bf a} \breve{V}^{\bf a}{}_{\bf 30A} &= \widecheck{\kappa}_{\breve{g}} \times \breve{V} =  \widecheck{\kappa}_{\breve{g}} \times \breve{E} + \widecheck{\kappa}_{\breve{g}} \times \breve{H} \,.
        \end{align*}
        This gives \eqref{evol_E3A}.
    \end{itemize}
    The other two equations are deduced by symmetry thanks to \eqref{div_Q_div_Q_star} and \eqref{sym_dual_EH}.
\end{proof}

\begin{lem}
   Under the hypotheses of \Cref{prop:construction_gauge} and in the gauge constructed therein, the following transport equations hold on $\mathcal{U}$
    \begin{subequations}
        \label{transport_system}
        \begin{align}
            \label{transport_a}
            e_{\bf 0}(\widecheck{\Gamma}_{\bf i} {}^{\bf k} {}_{\bf j}) + \widecheck{\Gamma}_{\bf i} {^{\bf l}} {}_{\bf 0} \widecheck{\Gamma}_{\bf l} {}^{\bf k} {}_{\bf j} &=  \breve{\Theta} \breve{\epsilon}_{\bf j}{}^{\bf kl}  \breve{H}_{\bf li} + \delta_{\bf j}{}^{\bf k} \widecheck{L}_{\bf i0} \,, \\
            \label{transport_b}
            e_{\bf 0}(\widecheck{\Gamma}_{\bf i} {}^{\bf j} {}_{\bf 0}) + \widecheck{\Gamma}_{\bf i} {^{\bf l}} {}_{\bf 0} \widecheck{\Gamma}_{\bf l} {}^{\bf j} {}_{\bf 0}  &=  \breve{\Theta} \, \delta^{\bf jk} \breve{E}_{\bf ik} + \delta^{\bf jk} \widecheck{L}_{\bf ik} \,, \\
            \label{transport_c}
            e_{\bf 0}(\widecheck{\Gamma}_{\bf i} {}^{\bf 0} {}_{\bf j}) + \widecheck{\Gamma}_{\bf i} {^{\bf l}} {}_{\bf 0} \widecheck{\Gamma}_{\bf l} {}^{\bf 0} {}_{\bf j}  &=  \breve{\Theta} \, \breve{E}_{\bf ij} + \widecheck{L}_{\bf ij} \,, \\
            \label{transport_d}
            e_{\bf 0}(\widecheck{\Gamma}_{\bf i}{}^{\bf 0}{}_{\bf 0}) + \widecheck{\Gamma}_{\bf i}{}^{\bf j}{}_{\bf 0} \widecheck{\Gamma}_{\bf j}{}^{\bf 0}{}_{\bf 0} &= \widecheck{L}_{\bf i0} \,, \\
            \label{transport_e}
            e_{\bf 0}((\widecheck{\kappa}_{\breve{g}})_{\bf i}) + \widecheck{\Gamma}_{\bf i}{}^{\bf j}{}_{\bf 0} (\widecheck{\kappa}_{\breve{g}})_{\bf j} &= \widecheck{L}_{\bf i0} \,, \\
            \label{transport_f}
            e_{\bf 0}(\widecheck{L}_{\bf ij}) + \widecheck{\Gamma}_{\bf i}{}^{\bf k}{}_{\bf 0} \widecheck{L}_{\bf kj} &= \breve{\epsilon}_{\bf j}{}^{\bf kl}(\widecheck{\zeta}_{\breve{g}})_{\bf k}  \breve{H}_{\bf li} + (\widecheck{\zeta}_{\breve{g}})_{\bf 0} \breve{E}_{\bf ij} \,, \\
            \label{transport_g}
            e_{\bf 0}(\widecheck{L}_{\bf i0}) + \widecheck{\Gamma}_{\bf i}{}^{\bf k}{}_{\bf 0} \widecheck{L}_{\bf k0} &= (\widecheck{\zeta}_{\breve{g}})_{\bf l} \delta^{\bf lk} \breve{E}_{\bf ki} \,, \\
            \label{transport_h}
            e_{\bf 0}((\widecheck{\zeta}_{\breve{g}})_{\bf i}) &= 0 \,, \\
            \label{transport_i}
            e_{\bf 0}(e_{\bf i}{}^\mu) + \widecheck{\Gamma}_{\bf i}{}^{\bf j}{}_{\bf 0} e_{\bf j}{}^\mu &= - \widecheck{\Gamma}_{\bf i}{}^{\bf 0}{}_{\bf 0} \delta_{\bf 0}{}^\mu \,,
        \end{align}
    \end{subequations}
    where $\breve{\epsilon}_{\bf ijk} := \breve{\epsilon}_{\bf 0ijk}$ and $(\breve{E},\breve{H})$ is the decomposition of the Weyl candidate $\breve{V}$ in the frame field $(e_{\bf 0},(e_{\bf i}))$ which is adapted to $\mathcal{S}_\star$ given by \Cref{lem:decomposition_Weyl_cand}.
\end{lem}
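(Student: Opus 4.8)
The plan is to obtain each equation in \eqref{transport_system} by reading off a single, carefully chosen frame component of one of the \emph{zero quantities} of \Cref{def:zero_quantities} -- all of which vanish identically on $\mathcal{U}$ by \Cref{lem:link_zero_quantities_CVE} $i)$, since $\left(\mathcal{U},\breve{g},\breve{V},\breve{\Theta},\widecheck{\kappa}_{\breve{g}}\right)$ solves the (CVE) -- and then simplifying with the core gauge properties \eqref{eq:gauge_core_properties}. The organising principle is that each equation is a transport equation along $e_{\bf 0}=\dot{x}$, so I always select the index pattern in which $e_{\bf 0}$ occupies the differentiating slot; the vanishing coefficients $\widecheck{\Gamma}_{\bf 0}{}^{\bf b}{}_{\bf a}=0$ from \eqref{eq:core_f} then collapse every covariant $e_{\bf 0}$-derivative to a plain frame derivative.

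First I would treat \eqref{transport_a}--\eqref{transport_d}, which all come from the zero quantity $\varpi^{\bf a}{}_{\bf b0i}=0$, i.e.\ from \eqref{eq:CVE_Gamma} with $\bf c=\bf 0$. Using the frame expression of the Riemann tensor recalled after \Cref{def:zero_quantities} together with $\acute{\Sigma}=0$ and \eqref{eq:core_f}, the component $\acute{R}^{\bf a}{}_{\bf b0i}$ reduces to $e_{\bf 0}(\widecheck{\Gamma}_{\bf i}{}^{\bf a}{}_{\bf b})+\widecheck{\Gamma}_{\bf i}{}^{\bf l}{}_{\bf 0}\widecheck{\Gamma}_{\bf l}{}^{\bf a}{}_{\bf b}$, and specialising $({\bf a},{\bf b})$ to $({\bf k},{\bf j})$, $({\bf j},{\bf 0})$, $({\bf 0},{\bf j})$ and $({\bf 0},{\bf 0})$ gives the four left-hand sides. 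On the right, the Schouten term $2S_{\bf b[0}{}^{\bf ae}\widecheck{L}_{\bf i]e}$ collapses via \eqref{eq:def_S}, $\breve{g}_{\bf ab}=\eta_{\bf ab}$ and $\widecheck{L}_{\bf 0a}=0$ into the terms $\delta_{\bf j}{}^{\bf k}\widecheck{L}_{\bf i0}$, $\delta^{\bf jk}\widecheck{L}_{\bf ik}$, $\widecheck{L}_{\bf ij}$ and $\widecheck{L}_{\bf i0}$ respectively, while $\breve{\Theta}\,\breve{V}^{\bf a}{}_{\bf b0i}$ is rewritten through the electromagnetic decomposition of $\breve{V}$ from the Remark following \Cref{lem:decomposition_Weyl_cand}: the components with two spatial and one normal index produce the magnetic term $\breve{\Theta}\breve{\epsilon}_{\bf j}{}^{\bf kl}\breve{H}_{\bf li}$, and $\breve{V}^{\bf 0}{}_{\bf b0i}$ produces the electric term $\breve{\Theta}\breve{E}_{\bf ij}$ (with $\breve{V}^{\bf 0}{}_{\bf 00i}=0$ by the algebraic symmetries of a Weyl candidate).

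The remaining equations follow the same recipe. Equation \eqref{transport_h} is $\varkappa_{\bf 0i}=0$ (from \eqref{eq:CVE_zeta}) once $(\widecheck{\kappa}_{\breve{g}})_{\bf 0}=0$, $\widecheck{L}_{\bf 0i}=0$ and $\breve{g}_{\bf 0i}=0$ are used; \eqref{transport_e} is the $\bf 0\bf i$-component of \eqref{eq:ZQ_Pi} (equivalently the antisymmetric Schouten identity \eqref{eq:antisym_part_schouten}), in which the extra term $\widecheck{\Gamma}_{\bf i}{}^{\bf j}{}_{\bf 0}(\widecheck{\kappa}_{\breve{g}})_{\bf j}$ arises precisely as $-\widecheck{\nabla}_{\bf i}(\widecheck{\kappa}_{\breve{g}})_{\bf 0}$ because $(\widecheck{\kappa}_{\breve{g}})_{\bf 0}=0$; and \eqref{transport_f}, \eqref{transport_g} are $\varrho_{\bf 0ij}=0$ and $\varrho_{\bf 0i0}=0$ (from \eqref{eq:CVE_L}), where the crucial subtlety is that $\widecheck{\nabla}_{\bf i}\widecheck{L}_{\bf 0j}$ and $\widecheck{\nabla}_{\bf i}\widecheck{L}_{\bf 00}$ do \emph{not} vanish despite $\widecheck{L}_{\bf 0a}=0$ -- the Weyl Schouten tensor being non-symmetric -- and contribute exactly the terms $\widecheck{\Gamma}_{\bf i}{}^{\bf k}{}_{\bf 0}\widecheck{L}_{\bf kj}$ and $\widecheck{\Gamma}_{\bf i}{}^{\bf k}{}_{\bf 0}\widecheck{L}_{\bf k0}$, the term $\zeta_{\bf d}\breve{V}^{\bf d}{}_{\bf j0i}$ again splitting into electric and magnetic pieces. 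Finally, \eqref{transport_i} is the $\bf 0\bf i$-component of the torsion-free condition $\acute{\Sigma}_{\bf 0}{}^{\bf c}{}_{\bf i}=0$ (\eqref{eq:ZQ_Sigma}) written in coordinates, using $e_{\bf 0}{}^\mu=\delta_{\bf 0}{}^\mu$ from \eqref{eq:core_e} so that $[e_{\bf 0},e_{\bf i}]^\mu=e_{\bf 0}(e_{\bf i}{}^\mu)$.

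The routine but error-prone heart of the argument, and the step I expect to be the main obstacle, is the sign- and index-bookkeeping in the electromagnetic decomposition: correctly identifying which spatial/normal index patterns of $\breve{V}^{\bf a}{}_{\bf bcd}$ correspond to $\breve{E}$ versus $\breve{H}$, and carrying the factors $\varepsilon=\breve{g}(e_{\bf 0},e_{\bf 0})=-1$ and $g^{\bf 00}=-1$ through the pair symmetry $\breve{V}_{\bf abcd}=\breve{V}_{\bf cdab}$ and the dual relation $M_{\bf ijk}=-\varepsilon\breve{\epsilon}_{\bf ij}{}^{\bf l}\breve{H}_{\bf lk}$ so that the magnetic contribution emerges with the exact coefficient $\breve{\epsilon}_{\bf j}{}^{\bf kl}\breve{H}_{\bf li}$. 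A secondary check is that the gauge identities $\widecheck{\Gamma}_{\bf 0}{}^{\bf b}{}_{\bf a}=0$, $\widecheck{L}_{\bf 0a}=0$, $(\widecheck{\kappa}_{\breve{g}})_{\bf 0}=0$ and $\breve{g}_{\bf ab}=\eta_{\bf ab}$ hold on all of $\mathcal{U}$ rather than only on $\mathcal{U}_\star$, which is guaranteed since \eqref{eq:core_f}--\eqref{eq:core_j} are asserted throughout $\mathcal{U}$ in \Cref{prop:construction_gauge}.
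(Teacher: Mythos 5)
Your proposal is correct and follows essentially the same route as the paper: the paper's proof is precisely the list of equivalences $\varpi^{\bf a}{}_{\bf b0i}=0\leftrightarrow$\eqref{transport_a}--\eqref{transport_d}, $\Pi_{\bf 0i}=0\leftrightarrow$\eqref{transport_e}, $\varrho_{\bf 0ia}=0\leftrightarrow$\eqref{transport_f}--\eqref{transport_g}, $\varkappa_{\bf 0i}=0\leftrightarrow$\eqref{transport_h} and $\widecheck{\Sigma}_{\bf 0}{}^{\bf a}{}_{\bf i}e_{\bf a}{}^\mu=0\leftrightarrow$\eqref{transport_i}, simplified with the core gauge properties \eqref{eq:gauge_core_properties}, exactly as you describe. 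The details you flag as delicate (the non-vanishing of $\widecheck{\nabla}_{\bf i}\widecheck{L}_{\bf 0a}$ despite $\widecheck{L}_{\bf 0a}=0$, the origin of the $\widecheck{\Gamma}_{\bf i}{}^{\bf j}{}_{\bf 0}(\widecheck{\kappa}_{\breve g})_{\bf j}$ term, and the electromagnetic bookkeeping) all check out.
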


\begin{proof}
    By \Cref{lem:link_zero_quantities_CVE} $i)$, all the zero quantities for $\acute{\nabla}=\widecheck{\nabla}$, $U_{\alpha\beta} = \widecheck{L}_{\alpha\beta}$, $\zeta_\alpha = (\widecheck{\zeta}_{\breve{g}})_\alpha$ and $s=\widecheck{s}_{\breve{g}}$ vanish identically. In the gauge constructed in \Cref{prop:construction_gauge}, one has
    \begin{alignat*}{12}
        \varpi^{\bf k}{}_{\bf j0i} = 0 &\iff \eqref{transport_a} \,, &\qquad \varpi^{\bf j}{}_{\bf 00i} = 0 &\iff \eqref{transport_b} \,, &\qquad
        \varpi^{\bf 0}{}_{\bf j0i} = 0 &\iff \eqref{transport_c} \,, \\ \varpi^{\bf 0}{}_{\bf 00i} = 0 &\iff \eqref{transport_d} \,, &\qquad
        \Pi_{\bf 0i} = 0 &\iff \eqref{transport_e} \,, &\; \varrho_{\bf 0ij} = 0 &\iff \eqref{transport_f} \,, \\
        \varrho_{\bf 0i0} = 0 &\iff \eqref{transport_g} \,, &\qquad \varkappa_{\bf 0i} = 0 &\iff \eqref{transport_h} \,, &\qquad
        \widecheck{\Sigma}_{\bf 0}{}^{\bf a}{}_{\bf i} e_{\bf a}{}^\mu = 0 &\iff \eqref{transport_i} \,. \tag*{\qedhere}
    \end{alignat*}
\end{proof}

Define the unknown $\underline{u} : \mathcal{U} \to \R^m$ with $m:=90$ by
\begin{align*}
    \underline{u} := \Big(\breve{E}_{\bf 11}, \breve{E}_{\bf 22},\breve{H}_{\bf 12},\breve{H}_{\bf 21},\breve{H}_{\bf 11}, \breve{H}_{\bf 22},\breve{E}_{\bf 12},\breve{E}_{\bf 21}, \breve{E}_{\bf 31},\breve{E}_{\bf 32}, \breve{H}_{\bf 31}, \breve{H}_{\bf 32}, \\
    \widecheck{\Gamma}_{\bf i}{}^{\bf j}{}_{\bf k}, \widecheck{\Gamma}_{\bf i}{}^{\bf j}{}_{\bf 0},\widecheck{\Gamma}_{\bf i}{}^{\bf 0}{}_{\bf j},\widecheck{\Gamma}_{\bf i}{}^{\bf 0}{}_{\bf 0}, (\widecheck{\kappa}_{\breve{g}})_{\bf i}, \widecheck{L}_{\bf ij},\widecheck{L}_{\bf i0},(\widecheck{\zeta}_{\breve{g}})_{\bf i},e_{\bf i}{}^\mu \Big) \,.
\end{align*}
The frame indices in the first line are written explicitly to specify their order in the tuple, as opposed to the second line. This function $\underline{u}$ is a solution to the evolution system composed of equations \eqref{evol_EAB}-\eqref{evol_H3A} and \eqref{transport_a}-\eqref{transport_i} where $\widecheck{s}_{\breve{g}}$, $(\widecheck{\zeta}_{\breve{g}})_{\bf 0}$ and $\breve{\Theta}$ are given by \eqref{gauge_s}-\eqref{gauge_Theta}.

\bigbreak

The remaining part of this section is dedicated to writing the evolution system verified by $\underline{u}$ outside of the geometric framework. Set
\begin{align*}
    \R^4_{3\geq0} &:= \{y=(y^0,y^1,y^2,y^3) \in \R^4 \mid y^3 \geq 0 \} \,, \\
    \partial \R^4_{3\geq0} &=  \{y\in\R^4_{3\geq0} \,|\,y^3=0\} \,.
\end{align*}
The open subset $\mathcal{U}_\star$ is identified through the coordinates to an open subset of $\{ y \in \R^4_{3\geq0} \mid y^0 = 0 \}$, which is still denoted by $\mathcal{U}_\star$. In the same way, the subset $\mathcal{U}$ is identified to $(-\overline{\tau},+\overline{\tau}) \times \mathcal{U}_\star \subset \R^4_{3\geq0}$ for some $\overline{\tau} > 0$.

The evolution system on an arbitrary function $\underline{u} : \mathcal{U} \to \R^m$ whose components are denoted as follows
    \begin{align*}
        \underline{u} = \Big(E_{\bf 11},E_{\bf 22},H_{\bf 12},H_{\bf 21},H_{\bf 11},H_{\bf 22},E_{\bf 12},E_{\bf 21},E_{\bf 31},E_{\bf 32},H_{\bf 31},H_{\bf 32}, \\
        \acute{\Gamma}_{\bf i}{}^{\bf j}{}_{\bf k}, \acute{\Gamma}_{\bf i}{}^{\bf j}{}_{\bf 0},\acute{\Gamma}_{\bf i}{}^{\bf 0}{}_{\bf j},\acute{\Gamma}_{\bf i}{}^{\bf 0}{}_{\bf 0}, \kappa_{\bf i}, U_{\bf ij},U_{\bf i0},\zeta_{\bf i},e_{\bf i}{}^\mu \Big) \,,
    \end{align*}
is the following quasilinear system of partial differential equations
    \begin{subequations}
        \label{evol_system}
        \begin{align}
            \label{evol_system_EAB}
            \partial_0 E_{\bf AB} + \slashed{\epsilon}_{\bf (A|}{}^{\bf C} e_{\bf C}{}^\mu \partial_\mu H_{\bf 3|B)} - \frac{1}{2} \slashed{\epsilon}_{\bf (A|}{}^{\bf C}  e_{\bf 3}{}^\mu  \partial_\mu \left(H_{\bf |B)C}+H_{\bf C|B)} \right) + N_{\bf AB}\left(E,H\right) &= 0 \,, \\
            \partial_0 E_{\bf 3A} + \frac{1}{2} \slashed{\epsilon}^{\bf BC} e_{\bf B}{}^\mu \partial_\mu \left(H_{\bf CA} + H_{\bf AC} \right) + N_{\bf A}\left(E,H\right) &= 0 \,, \\
            \partial_0 H_{\bf AB} - \slashed{\epsilon}_{\bf (A|}{}^{\bf C} e_{\bf C}{}^\mu \partial_\mu E_{\bf 3|B)} + \frac{1}{2} \slashed{\epsilon}_{\bf (A|}{}^{\bf C}  e_{\bf 3}{}^\mu \partial_\mu \left(E_{\bf |B)C} + E_{\bf C|B)} \right) + N_{\bf AB}\left(H,-E\right) &= 0 \,, \\
            \label{evol_system_H3A}
            \partial_0 H_{\bf 3A}
            - \frac{1}{2} \slashed{\epsilon}^{\bf BC} e_{\bf B}{}^\mu \partial_\mu \left(E_{\bf CA}+E_{\bf AC}\right) +  N_{\bf A}\left(H,-E\right) &= 0 \,, \\
            \partial_0 \acute{\Gamma}_{\bf i} {}^{\bf k} {}_{\bf j} + \acute{\Gamma}_{\bf i} {^{\bf l}} {}_{\bf 0} \acute{\Gamma}_{\bf l} {}^{\bf k} {}_{\bf j} - \Theta \epsilon_{\bf j}{}^{\bf kl} H_{\bf li} - \delta_{\bf j}{}^{\bf k} U_{\bf i0} &= 0 \,, \\
            \partial_0 \acute{\Gamma}_{\bf i} {}^{\bf j} {}_{\bf 0} + \acute{\Gamma}_{\bf i} {^{\bf l}} {}_{\bf 0} \acute{\Gamma}_{\bf l} {}^{\bf j} {}_{\bf 0}  - \Theta \, \delta^{\bf jk} E_{\bf ik} - \delta^{\bf jk} U_{\bf ik} &= 0 \,, \\
            \partial_0 \acute{\Gamma}_{\bf i} {}^{\bf 0} {}_{\bf j} + \acute{\Gamma}_{\bf i} {^{\bf l}} {}_{\bf 0} \acute{\Gamma}_{\bf l} {}^{\bf 0} {}_{\bf j}  -  \Theta \, E_{\bf ij} - U_{\bf ij} &= 0 \,, \\
            \partial_0 \acute{\Gamma}_{\bf i}{}^{\bf 0}{}_{\bf 0} + \acute{\Gamma}_{\bf i}{}^{\bf j}{}_{\bf 0} \acute{\Gamma}_{\bf j}{}^{\bf 0}{}_{\bf 0} - U_{\bf i0} &= 0 \,, \\
            \partial_0 \kappa_{\bf i} + \acute{\Gamma}_{\bf i}{}^{\bf j}{}_{\bf 0} \kappa_{\bf j} - U_{\bf i0} &= 0 \,, \\
            \partial_0 U_{\bf ij} + \acute{\Gamma}_{\bf i}{}^{\bf k}{}_{\bf 0} U_{\bf kj} - \epsilon_{\bf j}{}^{\bf kl} \zeta_{\bf k} H_{\bf li} - \zeta_{\bf 0} E_{\bf ij} &= 0 \,, \\
            \partial_0 U_{\bf i0} + \acute{\Gamma}_{\bf i}{}^{\bf k}{}_{\bf 0} U_{\bf k0} - \zeta_{\bf l} \delta^{\bf lk} E_{\bf ki} &= 0 \,, \\
            \partial_0 \zeta_{\bf i} &= 0 \,, \\
            \partial_0 e_{\bf i}{}^\mu + \acute{\Gamma}_{\bf i}{}^{\bf j}{}_{\bf 0} e_{\bf j}{}^\mu + \acute{\Gamma}_{\bf i}{}^{\bf 0}{}_{\bf 0} \delta_{\bf 0}{}^\mu &= 0 \,,
        \end{align}
    \end{subequations}
where
    \begin{itemize}
        \item we recall that ${\bf A}, {\bf B}, {\bf C} \in \{{\bf 1},{\bf 2}\}$, ${\bf i}, {\bf j}, {\bf k},{\bf l} \in \{{\bf 1},{\bf 2},{\bf 3}\}$, $\mu \in \{0,1,2,3\}$,
        \item $\slashed{\epsilon}_{\bf AB}$ and $\epsilon_{\bf ijk}$ are the totally antisymmetric symbols with
        \[ \slashed{\epsilon}_{\bf 12} = +1 \,, \qquad \epsilon_{\bf 123} = +1 \,, \]
        \item $N_{\bf A}$ and $N_{\bf AB}$ are the same as in \Cref{lem:weyl_system} with $\widecheck{\Gamma}$ and $\widecheck{\kappa}_{\breve{g}}$ substituted by $\acute{\Gamma}$ and $\kappa$,
        \item $s$, $\zeta_{\bf 0}$ and $\Theta$ are the smooth functions defined by
        \begin{subequations}
        \label{redef_szeta0Theta}
        \begin{align}
            s(y^0,y^1,y^2,y^3) &:= s_\star(y^1,y^2,y^3) \,, \\
            \zeta_{\bf 0}(y^0,y^1,y^2,y^3) &:= -s_\star(y^1,y^2,y^3) y^0 \,, \\
            \label{redef_Theta}
            \Theta(y^0,y^1,y^2,y^3) &:= \Theta_\star(y^1,y^2,y^3) - s_\star(y^1,y^2,y^3) \frac{(y^0)^2}{2} \,,
        \end{align}
        \end{subequations}
        for two smooth functions $s_\star$, $\Theta_\star \in \mathcal{C}^\infty(\mathcal{U}_\star,\R)$.
    \end{itemize}

\begin{rem}
    The only differences between \eqref{weyl_system}-\eqref{transport_system} and \eqref{evol_system}, aside from the substitution of the fields, are the symmetrisation of $H_{\bf AB}$ and $E_{\bf AB}$ in \eqref{evol_E3A} and \eqref{evol_H3A} respectively and the development of the derivatives with respect to the frame field in partial derivatives.
\end{rem}
    
\noindent The evolution system \eqref{evol_system} can be recast under the form
\begin{equation}
    \label{evol_system_form}
    A^\mu (\underline{u}) \partial_\mu \underline{u} = \mathcal{Q}(y,s_\star,\Theta_\star,\underline{u}) \,,
\end{equation}
where
\begin{itemize}
    \item $\mathcal{Q}(y,s_\star,\Theta_\star,.)$ is a polynomial function of degree 2 in $\underline{u}$ with no constant term and coefficients depending smoothly on $y$, $s_\star$ and $\Theta_\star$,
    \item $(A^\mu)$ are affine functions with values into symmetric matrices of size $m$ of the form
    \begin{equation}
        \label{def_Amu}
        A^0(\underline{u}) := I + e_{\bf j}{}^0 A^{\bf j} \,, \qquad A^i(\underline{u}) := e_{\bf j}{}^i A^{\bf j} \,,
    \end{equation}
    More precisely, the constant symmetric matrices $A^{\bf j}$ are defined by
    \[ A^{\bf 1} := 
        \begin{pmatrix}
            0 & 0 & \begin{pmatrix}
                        0 & -D_{\bf 1} \\
                        C_{\bf 1} & 0
                    \end{pmatrix} & 0 \\
            0 & 0 & \begin{pmatrix}
                        D_{\bf 1} & 0 \\
                        0 & -C_{\bf 1}
                    \end{pmatrix} & 0 \\
            \begin{pmatrix}
                0 & C_{\bf 1}^\intercal \\
                -D_{\bf 1}^\intercal & 0
            \end{pmatrix} &
            \begin{pmatrix}
                D_{\bf 1}^\intercal & 0 \\
                0 & -C_{\bf 1}^\intercal
            \end{pmatrix} & 0 & 0 \\
            0 & 0 & 0 & 0
        \end{pmatrix} \,, \]
    \[ A^{\bf 2} := 
        \begin{pmatrix}
            0 & 0 & \begin{pmatrix}
                        0 & -D_{\bf 2} \\
                        C_{\bf 2} & 0
                    \end{pmatrix} & 0 \\
            0 & 0 & \begin{pmatrix}
                        D_{\bf 2} & 0 \\
                        0 & -C_{\bf 2}
                    \end{pmatrix} & 0 \\
            \begin{pmatrix}
                0 & C_{\bf 2}^\intercal \\
                -D_{\bf 2}^\intercal & 0
            \end{pmatrix} &
            \begin{pmatrix}
                D_{\bf 2}^\intercal & 0 \\
                0 & -C_{\bf 2}^\intercal
            \end{pmatrix} & 0 & 0 \\
            0 & 0 & 0 & 0
        \end{pmatrix} \,, \]
    \[ A^{\bf 3} := 
        \begin{pmatrix}
            \begin{pmatrix}
                0 & N \\
                N^\intercal & 0
            \end{pmatrix} & 0 & 0 & 0 \\
            0 & \begin{pmatrix}
                    0 & -N \\
                    -N^\intercal & 0
                \end{pmatrix} &0 & 0 \\
            0 & 0 & 0 & 0 \\
            0 & 0 & 0 & 0
        \end{pmatrix} \,, \]
    with the following matrices of size $2$ 
    \[  C_{\bf 1} := \frac{1}{2} \begin{pmatrix}
                               1 & 0 \\
                               1 & 0
                            \end{pmatrix} \,, \quad
        D_{\bf 1} := \begin{pmatrix}
                    0 & 0 \\
                    0 & 1 \\
                \end{pmatrix} \,, \quad
        C_{\bf 2} := \frac{1}{2} \begin{pmatrix}
                               0 & -1 \\
                               0 & -1
                            \end{pmatrix} \,, \]
    \[  D_{\bf 2} := \begin{pmatrix}
                    -1 & 0 \\
                    0 & 0 \\
                \end{pmatrix} \,, \quad   
        N := \frac{1}{2} \begin{pmatrix}
                                -1 & -1 \\
                                +1 & +1
                            \end{pmatrix} \,. \]
\end{itemize}

\begin{lem}
    \label{lemma_bounds_matrices}
    For all $\underline{u} \in \R^m$,
    \begin{alignat*}{9}
        \left(1-c^0(\underline{u})\right) I &\leq A^0(\underline{u}) \leq && \left(1+c^0(\underline{u})\right) I \,, \\
        -c^i(\underline{u}) \, I &\leq A^i(\underline{u}) \leq && \, c^i(\underline{u}) \, I \,,
    \end{alignat*}
    where $c^\mu(\underline{u}) := |(e_{\bf 1}{}^\mu,e_{\bf 2}{}^\mu,e_{\bf 3}{}^\mu)|_2$, $|.|_2$ denoting the euclidean norm on $\R^3$.
\end{lem}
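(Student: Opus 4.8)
The plan is to reduce the two matrix inequalities to a single operator-norm bound on the constant symmetric matrices $A^{\bf j}$. Recall that for a symmetric matrix $M$ the two-sided bound $-c\,I \le M \le c\,I$ is equivalent to all eigenvalues of $M$ lying in $[-c,c]$, i.e. to $\|M\|_{\mathrm{op}} \le c$, where $\|\cdot\|_{\mathrm{op}}$ is the spectral (operator) norm. Writing $e^\mu := (e_{\bf 1}{}^\mu, e_{\bf 2}{}^\mu, e_{\bf 3}{}^\mu) \in \R^3$, so that $c^\mu(\underline{u}) = |e^\mu|_2$, and introducing the linear map
\begin{equation*}
    A(v) := v_{\bf j}\, A^{\bf j} \qquad \text{for } v = (v_{\bf 1}, v_{\bf 2}, v_{\bf 3}) \in \R^3 \,,
\end{equation*}
the definitions \eqref{def_Amu} read $A^0(\underline{u}) = I + A(e^0)$ and $A^i(\underline{u}) = A(e^i)$. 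Since each $A^{\bf j}$ is symmetric, so is $A(v)$, and it therefore suffices to establish the homogeneous bound
\begin{equation}
    \label{eq:symbol_bound}
    \|A(v)\|_{\mathrm{op}} \le |v|_2 \qquad \text{for all } v \in \R^3 \,.
\end{equation}
Indeed, \eqref{eq:symbol_bound} applied to $v = e^i$ gives $-c^i\,I \le A^i(\underline{u}) \le c^i\,I$ at once, while applied to $v = e^0$ it gives $-c^0\,I \le A(e^0) \le c^0\,I$, and adding $I$ yields the stated bounds on $A^0(\underline{u})$.

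By linearity of $A$ and homogeneity of both sides, \eqref{eq:symbol_bound} is equivalent to $\|A(n)\|_{\mathrm{op}} \le 1$ for every unit vector $n \in \R^3$, $|n|_2 = 1$ (for $v \neq 0$ one writes $A(v) = |v|_2\, A(v/|v|_2)$). As $A(n)$ is symmetric, this is in turn equivalent to the matrix inequality $A(n)^2 \le I$. Geometrically this is the statement that the principal symbol of the evolution system \eqref{evol_system} has all its characteristic speeds in $[-1,1]$, reflecting that the characteristic cone of the rescaled Bianchi subsystem \eqref{evol_EAB}--\eqref{evol_H3A} is the light cone of $\breve{g}$.

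To verify $A(n)^2 \le I$ I would exploit the block structure recorded after \eqref{def_Amu}. The matrices $A^{\bf 1}, A^{\bf 2}, A^{\bf 3}$ vanish identically outside the $12$-dimensional block carrying the components of $(\breve{E},\breve{H})$, so the computation reduces to that block. There one forms $A(n)^2$ and checks, using the explicit $2\times 2$ building blocks $C_{\bf 1}, D_{\bf 1}, C_{\bf 2}, D_{\bf 2}, N$ together with the products $C_{\bf A} C_{\bf B}^\intercal$, $D_{\bf A} D_{\bf B}^\intercal$, $N N^\intercal$ and their cross terms (with ${\bf A},{\bf B}\in\{{\bf 1},{\bf 2}\}$), that $A(n)^2$ is an orthogonal projection: its eigenvalues are $0$ and $1$, so $A(n)^2 \le I$ and $A(n)$ has spectrum contained in $\{-1,0,1\}$.

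The main obstacle is precisely this block computation. It is elementary but bookkeeping-heavy, and the algebra must be organised so that the various quadratic-in-$n$ off-diagonal contributions recombine into the projection; the antisymmetric pairing coming from $\slashed{\epsilon}$ between the $E$- and $H$-components is what ensures the cross terms cancel correctly. Once \eqref{eq:symbol_bound} is established, the two displayed inequalities follow immediately from the reduction of the first paragraph, and the lemma is proved.
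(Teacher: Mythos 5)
Your reduction is sound and is essentially the paper's argument: both proofs come down to bounding the spectrum of the symmetric matrix $A(v) = v_{\bf j} A^{\bf j}$, and the passage from $\|A(v)\|_{\mathrm{op}} \le |v|_2$ to the two displayed inequalities (adding $I$ for the $A^0$ case) is exactly right. However, the concrete claim on which you hang the final step is false: $A(n)^2$ is \emph{not} an orthogonal projection, and the spectrum of $A(n)$ is not contained in $\{-1,0,1\}$. Take $n=(1,0,0)$, so $A(n)=A^{\bf 1}$. The $(1,1)$ block of $(A^{\bf 1})^2$ contains $C_{\bf 1} C_{\bf 1}^\intercal = \tfrac14\begin{pmatrix}1&1\\1&1\end{pmatrix}$, which has eigenvalue $\tfrac12$; hence $A^{\bf 1}$ has eigenvalues $\pm 1/\sqrt{2}$ and $(A^{\bf 1})^2$ has the eigenvalue $\tfrac12\notin\{0,1\}$. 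The factor $\tfrac12$ in the definition of $C_{\bf 1}, C_{\bf 2}$ (coming from the symmetrisation of $H_{\bf CA}+H_{\bf AC}$ in the evolution equations) is responsible for this intermediate characteristic speed.

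The paper's proof records the full spectrum of $xA^{\bf 1}+yA^{\bf 2}+zA^{\bf 3}$: eigenvalue $0$ with multiplicity $m-8$, and $\pm|(x,y)|_2/\sqrt{2}$, $\pm|(x,y,z)|_2$ each with multiplicity $2$. The spectral radius is therefore still $|(x,y,z)|_2 = |v|_2$, so the inequality $A(n)^2 \le I$ you actually need does hold, and your lemma follows. The fix is minimal: replace the goal ``$A(n)^2$ is a projection'' by ``all eigenvalues of $A(n)^2$ lie in $[0,1]$'', and carry out the block computation to exhibit the eigenvalues $0$, $|(n_{\bf 1},n_{\bf 2})|_2^2/2$ and $|n|_2^2$ of $A(n)^2$. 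As written, the verification you describe would not go through, and a reader attempting it would be stuck at the unexpected eigenvalue $1/2$.
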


\begin{proof}
    For all $(x,y,z) \in \R^3$, the eigenvalues of the symmetric matrix $x A^{\bf 1} + y A^{\bf 2} + z A^{\bf 3}$ are $0$ (of multiplicity $m-8$) and $\pm |(x,y)|_2/\sqrt{2}$, $\pm |(x,y,z)|_2$ (each of multiplicity $2$). Hence the result. 
\end{proof}

\subsubsection{Solving the evolution system}
\label{sec:solving_evolution_system}

Let $\mathcal{U}_\star$ be a relatively compact open subset of $\{ y \in \R^4_{3\geq 0} \, | \, y^0 = 0\}$ and define
\[ \mathcal{B}_\star := \partial \R^4_{3\geq0} \cap \mathcal{U}_\star\,, \qquad \mathcal{B} := \R_+ \times \mathcal{B}_\star \subset \partial\R^4_{3\geq0} \,. \]
We will assume that $\mathcal{B}_\star$ is connected. Let $\Theta_\star$, $s_\star \in \mathcal{C}^\infty(\mathcal{U}_\star,\R)$ be two smooth functions such that $\Theta_\star$ is a boundary defining function of $\mathcal{B}_\star$ and $f_\star := s_\star/\Theta_\star$ is uniformly bounded from above on $\mathcal{U}_\star$.

The aim of this section is to solve the evolution system \eqref{evol_system} with initial data on $\mathcal{U}_\star$ as an initial value problem if $\mathcal{B}_\star = \varnothing$ and as an initial boundary value problem otherwise, in which case suitable boundary conditions on $\mathcal{B}$ have to be determined.

To this end, we will rely on the standard theory of symmetric hyperbolic systems and more particularly to
\begin{itemize}
    \item Taylor \cite[Chapter 5]{T91} for quasilinear symmetric hyperbolic systems with initial data on closed manifolds or on $\R^n$,
    \item Rauch \cite{R85} for linear symmetric hyperbolic systems with a characteristic boundary matrix of constant multiplicity,
    \item Guès \cite{G90} and Secchi \cite{S96} for quasilinear symmetric hyperbolic systems with a characteristic boundary matrix of constant multiplicity.
\end{itemize}
The boundary conditions arising in this context are called \emph{maximal dissipative}. They are designed to compensate the energy gain from incoming characteristic fields at the boundary by the energy loss due to outgoing characteristic fields. The term `maximal' does not mean that the dissipation of energy is maximal - there may be no dissipation at all. It refers to the fact that a non-negativity property is valid on a vector subspace of the boundary which is maximal for inclusion.

However, the aforementioned results on quasilinear symmetric hyperbolic systems cannot be directly applied to our evolution system \eqref{evol_system} for two reasons. First, the system is not quite hyperbolic since $A^0(\underline{u})$ is not positive-definite for all $\underline{u} \in \R^m$. Secondly, while the boundary matrix is characteristic, it is not of constant multiplicity, not even in a neighbourhood of a point of interest in $\R^m$. These two difficulties have been overcome by Friedrich leading to \Cref{evol_solve_away} and \Cref{evol_solve_near} below.

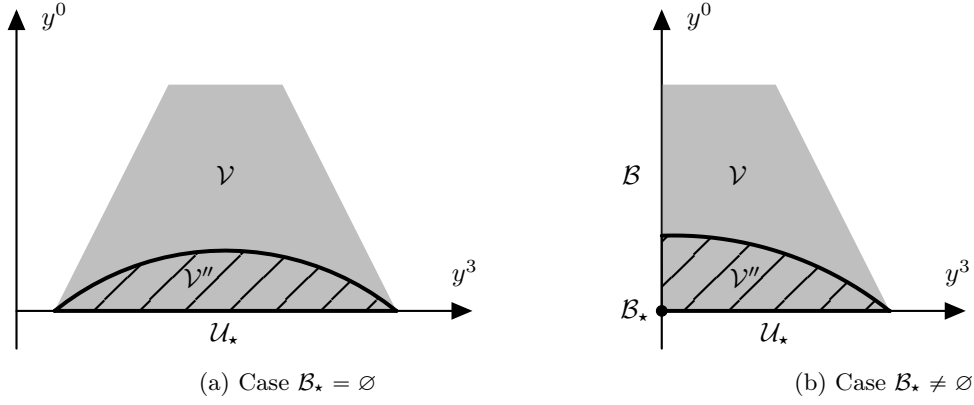
\begin{figure}
	\centering
    \hspace{0.05\textwidth}
	\begin{subfigure}{.45\textwidth}
        \begin{tikzpicture}[line cap=round,line join=round,>=triangle 45,x=1cm,y=1cm]
        \draw [->,line width=0.8pt] (0,-0.5) -- (0,4);
        \draw [->,line width=0.8pt] (0,0) -- (6,0);
        \fill[line width=0.8pt,color=gray,fill=gray,fill opacity=0.5] (0.5,0) -- (5,0) -- (3.5,3) -- (2,3) -- cycle;
        \draw[line width=0.8pt, color=black, fill=black, pattern={Lines[angle=45,distance=12]}, pattern color=black] {[smooth,samples=50,domain=0.5:5] plot(\x,{-2.76+sqrt(3.56^2-(\x-2.75)^2)})} -- (5,0) {[smooth,samples=50,domain=5:0.5] -- plot(\x,{0})} -- (0.5,-0.0011777875332374954) -- cycle;
        \draw[shift={(2.75,-2.76)},line width=1.5pt]  plot[domain=0.8875653454992822:2.254027308090511,variable=\t]({3.56*cos(\t r)},{3.56*sin(\t r)});
        \draw [line width=1.5pt] (0.5,0)-- (5,0);
        \draw (0.5,3.9) node {$y^0$};
        \draw (5.9,0.5) node {$y^3$};
        \draw (2.75,1.8) node {$\mathcal{V}$};
        \draw (2.75,-0.3) node {$\mathcal{U}_\star$};
        \draw (2.4,0.45) node {$\mathcal{V}''$};
        \end{tikzpicture}

        \caption{\centering Case $\mathcal{B}_\star = \varnothing$}
		\label{fig:RegionsNoBoundary}
	\end{subfigure}
    \hfill
    \begin{subfigure}{.45\textwidth}
        \begin{tikzpicture}[line cap=round,line join=round,>=triangle 45,x=1cm,y=1cm]
        \draw [->,line width=0.8pt] (0,-0.5) -- (0,4);
        \draw [->,line width=0.8pt] (0,0) -- (4,0);
        \fill[line width=0.8pt,color=gray,fill=gray,fill opacity=0.5] (0,3) -- (0,0) -- (3,0) -- (1.5,3) -- cycle;
        \draw[line width=0.8pt,color=black,fill=black, pattern={Lines[angle=45,distance=12]}, pattern color=black] {[smooth,samples=50,domain=0:3] plot(\x,{-3.52+sqrt(4.52^2-(\x-0.16)^2)})} -- (3,0) {[smooth,samples=50,domain=3:0] -- plot(\x,{0})} -- (0,1) -- cycle;
        \draw [shift={(0.16,-3.52)},line width=1.5pt]  plot[domain=0.8916399784412002:1.6064515663553094,variable=\t]({4.52*cos(\t r)},{4.52*sin(\t r)});
        \draw [line width=1.5pt] (0,0)-- (3,0);
        \draw [fill=black] (0,0) circle (2pt);
        \draw (0.5,3.9) node {$y^0$};
        \draw (3.9,0.5) node {$y^3$};
        \draw (-0.35,0) node {$\mathcal{B}_\star$};
        \draw (1,1.8) node {$\mathcal{V}$};
        \draw (1.5,-0.3) node {$\mathcal{U}_\star$};
        \draw (1.1,0.45) node {$\mathcal{V}''$};
        \draw (-0.4,1.8) node {$\mathcal{B}$};
        \end{tikzpicture}

		\caption{\centering Case $\mathcal{B}_\star \neq \varnothing$}
		\label{fig:RegionsBoundary}
	\end{subfigure}
	\caption{\centering Regions of local existence and local uniqueness for the evolution system \eqref{evol_system}}
\end{figure}

\begin{prop}[Solving away from the boundary]
    \label{evol_solve_away}
    Consider the case $\mathcal{B}_\star = \varnothing$. Let $\underline{u}_\star \in \mathcal{C}^\infty(\overline{\mathcal{U}}_\star,\R^m)$ such that
    \begin{equation}
        \label{eq:initial_conditions_ei0}
        e_{\bf i}{}^0 = 0 \,.
    \end{equation}
    Then there exists a solution $\underline{u} \in \mathcal{C}^\infty(\mathcal{V},\R^m)$ of the evolution system \eqref{evol_system} with initial data $\underline{u}_\star$ where $\mathcal{V}$ is a subset of $\R^4_{3\geq0}$ containing $\mathcal{U}_\star$ such as in~\Cref{fig:RegionsNoBoundary}. For this solution, the function $\Theta$ defined by \eqref{redef_Theta} is positive on $\mathcal{V}$.
    
    Furthermore, it is locally unique in the sense that if $\underline{u}' \in \mathcal{C}^\infty(\mathcal{V}',\R^m)$ is another such solution then $\underline{u}$ and $\underline{u}'$ coincide on a subset $\mathcal{V}''$ of $\mathcal{V} \cap \mathcal{V}'$ such as in~\Cref{fig:RegionsNoBoundary}.
\end{prop}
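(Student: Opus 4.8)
The plan is to recast the evolution system \eqref{evol_system}, in the form \eqref{evol_system_form}, as a quasilinear symmetric hyperbolic system and to invoke the classical local theory; the only genuine difficulty is that the matrix $A^0(\underline{u})$ defined in \eqref{def_Amu} need not be positive-definite for all $\underline{u}\in\R^m$. First I would record two facts. Each $A^\mu$ is symmetric by construction, and on the initial hypersurface the hypothesis \eqref{eq:initial_conditions_ei0} that $e_{\bf i}{}^0=0$ forces $A^0(\underline{u}_\star)=I$. Moreover, by \Cref{lemma_bounds_matrices} one has $A^0(\underline{u})\geq(1-c^0(\underline{u}))I$, so $A^0$ is positive-definite precisely on the open set $\{c^0<1\}$, which contains the image of $\underline{u}_\star$ since there $c^0=0$.

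To produce a genuinely hyperbolic system I would modify the principal symbol away from the relevant states: choosing a smooth $\widetilde{A}^0$ that agrees with $A^0$ on $\{c^0\leq 1/2\}$ and equals $I$ on $\{c^0\geq 3/4\}$ gives symmetric matrices with $\widetilde{A}^0$ uniformly positive-definite on all of $\R^m$, while $\mathcal{Q}$ remains a smooth (polynomial) source. After extending $\underline{u}_\star$, $s_\star$ and $\Theta_\star$ to compactly supported smooth data on the whole initial slice $\{y^0=0\}\cong\R^3$, the modified system is covered by the local existence theorem for quasilinear symmetric hyperbolic systems (Taylor \cite[Chapter 5]{T91}), yielding a smooth solution on a slab $(-\delta,\delta)\times\R^3$. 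Because $c^0(\underline{u}_\star)=0$, continuity in time guarantees $c^0(\underline{u})<1/2$ on a possibly smaller slab, on which the modification is inactive and $\underline{u}$ solves the original system \eqref{evol_system}.

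To descend to the lens-shaped region $\mathcal{V}$ of \Cref{fig:RegionsNoBoundary} I would use finite propagation speed. The bounds $-c^i(\underline{u})I\leq A^i(\underline{u})\leq c^i(\underline{u})I$ from \Cref{lemma_bounds_matrices}, together with the uniform positivity of $\widetilde{A}^0$, bound the characteristic speeds; since $\mathcal{U}_\star$ is relatively compact, the solution restricted to the domain of determinacy of $\mathcal{U}_\star$ is independent of the chosen extension, and that domain is exactly a region $\mathcal{V}$ bounded below by $\mathcal{U}_\star$ and above by an ingoing characteristic surface issuing from $\partial\mathcal{U}_\star$. The positivity of $\Theta$ on $\mathcal{V}$ is then immediate from \eqref{redef_Theta}: writing $\Theta=\Theta_\star\left(1-f_\star(y^0)^2/2\right)$, one has $\Theta_\star>0$ because $\Theta_\star$ is a boundary defining function of $\mathcal{B}_\star=\varnothing$ and so nowhere vanishes, while $f_\star$ is bounded above by some $M$; restricting the height of $\mathcal{V}$ by $(y^0)^2<2/M$ when $M>0$ keeps the bracketed factor positive.

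Local uniqueness follows from the standard energy estimate. Given a second smooth solution $\underline{u}'$ on $\mathcal{V}'$, the difference solves a linear symmetric hyperbolic system with coefficients evaluated along $\underline{u}$ and $\underline{u}'$; a Gr\"onwall argument applied to the energy $\int(\underline{u}-\underline{u}')^\intercal \widetilde{A}^0(\underline{u}-\underline{u}')$ over the characteristic cones forces $\underline{u}-\underline{u}'$ to vanish on the common domain of dependence $\mathcal{V}''$ shown in \Cref{fig:RegionsNoBoundary}. The main obstacle throughout is the failure of hyperbolicity where $c^0\geq 1$; once this is circumvented by the cut-off of $A^0$ and the a priori value $c^0=0$ on the initial slice, the remainder reduces to the classical quasilinear theory and to finite propagation speed.
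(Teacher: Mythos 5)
Your proposal is correct and follows essentially the same route as the paper: cut off $A^0$ away from the states reached by the data (the paper truncates the frame components $(e_{\bf 1}{}^0,e_{\bf 2}{}^0,e_{\bf 3}{}^0)$ rather than interpolating the matrix with $I$, but the effect is identical), extend the data compactly to $\R^3$, apply Taylor's quasilinear theory, then use finite propagation speed, the vanishing of $e_{\bf i}{}^0$ on $\mathcal{U}_\star$, and the upper bound on $f_\star$ to descend to $\mathcal{V}$ where the modification is inactive and $\Theta>0$, with uniqueness by the standard energy estimate. The only step you elide, as does the paper only briefly, is upgrading the $H^s$ solution to a smooth one via Sobolev embedding and iteration in $s$.
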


\begin{proof} We follow the ideas presented in \cite[Section 8]{F95}. The proof is divided into three steps.
    \begin{itemize}
        \item \underline{Step 1}: Reduction to a symmetric hyperbolic system on $\R^4$

        Firstly, let $\mathcal{W}_\star$ be an open set of $\R^3$ containing $\overline{\mathcal{U}}_\star$. Since $\underline{u}_\star \in \mathcal{C}^\infty(\overline{\mathcal{U}}_\star,\R^m)$, one can find a function $\widetilde{\underline{u}}_\star \in \mathcal{C}^\infty_c(\R^3,\R^m)$ such that
        \[ \widetilde{\underline{u}}_\star |_{\overline{\mathcal{U}}_\star} = \underline{u}_\star \qquad \text{and} \qquad  \widetilde{\underline{u}}_\star |_{\R^3\setminus\mathcal{W}_\star} = 0 \,. \]
        Secondly, fix a constant $c\in(0,1/2)$ and take a smooth vector-valued function $\chi \in \mathcal{C}^\infty(\R^3,\R^3)$ such that
        \[ \chi(x) = x \; \text{ if } \; |x|_2 \leq 1-2c \qquad \text{and} \qquad \sup_{x \in \R^3} |\chi(x)|_2 \leq 1-c \,, \]
        where $|.|_2$ is the euclidean norm on $\R^3$. Define for all $\underline{u} \in \R^m$
        \[ \widetilde{A}^0(\underline{u}) := A^{\bf 1} \chi_1(e_{\bf 1}{}^0,e_{\bf 2}{}^0,e_{\bf 3}{}^0) + A^{\bf 2} \chi_2(e_{\bf 1}{}^0,e_{\bf 2}{}^0,e_{\bf 3}{}^0) + A^{\bf 3} \chi_3(e_{\bf 1}{}^0,e_{\bf 2}{}^0,e_{\bf 3}{}^0) \,. \]
        Then the function $\widetilde{A}^0$
        \begin{itemize}
            \item[i)] is in $\mathcal{C}^\infty(\R^m,S_m(\R))$,
            \item[ii)]  coincides with $A^0$ on $\{ \underline{u} \in \R^m \, | \, |(e_{\bf 1}{}^0,e_{\bf 2}{}^0,e_{\bf 3}{}^0)|_2 \leq 1-2c\}$,
            \item[iii)] satisfies by \Cref{lemma_bounds_matrices}
                \begin{equation}
                    \label{bound_A0}
                    \forall \, \underline{u} \in \R^m, \quad c I \leq \widetilde{A}^0(\underline{u}) \leq (2-c) I \,.
                \end{equation}
        \end{itemize}
        It follows that
        \begin{equation}
            \label{evol_system_form_modif}
            \widetilde{A}^0(\widetilde{\underline{u}}) \partial_0 \widetilde{\underline{u}} + A^i (\widetilde{\underline{u}}) \partial_i \widetilde{\underline{u}} = \mathcal{Q}(y,s_\star,\Theta_\star,\widetilde{\underline{u}}) \,,
        \end{equation}
        is a quasilinear symmetric hyperbolic system on $\R^4$.

    \item \underline{Step 2}: Solving the symmetric hyperbolic system on $\R^4$
    
    By application of Taylor \cite[Chapter 5]{T91}, there exists $T>0$ such that for all $s>5/2$ there exists a unique solution $\widetilde{\underline{u}} \in \mathcal{C}^0((-T,T),H^s(\R^3,\R^m))$ which is bounded, that is $\sup_{(-T,T)} |\widetilde{\underline{u}}|_{H^s} < +\infty$, to the Cauchy problem constituted of \eqref{evol_system_form_modif} with initial data in $H^s(\R^3,\R^m)$.
    
    Since the initial data $\widetilde{\underline{u}}_\star \in \mathcal{C}^\infty(\R^3,\R^m)$, we deduce from Sobolev embeddings and an iterative argument that there exists a unique solution $\widetilde{\underline{u}} \in \mathcal{C}^\infty((-T,T)\times\R^3,\R^m)$ satisfying a uniform bound
    \begin{equation}
        \label{eq:uniform_bound}
        \exists \, K > 0, \forall \, y \in (-T,T)\times\R^3, \quad |\widetilde{\underline{u}}(y)| \leq K \,,
    \end{equation}
    where the constant $K$ only depends on $T$ and on the norm of the initial data $|\underline{u}_\star|_{H^s}$ for one $s>5/2$, say $s=3$.
    
    \item \underline{Step 3}: Returning to the original problem

    Thanks to the uniform bound \eqref{eq:uniform_bound} and \Cref{lemma_bounds_matrices}, one obtains lower and upper bounds for $A^i(\widetilde{\underline{u}})$, in addition of \eqref{bound_A0} ensured by step 1. By finite speed of propagation, there exists a conic subregion $\mathcal{X}$ with base $\mathcal{U}_\star$ in $[0,T) \times \mathcal{U}_\star$ on which the solution $\widetilde{\underline{u}}$ does not depend on the extension $\widetilde{\underline{u}}_\star$ of $\underline{u}_\star$ constructed in step 1. See Ringström \cite[Section 7.3]{R09} for more details.

    \begin{rem}
        Since $\widetilde{\underline{u}}_\star$ is compactly supported and $\mathcal{Q}(y,s_\star,\Theta_\star,.)$ is a polynomial function with no constant term, $\widetilde{\underline{u}}(t)$ is also compactly supported for all $t \in (-T,T)$ by finite speed of propagation.
    \end{rem}
    
    Finally, since $(e_{\bf i}{}^0)_\star = 0$ by hypothesis, there exists $T' \in (0,T)$ such that the solution $\widetilde{\underline{u}}$ satisfies $|(e_{\bf 1}{}^0,e_{\bf 2}{}^0,e_{\bf 3}{}^0)|_2 \leq 1-2c$ on $\mathcal{V} := \left([0,T'] \times \R^3 \right)\cap \mathcal{X}$. Consequently, $\underline{u} := \widetilde{\underline{u}}|_\mathcal{V} \in \mathcal{C}^\infty(\mathcal{V},\R^m)$ is solution to the evolution system \eqref{evol_system} with initial data $\underline{u}_\star$. Up to reducing $T'$, one can ensure that $1-f_\star(y^1,y^2,y^3) (y^0)^2/2$ is positive on $\mathcal{V}$ using that $f_\star$ is uniformly bounded from above on $\mathcal{U}_\star$. Then, since $\Theta_\star$ is positive on $\mathcal{U}_\star$, one deduces that $\Theta$ is positive on $\mathcal{V}$.

    Let us prove local uniqueness. Let $\underline{u}' \in \mathcal{C}^\infty(\mathcal{V}',\R^m)$ be another solution to \eqref{evol_system} on a subset $\mathcal{V}'$ with initial data $\underline{u}_\star$. Then, one has on $\mathcal{V}\cap\mathcal{V}'$
    \[ A^\mu (\underline{u}) \partial_\mu (\underline{u}-\underline{u}') = -A^\mu(\underline{u}-\underline{u}') \partial_\mu \underline{u}' + \mathcal{Q}(y,s_\star,\Theta_\star,\underline{u})-\mathcal{Q}(y,s_\star,\Theta_\star,\underline{u}') \,. \]
    Let $\mathcal{V}''$ be a subset of $\mathcal{V} \cap \mathcal{V}'$ enclosed by $\mathcal{U}_\star$ and a hypersurface which is spacelike for the evolution system with respect to $\underline{u}$ such as in \Cref{fig:RegionsNoBoundary}. Since $A^0(\underline{u}) \geq cI$ on $\mathcal{V}$ and $\mathcal{Q}(y,s_\star,\Theta_\star,.)$ is a polynomial function, one deduces by an energy estimate that $\underline{u}' = \underline{u}$ on $\mathcal{V}''$. See for instance \cite[Theorem 12.1]{K16} for more details. \qedhere
    \end{itemize}
\end{proof}

\bigbreak

\begin{prop}[Solving near the boundary]
    \label{evol_solve_near}
    Consider the case $\mathcal{B}_\star \neq \varnothing$.
    \begin{itemize}
        \item Let $d_1, d_2 \in \mathcal{C}^\infty(\mathcal{B},\R)$ be some boundary data and take a maximal dissipative boundary condition on the boundary $\mathcal{B}$ which is of the form
        \begin{subequations}
            \label{raw_bc}
            \begin{align}
                \label{raw_bc_1}
                -B_{12} \left( \frac{E_{\bf 12}+E_{\bf 21}}{2}+\frac{H_{\bf 11}-H_{\bf 22}}{2} \right) + (1-B_{11}) \frac{E_{\bf 11}-E_{\bf 22}}{2} \nonumber \\
                +(1+B_{11}) \frac{H_{\bf 12}+H_{\bf 21}}{2} &= d_1 \,, \\ 
                \label{raw_bc_2}
                B_{21} \left( \frac{H_{\bf 12}+H_{\bf 21}}{2} - \frac{E_{\bf 11}-E_{\bf 22}}{2} \right) + (1-B_{22}) \frac{H_{\bf 11}-H_{\bf 22}}{2} \nonumber \\
                -(1+B_{22}) \frac{E_{\bf 12}+E_{\bf 21}}{2} &= d_2 \,,
            \end{align}
        \end{subequations}
        where $B = (B_{ij})_{1\leq i,j \leq 2} \in \mathcal{C}^\infty(\mathcal{B}, M_2(\R))$ satisfying $B^\intercal B\leq I$.
        \item Let $\underline{u}_\star \in \mathcal{C}^\infty(\overline{\mathcal{U}}_\star,\R^m)$ be the initial data and assume that
        \begin{itemize}
            \item[i)] one has on $\mathcal{U}_\star$
            \begin{equation}
                \tag{\ref{eq:initial_conditions_ei0}}
                e_{\bf i}{}^0 = 0 \,,
            \end{equation}
            \item[ii)] one has on the corner $\mathcal{B}_\star$
            \begin{subequations}
            \label{ode_initial_data}
            \begin{alignat}{9}
                \label{ode_initial_data_a}
                \acute{\Gamma}_{\bf A}{}^{\bf 3}{}_{\bf 0} &= 0 \,, &\qquad U_{\bf A3} &= \; 0 \,, &\qquad \zeta_{\bf A} &= 0 \,, &\qquad e_{\bf A}{}^3 &= 0 \,, \\
                \label{ode_initial_data_b}
                \acute{\Gamma}_{\bf 3}{}^{\bf 3}{}_{\bf 0}  &= 0 \,, & U_{\bf 33} &= - f_\star \,, & e_{\bf 3}{}^3 &< 0 \,, 
            \end{alignat}
            \end{subequations}
            \item[iii)] $\underline{u}_\star$ satisfies on $\mathcal{B}_\star$ the analytic compatibility conditions of all orders with the boundary conditions \eqref{raw_bc}, see the last remark below the proposition.
        \end{itemize}
    \end{itemize}
    Then there exists a solution $\underline{u} \in \mathcal{C}^\infty(\mathcal{V},\R^m)$ of the evolution system \eqref{evol_system} with initial data $\underline{u}_\star$ and  boundary condition \eqref{raw_bc} on a subset $\mathcal{V} \subset \R^4_{3\geq0}$ containing $\mathcal{U}_\star$ such as in~\Cref{fig:RegionsBoundary}. This solution verifies
    \begin{equation}
        \label{eq:solve_near_frame}
        e_{\bf A}{}^3 = 0 \quad \text{and} \quad e_{\bf 3}{}^3 < 0 \quad \text{on } \mathcal{V}\cap\mathcal{B} \,,
    \end{equation}
    and the function $\Theta$ defined by \eqref{redef_Theta} on $\mathcal{V}$ is a boundary defining function of $\mathcal{V}\cap\mathcal{B}$.
    
    Furthermore, it is locally unique in the sense that if $\underline{u}' \in \mathcal{C}^\infty(\mathcal{V}',\R^m)$ is another such solution then $\underline{u}$ and $\underline{u}'$ coincide on a subset $\mathcal{V}''$ of $\mathcal{V} \cap \mathcal{V}'$ such as in~\Cref{fig:RegionsBoundary}.
\end{prop}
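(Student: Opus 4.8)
The plan is to follow Friedrich's treatment \cite[Section 8]{F95}, running the same scheme as in the proof of \Cref{evol_solve_away} but confronting the two difficulties flagged just above the statement: the non-positivity of $A^0$ away from the reference state and, crucially, the characteristic boundary matrix of \emph{non-constant} multiplicity. The first is handled exactly as before: using a cutoff $\chi$ one replaces $A^0(\underline{u})$ by a globally positive-definite $\widetilde{A}^0(\underline{u})$ that agrees with $A^0$ on the region $|(e_{\bf 1}{}^0,e_{\bf 2}{}^0,e_{\bf 3}{}^0)|_2\le 1-2c$, turning \eqref{evol_system_form} into a genuinely symmetric hyperbolic system on the half-space $\R^4_{3\geq0}$; by \eqref{eq:initial_conditions_ei0} the reference solution stays in this region for a short time, so the modification is inactive there.

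Next I would identify the boundary matrix. Since $\mathcal{B}\subset\{y^3=0\}$, it is $A^3(\underline{u})=e_{\bf j}{}^3 A^{\bf j}$ from \eqref{def_Amu}. The decisive structural observation is that on $\mathcal{B}$ one wants $e_{\bf A}{}^3=0$ and $e_{\bf 3}{}^3<0$, in which case $A^3|_{\mathcal{B}}=e_{\bf 3}{}^3 A^{\bf 3}$; by the eigenvalue computation in the proof of \Cref{lemma_bounds_matrices}, $A^{\bf 3}$ has eigenvalues $0$ and $\pm 1$ of fixed multiplicities, so for $e_{\bf 3}{}^3$ bounded away from $0$ the matrix $e_{\bf 3}{}^3 A^{\bf 3}$ is characteristic of \emph{constant} multiplicity. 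Geometrically this says that $e_{\bf 1},e_{\bf 2}$ are tangent to $\mathcal{B}$ while $e_{\bf 3}$ is transverse, i.e. the frame is adapted to the boundary. In this normal form the incoming characteristic space is two-dimensional, matching the two conditions \eqref{raw_bc_1}--\eqref{raw_bc_2}, and a direct computation shows that the hypothesis $B^\intercal B\le I$ renders the associated boundary quadratic form non-negative on the maximal subspace cut out by \eqref{raw_bc}; hence \eqref{raw_bc} is maximal dissipative for $A^3|_{\mathcal{B}}$.

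The heart of the argument — and the main obstacle — is to \emph{propagate} the tangency conditions $e_{\bf A}{}^3=0$, $e_{\bf 3}{}^3<0$ along $\mathcal{B}$, because the existence theory for characteristic boundaries of Guès \cite{G90} and Secchi \cite{S96} requires constant multiplicity in a full neighbourhood of the reference state, which fails for $A^3(\underline{u})$ at generic $\underline{u}$. I would resolve this by restricting the evolution system to $\{y^3=0\}$, where the equations for the components appearing in \eqref{ode_initial_data} close into a system of ordinary differential equations in $y^0$, transporting data along the generators of $\mathcal{B}$. Because $\delta_{\bf 0}{}^3=0$, equation \eqref{transport_i} with $\mu=3$ reads $e_{\bf 0}(e_{\bf i}{}^3)+\acute{\Gamma}_{\bf i}{}^{\bf j}{}_{\bf 0}e_{\bf j}{}^3=0$, and together with the corresponding equations for $\acute{\Gamma}_{\bf A}{}^{\bf 3}{}_{\bf 0}$, $U_{\bf A3}$ and $\zeta_{\bf A}$ the tuple $(\acute{\Gamma}_{\bf A}{}^{\bf 3}{}_{\bf 0},U_{\bf A3},\zeta_{\bf A},e_{\bf A}{}^3)$ obeys a \emph{homogeneous linear} ODE. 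The corner data \eqref{ode_initial_data_a} sets this tuple to zero at $\mathcal{B}_\star$, so it vanishes identically along $\mathcal{B}$, while \eqref{ode_initial_data_b} and continuity keep $e_{\bf 3}{}^3<0$. To convert this a priori computation into a rigorous existence proof I would, exactly as for $\widetilde{A}^0$, further modify $A^3(\underline{u})$ near the boundary into a matrix of constant multiplicity agreeing with $A^3$ on the submanifold $\{e_{\bf A}{}^3=0\}$, apply Guès/Secchi to the doubly modified, genuinely hyperbolic, constant-multiplicity problem with the maximal dissipative condition \eqref{raw_bc}, and then verify a posteriori — using precisely the boundary ODE above — that the solution produced satisfies $e_{\bf A}{}^3=0$ on $\mathcal{B}$, so that neither modification was active on it.

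Finally I would assemble the pieces. The compatibility conditions in (iii) are the usual corner conditions guaranteeing smoothness up to $\mathcal{B}_\star$; Sobolev embedding and the standard iteration upgrade the $H^s$ solution to $\mathcal{C}^\infty(\mathcal{V},\R^m)$ on a region $\mathcal{V}$ as in \Cref{fig:RegionsBoundary}, obtained by finite speed of propagation and by shrinking the time so that both $|(e_{\bf 1}{}^0,e_{\bf 2}{}^0,e_{\bf 3}{}^0)|_2\le 1-2c$ and $1-f_\star (y^0)^2/2>0$ hold (using that $f_\star$ is bounded above). On $\mathcal{V}\cap\mathcal{B}$ the solution then satisfies \eqref{eq:solve_near_frame}, and since $\Theta=\Theta_\star\,(1-f_\star (y^0)^2/2)$ from \eqref{redef_Theta} has a positive second factor while $\Theta_\star$ is a boundary defining function of $\mathcal{B}_\star$, the function $\Theta$ is a boundary defining function of $\mathcal{V}\cap\mathcal{B}$. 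Local uniqueness follows, as in \Cref{evol_solve_away}, from an energy estimate for the difference $\underline{u}-\underline{u}'$ over a region $\mathcal{V}''$ bounded by $\mathcal{U}_\star$, a surface spacelike for the reference solution, and a piece of $\mathcal{B}$; the only new contribution, the boundary integral over $\mathcal{B}$, has the correct sign by the maximal dissipativity established in the second step, so the estimate closes.
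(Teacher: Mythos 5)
Your overall architecture is right up to the decisive step, and you correctly identify both the obstruction (the boundary matrix $A^3(\underline{u})=e_{\bf j}{}^3A^{\bf j}$ is characteristic but not of constant multiplicity for generic states) and the mechanism that resolves it (the components $\acute{\Gamma}_{\bf A}{}^{\bf 3}{}_{\bf 0}$, $U_{\bf A3}$, $\zeta_{\bf A}$, $e_{\bf A}{}^3$ satisfy a homogeneous linear ODE in $y^0$ along $\mathcal{B}$, so the corner data \eqref{ode_initial_data_a} forces them to vanish there). However, the way you feed this into the existence theory --- a second cutoff modification, this time of $A^3(\underline{u})$, followed by an a posteriori verification --- has a genuine gap. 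The modification of $A^0$ can be undone a posteriori because the condition it relies on, $|(e_{\bf 1}{}^0,e_{\bf 2}{}^0,e_{\bf 3}{}^0)|_2\leq 1-2c$, is shown to hold \emph{throughout} $\mathcal{V}$ by continuity from the initial data. By contrast, the condition $e_{\bf A}{}^3=0$ holds only \emph{on the boundary} $\mathcal{B}$: on the initial slice it is imposed only at the corner $\mathcal{B}_\star$, and in the interior of $\mathcal{V}$ the frame has no reason to remain tangent to the level sets of $y^3$. Any cutoff $\widetilde{A}^3(\underline{u})$ agreeing with $A^3$ only on the state-space locus $\{e_{\bf A}{}^3=0\}$ therefore stays active on an open interior region, so a solution of the doubly modified system is not a solution of \eqref{evol_system} there, and no computation restricted to $\mathcal{B}$ can repair this.

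The paper's proof avoids the second modification entirely. It sets up the iteration of semilinear systems \eqref{semi_linear_SHS} with coefficients frozen at the previous iterate, and shows --- by exactly your ODE argument, but applied to each iterate $\widetilde{\underline{u}}_k$ rather than to the final solution --- that every iterate satisfies $e_{\bf A}{}^3=0$ and $e_{\bf 3}{}^3=e_{\bf 3}{}^3|_{y^0=0}\,(1-f_\star(y^0)^2/2)<0$ on $\mathcal{B}$ (the transport equations contain no spatial derivatives, so they close into ODEs on $\mathcal{B}$ irrespective of the Weyl part). Hence the boundary matrix of each \emph{linear} problem equals $\mathcal{F}A^{\bf 3}$ with $\mathcal{F}>0$ independent of $k$: it is characteristic of constant multiplicity at the boundary, which is all that Rauch's linear theory requires, and its independence of $k$ is what allows the fixed-point argument of Secchi to close. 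Note also that the persistence of $e_{\bf 3}{}^3<0$ is not a pure continuity statement: the paper derives and solves a nonlinear second-order ODE for $\acute{\Gamma}_{\bf 3}{}^{\bf 3}{}_{\bf 0}$ on $\mathcal{B}$, and the explicit factor $1-f_\star(y^0)^2/2$, controlled via the upper bound on $f_\star$, is needed to keep $\mathcal{F}$ uniformly positive on the whole time interval rather than merely for an unquantified short time.
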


\begin{rems} \,
    \begin{itemize}
        \item For comparison, we recall here Friedrich's statement of the maximal dissipative boundary conditions \eqref{raw_bc} in spinorial formulation \cite[Equations (5.48)-(5.49)]{F95}
        \[ \phi_{1111} - a \phi_{0000} - c \Bar{\phi}_{0'0'0'0'} = d \,, \qquad \text{with } |a|+|c|\leq 1, \]
        where $\phi$ is the rescaled Weyl spinor and $a,c,d \in \mathcal{C}^\infty(\mathcal{B},\C)$. The boundary data is the function $d$ while the functions $a,c$ play the role of the matrix $B$. More precisely, $a$ corresponds to the antisymmetric part and the trace of $B$ while $c$ corresponds to its traceless symmetric part.
        
        \item Note that equation \eqref{eq:CVE_V} is left invariant under the transformation $V \mapsto V\star$. By \eqref{sym_dual_EH}, this translates into the invariance of \eqref{evol_system_EAB}-\eqref{evol_system_H3A} under the transformation $(E,H) \mapsto (H,-E)$. Since these equations are the ones generating the boundary term in the a priori estimates, the boundary term remains the same and thus the family of all maximally dissipative boundary conditions \eqref{raw_bc} should be stable under the transformation $(E,H) \mapsto (H,-E)$. This is indeed the case with the transformation $(B_{11},B_{12},B_{21},B_{22},d_1,d_2) \mapsto (B_{22},-B_{21},-B_{12},B_{11},-d_2,d_1)$.
        
        \item The smoothness of a solution to an initial boundary value problem requires that certain compatibility conditions (or corner conditions) between the initial data and the boundary conditions have to be satisfied on the corner between the initial hypersurface and the boundary.

        These conditions can be arranged into a hierarchy indexed by $\N_0$, according to the order of regularity. The $k$-th order compatibility conditions are obtained by differentiating $k$ times the boundary conditions with respect to time, using the evolution system as much as possible to replace the time derivatives, and eventually evaluating on the corner. This procedure can be done iteratively. The explicit form of these compatibility conditions generally becomes more lengthy and complex as the order increases. \qedhere
    \end{itemize}
\end{rems}

\begin{proof}
    We will only present a sketch of the proof, which follows the same kind of steps as the proof of \Cref{evol_solve_away}, but step 2 is now more involved. The existence of a solution for the modified quasilinear system is obtained by solving an iteration of semilinear systems and showing convergence of the sequence of these solutions. Friedrich noticed in \cite[Section 5.3]{F95} that total control of the boundary matrix for the semilinear systems can be achieved under some assumptions on the corner $\mathcal{B}_\star$.
    
\begin{itemize}
    \item \underline{Step 1}: Reduction to a symmetric hyperbolic system on $\R^4_{3\geq0}$

    The extension $\widetilde{\underline{u}}_\star$ of $\underline{u}_\star$ is taken in $\mathcal{C}^\infty_c(\R^3_{3\geq0},\R^m)$ instead of $\mathcal{C}^\infty_c(\R^3,\R^m)$ as in step 1 in the proof of \Cref{evol_solve_away}. The modification of the function $A^0$ into $\widetilde{A}^0$ is identical.

    \item \underline{Step 2}: Solving the symmetric hyperbolic system on $\R^4_{3\geq0}$

    In order to prove existence of a solution $\widetilde{\underline{u}}$ of \eqref{evol_system_form_modif} with initial data $\widetilde{\underline{u}}_\star$ and boundary condition yet to be derived, one construct a sequence $(\widetilde{\underline{u}}_k)_{k \in \N_0}$ where $\widetilde{\underline{u}}_{k+1}$ is solution to the semilinear symmetric hyperbolic system
    \begin{equation}
        \label{semi_linear_SHS}
        \widetilde{A}^0(\widetilde{\underline{u}}_k) \partial_0 \widetilde{\underline{u}}_{k+1} + A^i (\widetilde{\underline{u}}_k) \partial_i \widetilde{\underline{u}}_{k+1} = \mathcal{Q}(y,s_\star,\Theta_\star,\widetilde{\underline{u}}_k) \,,
    \end{equation}
    with initial data $\widetilde{\underline{u}}_\star$ and the boundary condition. Then a fixed point argument in suitable Sobolev spaces such as in Secchi \cite{S96} gives existence of a solution $\widetilde{\underline{u}}$. One can then prove that it is smooth as in step 2 of the proof of \Cref{evol_solve_away}.

    We will thus only focus on solving the semilinear symmetric hyperbolic systems.

    \begin{itemize}
        \item[\ding{70}] \underline{Step 2a}: Identifying the boundary matrix
        
        Let $k \in \N_0$. The outward-pointing normal vector field of the boundary $\mathcal{B}$ is $n = (0,0,0,-1)$. Using \eqref{def_Amu}, the boundary matrix is thus given by 
        \[  \left( n_0 \widetilde{A}^0(\widetilde{\underline{u}}_k) + n_i A^i(\widetilde{\underline{u}}_k) \right) = -A^3(\widetilde{\underline{u}}_k) = - e_{\bf i}{}^3[\widetilde{\underline{u}}_k] A^{\bf i}\,,  \]
        where $e_{\bf i}{}^3[\widetilde{\underline{u}}_k]$ denotes the component $e_{\bf i}{}^3$ of $\widetilde{\underline{u}}_k$.

        Assume that $k\geq1$. By hypotheses on $\Theta_\star$, $s_\star$ and definitions \eqref{redef_szeta0Theta}, one has
        \[ s = 0 \quad \text{on } \mathcal{B} \,, \qquad \zeta_{\bf 0} = 0 \quad \text{on } \mathcal{B}\,, \qquad \Theta = 0 \quad \text{on } \mathcal{B}\,. \]
        It follows that the semilinear system \eqref{semi_linear_SHS} on $\widetilde{\underline{u}}_k$ implies in particular the following ordinary differential equations on $\mathcal{B}$
        \begin{subequations}
            \begin{align}
                \label{ode_B_a}
                \partial_0 \underbracket{\acute{\Gamma}_{\bf A}{}^{\bf 3}{}_{\bf 0}} + \acute{\Gamma}_{\bf A}{}^{\bf B}{}_{\bf 0} \underbracket{e_{\bf B}{}^3} + \underbracket{\acute{\Gamma}_{\bf A}{}^{\bf 3}{}_{\bf 0}} e_{\bf 3}{}^3 &= 0 \,, \\
                \partial_0 \underbracket{U_{\bf A3}} + \acute{\Gamma}_{\bf A}{}^{\bf B}{}_{\bf 0} \underbracket{U_{\bf B3}} + \underbracket{\acute{\Gamma}_{\bf A}{}^{\bf 3}{}_{\bf 0}} U_{\bf 33} - \slashed{\epsilon}^{\bf BC} \underbracket{\zeta_{\bf B}} H_{\bf CA} &= 0 \,, \\
                \partial_0 \underbracket{\zeta_{\bf A}} &= 0 \,, \\
                \label{ode_B_d}
                \partial_0 \underbracket{e_{\bf A}{}^3} + \acute{\Gamma}_{\bf A}{}^{\bf B}{}_{\bf 0} \underbracket{e_{\bf B}{}^3} + \underbracket{\acute{\Gamma}_{\bf A}{}^{\bf 3}{}_{\bf 0}} e_{\bf 3}{}^3 &= 0 \,, \\
                \label{ode_B_e}
                \partial_0 \acute{\Gamma}_{\bf 3}{}^{\bf 3}{}_{\bf 0} + \acute{\Gamma}_{\bf 3}{}^{\bf A}{}_{\bf 0} \underbracket{\acute{\Gamma}_{\bf A}{}^{\bf 3}{}_{\bf 0}} + \acute{\Gamma}_{\bf 3}{}^{\bf 3}{}_{\bf 0} \acute{\Gamma}_{\bf 3}{}^{\bf 3}{}_{\bf 0} - U_{\bf 33} &= 0 \,, \\
                \label{ode_B_f}
                \partial_0 U_{\bf 33} + \acute{\Gamma}_{\bf 3}{}^{\bf A}{}_{\bf 0} \underbracket{U_{\bf A3}} + \acute{\Gamma}_{\bf 3}{}^{\bf 3}{}_{\bf 0} U_{\bf 33} - \slashed{\epsilon}^{\bf BC} \underbracket{\zeta_{\bf B}} H_{\bf C3} &= 0 \,, \\
                \label{ode_B_g}
                \partial_0 e_{\bf 3}{}^3 + \acute{\Gamma}_{\bf 3}{}^{\bf A}{}_{\bf 0} \underbracket{e_{\bf B}{}^3} + \acute{\Gamma}_{\bf 3}{}^{\bf 3}{}_{\bf 0} e_{\bf 3}{}^3 &= 0
            \end{align}
        \end{subequations}
        In the equations above, we dropped the notation $[\widetilde{\underline{u}}_k]$ for conciseness since they all are components of $\widetilde{\underline{u}}_k$. Moreover, we put brackets under certain components to highlight the fact that \eqref{ode_B_a}-\eqref{ode_B_d} are homogeneous. By hypothesis on the initial data \eqref{ode_initial_data_a} and assuming that $\widetilde{\underline{u}}_k$ is continuous on $\mathcal{B}$, one deduces from the Cauchy-Lipschitz theorem that
        \begin{align*}
            \acute{\Gamma}_{\bf A}{}^{\bf 3}{}_{\bf 0} &= 0 \quad \text{on } \mathcal{B} \,, \\
            U_{\bf A3} &= 0 \quad \text{on } \mathcal{B} \,, \\
            \zeta_{\bf A} &= 0 \quad \text{on } \mathcal{B} \,, \\
            e_{\bf A}{}^3 &= 0 \quad \text{on } \mathcal{B} \,.
        \end{align*}
        Assuming that $\widetilde{\underline{u}}_k$ is of class $\mathcal{C}^1$ on $\mathcal{B}$, let us differentiate \eqref{ode_B_e} and use \eqref{ode_B_e}-\eqref{ode_B_f} to obtain an ordinary differential equation for $\acute{\Gamma}_{\bf 3}{}^{\bf 3}{}_{\bf 0}$ on $\mathcal{B}$:
        \begin{align*}
            0 &= \partial_0^2 \acute{\Gamma}_{\bf 3}{}^{\bf 3}{}_{\bf 0}  + 2 \acute{\Gamma}_{\bf 3}{}^{\bf 3}{}_{\bf 0} \partial_0 \acute{\Gamma}_{\bf 3}{}^{\bf 3}{}_{\bf 0} - \partial_0 U_{\bf 33} \\
            &= \partial_0^2 \acute{\Gamma}_{\bf 3}{}^{\bf 3}{}_{\bf 0} + 2 \acute{\Gamma}_{\bf 3}{}^{\bf 3}{}_{\bf 0} \partial_0 \acute{\Gamma}_{\bf 3}{}^{\bf 3}{}_{\bf 0} +\acute{\Gamma}_{\bf 3}{}^{\bf 3}{}_{\bf 0} U_{\bf 33} \\
            &= \partial_0^2 \acute{\Gamma}_{\bf 3}{}^{\bf 3}{}_{\bf 0} + 2 \acute{\Gamma}_{\bf 3}{}^{\bf 3}{}_{\bf 0} \partial_0 \acute{\Gamma}_{\bf 3}{}^{\bf 3}{}_{\bf 0} +\acute{\Gamma}_{\bf 3}{}^{\bf 3}{}_{\bf 0} \left(\partial_0 \acute{\Gamma}_{\bf 3}{}^{\bf 3}{}_{\bf 0} + \acute{\Gamma}_{\bf 3}{}^{\bf 3}{}_{\bf 0} \acute{\Gamma}_{\bf 3}{}^{\bf 3}{}_{\bf 0}\right) \\
            0 &= \partial_0^2 \acute{\Gamma}_{\bf 3}{}^{\bf 3}{}_{\bf 0} + 3 \acute{\Gamma}_{\bf 3}{}^{\bf 3}{}_{\bf 0} \partial_0 \acute{\Gamma}_{\bf 3}{}^{\bf 3}{}_{\bf 0} +\left(\acute{\Gamma}_{\bf 3}{}^{\bf 3}{}_{\bf 0}\right)^3 \,.
        \end{align*}
        Thus there exists two functions $v,w$ on $\mathcal{B}_\star$ such that
        \[ \acute{\Gamma}_{\bf 3}{}^{\bf 3}{}_{\bf 0} = \frac{2(v+y^0)}{w+(v+y^0)^2} \quad \text{on } \mathcal{B} \,. \]
        With \eqref{ode_initial_data_b} and \eqref{ode_B_e}, one finds
        \[ \acute{\Gamma}_{\bf 3}{}^{\bf 3}{}_{\bf 0} = \frac{-f_\star y^0}{1-f_\star \frac{(y^0)^2}{2}} \quad \text{on } \mathcal{B} \,, \]
        as long as the denominator stays positive. This can be assumed up to restricting the time using that $f_\star$ is uniformly bounded from above on $\mathcal{U}_\star$. From \eqref{ode_B_f}-\eqref{ode_B_g} and \eqref{ode_initial_data_b}, it follows that
        \[ U_{\bf 33} = - f_\star \left(1-f_\star \frac{(y^0)^2}{2}\right) \quad \text{on } \mathcal{B} \,, \]
        and, using the coordinates, for any point $(y^0,y^1,y^2,0) \in \mathcal{B}$
        \[ e_{\bf 3}{}^3(y^0,y^1,y^2,0) = e_{\bf 3}{}^3(0,y^1,y^2,0) \left(1-f_\star(y^1,y^2,0) \frac{(y^0)^2}{2}\right) < 0 \,. \]

        \begin{rems} \,
            \begin{itemize}
                \item[$\bullet$] The smoothness requirement on $\widetilde{\underline{u}}_k$ on $\mathcal{B}$ is ensured if $\widetilde{\underline{u}}_k$ takes its values into Sobolev spaces of high enough order.
                \item[$\bullet$] Note that the same factor than for the conformal factor in \eqref{gauge_Theta} arises here on the boundary $\mathcal{B}$. \qedhere 
            \end{itemize}
        \end{rems}

        If $k=0$, one can choose $\widetilde{\underline{u}}_0$ verifying the equations on $\mathcal{B}$ found above.

        Consequently, for all $k \in \N_0$, the boundary matrix on $\mathcal{B}$ is given by
        \[ - A^3(\widetilde{\underline{u}}_k) = -e_{\bf i}{}^3[\widetilde{\underline{u}}_k] A^{\bf i} = \mathcal{F} A^{\bf 3} \,, \]
        where $\mathcal{F}$ is a positive function on $\mathcal{B}$ which does not depend on $k$. Hence the boundary matrix properties are essentially determined by these of $A^{\bf 3}$. The key point is that the boundary matrix is independent of the iteration number $k$.
        
        \item[\ding{70}] \underline{Step 2b}: Studying the boundary matrix to find the maximal dissipative boundary conditions

        The boundary matrix is characteristic of constant multiplicity since the kernel of $A^{\bf 3}$ has dimension
        \[ \dim \ker A^{\bf 3} = m-4>0 \,. \]
        Furthermore, $A^{\bf 3}$ has two other eigenvalues namely $-1$ and $+1$. The corresponding eigenspaces are generated respectively by
        \begin{align*}
            \underline{u}_{-,1} := (+1,-1,+1,+1,\hphantom{+}0,\hphantom{+}0,\hphantom{+}0,\hphantom{+}0,\hphantom{+}0_{m-8}) \,, \\
            \underline{u}_{-,2} := (\hphantom{+}0,\hphantom{+}0,\hphantom{+}0,\hphantom{+}0,+1,-1,-1,-1,\hphantom{+}0_{m-8}) \,,
        \end{align*}
        and
        \begin{align*}
            \underline{u}_{+,1} := (+1,-1,-1,-1,\hphantom{+}0,\hphantom{+}0,\hphantom{+}0,\hphantom{+}0,\hphantom{+}0_{m-8}) \,, \\
            \underline{u}_{+,2} := (\hphantom{+}0,\hphantom{+}0,\hphantom{+}0,\hphantom{+}0,+1,-1,+1,+1,\hphantom{+}0_{m-8}) \,.
        \end{align*}
        The outgoing (respectively ingoing) characteristic fields correspond to the eigenspaces associated to positive (respectively negative) eigenvalues of the boundary matrix. The maximal dissipative boundary conditions write under the form
        \[ \begin{pmatrix}
                E_{\bf 11}-E_{\bf 22}+H_{\bf 12}+H_{\bf 21} \\
                H_{\bf 11}-H_{\bf 22}-E_{\bf 12}-E_{\bf 21}
            \end{pmatrix} = B \begin{pmatrix}
                E_{\bf 11}-E_{\bf 22}-H_{\bf 12}-H_{\bf 21} \\
                H_{\bf 11}-H_{\bf 22}+E_{\bf 12}+E_{\bf 21}
            \end{pmatrix} + 2 \begin{pmatrix} d_1 \\ d_2 \end{pmatrix} \,, \]
        where $B \in \mathcal{C}^\infty(\mathcal{B}, M_2(\R))$ satisfying $B^\intercal B\leq I$ and $d_1$, $d_2$ represent the boundary data.

        \begin{rem}
            One can verify that the quadratic form associated to the normal matrix of the boundary is non-negative on the space of vectors of $\R^m$ satisfying the boundary condition with vanishing boundary data $d_1 = d_2 = 0$.
        \end{rem}

        \item[\ding{70}] \underline{Step 2c}: Solving the semilinear problems

        By hypothesis $iii)$, the compatibility conditions of any order hold on the corner $\mathcal{B}_\star$ for the quasilinear system \eqref{evol_system}. Thanks to hypothesis $i)$, these compatibility conditions are identical to those of the modified quasilinear system obtained in step 1. Since the initial data for the semilinear problems \eqref{semi_linear_SHS} is the same as for the modified quasilinear one, that is $\widetilde{\underline{u}}_\star$, it follows that the compatibility conditions of any order for the semilinear problems hold. Therefore, the existence of $(\widetilde
        {\underline{u}}_k)_{k\geq 1}$ can be derived iteratively by application of Rauch \cite[Theorems 9 and 10]{R85}. 
    \end{itemize}

    \item \underline{Step 3}: Returning to the original problem

    Analogous to step 3 of the proof of \Cref{evol_solve_away}. \qedhere
    \end{itemize}
\end{proof}

\begin{lem}
    Let $\underline{u} \in \mathcal{C}^\infty(\mathcal{V},\R^m)$ be the solution to the evolution system \eqref{evol_system} given by \Cref{evol_solve_away} or \Cref{evol_solve_near}. If the initial data $\underline{u}_\star \in \mathcal{C}^\infty(\overline{\mathcal{U}}_\star,\R^m)$ verifies furthermore
    \begin{equation}
        \label{eq:symmetry_EH}
        E_{\bf 12} = E_{\bf 21} \,, \qquad H_{\bf 12} = H_{\bf 21} \,,
    \end{equation}
    then this remains true on $\mathcal{V}$.
\end{lem}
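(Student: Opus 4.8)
The plan is to show that the two antisymmetric combinations $E_{\bf 12}-E_{\bf 21}$ and $H_{\bf 12}-H_{\bf 21}$ satisfy homogeneous transport equations along the integral curves of $\partial_0$, so that vanishing initial data propagates. First I would isolate the two evolution equations in \eqref{evol_system} governing $E_{\bf AB}$ and $H_{\bf AB}$, namely the first and third equations of the system. The crucial structural observation is that, apart from the term $\partial_0 E_{\bf AB}$ (respectively $\partial_0 H_{\bf AB}$), every remaining term is symmetric in the index pair $({\bf A},{\bf B})$: the spatial-derivative terms carry the explicit symmetrisation brackets $\slashed{\epsilon}_{\bf (A|}{}^{\bf C}\cdots{}_{\bf |B)}$, while the quasilinear sources $N_{\bf AB}(E,H)$ and $N_{\bf AB}(H,-E)$ are symmetric in $({\bf A},{\bf B})$ by \Cref{lem:weyl_system}. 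Consequently, writing out the $({\bf 1},{\bf 2})$ and $({\bf 2},{\bf 1})$ components and subtracting them, all these symmetric contributions cancel identically, leaving
\[ \partial_0\left(E_{\bf 12}-E_{\bf 21}\right) = 0 \,, \qquad \partial_0\left(H_{\bf 12}-H_{\bf 21}\right) = 0 \quad \text{on } \mathcal{V} \,. \]

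Next I would integrate these two equations. The integral curves of $\partial_0$ are the lines of constant $(y^1,y^2,y^3)$, and by construction (via finite speed of propagation from the base $\mathcal{U}_\star$ in the proofs of \Cref{evol_solve_away} and \Cref{evol_solve_near}) the region $\mathcal{V}$ is foliated by the segments of such lines emanating from $\mathcal{U}_\star \subset \{y^0 = 0\}$. Since the hypothesis \eqref{eq:symmetry_EH} gives $E_{\bf 12}-E_{\bf 21} = 0$ and $H_{\bf 12}-H_{\bf 21} = 0$ on $\overline{\mathcal{U}}_\star$, the fundamental theorem of calculus applied along each such segment yields the vanishing of both combinations on all of $\mathcal{V}$, that is $E_{\bf 12} = E_{\bf 21}$ and $H_{\bf 12} = H_{\bf 21}$, which is the claim.

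There is essentially no analytic obstacle here: no energy estimate is required, and the boundary case demands no extra care, because the maximal dissipative boundary conditions \eqref{raw_bc} involve only the symmetric combinations $\tfrac{1}{2}(E_{\bf 12}+E_{\bf 21})$ and $\tfrac{1}{2}(H_{\bf 12}+H_{\bf 21})$, and the characteristic direction $\partial_0$ is transverse to the boundary $\mathcal{B}=\{y^3=0\}$, so that no inflow data on $\mathcal{B}$ enters the transport equations above. The only points truly needing verification are the symmetry in $({\bf A},{\bf B})$ of every source term (in particular of $N_{\bf AB}$, supplied by \Cref{lem:weyl_system}) and that the downward $\partial_0$-flow from each point of $\mathcal{V}$ reaches the initial slice within $\mathcal{V}$; the latter follows directly from the explicit conic construction of $\mathcal{V}$ in the two existence propositions.
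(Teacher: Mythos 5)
Your proposal is correct and is essentially the paper's own (very terse) proof: derive $\partial_0(E_{\bf 12}-E_{\bf 21})=0$ and $\partial_0(H_{\bf 12}-H_{\bf 21})=0$ from the symmetrised structure of \eqref{evol_system_EAB} and its $H$-counterpart, then integrate along the $\partial_0$-lines from $\mathcal{U}_\star$. One small slip: $\partial_0$ is \emph{tangent} to $\mathcal{B}=\{y^3=0\}$, not transverse to it, but your conclusion stands since a homogeneous ODE in $y^0$ needs only initial data.
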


\begin{proof}
    From the evolution system \eqref{evol_system}, one deduces that
    \[ \partial_0(E_{\bf 12}-E_{\bf 21}) = 0 \,, \qquad \partial_0(H_{\bf 12}-H_{\bf 21}) = 0 \,. \]
    Hence the result.
\end{proof}

\subsubsection{Propagation of the constraints}
\label{sec:propagation_constraints}

The evolution system encodes only part of the zero quantities and one still has to prove that the rest also vanish identically. This is done by exploiting the differential system \eqref{eq:diff_zero_quantities} on zero quantities.

\begin{lem}
    \label{lem:propagation_constraints}
    Let $\underline{u} \in \mathcal{C}^\infty(\mathcal{V},\R)$ be the solution to the evolution system \eqref{evol_system} given by \Cref{evol_solve_away} or \Cref{evol_solve_near} for some analytic initial data $\underline{u}_\star \in \mathcal{C}^\infty(\overline{\mathcal{U}}_\star,\R^m)$ verifying the hypotheses of the aforementioned propositions and the conditions \eqref{eq:symmetry_EH}.
    \begin{itemize}
        \item[a)] Define the following vector fields on $\mathcal{V}$
        \[ e_{\bf 0} := \partial_0 \,, \qquad e_{\bf i} := e_{\bf i}{}^\mu \partial_\mu \,. \]
        If the initial data $\underline{u}_\star$ is such that
        \begin{equation}
            \label{eq:initial_frame_field}
            \det (e_{\bf i}{}^j)_{1 \leq {\bf i},j \leq 3} \text{ nowhere vanishes on } \mathcal{U}_\star
        \end{equation}
        then $(e_{\bf a})$ is a frame field on $\mathcal{V}$.
    \end{itemize}
    The dual coframe field is denoted by $(\omega^{\bf a})$. One fixes an orientation on $\mathcal{V}$ by imposing that $(e_{\bf a})$ is a positively oriented frame field.
    \begin{itemize}
        \item[b)] Define further
        \begin{itemize}
            \item[$\bullet$] the following tensor fields on $\mathcal{V}$
            \[ \kappa := \kappa_{\bf i} \, \omega^{\bf i} \,, \qquad \zeta := \zeta_{\bf a} \, \omega^{\bf a} \,, \qquad \breve{g} := \eta_{\bf ab} \, \omega^{\bf a} \otimes \omega^{\bf b} \,, \qquad U := U_{\bf ij} \, \omega^{\bf i} \otimes \omega^{\bf j} + U_{\bf i0} \, \omega^{\bf i} \otimes \omega^{\bf 0} \,, \]
            
            \item[$\bullet$] $\acute{\nabla}$ as the unique connection such that its connection coefficients relative to the frame $(e_{\bf a})$ are given by
            \[ \left(\acute{\Gamma}_{\bf 0}{}^{\bf a}{}_{\bf b} = 0 \,, \acute{\Gamma}_{\bf i}{}^{\bf 0}{}_{\bf 0} \,, \acute{\Gamma}_{\bf i}{}^{\bf j}{}_{\bf 0} \,, \acute{\Gamma}_{\bf i}{}^{\bf 0}{}_{\bf j} \,, \acute{\Gamma}_{\bf i}{}^{\bf k}{}_{\bf j} \right) \,, \]
            
            \item[$\bullet$] $E$ and $H$ as the symmetric and trace-free 2-tensor on the distribution spanned by the family $(e_{\bf i})$ given by
            \begin{align*}
                E &:=  E_{\bf AB} \, \omega^{\bf A} \otimes \omega^{\bf B} - \delta^{\bf CD}E_{\bf CD} \, \omega^{\bf 3} \otimes \omega^{\bf 3} + E_{\bf 3A} \left( \omega^{\bf 3} \otimes \omega^{\bf A} + \omega^{\bf A} \otimes \omega^{\bf 3} \right) \,, \\
                H &:=  H_{\bf AB} \, \omega^{\bf A} \otimes \omega^{\bf B} - \delta^{\bf CD}H_{\bf CD} \, \omega^{\bf 3} \otimes \omega^{\bf 3} + H_{\bf 3A} \left( \omega^{\bf 3} \otimes \omega^{\bf A} + \omega^{\bf A} \otimes \omega^{\bf 3} \right)  \,,
            \end{align*}
            
            \item[$\bullet$] $V$ as the only Weyl candidate such that
            \[ V^{\bf 0}{}_{\bf i0j} = E_{\bf ij} \,, \qquad (\star V)^{\bf 0}{}_{\bf i0j} = H_{\bf ij} \,. \]
        \end{itemize}
        If the initial data $\underline{u}_\star$ and the smooth function $s_\star$ are such that one has on $\mathcal{U}_\star$
        \begin{alignat}{9}
            \varkappa_{\bf ia} &= 0 \,, &\qquad \mho_{\bf i} &=0 \,, &\qquad \varsigma_{\bf i} &=0 \,, &\qquad \acute{\vartheta}_{\bf iab} &=0 \,, &\qquad \aleph &=0 \,, \notag \\
            \varpi^{\bf a}{}_{\bf bij} &= 0 \,, & \Pi_{\bf ij} &= 0 \,, & \varrho_{\bf ija} &= 0 \,, &\qquad \acute{\Sigma}_{\bf i}{}^{\bf c}{}_{\bf j} &= 0 \,, &\qquad \varphi_{\bf AB0} &= 0 \,, \notag \\
            \varphi_{\bf A30} &= 0 \,, &\qquad (\star\varphi)_{\bf AB0} &= 0 \,, &\qquad (\star\varphi)_{\bf A30} &= 0 \label{eq:initial_conditions_constraints} \,,
        \end{alignat}
        then all the zero quantities defined in \Cref{def:zero_quantities} vanish identically on a neighbourhood $\mathcal{W}$ of $\mathcal{U}_\star$ in $\mathcal{V}$.
    \end{itemize}
\end{lem}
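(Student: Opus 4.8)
The plan is to reconstruct the geometry from $\underline{u}$, record which zero quantities the evolution system forces to vanish identically, and then propagate the remaining (constraint) zero quantities through the subsidiary system \eqref{eq:diff_zero_quantities}, organised as a homogeneous symmetric hyperbolic system with vanishing Cauchy data on $\mathcal{U}_\star$. For part a), the transition matrix from $(\partial_\mu)$ to $(e_{\bf a}) = (\partial_0,(e_{\bf i}{}^\mu\partial_\mu))$ has determinant $\det(e_{\bf i}{}^j)_{1\le {\bf i},j\le 3}$, which is nonzero on $\mathcal{U}_\star$ by \eqref{eq:initial_frame_field} and hence, by continuity, on a neighbourhood $\mathcal{W}$ of $\mathcal{U}_\star$ in $\mathcal{V}$; there $(e_{\bf a})$ is a frame field and the reconstructed fields are well defined, $V$ being determined unambiguously by $(E,H)$ through the $n=3$ case of \Cref{lem:decomposition_Weyl_cand} (where $P=0$) together with the duality of \Cref{sec:duality_dim_4}. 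I would shrink $\mathcal{W}$ freely in what follows.

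Next I would list the zero quantities that vanish on $\mathcal{W}$ for free. By construction---using $\acute{\Gamma}_{\bf 0}{}^{\bf a}{}_{\bf b}=0$, $\breve{g}_{\bf ab}=\eta_{\bf ab}$, $\kappa_{\bf 0}=0$ and the definitions \eqref{redef_szeta0Theta} of $s$, $\zeta_{\bf 0}$, $\Theta$---the normal components $\acute{\vartheta}_{\bf 0bc}$, $\varsigma_{\bf 0}$, $\mho_{\bf 0}$ and $\acute{\Sigma}_{\bf 0}{}^{\bf c}{}_{\bf b}$ vanish. On the other hand the evolution system \eqref{evol_system} is exactly the assertion---via the correspondences set up in \Cref{lem:weyl_system} and in the transport equations \eqref{transport_system}---that $\varpi^{\bf a}{}_{\bf b0i}$, $\Pi_{\bf 0i}$, $\varrho_{\bf 0ia}$, $\varkappa_{\bf 0i}$, $\acute{\Sigma}_{\bf 0}{}^{\bf a}{}_{\bf i}$ and the Weyl components $\varphi_{\bf 0ij}$, $(\star\varphi)_{\bf 0ij}$ vanish on $\mathcal{W}$; the symmetry needed to pass from the symmetrised evolution equations to the genuine zero quantities is the propagated identity $E_{\bf 12}=E_{\bf 21}$, $H_{\bf 12}=H_{\bf 21}$, which holds by the preceding lemma since \eqref{eq:symmetry_EH} is assumed. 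Thus every zero quantity carrying a ${\bf 0}$ in its differentiated slot is already zero, and only the purely spatial constraints collected in \eqref{eq:initial_conditions_constraints}---which vanish on $\mathcal{U}_\star$ by hypothesis---remain to be propagated.

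For the propagation I would exploit \eqref{eq:diff_zero_quantities}. Contracting the curl identities for $\acute{\Sigma}$, $\acute{\vartheta}$, $\Pi$, $\varpi$, $\varrho$, $\varkappa$, $\mho$, $\varsigma$, $\aleph$ with $e_{\bf 0}$ in the antisymmetrised slot and using $\acute{\Gamma}_{\bf 0}{}^{\bf a}{}_{\bf b}=0$ (so that $\acute{\nabla}_{\bf 0}$ acts as $\partial_0$ on frame components) together with the vanishing normal components yields transport equations $\partial_0(\,\cdot\,)=(\text{terms algebraic and homogeneous in the zero quantities})$; since $e_{\bf 0}=\partial_0$ is tangent to $\mathcal{B}=\{y^3=0\}$, these are ordinary differential equations along the flow of $e_{\bf 0}$ and see no boundary. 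For $\varphi$ I would combine the divergence identity \eqref{eq:div_varphi} with the components already shown to vanish and the duality relations of \Cref{sec:duality_dim_4} to produce a symmetric hyperbolic system of the same Bianchi type as \eqref{evol_system}, whose boundary matrix at $\mathcal{B}$ inherits the structure of $A^{\bf 3}$ and is arranged so that the homogeneous system with vanishing data needs no boundary condition.

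The crux is the single term $g^{\bf ef}V^{\bf ab}{}_{\bf cd}\acute{\nabla}_{\bf e}\acute{\vartheta}_{\bf abf}$ on the right of \eqref{eq:div_varphi}, which is not homogeneous of order zero in the zero quantities and therefore breaks the first-order closure of the combined system; moreover the coupling is genuinely cyclic, since $\varpi$ already depends on $\varphi$ through the $\Theta\varphi$ term in \eqref{eq:curl_varpi} and $\acute{\vartheta}$ depends on $\varpi$ through \eqref{eq:curl_vartheta}. This is precisely where the triangular structure intervenes: exploiting $\acute{\vartheta}_{\bf 0bc}=0$ and the transport content of \eqref{eq:curl_vartheta}, I would trade the differentiated quantity $\acute{\nabla}\acute{\vartheta}$ in the obstructing term for undifferentiated zero quantities, restoring closure without having to add $\acute{\nabla}_{\bf b}\acute{\vartheta}_{\bf cde}$ as a further unknown. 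I expect this re-expression---making the $\varphi$-equation homogeneous and hence assembling the transport part and the $\varphi$-part into one homogeneous symmetric hyperbolic subsidiary system---to be the main technical obstacle. Once it is done, the vanishing of all zero quantities on $\mathcal{W}$ follows from the uniqueness theory for symmetric hyperbolic systems (as used for \Cref{evol_solve_away} and \Cref{evol_solve_near}) applied to zero Cauchy data on $\mathcal{U}_\star$, together with the fact that the propagation is tangential to $\mathcal{B}$.
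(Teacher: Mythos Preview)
Your overall plan is right, but you misidentify the triangular structure, and this causes a genuine gap in how you dispose of the $\acute{\nabla}\acute{\vartheta}$ term.

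The paper does \emph{not} couple everything into one large homogeneous system and then massage away $g^{\bf ef}V^{\bf ab}{}_{\bf cd}\acute{\nabla}_{\bf e}\acute{\vartheta}_{\bf abf}$. Instead it observes that, after Step~1, the five quantities $(\varkappa_{\bf ia},\mho_{\bf i},\varsigma_{\bf i},\acute{\vartheta}_{\bf iab},\aleph)$ form a \emph{closed} homogeneous ODE subsystem along $e_{\bf 0}$: when you take the ${\bf 0}{\bf i}$-antisymmetrised component of \eqref{eq:curl_varkappa}--\eqref{eq:curl_aleph} and use the vanishing normal components, every $\varpi$, $\varrho$, $\Pi$, $\acute{\Sigma}$, $\varphi$ contribution drops out, leaving only these five on the right-hand side. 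Cauchy--Lipschitz then gives $\acute{\vartheta}_{\bf abc}\equiv 0$ on all of $\mathcal{V}$ \emph{before} you touch $\varphi$. With $\acute{\vartheta}\equiv 0$ the obstructing term in \eqref{eq:div_varphi} is simply zero, the correction $\mathcal{F}(\acute{\vartheta})$ in the evolution components of $\varphi$ vanishes so that $\varphi$ is an honest Cotton candidate, and the remaining block $(\overline{\underline{\varphi}},\varpi^{\bf a}{}_{\bf bij},\Pi_{\bf ij},\varrho_{\bf ija},\acute{\Sigma}_{\bf i}{}^{\bf c}{}_{\bf j})$ is a genuine homogeneous symmetric hyperbolic system.

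Your alternative---use \eqref{eq:curl_vartheta} and $\acute{\vartheta}_{\bf 0bc}=0$ to ``trade'' $\acute{\nabla}\acute{\vartheta}$ for undifferentiated zero quantities---does not close. The curl identity controls $\acute{\nabla}_{\bf [a}\acute{\vartheta}_{\bf b]cd}$, and together with $\acute{\vartheta}_{\bf 0bc}=0$ it lets you replace $\acute{\nabla}_{\bf 0}\acute{\vartheta}_{\bf ibc}$; but the obstructing term also contains purely spatial pieces $g^{\bf kl}\acute{\nabla}_{\bf k}\acute{\vartheta}_{\bf abl}$, and neither the transport equation nor the curl gives you $\acute{\nabla}_{\bf k}\acute{\vartheta}_{\bf iab}$ in terms of undifferentiated zero quantities. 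So unless you either adopt the paper's two-stage decoupling or promote $\acute{\nabla}_{\bf b}\acute{\vartheta}_{\bf cde}$ to an extra unknown (the route the paper explicitly avoids), this step fails.

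Two smaller points. For part a), continuity only gives a frame on a neighbourhood of $\mathcal{U}_\star$, whereas the statement asserts a frame on all of $\mathcal{V}$; the paper gets this by deriving the transport equation $\partial_0\bigl(\det(e_{\bf i}{}^j)\bigr)+\acute{\Gamma}_{\bf k}{}^{\bf k}{}_{\bf 0}\det(e_{\bf i}{}^j)=0$ from the evolution system. For the boundary behaviour of the subsidiary $\varphi$-system, the point is sharper than ``inherits the structure of $A^{\bf 3}$'': the principal part of the $\overline{\underline{\varphi}}$-equations involves only $e_{\bf 1},e_{\bf 2}$ (no $e_{\bf 3}$), and since $e_{\bf A}{}^3=0$ on $\mathcal{B}$ by \Cref{evol_solve_near} the boundary matrix \emph{vanishes identically}, so no boundary condition is needed at all.
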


\begin{proof}
    a) By definition of the vector fields, one has on $\mathcal{V}$
    \[ \det(e_{\bf 0},e_{\bf 1},e_{\bf 2},e_{\bf 3}) = \det \begin{pmatrix}
        1 & e_{\bf i}{}^0 \\
        0 & e_{\bf i}{}^j
        \end{pmatrix} = \det (e_{\bf i}{}^j)_{{\bf i},j} \,. \]    
    Moreover, one derives from the evolution system \eqref{evol_system} that
    \[ \partial_0 \left( \det (e_{\bf i}{}^j)_{{\bf i},j} \right) + \acute{\Gamma}_{\bf k}{}^{\bf k}{}_{\bf 0}  \det (e_{\bf i}{}^j)_{{\bf i},j} = 0 \,. \]
    Thus if \eqref{eq:initial_frame_field} holds then $\det(e_{\bf 0},e_{\bf 1},e_{\bf 2},e_{\bf 3})$ nowhere vanishes on $\mathcal{V}$.

    \bigbreak

    \noindent b) Let us structure the proof into four steps.
    \begin{itemize}
        \item \underline{Step 1}: Vanishing zero quantities by definition of the fields and the evolution
            
        By definition of $s$, $\zeta_{\bf 0}$ and $\Theta$ given by \eqref{redef_szeta0Theta}, one can check that
        \[ \mho_{\bf 0} = 0 \,, \qquad \varsigma_{\bf 0} = 0 \,, \qquad \varkappa_{\bf 00} = 0 \,. \]
        Furthermore, the evolution system \eqref{evol_system} gives
        \begin{alignat*}{9}
            \varpi^{\bf a}{}_{\bf b0i} &= 0 \,, &\qquad \Pi_{\bf 0i} &= 0 \,, &\qquad  \varrho_{\bf 0ia} &= 0 \,, \\
            \varkappa_{\bf 0i} &= 0 \,, & \acute{\vartheta}_{\bf 0ab} &= 0 \,, & \acute{\Sigma}_{\bf 0}{}^{\bf a}{}_{\bf i} &= 0 \,.
        \end{alignat*}
        and
        \begin{subequations}
            \label{intermediaire}
            \begin{alignat}{4}
                \varphi_{\bf 0A3} + \mathcal{F}(\acute{\vartheta}) &= 0 \,, &\qquad
                \varphi_{\bf 0(AB)} + \mathcal{F}(\acute{\vartheta}) &= 0 \,, \\
                (\star \varphi)_{\bf 0A3} + \mathcal{F}(\acute{\vartheta}) &= 0 \,, &\qquad
                (\star \varphi)_{\bf 0(AB)} + \mathcal{F}(\acute{\vartheta}) &= 0 \,,
            \end{alignat}
        \end{subequations}
        where $\mathcal{F}(\acute{\vartheta})$ denote homogeneous terms of the zero quantity $\acute{\vartheta}_{\bf abc}$ and where
        \[ (\star \varphi)_{\bf abc} := \frac{1}{2} \breve{\epsilon}_{\bf ab}{}^{\bf de} \varphi_{\bf dec} \,. \]

        \item \underline{Step 2}: Propagation of $\varkappa_{\bf ia}=0$, $\mho_{\bf i}=0$, $\varsigma_{\bf i}=0$, $\acute{\vartheta}_{\bf iab}=0$ and $\aleph=0$

        On one hand, one has by definition of the fields and step 1,
        \begin{align*}
            2 \acute{\nabla}_{\bf [0} \varkappa_{\bf i]a} &= e_{\bf 0}(\varkappa_{\bf ia}) + \acute{\Gamma}_{\bf i}{}^{\bf k}{}_{\bf 0} \varkappa_{\bf ka} \,, \\
            2 \acute{\nabla}_{\bf [0} \mho_{\bf i]} &= e_{\bf 0}(\mho_{\bf i}) + \acute{\Gamma}_{\bf i}{}^{\bf k}{}_{\bf 0} \mho_{\bf k} \,, \\ 
            2 \acute{\nabla}_{\bf [0} \varsigma_{\bf i]} &= e_{\bf 0}(\varsigma_{\bf i}) + \acute{\Gamma}_{\bf i}{}^{\bf k}{}_{\bf 0} \varsigma_{\bf k} \,, \\
            2 \acute{\nabla}_{\bf [0} \acute{\vartheta}_{\bf i]ab} &= e_{\bf 0}(\acute{\vartheta}_{\bf iab}) - \acute{\Gamma}_{\bf i}{}^{\bf k}{}_{\bf 0} \acute{\vartheta}_{\bf kab} \,, \\
            \acute{\nabla}_{\bf 0} \aleph &= e_{\bf 0}(\aleph) \,.
        \end{align*}
        On the other hand, equations \eqref{eq:curl_varkappa}, \eqref{eq:curl_mho}, \eqref{eq:curl_varsigma}, \eqref{eq:curl_vartheta} and \eqref{eq:curl_aleph} simplify greatly by definition of the fields and step 1 to
        \begin{align*}
            2 \acute{\nabla}_{\bf [0} \varkappa_{\bf i]a} &= s \acute{\vartheta}_{\bf i0a} + \eta_{\bf a0} \mho_{\bf i} \,, \\
            2 \acute{\nabla}_{\bf [0} \mho_{\bf i]} &= 0 \,, \\
            2 \acute{\nabla}_{\bf [0} \varsigma_{\bf i]} &= - \varkappa_{\bf i0} \,, \\
            2 \acute{\nabla}_{\bf [0} \acute{\vartheta}_{\bf i]ab} &=  0 \,, \\
            \acute{\nabla}_{\bf 0} \aleph &= 0 \,.
        \end{align*}
        Hence
        \begin{subequations}
            \begin{align}
                \label{propagation_varkappa}
                \partial_0(\varkappa_{\bf ia}) + \acute{\Gamma}_{\bf i}{}^{\bf k}{}_{\bf 0} \varkappa_{\bf ka} &= s \acute{\vartheta}_{\bf i0a} + \eta_{\bf a0} \mho_{\bf i} \,, \\
                \label{propagation_mho}
                \partial_0(\mho_{\bf i}) + \acute{\Gamma}_{\bf i}{}^{\bf k}{}_{\bf 0} \mho_{\bf k} &= 0 \,, \\
                \label{propagation_varsigma}
                \partial_0(\varsigma_{\bf i}) + \acute{\Gamma}_{\bf i}{}^{\bf k}{}_{\bf 0} \varsigma_{\bf k} &= -\varkappa_{\bf i0} \,, \\
                \label{propagation_vartheta}
                \partial_0(\acute{\vartheta}_{\bf iab}) - \acute{\Gamma}_{\bf i}{}^{\bf k}{}_{\bf 0} \acute{\vartheta}_{\bf kab} &= 0 \,, \\
                \label{propagation_aleph}
                \partial_0(\aleph) &= 0 \,.
            \end{align}
        \end{subequations}
        This is a homogeneous linear system of ordinary differential equations. By the Cauchy-Lipschitz theorem, there exists a unique solution on $\mathcal{V}$ for a given initial data. Since the zero quantities $(\varkappa_{\bf ia},\mho_{\bf i},\varsigma_{\bf i},\acute{\vartheta}_{\bf iab},\aleph)$ vanish identically on $\mathcal{U}_\star$ by \eqref{eq:initial_conditions_constraints}, they vanish identically on $\mathcal{V}$.

        \item \underline{Step 3}: Decomposing $\varphi_{\bf abc}$
    
        Thanks to steps 1 and 2, one deduces that $\acute{\vartheta}_{\bf abc} = 0$. It follows that $\varphi_{\bf abc}$ is a Cotton candidate. Indeed, one has by definition of $\varphi_{\bf abc}$, see \eqref{eq:ZQ_varphi},
        \[ \varphi_{\bf (ab)c} = 0 \,, \qquad \varphi_{[\bf abc]} = 0 \,, \qquad \varphi_{\bf ab}{}^{\bf b} = - \acute{\vartheta}_{\bf dbc} V^{\bf dc}{}_{\bf a}{}^{\bf b} \,. \]
        By keeping only algebraically independent frame components of $\varphi_{\bf abc}$, it is sufficient to study
        \begin{alignat*}{4}
            \varphi_{\bf 0(AB)} \,, \qquad& \varphi_{\bf 0A3} \,, \qquad&
            (\star\varphi)_{\bf 0(AB)} \,, \qquad & (\star\varphi)_{\bf 0A3} \,, \\
            \varphi_{\bf 120} \,, \qquad& \varphi_{\bf A30} \,, \qquad&
            (\star\varphi)_{\bf 120} \,, \qquad& (\star\varphi)_{\bf A30} \,.
        \end{alignat*}
        Now the frame components on the first line vanish on $\mathcal{V}$ by \eqref{intermediaire} and $\acute{\vartheta}_{\bf abc} = 0$. The frame components on the second line will be denoted by the symbol $\overline{\underline{\varphi}}$.
    
        \item \underline{Step 4}: Propagation of $\overline{\underline{\varphi}} = 0$, $\varpi^{\bf a}{}_{\bf bij} = 0$, $\Pi_{\bf ia} = 0$, $\varrho_{\bf ija} = 0$ and $\acute{\Sigma}_{\bf i}{}^{\bf c}{}_{\bf j} = 0$
    
        On one hand, by definition of the fields and step 1,
        \begin{align*}
            3 \acute{\nabla}_{\bf [0} \varpi^{\bf a}{}_{\bf |b|ij]} &= e_{\bf 0}(\varpi^{\bf a}{}_{\bf bij}) - \acute{\Gamma}_{\bf j}{}^{\bf k}{}_{\bf 0} \varpi^{\bf a}{}_{\bf bki} + \acute{\Gamma}_{\bf i}{}^{\bf k}{}_{\bf 0} \varpi^{\bf a}{}_{\bf bkj} \,, \\
            3 \acute{\nabla}_{\bf [0} \Pi_{\bf ij]} &= e_{\bf 0}(\Pi_{\bf ij}) - \acute{\Gamma}_{\bf j}{}^{\bf k}{}_{\bf 0} \Pi_{\bf ki} + \acute{\Gamma}_{\bf i}{}^{\bf k}{}_{\bf 0} \Pi_{\bf kj} \,, \\ 
            3 \acute{\nabla}_{\bf [0} \varrho_{\bf ij]a} &= e_{\bf 0}(\varrho_{\bf ija}) - \acute{\Gamma}_{\bf j}{}^{\bf k}{}_{\bf 0} \varrho_{\bf kia} + \acute{\Gamma}_{\bf i}{}^{\bf k}{}_{\bf 0} \varrho_{\bf kja} \,, \\
            3 \acute{\nabla}_{\bf [0} \acute{\Sigma}_{\bf i}{}^{\bf a}{}_{\bf j]} &= e_{\bf 0}(\acute{\Sigma}_{\bf i}{}^{\bf a}{}_{\bf j}) - \acute{\Gamma}_{\bf j}{}^{\bf k}{}_{\bf 0} \acute{\Sigma}_{\bf k}{}^{\bf a}{}_{\bf i} + \acute{\Gamma}_{\bf i}{}^{\bf k}{}_{\bf 0} \acute{\Sigma}_{\bf k}{}^{\bf a}{}_{\bf j} \,.
        \end{align*}
        On the other hand, equations \eqref{eq:curl_varpi}, \eqref{eq:curl_Pi}, \eqref{eq:curl_varrho} and \eqref{eq:curl_Sigma}, simplify greatly by definition of the fields and steps 1 and 2 to
        \begin{align*}
            3 \acute{\nabla}_{\bf [0} \varpi^{\bf a}{}_{\bf |b|ij]} &= - \acute{\Sigma}_{\bf i}{}^{\bf f}{}_{\bf j} \acute{R}^{\bf a}{}_{\bf b0f} + S_{\bf b0}{}^{\bf ac} \varrho_{\bf ijc} - \frac{\Theta}{6} \breve{\epsilon}^{\bf k}{}_{\bf 0ij} \breve{\epsilon}{}^{\bf a}{}_{\bf b}{}^{\bf cd} \varphi_{\bf cdk} \,, \\
            3 \acute{\nabla}_{\bf [0} \Pi_{\bf ij]} &= - \frac{1}{2} \acute{\Sigma}_{\bf i}{}^{\bf l}{}_{\bf j} \acute{\Gamma}_{\bf l}{}^{\bf k}{}_{\bf 0} \kappa_{\bf k} - \frac{1}{2} \varpi^{\bf k}{}_{\bf 0ij} \kappa_{\bf k} +  \varrho_{\bf ij0} \,, \\
            3 \acute{\nabla}_{[\bf 0} \varrho_{\bf ij]a} &= -\acute{\Sigma}_{\bf i}{}^{\bf l}{}_{\bf j} \acute{\Gamma}_{\bf l}{}^{\bf k}{}_{\bf 0} U_{\bf ka} - \varpi^{\bf k}{}_{\bf 0ij} U_{\bf ka} - \frac{1}{6} \zeta_{\bf d} \breve{\epsilon}^{\bf k}{}_{\bf 0ij} \breve{\epsilon}{}^{\bf d}{}_{\bf a}{}^{\bf bc} \varphi_{\bf bck} \,, \\
            3 \acute{\nabla}_{\bf [0} \acute{\Sigma}_{\bf i}{}^{\bf d}{}_{\bf j]} &= - \varpi^{\bf d}{}_{\bf 0ij} \,.
        \end{align*}
        Hence
        \begin{subequations}
            \begin{align}
                \label{propagation_varpi}
                \partial_0(\varpi^{\bf a}{}_{\bf bij}) - \acute{\Gamma}_{\bf j}{}^{\bf k}{}_{\bf 0} \varpi^{\bf a}{}_{\bf bki} + \acute{\Gamma}_{\bf i}{}^{\bf k}{}_{\bf 0} \varpi^{\bf a}{}_{\bf bkj} &= - \acute{\Sigma}_{\bf i}{}^{\bf f}{}_{\bf j} \acute{R}^{\bf a}{}_{\bf b0f} + S_{\bf b0}{}^{\bf ac} \varrho_{\bf ijc} - \frac{\Theta}{6} \breve{\epsilon}^{\bf k}{}_{\bf 0ij} \breve{\epsilon}{}^{\bf a}{}_{\bf b}{}^{\bf cd} \varphi_{\bf cdk} \,, \\
                \label{propagation_Pi}
                \partial_0(\Pi_{\bf ij}) - \acute{\Gamma}_{\bf j}{}^{\bf k}{}_{\bf 0} \Pi_{\bf ki} + \acute{\Gamma}_{\bf i}{}^{\bf k}{}_{\bf 0} \Pi_{\bf kj} &= - \frac{1}{2} \acute{\Sigma}_{\bf i}{}^{\bf l}{}_{\bf j} \acute{\Gamma}_{\bf l}{}^{\bf k}{}_{\bf 0} \kappa_{\bf k} - \frac{1}{2} \varpi^{\bf k}{}_{\bf 0ij} \kappa_{\bf k} + \varrho_{\bf ij0} \,, \\ 
                \label{propagation_varrho}
                \partial_0(\varrho_{\bf ija}) - \acute{\Gamma}_{\bf j}{}^{\bf k}{}_{\bf 0} \varrho_{\bf kia} + \acute{\Gamma}_{\bf i}{}^{\bf k}{}_{\bf 0} \varrho_{\bf kja} &= -\acute{\Sigma}_{\bf i}{}^{\bf l}{}_{\bf j} \acute{\Gamma}_{\bf l}{}^{\bf k}{}_{\bf 0} U_{\bf ka} - \varpi^{\bf k}{}_{\bf 0ij} U_{\bf ka} - \frac{1}{6} \zeta_{\bf d} \breve{\epsilon}^{\bf k}{}_{\bf 0ij} \breve{\epsilon}{}^{\bf d}{}_{\bf a}{}^{\bf bc} \varphi_{\bf bck} \,, \\
                \label{propagation_Sigma}
                \partial_0(\acute{\Sigma}_{\bf i}{}^{\bf a}{}_{\bf j}) - \acute{\Gamma}_{\bf j}{}^{\bf k}{}_{\bf 0} \acute{\Sigma}_{\bf k}{}^{\bf a}{}_{\bf i} + \acute{\Gamma}_{\bf i}{}^{\bf k}{}_{\bf 0} \acute{\Sigma}_{\bf k}{}^{\bf a}{}_{\bf j} &= - \varpi^{\bf d}{}_{\bf 0ij} \,.
            \end{align}
        \end{subequations}
        Note that terms with $\varphi_{\bf abi}$ appearing in \eqref{propagation_varpi} and \eqref{propagation_varrho} can be expressed as linear combinations of $\overline{\underline{\varphi}}$. It only remains to derive the propagation equations for $\overline{\underline{\varphi}}$.

        On one hand,
        \begin{align*}
            \breve{g}^{\bf cd} \acute{\nabla}_{\bf c} \varphi_{\bf ABd} &= -e_{\bf 0}(\varphi_{\bf AB0}) + e_{\bf 3}(\varphi_{\bf AB3}) + \delta^{\bf CD} e_{\bf C}(\varphi_{\bf ABD}) + \acute{\Gamma} \times \overline{\underline{\varphi}} \,, \\
            \breve{g}^{\bf cd} \acute{\nabla}_{\bf c} \varphi_{\bf A3d} &= -e_{\bf 0}(\varphi_{\bf A30}) + e_{\bf 3}(\varphi_{\bf A33}) + \delta^{\bf CD}e_{\bf C}(\varphi_{\bf A3D}) + \acute{\Gamma} \times \overline{\underline{\varphi}} \,, \\
            \breve{g}^{\bf cd} \acute{\nabla}_{\bf c} (\star\varphi)_{\bf ABd} &= -e_{\bf 0}((\star\varphi)_{\bf AB0}) + e_{\bf 3}((\star\varphi)_{\bf AB3}) + \delta^{\bf CD} e_{\bf C}((\star\varphi)_{\bf ABD}) + \acute{\Gamma} \times \overline{\underline{\varphi}} \,, \\
            \breve{g}^{\bf cd} \acute{\nabla}_{\bf c} (\star\varphi)_{\bf A3d} &= -e_{\bf 0}((\star\varphi)_{\bf A30}) + e_{\bf 3}((\star\varphi)_{\bf A33}) + \delta^{\bf CD}e_{\bf C}((\star\varphi)_{\bf A3D}) + \acute{\Gamma} \times \overline{\underline{\varphi}} \,,
        \end{align*}
        where $\acute{\Gamma}\times\overline{\underline{\varphi}}$ are a sum of bilinear terms in $\left(\acute{\Gamma}_{\bf a}{}^{\bf b}{}_{\bf c},\overline{\underline{\varphi}}\right)$. With step 3, one has
        \begin{align*}
            \varphi_{\bf AB3} &= -\breve{\epsilon}_{\bf AB}{}^{\bf 03}(\star\varphi)_{\bf 033} = \slashed{\epsilon}_{\bf AB} (\star\varphi)_{\bf 0C}{}^{\bf C} = 0 \,, \\
            \varphi_{\bf A33} &= -\breve{\epsilon}_{\bf A3}{}^{\bf 0B} (\star\varphi)_{\bf 0B3} = 0 \,, \\
            (\star\varphi)_{\bf AB3} &= \breve{\epsilon}_{\bf AB}{}^{\bf 03}\varphi_{\bf 033} = \slashed{\epsilon}_{\bf AB} \varphi_{\bf 0C}{}^{\bf C} = 0 \,, \\
            (\star\varphi)_{\bf A33} &= \breve{\epsilon}_{\bf A3}{}^{\bf 0B} \varphi_{\bf 0B3} = 0 \,,
        \end{align*}
        and
        \begin{align*}
            \varphi_{\bf ABD} &= - \breve{\epsilon}_{\bf AB}{}^{\bf 03} (\star\varphi)_{\bf 03D} = \slashed{\epsilon}_{\bf AB} (\star\varphi)_{\bf 03D} = \slashed{\epsilon}_{\bf AB} (\star\varphi)_{\bf D30} \,, \\
            \varphi_{\bf A3D} &= -\breve{\epsilon}_{\bf A3}{}^{\bf 0B} (\star\varphi)_{\bf 0BD} = -\slashed{\epsilon}_{\bf A}{}^{\bf B} (\star\varphi)_{\bf 0BD} = \frac{1}{2} \slashed{\epsilon}_{\bf A}{}^{\bf B} (\star\varphi)_{\bf BD0} \,, \\
            (\star\varphi)_{\bf ABD} &= \breve{\epsilon}_{\bf AB}{}^{\bf 03} \varphi_{\bf 03D} = -\slashed{\epsilon}_{\bf AB} \varphi_{\bf 03D} = -\slashed{\epsilon}_{\bf AB} \varphi_{\bf D30} \,, \\
            (\star\varphi)_{\bf A3D} &=\breve{\epsilon}_{\bf A3}{}^{\bf 0B} \varphi_{\bf 0BD} = \slashed{\epsilon}_{\bf A}{}^{\bf B} \varphi_{\bf 0BD} = -\frac{1}{2} \slashed{\epsilon}_{\bf A}{}^{\bf B} \varphi_{\bf BD0}\,.
        \end{align*}

        On the other hand, equation \eqref{eq:div_varphi} simplifies by definition of the fields and steps 1 and 2 to
        \begin{align*}
            \breve{g}^{\bf be} \acute{\nabla}_{\bf e} \varphi_{\bf cdb} &= -\frac{1}{2} \acute{\Sigma}_{\bf i}{}^{\bf b}{}_{\bf j} \acute{\nabla}_{\bf b} V^{\bf ij}{}_{\bf cd} - \varpi^{\bf i}{}_{\bf bij} V^{\bf bj}{}_{\bf cd} - V^{\bf ij}{}_{\bf b[c} \varpi^{\bf b}{}_{\bf d]ij} + \kappa_{\bf i} \delta^{\bf ij} \varphi_{\bf cdj} + \frac{3}{2} V^{\bf ij}{}_{\bf cd} \Pi_{\bf ij} \,.
        \end{align*}
        Since $\acute{\vartheta}_{\bf abc} = 0$ by the previous steps, $\acute{\nabla}_{\bf a} \breve{\epsilon}_{\bf bc}{}^{\bf de} = 0$ and one can take the dual of the above equation to get a similar equation on $(\star\varphi)$.
        
        As a consequence, one can construct a system of partial differential equation for $\overline{\underline{\varphi}}$ of the form
        \begin{equation}
            \label{propagation_varphi}
            C^\mu(e_{\bf 1},e_{\bf 2}) \partial_\mu \overline{\underline{\varphi}} = \acute{\Gamma} \times \overline{\underline{\varphi}} + \kappa \times \overline{\underline{\varphi}} + \mathcal{F}(\varpi^{\bf a}{}_{\bf bij}, \Pi_{\bf ia}, \varrho_{\bf ija},\acute{\Sigma}_{\bf i}{}^{\bf a}{}_{\bf j}) \,,
        \end{equation}
        where
        \begin{itemize}
            \item[$\bullet$] $C^0(e_{\bf 1},e_{\bf 2})$ is an affine function of $e_{\bf 1}{}^0$ and $e_{\bf 2}{}^0$, with values into symmetric matrices and verifying $C^0(0,0) = I$,
            \item[$\bullet$] $C^j(e_{\bf 1},e_{\bf 2})$ are linear functions of $e_{\bf 1}{}^j$ and $e_{\bf 2}{}^j$ with values into symmetric matrices,
            \item[$\bullet$] $\mathcal{F}(\varpi^{\bf a}{}_{\bf bij}, \Pi_{\bf ia}, \varrho_{\bf ija},\acute{\Sigma}_{\bf i}{}^{\bf a}{}_{\bf j})$ denotes linear terms in the frame components $(\varpi^{\bf a}{}_{\bf bij},\Pi_{\bf ia},\varrho_{\bf ija},\acute{\Sigma}_{\bf i}{}^{\bf a}{}_{\bf j})$.
        \end{itemize}
        Furthermore, since $|(e_{\bf 1}{}^0,e_{\bf 2}{}^0)|_2 \leq |(e_{\bf 1}{}^0,e_{\bf 2}{}^0,e_{\bf 3}{}^0)|_2 \leq 1-2c$ on $\mathcal{V}$, one deduces that there exists $c'>0$ such that $C^0(e_{\bf 1},e_{\bf 2}) \geq c' I$.
        
        Consequently \eqref{propagation_varpi}-\eqref{propagation_Sigma} and \eqref{propagation_varphi} form a homogeneous linear symmetric hyperbolic system. Furthermore, in the case $\mathcal{B}_\star \neq \varnothing$, the boundary matrix vanishes identically since $e_{\bf A}{}^3 = 0$ on $\mathcal{V}\cap\mathcal{B}$ by \Cref{evol_solve_near}. It follows that there exists a unique solution on a potential subregion $\mathcal{W}$ of $\mathcal{V}$, neighbourhood of $\mathcal{U}_\star$ in $\mathcal{V}$, for a given initial data. Since the zero quantities $(\overline{\underline{\varphi}}, \varpi^{\bf a}{}_{\bf bij}, \Pi_{\bf ia}, \varrho_{\bf ija}, \acute{\Sigma}_{\bf i}{}^{\bf c}{}_{\bf j})$ vanish identically on $\mathcal{U}_\star$ by \eqref{eq:initial_conditions_constraints}, they vanish identically on $\mathcal{W}$. \qedhere
    \end{itemize}
\end{proof}

\begin{cor}
    \label{cor:geometry_back}
    With all the assumptions and notations of \Cref{lem:propagation_constraints}, one has
    \begin{itemize}
        \item the quintuple $(\mathcal{W},\breve{g},V,\Theta,\kappa) \in \mathscr{E}$ is a solution to the \eqref{eq:CVE} with $\Lambda=-3$,
        \item the conformal boundary $\mathscr{I}$ of $(\mathcal{W},\breve{g},V,\Theta,\kappa)$ is $\mathcal{W} \cap \mathcal{B}$,
        \item $(\Theta,\acute{\nabla},(e_{\bf a}),(y^\mu))$ is a gauge choice on $\mathcal{W}$ verifying \eqref{eq:gauge_core_properties} and \eqref{gauge_additional_properties},
        \item if $\mathscr{I}\neq\varnothing$, $(\mathcal{W},\mathscr{I},\widetilde{g}:=\Theta^{-2}\breve{g})$ is a 4-dimensional aAdS space with smooth rescaled metrics such that $(\mathcal{W}\setminus\mathscr{I},\widetilde{g})$ is solution to the \eqref{eq:VE} with $\Lambda=-3$,
        \item if $\mathscr{I}\neq\varnothing$, $e_{\bf 3}$ is its outward-pointing $\breve{g}$-unit normal.
    \end{itemize}
\end{cor}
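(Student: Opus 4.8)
The plan is to read \Cref{cor:geometry_back} as a dictionary that translates the conclusion of \Cref{lem:propagation_constraints} back into the geometric language of the \eqref{eq:CVE} and of aAdS spaces, so the work is almost entirely bookkeeping with the results already at hand. The starting point for all five items is that, under the stated assumptions, \Cref{lem:propagation_constraints} yields the vanishing of \emph{all} the zero quantities of \Cref{def:zero_quantities} on $\mathcal{W}$, for the data $(\acute{\nabla},U,\zeta,s,(e_{\bf a}))$ built there. First I would invoke the reciprocal statement \Cref{lem:link_zero_quantities_CVE}~$ii)$: it identifies $\acute{\nabla}$ with the Weyl connection $\widehat{\nabla}$ associated to $\kappa$ with respect to $\breve{g}$, $U$ with its Schouten tensor $\widehat{L}$, $\zeta$ with $\widehat{\zeta}_{\breve{g}}$ and $s$ with the extended Friedrich scalar $\widehat{s}_{\breve{g}}$, and concludes that $(\mathcal{W},\breve{g},V,\Theta,\kappa)\in\mathscr{E}$ solves the \eqref{eq:CVE} with $\Lambda=-3$. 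This is exactly the first item and licenses the identifications $\acute{\nabla}=\widehat{\nabla}$, $U=\widehat{L}$, $\zeta=\widehat{\zeta}_{\breve{g}}$, $s=\widehat{s}_{\breve{g}}$ in all that follows.

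For the second item I would unfold \Cref{def:conf_bound_VCE}, namely $\mathscr{I}=\{\Theta=0\}$, and split into the two regimes of the evolution propositions. If $\mathcal{B}_\star=\varnothing$ then $\mathcal{B}=\varnothing$ and \Cref{evol_solve_away} gives $\Theta>0$ on $\mathcal{V}$, so $\mathscr{I}=\varnothing=\mathcal{W}\cap\mathcal{B}$; if $\mathcal{B}_\star\neq\varnothing$ then \Cref{evol_solve_near} shows $\Theta$ to be a boundary defining function of $\mathcal{V}\cap\mathcal{B}$, whence $\{\Theta=0\}\cap\mathcal{V}=\mathcal{V}\cap\mathcal{B}$, and restriction to $\mathcal{W}$ gives $\mathscr{I}=\mathcal{W}\cap\mathcal{B}$ in both cases. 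The third item is then a line-by-line verification of \eqref{eq:gauge_core_properties} and \eqref{gauge_additional_properties}: the core properties \eqref{eq:core_e}--\eqref{eq:core_j} hold by construction of the fields ($e_{\bf 0}=\partial_0$; $\acute{\Gamma}_{\bf 0}{}^{\bf a}{}_{\bf b}=0$; the absence of a $\bf 0$-slot in $U$ and in $\kappa$, which together with $U=\widehat{L}$ forces $\widehat{L}_{\bf 0a}=0$ and $\kappa_{\bf 0}=0$; $\breve{g}=\eta$; the chosen orientation giving $\breve{\epsilon}_{\bf 0123}=+1$), while \eqref{eq:core_a} and \eqref{eq:core_d} come from the identification with $\R^4_{3\geq0}$ and \eqref{eq:core_b}--\eqref{eq:core_c} from the hypotheses on $\Theta_\star$ and from \eqref{eq:initial_conditions_ei0}. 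The additional properties \eqref{gauge_s}--\eqref{gauge_Theta} then follow by comparing the explicit prescriptions \eqref{redef_szeta0Theta} with the identifications $s=\widehat{s}_{\breve{g}}$, $\zeta_{\bf 0}=(\widehat{\zeta}_{\breve{g}})_{\bf 0}$ and $\Theta=\breve{\Theta}$, and \eqref{gauge_IcapU} is immediate from the second item together with $\mathcal{B}\subset\{y^3=0\}$.

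For the fourth item, assuming $\mathscr{I}\neq\varnothing$, I would take $\mathcal{W}$ connected (which is harmless), so that $\mathcal{W}\setminus\mathscr{I}$ is connected and dense and $\mathcal{W}$ is its closure; applying \Cref{prop:aAdS_VE}~$ii)$ to the (CVE)-solution of the first item then directly produces the $4$-dimensional aAdS space $(\mathcal{W},\mathscr{I},\widetilde{g}=\Theta^{-2}\breve{g})$ with smooth rescaled metrics, together with the \eqref{eq:VE} for $\Lambda=-3$ on $\mathcal{W}\setminus\mathscr{I}$ (using $|\Theta|=\Theta$ since $\Theta\geq0$). Finally, for the fifth item, on $\mathscr{I}=\mathcal{W}\cap\mathcal{B}\subset\{y^3=0\}$ I would invoke \Cref{evol_solve_near}, which gives $e_{\bf A}{}^3=0$ and $e_{\bf 3}{}^3<0$: since $e_{\bf 0}=\partial_0$ and $e_{\bf 1},e_{\bf 2}$ then annihilate $dy^3$ on $\mathscr{I}$, the triple $(e_{\bf 0},e_{\bf 1},e_{\bf 2})$ spans $T\mathscr{I}$, so orthonormality of $(e_{\bf a})$ forces $e_{\bf 3}\perp T\mathscr{I}$ with $\breve{g}(e_{\bf 3},e_{\bf 3})=\eta_{\bf 33}=1$, identifying $e_{\bf 3}$ as the $\breve{g}$-unit normal, while $\langle dy^3,e_{\bf 3}\rangle=e_{\bf 3}{}^3<0$ singles it out as the outward-pointing one for the domain $\{y^3\geq0\}$. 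The argument is synthesis rather than analysis; the only points that demand care are the clean case split in the second item and checking that $\mathcal{W}$ is the closure of a single connected component, so that \Cref{prop:aAdS_VE} applies verbatim with the choice $\mathcal{U}=\mathcal{W}$.
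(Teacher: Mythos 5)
Your proposal is correct and follows essentially the same route as the paper: all zero quantities vanish by \Cref{lem:propagation_constraints}, then \Cref{lem:link_zero_quantities_CVE}~$ii)$ identifies the fields and yields the (CVE), the evolution propositions identify $\{\Theta=0\}$ with $\mathcal{W}\cap\mathcal{B}$, the gauge properties hold by construction, \Cref{prop:aAdS_VE}~$ii)$ gives the aAdS statement, and the frame conditions on $\mathcal{B}$ identify $e_{\bf 3}$ as the outward-pointing unit normal. The extra care you take with the case split on $\mathcal{B}_\star$ and the connectedness of $\mathcal{W}$ is harmless elaboration of what the paper leaves implicit.
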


\begin{rem}
    From now on, if $\mathscr{I}\neq\varnothing$ then one can use all the geometric notions and properties of aAdS spaces described in \Cref{sec:aAdS_spaces}.
\end{rem}

\begin{proof}
    Thanks to b) of \Cref{lem:propagation_constraints}, all the zero quantities vanish identically on $\mathcal{W}$. By application of $ii)$ of \Cref{lem:link_zero_quantities_CVE}, one deduces that $\acute{\nabla}$ is the Weyl connection $\widecheck{\nabla}$ associated to $\kappa$ with respect to $\breve{g}$, $U_{\alpha\beta}$ is its Schouten tensor $\widecheck{L}_{\alpha\beta}$, $\zeta_\alpha$ is the covector field $(\widecheck{\zeta}_{\breve{g}})_\alpha$, $s$ is the extended Friedrich scalar $\widecheck{s}_{\breve{g}}$ and $(\mathcal{W},\breve{g},V,\Theta,\kappa)$ is solution to the (CVE) with $\Lambda=-3$.

    Whether $\mathcal{B}_\star$ is empty or not, $\Theta$ is a boundary defining function of $\mathcal{V}\cap\mathcal{B}$ by \Cref{evol_solve_away} or \Cref{evol_solve_near}. Thus the conformal boundary of $(\mathcal{W},\breve{g},V,\Theta,\kappa)$ is $\mathcal{W}\cap\mathcal{B}$. Then $(\Theta,\acute{\nabla},(e_{\bf a}),(y^\mu))$ is a gauge choice on $\mathcal{W}$ verifying \eqref{eq:gauge_core_properties} and \eqref{gauge_additional_properties} by construction. Assume that $\mathscr{I}\neq\varnothing$. By application of $ii)$ of \Cref{prop:aAdS_VE}, $(\mathcal{W},\mathscr{I},\widetilde{g}:=\Theta^{-2}\breve{g})$ is a 4-dimensional aAdS space with smooth rescaled metrics such that $(\mathcal{W}\setminus\mathscr{I},\widetilde{g})$ is solution to the (VE) with $\Lambda=-3$. Finally, one deduces that $e_{\bf 3}$ is the outward-pointing $\breve{g}$-unit normal of $\mathscr{I}$ from \eqref{eq:solve_near_frame}, $e_{\bf 0}{}^3 = \partial_0 y^3 = 0$ and the fact that $(e_{\bf a})$ is a $\breve{g}$-orthonormal frame field.
\end{proof}

In order to obtain our geometric theorem, it remains to interpret geometrically the analytic boundary conditions as well as the analytic initial data and all the conditions imposed on it. This is the subject of the next two sections.

\subsection{Geometric boundary conditions}
\label{sec:geometric_bc}

In this section, we interpret geometrically a subset of the analytic boundary conditions \eqref{raw_bc} which are a priori gauge-dependent. Let us define more precisely what a geometric boundary condition is.

\begin{defi} \,
	\label{def:geometric_bc}
    \begin{itemize}
        \item A geometric boundary condition on a connected component $\mathfrak{S}$ of the conformal boundary $\mathscr{I}$ for the \eqref{eq:CVE} with $\Lambda=-3$ is a condition on the equivalence class $[(\mathfrak{S},\mathfrak{h},\mathfrak{E},\mathfrak{M},0,\kappa,\mathfrak{K},\mathfrak{z},\mathfrak{T},\mathfrak{Q})]$ of the 10-tuple induced by the (CVE) on $\mathfrak{S}$, for the equivalence relation on $\mathscr{D}$ generated by the gauge transformations of the (CVC) described in \Cref{def:transfo_dim3}.
        
        \item A geometric boundary condition on a connected component $\mathfrak{S}$ of the conformal boundary $\mathfrak{I}$ of a 4-dimensional aAdS space $(\M,\mathfrak{I},\widetilde{g})$, which is solution to the \eqref{eq:VE} with $\Lambda=-3$ on $\M\setminus\mathfrak{I}$, is a condition on the free data class $[(\mathfrak{S},\mathfrak{h},\mathfrak{t})]$, see \Cref{def:free_data_class}.
    \end{itemize}
\end{defi}

\begin{rem}
    To be invariant under confomorphisms $\Conf_\phi$, a condition must be tensorial and conformally invariant for some weight, see \Cref{sec:conformal_invariance}.
\end{rem}

\begin{lem}
    \label{lem:bijection_geometric_bc}
    The geometric boundary conditions for the (CVE) with $\Lambda=-3$ and for the (VE) with $\Lambda=-3$ in the sense of \Cref{def:geometric_bc} are in bijection.
\end{lem}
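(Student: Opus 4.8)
The plan is to exhibit an explicit bijection $\Phi$ between, on one side, the gauge-equivalence classes $[(\mathfrak{S},\mathfrak{h},\mathfrak{E},\mathfrak{M},0,\kappa,\mathfrak{K},\mathfrak{z},\mathfrak{T},\mathfrak{Q})]$ of $10$-tuples induced on a connected component $\mathfrak{S}$ of the conformal boundary by a solution of the \eqref{eq:CVE} with $\Lambda=-3$ and, on the other side, the free data classes $[(\mathfrak{S},\mathfrak{h},\mathfrak{t})]$ of \Cref{def:free_data_class}. A bijection of the two quotient spaces immediately induces a bijection between conditions (gauge-invariant, resp.\ confomorphism-invariant, subsets) on each side, which is precisely the claim of \Cref{def:geometric_bc}. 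The conceptual bridge is \Cref{prop:aAdS_VE}: a solution of the (CVE) with $\mathscr{I}\neq\varnothing$ produces on each relevant closure a $4$-dimensional aAdS space solving the \eqref{eq:VE}, whose free data class on $\mathfrak{S}$ is $[(\mathfrak{h},\mathfrak{t})]$, and conversely.

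First I would reduce the $10$-tuple to its geometric content. By \Cref{cor:I}, every such $10$-tuple satisfies the constraints \eqref{eq:constraints_I}: one has $\mathfrak{z}^2=1$, and the fields $\mathfrak{M}$, $\mathfrak{K}$, $\mathfrak{T}$, $\mathfrak{Q}$ are all determined by $\mathfrak{h}$, $\kappa$, the sign $\mathfrak{z}$ and the scalar $\widehat{\sigma}$ (for instance $\mathfrak{z}\mathfrak{M}=\mathfrak{c}$ is the Cotton tensor and $\mathfrak{K}=-(\widehat{\sigma}/\mathfrak{z})\mathfrak{h}$). Among the ten entries, only $(\mathfrak{h},\kappa,\mathfrak{E},\mathfrak{z},\widehat{\sigma})$ therefore carry independent information, and \Cref{cor:I} also records that $\mathfrak{E}$ is symmetric, trace-free and divergence-free. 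Next I would check that $\kappa$ and $\widehat{\sigma}$ are pure gauge: the tangential Weyl change $\Weylchg_\omega^\parallel$ of \Cref{def:transfo_dim3} adds $\omega$ to $\kappa$, so one may normalise $\kappa=0$, while the normal Weyl change $\Weylchg_\chi^\perp$ shifts $\widehat{\sigma}\mapsto\widehat{\sigma}+\varepsilon\chi\mathfrak{z}$ on $\mathfrak{S}$ (as $\Psi=0$ there and $\mathfrak{z}\neq0$), so one may normalise $\widehat{\sigma}$ to any prescribed value; neither touches $\mathfrak{h}$ or $\mathfrak{E}$. Hence modulo gauge the class is captured by $(\mathfrak{h},\mathfrak{E})$ together with $\mathfrak{z}\in\{\pm1\}$, with $\Signchg$ acting by $(\mathfrak{E},\mathfrak{z})\mapsto(-\mathfrak{E},-\mathfrak{z})$ and $\Conf_\phi$ by $\mathfrak{h}\mapsto\Omega^2\mathfrak{h}$, $\mathfrak{E}\mapsto\Omega^{-1}\mathfrak{E}$, $\mathfrak{z}\mapsto\mathfrak{z}$.

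The heart of the argument is to identify $\mathfrak{z}\mathfrak{E}$ with a fixed multiple of $\mathfrak{t}$. By \Cref{lem:decomposition_Weyl_cand}, $\mathfrak{E}_{ij}=V^\perp{}_{i\perp j}$ on $\mathfrak{S}$ is the electric part of the rescaled Weyl candidate $V=\Theta^{-1}W$, so in a Fefferman--Graham gauge the expansion \eqref{eq:asymp_Weyl_frakt} of $W^x{}_{ixj}$ for $n=3$ (equivalently the limit \eqref{eq:lim_weyl_frakt}) gives, in the canonical representative where $\Theta\geq0$ and $\mathfrak{z}=+1$, the relation $\mathfrak{E}=-\tfrac32\mathfrak{t}$, that is
\begin{equation*}
    \mathfrak{t}_{ij} = -\tfrac{2}{3}\,\mathfrak{z}\,\mathfrak{E}_{ij}.
\end{equation*}
This expression is manifestly invariant under $\Signchg$ (the flip of $\mathfrak{E}$ being absorbed by the flip of $\mathfrak{z}$) and transforms with conformal weight $n-2=1$, exactly as $\mathfrak{t}$ does in \Cref{def:equivalence_relation_pair}. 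Consequently $\Phi\colon[(\mathfrak{S},\mathfrak{h},\mathfrak{E},\dots)]\mapsto[(\mathfrak{S},\mathfrak{h},-\tfrac23\mathfrak{z}\mathfrak{E})]$ is well defined; its injectivity follows from the previous paragraph (two $10$-tuples sharing $(\mathfrak{h},\mathfrak{z}\mathfrak{E})$ become equal after normalising $\kappa$ and $\widehat{\sigma}$), and its surjectivity holds because, $\mathfrak{E}$ and $\mathfrak{t}$ being the same type of object (symmetric, trace-free, divergence-free), any $\mathfrak{t}$ is hit by $\mathfrak{E}=-\tfrac32\mathfrak{z}\mathfrak{t}$.

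Finally I would conclude by transport: a geometric boundary condition for the (CVE) is a condition on the domain of $\Phi$ and one for the (VE) a condition on its codomain, so the bijectivity of $\Phi$ yields the asserted bijection of conditions. The step I expect to be the main obstacle is the careful bookkeeping of signs and weights in the identification of $\mathfrak{E}$ with $\mathfrak{t}$: one must verify that the $\Signchg$ sign ambiguity of the $10$-tuple side is exactly compensated by the accompanying change of $\mathfrak{z}=\sign\Theta$, so that no spurious sign survives on the intrinsically defined $\mathfrak{t}$, and confirm through \Cref{cor:I} and \Cref{def:transfo_dim3} that the determined fields $\mathfrak{K},\mathfrak{M},\mathfrak{T},\mathfrak{Q}$ genuinely add nothing to $(\mathfrak{h},\kappa,\mathfrak{E},\mathfrak{z},\widehat{\sigma})$ under the full gauge group.
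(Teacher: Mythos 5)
Your proposal is correct and follows essentially the same route as the paper's proof: gauge-normalise $\kappa$ and $\widehat{\sigma}$ using $\Weylchg^\parallel_\omega$ and $\Weylchg^\perp_\chi$, invoke \Cref{cor:I} to see that $\mathfrak{K},\mathfrak{M},\mathfrak{T},\mathfrak{Q}$ carry no independent information, identify $\mathfrak{z}\mathfrak{E}$ with a fixed multiple of $\mathfrak{t}$ via \eqref{eq:lim_weyl_frakt}, and match $\Signchg$- and $\Conf_\phi$-invariance with the equivalence relation of \Cref{def:equivalence_relation_pair}. The only cosmetic discrepancy is the sign in $\mathfrak{t}=\mp\tfrac{2}{3}\mathfrak{z}\mathfrak{E}$, which depends on whether $e_\perp$ is taken inward- or outward-pointing (the paper uses the outward normal, for which $\mathfrak{z}=-1$ and $\mathfrak{t}=\tfrac{2\mathfrak{z}}{3}\mathfrak{E}$); your convention with $\mathfrak{z}=+1$ is internally consistent.
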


\begin{proof}
	Consider a geometric boundary condition for the (CVE) with $\Lambda = -3$ on a connected component $\mathfrak{S}$ of $\mathscr{I}$ in the sense of \Cref{def:geometric_bc}. Using the gauge transformations $\Weylchg_\chi^\perp$ and $\Weylchg_\omega^\parallel$, there exists a representative $(\mathfrak{S},\mathfrak{h},\mathfrak{E},\mathfrak{M},0,\kappa,\mathfrak{K},\mathfrak{z},\mathfrak{T},\mathfrak{Q})$ of the equivalence class such that $\kappa=0$ and $\widehat{\sigma} = 0$. Since this 10-tuple verifies equations \eqref{eq:constraints_I} of \Cref{cor:I}, one deduces that
	\[ \mathfrak{z} = \pm 1 \,, \quad \mathfrak{K}_{ij} = 0 \,, \quad \mathfrak{T}_i = 0 \,, \quad \mathfrak{Q}_{ij} = 0 \,, \quad \mathfrak{z} \mathfrak{M}_{ijk} = \mathfrak{c}_{ijk} \,, \] 
	where $\mathfrak{c}_{ijk}$ is the Cotton tensor of the Levi-Civita connection of $\mathfrak{h}$, and thus entirely determined by $\mathfrak{h}$. Furthermore, the field $\mathfrak{E}$ is the electric part of the rescaled Weyl tensor. By \eqref{eq:lim_weyl_frakt}, one has
    \[ \mathfrak{t}_{ij} = \frac{2\mathfrak{z}}{3}  \mathfrak{E}_{ij} \,. \]
    The invariance of the condition under sign change $\Signchg$ implies that it does not depend on the sign of $\mathfrak{z}$.  Therefore, the condition can be rewritten as one on $(\mathfrak{S},\mathfrak{h},\mathfrak{t})$. Finally, the invariance under the confomorphisms $\Conf_\phi$ transposes into the invariance under the equivalence relation defined in \Cref{def:equivalence_relation_pair}. Hence, one retrieves a condition on the free data class $[(\mathfrak{S},\mathfrak{h},\mathfrak{t})]$. Conversely, one can go in the other way.
\end{proof}

To achieve the goal of this section, we first select the maximal dissipative boundary conditions which can be written as tensorial boundary conditions on the distribution $\mathcal{D}^2 := \vectorspan(e_{\bf 1},e_{\bf 2})$ in \Cref{sec:tensorial_bc_distribution}. From \Cref{sec:constant_coeff} onwards, we restrict the study to boundary conditions with constant coefficients. We then transform these into conditions on the induced 10-tuple of the conformal boundary in \Cref{sec:replacing_tensors}. Finally, certain of these conditions can be interpreted geometrically, through an auxiliary system on the boundary, as either imposing the conformal class in \Cref{sec:dirichlet} (the geometric Dirichlet boundary conditions) or imposing our new boundary conditions \eqref{eq:bc_robin} in \Cref{sec:homogeneous_robin} (the geometric homogeneous Robin boundary conditions).

\subsubsection{Tensorial boundary conditions on a distribution}
\label{sec:tensorial_bc_distribution}

First, let us select the maximal dissipative boundary conditions which can be written as tensorial boundary conditions on the distribution $\mathcal{D}^2 := \vectorspan(e_{\bf 1},e_{\bf 2})$. This constitutes an initial criterion, given that a geometric boundary condition is tensorial on the conformal boundary and thus, a fortiori, tensorial on the distribution $\mathcal{D}^2$. This will be done by exploiting duality on $\mathcal{D}^2$, as presented in \Cref{sec:duality_dim_2}. 

Recall that the tensors $E_{\bf ij}$ and $H_{\bf ij}$, defined in \Cref{lem:propagation_constraints}, are symmetric and trace-free on the 3-dimensional distribution spanned by $e_{\bf 1}$, $e_{\bf 2}$ and $e_{\bf 3}$. In what follows, $E^\dagger_{\bf AB}$ and $H^\dagger_{\bf AB}$ will denote the trace-free part of the restriction of respectively $E_{\bf ij}$ and $H_{\bf ij}$ on $\mathcal{D}^2$. Observe that
\begin{align*}
    E_{\bf 12} = E_{\bf 21} = \frac{E_{\bf 12} + E_{\bf 21}}{2} &=  \frac{E^\dagger_{\bf 12} + E^\dagger_{\bf 21}}{2} = E^\dagger_{\bf 12} = E^\dagger_{\bf 21} \,, \\
    \frac{E_{\bf 11} - E_{\bf 22}}{2} &=  \frac{E^\dagger_{\bf 11} - E^\dagger_{\bf 22}}{2} = E^\dagger_{\bf 11} = - E^\dagger_{\bf 22} \,,
\end{align*}
and the same goes for $H_{\bf ij}$. Since the analytic boundary conditions \eqref{raw_bc} are affine, it is natural to search for a tensorial boundary condition on $\mathcal{D}^2$ which is affine in $(E^\dagger_{\bf AB},H^\dagger_{\bf AB})$, that is of the form
\begin{equation}
    \label{eq:ansatz_bc_tensorial}
    L_{\bf AB}{}^{\bf CD} E^\dagger_{\bf CD} + L'_{\bf AB}{}^{\bf CD} H^\dagger_{\bf CD} = q_{\bf AB} \,,
\end{equation}
where $L,L' \in \mathscr{B}(\mathcal{D}^2)$ and $q \in \mathscr{S}(\mathcal{D}^2)$, see \Cref{sec:duality_dim_2}. In order for \eqref{eq:ansatz_bc_tensorial} to impose two independent conditions such as \eqref{raw_bc}, it is necessary to require that
\begin{equation}
    \label{eq:cond_span}
    \im L \oplus \im L' = \mathscr{S}(\mathcal{D}^2) \,.
\end{equation}
Let us emphasise that the triple $(L,L',q)$ in \eqref{eq:ansatz_bc_tensorial} is defined up to the action of automorphisms of $\mathscr{S}(\mathcal{D}^2)$: if $L''\in \mathscr{B}(\mathcal{D}^2)$ is everywhere invertible then the triple $(L''\circ L,L''\circ L',L'' q)$ yields an equivalent condition. From \Cref{sec:duality_dim_2}, one can decompose $L$ and $L'$ in the basis $(\Id,\star,\sigma,\star\sigma)$:
\begin{subequations}
    \label{eq:decomp_LLprime}
    \begin{align}
    L &= \alpha \Id + \beta \star + \gamma \sigma + \delta (\star\sigma) \,, \\
    L' &= \alpha' \Id + \beta' \star + \gamma' \sigma + \delta' (\star\sigma) \,,
    \end{align}
\end{subequations}
where $\alpha,\beta,\gamma,\delta,\alpha',\beta',\gamma',\delta' \in \mathcal{C}^\infty(\mathcal{B},\R)$. By \Cref{lem:invertibility_endo}, the condition \eqref{eq:cond_span} translates into
\[ \left. 
    \begin{array}{c}
    \alpha^2+\beta^2 = \delta^2+\gamma^2 \,, \\
    (\alpha')^2+(\beta')^2 = (\delta')^2+(\gamma')^2 \,, 
    \end{array} \right\} \implies (\delta'+\beta')(\alpha+\gamma) \neq (\delta+\beta)(\alpha'+\gamma') \,. \]

\begin{lem}
    \label{lem:analytic_to_tensorial}
    The analytic boundary conditions \eqref{raw_bc} can be rewritten under the form \eqref{eq:ansatz_bc_tensorial} with $L,L'$ defined by \eqref{eq:decomp_LLprime} with
    \begin{subequations}
        \begin{alignat*}{7}
            \alpha &= -\frac{B_{21}+B_{12}}{2} \,, &\qquad \beta &= \frac{B_{11}-B_{22}}{2}-1 \,, &\qquad \gamma &= \frac{B_{21}-B_{12}}{2} \,, &\qquad \delta &= -\frac{B_{11}+B_{22}}{2} \,, \\
            \alpha' &= 1 + \frac{B_{11}-B_{22}}{2} \,, &\qquad \beta' &= \frac{B_{21}+B_{12}}{2} \,, &\; \gamma' &= \frac{B_{11}+B_{22}}{2} \,, &\qquad \delta' &= \frac{B_{21}-B_{12}}{2} \,,
        \end{alignat*}
    \end{subequations}
    and $q$ defined by
    \[ q_{\bf 12} = d_1 \,, \qquad q_{\bf 11} = d_2 \,. \]
    Furthermore, \eqref{eq:cond_span} holds.
\end{lem}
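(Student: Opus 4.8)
The plan is to verify Lemma~\ref{lem:analytic_to_tensorial} by a direct computation, expanding the proposed tensorial condition \eqref{eq:ansatz_bc_tensorial} into its components on $\mathcal{D}^2$ and matching it term by term with the two analytic boundary conditions \eqref{raw_bc_1}--\eqref{raw_bc_2}. First I would record the matrix representations in the orthonormal basis $(\omega,\star\omega)$ of $\mathscr{S}(\mathcal{D}^2)$ established just before the lemma. Writing $q = q_{\bf 12}\sqrt{2}\,\omega + q_{\bf 11}\sqrt{2}\,(\star\omega)$ and similarly expressing $E^\dagger$ and $H^\dagger$ in this basis, the key observation is that the two independent scalar components of a symmetric trace-free tensor on $\mathcal{D}^2$ are precisely $(E^\dagger_{\bf 12},E^\dagger_{\bf 11}) = \bigl(\tfrac{E_{\bf 12}+E_{\bf 21}}{2},\tfrac{E_{\bf 11}-E_{\bf 22}}{2}\bigr)$ and likewise for $H$, as noted in the preamble to the lemma. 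Then I would apply the operators $\Id$, $\star$, $\sigma$, $\star\sigma$ to these tensors using the explicit matrices
\[
\Mat_{(\omega,\star\omega)}(\star) = \begin{pmatrix} 0 & -1 \\ 1 & 0 \end{pmatrix}, \qquad \Mat_{(\omega,\star\omega)}(\sigma) = \begin{pmatrix} 1 & 0 \\ 0 & -1 \end{pmatrix}, \qquad \Mat_{(\omega,\star\omega)}(\star\sigma) = \begin{pmatrix} 0 & 1 \\ 1 & 0 \end{pmatrix},
\]
so that $L E^\dagger + L' H^\dagger$ becomes an explicit linear combination of the four scalars $(E^\dagger_{\bf 12},E^\dagger_{\bf 11},H^\dagger_{\bf 12},H^\dagger_{\bf 11})$ with coefficients built from $\alpha,\beta,\gamma,\delta,\alpha',\beta',\gamma',\delta'$.

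The core of the proof is then purely algebraic: I would equate the $\omega$-component of \eqref{eq:ansatz_bc_tensorial} with \eqref{raw_bc_1} and the $(\star\omega)$-component with \eqref{raw_bc_2}, reading off a linear system relating the eight coefficients $(\alpha,\dots,\delta')$ to the four entries of $B$. Substituting the proposed values of the lemma, I would check that each of the four coefficients multiplying $E^\dagger_{\bf 12}$, $E^\dagger_{\bf 11}$, $H^\dagger_{\bf 12}$, $H^\dagger_{\bf 11}$ in the first scalar equation reproduces the respective coefficients $-B_{12}$, $1-B_{11}$, $\tfrac{1+B_{11}}{2}\cdot 2$, and so on appearing in \eqref{raw_bc_1}, and symmetrically for the second. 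The matching with $q_{\bf 12}=d_1$, $q_{\bf 11}=d_2$ is immediate once the basis normalisation $\omega_{\bf 12}=1/\sqrt{2}$ is accounted for consistently on both sides. The main bookkeeping subtlety here is keeping track of the factor conventions linking the frame components $E^\dagger_{\bf AB}$ (with the $\tfrac12$ averaging in \eqref{raw_bc}) and the abstract coefficients in the $(\omega,\star\omega)$ expansion; this is where a sign or factor of two could slip, so I would cross-check by evaluating both representations on a single explicit test tensor.

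Finally, to establish that \eqref{eq:cond_span} holds, I would invoke the criterion derived from \Cref{lem:invertibility_endo}, namely that $\im L \oplus \im L' = \mathscr{S}(\mathcal{D}^2)$ fails only if both $\alpha^2+\beta^2=\gamma^2+\delta^2$ and $(\alpha')^2+(\beta')^2=(\gamma')^2+(\delta')^2$ hold simultaneously with $(\delta'+\beta')(\alpha+\gamma)=(\delta+\beta)(\alpha'+\gamma')$. Substituting the explicit coefficients, one computes $(\delta+\beta) = \tfrac{B_{11}-B_{22}}{2}-1-\tfrac{B_{11}+B_{22}}{2} = -B_{22}-1$ and $(\delta'+\beta') = \tfrac{B_{21}-B_{12}}{2}+\tfrac{B_{21}+B_{12}}{2} = B_{21}$, and similarly $(\alpha+\gamma)=-B_{12}$, $(\alpha'+\gamma')=1+B_{11}$. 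The degeneracy condition then reads $B_{21}\cdot(-B_{12}) = (-B_{22}-1)(1+B_{11})$, i.e. $-B_{12}B_{21} = -(1+B_{11})(1+B_{22})$, which rearranges to $(1+B_{11})(1+B_{22}) - B_{12}B_{21} = 0$, that is $\det(I+B)=0$. The hypothesis $B^\intercal B \leq I$ forces $\|B\|\leq 1$, hence the eigenvalues of $B$ have modulus at most $1$ and $I+B$ can be singular only in the borderline case of an eigenvalue equal to $-1$; I would argue that the strict spanning \eqref{eq:cond_span} required for a genuine pair of boundary conditions is exactly what excludes this degenerate locus, so that generically $\det(I+B)\neq 0$ and the sum is direct. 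The main obstacle, and the only place demanding care, is this last step: confirming that the maximal dissipative constraint $B^\intercal B\leq I$ is compatible with, and indeed generically ensures, the transversality \eqref{eq:cond_span} rather than silently contradicting it.
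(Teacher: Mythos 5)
Your first step --- expanding \eqref{eq:ansatz_bc_tensorial} in the basis $(\omega,\star\omega)$ and matching the two independent scalar components against \eqref{raw_bc_1}--\eqref{raw_bc_2} --- is sound and is essentially the paper's own route: the paper identifies $Z_{\bf 12}=0$ and $Z_{\bf 11}=0$ with the two analytic conditions and reads off $\alpha+\gamma=-B_{12}$, $\delta-\beta=1-B_{11}$, $\alpha-\gamma=-B_{21}$, $\delta+\beta=-(1+B_{22})$, and so on. (The paper additionally shows, by acting with an everywhere invertible $L''\in\mathscr{B}(\mathcal{D}^2)$, that one may assume the two analytic equations correspond to $Z_{\bf 12}=0$ and $Z_{\bf 11}=0$ exactly rather than to an invertible recombination of them; for a pure verification of the stated coefficients your direct computation suffices.)

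The genuine gap is in your treatment of \eqref{eq:cond_span}. The criterion extracted from \Cref{lem:invertibility_endo} fails only when \emph{three} equalities hold simultaneously: $\alpha^2+\beta^2=\gamma^2+\delta^2$ (so $L$ is non-invertible), $(\alpha')^2+(\beta')^2=(\gamma')^2+(\delta')^2$ (so $L'$ is non-invertible), \emph{and} $(\delta'+\beta')(\alpha+\gamma)=(\delta+\beta)(\alpha'+\gamma')$ (so the two one-dimensional images coincide). You compute only the third and then identify the degenerate locus with $\{\det(I+B)=0\}$. That identification is false: for the matrix $B$ with $B_{11}=-1$ and all other entries zero one has $B^\intercal B\leq I$ and $\det(I+B)=0$, yet $\alpha^2+\beta^2-\gamma^2-\delta^2=9/4-1/4=2\neq 0$, so $L$ is invertible and the spanning condition holds anyway. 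Your proposal would thus either wrongly conclude that the lemma fails for such $B$, or --- as you in fact do --- retreat to a ``generic'' claim, which does not prove the lemma: the statement is that \eqref{eq:cond_span} holds for \emph{every} admissible $B$, and your closing appeal to ``the strict spanning required for a genuine pair of boundary conditions'' assumes the conclusion. The correct argument uses all three equations: with the stated coefficients the first two read $1-\det B\pm(B_{22}-B_{11})=0$, whose sum and difference give $\det B=1$ and $B_{11}=B_{22}$; the third equation, $1+\det B+B_{11}+B_{22}=0$, then forces $B_{11}=B_{22}=-1$ and hence $B_{12}B_{21}=0$. Taking the trace of $B^\intercal B\leq I$ gives $2+B_{12}^2+B_{21}^2\leq 2$, so $B=-I$ is the only surviving candidate, and that residual case must be dispatched directly (for $B=-I$ one finds $L=-\star+\star\sigma$ and $L'=\Id-\sigma$, whose images in the basis $(\omega,\star\omega)$ are $\vectorspan(\omega)$ and $\vectorspan(\star\omega)$ respectively, so the images still span $\mathscr{S}(\mathcal{D}^2)$ even though the algebraic criterion is degenerate there). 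None of this chain appears in your proposal, and it cannot be recovered from the single equation $\det(I+B)=0$.
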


\begin{proof}
    We want the analytic boundary conditions \eqref{raw_bc} to be equivalent to $Z_{\bf AB} = 0$, where $Z \in \mathscr{S}(\mathcal{D}^2)$ is given by
    \[ Z_{\bf AB} := L_{\bf AB}{}^{\bf CD} E^\dagger_{\bf CD} + L'_{\bf AB}{}^{\bf CD} H^\dagger_{\bf CD} - q_{\bf AB} \,, \]
    for some $L,L' \in \mathscr{B}(\mathcal{D}^2)$ and $q \in \mathscr{S}(\mathcal{D}^2)$ to be determined. Note that $Z$ has two algebraic degrees of freedom, take for instance $Z_{\bf 12}$ and $Z_{\bf 11}$. Therefore equations \eqref{raw_bc} must be related to the equations $Z_{\bf 12}=0$ and $Z_{\bf 11}=0$ by an invertible transformation. More explicitly, there must exist $\lambda,\mu,\nu,\xi \in \mathcal{C}^\infty(\mathcal{B},\R)$ with $\lambda\xi-\mu\nu$ nowhere vanishing such that \eqref{raw_bc_1} is $\lambda Z_{\bf 12} + \mu Z_{\bf 11} = 0$ and that \eqref{raw_bc_2} is $\nu Z_{\bf 12} + \xi Z_{\bf 11} = 0$.

    One can always assume that $\lambda=1$, $\mu=0$, $\nu=0$ and $\xi=1$ up to taking a triple equivalent to $(L,L',q)$. Indeed, define $L''=\alpha''\Id+\beta''\star+\gamma''\sigma+\delta''(\star\sigma) \in \mathscr{B}(\mathcal{D}^2)$ for the choices
    \[ \alpha'' = \frac{\lambda+\xi}{2} \,, \quad \beta'' = \frac{\nu-\mu}{2} \,, \quad \gamma'' = \frac{\lambda-\xi}{2} \,, \quad \delta'' = \frac{\nu+\mu}{2} \,. \]
    Then the function $(\alpha'')^2 + (\beta'')^2 - (\gamma'')^2 - (\delta'')^2 = 2(\lambda\xi-\mu\nu)$ nowhere vanishes. By \Cref{lem:invertibility_endo}, $L''$ is thus everywhere invertible. Taking the action of $L''$ on the triple $(L,L',q)$ yields the desired simplification.

    By identification of $Z_{\bf 12}=0$ to $\eqref{raw_bc_1}$, one deduces that
    \[ \alpha+\gamma = -B_{12} \,, \;\; \delta-\beta = 1-B_{11} \,, \;\; \alpha'+\gamma' = 1+B_{11} \,, \;\; \delta'-\beta' = -B_{12} \,, \;\; q_{\bf 12} = d_1 \,. \]
    Similarly, identifying $Z_{\bf 11} = 0$ to \eqref{raw_bc_2} gives that
    \[ \alpha-\gamma = -B_{21} \,, \;\; \delta+\beta = -(1+B_{22}) \,, \;\; \alpha'-\gamma' = 1-B_{22} \,, \;\; \delta'+\beta' = B_{21} \,, \;\; q_{\bf 11} = d_2 \,. \]
    Hence the result. Now let us prove that \eqref{eq:cond_span} holds. By contradiction, assume that it is not verified. Then there exists a point $p \in \mathcal{B}$ such that
    \begin{align*}
        0 &= \alpha^2+\beta^2-\gamma^2-\delta^2 = 1-\det B + (B_{22}-B_{11}) \quad \text{at } p \,, \\
        0 &= (\alpha')^2+(\beta')^2-(\gamma')^2-(\delta')^2 = 1-\det B - (B_{22}-B_{11}) \quad \text{at } p \,, \\
        0 &= (\delta'+\beta')(\alpha+\gamma) - (\delta+\beta)(\alpha'+\gamma') = 1 + \det B + B_{22} + B_{11} \quad \text{at } p \,.
    \end{align*}
    One deduces that $B_{11} = B_{22} = -1$ and $B_{12}B_{21} = 0$ at $p$. However, this violates the condition $B^\intercal B \leq I$ at $p$.
\end{proof}

\subsubsection{Constant coefficients boundary conditions}
\label{sec:constant_coeff}

In the rest of \Cref{sec:geometric_bc}, we restrict to the case of constant coefficients, as was also the approach adopted by Friedrich \cite{F95}. This means that only constant matrices $B \in M_2(\R)$ are considered. Since $\sigma,\star\sigma \in \mathscr{B}(\mathcal{D}^2)$ depend on the choice of basis $(e_{\bf 1},e_{\bf 2})$ on $\mathcal{D}^2$ (see \Cref{sec:duality_dim_2}) while $L,L'$ must be independent of such choice, this implies that
\[ \gamma = \delta = \gamma' = \delta' = 0 \,. \]
By \Cref{lem:analytic_to_tensorial}, this is equivalent to impose that the matrix $B$ is symmetric and trace-free. Then, the tensorial boundary conditions on $\mathcal{D}^2$ reduce to a simple form given in the following lemma. 

\begin{lem}
    The analytic boundary conditions \eqref{raw_bc} with constant coefficients rewrite under the form
    \begin{align}
        \label{nongeometric_bc}
        -\rho \sin\chi E^\dagger_{\bf AB} - (1-\rho\cos\chi) (\star E^\dagger)_{\bf AB} & \nonumber \\ + (1+\rho \cos\chi) H^\dagger_{\bf AB} + \rho\sin\chi (\star H^\dagger)_{\bf AB} &= q_{\bf AB} \qquad \text{with } \rho \in [0,1], \chi \in [0,2\pi) \,,
    \end{align}
    where $q \in \mathscr{S}(\mathcal{D}^2)$ is freely specifiable (boundary data).
\end{lem}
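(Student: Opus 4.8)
The plan is to specialize the general tensorial boundary condition \eqref{eq:ansatz_bc_tensorial}--\eqref{eq:decomp_LLprime} to the constant-coefficient case, where $\gamma=\delta=\gamma'=\delta'=0$ as just established, and then rewrite the resulting pair $(L,L')$ of endomorphisms in a normalized trigonometric form. Starting from \Cref{lem:analytic_to_tensorial} and imposing symmetry and tracelessness of $B$ (that is, $B_{12}=B_{21}$ and $B_{11}=-B_{22}$), I would read off that
\begin{align*}
    L &= \alpha \Id + \beta \star \,, \qquad \alpha = -B_{12} \,, \quad \beta = B_{11}-1 \,, \\
    L' &= \alpha' \Id + \beta' \star \,, \qquad \alpha' = 1+B_{11} \,, \quad \beta' = B_{12} \,.
\end{align*}
So the boundary condition reads $\alpha E^\dagger + \beta (\star E^\dagger) + \alpha' H^\dagger + \beta' (\star H^\dagger) = q$, and the task reduces to showing that the admissible coefficient vector $(\alpha,\beta,\alpha',\beta')$ coming from a symmetric trace-free $B$ with $B^\intercal B \leq I$ is exactly parametrized, up to the automorphism ambiguity noted after \eqref{eq:cond_span}, by $(\alpha,\beta,\alpha',\beta') = (-\rho\sin\chi,\,-(1-\rho\cos\chi),\,1+\rho\cos\chi,\,\rho\sin\chi)$ with $\rho\in[0,1]$, $\chi\in[0,2\pi)$.

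Next I would translate the constraint $B^\intercal B \leq I$ into a condition on the parameters. For a symmetric trace-free matrix $B = \begin{pmatrix} B_{11} & B_{12} \\ B_{12} & -B_{11} \end{pmatrix}$ one has $B^\intercal B = (B_{11}^2+B_{12}^2)\,I$, so the condition $B^\intercal B \leq I$ is simply $B_{11}^2 + B_{12}^2 \leq 1$. Writing $\rho := \sqrt{B_{11}^2+B_{12}^2} \in [0,1]$ and choosing $\chi \in [0,2\pi)$ so that $B_{11} = \rho\cos\chi$ and $B_{12} = -\rho\sin\chi$, direct substitution into the four coefficients above yields
\begin{align*}
    \alpha &= -B_{12} = \rho\sin\chi \,, & \beta &= B_{11}-1 = \rho\cos\chi - 1 \,, \\
    \alpha' &= 1+B_{11} = 1+\rho\cos\chi \,, & \beta' &= B_{12} = -\rho\sin\chi \,.
\end{align*}
Comparing with the target coefficients in \eqref{nongeometric_bc}, I observe that the signs of $\alpha,\beta,\beta'$ are opposite; this is harmless because multiplying the whole condition by $-1$ on the $\Id$-component is an instance of the automorphism freedom (or, more transparently, one may absorb it by the choice $\chi \mapsto \chi$ with $B_{12}=+\rho\sin\chi$ and then negate $q$). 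I would check that with the convention $B_{11}=\rho\cos\chi$, $B_{12}=\rho\sin\chi$ one obtains exactly $-\rho\sin\chi$, $-(1-\rho\cos\chi)$, $1+\rho\cos\chi$, $\rho\sin\chi$ as displayed, matching \eqref{nongeometric_bc} verbatim, with $q_{\bf AB}$ carrying the boundary data $(d_1,d_2)$ via $q_{\bf 12}=d_1$, $q_{\bf 11}=d_2$.

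The only genuinely delicate point is the \emph{surjectivity and admissibility} bookkeeping: I must confirm that every $\rho\in[0,1]$, $\chi\in[0,2\pi)$ arises from some symmetric trace-free $B$ satisfying $B^\intercal B\leq I$ (immediate, by inverting $B_{11}=\rho\cos\chi$, $B_{12}=\rho\sin\chi$), and conversely that no admissible constant-coefficient maximal dissipative condition is lost by restricting to this form. The latter follows since the reduction to $\gamma=\delta=\gamma'=\delta'=0$ is forced by frame-independence, and the spanning condition \eqref{eq:cond_span} is automatically satisfied (as verified at the end of the proof of \Cref{lem:analytic_to_tensorial}) for every $B$ with $B^\intercal B\leq I$. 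I expect the main obstacle to be purely presentational: carefully tracking the automorphism ambiguity in $(L,L',q)$ so that the sign conventions in \eqref{nongeometric_bc} are reproduced exactly rather than up to an overall invertible transformation, and ensuring that the polar substitution $(B_{11},B_{12})=(\rho\cos\chi,\rho\sin\chi)$ is stated so that the map onto $\{\rho\in[0,1],\chi\in[0,2\pi)\}$ is genuinely onto the admissible set with $q$ remaining freely specifiable.
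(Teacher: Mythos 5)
Your proposal is correct and follows essentially the same route as the paper: specialize the coefficients of \Cref{lem:analytic_to_tensorial} to a symmetric trace-free $B$ (which forces $\gamma=\delta=\gamma'=\delta'=0$), observe that $B^\intercal B\leq I$ reduces to $B_{11}^2+B_{12}^2\leq 1$, and substitute $B_{11}=\rho\cos\chi$, $B_{12}=\rho\sin\chi$. The only blemish is the intermediate detour with $B_{12}=-\rho\sin\chi$, where you misstate which signs flip (with that convention $\beta$ already matches the target and only $\alpha$ and $\beta'$ are reversed), but this is discarded once you adopt the convention $B_{12}=+\rho\sin\chi$, under which all four coefficients reproduce \eqref{nongeometric_bc} verbatim.
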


\begin{rem}
    The boundary conditions \eqref{nongeometric_bc} can be written more compactly as follows
    \begin{equation}
        F_{\bf AB} + \rho \left(\cos\chi \, G_{\bf AB} + \sin \chi \, (\star G)_{\bf AB}\right) = q_{\bf AB} \,,
    \end{equation}
    where $F_{\bf AB} := H^\dagger_{\bf AB} - (\star E^\dagger)_{\bf AB}$ and $G_{\bf AB} := H^\dagger_{\bf AB} + (\star E^\dagger)_{\bf AB}$. For homogeneous boundary data, that is $q_{\bf AB} = 0$, one can check that the boundary term in the a priori energy estimate for the evolution system \eqref{evol_system} is non-positive. Indeed, using the boundary conditions and \Cref{lem:identities_dual_2}, one has
    \begin{align*}
        F_{\bf AB} F^{\bf AB} &= \rho^2 G_{\bf AB} G^{\bf AB} \,, \\
        G_{\bf AB} G^{\bf AB} &= H^\dagger_{\bf AB} H^\dagger{}^{\bf AB} + E^\dagger_{\bf AB} E^\dagger{}^{\bf AB} + 2 H^\dagger_{\bf AB} (\star E^\dagger)^{\bf AB} \,, \\
        H^\dagger_{\bf AB} (\star E^\dagger)^{\bf AB} &= \frac{1}{4} (G+F)_{\bf AB} (G-F)^{\bf AB} = \frac{1}{4} \left(G_{\bf AB} G^{\bf AB} - F_{\bf AB} F^{\bf AB} \right) \,.
    \end{align*}
    Hence
    \begin{align}
        \label{eq:energy_flux_boundary}            E^{\bf AB} \slashed{\epsilon}_{\bf (A}{}^{\bf C} H_{\bf B)C} &=  E^\dagger{}^{\bf AB} (\star H^\dagger)_{\bf AB} = -(\star E^\dagger)_{\bf AB} H^\dagger{}^{\bf AB} = - \frac{1-\rho^2}{2(1+\rho^2)} \left( H^\dagger_{\bf AB} H^\dagger{}^{\bf AB} + E^\dagger_{\bf AB} E^\dagger{}^{\bf AB} \right) \leq 0 \,.
    \end{align}
\end{rem}

\begin{proof}
    Since $B \in M_2(\R)$ is symmetric and trace-free, one has
    \[ B^\intercal B\leq I \iff (B_{11})^2 + (B_{12})^2 \leq 1 \,. \]
    Let us write $B_{11} = \rho \cos\chi$, $B_{12} = \rho \sin\chi$ with $\rho\in [0,1]$ and $\chi \in [0,2\pi)$. Then, the result follows from \Cref{lem:analytic_to_tensorial}.
\end{proof}

Equation \eqref{eq:energy_flux_boundary}, which giving the energy flux at the boundary, clearly shows that the cases with $\rho = 1$ correspond to reflective boundary conditions, while the cases with $\rho < 1$ correspond to dissipative boundary conditions. In particular, the case $\rho = 0$ maximises the loss of energy at the boundary.

\begin{lem}
    The boundary conditions \eqref{nongeometric_bc} split into the two following cases
    \begin{itemize}
        \item[i)] if $\rho \in [0,1)$ (dissipative boundary conditions)
        \begin{equation}
            \label{bc_dissipative}
            \sigma E^\dagger_{\bf AB} + (1-\sigma) (\star H^\dagger)_{\bf AB} + \lambda H^\dagger_{\bf AB} = q'_{\bf AB} \,, \quad \text{with } \sigma \in (0,1), \, \lambda \in \R \,.
        \end{equation}
        One has
        \[ \rho(\sigma,\lambda) = \sqrt{\frac{\lambda^2 + (1-2\sigma)^2}{\lambda^2+1}} \,. \]
        \item[ii)] if $\rho = 1$ (reflective boundary conditions)
        \begin{equation}
            \label{bc_reflective}
            \cos\theta \, E^\dagger_{\bf AB} + \sin\theta \, H^\dagger_{\bf AB} = q'_{\bf AB} \,, \quad \text{with } \theta \in \left[-\frac{\pi}{2},\frac{\pi}{2}\right) \,.
        \end{equation}
    \end{itemize}
    In the two cases, $q' \in \mathscr{S}(\mathcal{D}^2)$ is freely specifiable (boundary data).
\end{lem}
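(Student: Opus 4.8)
The plan is to start from the boundary condition~\eqref{nongeometric_bc} and analyse its coefficients as the components of an endomorphism in $\mathscr{B}(\mathcal{D}^2)$. Rewriting~\eqref{nongeometric_bc} in the basis $(\Id,\star)$ of the symmetric part of $\mathscr{B}(\mathcal{D}^2)$ (recall $\gamma=\delta=\gamma'=\delta'=0$ in the constant coefficient case), the condition reads
\[ \big( -\rho\sin\chi \,\Id - (1-\rho\cos\chi)\,\star \big) E^\dagger + \big( (1+\rho\cos\chi)\,\Id + \rho\sin\chi\,\star \big) H^\dagger = q \,. \]
The key observation is that any such condition is defined only up to composition on the left by an everywhere invertible element $L'' \in \mathscr{B}(\mathcal{D}^2)$, which also acts on the boundary data $q$. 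By \Cref{lem:invertibility_endo}, an element $\alpha''\Id + \beta''\star$ is invertible precisely when $(\alpha'')^2 + (\beta'')^2 \neq 0$. So the first step is to choose such an $L''$ cleverly to normalise the coefficients in each of the two regimes.

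For case~$i)$, when $\rho\in[0,1)$, I would divide through by the invertible operator that makes the coefficient of $H^\dagger$ have the form $\lambda\,\Id + \text{(something)}\,\star$ with the $E^\dagger$ coefficient being $\sigma\,\Id + (1-\sigma)\,\star$. Concretely, one solves for $\sigma\in(0,1)$ and $\lambda\in\R$ by matching the decomposition: the ratio of coefficients of $E^\dagger$ versus $H^\dagger$ must be preserved under left multiplication, so one computes the invariant combination and reads off $\sigma$ and $\lambda$. The constraint $\rho<1$ is exactly what guarantees $\sigma$ lands strictly inside $(0,1)$ rather than at an endpoint. The formula $\rho(\sigma,\lambda) = \sqrt{(\lambda^2 + (1-2\sigma)^2)/(\lambda^2+1)}$ should then be verified by direct substitution: compute $\lambda^2 + (1-2\sigma)^2$ and $\lambda^2 + 1$ from the expressions for $\sigma,\lambda$ in terms of $\rho,\chi$, and check the quotient equals $\rho^2$; this is a routine algebraic identity once the normalisation is fixed.

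For case~$ii)$, when $\rho=1$, the coefficients simplify dramatically: $-\sin\chi\,\Id - (1-\cos\chi)\,\star$ multiplies $E^\dagger$ and $(1+\cos\chi)\,\Id + \sin\chi\,\star$ multiplies $H^\dagger$. Here one checks that, up to an invertible left factor, both coefficient operators become pure multiples of $\Id$, yielding the reflective form $\cos\theta\,E^\dagger + \sin\theta\,H^\dagger = q'$. The half-angle substitution $\theta \leftrightarrow \chi/2$ (or a similar trigonometric identification) should make the two operators proportional to $\Id$ simultaneously; the range $[-\pi/2,\pi/2)$ for $\theta$ accounts for the period and the sign ambiguity absorbed into $q'$.

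The main obstacle I anticipate is keeping track of the equivalence up to the left action of invertible operators and verifying that the chosen normalising operator $L''$ is genuinely invertible in the relevant regime --- this is where \Cref{lem:invertibility_endo} is indispensable, and where case~$i)$ versus case~$ii)$ genuinely bifurcate (the $\rho=1$ case is precisely the degenerate boundary of the invertibility condition $(\alpha'')^2+(\beta'')^2 \neq (\gamma'')^2+(\delta'')^2$). Care is needed to ensure the split is exhaustive and that the parameter ranges stated ($\sigma\in(0,1)$, $\lambda\in\R$, $\theta\in[-\pi/2,\pi/2)$) are exactly matched, with no double counting. The trigonometric bookkeeping and the verification of the $\rho(\sigma,\lambda)$ formula are the most calculation-heavy parts, but they are routine once the algebraic structure is set up.
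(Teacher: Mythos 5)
Your proposal is correct and follows essentially the same route as the paper: left-composition of the condition by an invertible element of $\vectorspan(\Id,\star)$ is exactly the paper's device of replacing \eqref{nongeometric_bc} by a nontrivial linear combination of itself and its dual (\Cref{cor:duality_dim_2} / \Cref{lem:decomp_duality_dim_2}), carried out with the explicit choice $(u,v)=(-\rho\sin\chi,\,1-\rho\cos\chi)$ for $\rho<1$ and with a half-angle factorisation $\theta=(\chi-\pi)/2$ for $\rho=1$. One small correction to your side remark: the bifurcation between the two cases is not a degeneration of the invertibility condition of \Cref{lem:invertibility_endo} for the normalising operator (that operator, with $u^2+v^2=1+\rho^2-2\rho\cos\chi$, fails to be invertible only at the single point $(\rho,\chi)=(1,0)$); rather, what degenerates on all of $\rho=1$ is the $(\star H^\dagger)$-coefficient $1-\sigma=(1-\rho^2)/\bigl(2(1-\rho\cos\chi)\bigr)$, which pushes $\sigma$ to the excluded endpoint of $(0,1)$ and forces the passage to the reflective normal form.
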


\begin{rems} \,
    \begin{itemize}
        \item Some of these boundary conditions are analogous to ones used in electromagnetism:
        \begin{itemize}
            \item[i)] The dissipative boundary conditions \eqref{bc_dissipative} include notably
            \begin{equation}
                \label{bc_Leontovich}
                E^\dagger_{\bf AB} = - \frac{1-\sigma}{\sigma} (\star H^\dagger)_{\bf AB} \,, \quad \text{with } \sigma \in (0,1) \,.
            \end{equation}
            These are analogous to the Leontovich boundary conditions for the Maxwell equations, see for instance \cite[§67]{LL60}. The surface impedance would be here equal to $(1-\sigma)/\sigma \in \R_+^\star$. The subcase $\sigma=1/2$, equivalent to $\rho=0$, has been studied for the Maxwell equations and the linearised Bianchi equations on the AdS space in \cite{HLSW20}.
            
            \item[ii)] The reflective boundary conditions \eqref{bc_reflective} include in particular
            \begin{equation}
                \label{bc_perfect_conductor}
                H^\dagger_{\bf AB} + \cotan \theta \, E^\dagger_{\bf AB} = 0 \,, \quad \text{with } \theta \in \left[-\frac{\pi}{2},\frac{\pi}{2}\right) \,.
            \end{equation}
            These are analogous to the perfect electromagnetic conductor boundary conditions for the Maxwell equations, see \cite{LS05}. The surface admittance would be here equal to $\cotan \theta \in [-\infty,+\infty]$.
        \end{itemize}

        \item The energy flux at the boundary given by \eqref{eq:energy_flux_boundary} rewrites in the case of the dissipative boundary conditions \eqref{bc_dissipative} as follows
        \[ E^\dagger{}^{\bf AB} (\star H^\dagger)_{\bf AB} = - \frac{\sigma(1-\sigma)}{\lambda^2 + \sigma^2+(1-\sigma)^2} \left( E^\dagger{}^{\bf AB} E^\dagger_{\bf AB} +  H^\dagger{}^{\bf AB} H^\dagger_{\bf AB} \right) \,. \qedhere \]
    \end{itemize}
\end{rems}

\begin{proof} \,
    \begin{itemize}
        \item[i)] Assume that $\rho \in [0,1)$. Then $1-\rho\cos\chi \geq 1-\rho > 0$. Applying the operator $\star$ on \eqref{nongeometric_bc} gives
        \[ -\rho \sin\chi (\star E^\dagger)_{\bf AB} + (1-\rho\cos\chi) E^\dagger_{\bf AB} + (1+\rho \cos\chi) (\star H^\dagger)_{\bf AB} - \rho\sin\chi H^\dagger_{\bf AB} = (\star q)_{\bf AB} \,. \]
        By \Cref{cor:duality_dim_2}, equation \eqref{nongeometric_bc} is equivalent to any non-trivial linear combination of itself and its dual. For the choice $u = -\rho\sin\chi$ and $v=1-\rho\cos\chi \neq 0$, one deduces that
        \[ \left( (1-\rho\cos\chi)^2+ \rho^2\sin^2\chi\right) E^\dagger_{\bf AB} + (1-\rho^2) (\star H^\dagger)_{\bf AB} - 2\rho\sin\chi H^\dagger_{\bf AB} = q''_{\bf AB}  \,, \]
        where $q''_{\bf AB} := -\rho\sin\chi \, q_{\bf AB} + (1-\rho\cos\chi)(\star q)_{\bf AB}$. One obtains \eqref{bc_dissipative} by dividing the above equality by $2(1-\rho\cos\chi)$ and by defining
        \[ \sigma := \frac{1+\rho^2-2\rho\cos\chi}{2(1-\rho\cos\chi)} \in (0,1) \,, \qquad \lambda := \frac{-\rho\sin\chi}{1-\rho\cos\chi} \in \R \,. \]
        Note that the map $(\rho,\chi) \in (0,1)\times[0,2\pi)\mapsto (\sigma,\lambda) \in \left( (0,1)\times\R \right) \setminus \left\{ (1/2,0) \right\}$ is invertible since
        \[ \rho = \sqrt{\frac{\lambda^2 + (1-2\sigma)^2}{\lambda^2+1}} \,, \qquad \rho \cos\chi = 1 - \frac{2\sigma}{1+\lambda^2}\,, \qquad \rho \sin\chi = \frac{-2\lambda\sigma}{1+\lambda^2} \,. \]
        Furthermore, the case $\rho=0$ corresponds to $(\sigma,\lambda) = (1/2,0)$.
        
        \item[ii)] Assume that $\rho = 1$ and define
        \[ \theta := 
        \frac{\chi-\pi}{2} \in \left[-\frac{\pi}{2},\frac{\pi}{2}\right) \,. \]
        Then
        \[ \sin\chi = -2\cos\theta\sin\theta \,, \qquad \cos\chi = 1-2\cos^2\theta=2\sin^2\theta-1 \,, \]
        and the boundary conditions \eqref{nongeometric_bc} rewrite under the form
        \[ 2\sin\theta \left( \cos\theta E^\dagger_{\bf AB} + \sin\theta H^\dagger_{\bf AB}\right) - 2 \cos\theta \left( \cos\theta (\star E^\dagger)_{\bf AB} + \sin\theta (\star H^\dagger)_{\bf AB}\right) = q_{\bf AB} \,. \]
        By \Cref{lem:decomp_duality_dim_2}, there exists a unique $q'_{\bf AB} \in \mathscr{S}(\mathcal{D}^2)$ such that
        \[ q_{\bf AB} = 2\sin\theta \, q'_{\bf AB} -2\cos\theta \, (\star q')_{\bf AB} \,. \]
        By \Cref{cor:duality_dim_2}, it follows that the above equation is equivalent to \eqref{bc_reflective}. \qedhere
    \end{itemize} 
\end{proof}

\subsubsection{Replacing the tensors}
\label{sec:replacing_tensors}

So far, the boundary conditions have been written in terms of the electromagnetic decomposition $(E,H)$ of the Weyl candidate $V$ in the frame field $(e_{\bf 0},(e_{\bf 1},e_{\bf 2},e_{\bf 3}))$, which is adapted to the initial hypersurface $\mathcal{U}_\star$ in $\mathcal{W}$, see \Cref{lem:decomposition_Weyl_cand}. However, this hinders the geometric interpretation of these boundary conditions since $E$ and $H$ are not tensors on the conformal boundary $\mathscr{I} = \mathcal{B} \cap \mathcal{W}$. The idea is to express them as a condition on the 10-uplet $(\mathfrak{S},\mathfrak{h},\mathfrak{E},\mathfrak{M},0,\kappa,\mathfrak{K},\mathfrak{z},\mathfrak{T},\mathfrak{Q})$ induced on $\mathscr{I}$.

\begin{lem}
    The boundary conditions \eqref{bc_dissipative} and \eqref{bc_reflective} rewrite respectively as the following conditions on the 10-tuple $(\mathfrak{S},\mathfrak{h},\mathfrak{E},\mathfrak{M},0,\kappa,\mathfrak{K},\mathfrak{z},\mathfrak{T},\mathfrak{Q}) \in \mathscr{D}$, induced by the (CVE) on $\mathscr{I}$, 
    \begin{equation}
        \label{bc_VCE_dissipative}
        -\sigma \mathfrak{E}^\dagger_{\bf AB} - (1-\sigma) (\star \mathfrak{H}^\dagger)_{\bf AB} - \lambda \mathfrak{H}^\dagger_{\bf AB} = q'_{\bf AB} \,, \quad \text{with } \sigma \in (0,1), \, \lambda \in \R \,.
    \end{equation}
    and
    \begin{equation}
        \label{bc_VCE_reflective}
        -\cos\theta \, \mathfrak{E}^\dagger_{\bf AB} - \sin\theta \, \mathfrak{H}^\dagger_{\bf AB} = q'_{\bf AB} \,, \quad \text{with } \theta \in \left[-\frac{\pi}{2},\frac{\pi}{2}\right) \,.
    \end{equation}
    where
    \begin{itemize}
        \item $\mathfrak{E}^\dagger_{\bf AB}$ and $\mathfrak{H}^\dagger_{\bf AB}$ are the trace-free part of the restriction on $\mathcal{D}^2$ of $\mathfrak{E}_{ij}$ and $\mathfrak{H}_{ij}$,
        \item $q' \in \mathscr{S}(\mathcal{D}^2)$ is freely specifiable (boundary data).
    \end{itemize}
\end{lem}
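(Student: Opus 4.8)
The plan is to observe that the fields $E,H$ and $\mathfrak{E},\mathfrak{H}$ are electromagnetic decompositions of the \emph{same} Weyl candidate $V$ (and of its Hodge dual $\star V$) with respect to two different non-null hypersurfaces passing through the boundary. Indeed $E,H$ are taken relative to the initial hypersurface $\mathcal{U}_\star$, whose $\breve{g}$-unit normal is the timelike vector $e_{\bf 0}$, whereas $\mathfrak{E},\mathfrak{H}$ are taken relative to the conformal boundary $\mathscr{I}$, whose outward $\breve{g}$-unit normal is the spacelike vector $e_{\bf 3}$ by \Cref{cor:geometry_back}. Since both decompositions are just frame components of $V$ and $\star V$, the two boundary conditions are linked by a purely algebraic identity valid on $\mathscr{I}$, namely
\[ \mathfrak{E}^\dagger_{\bf AB} = - E^\dagger_{\bf AB} \,, \qquad \mathfrak{H}^\dagger_{\bf AB} = - H^\dagger_{\bf AB} \,, \]
after which \eqref{bc_dissipative}--\eqref{bc_reflective} turn into \eqref{bc_VCE_dissipative}--\eqref{bc_VCE_reflective} by linearity.

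To prove the first identity, I would use the $\breve{g}$-orthonormality of the frame, \eqref{eq:core_i}, to write the relevant frame components as $E_{\bf AB} = V^{\bf 0}{}_{\bf A0B} = -V_{\bf 0A0B}$ and $\mathfrak{E}_{\bf AB} = V^{\bf 3}{}_{\bf A3B} = V_{\bf 3A3B}$ for ${\bf A},{\bf B}\in\{{\bf 1},{\bf 2}\}$, where $\mathfrak{E}$ is the electric part of $V$ with respect to $\mathscr{I}$ in the sense of \Cref{lem:decomposition_Weyl_cand}. The total trace-freeness of the Weyl candidate, $V^{\bf a}{}_{\bf AaB}=0$, then gives
\[ -V_{\bf 0A0B} + V_{\bf 1A1B} + V_{\bf 2A2B} + V_{\bf 3A3B} = 0 \,, \]
so that $V_{\bf 3A3B} - V_{\bf 0A0B} = -\sum_{\bf C} V_{\bf CACB}$, the sum being over ${\bf C}\in\{{\bf 1},{\bf 2}\}$.

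The key step is to show that this partial trace over the distribution $\mathcal{D}^2$ is \emph{pure trace}, i.e.\ proportional to $\delta_{\bf AB}$, hence has vanishing trace-free part. This is where the two-dimensionality of $\mathcal{D}^2$ is essential: using the pair antisymmetries $V_{(\bf ab)cd}=0=V_{\bf ab(cd)}$ and the pair symmetry $V_{\bf abcd}=V_{\bf cdab}$, one checks directly that the off-diagonal contribution $\sum_{\bf C} V_{\bf C1C2}$ vanishes while the two diagonal contributions both equal $V_{\bf 1212}$, reflecting that a Riemann-type tensor has a single independent component in dimension two. Consequently $\tf(V_{\bf 3A3B}) = \tf(V_{\bf 0A0B})$, which is exactly $\mathfrak{E}^\dagger_{\bf AB} = -E^\dagger_{\bf AB}$. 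Applying the same argument to $\star V$, which is again a Weyl candidate by \Cref{sec:duality_dim_4} and whose electric parts relative to $e_{\bf 0}$ and $e_{\bf 3}$ are precisely $H_{\bf ij}=(\star V)^{\bf 0}{}_{\bf i0j}$ and $\mathfrak{H}_{\bf ij}=(V\star)^{\bf 3}{}_{\bf i3j}$, yields $\mathfrak{H}^\dagger_{\bf AB} = -H^\dagger_{\bf AB}$.

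Finally, since the dual operator $\star$ on $\mathscr{S}(\mathcal{D}^2)$ is linear (see \Cref{sec:duality_dim_2}), one has $(\star H^\dagger)_{\bf AB} = -(\star\mathfrak{H}^\dagger)_{\bf AB}$, and substituting the two identities into \eqref{bc_dissipative} and \eqref{bc_reflective} immediately produces \eqref{bc_VCE_dissipative} and \eqref{bc_VCE_reflective} with the \emph{same} boundary datum $q'$. I expect the main obstacle to be bookkeeping rather than conceptual: one must pin down the sign and normalisation conventions relating $\mathfrak{H}$ to the Cotton candidate $\mathfrak{M}$ of the induced $10$-tuple (through the three-dimensional duality of \Cref{sec:duality_dim_3} and the remark following \Cref{lem:decomposition_Weyl_cand}) and confirm that the two-index dual $\star$ on $\mathcal{D}^2$ is consistent with the four-dimensional Hodge dual restricted to $\mathscr{I}$, so that no spurious sign slips into the reflective and dissipative coefficients.
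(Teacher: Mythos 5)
Your proposal is correct and follows essentially the same route as the paper: both identify $E,H$ and $\mathfrak{E},\mathfrak{H}$ as frame components of the same Weyl candidate $V$ (and of its dual) taken with respect to the normals $e_{\bf 0}$ and $e_{\bf 3}$, and reduce the claim to the algebraic identity $E_{\bf AB} = -\mathfrak{E}_{\bf AB} + \delta_{\bf AB}\,\mathfrak{E}_{\bf C}{}^{\bf C}$, whose trace-free part on $\mathcal{D}^2$ gives $\mathfrak{E}^\dagger_{\bf AB} = -E^\dagger_{\bf AB}$ and, via \eqref{sym_dual_EH}, $\mathfrak{H}^\dagger_{\bf AB} = -H^\dagger_{\bf AB}$. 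The only (immaterial) difference is that the paper reads this identity off by substituting the decomposition \eqref{dec_Weyl_candidate} relative to $\mathscr{I}$ into $V^{\bf 0}{}_{\bf A0B}$, whereas you derive it directly from the total trace-freeness of $V$ together with the fact that the partial trace over the two-dimensional distribution $\mathcal{D}^2$ is pure trace.
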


\begin{proof}
    By \Cref{cor:geometry_back}, $e_{\bf 3}$ is the outward-pointing $\breve{g}$-unit normal of $\mathscr{I}$. Hence, the frame field $(e_{\bf 3},(e_{\bf 0},e_{\bf 1},e_{\bf 2}))$ is adapted to $\mathscr{I}$. Let $(\mathfrak{S},\mathfrak{h},\mathfrak{E},\mathfrak{M},0,\kappa,\mathfrak{K},\mathfrak{z},\mathfrak{T},\mathfrak{Q}) \in \mathscr{D}$ be the 10-tuple induced on $\mathscr{I}$. By the electromagnetic decomposition \eqref{dec_Weyl_candidate}, one has
    \[ E_{\bf AB} := V^{\bf 0}{}_{\bf A0B}= -2\left(\delta^{\bf 0}{}_{\bf [0} \mathfrak{E}_{\bf B]A} - \eta_{\bf A[0} \mathfrak{E}_{\bf B]}{}^{\bf 0} \right) = - \mathfrak{E}_{\bf AB} + \delta_{\bf AB} \mathfrak{E}_{\bf C}{}^{\bf C} \,. \]
    By duality and thanks to \eqref{sym_dual_EH}, one deduces that
    \[ H_{\bf AB} := (\star V)^{\bf 0}{}_{\bf A0B} = - \mathfrak{H}_{\bf AB} + \delta_{\bf AB} \mathfrak{H}_{\bf C}{}^{\bf C} \,. \]
    Hence
    \[ E^\dagger_{\bf AB} = - \mathfrak{E}^\dagger_{\bf AB} \,, \qquad H^\dagger_{\bf AB} = - \mathfrak{H}^\dagger_{\bf AB} \,. \qedhere \]
\end{proof}

It is now possible to transpose the above analytic boundary conditions for the (CVE) into corresponding analytic boundary conditions for the (VE).

\begin{cor}
    The analytic boundary conditions \eqref{bc_VCE_dissipative} and \eqref{bc_VCE_reflective} for the (CVE) with $\Lambda=-3$ transpose into the following analytic boundary conditions for the (VE) with $\Lambda=-3$
    \begin{equation}
        \label{bc_VE_dissipative}
        \sigma' \, \mathfrak{t}^\dagger_{\bf AB} + (1-\sigma') \mathfrak{c}_{\bf 0(AB)} + \lambda' \, \mathfrak{y}^\dagger_{\bf AB} = q''_{\bf AB} \,, \quad \text{with } \sigma' \in (0,1), \lambda' \in \R \,,
    \end{equation}
    and
    \begin{equation}
        \label{bc_VE_reflective}
        \cos(\theta') \, \mathfrak{t}^\dagger_{\bf AB} - \sin(\theta') \, \mathfrak{y}^\dagger_{\bf AB} = q''_{\bf AB} \,, \quad \text{with } \theta' \in \left(-\frac{\pi}{2},\frac{\pi}{2}\right] \,.
    \end{equation}
    where
    \begin{itemize}
        \item $\mathfrak{t}^\dagger_{\bf AB}$ is the trace-free part of the restriction on $\mathcal{D}^2$ of the boundary stress-energy tensor $\mathfrak{t}_{ij}$,
        \item $\mathfrak{y}^\dagger_{\bf AB}$ is the trace-free part of the restriction on $\mathcal{D}^2$ of the Cotton-York tensor $\mathfrak{y}_{ij}$ of the conformal boundary,
        \item $\mathfrak{c}_{ijk}$ is the Cotton tensor of the conformal boundary (which is the dual tensor of the Cotton-York tensor $\mathfrak{y}_{ij}$),
        \item $q'' \in \mathscr{S}(\mathcal{D}^2)$ is freely specifiable (boundary data).
    \end{itemize}
\end{cor}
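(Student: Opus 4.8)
The plan is to translate the two tensors $\mathfrak{E}^\dagger$ and $\mathfrak{H}^\dagger$ appearing in \eqref{bc_VCE_dissipative}--\eqref{bc_VCE_reflective} into the geometric boundary quantities $\mathfrak{t}^\dagger$, $\mathfrak{y}^\dagger$ and $\mathfrak{c}_{\bf 0(AB)}$, and then to reabsorb the resulting constants and signs into a reparametrisation of the coefficients together with the freely specifiable boundary datum.

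First I would assemble the three identities on $\mathscr{I}$ that make up the dictionary. From the proof of \Cref{lem:bijection_geometric_bc} (itself built on \eqref{eq:lim_weyl_frakt}) one has $\mathfrak{t}_{ij} = \tfrac{2\mathfrak{z}}{3}\mathfrak{E}_{ij}$, hence $\mathfrak{E}_{ij} = \tfrac{3\mathfrak{z}}{2}\mathfrak{t}_{ij}$ since $\mathfrak{z}^2 = |\Lambda|/3 = 1$. The magnetic part $\mathfrak{H}$ is, by the Bel decomposition recalled in the remarks to \Cref{lem:decomposition_Weyl_cand}, the $3$-dimensional dual $\mathfrak{H}_{ij} = \tfrac12\epsilon_i{}^{kl}\mathfrak{M}_{klj}$ of the Cotton candidate $\mathfrak{M}$; combining this with \eqref{eq:conf_bound_cotton_magnetic}, namely $\mathfrak{z}\mathfrak{M}_{ijk} = \mathfrak{c}_{ijk}$, and with \Cref{def:cotton_york} of the Cotton--York tensor $\mathfrak{y}_{ij} = \tfrac12\epsilon_i{}^{kl}\mathfrak{c}_{klj}$ yields $\mathfrak{z}\mathfrak{H}_{ij} = \mathfrak{y}_{ij}$. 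Restricting to $\mathcal{D}^2$ and taking trace-free parts gives at once $\mathfrak{E}^\dagger_{\bf AB} = \tfrac{3\mathfrak{z}}{2}\mathfrak{t}^\dagger_{\bf AB}$ and $\mathfrak{H}^\dagger_{\bf AB} = \mathfrak{z}\,\mathfrak{y}^\dagger_{\bf AB}$.

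The one genuinely new computation is the identification of the $2$-dimensional dual $(\star\mathfrak{H}^\dagger)_{\bf AB}$ with the Cotton component $\mathfrak{c}_{\bf 0(AB)}$. Here I would use that, with $e_{\bf 3}$ the spacelike ($\varepsilon=+1$) outward normal of $\mathscr{I}$ and $\epsilon_{\bf 0123}=+1$, the induced volume form $\epsilon^{(3)}$ on $\mathscr{I}$ satisfies $\epsilon^{(3)}_{\bf 0AB} = \slashed{\epsilon}_{\bf AB}$, so that contracting the relation $\mathfrak{c}_{ijk}=\mathfrak{z}\mathfrak{M}_{ijk}=-\mathfrak{z}\,\epsilon^{(3)}_{ij}{}^l\mathfrak{H}_{lk}$ with $i={\bf 0}$, $j={\bf A}$, $k={\bf B}$ forces the summed index to be the remaining spacelike direction and produces $\mathfrak{c}_{\bf 0(AB)} = -\mathfrak{z}\,\slashed{\epsilon}_{(\bf A}{}^{\bf C}\mathfrak{H}_{\bf B)C}$. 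Since $\slashed{\epsilon}$ is antisymmetric the trace of $\mathfrak{H}$ drops out, so $\slashed{\epsilon}_{(\bf A}{}^{\bf C}\mathfrak{H}_{\bf B)C} = (\star\mathfrak{H}^\dagger)_{\bf AB}$ and hence $(\star\mathfrak{H}^\dagger)_{\bf AB} = \mp\mathfrak{z}\,\mathfrak{c}_{\bf 0(AB)}$. Substituting the three dictionary relations into \eqref{bc_VCE_dissipative} and \eqref{bc_VCE_reflective}, multiplying through by the fixed sign $\mathfrak{z}$ and by suitable nonzero constants (absorbed into the freely specifiable datum, now renamed $q''$), turns both conditions into equations whose left-hand sides are linear combinations of $\mathfrak{t}^\dagger$, $\mathfrak{c}_{\bf 0(AB)}$ and $\mathfrak{y}^\dagger$, as required; the invariance under $\Signchg$ guarantees that the fixed value $\mathfrak{z}=\pm1$ never survives into the final statement.

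The last step, and the place where care is needed, is the reparametrisation of the coefficients. For the dissipative case the normalised condition $\sigma\,\mathfrak{t}^\dagger + \tfrac23(1-\sigma)\,\mathfrak{c}_{\bf 0(AB)} + \tfrac23\lambda\,\mathfrak{y}^\dagger = q''$ must be rescaled by $\mu>0$ so that the coefficients of $\mathfrak{t}^\dagger$ and $\mathfrak{c}_{\bf 0(AB)}$ read $\sigma'$ and $1-\sigma'$; solving $\mu\sigma=\sigma'$ and $\tfrac{2\mu}{3}(1-\sigma)=1-\sigma'$ gives $\mu=3/(2+\sigma)>0$, so $\sigma'=3\sigma/(2+\sigma)$ maps $(0,1)$ bijectively onto $(0,1)$ and $\lambda'=\tfrac{2}{2+\sigma}\lambda$ ranges over $\R$. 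For the reflective case the same substitution sends $(\cos\theta,\sin\theta)$ to a vector proportional to $(\tfrac32\cos\theta,\sin\theta)$, whose normalised angle satisfies $\tan\theta'=\tfrac23\tan\theta$ up to the sign flips introduced by $\mathfrak{z}$, and one checks that $[-\tfrac\pi2,\tfrac\pi2)$ is carried onto $(-\tfrac\pi2,\tfrac\pi2]$. The main obstacle throughout is exactly this sign and orientation bookkeeping: fixing the global sign in $(\star\mathfrak{H}^\dagger)_{\bf AB}=\mp\mathfrak{z}\,\mathfrak{c}_{\bf 0(AB)}$ consistently with $\epsilon_{\bf 0123}=+1$ and \eqref{sym_dual_EH} so that the coefficient of $\mathfrak{c}_{\bf 0(AB)}$ comes out co-signed with that of $\mathfrak{t}^\dagger$ (forcing $\sigma'\in(0,1)$ rather than a negative value), and confirming that the endpoint swap of the half-open interval in the reflective range is genuine and not a parametrisation artefact.
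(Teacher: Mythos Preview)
Your proposal is correct and follows essentially the same route as the paper: identify $\mathfrak{E}^\dagger$, $\mathfrak{H}^\dagger$ and $(\star\mathfrak{H}^\dagger)$ with $\mathfrak{t}^\dagger$, $\mathfrak{y}^\dagger$ and $\mathfrak{c}_{\bf 0(AB)}$ via \eqref{eq:lim_weyl_frakt} and \eqref{eq:conf_bound_cotton_magnetic}, then reparametrise; your formulas $\sigma'=3\sigma/(2+\sigma)$ and $\lambda'=2\lambda/(2+\sigma)$ coincide with the paper's $\sigma'=\tfrac{3\sigma/2}{3\sigma/2+(1-\sigma)}$ etc. The one methodological difference is that you carry $\mathfrak{z}=\pm1$ as an undetermined sign and appeal to $\Signchg$-invariance, whereas the paper simply fixes $\mathfrak{z}=-1$ from the gauge (outward normal $e_{\bf 3}$ and $\Theta\geq0$ force $e_{\bf 3}(\Theta)<0$ on $\mathscr{I}$); this makes the sign bookkeeping you flag as the ``main obstacle'' disappear, and in particular renders the sign in $(\star\mathfrak{H}^\dagger)_{\bf AB}=-\mathfrak{c}_{\bf 0(AB)}$ and the endpoint swap $[-\tfrac\pi2,\tfrac\pi2)\mapsto(-\tfrac\pi2,\tfrac\pi2]$ (from $\sin\theta'=-\sin\theta/\sqrt{\cdots}$) explicit rather than argued by symmetry.
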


\begin{proof}
    By \eqref{eq:lim_weyl_frakt} and \eqref{cor:I}, one has
    \[ \mathfrak{t}_{ij} = \frac{2\mathfrak{z}}{3} \mathfrak{E}_{ij} \,, \qquad \mathfrak{z} = -1 \,, \qquad \mathfrak{c}_{ijk} = \mathfrak{z}  \mathfrak{M}_{ijk} \; (\text{or equivalently } \mathfrak{y}_{ij} = \mathfrak{z} \mathfrak{H}_{ij}) \,. \]
    Consequently
    \[ \mathfrak{E}^\dagger_{\bf AB} = -\frac{3}{2} \mathfrak{t}^\dagger_{\bf AB} \,, \qquad  \mathfrak{H}^\dagger_{\bf AB} = -\mathfrak{y}^\dagger_{\bf AB} \,, \]
    \[ (\star \mathfrak{H}^\dagger)_{\bf AB} = - (\star \mathfrak{y}^\dagger)_{\bf AB} = - \slashed{\epsilon}_{\bf (A}{}^{\bf C} \mathfrak{y}_{\bf B)C} = \epsilon_{\bf 0(A}{}^{\bf C} \mathfrak{y}_{\bf B)C} = -\mathfrak{c}_{\bf 0(AB)} \,. \]
    Hence the results with
    \[ \sigma' := \frac{\frac{3\sigma}{2}}{\frac{3\sigma}{2}+(1-\sigma)} \,, \qquad \lambda' := \frac{\lambda}{\frac{3\sigma}{2}+(1-\sigma)} \,, \qquad q''_{\bf AB} = \frac{1}{\frac{3\sigma}{2}+(1-\sigma)} q'_{\bf AB} \,, \]
    or
    \[ \cos(\theta') = \frac{\frac{3\cos\theta}{2}}{\sqrt{\frac{9\cos^2\theta}{4} + \sin^2\theta}} \,, \qquad \sin(\theta') = \frac{-\sin\theta}{\sqrt{\frac{9\cos^2\theta}{4} + \sin^2\theta}} \,, \]
    \[ q''_{\bf AB} = \frac{1}{\sqrt{\frac{9\cos^2\theta}{4} + \sin^2\theta}} q'_{\bf AB} \,. \qedhere \]
\end{proof}

The boundary conditions \eqref{bc_VCE_dissipative} and \eqref{bc_VCE_reflective} for the (CVE) are still not geometric in the sense of \Cref{def:geometric_bc}. They are left invariant under the tangent and normal Weyl connection changes $\Weylchg^\parallel_\omega$, $\Weylchg^\perp_\chi$ and under sign change $\Signchg$ (up to changing the sign of the boundary data) but only partially under confomorphisms $\Conf_\phi$. Indeed, even though they are invariant under conformal rescalings (up to rescaling the boundary data), they lack the invariance under isometries since they depend on the specific frame field $(e_{\bf a})$ constructed in the gauge. For the same reason, the boundary conditions \eqref{bc_VE_dissipative} and \eqref{bc_VE_reflective} for the (VE) are not geometric either.

In the next two sections, we infer geometric boundary conditions for the (VE)\footnote{The proof also gives geometric boundary conditions for the (CVE) from the analytic reflective boundary conditions \eqref{bc_VCE_reflective}, in agreement with \Cref{lem:bijection_geometric_bc}.} from the analytic reflective boundary conditions \eqref{bc_VE_reflective} by means of auxiliary systems on the conformal boundary $\mathscr{I}$.

\subsubsection{The geometric Dirichlet boundary condition}
\label{sec:dirichlet}

Consider first the case $\theta'=\pi/2$ in the reflective boundary condition \eqref{bc_VE_reflective}, that is
\begin{equation}
    \label{bc_magnetic}
    \mathfrak{y}^\dagger_{\bf AB} = -q''_{\bf AB} \,.
\end{equation}
Their corresponding analytic form is given by \eqref{raw_bc} with
\begin{equation}
    \label{eq:dirichlet_analytic}
    B = I \in M_2(\R) \,, \qquad d_1,d_2 \in \mathcal{C}^\infty(\mathcal{B},\R) \,.
\end{equation}
This is the case treated by Friedrich who proved that it amounts to imposing the conformal class on the conformal boundary, should the initial data on the corner be known, as stated in the following lemma.

\begin{prop}[Friedrich \protect{\cite[Lemma 7.1]{F95}}]
    \label{prop:partial_to_total_dirichlet_bc}
    Consider the analytic boundary conditions \eqref{raw_bc} with the choices \eqref{eq:dirichlet_analytic}.
    \begin{itemize}
        \item[i)] Let $\underline{u}_\star \in \mathcal{C}^\infty(\overline{\mathcal{U}_\star},\R^m)$ be some analytic initial data verifying all the conditions of \Cref{lem:propagation_constraints} except the analytic compatibility conditions, see $iii)$ of \Cref{evol_solve_near}. The knowledge of both the restriction on $\mathcal{B}_\star$ of $\underline{u}_\star$ and of a Lorentzian conformal class $[\mathfrak{h}]$ on a neighbourhood $\mathfrak{U}$ of $\mathcal{B}_\star$ in $\mathcal{B}$ is sufficient to compute the analytic boundary data $d_1$, $d_2$ on a neighbourhood of $\mathcal{B}_\star$ in $\mathfrak{U}$.
        
        \item[ii)] Conversely, let $d_1$, $d_2 \in \mathcal{C}^\infty(\mathfrak{U},\R)$ be some analytic boundary data on a neighbourhood $\mathfrak{U}$ of $\mathcal{B}_\star$ in $\mathcal{B}$ and let $\underline{u}_\star \in \mathcal{C}^\infty(\overline{\mathcal{U}_\star},\R^m)$ be some analytic initial data verifying all the conditions of \Cref{lem:propagation_constraints}. The knowledge of the restriction on $\mathcal{B}_\star$ of $\underline{u}_\star$ and of the analytic boundary data is sufficient to compute, on a neighbourhood of $\mathcal{B}_\star$ in $\mathfrak{U}$, the metric $\breve{\mathfrak{h}}$ induced on $\mathscr{I}=\mathcal{W}\cap\mathcal{B}\subset\mathfrak{U}$ by the metric $\breve{g}$ given by \Cref{cor:geometry_back}.
    \end{itemize}
\end{prop}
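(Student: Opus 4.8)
The plan is to transfer the entire question to the conformal boundary $\mathscr{I} = \mathcal{W}\cap\mathcal{B}$ and to exploit the constraint system it carries. By \Cref{cor:geometry_back} the gauge of \Cref{prop:construction_gauge} holds on $\mathscr{I}$, so that $(e_{\bf 0},e_{\bf 1},e_{\bf 2})$ is a $\breve{g}$-orthonormal frame tangent to $\mathscr{I}$, $e_{\bf 3}$ is the unit normal, and the coordinate $y^0$ foliates $\mathscr{I}$ by slices whose initial one is the corner $\mathcal{B}_\star = \{y^0 = 0\}$. With $B = \Id$ the analytic conditions \eqref{raw_bc} reduce to the prescription, in terms of $d_1,d_2$, of two frame components of the rescaled Weyl field along $\mathscr{I}$. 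The engine of the argument is the induced constraint system of \Cref{cor:I}, and in particular the identity \eqref{eq:conf_bound_cotton_magnetic}, namely $\mathfrak{z}\mathfrak{M}_{ijk} = \mathfrak{c}_{ijk}$, which identifies the magnetic Weyl field on $\mathscr{I}$ with the third-order, conformally invariant Cotton tensor of the induced metric $\breve{\mathfrak{h}}$; this is what ties the boundary data to the conformal class.

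For part $i)$ I would propagate all data forward from the corner in the $y^0$ direction. The restriction $\underline{u}_\star|_{\mathcal{B}_\star}$ fixes the frame $(e_{\bf a})$, the connection coefficients and the Weyl fields $(\mathfrak{E},\mathfrak{H})$ on $\mathcal{B}_\star$, while the prescribed conformal class $[\mathfrak{h}]$ --- together with the gauge, which pins down a definite representative $\breve{\mathfrak{h}}$ through $\Theta_\star$ and the corner data --- supplies the spatial geometry. The transport equations \eqref{transport_a}--\eqref{transport_i} restricted to $\mathscr{I}$, equivalently the tangential zero quantities of \Cref{lem:propagation_constraints}, then form a first-order system in $y^0$ that reconstructs the frame and the Weyl fields $(\mathfrak{E},\mathfrak{H})$ on a neighbourhood of $\mathcal{B}_\star$. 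Reading off the frame components that appear in \eqref{raw_bc} then yields the analytic boundary data $d_1, d_2$.

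For the converse, part $ii)$, the roles are reversed: the data $d_1, d_2$ are given and $\breve{\mathfrak{h}}$ is the unknown. Here I would construct the auxiliary evolution system on $\mathscr{I}$ in the spirit of Friedrich \cite[Lemma 7.1]{F95}. Writing $\breve{\mathfrak{h}}$ in the $y^0$-foliation, the relation \eqref{eq:conf_bound_cotton_magnetic} expresses the magnetic field through the third-order Cotton operator of $\breve{\mathfrak{h}}$, so that the top-order $y^0$-derivatives of the spatial conformal geometry enter linearly in the prescribed components; the boundary condition thereby turns \eqref{eq:conf_bound_cotton_magnetic} into an evolution equation for $\breve{\mathfrak{h}}$. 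The corner data $\underline{u}_\star|_{\mathcal{B}_\star}$ supply the Cauchy data on $\mathcal{B}_\star$, and integrating this system reconstructs $\breve{\mathfrak{h}}$ near $\mathcal{B}_\star$.

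The main obstacle is the closure and well-posedness of this auxiliary system. The difficulty is that the boundary condition prescribes only two scalar quantities, so one must recover the remaining components of the Weyl field and organise the third-order Cotton operator into a determined, solvable evolution system for $\breve{\mathfrak{h}}$. This is where the other relations of \Cref{cor:I} must be brought in --- crucially the trace-free and divergence-free character of the Cotton-York tensor (\Cref{prop:cottonyork}) and the relation between $\mathfrak{Q}$ and the Schouten tensor $\widehat{\mathfrak{l}}$ of $\breve{\mathfrak{h}}$ --- and one must verify that the corner data match exactly the Cauchy data the system requires, with neither underdetermination nor overdetermination. This final bookkeeping, delicate precisely because the prescribed data sit at the top order of a conformally invariant operator, is the content of Friedrich's argument, which we invoke.
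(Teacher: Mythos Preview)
Your overall architecture is right, and for part $ii)$ it matches the paper: one sets up an auxiliary first-order geometric system on $\mathscr{I}$ whose structure equations are \eqref{eq:aux_syst_a}--\eqref{eq:aux_syst_c}, introduces zero quantities $(\varphi_{\bf i},\varpi^{\bf i}{}_{\bf jkl},\Pi_{\bf ij},\varrho_{\bf ijk},\acute{\Sigma}_{\bf i}{}^{\bf k}{}_{\bf j},\acute{\vartheta}_{\bf ijk})$ mirroring those of the bulk, fixes a conformal-Gaussian gauge on the boundary, and obtains a quasilinear hyperbolic system for the unknown $\underline{\mathfrak{u}} = (\tfrac12\mathfrak{y}_{\bf 00},\mathfrak{y}_{\bf 0A},\acute{\mathfrak{G}},\kappa_{\bf A},\mathfrak{U},e_{\bf A}{}^i)$ in which the prescribed $\mathfrak{y}^\dagger_{\bf AB}$ (equivalently $d_1,d_2$) enters as a source. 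Solving this and propagating the auxiliary constraints yields $\breve{\mathfrak{h}}$. Your sketch captures this, though the paper is considerably more explicit about the unknowns and the gauge.

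For part $i)$, however, your route through the bulk transport equations \eqref{transport_a}--\eqref{transport_i} restricted to $\mathscr{I}$ has a genuine circularity. To propagate the frame $e_{\bf A}{}^\mu$ along $y^0$ via \eqref{transport_i} you need $\widecheck{\Gamma}_{\bf A}{}^{\bf B}{}_{\bf 0}$; by \eqref{transport_b} (with $\Theta=0$ on $\mathscr{I}$) this requires $\widecheck{L}_{\bf AB}$; and \eqref{transport_f} for $\widecheck{L}_{\bf AB}$ contains the term $\epsilon_{\bf B}{}^{\bf kl}(\widecheck{\zeta}_{\breve g})_{\bf k}H_{\bf lA}$. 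On $\mathscr{I}$ one has $(\widecheck{\zeta}_{\breve g})_{\bf A}=0$ but $(\widecheck{\zeta}_{\breve g})_{\bf 3}=e_{\bf 3}(\Theta)\neq 0$, so this term survives and couples to $H_{\bf CA}$, i.e.\ to $H^\dagger_{\bf AB}$ --- precisely the quantity whose prescription is the boundary data you are trying to compute. The bulk transport system on $\mathscr{I}$ therefore does not close without the answer.

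The paper resolves this by working \emph{intrinsically} on the boundary: pick a representative $\mathfrak{h}\in[\mathfrak{h}]$ normalised so that the corner frame is $\mathfrak{h}$-orthonormal, launch $\mathfrak{D}$-conformal geodesics from $\mathcal{B}_\star$ with the corner data as initial conditions, and parallel-transport $e_{\bf 1},e_{\bf 2}$ along them with respect to the associated Weyl connection $\widecheck{\mathfrak{D}}$. By \Cref{lem:conf_geod_stay_conf_bound} and the design of the two gauges, the resulting frame coincides with the one produced by the full evolution, yet its construction uses only $[\mathfrak{h}]$ and $\underline{u}_\star|_{\mathcal{B}_\star}$. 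One then computes the Cotton--York tensor of $\mathfrak{h}$ in this frame and reads off $d_1,d_2$. The key insight you are missing is that the frame on $\mathscr{I}$ is determined by the \emph{intrinsic} conformal-geodesic flow of $(\mathscr{I},[\mathfrak{h}])$, not by the restriction of the bulk transport.
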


\begin{rem}
    Even if the boundary conditions \eqref{bc_magnetic} are homogeneous, the Cotton tensor of the conformal class $[\breve{\mathfrak{h}}]$ is not necessarily vanishing identically. Indeed, it vanishes identically if and only if, in addition, the initial data $\underline{u}_\star$ is such that $\breve{\mathfrak{y}}_{\bf 00}$ and $\breve{\mathfrak{y}}_{\bf 0A}$ vanish on $\mathcal{B}_\star$, or equivalently such that $H_{\bf 33}$ and $E_{\bf 3A}$ vanish on $\mathcal{B}_\star$. In that case, the conformal class is locally conformally flat by $ii)$ of \Cref{Weyl_Schouten_theorem}. Since this is the minimal geometric condition ensuring that \eqref{bc_magnetic} is homogeneous, imposing a conformal class which is locally conformally flat can naturally be interpreted as the geometric homogeneous Dirichlet boundary condition.
\end{rem}

\begin{proof} We provide details of the proof, first presented in \cite[Section 7.1]{F95}, so that this paper is self-contained.
    \begin{itemize}
        \item[i)] The key idea is that the frame field $(e_{\bf i}) = (e_{\bf 0},e_{\bf 1},e_{\bf 2})$ on $\mathscr{I} = \mathcal{W}\cap\mathcal{B}$, a priori defined from the solution $\underline{u}$ of the evolution system, does actually not depend on solving the whole evolution system (which would require a boundary data) and can be reconstructed solely with the hypotheses of $i)$.

        Let $\mathfrak{h}$ be a representative of the conformal class $[\mathfrak{h}]$ on $\mathfrak{U}$ such that, on $\mathcal{B}_\star$, the frame field $(e_{\bf i}) = (\partial_0, e_{\bf 1}{}^A \partial_A, e_{\bf 2}{}^A \partial_A)$, where $e_{\bf 1}{}^A$ and $e_{\bf 2}{}^A$ are given by the restriction on $\mathcal{B}_\star$ of the initial data $\underline{u}_\star$, is $\mathfrak{h}$-orthonormal.
        
        For each point $q \in \mathcal{B}_\star$, start a $\mathfrak{D}$-conformal geodesic $(x,\beta)$ with initial data $x_\star = q$, $\dot{x}_\star = \partial_0$ and $\beta_\star = \kappa$, where $\kappa$ is given by the restriction on $\mathcal{B}_\star$ of $\underline{u}_\star$. This forms a congruence of timelike conformal geodesics on a neighbourhood of $\mathcal{B}_\star$ in $\mathfrak{U}$. Let $\widecheck{\mathfrak{D}}$ be the Weyl connection associated to $\beta$ with respect to $\mathfrak{h}$. Set $e_{\bf 0} := \dot{x}$ and extend the vector fields $e_{\bf 1}$ and $e_{\bf 2}$ from $\mathcal{B}_\star$ by parallel propagation with respect to $\widecheck{\mathfrak{D}}$.
        
        Now that the frame field $(e_{\bf i})$ is reconstructed, one can compute the Cotton-York tensor $\mathfrak{y}_{ij}$ of $\mathfrak{h}$, see \Cref{def:cotton_york}, and eventually derive $\mathfrak{y}^\dagger_{\bf AB}$, giving the value of $q''_{\bf AB}$ and thus of the analytic boundary data.
        
        \item[ii)] The proof of $ii)$ is much more involved since it relies on an auxiliary geometric problem on the conformal boundary constituted by certain structure equations. If $(\mathscr{I},\mathfrak{h})$ is a 3-dimensional oriented smooth Lorentzian manifold and $\kappa$ is a smooth covector field on $\mathscr{I}$ then
        \begin{subequations}
            \begin{align}
                \label{eq:aux_syst_a}
                \widehat{\mathfrak{r}}^i{}_{jkl} &= 2 S_{j[k}{}^{im} \widehat{\mathfrak{l}}_{l]m} \,, \\
                \label{eq:aux_syst_b}
                2\widehat{\mathfrak{D}}_{[i} \widehat{\mathfrak{l}}_{j]k} &= - \epsilon_{ij}{}^l
                \mathfrak{y}_{lk} \,, \\
                \label{eq:aux_syst_c}
                \widehat{\mathfrak{D}}_i \mathfrak{y}^i{}_j &= 3 \kappa_i \mathfrak{y}^i{}_j \,,
            \end{align}
        \end{subequations}
        where $\widehat{\mathfrak{D}}$ is the Weyl connection associated to $\kappa$ with respect to $\mathfrak{h}$, $\widehat{\mathfrak{r}}^i{}_{jkl}$ and $\widehat{\mathfrak{l}}_{ij}$ are respectively its Riemann and Schouten tensors, $\mathfrak{y}_{ij}$ is the Cotton-York tensor of $\mathfrak{h}$, $\epsilon_{ijk}$ is the volume form. Equation \eqref{eq:aux_syst_a} is just the decomposition of the Riemann tensor \eqref{eq:decomposition_Riemann_Weyl}, equation \eqref{eq:aux_syst_b} comes from the definitions of the Cotton tensor \eqref{def:cotton_weylconn} and the Cotton-York tensor \eqref{def:cotton_york}, equation \eqref{eq:aux_syst_c} is the divergence equation on the Cotton-York tensor. The gauge invariances of this system are associated to isometries and Weyl connection changes.

        Similarly to \Cref{sec:zero_quantities}, one can introduce a set of zero quantities encoding this auxiliary system. For any 3-dimensional oriented smooth Lorentzian manifold $(\mathscr{I},\mathfrak{h})$, any smooth covector field $\kappa$ on $\mathscr{I}$, any connection $\acute{\mathfrak{D}}$ on $\mathscr{I}$, any smooth 2-tensor field $\mathfrak{U}_{ij}$ on $\mathscr{I}$ and any smooth frame field $(e_{\bf i})$ on $\mathscr{I}$, define
        \begin{subequations}
            \begin{align*}
                \varphi_{\bf i} &:= \mathfrak{h}^{\bf jk} \left( \acute{\mathfrak{D}}_{\bf j} \mathfrak{y}_{\bf ki} - \kappa_{\bf j} \mathfrak{y}_{\bf ki} \right) \,, \\
                \varpi^{\bf i}{}_{\bf jkl} &:= \acute{\mathfrak{r}}^{\bf i}{}_{\bf jkl} - 2 S_{\bf j[k}{}^{\bf im} \mathfrak{U}_{\bf l]m} \,, \\
                \Pi_{\bf ij} &:= \acute{\mathfrak{D}}_{\bf i} \kappa_{\bf j} - \acute{\mathfrak{D}}_{\bf j} \kappa_{\bf i} + 2 \mathfrak{U}_{\bf [ij]} \,, \\
                \varrho_{\bf ijk} &:= \acute{\mathfrak{D}}_{\bf i} \mathfrak{U}_{\bf jk} - \acute{\mathfrak{D}}_{\bf j} \mathfrak{U}_{\bf ik} + \epsilon_{\bf ij}{}^{\bf l} \mathfrak{y}_{\bf lk} \,, \\
                \acute{\Sigma}_{\bf i}{}^{\bf k}{}_{\bf j} &:= \acute{\mathfrak{G}}_{\bf i}{}^{\bf k}{}_{\bf j} - \acute{\mathfrak{G}}_{\bf j}{}^{\bf k}{}_{\bf i} - \langle \omega^{\bf k}, [e_{\bf i},e_{\bf j}] \rangle \,, \\
                \acute{\vartheta}_{\bf ijk} &:= \acute{\mathfrak{D}}_{\bf i} \mathfrak{h}_{\bf jk} + 2\kappa_{\bf i} \mathfrak{h}_{\bf jk} \,,
            \end{align*}
        \end{subequations}
        where $(\omega^{\bf i})$ is the dual frame field of $(e_{\bf i})$, $\acute{\mathfrak{r}}^{\bf i}{}_{\bf jkl}$ are the frame components of the Riemann tensor of $\acute{\mathfrak{D}}$ and
        \[ \acute{\mathfrak{G}}_{\bf i}{}^{\bf k}{}_{\bf j} := \langle \omega^{\bf k}, \acute{\mathfrak{D}}_{e_{\bf i}} e_{\bf j} \rangle \,. \]
        
        A gauge choice for the auxiliary system consists of choosing a Weyl connection $\widehat{\mathfrak{D}}$, an orthonormal frame field $(e_{\bf i})$ and a coordinate system $(y^i)$. Let $(\mathscr{I},\mathfrak{h},\kappa)$ be a solution to this geometric auxiliary system and assume that it admits a compact spacelike hypersurface $\mathfrak{S}_\star$. Similarly to \Cref{sec:gauge_construction}, one can construct a gauge choice $(\widecheck{\mathfrak{D}},(e_{\bf i}),(y^i)))$ on a neighbourhood $\mathfrak{U}$ of a point $p\in\mathfrak{S}_\star$ based on a congruence of timelike conformal geodesics such that
        \begin{align*}
            \mathfrak{S}_\star \cap \mathfrak{U} &= \{q \in \mathfrak{U} \, | \, y^0(q) = 0\} \,, \\
            e_{\bf A}{}^0 &= 0 \quad \text{on } \mathfrak{U}_\star \,, \\
            e_{\bf 0}{}^i &= \delta_{\bf 0}{}^i \,, \\
            \widecheck{\mathfrak{G}}_{\bf 0}{}^{\bf j}{}_{\bf i} &= 0 \,, \\
            \widecheck{\mathfrak{l}}_{\bf 0i} &= 0 \,, \\
            (\widecheck{\kappa}_{\mathfrak{h}})_{\bf 0} &= 0 \,, \\
            \mathfrak{h}_{\bf ij} &= \eta_{\bf ij} \,, \\
            \epsilon_{\bf 012} &= -1 \,.
        \end{align*}
        In this gauge, one obtains a suitable quasilinear hyperbolic system of partial differential equation for the unknown
        \[ \underline{\mathfrak{u}} := \left( \frac{1}{2} \mathfrak{y}_{\bf 00}, \mathfrak{y}_{\bf 0A}, \acute{\mathfrak{G}}_{\bf A}{}^{\bf C}{}_{\bf B}, \acute{\mathfrak{G}}_{\bf A}{}^{\bf B}{}_{\bf 0}, \acute{\mathfrak{G}}_{\bf A}{}^{\bf 0}{}_{\bf B}, \acute{\mathfrak{G}}_{\bf A}{}^{\bf 0}{}_{\bf 0}, \kappa_{\bf A}, \mathfrak{U}_{\bf AB}, \mathfrak{U}_{\bf A0}, e_{\bf A}{}^i \right) \]
        from a certain part of the zero quantities. This system will be called the auxiliary evolution system. The initial data $\underline{\mathfrak{u}}_\star$ for the auxiliary evolution system is derived by picking the adequate components of the restriction on $\mathcal{B}_\star$ of the initial data $\underline{u}_\star$ for the evolution system \eqref{evol_system}. By design of both 
        \begin{itemize}
            \item[$\bullet$] the initial data $\underline{\mathfrak{u}}_\star$,
            \item[$\bullet$] the above gauge and the one for the (CVE), see \Cref{prop:construction_gauge},
        \end{itemize}
        and by application of \Cref{lem:conf_geod_stay_conf_bound}, one deduces that the vector fields $e_{\bf 0}$, $e_{\bf 1}$, $e_{\bf 2}$ of the two systems coincide (on $\mathscr{I} = \mathcal{W} \cap \mathcal{B}$). Hence $\mathfrak{y}^\dagger_{\bf AB}$ in the above gauge is also given by $q''_{\bf AB}$ expressed in the gauge for the (CVE) which is entirely determined by the analytic boundary data $d_1$, $d_2$.
        
        \begin{rem}
            Note that the boundary data for the evolution system \eqref{evol_system} and its derivatives now play the role of source terms in the auxiliary evolution system.
        \end{rem}

        After solving the auxiliary evolution system, one can prove the propagation of the constraints, which are encoded by the rest of the zero quantities. Finally, one reconstructs the metric $\breve{\mathfrak{h}}$ by $\breve{\mathfrak{h}}_{\bf ij} := \eta_{\bf ij} \,  \omega^{\bf i} \otimes \omega^{\bf j}$. \qedhere
    \end{itemize}
\end{proof}

\subsubsection{The geometric homogeneous Robin boundary conditions}
\label{sec:homogeneous_robin}

Now consider the reflective boundary conditions \eqref{bc_VE_reflective} with $\theta' \neq \pi/2$ and $q''_{\bf AB} = 0$, that is
\begin{equation}
    \label{bc_electric}
    \mathfrak{t}^\dagger_{\bf AB} = \mu \, \mathfrak{y}^\dagger_{\bf AB} \,,
\end{equation}
where $\mu := \tan(\theta') \in \R$. Their corresponding analytic form is given by \eqref{raw_bc} with
\begin{equation}
    \label{eq:robin_analytic}
    B = \begin{pmatrix}
        \cos\chi & \sin\chi \\
        \sin\chi & -\cos\chi
        \end{pmatrix} \in M_2(\R) \, \qquad d_1 = d_2 = 0 \,,
\end{equation}
for some $\chi \in (0,2\pi)$ related to $\mu$ by an invertible relation. From \eqref{bc_electric}, which is gauge-dependent, we want to obtain our geometric reflective boundary conditions
\begin{equation}
    \tag{\ref{eq:bc_robin}}
    \mathfrak{p}_{ij} := \mathfrak{t}_{ij} - \mu \, \mathfrak{y}_{ij} = 0 \,. 
\end{equation}
Since \eqref{bc_electric} rewrite under the form
\[ \mathfrak{p}^\dagger_{\bf AB} = 0 \,, \]
it remains to study the following quantities
\[ \mathfrak{p}_{\bf A} := \mathfrak{p}_{\bf 0A} = \mathfrak{t}_{\bf 0A} - \mu \, \mathfrak{y}_{\bf 0A} \,, \qquad \mathfrak{p} := \frac{1}{2} \mathfrak{p}_{\bf 00} = \frac{1}{2} \left(\mathfrak{t}_{\bf 00} - \mu \, \mathfrak{y}_{\bf 00}\right) \,. \]
To this end, we will also use an auxiliary evolution system on the conformal boundary. Recall that both the boundary stress-energy tensor $\mathfrak{t}$ and the Cotton-York tensor $\mathfrak{y}$ are divergence-free for the Levi-Civita connection $\mathfrak{D}$ of any representative $\mathfrak{h}$ of the boundary conformal class $[\mathfrak{h}]$. Thus
\[ \mathfrak{D}_i \mathfrak{p}^i{}_j = 0 \,. \]
More generally, for any Weyl connection $\widehat{\mathfrak{D}}$ and any representative $\mathfrak{h} \in [\mathfrak{h}]$, one has
\begin{equation}
    \label{eq:div_aux_syst}
    \widehat{\mathfrak{D}}_i \mathfrak{p}^i{}_j - 3 (\widehat{\kappa}_{\mathfrak{h}})_i \mathfrak{p}^i{}_j = 0 \,,
\end{equation}  
where $\widehat{\kappa}_\mathfrak{h}$ is the covector field associated to $\widehat{\mathfrak{D}}$ with respect to $\mathfrak{h}$.

\begin{lem}
    Consider the analytic boundary conditions \eqref{raw_bc} with the choices \eqref{eq:robin_analytic}. Let $\underline{u}_\star \in \mathcal{C}^\infty(\overline{\mathcal{U}_\star},\R^m)$ be some analytic initial data satisfying all the conditions of \Cref{lem:propagation_constraints}. The quantities $\mathfrak{p}$ and $\mathfrak{p}_{\bf A}$, associated to the solution $(\mathcal{W},\breve{g},V,\Theta,\kappa)$ and the gauge choice $(\Theta,\acute{\nabla},(e_{\bf a}),(y^\mu))$ given by \Cref{cor:geometry_back}, verify on the conformal boundary $\mathscr{I} = \mathcal{W} \cap \mathcal{B}$ the following evolution system
    \begin{subequations}
        \label{eq:aux_system_fraku}
        \begin{align}
         &\quad 2e_{\bf 0}(\mathfrak{p}) - \delta^{\bf AB} e_{\bf A}(\mathfrak{p}_{\bf B}) + 3 \acute{\mathfrak{G}}_{\bf A}{}^{\bf A}{}_{\bf 0} \, \mathfrak{p} - \left(\acute{\mathfrak{G}}_{\bf A}{}^{\bf A}{}_{\bf B} - 4 \kappa_{\bf B} \right) \delta^{\bf BC} \, \mathfrak{p}_{\bf C} \nonumber \\
         &= - \acute{\mathfrak{G}}_{\bf A}{}^{\bf B}{}_{\bf 0} \delta^{\bf AC} \mathfrak{p}^\dagger_{\bf BC} \,, \\
         &\quad e_{\bf 0}(\mathfrak{p}_{\bf A}) - e_{\bf A}(\mathfrak{p}) + 3 \kappa_{\bf A} \, \mathfrak{p} + \left( \acute{\mathfrak{G}}_{\bf C}{}^{\bf C}{}_{\bf 0} \delta_{\bf AB} + \acute{\mathfrak{G}}_{\bf B}{}^{\bf 0}{}_{\bf A} \right) \delta^{\bf BC} \mathfrak{p}_{\bf C} \nonumber \\
         &= \delta^{\bf BC} e_{\bf B}(\mathfrak{p}^\dagger_{\bf AC}) - \left( \acute{\mathfrak{G}}_{\bf B}{}^{\bf C}{}_{\bf A}  - \acute{\mathfrak{G}}_{\bf E}{}^{\bf E}{}_{\bf B} \delta_{\bf A}{}^{\bf C} \right) \delta^{\bf BD} \mathfrak{p}^\dagger_{\bf CD} \,,
        \end{align}
    \end{subequations}
    where $\breve{\mathfrak{h}}$ is the metric induced by $\breve{g}$ on $\mathscr{I}$ and $\acute{\mathfrak{G}}_{\bf i}{}^{\bf j}{}_{\bf k}$ are the connection coefficients of the Weyl connection $\acute{\mathfrak{D}}$, induced by $\acute{\nabla}$ on $\mathscr{I}$, in the frame field $(e_{\bf 0},e_{\bf 1},e_{\bf 2})$.
\end{lem}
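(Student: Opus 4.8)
The plan is to read off the system \eqref{eq:aux_system_fraku} from the single divergence identity \eqref{eq:div_aux_syst} by decomposing it in the adapted frame. By \Cref{cor:geometry_back} all zero quantities vanish on $\mathscr{I}$, so the induced $10$-tuple solves the (CVC) and, through \eqref{eq:lim_weyl_frakt} and \Cref{cor:I}, the fields $\mathfrak{t}_{ij} = \tfrac{2\mathfrak{z}}{3}\mathfrak{E}_{ij}$ and $\mathfrak{y}_{ij} = \mathfrak{z}\,\mathfrak{H}_{ij}$ are genuine symmetric, trace-free, divergence-free tensors on $\mathscr{I}$. Hence so is $\mathfrak{p}_{ij} := \mathfrak{t}_{ij} - \mu\mathfrak{y}_{ij}$, and it satisfies \eqref{eq:div_aux_syst} for the Weyl connection $\acute{\mathfrak{D}}$ induced on $\mathscr{I}$ by $\acute{\nabla}$, whose associated covector is the tangential part of $\kappa$. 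This divergence equation, together with the algebraic structure of $\mathfrak{p}$, is the only input.

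First I would fix the orthonormal adapted frame $(e_{\bf 0},e_{\bf 1},e_{\bf 2})$ on $\mathscr{I}$ and record the splitting forced by trace-freeness: with $\breve{\mathfrak{h}}_{\bf ab} = \eta_{\bf ab}$, the condition $\trace_{\breve{\mathfrak{h}}}\mathfrak{p} = 0$ gives $\delta^{\bf AB}\mathfrak{p}_{\bf AB} = \mathfrak{p}_{\bf 00} = 2\mathfrak{p}$, so that $\mathfrak{p}_{\bf 00} = 2\mathfrak{p}$, $\mathfrak{p}_{\bf 0A} = \mathfrak{p}_{\bf A}$ and $\mathfrak{p}_{\bf AB} = \mathfrak{p}^\dagger_{\bf AB} + \delta_{\bf AB}\mathfrak{p}$. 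Thus the only independent components are the unknown scalar $\mathfrak{p}$, the unknown spatial covector $\mathfrak{p}_{\bf A}$, and the trace-free piece $\mathfrak{p}^\dagger_{\bf AB}$; the latter is already set to zero by the reflective condition \eqref{bc_electric} and will therefore appear only on the right-hand side as a source term. Next I would expand $\acute{\mathfrak{D}}_i \mathfrak{p}^i{}_j - 3\kappa_i\mathfrak{p}^i{}_j = 0$ in frame components, splitting the contracted index into its temporal and spatial parts. The gauge furnishes $\acute{\mathfrak{G}}_{\bf 0}{}^{\bf j}{}_{\bf k} = 0$ (the tangential restriction of \eqref{eq:core_f}) and $\kappa_{\bf 0} = 0$ (from \eqref{eq:core_h}), so the $e_{\bf 0}$-derivative term reduces to $e_{\bf 0}(\mathfrak{p}^{\bf 0}{}_{\bf j}) = -e_{\bf 0}(\mathfrak{p}_{\bf 0j})$ and the $\kappa$-contraction to $\kappa_{\bf A}\delta^{\bf AB}\mathfrak{p}_{\bf Bj}$.

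I would then project the resulting identity onto $j = {\bf 0}$ and $j = {\bf A}$, substitute the decomposition above, and collect the $(\mathfrak{p},\mathfrak{p}_{\bf A})$-terms on the left and the $\mathfrak{p}^\dagger_{\bf AB}$-terms on the right; after an overall sign normalisation this reproduces exactly \eqref{eq:aux_system_fraku}. The main obstacle is the bookkeeping of the Weyl non-metricity: because $\acute{\mathfrak{D}}$ does not preserve $\breve{\mathfrak{h}}$, raising the contracted index (recall $\acute{\mathfrak{D}}_i\breve{\mathfrak{h}}^{ik} = 2\kappa_i\breve{\mathfrak{h}}^{ik}$ by \eqref{weyl_conn}) generates additional $\kappa$-contributions, and these are precisely what turn the bare coefficient $-3\kappa$ of \eqref{eq:div_aux_syst} into coefficients such as the $4\kappa_{\bf B}$ appearing in the first equation. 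One must also carefully separate the two distinct traces $\acute{\mathfrak{G}}_{\bf A}{}^{\bf A}{}_{\bf 0}$ and $\acute{\mathfrak{G}}_{\bf A}{}^{\bf A}{}_{\bf B}$ from the mixed coefficient $\acute{\mathfrak{G}}_{\bf B}{}^{\bf 0}{}_{\bf A}$, since the Weyl Riemann tensor has fewer symmetries than in the Levi-Civita case.

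Finally, although the lemma only asserts that the system holds, I would check that its principal part is symmetric hyperbolic, as this is what makes the system tractable for the propagation argument of \Cref{prop:partial_to_total_robin_bc}: the coupling $-\delta^{\bf AB}e_{\bf A}(\mathfrak{p}_{\bf B})$ in the scalar equation is dual to the term $-e_{\bf A}(\mathfrak{p})$ in the vector equation, and the coefficient matrix multiplying $e_{\bf 0}$, which is $\operatorname{diag}(2,1,1)$ on the unknown $(\mathfrak{p},\mathfrak{p}_{\bf 1},\mathfrak{p}_{\bf 2})$, is positive definite.
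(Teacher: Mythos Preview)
Your proposal is correct and follows exactly the same approach as the paper: both derive the system by expanding the divergence identity \eqref{eq:div_aux_syst} in the adapted frame $(e_{\bf 0},e_{\bf 1},e_{\bf 2})$, using the gauge properties \eqref{eq:gauge_core_properties} and the decomposition $\mathfrak{p}_{\bf AB} = \mathfrak{p}^\dagger_{\bf AB} + \mathfrak{p}\,\delta_{\bf AB}$. The paper's own proof is a single sentence to this effect; you supply considerably more detail (the Weyl non-metricity bookkeeping explaining the $4\kappa_{\bf B}$ coefficient, the explicit use of $\acute{\mathfrak{G}}_{\bf 0}{}^{\bf j}{}_{\bf k}=0$ and $\kappa_{\bf 0}=0$), and your final symmetric-hyperbolicity check anticipates what the paper does in the subsequent \Cref{prop:partial_to_total_robin_bc}.
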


\begin{proof}
    The system \eqref{eq:aux_system_fraku} is just an expression of \eqref{eq:div_aux_syst} using the gauge properties \eqref{eq:gauge_core_properties}-\eqref{gauge_additional_properties} and the decomposition
    \[ \mathfrak{p}_{\bf AB} = \mathfrak{p}^\dagger_{\bf AB} + \mathfrak{p} \, \delta_{\bf AB} \,. \qedhere \]
\end{proof}

\noindent The analogue of \Cref{prop:partial_to_total_dirichlet_bc} in this case is the following proposition.

\begin{prop}
    \label{prop:partial_to_total_robin_bc}
    Consider the analytic boundary conditions \eqref{raw_bc} with the choices \eqref{eq:robin_analytic}. Let $\underline{u}_\star \in \mathcal{C}^\infty(\overline{\mathcal{U}_\star},\R^m)$ be some analytic initial data satisfying all the conditions of \Cref{lem:propagation_constraints}.
    \begin{itemize}
         \item[i)]  If the 4-dimensional aAdS space $(\mathcal{W},\mathcal{W}\cap\mathcal{B},\Theta^{-2}\breve{g})$ of \Cref{cor:geometry_back} verifies the geometric boundary condition \eqref{eq:bc_robin}, then $\underline{u}_\star$ satisfies
         \begin{subequations}
            \label{eq:frakp_vanish_corner}
             \begin{align}
                 \frac{2}{3} E_{33} - \mu H_{33} &= 0 \quad \text{on } \mathcal{B}_\star \,, \\
                 \frac{2}{3} H_{3A} + \mu E_{3A} &= 0 \quad \text{on } \mathcal{B}_\star \,,
             \end{align}
         \end{subequations}
         where the indices refer to the coordinates $(y^1,y^2,y^3)$ on $\mathcal{U}_\star$. 
         
        \item[ii)] Conversely, if $\underline{u}_\star$ satisfy \eqref{eq:frakp_vanish_corner} then the 4-dimensional aAdS space $(\mathcal{X},\mathcal{X}\cap\mathcal{B},\Theta^{-2}\breve{g})$, where $\mathcal{X}$ is a neighbourhood of $\mathcal{U}_\star$ in $\mathcal{W}$, verifies the geometric boundary condition \eqref{eq:bc_robin}.
    \end{itemize}
\end{prop}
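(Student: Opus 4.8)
The plan is to mimic the structure of the Dirichlet case in \Cref{prop:partial_to_total_dirichlet_bc}, but now exploiting the divergence identity \eqref{eq:div_aux_syst} together with the auxiliary system \eqref{eq:aux_system_fraku} rather than the full structure equations on $\mathscr{I}$. The starting observation is that, by the chain of equivalences established in \Cref{sec:constant_coeff} and \Cref{sec:replacing_tensors} (namely \eqref{bc_reflective} $\to$ \eqref{bc_VCE_reflective} $\to$ \eqref{bc_VE_reflective} $\to$ \eqref{bc_electric} with $q''=0$), the analytic boundary conditions \eqref{raw_bc} with the choices \eqref{eq:robin_analytic} are exactly the pointwise condition $\mathfrak{p}^\dagger_{\bf AB} = 0$ on the whole conformal boundary $\mathscr{I} = \mathcal{W}\cap\mathcal{B}$. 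Hence the only components of $\mathfrak{p}_{ij}$ that remain to be controlled are $\mathfrak{p} = \tfrac12 \mathfrak{p}_{\bf 00}$ and $\mathfrak{p}_{\bf A} = \mathfrak{p}_{\bf 0A}$, which are precisely the unknowns of \eqref{eq:aux_system_fraku}.

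First I would feed $\mathfrak{p}^\dagger_{\bf AB}\equiv 0$ into \eqref{eq:aux_system_fraku}. Since $e_{\bf 0}, e_{\bf 1}, e_{\bf 2}$ are tangent to $\mathscr{I}$, both the algebraic $\mathfrak{p}^\dagger$ terms and the tangential derivatives $e_{\bf B}(\mathfrak{p}^\dagger_{\bf AC})$ on the right-hand sides vanish, so $(\mathfrak{p}, \mathfrak{p}_{\bf 1}, \mathfrak{p}_{\bf 2})$ solves a homogeneous first-order system on $\mathscr{I}$. I would then verify that this system is symmetric hyperbolic with evolution direction $e_{\bf 0}$: the coefficient of $e_{\bf 0}$ is the positive-definite matrix $\mathrm{diag}(2,1,1)$, while the coefficients of $e_{\bf 1}$ and $e_{\bf 2}$ are symmetric (the coupling $-\delta^{\bf AB}e_{\bf A}(\mathfrak{p}_{\bf B})$ in the first equation is the transpose of $-e_{\bf A}(\mathfrak{p})$ in the second). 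Because $\mathcal{U}_\star \cap \mathscr{I} = \mathcal{B}_\star = \{y^0 = 0\}$ is non-characteristic (its conormal $dy^0$ pairs with the positive-definite $A^0$), the Cauchy problem with data on $\mathcal{B}_\star$ is well posed, and by finite speed of propagation together with uniqueness for homogeneous symmetric hyperbolic systems, $(\mathfrak{p}, \mathfrak{p}_{\bf A}) \equiv 0$ on a neighbourhood $\mathcal{X}\cap\mathscr{I}$ of $\mathcal{B}_\star$ in $\mathscr{I}$ if and only if $\mathfrak{p}$ and $\mathfrak{p}_{\bf A}$ vanish on $\mathcal{B}_\star$.

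It then remains to translate $\mathfrak{p}|_{\mathcal{B}_\star} = 0$ and $\mathfrak{p}_{\bf A}|_{\mathcal{B}_\star} = 0$ into the corner conditions \eqref{eq:frakp_vanish_corner} on $\underline{u}_\star$. Using $\mathfrak{t}_{ij} = -\tfrac23\mathfrak{E}_{ij}$ and $\mathfrak{y}_{ij} = -\mathfrak{H}_{ij}$ from \eqref{eq:lim_weyl_frakt} and \Cref{cor:I}, one has $\mathfrak{p}_{ij} = -\tfrac23\mathfrak{E}_{ij} + \mu\,\mathfrak{H}_{ij}$, where $(\mathfrak{E}, \mathfrak{H})$ is the electromagnetic decomposition of $V$ relative to the boundary normal $e_{\bf 3}$, whereas $\underline{u}_\star$ stores the decomposition $(E, H)$ relative to $e_{\bf 0}$. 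Relating the two decompositions of the same Weyl candidate through its algebraic symmetries and the duality relations \eqref{sym_dual_EH} gives $\mathfrak{E}_{\bf 00} = -E_{\bf 33}$, $\mathfrak{H}_{\bf 00} = -H_{\bf 33}$, $\mathfrak{E}_{\bf 0A} = \slashed{\epsilon}_{\bf A}{}^{\bf B}H_{\bf 3B}$ and $\mathfrak{H}_{\bf 0A} = -\slashed{\epsilon}_{\bf A}{}^{\bf B}E_{\bf 3B}$, whence $\mathfrak{p} = \tfrac12\left(\tfrac23 E_{\bf 33} - \mu H_{\bf 33}\right)$ and $\mathfrak{p}_{\bf A} = -\slashed{\epsilon}_{\bf A}{}^{\bf B}\left(\tfrac23 H_{\bf 3B} + \mu E_{\bf 3B}\right)$. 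Since $\slashed{\epsilon}_{\bf A}{}^{\bf B}$ is invertible, these vanish on $\mathcal{B}_\star$ exactly when \eqref{eq:frakp_vanish_corner} holds, the distinction between frame and coordinate components being immaterial here because \eqref{ode_initial_data} makes the relevant frame transformation invertible on $\mathcal{B}_\star$ and the conditions are homogeneous. Part $i)$ then follows by restricting the hypothesis $\mathfrak{p}_{ij} = 0$ to $\mathcal{B}_\star$, and part $ii)$ by running the uniqueness argument forward and combining $(\mathfrak{p}, \mathfrak{p}_{\bf A}) \equiv 0$ with $\mathfrak{p}^\dagger_{\bf AB} \equiv 0$ to obtain $\mathfrak{p}_{ij} = 0$ on $\mathcal{X}\cap\mathscr{I}$, that is \eqref{eq:bc_robin}. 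I expect the main obstacle to be this last algebraic step: carrying the signs and the antisymmetric factor $\slashed{\epsilon}$ correctly through the change of electromagnetic decomposition between the orthogonal normals $e_{\bf 0}$ and $e_{\bf 3}$, while the hyperbolicity and uniqueness of \eqref{eq:aux_system_fraku} should be routine once its principal part is identified.
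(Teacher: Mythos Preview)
Your proposal is correct and follows essentially the same approach as the paper: for $i)$ restrict $\mathfrak{p}_{ij}=0$ to $\mathcal{B}_\star$ and translate via the change of electromagnetic decomposition, and for $ii)$ use that \eqref{eq:aux_system_fraku} is a homogeneous linear symmetric hyperbolic system on $(\mathfrak{p},\mathfrak{p}_{\bf A})$ once $\mathfrak{p}^\dagger_{\bf AB}\equiv 0$. One minor slip: in your algebraic step you drop a factor $\tfrac12$, since the paper's computation gives $\mathfrak{E}_{\bf 0A}=\tfrac12\slashed{\epsilon}_{\bf A}{}^{\bf B}H_{\bf 3B}$ and $\mathfrak{H}_{\bf 0A}=-\tfrac12\slashed{\epsilon}_{\bf A}{}^{\bf B}E_{\bf 3B}$, hence $\mathfrak{p}_{\bf A}=-\tfrac12\slashed{\epsilon}_{\bf A}{}^{\bf B}\big(\tfrac23 H_{\bf 3B}+\mu E_{\bf 3B}\big)$; this does not affect the equivalence since the conditions are homogeneous and $\slashed{\epsilon}_{\bf A}{}^{\bf B}$ is invertible.
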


\begin{rems} \,
    \begin{itemize}
        \item The conditions \eqref{eq:frakp_vanish_corner} will be interpreted as geometric compatibility conditions in \Cref{lem:geometric_compatibility_conditions}.
        \item The restriction to boundary conditions with constant coefficients does not, in fact, restrict the generality of \eqref{eq:bc_robin} since it appears to be impossible to propagate it for a non-constant $\mu$. In that regard, see also the remark following \Cref{lem:BSET_robin}. \qedhere
    \end{itemize}
\end{rems}

\begin{proof} \,
    \begin{itemize}
        \item[$i)$] If the geometric boundary condition \eqref{eq:bc_robin} holds then $\mathfrak{p}=0$, $\mathfrak{p}_{\bf A}=0$ on $\mathscr{I}$. Moreover,
        \begin{align*}
            \mathfrak{t}_{\bf 00} &= - \frac{2}{3} \mathfrak{E}_{\bf 00} = - \frac{2}{3} V^{\bf 3}{}_{\bf 030} = \frac{2}{3} E_{\bf 33} \,, \\
    	   \mathfrak{y}_{\bf 00} &= - \mathfrak{H}_{\bf 00} = - (\star V)^{\bf 3}{}_{\bf 030} = H_{\bf 33} \,, \\
           \mathfrak{t}_{\bf 0A} &= - \frac{2}{3} \mathfrak{E}_{\bf 0A} = - \frac{2}{3} V^{\bf 3}{}_{\bf 03A} = - \frac{1}{3} \slashed{\epsilon}_{\bf A}{}^{\bf B} H_{\bf 3B} \,, \\
    	   \mathfrak{y}_{\bf 0A} &= - \mathfrak{H}_{\bf 0A} = - (\star V)^{\bf 3}{}_{\bf 03A} = \frac{1}{2} \slashed{\epsilon}_{\bf A}{}^{\bf B} E_{\bf 3B} \,.
        \end{align*}
        One deduces the result using the properties of the frame field $(e_{\bf a})$.

        \item[$ii)$] Let us prove that $\mathfrak{p}$ and $\mathfrak{p}_{\bf A}$ vanish identically on a neighbourhood of $\mathcal{B}_\star$ in $\mathscr{I}$. Note that \eqref{eq:aux_system_fraku} forms a homogeneous linear symmetric hyperbolic system on $(\mathfrak{p},\mathfrak{p}_{\bf A})$. Indeed, for the following ordering of the unknowns $(\mathfrak{p},\mathfrak{p}_{\bf 1},\mathfrak{p}_{\bf 2})$, the matrices with respect to the coordinates $(y^i) = (y^0,y^1,y^2)$ are given by
        \[ D^0 = D^{\bf 0} + e_{\bf 1}{}^0 D^{\bf 1} + e_{\bf 2}{}^0 D^{\bf 2} \,, \quad D^A = e_{\bf 1}{}^A D^{\bf 1} + e_{\bf 2}{}^A D^{\bf 2} \,, \]
        where
        \[ D^{\bf 0} = \begin{pmatrix}
                 2 & 0 & 0 \\
                 0 & 1 & 0 \\
                 0 & 0 & 1
                \end{pmatrix} \,, 
            \qquad
            D^{\bf 1} = \begin{pmatrix}
                0 & -1 & 0 \\
                -1 & 0 & 0 \\
                0 & 0 & 0 
                \end{pmatrix} \,,
            \qquad
            D^{\bf 2} = \begin{pmatrix}
                0 & 0 & -1 \\
                0 & 0 & 0 \\
                -1 & 0 & 0
                \end{pmatrix}
        \]
        Consequently, there exists a neighbourhood $\mathfrak{X}$ of $\mathcal{B}_\star$ in $\mathscr{I}$ such that these quantities vanish identically on $\mathfrak{X}$ if and only if they vanish initially on $\mathcal{B}_\star$, which is equivalent to \eqref{eq:frakp_vanish_corner}. It suffices to choose a neighbourhood $\mathcal{X}$ of $\mathcal{U}_\star$ in $\mathcal{W}$ such that $\mathcal{X}\cap\mathcal{B}=\mathfrak{X}$ to obtain the result. \qedhere
    \end{itemize}
\end{proof}

\subsection{The geometric initial data problem}
\label{sec:geometric_initial_data}

Having established the geometric interpretation of certain analytic boundary conditions in \Cref{sec:geometric_bc}, let us now discuss the geometric initial data. In \Cref{sec:conditions_initial_data}, the analytic initial data and the conditions imposed on it are interpreted geometrically. This also shows how to derive the analytic initial data from the geometric one. A precise formulation of the geometric initial data problem is given in \Cref{sec:formulation_gidp}.

The next sections are devoted to the (partial) resolution of the geometric initial data problem. In \Cref{sec:conformal_method}, we cite some results using an adapted version of the conformal method. \Cref{sec:unphysical_fields} is dedicated to the study of the unphysical fields. In particular, we derive a novel result, see \Cref{thm:smooth_unphysical_fields} and \Cref{cor:smooth_unphysical_fields}, which gives necessary and sufficient conditions for the smoothness of the unphysical fields, which is part of the geometric initial data problem. This extends a previous result by K\'ann\'ar \cite{K96}.

\subsubsection{Analytic and geometric initial data}
\label{sec:conditions_initial_data}

In this section, we will use the notations of \Cref{lem:propagation_constraints}. Recall that the 4-dimensional geometry was (re)introduced by \Cref{cor:geometry_back}. The following lemma analyses all the conditions, except the compatibility conditions, imposed on the analytic initial data.

\begin{lem} 
    \label{lem:analytic_to_geometric_ID}
    Let $\underline{u}_\star \in \mathcal{C}^\infty(\overline{\mathcal{U}}_\star,\R^m)$ be some initial data verifying all the conditions of \Cref{lem:propagation_constraints}. Denote by $\breve{h}$ the metric induced by $\breve{g}$ on $\mathcal{U}_\star$. Then
    \begin{itemize}
        \item[i)] $(e_{\bf i})$ is a frame field on $\mathcal{U}_\star$ such that, on $\mathcal{B}_\star$, $e_{\bf 1}$ and $e_{\bf 2}$ are tangent while $e_{\bf 3}$ is the outward-pointing unit normal with respect to $\breve{h}$,
        \item[ii)] one has the following usual relations on the connection coefficients, see \Cref{lem:connection_coeff} and \Cref{def:extrinsic_curvature}, 
        \[ \acute{\Gamma}_{\bf i}{}^{\bf 0}{}_{\bf 0} = \kappa_{\bf i} \,, \quad \acute{\Gamma}_{\bf i}{}^{\bf k}{}_{\bf 0} \delta_{\bf kj} = \acute{\Gamma}_{\bf i}{}^{\bf 0}{}_{\bf j} \,, \quad \acute{\Gamma}_{\bf i}{}^{\bf 0}{}_{\bf j} = \acute{\Gamma}_{\bf j}{}^{\bf 0}{}_{\bf i} \,, \]
    \end{itemize}
    Define the 10-tuple $(\mathcal{U}_\star,\breve{h},E,\star H,\Theta_\star,\kappa,K,0,T,Q) \in \mathscr{D}$ where the fields $E$, $H$ and $\kappa$ are the ones defined in \Cref{lem:propagation_constraints} and
        \[ K := \acute{\Gamma}_{\bf i}{}^{\bf 0}{}_{\bf j} \, \omega^{\bf i} \otimes \omega^{\bf j} \,, \quad T := U_{\bf i0} \, \omega^{\bf i} \,, \quad Q := U_{\bf ij} \, \omega^{\bf i} \otimes \omega^{\bf j} \,, \]
        $(\omega^{\bf i})$ denoting the dual frame field of $(e_{\bf i})$. Then
    \begin{itemize}
        \item[iii)] $s_\star$ is the field $\widehat{\sigma}$, defined by \eqref{eq:def_sigma_hat}, of this 10-tuple,
        \item[iv)] this 10-tuple satisfy the \eqref{eq:CVC} with $\Lambda=-3$ and
        \begin{itemize}
            \item[a)] the gauge-independent conditions
            \begin{subequations} 
                \label{eq:gauge_independent_conditions_CVC}
                \begin{align}
                \label{eq:z_vanish}
                z &= 0 \quad \text{on } \slashed{\mathscr{I}} \,, \\
                \label{eq:K_vanish}
                K_{\perp i} + \varepsilon \frac{z}{\Psi} h_{\perp i} &= 0 \quad \text{on } \slashed{\mathscr{I}} \,,
                \end{align}
            \end{subequations}
            where a $\perp$ index denotes the normal direction to $\slashed{\mathscr{I}}$, 
            \item[b)] the gauge choices
            \begin{subequations}
                \label{eq:gauge_choices_CVC}
                \begin{align}
                \label{eq:gauge_choice_Psi}
                \Psi &\geq 0 \quad \text{on } \mathcal{S} \,, \\
                \label{eq:gauge_choice_z}
                z &= 0 \quad \text{on } \mathcal{S}\setminus\slashed{\mathscr{I}} \,, \\
                \label{eq:gauge_choice_s}
                \widehat{s}_h &= 0 \quad \text{on } \slashed{\mathscr{I}} \,.
                \end{align}
            \end{subequations}
        \end{itemize}
    \end{itemize}
\end{lem}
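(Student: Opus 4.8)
The plan is to read off every assertion from three facts already secured. First, by \Cref{cor:geometry_back} all zero quantities vanish on a neighbourhood $\mathcal{W}$ of $\mathcal{U}_\star$, so $\acute{\nabla}$ is the Weyl connection $\widecheck{\nabla}$ associated to $\kappa$ with respect to $\breve{g}$, $U$ is its Schouten tensor $\widehat{L}$, $\zeta$ is $\widehat{\zeta}_g$ and $s$ is the extended Friedrich scalar $\widehat{s}_g$; moreover $(\mathcal{W},\breve{g},V,\Theta,\kappa)$ solves the (CVE), so since $\mathcal{U}_\star$ is spacelike the constraints \eqref{eq:constraints_hypersurface} hold there. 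Second, the core and additional gauge properties of \Cref{prop:construction_gauge} are available. Third, the algebraic conditions on $\underline{u}_\star$ from \Cref{lem:propagation_constraints} and \Cref{evol_solve_near} are in force. The strategy is then to match \eqref{eq:constraints_hypersurface} term-by-term with the (CVC) under the dictionary $h=\breve{h}$, $(E,M)=(E,\star H)$, $\Psi=\Theta_\star$, $K=\widehat{K}_g$, $z=\zeta_{\bf 0}$, $T=\widehat{L}_{i\perp}$, $Q=\widehat{L}_{ij}$, the only nontrivial point being the matching of $\widehat{s}_g$ with $\widehat{\sigma}$.

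For i) and ii): on $\mathcal{U}_\star$ one has $e_{\bf i}{}^0=0$ by \eqref{eq:initial_conditions_ei0} and $\det(e_{\bf i}{}^j)\neq 0$ by \eqref{eq:initial_frame_field}, so the $e_{\bf i}=e_{\bf i}{}^j\partial_j$ are tangent to $\{y^0=0\}$ and independent, hence a frame; orthonormality \eqref{eq:core_i} gives $\breve{h}(e_{\bf i},e_{\bf j})=\delta_{\bf ij}$. When $\mathcal{B}_\star\neq\varnothing$, conditions \eqref{ode_initial_data_a}--\eqref{ode_initial_data_b} give $e_{\bf A}{}^3=0$ and $e_{\bf 3}{}^3<0$ there, so $e_{\bf 1},e_{\bf 2}$ are tangent to $\mathcal{B}_\star=\{y^3=0\}$ while $e_{\bf 3}$ is the outward unit normal. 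The relations in ii) are \Cref{lem:connection_coeff} read with $e_\perp=e_{\bf 0}$ and $\varepsilon=-1$, the symmetry $\acute{\Gamma}_{\bf i}{}^{\bf 0}{}_{\bf j}=\acute{\Gamma}_{\bf j}{}^{\bf 0}{}_{\bf i}$ being that of the extrinsic curvature (\Cref{def:extrinsic_curvature}).

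The main work, and the main obstacle, is iii). I would compute $\widehat{s}_g$ from its definition by splitting $g^{\alpha\beta}=\varepsilon e_{\bf 0}^\alpha e_{\bf 0}^\beta+h^{ij}e_{\bf i}^\alpha e_{\bf j}^\beta$ with $\varepsilon=-1$. The normal piece uses $\kappa_{\bf 0}=0$ \eqref{eq:core_h}, $\widehat{L}_{\bf 0a}=0$ \eqref{eq:core_g}, $\widecheck{\Gamma}_{\bf 0}{}^{\bf b}{}_{\bf a}=0$ \eqref{eq:core_f} and $(\widehat{\zeta}_g)_{\bf 0}=-s_\star y^0$ \eqref{gauge_zeta_0}, giving $e_{\bf 0}((\widehat{\zeta}_g)_{\bf 0})=-s_\star$ and hence a contribution $\varepsilon(-s_\star)=s_\star$. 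The tangential piece, using $z=0$, reduces to $3\widehat{s}_h+\Psi\,h^{ij}(Q_{ij}-\widehat{l}_{ij})$; the trace of \eqref{eq:CVC_c}, the trace-freeness of $E$, and $h^{ij}F_{ij}(K)=\tfrac14((K_k{}^k)^2-K^{kl}K_{kl})$ from \eqref{eq:def_F_G} with $n=3$ turn this into $3\widehat{s}_h+\varepsilon\tfrac{\Psi}{4}(K^{kl}K_{kl}-(K_k{}^k)^2)$. Collecting the two pieces and using $\widehat{s}_g|_{\mathcal{U}_\star}=s_\star$ (from \eqref{gauge_s}) yields $s_\star=\widehat{s}_h-\varepsilon\tfrac{\Psi}{12}((K_k{}^k)^2-K^{kl}K_{kl})$, which is exactly $\widehat{\sigma}$ of \eqref{eq:def_sigma_hat} at $z=0$; thus $s_\star=\widehat{\sigma}$.

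Finally iv): with $\widehat{s}_g=\widehat{\sigma}$ established, the constraints \eqref{eq:constraints_hypersurface} coincide with \eqref{eq:CVC_a}--\eqref{eq:CVC_k}, so the $10$-tuple solves the (CVC). For the side conditions, $z\equiv 0$ gives \eqref{eq:z_vanish} and \eqref{eq:gauge_choice_z} at once; $\Psi=\Theta_\star\geq 0$ by \eqref{eq:core_b} gives \eqref{eq:gauge_choice_Psi}; condition \eqref{eq:K_vanish} follows from $z=0$ together with $K_{\bf 3i}=0$ on $\mathcal{B}_\star$, obtained by combining the symmetry $\acute{\Gamma}_{\bf i}{}^{\bf k}{}_{\bf 0}\delta_{\bf k3}=\acute{\Gamma}_{\bf i}{}^{\bf 0}{}_{\bf 3}$ with $\acute{\Gamma}_{\bf A}{}^{\bf 3}{}_{\bf 0}=\acute{\Gamma}_{\bf 3}{}^{\bf 3}{}_{\bf 0}=0$ from \eqref{ode_initial_data_a}--\eqref{ode_initial_data_b}; and \eqref{eq:gauge_choice_s} follows because $\Theta_\star=0$ on $\mathcal{B}_\star$ forces $\widehat{\sigma}=\widehat{s}_h$ there, while $s_\star=f_\star\Theta_\star=0$ on $\mathcal{B}_\star$ by boundedness of $f_\star$, whence $\widehat{s}_h=s_\star=0$.
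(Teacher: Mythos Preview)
Your proof is correct, but the route differs from the paper's in a way worth noting. You invoke \Cref{cor:geometry_back} to build the full 4-dimensional solution first, then restrict to $\mathcal{U}_\star$ and read off the constraints \eqref{eq:constraints_hypersurface}. The paper instead works entirely on $\mathcal{U}_\star$: it observes that the initial conditions \eqref{eq:initial_conditions_constraints} \emph{are} the frame components of the zero quantities restricted to $\mathcal{U}_\star$, which are precisely the constraint equations. For instance, the paper derives the three relations in ii) directly from $\acute{\vartheta}_{\bf i0a}=0$ and $\acute{\Sigma}_{\bf i}{}^{\bf 0}{}_{\bf j}=0$ (unwinding the definitions with $g_{\bf ab}=\eta_{\bf ab}$), whereas you quote \Cref{lem:connection_coeff}, which requires knowing that $\acute{\nabla}$ is a Weyl connection --- something you obtain only after solving the evolution. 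Both are valid; your route is shorter to state but somewhat circular in spirit (you solve forward in time to learn something about the data at time zero), while the paper's direct reading makes explicit \emph{which} analytic condition encodes \emph{which} geometric fact, which is the declared purpose of the lemma (see the remark following it). This matters later when one must go the other way and manufacture $\underline{u}_\star$ from geometric data in the proof of the main theorem.

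On the positive side, your treatment of iii) is more detailed than the paper's. The paper simply asserts that the remaining initial constraints are equivalent to the 10-tuple solving the (CVC) together with $s_\star=\widehat{\sigma}$; you actually carry out the normal/tangential split of $4\widehat{s}_g$, use the trace of \eqref{eq:constraint_c} and the explicit form of $h^{ij}F_{ij}(K)$ to arrive at $s_\star=\widehat{s}_h-\varepsilon\tfrac{\Psi}{12}((K_k{}^k)^2-K^{kl}K_{kl})=\widehat{\sigma}$. That computation is correct and fills a gap the paper leaves to the reader.
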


\begin{rems} \,
    \begin{itemize}
        \item Some of the conclusions are evident from the 4-dimensional geometry. However, it is useful to derive them solely from the analytic conditions to be sure to interpret all of these conditions.
        
        \item The gauge choices \eqref{eq:gauge_choices_CVC} for the (CVC) are the transcription of the gauge choices \eqref{conditions} for the (CVE) made at the level of initial data in the gauge construction. Moreover, the condition \eqref{eq:z_vanish} corresponds to \eqref{eq:hyp_Sigma}. Note that the gauge choice \eqref{eq:gauge_choice_Psi} (respectively \eqref{eq:gauge_choice_z}) is only possible if $\slashed{\mathscr{I}} \subset \partial\mathcal{S}$ (respectively \eqref{eq:z_vanish} holds). 
        
        \item Thanks to \eqref{eq:pseudonorm_dPsi}, the condition \eqref{eq:z_vanish} implies that $(\mathcal{U}_\star,\mathcal{B}_\star,\Theta_\star^{-2}\breve{h})$ is asymptotically hyperbolic. \qedhere
    \end{itemize}
\end{rems}

\begin{proof} \,
    \begin{itemize}
        \item[i)] The conditions \eqref{eq:initial_conditions_ei0} and \eqref{eq:initial_frame_field} correspond to $(e_{\bf i})$ being a frame field on $\mathcal{U}_\star$. Since $(e_{\bf a})$ is $\breve{g}$-orthonormal, the conditions $e_{\bf A}{}^3=0$ and $e_{\bf 3}{}^3 < 0$ on $\mathcal{B}_\star$ in \eqref{ode_initial_data} are equivalent to impose that, on $\mathcal{B}_\star$, $e_{\bf 1}$ and $e_{\bf 2}$ are tangent and $e_{\bf 3}$ is the outward-pointing unit normal with respect to $\breve{h}$.
    
        \item[ii)] The conditions $\acute{\vartheta}_{\bf i0a}=0$ and $\acute{\Sigma}_{\bf i}{}^{\bf 0}{}_{\bf j}=0$ on $\mathcal{U}_\star$ in \eqref{eq:initial_conditions_constraints} are equivalent to the relations on the connection coefficients.

        \item[iii)-iv)] The rest of the conditions in  \eqref{eq:initial_conditions_constraints}, that is to know
    \begin{alignat*}{5}
        \varkappa_{\bf ia} &= 0 \quad \text{on } \mathcal{U}_\star \,, &\qquad \mho_{\bf i}  &=0 \quad \text{on } \mathcal{U}_\star \,, &\qquad \varsigma_{\bf i} &=0 \quad \text{on } \mathcal{U}_\star \,, \\
        \acute{\vartheta}_{\bf ijk} &=0 \quad \text{on } \mathcal{U}_\star \,, &\qquad
        \aleph &=0 \quad \text{on } \mathcal{U}_\star\,, &\qquad \varpi^{\bf a}{}_{\bf bij} &= 0 \quad \text{on } \mathcal{U}_\star \,, \\
        \Pi_{\bf ij} &= 0 \quad \text{on } \mathcal{U}_\star \,, &\qquad \varrho_{\bf ija} &= 0 \quad \text{on } \mathcal{U}_\star \,, &\qquad \acute{\Sigma}_{\bf i}{}^{\bf k}{}_{\bf j} &= 0 \quad \text{on } \mathcal{U}_\star \,, \\
        \varphi_{\bf ia0} &= 0 \quad \text{on } \mathcal{U}_\star \,,
    \end{alignat*}
    are just the constraints implied by the (CVE) on $\mathcal{U}_\star$. This is equivalent to imposing that the 10-tuple defined in the lemma satisfies the (CVC) with $\Lambda=-3$ and that $s_\star$ is equal to the field $\widehat{\sigma}$ of this 10-tuple.

    Since $\Theta_\star$ is a boundary defining function of $\mathcal{B}_\star$, \eqref{eq:gauge_choice_Psi} is verified and $\slashed{\mathscr{I}} = \mathcal{B}_\star$. The fact that $z=0$ splits into \eqref{eq:z_vanish} and \eqref{eq:gauge_choice_Psi}. Since $f_\star$ is bounded from above, the field $\widehat{\sigma}$ associated to the 10-tuple vanishes on $\slashed{\mathscr{I}}$. With \eqref{eq:z_vanish} and \eqref{eq:def_sigma_hat}, one deduces \eqref{eq:gauge_choice_s}.
    
    Finally, let us look at the remaining conditions in \eqref{ode_initial_data}. By definition of $K_{ij}$ and $ii)$, the conditions $\acute{\Gamma}_{\bf A}{}^{\bf 3}{}_{\bf 0} = 0$ and $\acute{\Gamma}_{\bf 3}{}^{\bf 3}{}_{\bf 0} = 0$ on $\mathcal{B}_\star$ are equivalent to $K_{\bf 3A} = 0$ and $K_{\bf 33} = 0$ on $\mathcal{B}_\star$. Thanks to $i)$, this is the condition \eqref{eq:K_vanish} expressed with the gauge choices \eqref{eq:gauge_choices_CVC}. The condition $\zeta_{\bf A} = 0$ on $\mathcal{B}_\star$ is automatically satisfied by $i)$ and the fact that $\Theta_\star$ is a boundary defining function of $\mathcal{B}_\star$. The conditions $U_{\bf A3} = 0$ and $U_{\bf 33}+f_\star = $ on $\mathcal{B}_\star$ are ensured by \eqref{eq:CVC_j} and $iii)$. \qedhere
    \end{itemize}
\end{proof}

Let us now turn to the analytic compatibility conditions, see $iii)$ of \Cref{evol_solve_near}, in the case where $\mathcal{B}_\star \neq \varnothing$. We will only detail the geometric compatibility conditions for our geometric boundary conditions \eqref{eq:bc_robin}. Then, recall that the analytic initial data also has to satisfy \eqref{eq:frakp_vanish_corner}, see $ii)$ of \Cref{prop:partial_to_total_robin_bc}.

\begin{lem}
    \label{lem:geometric_compatibility_conditions}
    The analytic compatibility conditions and the conditions \eqref{eq:frakp_vanish_corner} can be expressed as conditions at $\mathcal{B}_\star$ on the 10-tuple of \Cref{lem:analytic_to_geometric_ID}. These form the \emph{enlarged set of compatibility conditions} for our geometric boundary conditions \eqref{eq:bc_robin}. Furthermore, the analytic compatibility conditions of order $0$ and the conditions \eqref{eq:frakp_vanish_corner} are in fact geometric, meaning that they are left invariant by the gauge transformations of the (CVC) described in \Cref{def:transfo_dim3}.
\end{lem}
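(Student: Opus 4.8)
The plan is to establish \Cref{lem:geometric_compatibility_conditions} by tracking how the analytic compatibility hierarchy and the conditions \eqref{eq:frakp_vanish_corner} transform under the dictionary already set up between the analytic initial data $\underline{u}_\star$ and the geometric $10$-tuple $(\mathcal{U}_\star,\breve{h},E,\star H,\Theta_\star,\kappa,K,0,T,Q) \in \mathscr{D}$ of \Cref{lem:analytic_to_geometric_ID}. The essential observation is that the analytic compatibility conditions of order $k$, as described in the last remark below \Cref{evol_solve_near}, are obtained by repeatedly differentiating the boundary condition \eqref{raw_bc} (with the Robin choice \eqref{eq:robin_analytic}) in the $y^0$ direction, using the evolution system \eqref{evol_system} to eliminate time derivatives, and evaluating on the corner $\mathcal{B}_\star$. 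Since the evolution system \eqref{evol_system} is expressed entirely in terms of the components of $\underline{u}_\star$, its spatial derivatives, and the gauge-fixed quantities $s_\star$, $\Theta_\star$ (themselves determined by $\widehat{\sigma}$ and $\Psi$ on the $10$-tuple via \emph{iii)} of \Cref{lem:analytic_to_geometric_ID}), every quantity appearing after the elimination of $e_{\bf 0}$-derivatives is a polynomial expression in the $10$-tuple data restricted to $\mathcal{B}_\star$, together with their tangential derivatives along $\mathcal{B}_\star$. First I would make this reduction explicit: each analytic compatibility condition, after using \eqref{evol_system} to trade $\partial_0$ for spatial operators and the zero-quantity relations, becomes a polynomial relation on the restriction to $\mathcal{B}_\star = \slashed{\mathscr{I}}$ of the $10$-tuple fields and their tangential jets, which is precisely the claimed set of conditions at $\mathcal{B}_\star$.

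Next I would treat the conditions \eqref{eq:frakp_vanish_corner} on the same footing. By \emph{i)} of \Cref{prop:partial_to_total_robin_bc} and the computations in its proof, these read $\tfrac{2}{3} E_{33} - \mu H_{33} = 0$ and $\tfrac{2}{3} H_{3A} + \mu E_{3A} = 0$ on $\mathcal{B}_\star$. Using the identifications $E_{\bf ij}$, $H_{\bf ij}$ with the restrictions of $\mathfrak{E}$, $\mathfrak{H}$ (and hence, via \eqref{eq:lim_weyl_frakt} and \Cref{cor:I}, with $\mathfrak{t}$ and $\mathfrak{y}$) one rewrites them directly as $\mathfrak{p}_{\bf 00} = 0$ and $\mathfrak{p}_{\bf 0A} = 0$ on $\mathcal{B}_\star$ for the quantity $\mathfrak{p}_{ij} := \mathfrak{t}_{ij} - \mu\,\mathfrak{y}_{ij}$. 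Because $E$ and $\star H$ are components of the $10$-tuple, these are conditions at $\mathcal{B}_\star$ on the $10$-tuple as asserted. Adjoining them to the polynomial relations obtained from the analytic compatibility hierarchy produces the \emph{enlarged set of compatibility conditions}.

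The remaining and more delicate assertion is the gauge-invariance of the order-$0$ analytic compatibility conditions and of \eqref{eq:frakp_vanish_corner}. For this I would argue as follows. The order-$0$ compatibility condition is simply the boundary condition \eqref{raw_bc} itself evaluated on $\mathcal{B}_\star$; by the chain of rewritings in \Cref{sec:tensorial_bc_distribution}--\Cref{sec:replacing_tensors}, this is equivalent to the tensorial reflective condition \eqref{bc_electric}, namely $\mathfrak{p}^\dagger_{\bf AB} = 0$ on $\mathcal{B}_\star$. Together, $\mathfrak{p}^\dagger_{\bf AB} = 0$, $\mathfrak{p}_{\bf 0A} = 0$ and $\mathfrak{p}_{\bf 00} = 0$ on $\mathcal{B}_\star$ are exactly the tangential components of the statement $\mathfrak{p}_{ij} = 0$ on the (two-dimensional) corner $\mathcal{B}_\star$; since $\mathfrak{p}_{ij} = \mathfrak{t}_{ij} - \mu\,\mathfrak{y}_{ij}$ is built from $\mathfrak{t}$ and the Cotton--York tensor $\mathfrak{y}$, both of which are genuine tensors on the conformal boundary that are conformally invariant of the correct weights (\mathfrak{t} of weight $2-n=-1$ for $n=3$ by \Cref{lem:tfrak_change}, and $\mathfrak{y}$ of weight $1$ by \Cref{prop:cottonyork}), the vanishing of $\mathfrak{p}_{ij}$ on $\mathcal{B}_\star$ is manifestly invariant under the confomorphisms $\Conf_\phi$, and invariant under the Weyl changes $\Weylchg^\parallel_\omega$, $\Weylchg^\perp_\chi$ and sign change $\Signchg$ because $\mathfrak{t}$ and $\mathfrak{y}$ depend only on the conformal class $[\mathfrak{h}]$ of the induced metric on $\mathcal{B}_\star$, which the gauge transformations leave fixed up to confomorphism. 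The main obstacle I anticipate is keeping the bookkeeping of tensorial weights and frame-versus-coordinate components consistent while passing between the analytic $\underline{u}_\star$-formulation and the geometric $10$-tuple; in particular, one must check that the weight $\mu$ enters as a genuine constant (consistent with the restriction to constant-coefficient boundary conditions noted after \Cref{prop:partial_to_total_robin_bc}) so that $\mathfrak{p}_{ij}$ is honestly conformally invariant, rather than picking up a spurious conformal factor that would break the invariance.
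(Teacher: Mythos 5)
Your proposal is correct and follows essentially the same route as the paper's (very terse) proof: the order-$0$ conditions together with \eqref{eq:frakp_vanish_corner} are exactly the components of $\mathfrak{p}_{ij}=0$ at the corner, hence conditions on the fields $E$ and $M=\star H$ of the $10$-tuple that are invariant under the transformations of \Cref{def:transfo_dim3} (since $E$ and $H$ both rescale with the same conformal weight and flip sign together under $\Signchg$), while the higher-order conditions are expressed via the $10$-tuple by eliminating time derivatives with the evolution system and the gauge properties \eqref{eq:gauge_core_properties}--\eqref{gauge_additional_properties}. One cosmetic slip: by the paper's convention $X_{\Omega^2 g}=\Omega^{-k}X_g$ the covariant $\mathfrak{t}_{ij}$ has weight $n-2=1$ (not $-1$), the same as $\mathfrak{y}_{ij}$, which is precisely why $\mathfrak{p}_{ij}=\mathfrak{t}_{ij}-\mu\,\mathfrak{y}_{ij}$ with constant $\mu$ is conformally homogeneous and its vanishing is geometric.
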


\begin{rems} \,
    \begin{itemize}
        \item One can expect the higher order compatibility conditions to be also geometric although we do not address this problem in this article.
        \item The conditions \eqref{eq:frakp_vanish_corner} are part of the geometric compatibility conditions of order $0$. These do not imply higher order compatibility conditions. Indeed, if the analytic boundary condition and the enlarged set of compatibility conditions hold then the time derivatives of all order of $\mathfrak{p}$ and $\mathfrak{p}_{\bf A}$ vanish on $\mathcal{B}_\star$ thanks to \eqref{eq:aux_system_fraku}. \qedhere
    \end{itemize}
\end{rems}

\begin{proof}
    The analytic compatibility conditions of order 0 are simply the analytic boundary conditions
    \[ 0 = \mathfrak{p}^\dagger_{\bf AB} := \mathfrak{t}^\dagger_{\bf AB} - \mu \mathfrak{y}^\dagger_{\bf AB} = \frac{2}{3} E_{\bf AB}^\dagger - \mu H^\dagger_{\bf AB} \quad \text{on } \mathscr{I} \,. \]
    These are conditions at $\mathcal{B}_\star$ on the tensors $E_{ij}$ and $M_{ijk} := (\star H)_{ijk}$ of the 10-tuple defined in \Cref{lem:analytic_to_geometric_ID}. Moreover, they are gauge-independent for the gauge transformations described in \Cref{def:transfo_dim3}. Similarly, the conditions $\mathfrak{p}=0$, $\mathfrak{p}_{\bf A}=0$ are also gauge-independent conditions on the 10-tuple at $\mathcal{B}_\star$.

    The higher order analytic compatibility conditions are obtained by combining the evolution equations and the analytic boundary conditions, see the last remark below \Cref{evol_solve_near}. Thanks to the gauge properties \eqref{eq:gauge_core_properties} and \eqref{gauge_additional_properties}, all the 4-dimensional fields can be expressed with fields in the 10-tuple.
\end{proof}

\subsubsection{Formulation of the geometric initial data problem}
\label{sec:formulation_gidp}

In light of \Cref{sec:conditions_initial_data} as well as \Cref{prop:CVC_VC}, the geometric initial data problem is formulated as follows: find a smooth Riemannian metric $\widetilde{h}_{ij}$ and a smooth symmetric 2-tensor $\widetilde{K}_{ij}$ on the interior $\mathring{\mathcal{S}}$ of a compact 3-dimensional smooth manifold $\mathcal{S}$ with non-empty boundary $\partial\mathcal{S}$ such that
\begin{itemize}
    \item[i)] $(\mathcal{S},\partial\mathcal{S},\widetilde{h}_{ij})$ is an asymptotically hyperbolic space, in the sense of \Cref{def:aH}, with smooth rescaled metrics,
    
    \item[ii)] $(\mathring{\mathcal{S}},\widetilde{h}_{ij},\widetilde{K}_{ij})$ is a solution to the \eqref{eq:VC} with $\Lambda=-3$,
    
    \item[iii)] for any boundary defining function $\Psi$ of $\partial\mathcal{S}$, any smooth covector field $\kappa$ on $\mathcal{S}$ and any smooth scalar field $z$ on $\mathcal{S}$ which vanishes on $\partial\mathcal{S}$,
    \begin{itemize}
        \item[a)] the field $P_{ij} := \Psi \widetilde{K}_{ij}$ extends smoothly to $\partial\mathcal{S}$ and verifies $P_{\perp i} = 0$ on $\partial\mathcal{S}$, 
        \item[b)] the unphysical fields defined by \eqref{CVC_def_Q}-\eqref{CVC_def_M} extend smoothly to $\partial\mathcal{S}$,
    \end{itemize}
    
    \item[iv)] on each connected component $\mathcal{C}$ of $\partial\mathcal{S}$, the corresponding geometric compatibility conditions are verified.
\end{itemize}

\begin{rems} \,
    \begin{itemize}
        \item Condition $iii)a)$ is equivalent to both the smoothness of the unphysical field $K_{ij}$ defined by \eqref{CVC_def_K} and the geometric condition \eqref{eq:K_vanish}. Note that \eqref{eq:z_vanish} is verified by the choice of $z$.
        \item The smoothness of the unphysical fields is referred as the `asymptotically simple' hypothesis by Friedrich in \cite{F95}. \qedhere
    \end{itemize}
\end{rems}

In \Cref{sec:conformal_method}, we cite certain results of Andersson and Chru\'sciel addressing $i)$-$iii)a)$, when the trace $\widetilde{K}$ is constant, by means of an adapted version of the conformal method. Under some hypotheses, K\'ann\'ar derived the conditions under which a solution given by the conformal method verifies $iii)b)$. These will be stated in \Cref{sec:unphysical_fields}, prior to the presentation of a new result extending K\'ann\'ar's theorem, see \Cref{cor:smooth_unphysical_fields}. This corollary provides the necessary and sufficient conditions for a solution to $i)$-$ii)$ (regardless of the method used to solve this problem) to satisfy $iii)$.

\subsubsection{The conformal method}
\label{sec:conformal_method}

Let us first recall the conformal method to solve the (VC) on a 3-dimensional closed smooth manifold $\widetilde{\mathcal{S}}$. The method consists in decomposing the fields $\widetilde{h}_{ij}$ and $\widetilde{K}_{ij}$ under the following form
\begin{subequations}
	\label{eq:decomp_tilde}
	\begin{align}
		\label{eq:decomp_htilde}
		\widetilde{h}_{ij} &= \theta^4 h_{ij} \,, \\
		\label{eq:decomp_Ktilde}
		\widetilde{K}_{ij} &= \frac{\widetilde{K}}{3} \theta^4 h_{ij} + \theta^4 (\mathcal{K}_h X)_{ij} + \theta^{-2} \phi_{ij} \,,
	\end{align}
\end{subequations}
where
\begin{itemize}
	\item $\theta$ is a smooth positive function on $\widetilde{\mathcal{S}}$,
	\item $h_{ij}$ is a smooth Riemannian metric on $\widetilde{\mathcal{S}}$,
	\item $\widetilde{K}$ is a smooth scalar field on $\widetilde{\mathcal{S}}$,
	\item $\mathcal{K}_h$ is the conformal Killing operator of $h$ defined by
	\[ (\mathcal{K}_h X)_{ij} := D_i X_j + D_j X_i - \frac{2}{3} D_k X^k h_{ij} \,, \]
	\item $X^i$ is a smooth vector field on $\widetilde{\mathcal{S}}$ defined up to conformal Killing fields,
	\item $\phi_{ij}$ is a \emph{TT-tensor}, that is a symmetric 2-tensor which is trace-free and divergence-free with respect to the metric $h$,	
\end{itemize}
Note that \eqref{eq:decomp_Ktilde} is an orthogonal decomposition of $\widetilde{K}_{ij}$ into its pure-trace part, its pure-divergence part and its TT part. The factors in powers of $\theta$, notably the unusual one in front of the conformal Killing operator, are designed so that the decomposition is conformally invariant, see later below. The (VC) are then equivalent to the following coupled elliptic equations on $(\theta,X^i)$
\begin{subequations}
	\label{eq:VC_conf_method}
	\begin{align}
		\label{eq:conf_scalar_wave_theta}
		\Delta_h \theta - \frac{r}{8} \theta - \frac{1}{12} \theta^5 \left( \widetilde{K}^2 - 3\Lambda \right) &= - \frac{1}{8} \theta^{-3} \left| \theta^4 \mathcal{K}_h X + \theta^{-2} \phi \right|^2_h \,, \\
		\label{eq:divAgrad_X}
		D^i \left( \theta^6 (\mathcal{K}_h X)_{ij} \right) &= \frac{2}{3} \theta^6 (d\widetilde{K})_j \,.
	\end{align}
\end{subequations}
Both the decompositions \eqref{eq:decomp_tilde} and the elliptic system \eqref{eq:VC_conf_method} are left invariant under the following transformation
\[ \left(\theta,h_{ij},\widetilde{K},X^i,\phi_{ij}\right) \mapsto \left(\omega^{-1/2} \theta,\omega^2 h_{ij},\widetilde{K},X^i,\omega^{-1} \phi_{ij}\right) \,, \]
for any smooth positive function $\omega$. The free data is thus composed of the scalar field $\widetilde{K}$ and of the class $[(h,\phi)]$ for the same equivalence relation than in \Cref{def:equivalence_relation_pair}. An important case is when $\widetilde{K}$ is constant: from \eqref{eq:divAgrad_X} one deduces that $\mathcal{K}_h X = 0$ and it just remains to solve \eqref{eq:conf_scalar_wave_theta}.

Furthermore, the same operators can be used in order to construct TT-tensors. Indeed, for any symmetric smooth 2-tensor $A_{ij}$, one can define a TT-tensor $\phi_{ij}$ under the form
\[ \phi_{ij} := (\mathcal{K}_h X)_{ij} + A_{ij} \,, \]
where $X^i$ is the solution, defined up to conformal Killing fields of $h$, of the elliptic equation
\[ D_i \left(\mathcal{K}_h\right)^{ij} = - D_i A^{ij} \,. \]

\bigbreak

Let us now turn to the cases where the initial data set intersects the conformal boundary $\mathscr{I}$. This occurs when $\Lambda < 0$ and for hyperboloidal initial data sets when $\Lambda=0$. Consider a 3-dimensional compact smooth manifold $\mathcal{S}$ with boundary $\partial\mathcal{S} \neq \varnothing$. There are a few problems when trying to apply the conformal method in this context. First, note that if $\Psi := \theta^{-2}$ is a boundary defining function of $\partial\mathcal{S}$ then $\theta$ blows up on the boundary. Thus equation \eqref{eq:conf_scalar_wave_theta} is not suitable for setting a boundary condition. With that in mind, if one chooses the unknown $\Psi$ instead then one has to deal with the following singular equation  
\begin{equation}
    \label{eq:eqdiff_Psi}
    \Psi \Delta_h \Psi + \frac{r}{4} \Psi^2 - \frac{3}{2} |d\Psi|^2_h  + \frac{1}{6} \left(\widetilde{K}^2-3\Lambda \right) = \frac{1}{4} \Psi^4 \left|\tf_{\widetilde{h}} \widetilde{K} \right|_h^2 \,. 
\end{equation}
Moreover, in our case, one expects from \eqref{CVC_def_K} that on any compact neighbourhood of a boundary point $p \in \partial\mathcal{S}$, the field $\widetilde{K}_{ij}$ verifies the following asymptotic behaviours
\begin{equation}
    \label{eq:behav_Ktilde}
    \left(\tf_{\widetilde{h}} \widetilde{K}\right)_{ij} = \grando{\Psi^{-1}} \,, \qquad\widetilde{K} = - 3z + \grando{\Psi} \,. 
\end{equation}
Note that the first asymptotic behaviour does not match with the factors in powers of $\theta$ in the decomposition \eqref{eq:decomp_Ktilde}.

These problems require to modify the method. This was done by Andersson, Chru\'sciel and Friedrich \cite{ACF92} in the case where $\Lambda \leq 0$ and $\widetilde{K}_{ij}$ is pure trace. In that case, note that the trace $\widetilde{K}$ is then constant by the momentum constraint. Their results were later extended by Andersson and Chru\'sciel \cite{AC94,AC96} to the case where $\Lambda = 0$ and $\widetilde{K}$ is a non-zero constant. Finally, K\'ann\'ar \cite{K96} transposed the theorems to $\widetilde{K}$ constant and $\Lambda < \widetilde{K}^2/3$. For our purpose, a constant trace $\widetilde{K}$ implies that it vanishes due to the geometric condition \eqref{eq:z_vanish}. We summarise the relevant\footnote{Observe that our case $(\Lambda,\widetilde{K}) = (-3,0)$ is totally equivalent to $(\Lambda,\widetilde{K}) = (0,\pm3)$.} results in the following theorem.

\begin{thm}[Andersson and Chru\'sciel \protect{\cite[Theorems 3.2, 4.7 and 2.8]{AC96}}]
    \label{thm:conformal_method}
    Let $\mathcal{S}$ be a 3-dimensional compact smooth manifold with non-empty boundary $\partial\mathcal{S}$. 
    \begin{itemize}
        \item[i)] The adapted conformal method: Let $(h_{ij},\phi_{ij})$ be a free data pair where
        \begin{itemize}
            \item[$\bullet$] $h_{ij}$ is a smooth Riemannian metric on $\mathcal{S}$,
            \item[$\bullet$] $\phi_{ij}$ is a smooth TT-tensor on the interior $\mathring{\mathcal{S}}$ for the metric $h_{ij}$ such that, for any (and thus all) boundary defining function $x$ of $\partial\mathcal{S}$, $x^2 \phi_{ij}$ extends smoothly to $\partial\mathcal{S}$ and verifies $x^2\phi_{\perp i} = 0$ on $\partial\mathcal{S}$.
        \end{itemize}
        Take a boundary defining function $x$ of $\partial\mathcal{S}$ and define the smooth Riemannian metric $\overline{h}_{ij} := x^{-2} h_{ij}$ on the interior $\mathring{\mathcal{S}}$. Then there exists a unique uniformly bounded, uniformly bounded away from zero, locally $\mathcal{C}^2$ function $\theta$ solution to
        \begin{subequations}
            \begin{alignat}{4}
                \label{eq:eqdiff_theta}
                \Delta_{\overline{h}} \theta - \frac{\overline{r}}{8} \theta - \frac{3}{4} \theta^5 &= -\frac{1}{8} \theta^{-7} \left|x^3 \phi\right|_h^2 \quad && \text{on } \mathring{\mathcal{S}} \,, \\
                \label{eq:bc_theta}
                \theta &= \left|dx\right|^{1/2}_h && \text{on } \partial\mathcal{S} \,.
            \end{alignat}
        \end{subequations}
        Even though $\Psi := x\theta^{-2}$ is polyhomogeneous and of class $\mathcal{C}^3$ on $\mathcal{S}$ for generic free data pairs, there exists large sets of non-generic free data pairs for which it is smooth up to the boundary. In that case, the fields
        \begin{subequations}
        \label{eq:physical_fields_conf_method}
        \begin{align}
            \widetilde{h}_{ij} &:= \theta^4 \, \overline{h}_{ij} = \Psi^{-2} h_{ij} \,, \\
            \widetilde{K}_{ij} &:= \theta^{-2} \left(x\phi_{ij}\right) = \Psi \phi_{ij} \,,
        \end{align}  
        \end{subequations}
        satisfy the conditions $i)$-$iii)a)$ of the geometric initial data problem as defined in \Cref{sec:formulation_gidp}.

        \item[ii)] Construction of TT-tensors: Let $h_{ij}$ be a smooth Riemannian metric on $\mathcal{S}$ and let $x$ a boundary defining function of $\partial\mathcal{S}$. For any smooth symmetric 2-tensor $A_{ij}$ on $\mathcal{S}$ such that
        \[ h^{ij} A_{ij} = 0 \quad \text{on } \mathcal{S} \,, \qquad A_{\perp i} = 0 \quad \text{on } \partial\mathcal{S} \,, \]
        there exists a polyhomogeneous vector field $X^i$ on $\mathcal{S}$ solution to
        \[ D_i \left(\mathcal{K}_h X\right)^{ij} = - D_i \left( x^{-2} A^{ij} \right) \quad \text{on } \mathcal{S} \,. \]
        The vector field $X^i$ is unique up to smooth solutions of the homogeneous equation. Then the field
        \[ \phi_{ij} := x^{-2} A_{ij} + \left(\mathcal{K}_h X\right)_{ij} \]
        is a TT-tensor on $\mathring{\mathcal{S}}$ for the metric $h_{ij}$ such that $x^2\phi_{ij}$ is polyhomogeneous, of class $\mathcal{C}^1$ on $\mathcal{S}$ and verifies $x^2\phi_{\perp i} = 0$ on $\partial\mathcal{S}$. Moreover, if $X^i$ is smooth on $\mathcal{S}$ then so is $x^2\phi_{ij}$. In particular, this is the case if $x^{-2} A_{ij}$ is smooth on $\mathcal{S}$.
    \end{itemize}
\end{thm}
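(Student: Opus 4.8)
The plan is to treat the two parts separately, since part $i)$ is a scalar nonlinear boundary value problem while part $ii)$ is a linear elliptic system; in both the bulk of the technical estimates may be deferred to Andersson and Chru\'sciel \cite{AC96}, so I will only isolate the key analytic mechanisms.

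For part $i)$, the starting point is to recognise that $\overline{h} := x^{-2} h$ makes $(\mathring{\mathcal{S}},\overline{h})$ a complete asymptotically hyperbolic manifold whose conformal infinity is $\partial\mathcal{S}$. Equation \eqref{eq:eqdiff_theta} is then a semilinear elliptic equation of Lichnerowicz type on this complete manifold, and I would establish existence and uniqueness of a positive solution by the method of sub- and super-solutions. The key ingredients are constant (or slowly varying) barriers exploiting the uniform control of the geometry near infinity: the cubic-type term $-\tfrac34\theta^5$ dominates and forces $\theta$ to stay uniformly bounded and uniformly bounded away from zero, while the maximum principle yields uniqueness. The boundary condition \eqref{eq:bc_theta} fixes the leading asymptotics of $\theta$ and is precisely the normalisation making $\Psi := x\theta^{-2}$ satisfy $h^{-1}(d\Psi,d\Psi)=1$ on $\partial\mathcal{S}$, so that condition $i)$ of the geometric initial data problem holds; indeed, on $\partial\mathcal{S}$ one has $d\Psi = \theta^{-2}dx = |dx|_h^{-1}dx$, giving the aH normalisation. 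Interior elliptic regularity then delivers local $\mathcal{C}^2$ regularity away from the boundary.

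The hard part will be the boundary regularity analysis: showing that $\Psi$ is polyhomogeneous and $\mathcal{C}^3$ for generic data, but smooth on large non-generic sets. This is governed by the indicial roots of the linearisation of \eqref{eq:eqdiff_theta} at $x=0$; generically a logarithmic term is excited at the first obstruction order, and smoothness holds exactly when its coefficient vanishes. I would carry this out by constructing a formal polyhomogeneous expansion of $\Psi$ in powers of $x$ and $\log x$ (in the sense of \Cref{sec:polyhomogeneous_functions}), feeding \eqref{eq:eqdiff_theta} order by order, then upgrading the formal solution to a genuine one by weighted H\"older or Sobolev estimates adapted to the hyperbolic end. Once smoothness of $\Psi$ is secured, the remaining conditions are direct: $ii)$ is the content of the conformal method, namely \eqref{eq:conf_scalar_wave_theta}--\eqref{eq:divAgrad_X} specialised to $\widetilde{K}=0$, for which $\mathcal{K}_h X = 0$ by the momentum constraint; and $iii)a)$ follows from the decomposition \eqref{eq:decomp_Ktilde} together with the hypothesis $x^2\phi_{\perp i}=0$ on $\partial\mathcal{S}$, which transfers to $P_{\perp i}=0$ for $P_{ij}=\Psi\widetilde{K}_{ij}=x\phi_{ij}$.

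For part $ii)$, the operator $X \mapsto D_i(\mathcal{K}_h X)^{ij}$ is the vector Laplacian associated to the conformal Killing operator, a second-order, formally self-adjoint, elliptic system whose kernel on a closed manifold consists of conformal Killing fields. On $\mathcal{S}$ I would solve $D_i(\mathcal{K}_h X)^{ij}=-D_i(x^{-2}A^{ij})$ by Fredholm theory in the appropriate weighted spaces, the right-hand side landing in the correct weighted class precisely because $A_{\perp i}=0$ on $\partial\mathcal{S}$ cancels the would-be singular boundary contribution. Uniqueness modulo the kernel is immediate from self-adjointness, and the polyhomogeneity and $\mathcal{C}^1$ regularity of $x^2\phi$, as well as the promotion to smoothness when $X$ (equivalently $x^{-2}A$) is smooth, again follow from the indicial analysis of the same operator. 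That $\phi$ is trace-free is built into $\mathcal{K}_h$ and the trace-freeness of $A$, divergence-freeness is the equation itself, and $x^2\phi_{\perp i}=0$ on $\partial\mathcal{S}$ is inherited from $A_{\perp i}=0$.
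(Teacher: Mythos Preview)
The paper does not give its own proof of this theorem: it is stated as a citation of results from Andersson and Chru\'sciel \cite{AC96} (specifically their Theorems 3.2, 4.7 and 2.8, with the transposition to $\Lambda<0$ credited to K\'ann\'ar \cite{K96}), and no proof environment follows the statement. So there is no ``paper's own proof'' to compare against.

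Your sketch is a reasonable outline of the analytic mechanisms behind those results --- sub/super-solutions and the maximum principle for the Lichnerowicz equation on a complete asymptotically hyperbolic end, indicial-root analysis governing the polyhomogeneous expansion and the obstruction to smoothness, and weighted Fredholm theory for the conformal vector Laplacian --- and is broadly consistent with how \cite{AC96} proceeds. One small correction: your verification of the aH normalisation is slightly off, since $d\Psi = \theta^{-2}\,dx - 2x\theta^{-3}\,d\theta$, and on $\partial\mathcal{S}$ the second term vanishes only because $x=0$ there, not automatically; the conclusion $h^{-1}(d\Psi,d\Psi)=1$ on $\partial\mathcal{S}$ still follows from \eqref{eq:bc_theta}, but the intermediate equality $d\Psi = \theta^{-2}dx$ should be stated as holding only on the boundary. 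Also, for the claim $x^2\phi_{\perp i}=0$ in part $ii)$, you assert it is ``inherited from $A_{\perp i}=0$'', but this requires checking that $x^2(\mathcal{K}_h X)_{\perp i}$ vanishes on $\partial\mathcal{S}$ as well, which is part of the boundary regularity analysis for $X$ and not automatic.
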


\subsubsection{On the unphysical fields}
\label{sec:unphysical_fields}

Let us now focus on the condition $iii)$ of the geometric initial data problem defined in \Cref{sec:formulation_gidp}. For fields arising from the conformal method as described in \Cref{thm:conformal_method}, the following theorem ensures that $iii)$ holds under some hypotheses.

\begin{thm}[K\'ann\'ar \protect{\cite[Theorem 2]{K96}}]
    \label{thm:Kannar}
    Let $\mathcal{S}$ be a (3-dimensional compact) smooth manifold (with non-empty boundary) diffeomorphic to the closed unit ball of $\R^3$. Let $h_{ij}$ be a smooth Riemannian metric on $\mathcal{S}$ and let $S_{ij}$ be a smooth trace-free symmetric 2-tensor on $\mathcal{S}$. Take $x$ a boundary defining function of $\partial\mathcal{S}$ such that
    \begin{itemize}
        \item $\left|dx\right|_h = 1$ in a neighbourhood of $\partial\mathcal{S}$,
        \item $\overline{r} = -6 + \grando{x^3}$ where $\overline{r}$ is the scalar curvature of $\overline{h}_{ij} := x^{-2} h_{ij}$.
    \end{itemize}
    If the restriction of $S_{ij}$ in the tangential directions to $\partial\mathcal{S}$ is pure-trace, that is
    \begin{equation}
        \label{eq:Sdagger}
        S^\dagger_{AB} = 0 \quad \text{on } \partial\mathcal{S} \,,
    \end{equation}
    and if $\partial\mathcal{S}$ is totally geodesic for the Levi-Civita connection $D$ of $h$, that is
    \begin{equation}
        \label{eq:totally_geod}
        \slashed{\mathfrak{K}}_{AB} = 0 \,,
    \end{equation}
    then
    \begin{itemize}
        \item[i)] the TT-tensor $\phi_{ij}$, constructed by $ii)$ of \Cref{thm:conformal_method} out of $h_{ij}$ and $x$ and $A_{ij} := x S_{ij}$, is such that $x\phi_{ij}$ is smooth on $\mathcal{S}$,
        \item[ii)] the conformal factor $\Psi := x\theta^{-2}$, constructed by $i)$ of \Cref{thm:conformal_method} out of $(h_{ij},\phi_{ij})$ and $x$, is smooth on $\mathcal{S}$,
        \item[iii)] the consequent fields $\widetilde{h}_{ij}$, $\widetilde{K}_{ij}$ defined by \eqref{eq:physical_fields_conf_method} satisfy $iii)b)$ of the geometric initial data problem, see \Cref{sec:formulation_gidp}, in addition of $i)$-$iii)a)$ already ensured by \Cref{thm:conformal_method}.
    \end{itemize}
\end{thm}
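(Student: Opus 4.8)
\textbf{Proof strategy for \Cref{thm:Kannar}.}

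The plan is to follow the three-part logical chain of the statement, proving the smoothness of each object in turn and feeding it into the construction of the next, using the explicit asymptotic expansions forced by the geometric hypotheses \eqref{eq:Sdagger} and \eqref{eq:totally_geod}. The crucial structural fact is that both the elliptic equations of the adapted conformal method (\Cref{thm:conformal_method}) are \emph{a priori} only polyhomogeneous, so smoothness up to $\partial\mathcal{S}$ is a matter of showing that all the obstructing logarithmic terms in the asymptotic expansion vanish. First I would set up a boundary normal (geodesic) coordinate system adapted to $x$, writing $h = dx^2 + \slashed{h}(x)$ near $\partial\mathcal{S}$, which is legitimate since $|dx|_h = 1$ there. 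The hypothesis \eqref{eq:totally_geod} that $\partial\mathcal{S}$ is totally geodesic means precisely that $\partial_x \slashed{h}_{AB}$ vanishes on $\partial\mathcal{S}$, so the transverse expansion of $h$ contains no odd term at first order; combined with $\overline{r} = -6 + \grando{x^3}$, this is exactly the type of parity structure that suppresses the first potential log term, in close analogy with the Fefferman--Graham mechanism of \Cref{thm:FG_exp} (where the absence of a first-order term makes the boundary umbilical).

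For step $i)$, I would analyse the equation $D_i(\mathcal{K}_h X)^{ij} = -D_i(x^{-2}A^{ij})$ of $ii)$ of \Cref{thm:conformal_method} with $A_{ij} = xS_{ij}$, so the source is $D_i(x^{-1}S^{ij})$. The hypothesis \eqref{eq:Sdagger}, that the tangential trace-free part $S^\dagger_{AB}$ vanishes on the boundary, is what allows $x^{-1}S_{ij}$ to be controlled: it forces the dangerous leading behaviour of the source to sit in the directions where the conformal Killing Laplacian $D_i(\mathcal{K}_h\,\cdot\,)^{ij}$ is invertible at the boundary, so that $X$ is not merely polyhomogeneous but smooth. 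I would make this precise by the standard indicial-root computation for the operator $D_i \mathcal{K}_h$ in the coordinates above, checking that the source's expansion hits no indicial root and hence produces no $\log$. Granting the smoothness of $X$, the conclusion that $x\phi_{ij} = x^{-1}A_{ij} + x(\mathcal{K}_h X)_{ij} = S_{ij} + x(\mathcal{K}_h X)_{ij}$ is smooth, and satisfies $x^2\phi_{\perp i} = 0$ on $\partial\mathcal{S}$, follows directly from $ii)$ of \Cref{thm:conformal_method} together with this upgraded regularity.

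For step $ii)$ I would feed the now-smooth pair $(h_{ij}, \phi_{ij})$ into the Lichnerowicz-type equation \eqref{eq:eqdiff_theta}--\eqref{eq:bc_theta}. Writing $\Psi = x\theta^{-2}$ and translating \eqref{eq:eqdiff_theta} into the singular equation \eqref{eq:eqdiff_Psi} for $\Psi$ directly, the boundary condition $\theta = |dx|_h^{1/2} = 1$ becomes $\Psi$ being a genuine boundary defining function with $|d\Psi|_h = 1$ on $\partial\mathcal{S}$; I would then run the indicial analysis of \eqref{eq:eqdiff_Psi} term by term, using both the parity of $h$ from \eqref{eq:totally_geod}, the curvature normalisation $\overline r = -6 + \grando{x^3}$, and the smoothness of $x^2\phi_{ij}$ with vanishing normal component, to verify that the coefficient of the first admissible $\log$ term vanishes and hence that $\Psi$ is smooth rather than merely $\mathcal{C}^3$ and polyhomogeneous. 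Finally step $iii)$ is essentially bookkeeping: with $\Psi$ smooth and $P_{ij} := \Psi\widetilde K_{ij} = x\phi_{ij}\cdot(\text{smooth})$ smooth and satisfying $P_{\perp i} = 0$, I would invoke \Cref{thm:smooth_unphysical_fields} (the necessary-and-sufficient criterion) to conclude that the full collection of unphysical fields $K,Q,E,T,M$ of \eqref{CVC_def} extends smoothly, giving $iii)b)$. The main obstacle I anticipate is step $ii)$: the indicial-root bookkeeping for the fully nonlinear \eqref{eq:eqdiff_Psi} is delicate because the nonlinear term $\tfrac14\Psi^4|\tf_{\widetilde h}\widetilde K|_h^2$ couples the expansion of $\Psi$ to that of $\phi$, so one must carefully confirm that the vanishing of $S^\dagger_{AB}$ and of the extrinsic curvature jointly kill \emph{every} log coefficient up to the relevant order, rather than just the first — this is precisely the refinement over K\'ann\'ar's original computation that the geometric hypotheses are engineered to provide.
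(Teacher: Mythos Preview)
The paper treats \Cref{thm:Kannar} as a cited result from \cite{K96} and does not give a self-contained proof of parts $i)$ and $ii)$; your indicial-root outline for those parts is the natural approach and plausibly matches K\'ann\'ar's original argument, but there is nothing in the present paper to compare it against.

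For part $iii)$, the paper does give an argument, in the first remark following \Cref{cor:smooth_unphysical_fields}, and your plan is close but misses the decisive computation. The paper's route is: from $|dx|_h=1$ near $\partial\mathcal{S}$ and the boundary condition \eqref{eq:bc_theta} one has $\theta=1+\grando{x^k}$; feeding this into \eqref{eq:eqdiff_theta} and using $\overline r=-6+\grando{x^3}$ together with $|x^3\phi|_h^2=\grando{x^4}$ forces $k=3$, whence $|d\Psi|_h^2-1=\grando{\Psi^3}$; then \eqref{eq:eqdiff_Psi} gives $s_h=\grando{\Psi^2}$, so in particular $s_h=0$ on $\partial\mathcal{S}$. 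With $s_h=0$ on the boundary, the conditions \eqref{eq:normal_derivative_Pdagger}--\eqref{eq:umbilical_condition} of \Cref{thm:smooth_unphysical_fields} reduce exactly to K\'ann\'ar's hypotheses \eqref{eq:Sdagger}--\eqref{eq:totally_geod}, and \Cref{cor:smooth_unphysical_fields} then yields $iii)b)$. Your step $iii)$ jumps straight to invoking \Cref{thm:smooth_unphysical_fields} without checking its hypotheses; you never verify \eqref{eq:normal_derivative_Pdagger}--\eqref{eq:umbilical_condition}, and in particular you do not observe that the point of K\'ann\'ar's gauge is precisely to make $s_h$ vanish on $\partial\mathcal{S}$ so that those conditions collapse to \eqref{eq:Sdagger}--\eqref{eq:totally_geod}. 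That is the bridge you are missing.
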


\begin{rem}
    The choice of $x$ in \Cref{thm:Kannar} is more restrictive than the gauge in the original version of the theorem \cite[Theorem 2]{K96}. However, this is actually the choice used in K\'ann\'ar's proof. This makes the comparison with \Cref{thm:smooth_unphysical_fields} easier.
\end{rem}

\Cref{thm:Kannar} is quite restrictive for two reasons: the topology of the manifold is completely imposed and the field $\widetilde{K}_{ij}$ is more regular than expected since it is smooth up to the boundary. In what follows, we derive the necessary and sufficient conditions in the general case for a solution to $i)$-$ii)$ of the geometric initial data problem to satisfy $iii)$. Let us start with a technical lemma.

\begin{lem}
    \label{lem:momentum_asymp}
    Let $(\mathcal{S},\slashed{\mathfrak{I}},\widetilde{h})$ be a 3-dimensional asymptotically hyperbolic space with smooth rescaled metrics and let $\widetilde{K}_{ij}$ be a smooth symmetric 2-tensor field on $\mathcal{S}\setminus\slashed{\mathfrak{I}}$ verifying the momentum constraint
    \[ \widetilde{D}_i \widetilde{K}^i{}_j - \widetilde{D}_j \widetilde{K}^i{}_i = 0 \quad \text{on } \mathcal{S}\setminus\slashed{\mathfrak{I}} \,. \]
    For any boundary defining function $\Psi$ of $\slashed{\mathfrak{I}}$, if the field $P_{ij} := \Psi \widetilde{K}_{ij}$ extends smoothly to $\slashed{\mathfrak{I}}$ then it verifies
    \begin{subequations}
        \begin{alignat}{9}
            \label{eq:asymptotic_Paperp}
            \Psi^{-1} P_{A\perp} &= \slashed{\mathfrak{D}}_A P_B{}^B - \slashed{\mathfrak{D}}_B P^B{}_A &\quad \text{on } \slashed{\mathfrak{I}} \,, \\
            \label{eq:asymptotic_Pperpperp}
            \Psi^{-1} P_{\perp\perp} &= e_\perp \left( P \right) - \slashed{\mathfrak{K}}^{AB} P_{AB} &\quad \text{on } \slashed{\mathfrak{I}} \,,
        \end{alignat}
    \end{subequations}
    where
    \begin{itemize}
        \item[$\bullet$] $\slashed{\mathfrak{h}}$ is the metric induced on (a connected component of) $\slashed{\mathfrak{I}}$ by the rescaled metric $h := \Psi^2 \widetilde{h}$, $\slashed{\mathfrak{D}}$ is the Levi-Civita connection of $\slashed{\mathfrak{h}}$, $\slashed{\mathfrak{K}}$ is the extrinsic curvature on (a connected component of) $\slashed{\mathfrak{I}}$ of the Levi-Civita connection $D$ of $h$,
        \item[$\bullet$] $\perp$ denotes the outward-pointing normal of $\slashed{\mathfrak{I}}$ with respect to $h$ while $A,B$ are tangential indices,
        \item[$\bullet$] $P := h^{ij} P_{ij}$ is the trace of $P_{ij}$ with respect to $h$.
    \end{itemize}
\end{lem}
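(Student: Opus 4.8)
The plan is to convert the momentum constraint, which is written for the physical Levi-Civita connection $\widetilde{D}$ of $\widetilde{h}$, into a regular equation for the rescaled field $P_{ij}=\Psi\widetilde{K}_{ij}$ and the Levi-Civita connection $D$ of $h=\Psi^2\widetilde{h}$, and then to read off the normal components of $P$ on $\slashed{\mathfrak{I}}$ by projecting onto an adapted frame. First I would record that, viewed as Weyl connections for the conformal class $[h]$, the connection $\widetilde{D}$ is associated to the covector $-d\ln\Psi$ with respect to $h$ (a direct consequence of $\widetilde{D}_a h_{bc}=2(d\ln\Psi)_a h_{bc}$), while $D$ is associated to $0$. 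Then \Cref{prop_transition_weyl_conn} gives the transition tensor $Q_a{}^c{}_b=-S_{ab}{}^{cd}\omega_d=-(\delta_a^c\omega_b+\delta_b^c\omega_a-h_{ab}\omega^c)$ with $\omega:=d\ln\Psi$. Substituting $\widetilde{K}_{ij}=\Psi^{-1}P_{ij}$ and $\widetilde{K}=\widetilde{h}^{ij}\widetilde{K}_{ij}=\Psi P$ (where $P:=h^{ij}P_{ij}$) into $\widetilde{D}^i\widetilde{K}_{ij}-\widetilde{D}_j\widetilde{K}=0$, all the $\Psi$-weights and the three-dimensional traces of $S$ combine so that, after factoring out a common $\Psi$, the constraint reduces to the identity
\[ D_iP^i{}_j-D_jP = 2\,\Psi^{-1}(d\Psi)_i P^i{}_j \quad\text{on }\mathcal{S}\setminus\slashed{\mathfrak{I}}. \]
Call this identity $(\ast)$.

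The right-hand side of $(\ast)$ is a priori singular, but its left-hand side is smooth up to $\slashed{\mathfrak{I}}$ by hypothesis; hence $\Psi^{-1}(d\Psi)_iP^i{}_j$ extends smoothly, which already forces $(d\Psi)_iP^i{}_j=\grando{\Psi}$, i.e.\ $P_{\perp j}=0$ on $\slashed{\mathfrak{I}}$. To pass to the geometry of $\slashed{\mathfrak{I}}$ I would use that, for \emph{any} boundary defining function, $|d\Psi|_h=1$ on $\slashed{\mathfrak{I}}$: writing $\Psi=\Theta x$ with $\Theta>0$ smooth (via \Cref{lemma:bdf}) for a boundary defining function $x$ realising condition $ii)$ of \Cref{def:aH}, one checks $h^{-1}(d\Psi,d\Psi)=(x^2\widetilde{h})^{-1}(dx,dx)=1$ at the boundary. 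Consequently the outward unit normal is $e_\perp=-\grad_h\Psi/|d\Psi|_h$, which yields both $(d\Psi)_iP^i{}_j=-|d\Psi|_h\,P_{\perp j}$ and $e_\perp(\Psi)=-|d\Psi|_h$; evaluating $(\ast)$ on $\slashed{\mathfrak{I}}$ thus gives $\Psi^{-1}P_{\perp j}=\tfrac12\bigl(D_jP-D_iP^i{}_j\bigr)$.

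The remaining work is to expand $D_iP^i{}_j$ and $D_jP$, for $j$ tangential and for $j=\perp$, in a frame adapted to $\slashed{\mathfrak{I}}$, using the Levi-Civita connection coefficients from \Cref{lem:connection_coeff} (here $\widehat{\kappa}=0$ and $\varepsilon=+1$, so $\Gamma_A{}^\perp{}_B=\slashed{\mathfrak{K}}_{AB}$ and $\Gamma_A{}^B{}_\perp=-\slashed{\mathfrak{K}}_A{}^B$). Since $P_{\perp\perp}$ and $P_{\perp B}$ vanish on $\slashed{\mathfrak{I}}$, every purely algebraic occurrence of a normal component drops, and the tangential derivatives collapse to the induced objects: $D_jP|_{\slashed{\mathfrak{I}}}=\slashed{\mathfrak{D}}_jP_B{}^B$ for tangential $j$, the tangential part of the divergence becomes $\slashed{\mathfrak{D}}_BP^B{}_A$, and the mixed term produces the Weingarten contraction $\slashed{\mathfrak{K}}^{AB}P_{AB}$ in the normal equation.

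The main obstacle, and the only genuinely delicate point, is that this expansion leaves two terms that are not manifestly controlled: the normal derivative $e_\perp(P_{\perp j})$ and the tangential acceleration $a^BP_{Bj}$ with $a:=D_{e_\perp}e_\perp$. I expect to dispose of both on $\slashed{\mathfrak{I}}$. For the first, writing $P_{\perp j}=\Psi\,(\Psi^{-1}P_{\perp j})$ and using $e_\perp(\Psi)=-1$ on $\slashed{\mathfrak{I}}$ gives the self-referential relation $e_\perp(P_{\perp j})=-\Psi^{-1}P_{\perp j}$ there; feeding this back into $\Psi^{-1}P_{\perp j}=\tfrac12(\dots)$ is exactly what absorbs the spurious factor $\tfrac12$. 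For the second, a short computation for $\widehat{n}=\grad_h\Psi/|d\Psi|_h$ gives $a^B=|d\Psi|_h^{-1}\,e_B(|d\Psi|_h)$, which vanishes on $\slashed{\mathfrak{I}}$ because $|d\Psi|_h\equiv1$ along the whole boundary, so its tangential derivative is zero. Combining these two facts yields \eqref{eq:asymptotic_Paperp} from the tangential case and \eqref{eq:asymptotic_Pperpperp} from the case $j=\perp$.
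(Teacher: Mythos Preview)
Your proposal is correct and follows essentially the same route as the paper: rewrite the momentum constraint as $\Psi(D_iP^i{}_j-D_jP)=2(d\Psi)^iP_{ij}$, read off $P_{\perp j}=0$ on $\slashed{\mathfrak{I}}$, then decompose in an adapted frame and use $e_\perp(P_{\perp j})=-\Psi^{-1}P_{\perp j}$ to absorb the factor $\tfrac12$. Your explicit treatment of the acceleration term $a^B=\Gamma_\perp{}^B{}_\perp$ (via $e_B(|d\Psi|_h)=0$ on $\slashed{\mathfrak{I}}$) is a point the paper passes over in silence, but the argument is otherwise the same.
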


\begin{rems} \,
    \begin{itemize}
        \item If $P_{ij} := \Psi \widetilde{K}_{ij}$ extends smoothly to $\slashed{\mathfrak{I}}$ then the trace $\widetilde{K} := \widetilde{h}^{ij} \widetilde{K}_{ij}$ can be smoothly rescaled by $\Psi^{-1}$. Indeed, one has $\Psi^{-1} \widetilde{K} = P \in \mathcal{C}^\infty(\mathcal{S},\R)$.

        \item Equations \eqref{eq:asymptotic_Paperp}-\eqref{eq:asymptotic_Pperpperp} imply that $P_{i\perp} = 0$ on $\slashed{\mathfrak{I}}$. Since $P_{ij}$ is smooth up to the conformal boundary and by application of \Cref{lem:dividing_bdf}, $\Psi^{-1} P_{i\perp}$ are well defined on $\slashed{\mathfrak{I}}$, their value being given by the equations. \qedhere
    \end{itemize}
\end{rems}

\begin{proof}
    By definition of $P_{ij}$ and \Cref{prop_transition_weyl_conn}, one has
    \[ 0 = \widetilde{D}_i \widetilde{K}^i{}_j - \widetilde{D}_j \widetilde{K}^i{}_i = \Psi \left( D_i P^i{}_j - D_j P \right) - 2 P_{jk} h^{kl} (d\Psi)_l \quad \text{on } \mathcal{S}\setminus\slashed{\mathfrak{I}} \,. \]
    Since $P_{ij}$ and $h_{ij}$ are smooth on the whole of $\mathcal{S}$, this equation extends up to $\slashed{\mathfrak{I}}$. Take a frame field $(e_{\bf i}) = (e_\perp,(e_{\bf A}))$ adapted to the conformal boundary $\slashed{\mathfrak{I}}$ (see \Cref{sec:adapted_frame_fields}) with $e_\perp$ the outward-pointing $h$-unit normal. Evaluating the above equation on $\slashed{\mathfrak{I}}$ yields
    \begin{equation}
        \label{eq:Piperp_vanish}
        P_{\bf i\perp} = 0 \quad \text{on } \slashed{\mathfrak{I}} \,.
    \end{equation}
    Now, one can first divide the equation by $\Psi$ before evaluating on $\slashed{\mathfrak{I}}$. This gives
    \[ 2 \Psi^{-1} P_{\bf j\perp} = e_{\bf j}(P) - D_{\bf i} P^{\bf i}{}_{\bf j} \quad \text{on } \slashed{\mathfrak{I}} \,. \]
    Using \eqref{eq:Piperp_vanish}, one has
    \begin{align*}
        e_{\bf A}(P) &= e_{\bf A}(P_{\bf B}{}^{\bf B}) = \slashed{\mathfrak{D}}_{\bf A} P_{\bf B}{}^{\bf B} \,, \\
        D_{\bf i} P^{\bf i}{}_{\bf A} &= D_\perp P_{\perp \bf A} + D_{\bf B} P^{\bf B}{}_{\bf A} = - \Psi^{-1} P_{\bf A\perp} + \slashed{\mathfrak{D}}_{\bf B} P^{\bf B}{}_{\bf A} \,, \\
        D_{\bf i} P^{\bf i}{}_\perp &= D_\perp P_{\perp\perp} + D_{\bf B} P^{\bf B}{}_\perp = - \Psi^{-1} P_{\perp\perp} - \slashed{\mathfrak{K}}_{\bf B}{}^{\bf A} P^{\bf B}{}_{\bf A} \,.
    \end{align*}
    Hence equations \eqref{eq:asymptotic_Paperp}-\eqref{eq:asymptotic_Pperpperp}.
\end{proof}

\begin{thm}
    \label{thm:smooth_unphysical_fields}
    Let $(\mathcal{S},\slashed{\mathfrak{I}},\widetilde{h})$ be a 3-dimensional aH space with smooth rescaled metrics and let $\widetilde{K}_{ij}$ be a smooth symmetric 2-tensor field on $\mathcal{S}\setminus\slashed{\mathfrak{I}}$ such that $(\mathcal{S}\setminus\slashed{\mathfrak{I}},\widetilde{h},\widetilde{K})$ is a solution of the \eqref{eq:VC} for $\Lambda = -3$. Then the following two statements are equivalent
    \begin{itemize}
        \item[i)] for any (and thus all) boundary defining function $\Psi$ of $\slashed{\mathfrak{I}}$
        \begin{itemize}
            \item[a)] the field $P_{ij} := \Psi \widetilde{K}_{ij}$ extends smoothly to $\slashed{\mathfrak{I}}$ and verifies
            \begin{equation}
                \label{eq:normal_derivative_Pdagger}
                \left( D_\perp P\right)^\dagger_{AB} = s_h P_{AB}^\dagger \quad \text{on } \slashed{\mathfrak{I}} \,,
            \end{equation}
            \item[b)] all connected components $\mathfrak{S}$ of $\slashed{\mathfrak{I}}$ are umbilical with
            \begin{equation}
                \label{eq:umbilical_condition}
                \slashed{\mathfrak{K}}_{AB} = s_h \;  \slashed{\mathfrak{h}}_{AB} \,,
            \end{equation}
        \end{itemize}
        where
        \begin{itemize}
            \item[$\bullet$] $D$ is the Levi-Civita connection of the rescaled metric $h := \Psi^2 \widetilde{h}$, $s_h$ is the non-extended 3-dimensional Friedrich scalar field of $h$ defined by \eqref{eq:def_Friedrich_scalar_dim3_nonextended},
            \item[$\bullet$] $\slashed{\mathfrak{h}}$ is the metric induced by $h$ on $\mathfrak{S}$, $\slashed{\mathfrak{D}}$ is the Levi-Civita connection of $\slashed{\mathfrak{h}}$, $\slashed{\mathfrak{K}}$ is the extrinsic curvature of $D$ on $\mathfrak{S}$,
            \item[$\bullet$] $\perp$ denotes the outward-pointing normal of $\slashed{\mathfrak{I}}$ with respect to $h$ while $A,B$ are tangential indices,
            \item[$\bullet$] $X^\dagger_{AB}$ is the trace-free part of the restriction of a 2-tensor $X_{ij}$ on $\slashed{\mathfrak{I}}$,
        \end{itemize}
        
        \item[ii)] for any boundary defining function $\Psi$ of $\slashed{\mathfrak{I}}$, any smooth covector field $\kappa$ on $\mathcal{S}$ and any smooth scalar field $z$ on $\mathcal{S}$ which vanishes on $\slashed{\mathfrak{I}}$, one has
        \begin{itemize}
            \item[a)] the fields $K_{ij}$, $Q_{ij}$, $E_{ij}$, $T_i$, $M_{ijk}$ defined by \eqref{CVC_def} extend smoothly to $\slashed{\mathfrak{I}}$,
            \item[b)] the 10-tuple $(\mathcal{S},h,E,M,\Psi,\kappa,K,z,T,Q)$ verifies the geometric conditions \eqref{eq:gauge_independent_conditions_CVC}.
        \end{itemize}
    \end{itemize}
\end{thm}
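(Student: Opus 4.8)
The plan is to fix an arbitrary boundary defining function $\Psi$ and reduce, by the gauge freedom, to the reference gauge $\kappa=0$, $z=0$, in which each unphysical field becomes transparent. First I would observe that both statements behave well under the transformations of \Cref{def:transfo_dim3}: since $\Conf_\phi$, $\Weylchg_\omega^\parallel$, $\Weylchg_\chi^\perp$ and $\Signchg$ act on the $10$-tuple by adding smooth tensors or multiplying by smooth nowhere-vanishing functions, they preserve smoothness up to $\slashed{\mathfrak{I}}$; moreover every admissible pair $(\kappa,z)$ with $z$ vanishing on $\slashed{\mathfrak{I}}$ is reached from $(0,0)$ by $\Weylchg_\kappa^\parallel$ followed by $\Weylchg_\chi^\perp$ with $\chi:=z/\Psi$ smooth by \Cref{lem:dividing_bdf}. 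Hence it suffices to prove the equivalence for $\kappa=0$, $z=0$; the full ``for any $(\kappa,z)$'' in $ii)$ and the ``for any (thus all) $\Psi$'' in $i)$ then follow, the latter from the conformal covariance recorded around \Cref{lemma:bdf}. In this gauge $\widehat D=D$, $\widehat\zeta_h=d\Psi$, by \eqref{eq:def_sigma_hat} $\widehat\sigma=s_h-\varepsilon\frac{\Psi}{12}\big((\operatorname{tr}P)^2-|P|^2\big)$, and \eqref{CVC_def_K} reduces to $K_{ij}=P_{ij}$.

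Each field in \eqref{CVC_def} has the shape (smooth tensor)$+\Psi^{-1}$(numerator), so by \Cref{lem:dividing_bdf} it extends smoothly iff its numerator vanishes on $\slashed{\mathfrak{I}}$, and establishing the biconditional field-by-field will give both implications at once. The field $K=P$ is smooth as soon as the first half of $i)a)$ holds, and the momentum constraint, through \Cref{lem:momentum_asymp}, forces $P_{i\perp}=0$ on $\slashed{\mathfrak{I}}$; in the reference gauge this is exactly the content of \eqref{eq:gauge_independent_conditions_CVC}, since \eqref{eq:K_vanish} reads $K_{\perp i}=P_{\perp i}=0$ and \eqref{eq:z_vanish} holds by choice. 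The numerator of $T$ in \eqref{CVC_def_T} is $-\varepsilon P_i{}^j(d\Psi)_j$, proportional to $P_{i\perp}$ on $\slashed{\mathfrak{I}}$, hence vanishes, so $T$ is automatically smooth. Thus $K$, $T$ and \eqref{eq:gauge_independent_conditions_CVC} impose nothing beyond the smoothness of $P$.

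For $Q$ the numerator is $-\operatorname{Hess}\Psi_{ij}+s_h h_{ij}$, the $\Psi$-linear piece of $\widehat\sigma$ being manifestly smooth. I would decompose it on $\slashed{\mathfrak{I}}$ in an adapted frame. The mixed component $\operatorname{Hess}\Psi(e_A,e_\perp)$ vanishes because $|d\Psi|_h=1$ along $\slashed{\mathfrak{I}}$ (the aH normalisation), whence $\operatorname{Hess}\Psi(\cdot,\nabla\Psi)=\tfrac12 d|d\Psi|_h^2=0$ tangentially; the normal-normal component is then fixed by the tangential one through $3s_h=\Delta_h\Psi=\operatorname{Hess}\Psi(e_\perp,e_\perp)+\slashed{\mathfrak{h}}^{AB}\operatorname{Hess}\Psi_{AB}$ on $\slashed{\mathfrak{I}}$. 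Since $\Psi=0$ there, the tangential Hessian equals $(d\Psi)_\perp\slashed{\mathfrak{K}}_{AB}$ by \Cref{lem:connection_coeff} and \Cref{def:extrinsic_curvature}, so the vanishing of the trace-free tangential block is precisely umbilicity $\slashed{\mathfrak{K}}_{AB}^\dagger=0$, and its trace fixes $\slashed{\mathfrak{K}}_{AB}=s_h\slashed{\mathfrak{h}}_{AB}$, namely \eqref{eq:umbilical_condition} (the sign of $(d\Psi)_\perp=\pm1$ being pinned by the outward unit normal). Hence $Q$ smooth $\iff i)b)$. A parallel computation, rewriting $G(K,0)=2D_{[k}P_{l]\cdot}$ with the momentum constraint and $P_{i\perp}=0$, should show that the numerator of $M$ in \eqref{CVC_def_M} vanishes on $\slashed{\mathfrak{I}}$ once $P$ is smooth and $\slashed{\mathfrak{I}}$ is umbilical, so $M$ yields no new condition.

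Finally, by the constraint \eqref{eq:CVC_c}, valid on $\mathcal{S}\setminus\slashed{\mathfrak{I}}$ via \Cref{prop:CVC_VC}, one has $\Psi E_{ij}=Q_{ij}-\widehat l_{ij}+\varepsilon F_{ij}(P)$, so $E$ extends smoothly iff the trace-free tangential part of the right-hand side vanishes on $\slashed{\mathfrak{I}}$. The hard part will be identifying this limit: since $Q\sim\Psi^{-1}\operatorname{Hess}\Psi$, its boundary value on the trace-free tangential block is a \emph{second} normal derivative of $\Psi$, so one must expand $\operatorname{Hess}\Psi_{AB}^\dagger=\Psi\,(\cdots)+o(\Psi)$ to one further order and convert the outcome into a \emph{first} normal derivative of $P$ using the Hamiltonian constraint \eqref{eq:VC_hamiltonian} differentiated across $\slashed{\mathfrak{I}}$, which ties the curvature of $h$ (hence of $\Psi$) to $|\widetilde K|^2$ (hence to $P$), together with $F_{ij}^\dagger(P)$ evaluated via $P_{i\perp}=0$. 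I expect this to reproduce exactly $(D_\perp P)^\dagger_{AB}=s_h P_{AB}^\dagger$, the second half of $i)a)$, the remaining trace and mixed components of the $E$-numerator vanishing automatically from the momentum, Hamiltonian and normalisation relations already used. Collecting the four steps gives, in the reference gauge, $ii)\iff[P\text{ smooth}]\wedge[\slashed{\mathfrak{K}}_{AB}=s_h\slashed{\mathfrak{h}}_{AB}]\wedge[(D_\perp P)^\dagger=s_h P^\dagger]\iff i)$, the relation $P_{i\perp}=0$ being subsumed in the momentum constraint; the gauge- and $\Psi$-independence of the first paragraph then upgrades this to the full biconditional. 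The principal obstacles are thus the second-order expansion hidden in the $E$ field and the consistent bookkeeping of the signs of $\varepsilon$ and of the unit normal throughout.
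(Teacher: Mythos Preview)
Your gauge-reduction strategy and the treatment of $K$, $T$, and $Q$ are essentially the same as in the paper and are correct. The problem lies in your attribution of the condition \eqref{eq:normal_derivative_Pdagger}: you have it backwards between $M$ and $E$.

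Your claim that ``the numerator of $M$ \ldots\ vanishes on $\slashed{\mathfrak{I}}$ once $P$ is smooth and $\slashed{\mathfrak{I}}$ is umbilical, so $M$ yields no new condition'' is false. In the reference gauge the numerator of $M_{klj}$ is $2D_{[k}P_{l]j}+h_{j[k}h^{ip}G_{l]ip}(P,0)$. After using \eqref{eq:CVC_d} (which holds up to the boundary by continuity) and expanding, this reduces on $\slashed{\mathfrak{I}}$ to $2D_{[i}P_{j]k}-2h_{k[i}\widetilde{K}_{j]l}h^{lm}(d\Psi)_m=0$. Decompose in an adapted frame: the purely tangential block is the trace of a Cotton candidate and is exactly \eqref{eq:asymptotic_Paperp} from \Cref{lem:momentum_asymp}; but the $(\perp,A,B)$ block, once you use $P_{i\perp}=0$ and the umbilicity \eqref{eq:umbilical_condition}, becomes $D_\perp P_{AB}-s_h P_{AB}=\widetilde{K}_{\perp\perp}\,h_{AB}$. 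Its trace is \eqref{eq:asymptotic_Pperpperp}, but its \emph{trace-free part} is precisely \eqref{eq:normal_derivative_Pdagger}. So $M$ is where the second half of $i)a)$ lives, and this condition is not a consequence of the momentum constraint and umbilicity alone.

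Conversely, your plan for $E$ --- a second-order expansion of $\mathrm{Hess}\,\Psi$ and a differentiated Hamiltonian constraint --- is not needed and is the hard way in. The paper instead multiplies the interior constraint \eqref{eq:CVC_f} by $\Psi$ and passes to the limit: since all the fields on the right have by now been shown to extend smoothly, one obtains on $\slashed{\mathfrak{I}}$ the algebraic relation $2(\widehat{\zeta}_h)_{[i}X_{j]k}-2h_{k[i}X_{j]m}h^{lm}(\widehat{\zeta}_h)_l=0$ for $X_{ij}:=Q_{ij}-\widehat l_{ij}-F_{ij}(K)$. Taking the $(\perp,A,B)$ and $(A,B,C)$ components and using that $(\widehat{\zeta}_h)_\perp\neq 0$ on $\slashed{\mathfrak{I}}$ forces $X_{AB}=0$ and $X_{A\perp}=0$, hence $X_{ij}=0$ and $E$ extends smoothly for free. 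No second-order information about $\Psi$ is required.

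So the fix is simple: swap the roles of $M$ and $E$ in your argument. Carry out the frame decomposition of the $M$-numerator to extract \eqref{eq:normal_derivative_Pdagger}, and replace your speculative $E$-analysis by the one-line limit of $\Psi\times\eqref{eq:CVC_f}$.
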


\begin{rem}
    From the transformation rules \eqref{eq:transfo_s_dim3} and these of the extrinsic curvature described in \Cref{sec:extrinsic_curvature}, one deduces that if $\eqref{eq:umbilical_condition}$ holds for one rescaled metric then for all rescaled metric $h$ and for all smooth covector field $\kappa$
    \[ (\widehat{\slashed{\mathfrak{K}}}_h)_{AB} = \widehat{s}_h  \;  \slashed{\mathfrak{h}}_{AB} \,, \]
    where $\widehat{D}$ is the Weyl connection associated to $\kappa$ with respect to $h$, $\widehat{\slashed{\mathfrak{K}}}_h$ is the extrinsic curvature of $\widehat{D}$ with respect to $h$ on $\mathcal{C}$, $\widehat{s}_h$ is the extended 3-dimensional Friedrich scalar field defined by \eqref{eq:Friedrich_scalar_dim3} and $\slashed{\mathfrak{h}}$ is the metric induced by $h$ on $\mathcal{C}$.
\end{rem} 

\begin{cor}
    \label{cor:smooth_unphysical_fields}
    Consider a 3-dimensional compact smooth manifold $\mathcal{S}$ with non-empty boundary $\partial\mathcal{S}$. Let $\widetilde{h}_{ij}$ be a smooth Riemannian metric on $\mathring{\mathcal{S}}$ and let $\widetilde{K}_{ij}$ be a smooth symmetric 2-tensor on $\mathring{\mathcal{S}}$ such that $i)$ and $ii)$ of the geometric initial data problem are verified, see \Cref{sec:formulation_gidp}. Then $iii)$ of the geometric initial data problem holds if and only if for any (and thus all) boundary defining function $\Psi$ of $\partial\mathcal{S}$, the conditions described in $i)$ of \Cref{thm:smooth_unphysical_fields} hold.
\end{cor}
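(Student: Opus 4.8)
The plan is to deduce Corollary \ref{cor:smooth_unphysical_fields} directly from Theorem \ref{thm:smooth_unphysical_fields} by recognising that the statement is essentially a translation of the equivalence $i) \Leftrightarrow ii)$ of that theorem into the language of the geometric initial data problem formulated in \Cref{sec:formulation_gidp}. First I would observe that the hypotheses of the corollary, namely $i)$ and $ii)$ of the geometric initial data problem, guarantee precisely the setting in which Theorem \ref{thm:smooth_unphysical_fields} applies: condition $i)$ of the problem states that $(\mathcal{S},\partial\mathcal{S},\widetilde{h})$ is an asymptotically hyperbolic space with smooth rescaled metrics (so $\slashed{\mathfrak{I}} = \partial\mathcal{S}$), and condition $ii)$ states that $(\mathring{\mathcal{S}},\widetilde{h},\widetilde{K})$ solves the \eqref{eq:VC} with $\Lambda = -3$. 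These are exactly the standing assumptions of Theorem \ref{thm:smooth_unphysical_fields}, so both statements $i)$ and $ii)$ of the theorem are available and equivalent.

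Next I would match the condition $iii)$ of the geometric initial data problem with statement $ii)$ of Theorem \ref{thm:smooth_unphysical_fields}. The key point is that $iii)b)$ of the problem asks for the smoothness up to $\partial\mathcal{S}$ of the unphysical fields defined by \eqref{CVC_def_Q}--\eqref{CVC_def_M}, which is exactly statement $ii)a)$ of the theorem (the smoothness of $K_{ij}, Q_{ij}, E_{ij}, T_i, M_{ijk}$ defined by \eqref{CVC_def}). Meanwhile, $iii)a)$ of the problem requires that $P_{ij} := \Psi \widetilde{K}_{ij}$ extend smoothly with $P_{\perp i} = 0$ on $\partial\mathcal{S}$; as noted in the remarks following the formulation of the problem, this is equivalent to both the smoothness of $K_{ij}$ from \eqref{CVC_def_K} and the geometric condition \eqref{eq:K_vanish}, while \eqref{eq:z_vanish} is automatic from the choice of $z$ vanishing on $\partial\mathcal{S}$. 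Thus conditions $iii)a)$ and $iii)b)$ of the problem together are equivalent to statements $ii)a)$ and $ii)b)$ of the theorem, since $ii)b)$ asks exactly that the $10$-tuple verify the geometric conditions \eqref{eq:gauge_independent_conditions_CVC}, i.e.\ \eqref{eq:z_vanish} and \eqref{eq:K_vanish}.

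With these identifications in place, I would invoke the equivalence $i) \Leftrightarrow ii)$ of Theorem \ref{thm:smooth_unphysical_fields}: the conditions described in $i)$ of that theorem (the smooth extension of $P_{ij}$ with the normal-derivative trace-free condition \eqref{eq:normal_derivative_Pdagger}, together with the umbilicity condition \eqref{eq:umbilical_condition}) hold if and only if statement $ii)$ holds, which we have just identified with condition $iii)$ of the geometric initial data problem. One subtlety I would address carefully is the quantifier: the corollary says \emph{for any (and thus all)} boundary defining function, matching the phrasing in $i)$ of the theorem, whereas $ii)$ of the theorem is also stated for an arbitrary choice of $\Psi$, $\kappa$ and admissible $z$; the gauge-invariance built into the setup (together with \Cref{lemma:bdf} relating different boundary defining functions) ensures these quantifier conventions are consistent, so no genuine difficulty arises there.

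The main obstacle, if any, will be bookkeeping rather than mathematical depth: one must be scrupulous in verifying that the ``$z$ vanishes on $\partial\mathcal{S}$'' hypothesis of the corollary lines up with the geometric condition \eqref{eq:z_vanish}, and that the outward-pointing normal conventions and the trace-free restriction operator $(\,\cdot\,)^\dagger$ are used identically in both statements. Since Theorem \ref{thm:smooth_unphysical_fields} already carries all the analytic content — in particular the use of \Cref{lem:momentum_asymp} to extract the asymptotic behaviour of $P_{i\perp}$ from the momentum constraint — the corollary itself requires no new estimates. Accordingly the proof would be short: after the two translations above, one simply writes that the result follows immediately from Theorem \ref{thm:smooth_unphysical_fields}.
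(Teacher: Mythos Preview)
Your proposal is correct and matches the paper's approach: the corollary is stated without a separate proof, being an immediate translation of the equivalence $i)\Leftrightarrow ii)$ in \Cref{thm:smooth_unphysical_fields} into the language of \Cref{sec:formulation_gidp}, exactly as you describe. Your careful matching of $iii)a)$--$iii)b)$ of the problem with $ii)a)$--$ii)b)$ of the theorem (via the remark after the formulation that $iii)a)$ encodes both smoothness of $K_{ij}$ and \eqref{eq:K_vanish}) is precisely the bookkeeping the paper leaves implicit.
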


\begin{rems} \,
    \begin{itemize}
        \item \Cref{cor:smooth_unphysical_fields} implies \Cref{thm:Kannar}. From $|dx|^2_h=1$ on a neighbourhood of $\partial\mathcal{S}$ and \eqref{eq:bc_theta}, one has $\theta=1+\grando{x^k}$ for some integer $k\geq1$. Let us analyse the terms in \eqref{eq:eqdiff_theta}:
        \begin{align*}
            \Delta_{\overline{h}} \theta &= x^2 \left( \Delta_h \theta - h^{-1}(dx,d\theta) x^{-1} \right) = \grando{x^k} \,, \\
            -\overline{r}\theta -6 \theta^5 &=(6+\grando{x^3})(1+\grando{x^k}) - 6(1+\grando{x^k}) = \grando{x^{\min(3,k)}} \,, \\
            \theta^{-7} \left|x^3\phi\right|^2_h &= \theta^{-7} x^4 \left|x\phi\right|^2_h = \grando{x^4} \,.
        \end{align*}
        Hence $k=3$. It follows that the function $\Psi:=x\theta^{-2}$ verifies $\left|d\Psi\right|^2_h - 1 = \grando{x^3} = \grando{\Psi^3}$. Then \eqref{eq:eqdiff_Psi} gives
        \[ s_h = \frac{1}{2\Psi} \left( \left|d\Psi\right|^2_h - 1\right) + \frac{\Psi^3}{12} \theta^{-4} \left| S \right|^2_h = \grando{\Psi^2} \,. \]
        In particular, the gauge choice of $x$ in \Cref{thm:Kannar} implies that $s_h = 0$ on $\partial\mathcal{S}$. Finally, one deduces that \eqref{eq:normal_derivative_Pdagger} and \eqref{eq:umbilical_condition} reduce to \eqref{eq:Sdagger} and \eqref{eq:totally_geod} respectively.

        \item The hypotheses in $i)$ of \Cref{thm:smooth_unphysical_fields} are not sufficient to prevent logarithmic terms in solutions constructed by the conformal method as described in \Cref{thm:conformal_method}. A detailed analysis of the logarithmic terms is given in \cite[Section 4]{AC94}. In particular, the solutions constructed by the conformal method are smooth on $\mathcal{S}$ if and only if certain of the first logarithmic terms, given by \cite[Equations (4.13), (4.14) and (4.28)]{AC94} for the choice of $x$ as in \Cref{thm:Kannar}, vanish on the boundary $\partial\mathcal{S}$. From this, one can deduce the minimal hypotheses on the field $A_{ij}$ so that the solutions are smooth up to the boundary under the hypotheses $i)$ of \Cref{thm:smooth_unphysical_fields}. \qedhere
    \end{itemize}
\end{rems}

\begin{proof}[Proof of \Cref{thm:smooth_unphysical_fields}] \, \\
    $\Longrightarrow$ Assume that $i)$ is verified. Let $\Psi$ be a boundary defining function of $\slashed{\mathfrak{I}}$, let $\kappa$ be a smooth covector field on $\mathcal{S}$ and let $z$ be a smooth scalar field on $\mathcal{S}$ which vanishes on $\slashed{\mathfrak{I}}$. 
    \begin{itemize}
        \item The field $K_{ij}$ defined by \eqref{CVC_def_K} and the geometric conditions \eqref{eq:gauge_independent_conditions_CVC}: The condition \eqref{eq:z_vanish} holds by hypothesis on the field $z$. Moreover, the function $z/\Psi$ extends smoothly on $\slashed{\mathfrak{I}}$ by \Cref{lem:dividing_bdf}. By assumption, the field $P_{ij} := \Psi \widetilde{K}_{ij}$ extends smoothly on $\slashed{\mathfrak{I}}$ and the rescaled metric $h_{ij} := \Psi^2 \widetilde{h}_{ij}$ is smooth on the whole of $\mathcal{S}$. Therefore, the field $K_{ij}$ defined by \eqref{CVC_def_K} extends smoothly on $\slashed{\mathfrak{I}}$. Furthermore, one has $P_{i\perp} = 0$ on $\slashed{\mathfrak{I}}$ by \Cref{lem:momentum_asymp}, which is equivalent to \eqref{eq:K_vanish} by \eqref{CVC_def_K}.
    \end{itemize}
    To prove the smoothness of the other fields in \eqref{CVC_def}, we will work in a frame field $(e_{\bf i}) = (e_\perp,(e_{\bf A}))$ adapted to the conformal boundary $\slashed{\mathfrak{I}}$ (see \Cref{sec:adapted_frame_fields}) with $e_\perp$ the outward-pointing $h$-unit normal. Since $(\mathcal{S},\slashed{\mathfrak{I}},\widetilde{h})$ is an aH space and $\Psi$ is a boundary defining function of $\slashed{\mathfrak{I}}$, one has $(d\Psi)_\perp = -1$ on $\slashed{\mathfrak{I}}$. The idea is to apply \Cref{lem:dividing_bdf} for each field in a specific order and to decompose the arising conditions in the above frame field. The number of such conditions can be reduced by exploiting the algebraic symmetries.
    \begin{itemize}
        \item The field $Q_{ij}$ defined by \eqref{CVC_def_Q}: since $K_{ij}$, $h_{ij}$, $\kappa_i$, $z$ and $\Psi$ are all smooth on $\mathcal{S}$, the smoothness of $Q_{ij}$ is equivalent by \Cref{lem:dividing_bdf} to
        \[ -\widehat{D}_i (\widehat{\zeta}_h)_j - \kappa_i (\widehat{\zeta}_h)_j + z K_{ij} + \widehat{\sigma} h_{ij} = 0 \quad \text{on } \slashed{\mathfrak{I}} \,. \]
        Note that the antisymmetric part of the tensor field on the left hand side is equal to $-\Psi \widehat{D}_{[i} \kappa_{j]}$. Thus, it vanishes on $\slashed{\mathfrak{I}}$. Therefore the above condition is equivalent to
        \begin{alignat*}{12}
            -\widehat{D}_{\bf A} (\widehat{\zeta}_h)_{\bf B} - \kappa_{\bf A} (\widehat{\zeta}_h)_{\bf B} + \widehat{s}_h h_{\bf AB} &= 0 \quad \text{on } \slashed{\mathfrak{I}}  &\; \iff \; && (\widehat{\slashed{\mathfrak{K}}}_h)_{\bf AB} &= \widehat{s}_h \; \slashed{\mathfrak{h}}_{\bf AB} \,, \\
            -\widehat{D}_{\bf A} (\widehat{\zeta}_h)_\perp - \kappa_{\bf A} (\widehat{\zeta}_h)_\perp + \widehat{s}_h h_{\bf A\perp} &= 0 \quad \text{on } \slashed{\mathfrak{I}} &\; \iff \; && 0 &= 0 \,, \\
            -\widehat{D}_\perp (\widehat{\zeta}_h)_\perp - \kappa_\perp (\widehat{\zeta}_h)_\perp + \widehat{s}_h h_{\perp\perp} &= 0 \quad \text{on } \slashed{\mathfrak{I}} &\; \iff \; && e_\perp((d\Psi)_\perp) &= \widehat{s}_h \quad \text{on } \slashed{\mathfrak{I}} \,.
        \end{alignat*}
        The first condition is ensured by \eqref{eq:umbilical_condition} thanks to the remark below the theorem. The second condition is trivial. Finally, note that equation \eqref{eq:CVC_k} now holds on the whole of $\mathcal{S}$ by continuity. Differentiating it with respect to $e_\perp$ and evaluating it on $\slashed{\mathfrak{I}}$ gives the third condition.

        \item The field $T_i$ defined by \eqref{CVC_def_T}: since $K_{ij}$, $h_{ij}$, $\kappa_i$, $z$ and $\Psi$ are all smooth on $\mathcal{S}$, the smoothness of $T_i$ is equivalent by \Cref{lem:dividing_bdf} to
        \[ -(dz)_i + K_{ij} h^{jk} (\widehat{\zeta}_h)_k = 0 \quad \text{on } \slashed{\mathfrak{I}} \,. \]
        Using the frame field, this can be rewritten under the form
        \[ K_{\bf i\perp} - \frac{(dz)_\perp}{(d\Psi)_\perp} h_{\bf i\perp} = 0 \quad \text{on } \slashed{\mathfrak{I}} \,, \]
        which is verified by \eqref{eq:K_vanish} and L'Hôpital's rule.

        \item The field $M_{ijk}$ defined by \eqref{CVC_def_M}: since $K_{ij}$, $h_{ij}$ and $\kappa_i$ are all smooth on $\mathcal{S}$, the smoothness of $M_{ijk}$ is equivalent by \Cref{lem:dividing_bdf} to
        \begin{equation}
            \label{eq:aux_M}
            G_{ijk}(K,\kappa) + h_{k[i} h^{lm} G_{j]lm}(K,\kappa) = 0 \quad \text{on } \slashed{\mathfrak{I}} \,.
        \end{equation}
        Note that \eqref{eq:CVC_d} now holds on the whole of $\mathcal{S}$ by continuity. Hence the condition rewrites as
        \begin{equation}
            \label{eq:aux_cond_M}
            G_{ijk}(K,\kappa) - 2 h_{k[i} T_{j]} = 0 \quad \text{on } \slashed{\mathfrak{I}} \,. 
        \end{equation}
        By \eqref{CVC_def_K} and \eqref{eq:def_F_G}, one has on $\mathcal{S}$
        \[ G_{ijk}(K,\kappa) = G_{ijk}(P,\kappa) - 2 h_{k[i} \left( \left( d\frac{z}{\Psi} \right) \vphantom{\kappa}_{j]} - \kappa_{j]} \frac{z}{\Psi} \right) \,. \]
        Using \Cref{prop_transition_weyl_conn}, one has on $\mathcal{S}$
        \[ G_{ijk}(P,\kappa) = G_{ijk}(P,0) + 2h_{k[i} P_{j]l} h^{lm} \kappa_m = 2 D_{[i} P_{j]k} + 2h_{k[i} \widetilde{K}_{j]l} h^{lm} \Psi \kappa_m \,. \]
        Moreover, from \eqref{CVC_def_T} and \eqref{CVC_def_K}, one has on $\mathcal{S}$
        \begin{align*}
            T_i &= - \left(d\frac{z}{\Psi}\right)_i + \kappa_i \frac{z}{\Psi} + \frac{1}{\Psi} \left( K_i{}^j - \frac{z}{\Psi} \delta_i{}^j \right) (\widehat{\zeta}_h)_j \\
            &= - \left(d\frac{z}{\Psi}\right)_i + \kappa_i \frac{z}{\Psi} + \widetilde{K}_{ik} h^{jk} (\widehat{\zeta}_h)_j \,.
        \end{align*}
        Hence, \eqref{eq:aux_cond_M} resumes to
        \begin{equation}
            \label{eq:aux_cond_M_2}
        	2 D_{[i} P_{j]k} - 2 h_{k[i} \widetilde{K}_{j]l} h^{lm} (d\Psi)_m = 0 \quad \text{on } \slashed{\mathfrak{I}} \,.
        \end{equation}
        It is clear from \eqref{eq:aux_M} that the tensor field on the left hand side is a Cotton candidate on $(\mathcal{S},h)$, see \Cref{def:cotton_candidate}. Thus \eqref{eq:aux_cond_M_2} is equivalent to
        \begin{subequations}
            \begin{align}
            \label{eq:aux_a}
            2 D_{\bf [A} P_{\bf B]C} + 2 h_{\bf C[A} \widetilde{K}_{\bf B]\perp} = 0 \quad \text{on } \slashed{\mathfrak{I}} \,, \\
            \label{eq:aux_b}
            2 D_{\bf [\perp} P_{\bf A]B} + 2 h_{\bf B[\perp} \widetilde{K}_{\bf A]\perp} = 0 \quad \text{on } \slashed{\mathfrak{I}} \,.
            \end{align}
        \end{subequations}
        First consider \eqref{eq:aux_a}. By counting algebraic freedoms, it is equivalent to its trace. Furthermore, since $P_{\bf i\perp} = 0$ on $\slashed{\mathfrak{I}}$, one deduces that $D_{\bf A} P_{\bf BC} = \slashed{\mathfrak{D}}_{\bf A} P_{\bf BC}$. Hence, \eqref{eq:aux_a} is equivalent to
        \[ \widetilde{K}_{\bf A\perp} = 2 h^{\bf BC} \slashed{\mathfrak{D}}_{\bf [A} P_{\bf B]C} = \slashed{\mathfrak{D}}_{\bf A} P_{\bf B}{}^{\bf B} - \slashed{\mathfrak{D}}_{\bf B} P^{\bf B}{}_{\bf A} \quad \text{on } \slashed{\mathfrak{I}} \,. \]
        This is equation \eqref{eq:asymptotic_Paperp} which holds by \Cref{lem:momentum_asymp}.
        
        Now consider \eqref{eq:aux_b}. Using $P_{\bf i\perp} = 0$ on $\slashed{\mathfrak{I}}$ and \eqref{eq:umbilical_condition}, it is equivalent to
        \[ \widetilde{K}_{\perp\perp} h_{\bf AB} = D_\perp P_{\bf AB} - s_h P_{\bf AB} \quad \text{on } \slashed{\mathfrak{I}} \,. \]
        Taking the trace gives
        \[ 2 \widetilde{K}_{\perp\perp} = e_{\perp}\left(h^{\bf AB} P_{\bf AB}\right) -s_h \, h^{\bf AB} P_{\bf AB} \,. \]
        Using again $P_{\bf i\perp} = 0$ on $\slashed{\mathfrak{I}}$, one has
        \[ h^{\bf AB} P_{\bf AB} = P \quad \text{on } \slashed{\mathfrak{I}} \,, \]
        and
        \[ e_\perp(h^{\bf AB} P_{\bf AB}) = e_\perp(P) - h^{\bf A\perp} e_\perp(P_{\bf A\perp}) - h^{\perp\perp} e_{\perp}(P_{\perp\perp}) - P_{\bf i\perp} e_\perp(h^{\bf i\perp}) = e_\perp(P) + \widetilde{K}_{\perp\perp} \quad \text{on } \slashed{\mathfrak{I}} \,. \]
        Hence, one retrieves \eqref{eq:asymptotic_Pperpperp} which is again verified by \Cref{lem:momentum_asymp}. Taking the trace-free part of \eqref{eq:aux_b} gives \eqref{eq:normal_derivative_Pdagger} ensured by $i)a)$.

        \item The field $E_{ij}$ defined by \eqref{CVC_def_E}: since $K_{ij}$, $h_{ij}$, $\kappa_i$ and $Q_{ij}$ are all smooth on $\mathcal{S}$, the smoothness of $E_{ij}$ is equivalent by \Cref{lem:dividing_bdf} to
        \[ X_{ij} := Q_{ij} - \widehat{l}_{ij} - F_{ij}(K) = 0 \quad \text{on } \slashed{\mathfrak{I}} \,. \]
        Note that the tensor field $X_{ij}$ is symmetric and trace-free on $(\mathcal{S},h)$. By multiplying \eqref{eq:CVC_f}, which holds on $\mathring{\mathcal{S}}$, by the boundary defining function $\Psi$, one has
        \[ 2 (\widehat{\zeta}_h)_l S_{k[i}{}^{lm} X_{j]m} = -\Psi \left( 2\widehat{D}_{[i} Q_{j]k} - 2 K_{k[i} T_{j]} - zM_{ijk} \right) \quad \text{on } \mathring{\mathcal{S}} \,.\]
        Since all the fields appearing in the above equation now extends smoothly, it also holds true on $\slashed{\mathfrak{I}}$ giving
        \[ 2 (\widehat{\zeta}_h)_l S_{k[i}{}^{lm} X_{j]m} = 2 (\widehat{\zeta}_h)_{[i} X_{j]k} - 2 h_{k[i} X_{j]m} h^{lm} (\widehat{\zeta}_h)_l = 0 \quad \text{on } \slashed{\mathfrak{I}} \,. \]
        In particular,
        \begin{alignat*}{9}
            2 (\widehat{\zeta}_h)_{[\perp} X_{\bf A]B} - 2 h_{\bf B[\perp} X_{\bf A]m} h^{\bf lm} (\widehat{\zeta}_h)_{\bf l} &= 0 \quad \text{on } \slashed{\mathfrak{I}} &\; \iff \;&& X_{\bf AB} = 0 \quad \text{on } \slashed{\mathfrak{I}} \,, \\
            2 (\widehat{\zeta}_h)_{\bf [A} X_{\bf B]C} - 2 h_{\bf C[A} X_{\bf B]m} h^{\bf lm} (\widehat{\zeta}_h)_{\bf l} &= 0 \quad \text{on } \slashed{\mathfrak{I}} &\; \iff \;&& X_{\bf A\perp} = 0 \quad \text{on } \slashed{\mathfrak{I}} \,.
        \end{alignat*}
        Therefore, one has $X_{ij} = 0$ on $\slashed{\mathfrak{I}}$.
    \end{itemize}

    \bigbreak

    \noindent $\Longleftarrow$ Assume that $ii)$ is verified. Let $\Psi$ be a boundary defining function of $\slashed{\mathfrak{I}}$ and take $\kappa=0$, $z=0$. Equation \eqref{CVC_def_K} gives that
    \[ P_{ij} := \Psi \widetilde{K}_{ij} = K_{ij} \qquad \text{on } \mathcal{S}\setminus\slashed{\mathfrak{I}} \,. \]
    Since the field $K_{ij}$ extends smoothly on $\slashed{\mathfrak{I}}$ by $ii)a)$, so does the field $P_{ij}$. The condition \eqref{eq:K_vanish}, ensured by $ii)b)$, implies that $P_{i\perp} = 0$ on $\slashed{\mathfrak{I}}$.

    Using $ii)a)$ and \Cref{lem:dividing_bdf}, one deduces from \eqref{CVC_def_Q} and \eqref{CVC_def_M} that
    \begin{align*}
        D_i (d\Psi)_j - s_h h_{ij} &= 0 \quad \text{on } \slashed{\mathfrak{I}} \,, \\
        G_{ijk}(K,0) + h_{k[i} h^{lm} G_{j]lm}(K,0) &= 0 \quad \text{on } \slashed{\mathfrak{I}} \,.
    \end{align*}
    In the same manner than above, the first equation implies in particular \eqref{eq:umbilical_condition} and then the second equation implies in particular \eqref{eq:normal_derivative_Pdagger}. Hence the result.
\end{proof}

\subsection{Main theorem}
\label{sec:main_thm}

\subsubsection{Precise version of the main theorem}

The precise version of our main result on the geometric local existence and uniqueness for the vacuum Einstein equations \eqref{eq:VE} with $\Lambda<0$, $n=3$ and our new geometric boundary conditions \eqref{eq:bc_robin} write as follows.

\begin{thm}[Main theorem, precise version]
    \label{thm:main_precise}
    Let
    \begin{itemize}
    	\item $\mathcal{S}$ be a 3-dimensional oriented and compact smooth manifold with non-empty boundary $\partial\mathcal{S}$,
    	\item $\widetilde{h}$ be a smooth Riemannian metric on the interior $\mathring{\mathcal{S}}$,
    	\item $\widetilde{K}_{ij}$ be a smooth symmetric 2-tensor on the interior $\mathring{\mathcal{S}}$,
    	\item for each connected component $\mathcal{C}$ of $\partial\mathcal{S}$, $\mu_\mathcal{C}$ be a real number,
    \end{itemize}
    such that
    \begin{itemize}
        \item[H1.] $(\mathcal{S},\partial\mathcal{S},\widetilde{h})$ is asymptotically hyperbolic, in the sense of \Cref{def:aH}, with smooth rescaled metrics,
        
        \item[H2.] $(\mathring{\mathcal{S}},\widetilde{h}_{ij},\widetilde{K}_{ij})$ is a solution to the vacuum constraint equations for the normalised negative cosmological constant $\Lambda = -3$, that is
        \begin{alignat*}{4}
            \widetilde{r} + (\widetilde{K}_i{}^i)^2 - \widetilde{K}_{ij} \widetilde{K}^{ij} &= -6  && \quad \text{on } \mathring{\mathcal{S}} \,, \\
            \widetilde{D}_i \widetilde{K}^i{}_j - \widetilde{D}_j \widetilde{K}_i{}^i &= 0  && \quad \text{on } \mathring{\mathcal{S}} \,,
        \end{alignat*}
        where $\widetilde{D}$ is the Levi-Civita connection of $\widetilde{h}$ and $\widetilde{r}$ is its scalar curvature,
        
        \item[H3.] for any boundary defining function $\Psi$ of $\partial\mathcal{S}$, the field $P_{ij} := \Psi \widetilde{K}_{ij}$ extends smoothly to $\partial\mathcal{S}$ and, for all connected components $\mathcal{C}$ of $\partial\mathcal{S}$, one has
        \begin{subequations}
            \begin{alignat*}{4}
                \slashed{\mathfrak{K}}_{AB} &= s_h \, \slashed{\mathfrak{h}}_{AB} && \quad \text{on } \mathcal{C} \,, \\
                \left( D_\perp P\right)^\dagger_{AB} &= s_h \, P_{AB}^\dagger && \quad \text{on } \mathcal{C} \,,
            \end{alignat*}
        \end{subequations}
        where $h := \Psi^2 \widetilde{h}$ is the rescaled metric associated to $\Psi$, $\slashed{\mathfrak{K}}$ is the extrinsic curvature on $\mathcal{C}$ of the Levi-Civita connection $D$ of $h$, $s_h$ is the non-extended 3-dimensional Friedrich scalar defined by \eqref{eq:def_Friedrich_scalar_dim3_nonextended}, $\slashed{\mathfrak{h}}$ is the metric induced by $h$ on $\mathcal{C}$ and $X{}^\dagger_{AB}$ denotes the trace-free part of the restriction of a 2-tensor $X_{ij}$ on $\mathcal{C}$,
 
        \item[H4.] on each connected component $\mathcal{C}$ of $\partial\mathcal{S}$, the enlarged set of compatibility conditions depending on $\mu_\mathcal{C}$, as defined in \Cref{lem:geometric_compatibility_conditions}, are verified,
    \end{itemize}
    Then there exists a triple $(\mathcal{M},\mathcal{S}_\star,\widetilde{g})$ where $\M$ is a 4-dimensional time-oriented and oriented smooth manifold with corners, $\mathcal{S}_\star$ is a boundary hypersurface of $\M$ and $\widetilde{g}$ is a smooth Lorentzian metric  on $\mathring{\M}\cup\mathring{\mathcal{S}_\star}$ such that
    \begin{itemize}
        \item[i)] $(\mathcal{M},\partial\M\setminus\mathring{\mathcal{S}_\star},\widetilde{g})$ is asymptotically Anti-de Sitter, in the sense of \Cref{def_aAdS}, with smooth rescaled metrics,
        \item[ii)] all boundary hypersurfaces other than $\mathcal{S}_\star$ intersect $\mathcal{S}_\star$ orthogonally with respect to any (and thus all) rescaled metric $g$,
        \item[iii)] $(\mathring{\M}\cup\mathring{\mathcal{S}_\star},\widetilde{g})$ is a solution to the vacuum Einstein equations for $\Lambda = -3$, that is
        \[ \widetilde{R}_{\alpha\beta} = - 3 \widetilde{g}_{\alpha\beta} \quad \text{on } \mathring{\M}\cup\mathring{\mathcal{S}_\star} \,, \]
        where $\widetilde{R}_{\alpha\beta}$ is the Ricci tensor of $\widetilde{g}$,
        \item[iv)] the inward-pointing unit normal of $\mathcal{S}_\star$ is future-directed, 
        \item[v)] there exists an orientation preserving diffeomorphism $\phi : \mathcal{S} \to \mathcal{S}_\star \subset \M$ such that
        \begin{itemize}
            \item[a)] $\phi$ maps isometrically $(\mathring{\mathcal{S}},\widetilde{h})$ to $\mathring{\mathcal{S}_\star}$ equipped with the metric induced by $\widetilde{g}$,
            \item[b)] $\phi_\star\widetilde{K}_{ij}$ is the second fundamental form of $\mathring{\mathcal{S}_\star}$ in $(\mathring{\M}\cup\mathring{\mathcal{S}_\star},\widetilde{g})$,
        \end{itemize}
        \item[vi)] for all connected component $\mathfrak{S}$ of $\partial\M\setminus\mathring{\mathcal{S}_\star}$, the following geometric boundary conditions are verified
            \begin{equation}
                \tag{\ref{eq:bc_robin}}
                \mathfrak{t}_{ij} = \mu_{
                \phi^{-1}(\mathfrak{S}\cap\mathcal{S}_\star)} \, \mathfrak{y}_{ij} \quad \text{on } \mathfrak{S} \,,
            \end{equation}
        where $\mathfrak{t}$ is the boundary stress-energy tensor and $\mathfrak{y}$ is the Cotton-York tensor both associated to any representative of the conformal class $[\mathfrak{h}]$ on $\mathfrak{S}$ defined by \Cref{def_aAdS}.
    \end{itemize}
    In addition, local uniqueness holds: if $(\M',\mathcal{S}_\star',\widetilde{g}')$ is another such triple verifying $i)-vi)$ then there exists a third triple $(\M'',\mathcal{S}_\star'',\widetilde{g}'')$ verifying $i)-vi)$  and two time-orientation and orientation preserving embeddings $\psi : \M'' \xhookrightarrow{} \M$, $\psi' : \M'' \xhookrightarrow{} \M'$ such that
    \begin{itemize}
        \item $\psi \circ \phi'' = \phi$ and $\psi' \circ \phi'' = \phi'$ where $\phi$, $\phi'$, $\phi''$ denote the diffeomorphisms described by $v)$ of the three triple,
        \item $\psi^\star \widetilde{g} = \widetilde{g}''$ and $(\psi')^\star \widetilde{g}' = \widetilde{g}''$.
    \end{itemize}
\end{thm}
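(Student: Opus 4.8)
The plan is to assemble the machinery of \Cref{sec:CVE} through \Cref{sec:geometric_initial_data} into a single local statement, following the overall architecture of Friedrich's proof but phrased through the tensorial (CVE) framework. The first step is to pass from the geometric initial data to a solution of the (CVC). Hypotheses H1--H3 are exactly the hypotheses of $i)$ of \Cref{prop:aH_VC}: H3 encodes conditions $a)$--$b)$ there, which rely on \Cref{thm:smooth_unphysical_fields}. Fixing a boundary defining function $\Psi$ of $\partial\mathcal{S}$, a smooth covector field $\kappa$, and a scalar field $z$ vanishing on $\partial\mathcal{S}$, I obtain a Riemannian solution $(\mathcal{S},h,E,M,\Psi,\kappa,K,z,T,Q)\in\mathscr{D}$ of the (CVC) with $\Lambda=-3$ whose unphysical fields extend smoothly up to $\partial\mathcal{S}$ and which satisfies the geometric conditions \eqref{eq:gauge_independent_conditions_CVC}. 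I would then use the transformations of \Cref{def:transfo_dim3} to impose the gauge choices \eqref{eq:gauge_choices_CVC}, namely $\Psi\ge0$, $z=0$ away from $\slashed{\mathscr{I}}$, and $\widehat{s}_h=0$ on $\slashed{\mathscr{I}}$.

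Next I would translate this into analytic initial data. Reading \Cref{lem:analytic_to_geometric_ID} backwards, on each coordinate patch $\mathcal{U}_\star$ I construct an adapted frame field $(e_{\bf i})$ and assemble the tuple $\underline{u}_\star$. Hypothesis H3 then yields the frame conditions \eqref{eq:initial_conditions_ei0} and \eqref{eq:initial_frame_field}, the corner conditions \eqref{ode_initial_data}, and the constraint conditions \eqref{eq:initial_conditions_constraints}, while H4 supplies the enlarged set of compatibility conditions of \Cref{lem:geometric_compatibility_conditions}, which contains both the analytic compatibility conditions required by hypothesis $iii)$ of \Cref{evol_solve_near} and the corner conditions \eqref{eq:frakp_vanish_corner}. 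On each component $\mathcal{C}$ of $\partial\mathcal{S}$ the boundary data are taken to be \eqref{raw_bc} with the constant choices \eqref{eq:robin_analytic} determined by $\mu_{\mathcal{C}}$.

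The evolution is then solved locally. On patches meeting $\partial\mathcal{S}$ I invoke \Cref{evol_solve_near} and on interior patches \Cref{evol_solve_away}; \Cref{lem:propagation_constraints} together with \Cref{cor:geometry_back} upgrades each local solution of the evolution system to a solution of the (CVE), hence, on $\{\Theta>0\}$, to a local aAdS solution of the (VE), giving $i)$ and $iii)$. \Cref{cor:geometry_back} also identifies the conformal boundary, provides the gauge properties, and with \Cref{prop:CVC_VC}\,$ii)$ recovers the initial data, yielding $iv)$ and $v)$. Finally, \Cref{prop:partial_to_total_robin_bc}\,$ii)$ propagates the corner conditions \eqref{eq:frakp_vanish_corner} along the auxiliary system \eqref{eq:aux_system_fraku} on $\slashed{\mathscr{I}}$ to give \eqref{eq:bc_robin} on each connected component $\mathfrak{S}$ of the conformal boundary, which is $vi)$.

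The two genuinely global steps --- and the main obstacle --- are the patching of these local solutions into a single manifold with corners $\M$ and the geometric uniqueness statement. For existence I would cover $\mathcal{S}_\star$ by finitely many patches and glue the local solutions along overlaps using the local uniqueness of \Cref{evol_solve_away} and \Cref{evol_solve_near}, checking that $\Theta$, the frame $(e_{\bf a})$ and the metric $\breve{g}$ are compatible and that the orthogonal-intersection property $ii)$ at the corner $\partial\mathcal{S}_\star$ is respected. For uniqueness, given a second triple $(\M',\mathcal{S}_\star',\widetilde{g}')$ verifying $i)$--$vi)$, I would exploit that \eqref{eq:bc_robin} is a \emph{geometric}, gauge-independent boundary condition in the sense of \Cref{def:geometric_bc}: after bringing both solutions into the gauge of \Cref{prop:construction_gauge} around a common point, they carry the same analytic initial data and satisfy the same analytic boundary conditions, so the analytic local uniqueness forces them to coincide, and a diffeomorphism-patching argument produces the common refinement $(\M'',\mathcal{S}_\star'',\widetilde{g}'')$ with the asserted embeddings. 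The delicate point throughout is the bookkeeping at the corner --- ensuring that the gluing and the uniqueness diffeomorphisms respect both the manifold-with-corners structure and the orientation and time-orientation --- which is where most of the care is required.
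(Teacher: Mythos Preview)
Your proposal is correct and follows essentially the same architecture as the paper's proof: construct the (CVC) 10-tuple from H1--H3 via \Cref{thm:smooth_unphysical_fields}/\Cref{cor:smooth_unphysical_fields}, fix the gauge choices \eqref{eq:gauge_choices_CVC}, build analytic initial data on a finite patch cover, solve the evolution system with \Cref{evol_solve_away}/\Cref{evol_solve_near}, propagate constraints via \Cref{lem:propagation_constraints} and \Cref{cor:geometry_back}, upgrade to the geometric boundary condition via \Cref{prop:partial_to_total_robin_bc}\,$ii)$, and patch. One minor inaccuracy: the frame conditions \eqref{eq:initial_conditions_ei0}, \eqref{eq:initial_frame_field} and the corner conditions \eqref{ode_initial_data} are not consequences of H3 but of the explicit construction of the adapted frame and the gauge choices \eqref{eq:gauge_choices_CVC} on the 10-tuple (the paper simply takes $z=0$ from the outset rather than a general $z$ vanishing on $\partial\mathcal{S}$); this does not affect the logic.
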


\begin{figure}
    \centering
    \begin{tikzpicture}
        \pgfmathsetmacro{\persp}{0.35}; 
    	\pgfmathsetmacro{\Ri}{0.8} 
    	\pgfmathsetmacro{\Re}{2.5} 
    	\pgfmathsetmacro{\H}{3} 
        \pgfmathsetmacro{\Tr}{2.5*\Re} 

        \fill[color=lightgray,even odd rule]
            (\Re/2.5+\Tr,0) ellipse ({\Ri} and {\Ri*\persp})
            (\Tr,0) ellipse ({\Re} and {\Re*\persp});
        \draw[dashed] (\Tr+\Re/2.5,0) ellipse ({\Ri} and {\Ri*\persp});
        \draw (\Tr+\Re,0) arc (0:-180:{\Re} and {\Re*\persp});
        \draw[dashed] (\Tr+\Re,0) arc (0:180:{\Re} and {\Re*\persp});
        \draw (\Tr-\Re,0)--(\Tr-\Re,\H);
        \draw (\Tr+\Re,0)--(\Tr+\Re,\H);
        \draw[dashed] (\Tr+\Re/2.5-\Ri,0)--(\Tr+\Re/2.5-\Ri,\H);
        \draw[dashed] (\Tr+\Re/2.5+\Ri,0)--(\Tr+\Re/2.5+\Ri,\H);

        \fill[color=lightgray,even odd rule]
            (\Re/2.5,0) ellipse ({\Ri} and {\Ri*\persp})
            (0,0) ellipse ({\Re} and {\Re*\persp});
        \draw (\Re/2.5,0) ellipse ({\Ri} and {\Ri*\persp});
        \draw (0,0) ellipse ({\Re} and {\Re*\persp});

        \draw [->,line width=1.2pt] (\Tr/2-\Re*0.8,-\Re*\persp) to[bend right] (\Tr/2+\Re*0.8,-\Re*\persp);
        \path[->] (\Tr/2-\Re*0.8,-\Re*\persp) to[bend right] node[midway,below,inner sep=5pt] {$\phi$} (\Tr/2+\Re*0.8,-\Re*\persp);

        \draw (\Tr-\Re/2.2,\H*0.7) node {$\mathcal{M}$};
        \draw (\Tr-\Re/2.2,0) node {$\mathcal{S}_\star$};
        \draw (-\Re/2.2,0) node {$\mathcal{S}$};
        \draw (\Tr+\Re/2.5+\Ri/2,\H*0.7) node {$\mathfrak{S}_2$};
        \draw (\Tr+\Re+\Ri/2,\H*0.7) node {$\mathfrak{S}_1$};
        \draw (\Re/2.5-\Ri,\Ri*\persp*1.5) node {$\mathcal{C}_2$};
        \draw (\Re*0.8,\Re*\persp) node {$\mathcal{C}_1$};
    \end{tikzpicture}
    \caption{Illustration of \Cref{thm:main_precise}}
\end{figure}

\bigbreak

\begin{rems} \,
	\begin{itemize}
        \item The first equation in $H3$ implies in particular that all connected components $\mathcal{C}$ of $\partial\mathcal{S}$ are umbilical in the conformal structure $(\mathcal{S},[h])$, where $[h]$ is the class of rescaled metrics.
        
		\item One can modify the above theorem to impose an inhomogeneous Dirichlet boundary condition as in Friedrich \cite{F95} on some of the connected components of the conformal boundary, provided that the compatibility conditions are modified accordingly.
        
		\item Since the boundary conditions are reflective, one can construct a past development by changing the time orientation.
        
		\item From the above theorem, one can construct a unique maximal development following Choquet-Bruhat and Geroch \cite{CBG69} and Sbierski \cite{S16}. \qedhere
	\end{itemize}	
\end{rems}

\begin{proof}
    Equipped with all the results of \Cref{sec:CVE,sec:solving_evolution_system,sec:geometric_bc,sec:geometric_initial_data}, the proof mainly follows standard lines. Let us detail the local existence part.
	\begin{itemize}
		\item\underline{Step 1}: Construction of the 10-tuple
		
		Let $\Psi$ be a boundary defining function of $\partial\mathcal{S}$, let $\kappa_i$ be a smooth covector field on $\mathcal{S}$ and let $z=0\in\mathcal{C}^\infty(\mathcal{S},\R)$. By \Cref{cor:smooth_unphysical_fields}, the hypotheses H1-H3 imply that the 10-tuple $(\mathcal{S},h,E,M,\Psi,\kappa,K,0,\allowbreak T,Q)$, where the fields are defined by \eqref{CVC_def}, is in $\mathscr{D}$, is solution to the \eqref{eq:CVC} and verifies the geometric boundary conditions \eqref{eq:gauge_independent_conditions_CVC}. By choice of $\Psi$ and $z$, it also verifies \eqref{eq:gauge_choice_Psi} and \eqref{eq:gauge_choice_z}. Using the gauge transformation $\Weylchg_\omega^\parallel$ defined in \Cref{def:transfo_dim3}, one can assume that \eqref{eq:gauge_choice_s} while maintaining \eqref{eq:gauge_choice_Psi} and \eqref{eq:gauge_choice_z}.
		
		\item\underline{Step 2}: Construction of the patches and of the analytic initial data $\underline{u}_\star$
		
		For any point $p \in \mathcal{S}$, one can construct a coordinate system $(y^i) = (y^1,y^2,y^3)$ and an $h$-orthonormal frame field $(e_{\bf i})=(e_{\bf 1},e_{\bf 2},e_{\bf 3})$ on a neighbourhood $\mathcal{U}_\star(p)$ of $p$ such that
        \begin{itemize}
            \item[\ding{70}] they are smooth on the closure $\overline{\mathcal{U}_\star(p)}$,
            \item[\ding{70}] $y^3 \geq 0$ on $\mathcal{U}_\star(p)$ and $\partial\mathcal{S} \cap \mathcal{U}_\star(p) = \{ q \in \mathcal{U}_\star(p) \, | \, y^3(q) = 0 \}$,
            \item[\ding{70}] $\partial\mathcal{S} \cap \mathcal{U}_\star(p)$ is connected, that is $\mathcal{U}_\star(p)$ intersects at most one connected component $\mathcal{C}$ of the boundary $\partial\mathcal{S}$,
            \item[\ding{70}] if $\partial\mathcal{S} \cap \mathcal{U}_\star(p) \neq \varnothing$, the vector field $e_{\bf 3}$ is its outward-pointing unit normal with respect to $h$ ($e_{\bf 1}$ and $e_{\bf 2}$ are thus tangent).
        \end{itemize}
        By compactness of $\mathcal{S}$, the family $(\mathcal{U}_\star(p))_{p \in \mathcal{S}}$ admits a finite subcover. Let $\mathcal{U}_\star$ be a subset of this finite subcover. Construct the analytic initial data $\underline{u}_\star \in \mathcal{C}^\infty(\overline{\mathcal{U}_\star},\R^m)$ from the 10-tuple of step 1 as follows:
        \begin{itemize}
            \item[\ding{70}] $E_{\bf AB}$, $E_{\bf 3B}$, $H_{\bf AB}$, $H_{\bf 3A}$, $\kappa_{\bf i}$ are components of the already defined tensors with respect to the frame field $(e_{\bf i})$ and $e_{\bf i}{}^j := \langle dy^j,e_{\bf i}\rangle$,
            \item[\ding{70}] $\acute{\Gamma}_{\bf i}{}^{\bf j}{}_{\bf k}$ are the frame connection coefficients of the Weyl connection associated to $\kappa$ with respect to $h$,
            \item[\ding{70}] $\acute{\Gamma}_{\bf i}{}^{\bf 0}{}_{\bf j} := K_{\bf ij}$, $\acute{\Gamma}_{\bf i}{}^{\bf j}{}_{\bf 0} = \delta^{\bf jk} K_{\bf ik}$, $\acute{\Gamma}_{\bf i}{}^{\bf 0}{}_{\bf 0} := \kappa_{\bf i}$, $U_{\bf i0} := T_{\bf i}$, $U_{\bf ij} := Q_{\bf ij}$, $e_{\bf i}{}^0 := 0$ and $\zeta_{\bf i} := (d\Psi)_{\bf i} + \Psi \kappa_{\bf i}$.
        \end{itemize}
        Let $\Theta_\star$ be the restriction on $\mathcal{U}_\star$ of $\Psi$ and $s_\star$ be the restriction on $\mathcal{U}_\star$ of the field $\widehat{\sigma}$, defined by \eqref{eq:def_sigma_hat}, of the 10-tuple of step 1. By \eqref{eq:gauge_choice_s} and \Cref{lem:dividing_bdf}, $f_\star := s_\star/\Theta_\star$ is smooth and bounded from above on $\mathcal{U}_\star$.
		
		\item \underline{Step 3}: Evolution system and propagation of the constraints
		
		If $\mathcal{B}_\star \neq \varnothing$, take the analytic boundary condition given by \eqref{eq:robin_analytic} with $\mu$ equal to $\mu_{\mathcal{C}}$ for the unique connected component $\mathcal{C}$ of $\partial\mathcal{S}$ intersecting $\mathcal{U}_\star$. In all cases, the analytic initial data $\underline{u}_\star$ verifies all the conditions of \Cref{lem:propagation_constraints} by step 2 and H4. Applying \Cref{evol_solve_away}/\Cref{evol_solve_near} and \Cref{lem:propagation_constraints} and \Cref{cor:geometry_back} yields a quintuple $(\mathcal{W},\breve{g},V,\Theta,\kappa) \in \mathscr{E}$ solution to the (CVE) with $\Lambda=-3$ and of conformal boundary $\mathscr{I} = \mathcal{W}\cap\mathcal{B}$. Moreover, if $\mathcal{B}_\star\neq\varnothing$, $(\mathcal{W},\mathscr{I},\Theta^{-2}\breve{g})$ is an aAdS space. 
        
		\item \underline{Step 4}: Geometric boundary conditions and patching up

        Assume $\mathcal{B}_\star \neq\varnothing$. By H4, the analytic initial data $\underline{u}_\star$ satisfies \eqref{eq:frakp_vanish_corner}. Using $ii)$ of \Cref{prop:partial_to_total_robin_bc}, one can assume that $(\mathcal{W},\mathscr{I},\Theta^{-2}\breve{g})$ satisfy the geometric boundary condition \eqref{eq:bc_robin} for $\mu=\mu_\mathcal{C}$ as in step 3, up to reducing $\mathcal{W}$.

        Following standard procedures, one can patch up the local solutions on the different open subsets of the finite cover using local uniqueness of \Cref{evol_solve_away}/\Cref{evol_solve_near}. This yields a quintuple $(\mathcal{M},g,V,\Theta,\kappa)\in \mathscr{E}$ solution to the (CVE) with $\Lambda=-3$. Then, one can verify that $(\mathcal{M},\mathcal{S},\Theta^{-2}g)$ satisfy $i)-vi)$ for the good choices of time-orientation and orientation. \qedhere	
	\end{itemize}   
\end{proof}

\subsubsection{Further comments on the main theorem}
\label{sec:comments_main_thm}

\bigskip

\noindent $\bullet$ Positive mass theorems for $\Lambda < 0$ and $n=3$

\bigskip

Recall that the 3-dimensional hyperbolic space $(\mathbb{H}^3,b)$ can be defined by taking the open unit ball of $\R^3$ equipped with the following Riemannian metric
\[ b := \frac{4}{(1-r^2)^2} \delta \,, \]
where $\delta$ is the Euclidian metric on $\R^3$ and $r$ is the radius in spherical coordinates. Its conformal boundary is the unit circle. One can easily construct a FG gauge using the boundary defining function $\rho := (1-r)/(1+r)$, giving
\[ b = \frac{1}{\rho^2} \left( d\rho^2 + \frac{(1-\rho^2)^2}{4} \dsphere{2} \right) \,. \]

There have been several proofs of the positive mass theorem in the case of $\Lambda < 0$ and $n=3$: see Wang \cite[Theorem 3.3]{W01},
Chru\'sciel and Herzlich \cite[Theorem 1.5]{CH03}. In these theorems, they consider asymptotically hyperbolic spaces for which the least restrictive definition is as follows: an asymptotically hyperbolic space is a pair $(\mathcal{S},\widetilde{h})$, where $\mathcal{S}$ is a 3-dimensional smooth manifold with boundary $\partial\mathcal{S}$ and $\widetilde{h}$ is a Riemannian metric on $\mathring{\mathcal{S}}$, such that there exists two compacts $K \Subset \mathring{\mathcal{S}}$, $K' \Subset \mathbb{H}^3$ and a diffeomorphism $\phi : \mathcal{S}\setminus K \to \mathbb{H}^3 \setminus K'$ verifying $\phi_\star h - b = \grando{\rho}$. Hence $\phi_\star h$ is a perturbation of the hyperbolic metric near the conformal boundary which is diffeomorphic to $\mathbb{S}^2$. Moreover, the required decay rate is quite restrictive since it is at the level of the boundary stress-energy tensor in the expansion of the initial metric induced by the FG expansion. This is thus a much stronger condition than our definition of aH spaces, see \Cref{def:aH}. Consequently, the geometric initial data of \Cref{thm:main_precise} do not verify the hypotheses of the positive mass theorem, except the trivial case of the hyperbolic space.

\bigbreak

\noindent $\bullet$ The orthogonality hypothesis between $\mathcal{S}_\star$ and $\mathscr{I}$

\bigskip

One restriction of \Cref{thm:main_precise} is that the embedded initial hypersurface $\mathcal{S}_\star=\phi(\mathcal{S})$ has to be orthogonal to the conformal boundary for all rescaled metrics by $ii)$. If this is not the case, observe that it cannot be asymptotically hyperbolic in the sense of \Cref{def:aH} because
\[ h^{-1}(d\Psi,d\Psi) = 1 + z^2 \neq 1 \quad \text{on } \partial\mathcal{S}_\star \,, \]
by \eqref{eq:pseudonorm_dPsi}. With regard to (partially) solving the geometric initial data problem in this case, the conformal method appears to be sufficiently robust. Note that the asymptotic behaviour of the field $\widetilde{K}_{ij}$ is still given by \eqref{eq:behav_Ktilde}. Furthermore, we strongly expect that an analogue of \Cref{thm:smooth_unphysical_fields} hold, generalising a result of K\'ann\'ar \cite[Theorem 1]{K96}.

\begin{figure}
    \centering
    \begin{tikzpicture}
        \pgfmathsetmacro{\size}{4}
        \pgfmathsetmacro{\racc}{0.9*\size}
        \draw[dashed] (0,0) -- (0,\size);
        \draw (-0.5,\size/2) node {$\mathscr{I}$};
        \draw plot[domain=0:\size,variable=\x]({\x},{0.3*\x*(\x/\size-2)});
        \draw (3/4*\size,-1/2.7*\size) node {$\mathcal{S}$};
        \draw [fill=black] (0,0) circle (1pt);
        \draw (-0.5,0) node {$\partial\mathcal{S}$};
        \draw[blue] plot[domain=0.01:\racc,variable=\x]({\x},{exp(\size/\racc^2)*exp(\size/(\x*(\x-2*\racc)))*0.3*\x*(\x/\size-2)});
        \draw[blue] plot[domain=\racc:\size,variable=\x]({\x},{0.3*\x*(\x/\size-2)});
        \draw[blue] (\size/2,-\size/10) node {$\mathcal{S}'$};
        \draw[dashed] (0,0) -- (\size,\size);
        \draw[dashed] (0,0) -- (\size,-\size);
        \fill[color=gray,fill=gray,fill opacity=0.5] (0,0) -- (\size,\size) -- (\size,-\size) -- cycle;
        \draw (3/4*\size,\size/4) node {$\widetilde{\mathcal{D}}$};
    \end{tikzpicture}
    \caption{Modification of the initial data}
    \label{fig:modification_ID}
\end{figure}
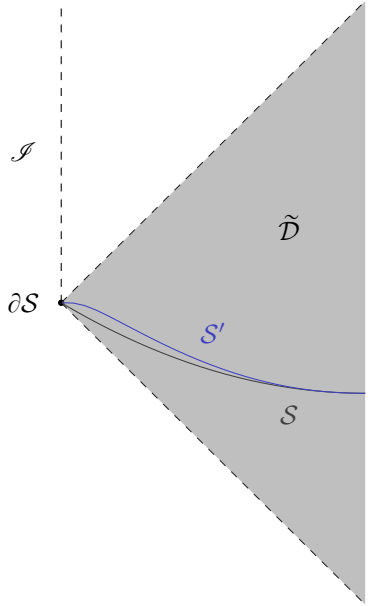

As for the geometric initial data problem with an initial hypersurface which is not orthogonal to the conformal boundary, one could either attempt to adapt the proof of \Cref{thm:main_precise} or to modify the initial data. For the former idea, one could modify the gauge constructed in \Cref{prop:construction_gauge} by taking $e_{\bf 0}$ a unit transverse vector field to the hypersurface which is tangent to $\mathscr{I}$ on $\partial\mathcal{S}_\star$. Observe that, from $\acute{\vartheta}_{\bf 00i}=0$, one has $e_{\bf 0}(\breve{g}_{\bf 0i})=0$. For the latter idea, one could:
\begin{itemize}
    \item[\ding{70}] First solve the vacuum Einstein equations from the initial data set $(\mathring{\mathcal{S}},\widetilde{h}_{ij}, \widetilde{K}_{ij})$ by the classical results \cite{CB52,CBG69} giving a maximal globally hyperbolic development $(\widetilde{\mathcal{D}},\widetilde{g}_{\alpha\beta})$.
    \item[\ding{70}] Modify the hypersurface $\mathring{\mathcal{S}}$ in $\widetilde{\mathcal{D}}$ outside a compact set into a new spacelike hypersurface with mean curvature $\widetilde{K}'$ tending to $0$ at spatial infinity. Attach $\partial\mathcal{S}$ to it in order to construct a manifold with boundary denoted by $\mathcal{S}'$.
    \item[\ding{70}] Apply \Cref{thm:main_precise} to the initial data set composed of $\mathcal{S}'$ and $\widetilde{h}'_{ij}$, $\widetilde{K}'_{ij}$ the first and second fundamental forms of $\mathring{\mathcal{S}}'$ in $(\widetilde{\mathcal{D}},\widetilde{g}_{\alpha\beta})$.
\end{itemize}
The construction is illustrated in \Cref{fig:modification_ID}, in the conformal picture and with embedded hypersurfaces for simplicity. The timelike conformal boundary $\mathscr{I}$ resulting from \Cref{thm:main_precise} is represented for clarity. For this method to work, one has to prove a propagation result of both the smoothness of the unphysical fields and of the compatibility conditions from $\mathcal{S}$ to $\mathcal{S}'$.

\bigbreak

\noindent $\bullet$ Link with the boundary conditions for the Teukolsky equations

\bigskip

The geometric homogeneous Dirichlet boundary condition translate into certain boundary conditions for the Teukolsky equations on aAdS black holes, see for instance \cite[Section 6.3]{HLSW20} for AdS and \cite[Proposition 2.2]{GH24_part_I} for SAdS. In what follows, we will determine the boundary conditions corresponding to the homogeneous Robin boundary conditions \eqref{eq:bc_robin}. Construct a null frame $(e^\bullet_{\bf a})$ for the rescaled metric by setting
\[ e^\bullet_{\bf A} := e_{\bf A} \,, \quad e^\bullet_{\bf 3} := e_{\bf 0} - e_{\bf 3} \,, \quad e^\bullet_{\bf 4} := e_{\bf 0} + e_{\bf 3} \,.\]
In the null decomposition of the Weyl tensor, see \cite[Equations (4.8)]{CK90}, one has
\[ \alpha_{\bf AB} := W(e^\bullet_{\bf A},e^\bullet_{\bf 4},e^\bullet_{\bf B},e^\bullet_{\bf 4}) \,, \qquad \underline{\alpha}_{\bf AB} := W(e^\bullet_{\bf A},e^\bullet_{\bf 3},e^\bullet_{\bf B},e^\bullet_{\bf 3}) \,. \]
These are symmetric trace-free smooth 2-tensors on the distribution $\mathcal{D}^2 = \vectorspan(e_{\bf 1},e_{\bf 2})$, that is they are in $\mathscr{S}(\mathcal{D}^2)$ as defined in \Cref{sec:duality_dim_2}. By definition of the rescaled Weyl tensor and the electromagnetic decomposition, see \Cref{lem:decomposition_Weyl_cand}, one deduces that
\[ \alpha_{\bf AB} = -2 \Theta \left( E^\dagger_{\bf AB} + (\star H^\dagger)_{\bf AB} \right) \,, \qquad \underline{\alpha}_{\bf AB} = -2 \Theta \left( E^\dagger_{\bf AB} - (\star H^\dagger)_{\bf AB} \right) \,. \]
Hence the boundary conditions \eqref{bc_electric} rewrite under the form
\begin{align*}
    \mathfrak{p}^\dagger_{\bf AB} = 0 \;\; \text{on } \mathscr{I} &\iff \frac{2}{3} E^\dagger_{\bf AB} - \mu H^\dagger_{\bf AB} = 0 \;\; \text{on } \mathscr{I} \\
    &\iff \lim_{\Theta\to0} \Theta^{-1} \left( \frac{2}{3} (\alpha+\underline{\alpha})_{\bf AB} - \mu \left(\star(\alpha-\underline{\alpha})\right)_{\bf AB} \right) = 0 \\
    &\iff \lim_{\Theta\to0} \Theta^{-1} \left( \frac{2}{3} \left(\star(\alpha+\underline{\alpha})\right)_{\bf AB} - \mu (\alpha-\underline{\alpha})_{\bf AB} \right) = 0 \,.
\end{align*}
In the limit case $\mu\to+\infty$, one retrieves \cite[Equation (31)]{GH24_part_I}, that is\footnote{The discrepancy in powers of $\Theta$, with $-1$ instead of $-3$, is easily explained by our use of a frame field for the rescaled metric instead of the physical metric.}
\[ \lim_{\Theta\to 0} \Theta^{-1} \left(  \alpha - \underline{\alpha}\right)_{\bf AB} = 0 \,. \]
The second boundary condition follows from the second Bianchi identity and, more precisely, from the evolution equations \eqref{evol_EAB} and \eqref{evol_HAB}. One has
\begin{align*}
     &e_{\bf 0}\left(\mathfrak{p}^\dagger_{\bf AB}\right) = 0 \;\; \text{on } \mathscr{I} \\
     \iff & e_{\bf 0}\left(\frac{2}{3} E^\dagger_{\bf AB} - \mu H^\dagger_{\bf AB} \right) = 0 \;\; \text{on } \mathscr{I} \\
    \iff & e_{\bf 3} \left( \frac{2}{3} (\star H^\dagger)_{\bf AB} + \mu (\star E^\dagger)_{\bf AB} \right) - \slashed{\epsilon}_{\bf (A|}{}^{\bf C} e_{\bf C} \left( \frac{2}{3} H_{\bf 3|B)} + \mu E_{\bf 3|B)} \right) \\
    &\qquad = \frac{2}{3} N^\dagger_{\bf AB}(E,H) - \mu N^\dagger_{\bf AB}(H,-E) \;\; \text{on } \mathscr{I} \\
    \iff & e_{\bf 3} \left( \frac{2}{3} (\star H^\dagger)_{\bf AB} + \mu (\star E^\dagger)_{\bf AB} \right) - \slashed{\epsilon}_{\bf (A|}{}^{\bf C} e_{\bf C} \left( \frac{2}{3} H_{\bf 3|B)} + \mu E_{\bf 3|B)} \right) \\
    &\qquad = N^\dagger_{\bf AB}\left(\frac{2}{3}E- \mu H,\frac{2}{3} H + \mu E \right)  \;\; \text{on } \mathscr{I} \,.
\end{align*}
Note that the geometric homogeneous Robin boundary condition also implies that
\begin{align*}
    \frac{2}{3} H_{\bf 3B} + \mu E_{\bf 3B} &= 0 \;\; \text{on } \mathscr{I} \,, \\
    \frac{2}{3} E_{\bf 33} - \mu H_{\bf 33} &= 0 \;\; \text{on } \mathscr{I} \,.
\end{align*}
Since $e_{\bf A}$ are tangent to $\mathscr{I}$, one deduces that
\[ e_{\bf 3} \left( \Theta^{-1} \left(\frac{2}{3} (\alpha-\underline{\alpha})_{\bf AB} + \mu  (\star(\alpha+\underline{\alpha}))_{\bf AB} \right) \right) = L^{(\mu)}_{\bf AB}(\alpha,\underline{\alpha}) \quad \text{on } \mathscr{I} \,, \]
for some linear function $L^{(\mu)}: \mathscr{S}(\mathcal{D}^2) \times \mathscr{S}(\mathcal{D}^2) \to \mathscr{S}(\mathcal{D}^2)$. Finding the explicit expression of $L^{(\mu)}$ would require a further study but one can expect simplifications using the boundary condition \eqref{eq:bc_robin}, the gauge properties \eqref{eq:gauge_independent_conditions_CVC}-\eqref{eq:gauge_choices_CVC} and \eqref{eq:umbilical_condition}.

\addcontentsline{toc}{section}{References}
\bibliographystyle{plain} 
\bibliography{references.bib}

\end{document}